\title{A Differential Topological Model for Olfactory Learning and Representation\\ \normalsize{It's Amazing it Works at All}} 
\author{Jack Alexander Cook}
\date{}
\newtheorem{theorem}{Theorem}[section]
\newtheorem{claim}[theorem]{Claim}
\newtheorem{lemma}[theorem]{Lemma}
\newtheorem{proposition}[theorem]{Proposition}
\newtheorem{corollary}[theorem]{Corollary}
\newtheorem{dtheorem}[theorem]{Definition/Theorem}
\newtheorem{dproposition}[theorem]{Definition/Proposition}
\theoremstyle{definition}
\newtheorem{definition}[theorem]{Definition}
\newtheorem{example}[theorem]{Example}
\newtheorem{dexample}[theorem]{Definition/Example}
\newtheorem{nexample}[theorem]{Non-Example}
\newtheorem{remark}[theorem]{Remark}
\newcommand{\close}[1]{\overline{#1}}
\newcommand{\ds}{\oplus}
\newcommand{\tensor}{\otimes}
\newcommand{\im}{\operatorname{Im }}
\newcommand{\coker}{\operatorname{coker} }
\newcommand{\comp}{\circ}
\newcommand{\Span}[2] [\mathbb{C}]{ \operatorname{Span}_{#1} \{#2\}  }
\newcommand{\stab}{\operatorname{Stab}}
\newcommand{\orb}{\operatorname{Orb}} 
\newcommand{\C}{\mathbb{C}}
\newcommand{\R}{\mathbb{R}}
\newcommand{\Q}{\mathbb{Q}}
\newcommand{\Z}{\mathbb{Z}}
\newcommand{\N}{\mathbb{N}}
\newcommand{\F}{\mathbb{F}}
\newcommand{\ip}[1]{\left \langle #1 \right \rangle}
\newcommand{\id}{\operatorname{Id}}
\newcommand{\Aut}[1]{\operatorname{Aut}(#1)}
\newcommand{\End}{\operatorname{End}}
\newcommand{\Der}{\operatorname{Der}}
\newcommand{\Mat}[1]{\begin{pmatrix}#1\end{pmatrix}} 
\newcommand{\Hom}{\operatorname{Hom}}
\newcommand{\script}[1]{\mathscr{#1}}
\newcommand{\Ind}{\operatorname{Ind}}
\newcommand{\res}{\operatorname{Res}}
\newcommand{\Char}{\operatorname{char }}
\newcommand{\lie}[1]{\mathfrak{#1}}
\newcommand{\Spec}[1]{\operatorname{Spec}(#1)}
\newcommand{\Obj}{\operatorname{Obj}}
\newcommand{\Ext}{\operatorname{Ext}} 
\newcommand{\Tor}{\operatorname{Tor}} 
\newcommand{\ad}{\operatorname{ad}}
\newcommand{\Ad}{\operatorname{Ad}}
\newcommand{\Gr}{\operatorname{Gr}} 
\newcommand{\Fl}{\operatorname{Fl}}
\begin{document}
\maketitle 

\clearpage
\begin{center}
    \thispagestyle{empty}
    \vspace*{\fill}
    \textit{To Janet and Ralph}
    \vspace*{\fill}
\end{center}
\clearpage

\chapter*{Acknowledgement}
\thispagestyle{empty}

I would like to thank my advisor Thomas Cleland for his patience and guidance throughout my undergraduate career. Without him, this thesis would not have materialized. 

And to my parents who supported me always, thank you.  
\clearpage

\chapter*{Preface} 
This thesis is designed to be a self-contained exposition of the neurobiological and mathematical aspects of sensory perception, memory, and learning with a bias towards olfaction. The final chapters introduce a new approach to modeling focusing more on the geometry of the system as opposed to element wise dynamics. Additionally, we construct an organism independent model for olfactory processing: something which is currently missing from the literature. Chapter $1,$ serves as an introduction to the basic biology, structure, and functions of the olfactory system and the related regions of the brain.   Starting with the nasal cavity, odors excite receptors which in turn relay information to the olfactory bulb(we will often refer to this as bulb). From the bulb information is sent to piriform cortex which projects onto a myriad of structures, some of which are hippocampus, anterior olfactory nucleus, and amygdala. We discuss neuromodulation and some conjectures about higher order processing (post bulb). 

In Chapter $2,$ we take a brief aside to discuss some basic algebra which makes up the first half of the mathematical material needed to understand the later chapters. We begin the tour with set theory where we lay down the preliminaries on functions, set theoretic notation, and various definitions which will appear consistently throughout this text. The next stop is group theory where we study the symmetries of objects and build the notion of an Action on a set. We then pass to Ring Theory where we discuss ideals, morphisms and hidden group structures. Rings show up naturally in chapter $3$ and play an important role in the theory of sections. We end the tour of basic structures with a discussion of fields and polynomial rings with coefficients in a field. This leads to a natural discussion of higher order structures such as Vector Spaces and modules. The latter being an integral component of the model. 

Chapter $3,$ forms the second half of the mathematical underpinnings for chapters $4$ and $5.$ Here we discuss geometry, topology and give a brief introduction to the theory of categories, sheaves, and differentiable stacks. Topology studies the intrinsic properties of a space endowed with a topology. It concerns itself with ideas such as connectedness, compactness, and continuity. Geometry studies calculus on these spaces and very quickly leads to the ideas of flows, geodesics, and Lie groups. The terminal topics are abstractions of the notions of set and function. These provide a convenient language and place to discuss some of the algebraic invariants given to a topological space. 

Chapters $4$ makes up the entirety of the original research of this thesis. We first explore the topological and geometric properties of the physical and perceptual spaces involved in the olfactory system and discuss how the use of vector bundles and non-canonical maps from a bundle to its base space provide insight into the geometry of the system as a whole. We conclude with future directions of research and unanswered questions along with some conjectures about the model. 

Chapter $5$ will focus on potential new areas of investigation. The majority of this chapter covers representation theory and culminates with the Borel-Weil theorem. This gives a realization of representations of certain groups as sections of line bundles. This geometric view of the situation makes it natural to consider sheaves. As the ultimate theorem will tell us, there is some interesting information contained in sheaf cohomology that cannot be accessed through other means. 
\newpage

\tableofcontents
\newpage

\chapter{Olfaction and the Problem of Learning}

This section is intended to be a crash-course in the neurobiology of the olfactory system and the various computational aspects of neuroscience. We assume a passing knowledge of general neuroscience. This includes the broad organization of the brain, structure of a neuron, biochemistry of action potentials, existence of neurotransmitters, structure of a synapse, and feedback loops at the level of \cite{Breedlove_Watson_2013}. The main goal of this section is to introduce the idea of categorical perception and apply it to olfaction.

\section{Sensory Systems: Generally} 
Sensory systems are the backbone of human perception and form the only method for which humans (an all animals) can gain information about the outside world. Although the exact number of distinct senses is debated, it is generally agreed upon that humans have 6-7 main ones which govern a occupy a large portion of the brain and almost all of the cortical space devoted to perception  \cite{Breedlove_Watson_2013}. One could spend an enormous number of pages discussing the intricacies of each of these sensory systems and their associated perceptual constructions. As the main focus of this thesis is to understand olfactory processing, we shall only give a broad introduction to the other sensory systems and leave the remaining details to the many references. 
\begin{remark} 
	For the remainder of this chapter, all definitions are operational (may change between researchers)  unless otherwise noted. We shall give some explanation of the definitions in the cases where we deviate from the standard references. 
\end{remark}

The easiest way to begin an analysis of these systems is to understand the basic neurophysiology. 

\begin{definition}  
	A \textbf{sensory system} is a part of the nervous system consisting of sensory neurons, a neural pathway, and a cortical area.  
\end{definition}
\noindent Sensory systems play a key role in every action the body performs: from simple things like standing up straight and picking up a glass of water to more complex tasks like skiing or identifying someone's face in a dimly lit room. To better understand these objects, lets investigate a few well known examples. 
\begin{example}\text{}
	\begin{enumerate}
		\item \textbf{Vision}: In this case, sensory neurons are rods and cones. These light sensitive cells transmit information to the optic nerve which relays this information to visual cortex. In fact, different cells along the pathways from the retina to the occipital lobe have varying receptor fields. This variance contributes to the processing of an image. 
			
		\item \textbf{Audition}: Vibrations in the basilar membrane due to sound waves coming into contact with the ear drum, lead to the vibration of hair cells. This motion induces action potentials in the auditory nerve. From here the signal partially decussates to the temporal lobes where further processing occurs. 
	\end{enumerate}
\end{example}	
We can further divide up the sensory systems into those which have chemical stimuli and those which do not. Those chemical senses depend on molecular interactions in the sensory neurons to facilitate the transformation from stimulus to perception. In the case of gustation, there are five distinguished "tastes": salty, sweet, sour, bitter, umami. These all correspond to different molecules interacting with the papillae on the tongue. Something which tastes more "salty" is directly related to the Na$^+$ ions present in the solution of saliva and food. In contrast to this, we have audition. The pertinent objects here are pressure waves in the air which vibrate the ear drum which in turn vibrates the bones of the inner ear and causes waves in the basilar membrane. These waves cause the ends of the hair cells to be perturbed and induce an action potential. 

The point of these examples is to show that sensory systems have a wide array of possible stimuli.  

\subsection{Function}
Now that we have the basic (and grossly vague) description of the structure of a sensory system. One may ask, ``what is the purpose of such a system?" Beyond the obvious answer (we need a way to interact with our environment), there are some subtle and incredibly important operations that sensory systems accomplish. The main references for this section are $\cite{Harnad1987}$ and $\cite{Cohen_Lefebvre2005}$.  

The main operations we will discuss are learning, representation, and categorization.\footnote{We will discuss this at length in the next section. The intent here is to get some intuition for the problems we will be attacking in the later chapters.} They are closely related and in some perspectives, are even intertwined. In fact we can think of learning and representation as disparate ideas, whereas categorical perception seeks to, in some sense, unify these ideas. The motivation for studying such a construction first originated in vision and speech with color perception and categorization of various speech patterns. 

There is an obvious evolutionary advantage to the construction of categories. Typically, stimuli are continuous, or at least abundant enough that any model would function adequately considering them continuous. Categorical perception transforms this continuity into a discrete spectrum of perceptions organized by similarity with respect to some metric. We take the following example from $\cite{Harnad1987}.$ Consider a digital clock which presents the time in 12hr increments with the use of $am$ and $pm.$ Then twice a day, the clock would present $12:00$ with the only difference being which signifier is present. In this way, we can categorize the time on the clock as either times marked by $am$ or times marked by $pm.$ 

Now consider another construction of categories which will be revisited in chapter 3. If asked to classify the capital letters in the english alphabet, what is an appropriate choice of category? Suppose we choose to split them by the number of holes: in that any letter with a closed loop has a hole, and having multiple closed loops should split up the categories further. In this schema(which is font specific), the letters are grouped as follows: \begin{align*}
	\{ C,E,F,G,H,I,J,K,L,M,N,S,T,U,V,W,X,Y,Z\} && \{ A,D,O,P,Q,R\} && \{ B\}
\end{align*}
So we have three distinct categories: No holes, One hole, two holes. Notice that we can be a bit more general about this however. If we consider only letters that have a hole and letters which do not, we only have two categories. Inside of the holed category we get a sub-category consisting of letters which have multiple holes. 

The point of these examples is to illuminate the idea that categories seek to simplify the stimuli. It is much easier to think about letters with or without holes than all of the letters simultaneously. Due to this, it should be no surprise that a majority of current research in sensory systems is devoted to understanding the process of categorization. This is precisely what we will investigate in the upcoming sections and is the topic of Chapter 4. We can think of completing a category, $C$, by adding to it all of the points which are "infinitesimally close" to $C.$ If we think about this in a geometric way, this amounts to the circle which bounds a disc. 

Before diving into the world of olfaction, we need one more general function of sensory systems: generalization. In 1987 Shepard introduced the idea of \textit{generalization} for perceptual spaces. To this end, it is a different method of categorization, but one which depends on minimal learning. We shall call this perceptual generalization. Here is a more formal definition.  
\begin{definition}
	\textbf{Perceptual generalization} is the process by which a sensory system (in particular the neural pathway) constructs a broad category for a given stimulus, based solely on the learning of one (or a few) stimulus. 
\end{definition}

In the figure below (Figure 1.1), we show the first examples of perceptual generalization. This process seems to be a method of producing categories for unlearned/partially learned stimuli. The method of generalization is extrapolate information from one stimulus and use this to "learn" something about its nearest neighbors in the perceptual space. 
\begin{figure}[!htb]
    \centering
    \includegraphics[width=.8 \linewidth]{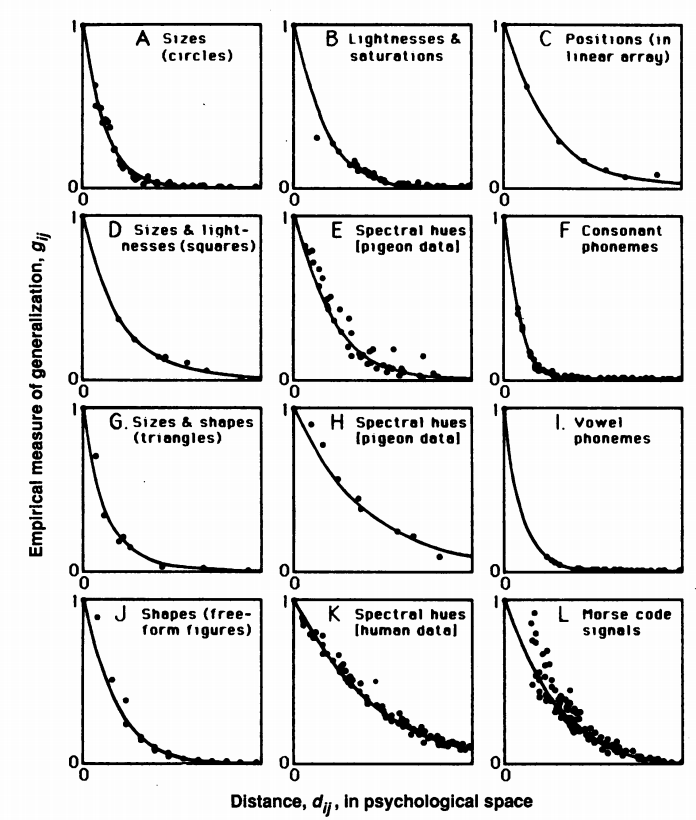}
    \caption{Perceptual generalization in different modalities and contexts. Distances along the horizontal axes are measured in the intrinsic metric of the perceptual space in which the points sit \cite{Shepard1987}. }
    \label{fig:wrapfig}
\end{figure}

Perceptual generalization can be thought of as a pseudo-prior to categorization. Before the system can split things into clean, discrete categories, it needs to build the objects of the perceptual space (the things to categorize). Once this is done, but still before discretization, the system has to understand the boundary of each perceptual object\footnote{This will be interpreted in chapter 3 and 4 as the boundary of a topological subspace of the perceptual space. This notion will allow us to give a more formal definition than the vague one given here and will also lead to a clean method of discretizing the perceptual space.}. 

\begin{definition}
	The \textbf{perceptual boundary} (sometimes shortened to just boundary) of a category is the collection of points which are \textit{extreme} in the category. That is, these are the points which are only present in the completion of a category.  
\end{definition}

One of the important themes of the research surrounding sensory systems is that of distinguishing the boundary of the perceptual space and the various percepts it contains \cite[Chapter 1, Section 2]{Harnad1987}. This may seem like an easy task, but in the abstract this is incredibly difficult. The difficulty lies in the lack of rigor behind the definition of a category. Is something an element of a particular category, its boundary, or something else? These questions will be answered in the case of olfaction in chapter 4. For now, we move away from general sensory systems and take a closer look at the olfactory system, its associated brain regions, and what we know about how the system builds and categorizes representations.   

\section{The Olfactory System} 
The olfactory system can be broken up (coarsely) into two main regions: the olfactory bulb and piriform cortex. We shall focus on the olfactory bulb as the piriform cortex is much less understood. As the following figure (Figure 1.2) shows, the olfactory bulb is divided into several layers. Each plays a key role in the transmutation of the physical stimulus to a usable perceptual object. As we still do not understand the full functionality of each of the layers individually, we will treat them as separate objects and present what we \textit{do} know about the different layers. 
\begin{figure}[!htb]
    \centering
    \includegraphics[width=\linewidth]{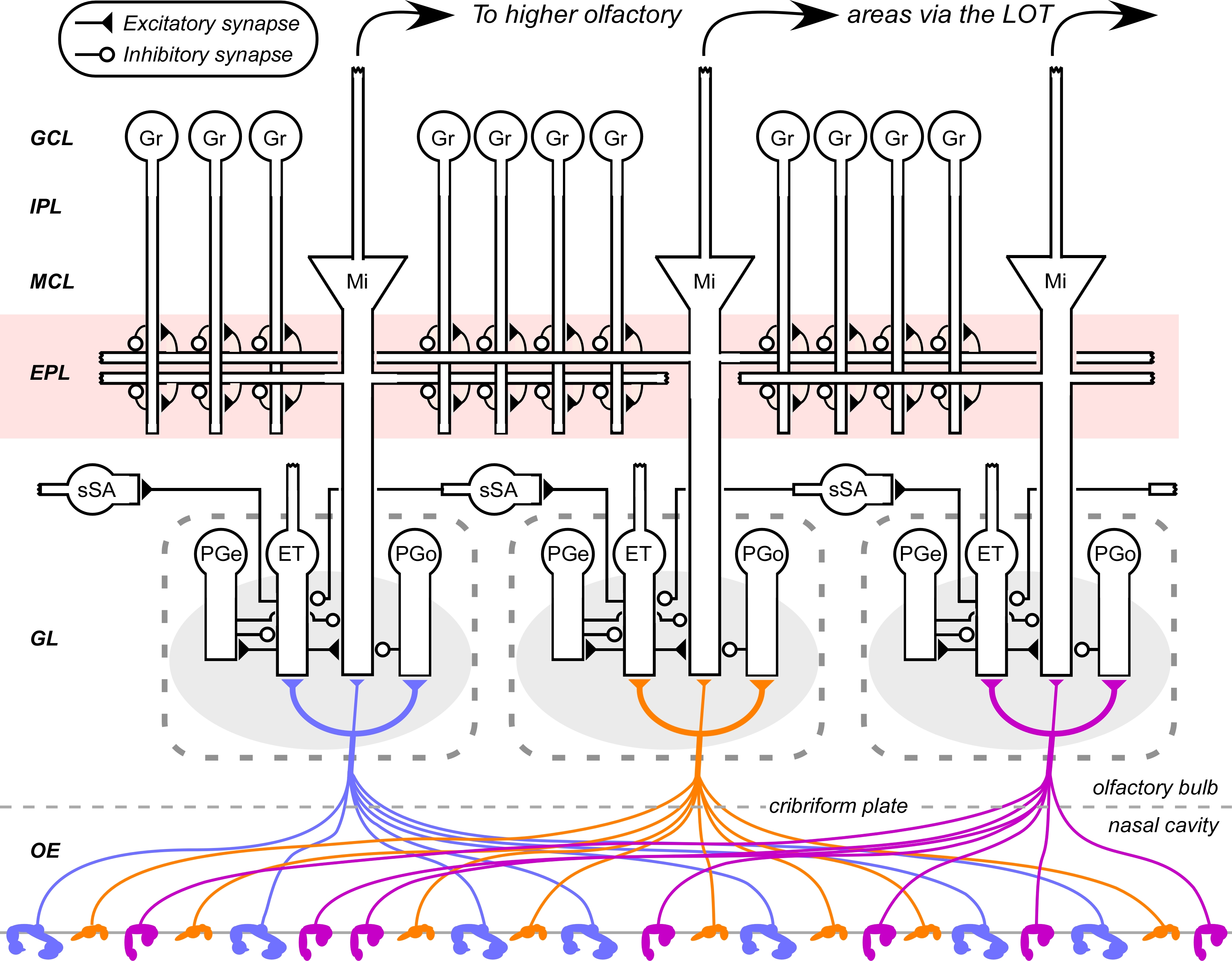}
    \caption{Schematic of the mamalian olfactory bulb microcircuitry. Layers are identified, and are arranged bottom to top, external to internal.}
    \label{fig:wrapfig}
\end{figure}

\subsection{(Pre-)Processing} 

In the same flavor as the previous section, we need to specify what the sensory neurons are. In Figure 1.2, the layer marked OE (olfactory epithelium) and the colored receptors are precisely the olfactory sensory neurons (abbreviated OSNs). These are chemical receptors and their level of activation (spike frequency) is directly proportional to the binding affinity of the odorant. This binding information is  processed at a variety of places before being sent off to piriform cortex and other higher-order brain regions. The main layers we concern ourselves with here are $GL,EPL,$ and $GCL.$   
In contrast with other modalities (such as vision) olfaction is intrinsically high-dimensional and this high-dimensionality is consistent across species. In humans there are roughly $350$ different types of OSNs whereas in mice there are upwards of $1000$ \cite{Cleland2014}. Each distinct type of OSN converges to a neuropil tangle which is roughly spherical in nature. We call these tangles $\textit{glomeruli}$ and the layer consisting of all of them is $GL.$ 

The largest cells protruding from the glomeruli are \textit{mitral cells}. These pyramidal cells embody the immediate connection of the OB to piriform cortex. In Figure 1.2 the mitral cells are drawn to be in one-one correspondence with the glomeruli and are seen to sample from only one glomerulus: this is false in general. It turns out that in most mammalian tetrapods the mitral cells do indeed sample from a single glomerulus. In \cite{Mori81a} and \cite{Mori81b}, it was shown that in some turtles and reptiles, the mitral cells can sample from a variety of glomeruli. The benefits of this cross sampling are still not very well understood. 

The final change of information in the OB is the modification by the granule cells. These are inhibitory synapses which delay the mitral cell action potential \cite{Cleland2014}. It is thought that these synapses  play a large role in the formation of the perception of an odorant, but no published research has looked at this yet. We do know however that learning is related to granule cell firing patterns. With repeated trials of an odorant, the number granule cells which fire decreases monotonically. Each consecutive trial leads to a more specific and more refined response. This fits in with Shepard's idea on generalization. This specialization implies however, that granule cells can become specified fairly quickly and thus the brain should ``run out" of possible specificity. That is is theory, the granule cells could become specified to only one odorant. This however is a poor allocation of energy and would then require the genesis of a hoard of new granule cells for each variant of the same odor. 
Some recent work by \cite{Moreno2009} has shown that granule cells \textit{do} exhibit adult neurogenesis which is incited from the piriform cortex. This neurogenesis is the reason for which it is thought that granule cells play an important role in building of perception and for which we can learn odors well into our adult years. It does not however remove the flawed idea that granule cells can become highly specialized.   

Now that we are aquatinted with the general form of the OB we need to discuss the general schema of processing. In $GL,$ the periglomerular cells (PGs) and superficial short-axon cells (sSA) are thought to be the cells which begin the construction of perceptual categories. The evidence for this comes from recent work by \cite{Borthakur2019} which shows that learning can occur at the glomerular layer and not just at the granule cell layer. This implies at a minimum, that the purpose of the early (exterior) layers of the OB are to normalize data and to reduce noise in the sensation. Further it increases contrast between similar odorants. A nice analogy to this is the existence of edges in visual perception. There is an enormous amount of cortical space allocated to the processing of edges. This helps build a better image and in the same way, contrast enhancement in bulb, ``builds a better odor."   

The common theme to keep in mind for this system, is that of sampling with noise. Every layer samples from the previous in order to get a more specific perceptual category at the end. With the introduction of some noise (variance) we can eliminate some of the theoretical aspects of the system. For instance, the granule cell hyper-specification from before can be formally disregarded as the inherent noise of the odorants will not allow for the accuracy necessary to determine "exactly" what the odorant is. What this does tell us however is that we can combine the notions of categorical perception and generalization for this system to arrive on what we shall call \textit{Categorical Generalization}. At a first pass, this idea is the construction of a generalized perceptual category which given some learning set, all contained in the same perceptual category, is the extension of the learning set via the rules of generalization set forth in $\cite{Shepard1987}.$  

\begin{definition}
	Let $O$ be an odorant and $\mathcal{O}$ the corresponding perceptual category generated by learning on $O$. Then the \textbf{Generalized Category} $G\mathcal{O}$ is the perceptual category which extends $\mathcal{O}$ by generalizing its boundary.    
\end{definition} 

To understand this idea further, we shall investigate computational models of the olfactory system and see how the introduction of this concept motivates our model constructed in chapter 4.

\subsection{Modeling of Olfaction} 
Models of the olfactory system come in two main types: anatomical and theoretical (perceptual) . The anatomical models focus on understanding the biochemistry and spike timing of the OSN and related cells of the OB, whereas perceptual models tend to be fantastical speculation on the "perceptual space" of the system \cite{Cleland_Linster2005} \cite{Ermentrout_Terman2010}. In chapter 4, we shall propose a model which has the advantage of being mixture of the two, with the advantage of being mathematically elegant. Before then, let us understand some of the current problems with modeling and what makes the olfactory system significantly different than the other sensory systems. 

Each flavor of OSN has a different receptive field and thus we can consider the "space of possible physical inputs" to be some collection of points in a \textit{$350$-dimensional space}, with each different "dimension" defined by a different OSN receptive field. Compared to the three dimensions of vision, this is monstrous. This aspect of the olfactory system makes studying it substantially different from other modalities. One feature for example, is that distances tend to increase with the dimension. What we mean by this is the following: consider the unit sphere in an even dimensional space. The volume of a cube of side length $2,$ centered at the origin, has volume $2^{2k}$ where $2k$ is the dimension. Whereas, the volume of the unit sphere sitting inside this box is $\frac{\pi^k}{k!}$. So as $k$ increases, the volume of the unit sphere actually decreases. What this tells us is that, proportionally in higher dimensions, more points lie outside the unit sphere than inside. The importance of the above observation cannot be understated. It implies that there are theoretically an incredibly large number of possible odorants detectable by the OSNs as well as decreases the probability that any two odorants which are chemically different will be identified as similar. We can go one step further and say that the physical marker of an odorant and the sensation thereof is a large determining factor in the construction of the perception of that odor. 

The many thousands of OSNs converge onto glomeruli, of which there are the exactly same number as the different receptor types ($\sim350$). The main interest in the glomerular layer is the possibility of pre-processing, and learning \cite{Cleland2014}. This idea is fairly recent and provides an interesting new direction for computational models such as \cite{Li_Cleland2013}. We shall not spend any time on this topic however as it will play a minimal role in the later chapters.     

\begin{remark} The remainder of this section will be dedicated to the modeling of mitral and granule cells. These two cell types occupy a majority of the mental theatre of researchers in this field as they are the most mysterious cells in the olfactory bulb. \end{remark}

We begin with mitral cells. As compared to the roughly 350 glomeruli, there are about 3500 mitral cells (in humans) and even more in some mammalian tetrapods. The key feature of mammalian tetrapods is the independent sampling of the mitral cells from a distinct glomerulus. As mentioned above, this is not always the case and due to this fact, modeling these cells is a delicate procedure. Most authors elect to simply ignore the potential cross-sampling. 

As with most modeling, the early approaches were through linear algebra (see chapter 2) and some form of calculus \cite{Ermentrout_Terman2010}. The type of modeling which makes use of calculus extensively is not particularly helpful for building understanding of the perceptual space as a geometric object. The use of linear algebra though is quite important in the construction of a perceptual space.
In \cite{Zaidi2013} and many others, mitral cells are modeled as vectors in a Euclidean geometry. The important part here is the type of geometry chosen. Euclidean geometries are inherently the most restrictive geometry as it assumes no curvature in the perceptual space.

\begin{example}
	To see why a Euclidean geometry is restrictive, consider two points on a piece of paper. Let $d$ be the distance separating the points. Now, given any transformation of the paper which retains the flatness (a rotation or reflection) the distance between the points will stay the same. Now, let us introduce a fold into the paper. This can bring the points closer together in the ambient three-dimensional space but their distance along the paper will not change. If instead of a fold we make it a smooth change, this is precisely the introduction of curvature. 
\end{example}

Nonetheless, this choice of model has been shown repeatedly to not be useful. Simply speaking, perceptual distances do not sit well inside a linear space. It is convenient however to have the mathematical ease of a Euclidean space. For this reason, current research (such as \cite{Olshausen2018}) has begun to try and understand \textit{manifolds} (see chapter 3 for a definition) and their applications to sensory processing. These are objects which "look like" Euclidean space on a local scale. The advantage of these spaces is that we can introduce curvature to the perceptual space, while still retaining the linear structure on the tangent space at every point. In fact, the problem with Euclidean space is not unique to it. Any space with constant curvature will have the same deficit. We recommend running through the example above but exchanging the piece of paper with a ball or a saddle. This will give the other two types of spaces of constant curvature. 
Even though the above approach is flawed, some interesting results have appeared in other modalities \cite{Maio_Rao2007} that imply we may want to consider vector-like mitral cells in olfactory system models. Furthermore, the use of some high-level algebra and differential geometry has led to the investigation of certain mathematical objects called \textit{Lie Groups} (see chapter 3 for a definition). These play an important role in mathematics and physics so it is no surprise that they have shown up in neuroscience as well. 

We now turn our attention to granule cells. One large mystery surrounding them is the aforementioned adult neurogenesis. It was shown in \cite{Moreno2009} that in order for the olfactory system to function at its current level of accuracy, adult neurogenesis is necessary. Some have argued however that all evidence of adult neurogenesis is actually remnants of embryonic stem-cell differentiation. We shall not contest either of these topics here as the data is inconclusive either way. 
On a different note, granule cells are believed to be the workhorses of olfactory learning \cite{Cleland2014}. These cells inhibit the action potentials of the far larger mitral cells and attribute to the variance in spike-timing seen across the bulb for different odors. It should also be noted that there are orders of magnitude more granule cells than mitral cells. The exact mechanism for mitral cell inhibition is up for debate, however it is clear that the piriform cortex plays some critical role in the excitation-inhibition loop. Surprisingly however, models tend to not deal with subtle intricacies of granule cell inhibition. One possible explanation for this is that granule cells only act locally, in contrast to mitral cells which can inhibit relatively far away neighbors. This local action is not readily dealt with in computer models, and combining it with the relatively global action of mitral cells (sometimes having to intertwine the two) has been a blockade for some time now. 

As one final question of this chapter, we want to define the perceptual categories in olfaction. Given an odorant, the generalized category associated to that odorant is the result of the generalization gradients above. In practice, one should think of this in the following way: suppose $O$ is the odorant (or combination thereof) corresponding to an orange. Then the generalized category of unlearned oranges may encompass all citrus fruits. This is clearly too broad to be of use when differentiating particular species of orange or even ripeness. Therefore, we know that there must be some mechanism (granule cell interactions) which restricts the size of the generalized categories so that they are of use for identification. In fact, as we shall see in chapter 4, we have proposed a way of generating some specific hierarchies from such general data given some non-zero amount of learning. Geometrically we can view this as constructing some rough approximation for the perceptual space which somehow encodes the differences between distinct classes of odorants.   

This completes the brief introduction to the computational neuroscience of olfaction.

\chapter{An Introduction to Algebra}

\section{Preliminaries: Set Theory}

\setcounter{theorem}{-1}
Here we lay down the basics of set theory, its notation and how it is used in practice. We start with a definition 
\begin{definition}
	A \textbf{Set} $S,$ is any collection of elements (normally denoted with the corresponding small letter) with cardinality some ordinal. The \textbf{Order} (size/cardinality) of a set $S,$ is the number of elements in $S$ and denoted $|S|.$  
\end{definition}	 
We have the natural notion of a subset, denoted $T\subseteq S.$ If $T$ is strictly smaller than $S,$ then we write $T\subsetneq S.$ The collection of all subsets of a set $S$ is called the\textit{power set} and is denoted $\mathcal{P}(S).$ 
Some classic examples of sets are the natural numbers, denoted \[\N=\{0,1,...\}\] and the integers, denoted \[\Z=\{0,1,-1,2,-2,...\}.\]
Some more interesting sets are $\Q,\R,\C$
the sets of rational, real, and complex numbers respectively. Notice that $\N\subsetneq \Z\subsetneq \Q \subsetneq \R \subsetneq \C.$ For this reason, unless specified, we will use $\C$ in examples.  

Additionally, we can define intersections and unions of sets. If $S,T$ are two sets we define their intersection $S\cap T=\{x:x\in S \text{ and } x\in T\}$ and their union  $S\cup T=\{x:x\in S \text{ or } x\in T\}.$ Further, if $T\subseteq S,$ we can define the complement of $T$ is $S,$ to be $T^c=S-T=\{ s\in S:s\notin T\}.$  

\begin{definition}
	Let $X,Y$ be two sets. We define the \textbf{Cartesian Product},  denoted $X\times Y,$ as the set of all ordered pairs of elements in $X$ and $Y.$ That is \[ X\times Y=\{(x,y):x\in X,y\in Y\}\] 
\end{definition}

\begin{example}
	Let $X=\{1,2\}$ and $Y=\{a,b\}.$ Then \[X\times Y=\{ (1,a), (1,b) ,(2,a) ,(2,b)\}.\]
For finite sets, it is easy to see that $|X\times Y|=|X||Y|$ as for each element $x\in X$ we can look at the subset $\{x\}\times Y\subseteq X\times Y$ each of these sets has size $|Y|.$ As there are $|X|$ choices for $x,$ the claim follows.    
\end{example}

\begin{definition}
	A $\textbf{Function}$ $f:S\to T$ is a mapping between sets which assigns to each element $s$ in the source space $S,$ an element $f(s)=t\in T.$\footnote{The symbol $\in$ is to be read as "an element of." If we use the symbol $\notin$ the slash means "not". For example $-1\in \Z$ should be read as $-1$ is an element of the integers and $-1\notin \N$ should be read as $-1$ is not an element of the natural numbers.  Once comfortable with this notion, it is common practice to say $-1$ \textit{is} an integer.}  For this reason, we call $S$ the \textbf{domain} of $f,$ and $T$ the \textbf{codomain} of $f$. Denote by $f^{-1}(t)=\{s\in S:f(s)=t\}$ this is called the \textbf{Pre-Image} of $t$ under $f.$
\end{definition}	
We can compose functions assuming the codomain of the first is contained in the domain of the second. We can actually relax this requirement to be that the image, denoted $\im f,$ is contained in the domain of $g.$  

Notice that $f$ may not hit every element of $T:$ that is there may exists some $t\in T$ such that $t\neq f(s)$ for any $s\in S.$ The following sister definitions provide us with insight into this exact situation. 
\begin{definition}[Injective, Surjective, and Bijective]
Let $f:S\to T$ be a function. 
\begin{description}
	\item[1)] $f$ is called \textbf{injective} if whenever $f(s_1)=f(s_2)$ this implies (denoted $\implies$) that $s_1=s_2.$\\
	\item[2)] $f$ is called \textbf{surjective} if for all (denoted $\forall$) $t\in T,$ there exists at least one $s\in S,$ such that $f(s)=t.$ 
	\item[3)] A function which is both injective and surjective is called \textbf{bijective}. 
\end{description}
\end{definition}
\begin{example}
	Let $f:\Z\to \Z$ be defined by $f(n)=2n.$ Then $f$ is injective trivially. $f$ is not surjective as for any odd number $l=2k+1$ cannot be written as $2n$ for any $n\in \Z.$ For an example of a surjective map, consider the absolute value function \[ |\cdot |:\Z\to \N\]
	$f(z)=f(-z)=|z|.$ In more standard notation, one writes $z\mapsto |z|.$ 
\end{example}
\begin{proposition}
	Let $f:A\to B$ and $g:B\to C$ be injective (respectively surjective, bijective) functions. Then $g\comp f:A\to C$ is injective (resp. surjective, bijective).  
\end{proposition}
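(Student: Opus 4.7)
The plan is to handle the three properties in sequence, treating the bijective case as an immediate corollary of the first two. Since each property is a clean elementwise statement about $g\comp f$, I would unwind the definition of composition in each case and chase the hypotheses on $f$ and $g$ in the natural order.

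First I would prove injectivity. The plan is to start with the hypothesis $(g\comp f)(a_1)=(g\comp f)(a_2)$ for $a_1,a_2\in A$ and rewrite this as $g(f(a_1))=g(f(a_2))$. Since $g$ is injective, I can cancel $g$ to obtain $f(a_1)=f(a_2)$, and then injectivity of $f$ gives $a_1=a_2$. This is a direct double application of the injectivity definition with no subtlety beyond carefully distinguishing the roles of the two functions.

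Next I would prove surjectivity by a reverse chase. Given an arbitrary $c\in C$, surjectivity of $g$ first produces some $b\in B$ with $g(b)=c$, and then surjectivity of $f$ produces some $a\in A$ with $f(a)=b$. Substituting back, $(g\comp f)(a)=g(f(a))=g(b)=c$, which exhibits the required preimage. The only thing to be careful about here is order: one must use $g$ first to step back from $C$ into $B$, and then $f$ to step back from $B$ into $A$, not the other way around.

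Finally, the bijective case follows by applying the previous two parts simultaneously: if $f$ and $g$ are each both injective and surjective, then $g\comp f$ is injective by the first part and surjective by the second part, hence bijective by definition. I do not anticipate any real obstacle; the main thing to guard against is notational sloppiness in the forward/backward direction of the two arguments and the implicit assumption that $\im f\subseteq B$ so that $g\comp f$ is well-defined on all of $A$, which is built into the hypothesis that $f:A\to B$ and $g:B\to C$.
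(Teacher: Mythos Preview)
Your proposal is correct and follows essentially the same approach as the paper's own proof: injectivity by cancelling $g$ then $f$, surjectivity by stepping back through $g$ then $f$, and bijectivity as the conjunction of the two. There is no meaningful difference in strategy or detail.
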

\begin{proof}\text{}\\
	(Injectivity) Suppose that $(g\comp f)(a)=(g\comp f)(a').$ As $g$ is injective, we know that $f(a)=f(a').$ Now, as $f$ is injective, we have that $a=a'.$ \\\\
	(Surjectivity) Let $c\in C.$ As $f$ is surjective, we know that the domain of $g$ is all of $B.$ Now, we know that $c=g(b)$ for some $b\in B.$ As $f$ is surjective, we have that $b=f(a)$ for some $a\in A.$ Thus, for all $c\in C,$ there exists at least one $a\in A$ such that $g\comp f(a)=c.$ As bijectivity is a combination of the previous two statements, this completes the proof.  
\end{proof}

\begin{theorem}
	Let $f:X\to Y$ be a bijective function. Then there exists a map $g:Y\to X$ such that $f\comp g=\id_Y$ and $g\comp f=\id_X.$
\end{theorem}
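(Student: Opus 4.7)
The plan is to construct the inverse map $g$ pointwise by exploiting the two halves of the bijectivity hypothesis separately: surjectivity will guarantee that there is something to assign to each $y \in Y$, and injectivity will guarantee that the assignment is unambiguous, so that $g$ is genuinely a function (not a multi-valued relation).

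Concretely, I would proceed as follows. First, fix an arbitrary $y \in Y$. Because $f$ is surjective, the preimage $f^{-1}(y) = \{x \in X : f(x) = y\}$ is nonempty. Because $f$ is injective, whenever $x_1, x_2 \in f^{-1}(y)$ we have $f(x_1) = y = f(x_2)$, which forces $x_1 = x_2$, so $f^{-1}(y)$ is in fact a singleton. Define $g(y)$ to be this unique element. Since every $y \in Y$ yields a well-defined output $g(y) \in X$, this gives a function $g \colon Y \to X$.

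Next I would verify the two identities. For $f \circ g = \id_Y$: given $y \in Y$, by construction $g(y)$ is the unique element of $X$ with $f(g(y)) = y$, which is exactly what we want. For $g \circ f = \id_X$: given $x \in X$, let $y = f(x)$. Then $x$ lies in $f^{-1}(y)$, and since $f^{-1}(y)$ has only one element, namely $g(y)$, we conclude $g(f(x)) = g(y) = x$.

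There is essentially no main obstacle here, since the result is a direct unpacking of the definitions from Definition 2.0.4 (Injective, Surjective, and Bijective). The only subtlety worth being explicit about is the step that turns the set-valued assignment $y \mapsto f^{-1}(y)$ into an honest function $g$; this is where one must invoke both halves of bijectivity simultaneously, and it is the step where a reader could most easily conflate ``$g$ exists'' with ``$g$ is well-defined.'' Everything after that is a matter of chasing the definition of $g$ through the two compositions.
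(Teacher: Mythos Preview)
Your proposal is correct and follows essentially the same approach as the paper: define $g(y)$ to be the unique element of $f^{-1}(y)$, use surjectivity for existence and injectivity for uniqueness, then verify both composition identities directly. Your write-up is in fact more explicit than the paper's about why $g$ is well-defined and how the second identity $g\circ f = \id_X$ follows, but the underlying argument is the same.
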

\begin{proof}
	Define $g:Y\to X$ as $g(y)=f^{-1}(y).$ This is well defined as $f$ is bijective so $y\in \im f$ and $\exists! x\in X$ such that $f(x)=y.$ Then \[g\comp f(x')=f^{-1}(f(x))=x\] by bijectivity of $f.$ Further, \[f\comp g(y)=f(f^{-1}(y))=f(x')=y\]
	by bijectivity. Hence, $g$ satisfies the properties and we are done.   
\end{proof}

\begin{definition}
	Let $X$ be a set. We say $E\subseteq X\times X$ is an \textbf{Equivalence Relation on $X$} if the following properties hold: \begin{enumerate}
			\item $(x,x)\in E$ for all $x\in X.$ 
			\item If $(x,y)\in E$ then $(y,x)\in E.$
			\item If $(x,y),(y,z)\in E$ then $(x,z)\in E.$ 
	\end{enumerate}
	We call these properties reflexivity, symmetry, and transitivity respectively. It is common practice to not write $E$ as a set of ordered pairs but rather write $x\sim y$ if $(x,y)\in E.$ We then say $\sim$ is an equivalence relation on $X.$ Further let $[x]$ (also denoted  $\bar{x}$ in some cases) be the set of all elements $y\in X$ such that $x\sim y.$ We call $[x]$ the \textbf{Equivalence Class} of $x.$ We denote the set of equivalence classes as $X/\sim.$  
\end{definition}

\begin{lemma}
	Let $\sim$ be an equivalence relation on a set $X.$ Then $\sim$ induces a partition of $X$ via equivalence classes. This is equivalent to saying for all elements $x,y\in X,$ either $[x]=[y]\in X/\sim$ or $[x]\cap [y]=\varnothing$ the empty set.   
\end{lemma}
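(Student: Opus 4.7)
The plan is to establish the equivalence of the two statements by first unpacking what a partition means: a collection of nonempty subsets whose union is all of $X$ and which are pairwise disjoint. So I need to show (i) every $x \in X$ lies in some equivalence class, and (ii) any two equivalence classes are either identical or disjoint. The second part is the nontrivial content, and the phrasing of the lemma makes clear that (ii) is precisely the dichotomy to be proved.

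First I would dispatch (i) immediately using reflexivity: since $x \sim x$, we have $x \in [x]$, so the equivalence classes cover $X$ and are all nonempty. This also shows $X = \bigcup_{x\in X} [x]$, which is half of the partition property.

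For the main step (ii), I would proceed by contrapositive/case analysis: assume $[x] \cap [y] \neq \varnothing$ and prove $[x] = [y]$. Pick any $z \in [x] \cap [y]$, so $x \sim z$ and $y \sim z$. By symmetry $z \sim y$, and then by transitivity $x \sim y$. To show $[x] \subseteq [y]$, take $w \in [x]$, so $x \sim w$; by symmetry $w \sim x$, and combining with $x \sim y$ via transitivity gives $w \sim y$, hence $y \sim w$ by symmetry, so $w \in [y]$. The reverse inclusion $[y] \subseteq [x]$ follows by an identical argument with the roles of $x$ and $y$ swapped (or by invoking symmetry of the situation directly). Therefore $[x] = [y]$, and combined with the contrapositive we obtain the desired dichotomy.

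The hardest part is really just pedagogical rather than mathematical: since this is an early-chapter lemma aimed at readers new to abstract algebra, the proof should explicitly flag each use of reflexivity, symmetry, and transitivity so the reader sees that all three axioms are genuinely required. There is no deep obstacle, but I would be careful to write out both inclusions $[x] \subseteq [y]$ and $[y] \subseteq [x]$ rather than waving at symmetry, and to note at the end that the two formulations — ``$\sim$ induces a partition'' and ``any two classes are equal or disjoint'' — are equivalent precisely because we have also shown the classes are nonempty and cover $X$.
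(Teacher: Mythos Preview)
Your proposal is correct and follows essentially the same approach as the paper: assume the intersection is nonempty, pick a common element, and use symmetry and transitivity to conclude $x\sim y$ and hence $[x]=[y]$. Your version is in fact more complete than the paper's, which argues by contradiction but skips both the covering step (i) and the explicit verification that $x\sim y$ forces $[x]=[y]$; your write-up fills those gaps and would serve the pedagogical aims of the chapter better.
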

\begin{proof}
	Suppose $[x]\neq [y]$ and $[x]\cap [y]\neq \varnothing.$ Let $w\in [x]\cap [y].$ Then $x\sim w$ and $y\sim w.$ Using the symmetry and transitive property of $\sim,$ we have that $x\sim y.$ Therefore $[x]=[y]$ a contradiction. Hence, either $[x]=[y]$ or $[x]\cap [y]$ for all $x,y\in X.$ 
\end{proof}

\begin{example}
	Let $\Z$ denote the set of integers as above. Fix some $n\geq 0.$ Define $a\sim b$ if $a-b=kn$ for some integer $k.$ The space $\Z/\sim\;:=\Z_n$ is called the set of integers modulo $n.$ Notice that  $\Z_n=\{0,1,2,...,n-1\}.$ Define the operation $(\cdot) \mod n:\Z\to \Z_n$ which sends $k\in \Z$ to $[k]$ which is equivalent to its remainder after dividing by $n.$   
\end{example}

\section{Group Theory}
We have opted to start this section with a few examples to introduce the idea of a group before giving the rigorous definition. 
\begin{example}
\textbf{}
	\begin{enumerate}
		\item Consider the set $\Z.$ We can define $+:\Z\times \Z\to \Z$ by $(a,b)\mapsto a+b.$ Clearly if $a\neq 0,$ then $-a$ exists and is different from $a.$ Further $a+(-a)=0.$ This makes $0$ the additive identity in $\Z.$ 
		\item Let $D_n$ denote the set of symmetries of the regular $n$-gon. Then it is left as an exercise to the reader, to prove that $|D_n|=2n.$ Note that we can compose two such symmetries. Take for example the case $n=4.$  Let the rotation by $90^\circ$ counterclockwise be denoted $r=R_{90^\circ}$ and the vertical reflection $s.$ Then $rs$ is the reflection along the primary diagonal. There is an identity element $r^0=R_{0^\circ}.$     
		\item Let $\C^\times$ denote the set of all non-zero complex numbers. Then we can define $\cdot:\C^\times\times \C^\times\to \C^\times$ by $(w,z)\mapsto w\cdot z=wz$ the standard complex multiplication. Here $1$ is the multiplicative identity.  
	\end{enumerate}
\end{example}
With these examples in mind, we can now define groups in more abstraction. In general, one can think of groups as symmetries of some object, be it an $n$-gon or some set. We will make this more precise. 
\begin{definition}
	Let $G$ be a set and define $\mu:G\times G\to G$ be a binary operation such that \begin{enumerate}
		\item For all $x,y,z\in G,$ $\mu(x,\mu(y,z))=\mu(\mu(x,y),z).$ 
		\item There exists $ e\in G$ such that $\mu(e,g)=g=\mu(g,e)$ for all $g\in G$
		\item For all $g\in G$ there exists $h\in G$ such that $\mu(g,h)=\mu(h,g)=e.$
	\end{enumerate}
	We commonly denote $\mu(g,h)$ as $gh$ when the operation is clear. Further, the last condition tells us that every element has an inverse and we denote $g^{-1}:=h$ from that condition. We call $G$ equipped with $\mu,$ a \textbf{Group} and denote it $(G,\mu).$ We say a group is \textbf{Abelian} if for all $g,h\in G,$ we have that $gh=hg.$       
\end{definition}
\begin{remark}
	Other common notations for groups are $(G,\cdot)$ and $(G,\star)$ where $\cdot$ and $\star$ denote the multiplication operations.  
\end{remark}
It should now be obvious that $(1)$ and $(3)$ in Example $2.2.1$ are example of groups (i.e. every integer has an inverse, namely its negative and every non-zero complex number is invertible. For $(2),$ notice that applying $r$ $n-$times, we get $e.$ Therefore $r^n=e$ and $r^{n-1}=r^{-1}.$ Further, $s^2=e.$ and so $s$ is its own inverse. 

Now we lay down some important non-examples. These, for various reasons, violate one or many of the group axioms. 

\begin{nexample} \text{}
	\begin{enumerate}
		\item Consider the $\Z,\Q,\R$ under standard multiplication. $\Z$ is not a group as all other elements than $\pm 1,$ are not invertible as $\frac{1}{n}$ is not an integer. Why do $\Q$ and $\R$ fail?  
		\item(Integers Modulo $n$) Let $\Z_n$ denote the set of integers $\{0,1,...,n-1\}$ together with multiplication modulo $n.$ Multiplying modulo $n,$ means that we first multiply the numbers using normal arithmetic and then "remove"  $n$ as many times as possible and the remaining number is their product. For an example let $n=5,$ then \[ 3\cdot 4\equiv 2\mod 5\] Another interpretation of this involves remainders. When long dividing, if the two objects do not divide one another, we are left with a remainder. For integers, $(\cdot) \;\text{mod } n$ precisely gives the remainder when dividing by $n.$ Under this multiplication operation not every element here has an inverse, namely $0.$ For $n\neq p$ a prime number, we can find other elements which are not invertible. Take for instance $n=6$ and the element $2.$ We leave it to the reader to check this. 
	\end{enumerate}
\end{nexample}

\begin{lemma}   
	For any group $(G,\cdot),$ inverses are unique. Further, the identity element is unique.
\end{lemma}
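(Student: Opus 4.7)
The plan is to prove the two uniqueness statements in sequence, starting with the identity since the uniqueness of inverses will make use of the identity's defining property. Both arguments are short and rely entirely on the three group axioms from the definition, so I would not need any additional machinery.

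For uniqueness of the identity, I would suppose for contradiction (or just directly) that $e$ and $e'$ are both elements of $G$ satisfying axiom (2), i.e. both act as two-sided identities under $\mu$. The plan is to evaluate the single expression $\mu(e,e')$ in two ways: viewing $e$ as an identity gives $\mu(e,e') = e'$, while viewing $e'$ as an identity gives $\mu(e,e') = e$. Comparing these forces $e = e'$, which establishes uniqueness.

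For uniqueness of inverses, I would fix $g \in G$ and suppose $h, h' \in G$ both satisfy axiom (3) with respect to $g$, so that $gh = hg = e$ and $gh' = h'g = e$. The plan is to chain equalities starting from $h$, inserting the identity, replacing it with $gh'$, and then using associativity to re-associate onto $hg$, which collapses back to the identity and leaves $h'$. Schematically:
\begin{align*}
h = \mu(h,e) = \mu(h,\mu(g,h')) = \mu(\mu(h,g),h') = \mu(e,h') = h'.
\end{align*}
Each step is justified by exactly one of axioms (1), (2), or (3), so the argument uses every piece of the group definition.

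I do not expect any real obstacle here; the only subtlety is to make sure associativity is invoked cleanly and that I use the two-sided versions of both the identity and inverse axioms, since otherwise one cannot freely rewrite on either side. Once those are in hand, both statements follow in a couple of lines and can be presented without further commentary.
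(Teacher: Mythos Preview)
Your proposal is correct and essentially matches the paper's approach: both arguments evaluate a single expression in two ways using associativity and the identity/inverse axioms. The only cosmetic difference is that the paper treats inverses first by computing $h'gh$ as $(h'g)h = h$ and $h'(gh) = h'$, then remarks that the identity case follows by the same process, whereas you handle the identity first and chain from $h$ directly; the underlying reasoning is identical.
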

\begin{proof}
	Let $g\in G.$ Suppose there exist $h,h'\in G,$ $h'\neq h$ both inverses for $g.$ Then on one hand we have that \[ h'gh=(h'g)h=eh=h\]
	on the other hand we have that \[ h'gh=h'(gh)=h'e=h'\]
	Therefore $h'=h$ a contradiction. Hence, $h=h'$ is the unique element such that $gh=hg=e.$ To see that the identity element is unique, use the same process as above. This completes the proof. 
\end{proof}
\begin{remark}
	For the remainder of the text, we will refer to groups by the underlying set $(G,\cdot):=G$ when the multiplication is understood and there is no room for confusion. This is standard notation and in most cases the multiplication is well understood. We will specify the multiplication when we have a choice of operation.
\end{remark}
\begin{corollary}
	If $g,h\in G$ are any elements. Then $(gh)^{-1}=h^{-1}g^{-1}.$ 
\end{corollary}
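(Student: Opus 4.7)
The plan is to verify directly that the element $h^{-1}g^{-1}$ behaves as an inverse for $gh$, and then to invoke the uniqueness of inverses established in the preceding lemma to conclude that this element must be $(gh)^{-1}$. Since group multiplication is a priori noncommutative, the order of the factors matters, and the natural guess that $(gh)^{-1}$ should equal $g^{-1}h^{-1}$ will fail for general groups; the asymmetry in the statement is exactly what the computation will reveal.

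First I would compute $(gh)(h^{-1}g^{-1})$. Using associativity (axiom (1) in the definition of a group) to reassociate the product as $g(hh^{-1})g^{-1}$, then applying the defining property of $h^{-1}$ to get $geg^{-1}$, then using the identity property of $e$ to reduce to $gg^{-1}$, and finally applying the defining property of $g^{-1}$ yields $e$. Symmetrically, the same sequence of moves applied to $(h^{-1}g^{-1})(gh)$ gives $h^{-1}(g^{-1}g)h = h^{-1}eh = h^{-1}h = e$.

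At this stage I have exhibited an element, namely $h^{-1}g^{-1}$, which satisfies both $(gh)x = e$ and $x(gh) = e$; that is, $h^{-1}g^{-1}$ is an inverse of $gh$. By Lemma 2.2.4, inverses in a group are unique, so this element must coincide with $(gh)^{-1}$, which is the desired equality.

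The main obstacle is not conceptual but notational: one has to resist the urge to write $gh \cdot h^{-1}g^{-1} = gg^{-1} \cdot hh^{-1}$, which secretly assumes commutativity. Keeping the factors in their given order and inserting parentheses explicitly before invoking associativity makes the argument airtight. No additional machinery beyond the group axioms and the uniqueness lemma is required.
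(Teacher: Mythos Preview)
Your proof is correct and is exactly the standard argument one would expect here. The paper in fact states this corollary without proof, leaving it as an immediate consequence of the uniqueness lemma; your write-up fills in precisely the details the reader is expected to supply.
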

\begin{corollary}
	Suppose $G$ is a group such that every non-identity element is an involution (that is $g^2=e$). Then $G$ is abelian. 
\end{corollary}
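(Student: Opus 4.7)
The plan is to exploit the hypothesis to rewrite every element as its own inverse, and then invoke the inversion identity from the previous corollary to swap the order of a product. First I would observe that the hypothesis $g^2 = e$ for every non-identity $g$ in fact extends to all of $G$, since the identity trivially satisfies $e^2 = e$. Combined with the uniqueness of inverses from Lemma~2.2.3, this means every element $g \in G$ satisfies $g = g^{-1}$.

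Next, I would pick arbitrary $g, h \in G$ and aim to show $gh = hg$. The element $gh$ is itself in $G$, so by the observation above $gh = (gh)^{-1}$. Applying the preceding corollary, which states that $(gh)^{-1} = h^{-1} g^{-1}$, gives
\[
gh = (gh)^{-1} = h^{-1} g^{-1}.
\]
Using once more that every element is its own inverse, $h^{-1} = h$ and $g^{-1} = g$, so $h^{-1}g^{-1} = hg$. Chaining the equalities yields $gh = hg$, which is exactly the abelian condition. Since $g$ and $h$ were arbitrary, $G$ is abelian.

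I do not anticipate a serious obstacle here: the entire argument is a two-line manipulation once the hypothesis is phrased as $g = g^{-1}$ for all $g$. The only mildly subtle point is making sure the hypothesis, phrased in terms of non-identity elements, does apply to the identity as well so that we may freely use $g = g^{-1}$ without carving out cases, but this is immediate from $e \cdot e = e$. The rest is a direct appeal to the corollary on inverses of products that was established just above.
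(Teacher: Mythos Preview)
Your proof is correct and follows exactly the approach the paper intends: the paper leaves this as an exercise with the hint to use uniqueness of inverses to deduce $x = x^{-1}$, and you carry this out cleanly together with the preceding corollary $(gh)^{-1} = h^{-1}g^{-1}$. There is nothing to add.
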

\begin{proof}
	This proof is left as an exercise to the reader. \textbf{Hint:} Using the fact that inverses are unique, realize that $x^2=e\implies x=x^{-1}$ for all non-identity elements.  	
\end{proof}

In practice, it can be hard to know if a given set is indeed a group.
The following theorem is integral in identifying groups from abstract sets. 
\begin{theorem}
	Let $G$ be a set equipped with an associative binary operation and suppose there exists $e\in G$ with the following properties \begin{enumerate}
		\item $ge=g$ for all $g\in G$
		\item For every $g\in G$ there exists $h\in G$ such that $gh=e.$
	\end{enumerate}
	Then $G$ is a group. 
\end{theorem}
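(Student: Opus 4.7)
The plan is to verify the two missing group axioms: that $e$ is a left identity and that the right inverse $h$ guaranteed by (2) is also a left inverse. Associativity is already given, so once these are in hand the definition is satisfied. The key observation is that the two given hypotheses only speak about things happening on the right, so I need a trick to transfer them to the left.

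First I would fix an arbitrary $g\in G$ and let $h\in G$ be such that $gh=e$ as in hypothesis (2). The hypothesis also applies to $h$ itself, so let $k\in G$ satisfy $hk=e$. The main step is to show $hg=e$ by a telescoping computation that inserts $e=hk$ and then uses associativity together with $gh=e$:
\[ hg = (hg)e = (hg)(hk) = h(gh)k = h\,e\,k = (he)k = hk = e. \]
Here I am using hypothesis (1) to rewrite $he=h$ at the last step. This shows every right inverse is automatically a two-sided inverse.

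With the left inverse property in hand, showing that $e$ is also a left identity is quick: for any $g\in G$, choose $h$ with $gh=hg=e$, and compute
\[ eg = (gh)g = g(hg) = ge = g, \]
again appealing to associativity and hypothesis (1). Combining these two facts with associativity (assumed) and the existence of $e$ (hypothesis (1)), all three axioms of a group are satisfied, so $G$ is a group.

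The main obstacle is purely conceptual rather than computational: one has to spot that invoking hypothesis (2) twice — once for $g$ and once for $h$ — is what lets the one-sided data bootstrap into two-sided data. After that trick, the rest is routine manipulation, and no hypotheses are used that weren't stated.
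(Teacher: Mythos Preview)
Your proof is correct and follows essentially the same approach as the paper: both invoke hypothesis (2) twice (once for $g$, once for its right inverse) to first establish $hg=e$, and then use this to derive the left identity $eg=g$. The computations differ in minor bookkeeping only.
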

\begin{proof}
	For $g\in G,$ pick $h\in G$ as in $(2).$ Then it suffices to show that $eg=g$ and $hg=e.$ Using $(2)$ again for $h,$ we can find an element $i\in G$ such that $hi=e.$ Then \[ g=ge=g(hi)=(gh)i=ei=i\]
Therefore, \[ hg=h(ei)=(he)i=hi=e\]
as desired. Now, \[g=ge=g(hg)=(gh)g=eg.\] This completes the proof.  	   
\end{proof}

This theorem gives us a criterion to check whether or not a set $X$ is actually a group. In practice, this is much more convenient to check than the entirety of the group axioms. An example of this is the set $\N.$ We have an identity element $0.$ However, we cannot find $n^{-1}=-n$ for any non-zero element. Therefore $\N$ is not a group. However, $\N$ is the prototypical example of a \textbf{Semi-Group}: a set which has an associative, unital binary operation where not every element has an inverse. These objects will play a role in chapter $6$ when discussing Toric Varieties. 

The following lemma is provided for ease with later proofs. It gives a criterion for a subset $H\subseteq G$ to be a subgroup. 
\begin{lemma}[Subgroup Criterion]
	Let $H\subseteq G$ be any subset. Let $x,y\in H.$ If $xy^{-1}\in H,$ for all $x,y\in H,$ then $H$ is a group and thus a subgroup of $G.$
\end{lemma}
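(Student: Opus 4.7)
The plan is to verify that $H$ inherits all the group axioms from $G$, using only the closure condition $xy^{-1}\in H$ together with the (tacitly assumed) non-emptiness of $H$. Since associativity of the binary operation on $H$ is automatic from associativity in $G$, the work reduces to three things: producing the identity inside $H$, producing inverses inside $H$, and showing closure under the product itself.

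First I would fix some $x\in H$ (available because $H$ is non-empty) and apply the hypothesis with $y=x$ to conclude $xx^{-1}=e\in H$. Next, for any $y\in H$, I would apply the hypothesis to the pair $(e,y)$: since $e\in H$ and $y\in H$, we get $ey^{-1}=y^{-1}\in H$, so $H$ is closed under inversion. Finally, for any $x,y\in H$, the previous step gives $y^{-1}\in H$, and applying the hypothesis to $(x,y^{-1})$ yields $x(y^{-1})^{-1}=xy\in H$, so $H$ is closed under the product. At this stage $H$ is equipped with an associative binary operation, contains a right identity $e$ satisfying $xe=x$ for all $x\in H$, and every $x\in H$ has a right inverse $x^{-1}\in H$ with $xx^{-1}=e$, so Theorem~2.2.5 applies and certifies that $H$ is a group under the restricted operation. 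Because the operation on $H$ is literally the restriction of the operation on $G$ and the identity of $H$ is the identity of $G$, $H$ is a subgroup of $G$.

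The main (minor) obstacle is the ordering: one must produce the identity before one can produce inverses, and inverses before one can produce products, so the three applications of the hypothesis have to be sequenced carefully. A second, more delicate point is the standing assumption that $H\neq\varnothing$; the statement as written does not mention this, so I would either note it as an implicit hypothesis or remark that the empty set trivially satisfies the condition but is usually excluded from the definition of subgroup. Everything else is a direct bookkeeping exercise using Lemma~2.2.2 (uniqueness of inverses) to identify the inverse produced inside $H$ with the inverse already existing inside $G$.
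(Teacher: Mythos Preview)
Your proposal is correct and follows essentially the same route as the paper: obtain $e\in H$ via $xx^{-1}$, then inverses via $ey^{-1}$, then closure via $x(y^{-1})^{-1}$. Your version is in fact slightly more careful than the paper's, since you flag the implicit non-emptiness hypothesis and explicitly invoke the one-sided criterion (Theorem~2.2.8) to conclude.
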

\begin{proof}
	Assume $xy^{-1}\in H$ for all $x,y\in H.$ If $x=y,$ then $xx^{-1}=e\in H.$ Let associativity is clear as the multiplication is inherited from $G.$ To show every element is invertible, consider $x\in H$ and $e,$ then $ex^{-1}=x^{-1}\in H.$ Using this, consider $x$ and $y^{-1}.$ Then $x(y^{-1})^{-1}=xy\in H.$ Therefore $\cdot:H\times H\to H$ defines an associative, binary, unital, and invertible map. Hence, $H$ is a group and in fact a subgroup of $G.$ 
\end{proof}

\subsection{Group Homomorphisms}
Now that we have the basic objects of this section, we can consider maps between them. Note in the following definition, the maps act as you would expect: preserving the structure of both groups.
\begin{definition}
	Let $\varphi:(G,\cdot)\to (H,\star)$ be a map. $\varphi$ is a \textbf{Group Homomorphism} (or a Morphism of groups, see Ch. 3) if for all  $g,g'\in G,$ we have that \[ \varphi(g\cdot g')=\varphi(g)\star \varphi(g')\]
	That is $\varphi$ is \textbf{equivariant} with respect to the multiplication operations on $G$ and $H.$ A group homomorphism which is bijective is called an \textbf{Group Isomorphism}. If such a map exists, then the domain and codomain groups are said to be isomorphic and denoted $G\cong H.$  
\end{definition}
\begin{example}\text{}
	\begin{enumerate}
		\item Let $G=\{1,-1,i,-i\}$ where $i=\sqrt{-1}.$ Then define $f:\Z \to G$ by $f(m)=i^m.$ This makes $f$ a group homomorphism as \[f(m+n)=i^{m+n}=i^m\cdot i^n=f(m)\cdot f(n)\]
		\item Let $\R$ denote the set of all real numbers and $\R^\times$ denote the set of non-zero real numbers. Define $g:\R^\times \to \R^times$ by $x\mapsto x^2.$ Then $g$ is a homomorphism as $(xy)^2=x^2y^2$ for real numbers. We encourage the reader to investigate how changing the domain and/or range to $\R_{\geq 0}$ changes the properties of the homomorphism.   
	\end{enumerate}
\end{example}
We now lay down two definitions which are integral to the study of algebra and have analogs in all other branches of mathematics. 

\begin{definition}
	Let $H\subseteq G$ be a set contained in a group $G.$ We call $H$ a \textbf{Subgroup} if for all $h,h'\in H,$ $h\cdot h'\in H$ and $h^{-1}\in H.$ We denote subgroups using the notation $H\leq G.$ Denote by $gH=\{gh:h\in H\}$ for any $g\in G.$ Then we call a subgroup \textbf{Normal} if \[gHg^{-1}=H\] for all $g\in G$ and write $H\trianglelefteq G.$ Denote by \[Z(G)=\{g\in G:gh=hg, \forall h\in G\},\] the \textbf{Center} of $G. $ It should be obvious that $Z(G)$ is a normal subgroup of $G.$  
\end{definition}
\begin{definition}
	Let $\varphi:G\to H$ be a group homomorphism. Define the \textbf{Kernel} of the homomorphism $\varphi$ to be \[\ker \varphi=\{g\in G: \varphi(g)=e_H\}\]
	This is the set of all elements which are annihilated under the mapping $\varphi.$ 
\end{definition}
\begin{proposition}
	The set $\ker \varphi$ is a group. In particular, it is a normal subgroup of $G.$ 
\end{proposition}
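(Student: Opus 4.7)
The plan is to first establish two standard preliminary facts about any group homomorphism $\varphi$, namely that $\varphi(e_G)=e_H$ and $\varphi(g^{-1})=\varphi(g)^{-1}$. The first follows from the idempotent identity $\varphi(e_G)=\varphi(e_G\cdot e_G)=\varphi(e_G)\varphi(e_G)$: multiplying both sides by $\varphi(e_G)^{-1}$ in $H$ leaves $e_H=\varphi(e_G)$. The second falls out of the uniqueness of inverses (Lemma 2.2.4) applied to the equation $\varphi(g)\varphi(g^{-1})=\varphi(g\cdot g^{-1})=\varphi(e_G)=e_H$. An immediate corollary of the first fact is that $e_G\in\ker\varphi$, so the kernel is nonempty and I can legitimately invoke the subgroup criterion on it.

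Next I would apply the subgroup criterion (Lemma 2.2.7): to show $\ker\varphi\leq G$ it suffices to check that $xy^{-1}\in\ker\varphi$ whenever $x,y\in\ker\varphi$. A one-line computation using the homomorphism property and the preliminary facts gives $\varphi(xy^{-1})=\varphi(x)\varphi(y)^{-1}=e_H\cdot e_H^{-1}=e_H$, so $xy^{-1}\in\ker\varphi$ as required. This already establishes the first half of the proposition.

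Finally, I would verify normality by showing $g(\ker\varphi)g^{-1}=\ker\varphi$ for every $g\in G$. For the forward inclusion, pick an arbitrary $k\in\ker\varphi$ and compute $\varphi(gkg^{-1})=\varphi(g)\varphi(k)\varphi(g^{-1})=\varphi(g)\cdot e_H\cdot\varphi(g)^{-1}=e_H$, so $gkg^{-1}\in\ker\varphi$; this gives $g(\ker\varphi)g^{-1}\subseteq\ker\varphi$. The reverse inclusion follows by running the same argument with $g$ replaced by $g^{-1}$ and then conjugating, which yields $\ker\varphi\subseteq g(\ker\varphi)g^{-1}$.

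There is no real obstacle in this proof; everything is mechanical once the two preliminary identities are in hand. The only thing worth flagging is that the subgroup criterion as stated presumes (tacitly) that $H$ is nonempty, so it is important to record $e_G\in\ker\varphi$ explicitly before invoking it rather than taking nonemptiness for granted.
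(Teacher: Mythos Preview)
Your proof is correct and follows essentially the same approach as the paper's: establish $\varphi(e_G)=e_H$ to get nonemptiness, verify the subgroup property via direct computation with the homomorphism axiom, and check normality by computing $\varphi(gkg^{-1})$. The only cosmetic difference is that you package the subgroup check through the $xy^{-1}$ criterion (Lemma~2.2.7), whereas the paper verifies closure under multiplication and inverses separately; you are also slightly more careful in arguing both inclusions for $g(\ker\varphi)g^{-1}=\ker\varphi$, though the paper's one-sided inclusion (applied for all $g$) already suffices.
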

\begin{proof}
We first show that $\ker \varphi$ is non-empty. Let $e\in G$ be the identity. We claim that $\varphi(e_G)=e_H.$ To see this, recognize that \[\varphi(gg^{-1})=\varphi(g)\star \varphi(g^{-1})=\varphi(g)\star \varphi(g)^{-1}=\varphi(e_G)=e_H\] for all $g\in G.$ Thus, $\ker \varphi$ is nonempty. 
	As $\ker \varphi\subseteq G,$ $\ker \varphi$ inherits multiplication from $G.$ Notice that for $x,y\in \ker \varphi,$ we have that \[\varphi(x\cdot y)=\varphi(x)\star \varphi(y)=e_h\]
	Therefore $\ker \varphi$ is closed under multiplication. Further, it is closed under inverses for the same reason. Hence, $\ker \varphi $ is a group and $\ker \varphi \leq G$. To check normality, notice that \[  \varphi(gxg^{-1})=\varphi(g)\star\varphi(x)\star\varphi(g)^{-1}=e\]
	for all $g\in G.$ Hence, $\ker \varphi \trianglelefteq G.$     
\end{proof}
As shown by the proof above, the homomorphism condition is quite restricting and powerful. We used the fact that $\varphi(g^{-1})=\varphi(g)^{-1}$ for group homomorphisms. it is left to the reader to check this fact.  Now, we have the following result which is important when proving other theorems. 
\begin{theorem}
	Let $\varphi:G\to H$ be a group homomorphism. Then $\varphi$ is injective if and only if $\ker \varphi=\{e\}.$ 
\end{theorem}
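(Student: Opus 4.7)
The plan is to prove the biconditional by establishing each direction separately, using the fact (from the preceding proposition) that $\varphi(e_G)=e_H$ and that $\ker \varphi$ is always a subgroup containing $e_G$.

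For the forward direction, I would assume $\varphi$ is injective and show $\ker \varphi = \{e_G\}$. Since $e_G \in \ker \varphi$ automatically (because $\varphi(e_G) = e_H$ as established in the proof of Proposition preceding this theorem), one containment is free. For the reverse containment, I would take any $g \in \ker \varphi$, observe that $\varphi(g) = e_H = \varphi(e_G)$, and then invoke injectivity directly to conclude $g = e_G$. This direction is essentially immediate once the identity-preservation fact is in hand.

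For the reverse direction, I would assume $\ker \varphi = \{e_G\}$ and show $\varphi$ is injective. The natural move is to suppose $\varphi(g_1) = \varphi(g_2)$ for some $g_1,g_2 \in G$ and then manipulate to pull out an element of the kernel. Multiplying both sides on the right by $\varphi(g_2)^{-1}$ gives $\varphi(g_1)\varphi(g_2)^{-1} = e_H$, and by the homomorphism property together with the identity $\varphi(g_2^{-1}) = \varphi(g_2)^{-1}$ (mentioned in the paragraph before the theorem), this rewrites as $\varphi(g_1 g_2^{-1}) = e_H$. Hence $g_1 g_2^{-1} \in \ker \varphi = \{e_G\}$, so $g_1 g_2^{-1} = e_G$, giving $g_1 = g_2$.

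Neither direction presents a real obstacle: the entire content rests on the homomorphism property and the identity-preservation lemma, both of which are already available in the text. The only mildly subtle point is making sure to cite $\varphi(g^{-1}) = \varphi(g)^{-1}$ cleanly; since the author already flagged this as an exercise just before the theorem, I would either quote it or give the one-line verification $\varphi(g)\varphi(g^{-1}) = \varphi(gg^{-1}) = \varphi(e_G) = e_H$ inline, so the proof is fully self-contained within the chapter's logical flow.
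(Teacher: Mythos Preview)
Your proposal is correct and follows essentially the same approach as the paper's proof: both directions proceed exactly as you describe, with the forward direction using injectivity on $\varphi(g)=\varphi(e_G)$ and the reverse direction showing $g_1g_2^{-1}\in\ker\varphi$ via the homomorphism property and $\varphi(g^{-1})=\varphi(g)^{-1}$.
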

\begin{proof}
	$(\Rightarrow)$ Assume that $\varphi$ is injective. That is $\varphi(x)=\varphi(y)\implies x=y$ for all $x,y\in G.$ Then let $g\neq e\in \ker \varphi.$ By injectivity, \[\varphi(g)=\varphi(e)=e_H \implies g=e \]
	Hence $\ker \varphi=\{e\}.$\\\\
$(\Leftarrow)$ Assume now that $\ker \varphi=\{e\}. $ Then suppose $\varphi(g)=\varphi(h).$ This tells us that \[\varphi(gh^{-1})=\varphi(g)\varphi(h)^{-1}=e_H.\] Therefore $gh^{-1}\in \ker \varphi.$ As $\ker \varphi =\{e\},$ we know $gh^{-1}=e$ and hence, $g=h.$ This completes the proof. 
\end{proof}

\begin{example}
	Let $G$ be a \textbf{simple} group (that is the only normal subgroups are $\{e\}$ and $G$ itself). Then any map $f:G\to H$ is either injective or trivial. This follows from Proposition $2.2.15$ and Theorem $2.2.16$.  
\end{example} 

Just as with sets, we can build the Cartesian product of groups, $G$ and $H,$ denoted $G\times H.$ As a set it is precisely the set $G\times H,$ but now we endow this with a group structure taken  component-wise. That is \[(g_1,h_1) (g_2,h_2)=(g_1 g_2,h_1 \star h_2)\]
For a concrete example, consider the set $\Z\times \R^\times$ ($\R^\times$ is the group of all non-zero real numbers under multiplication). Here \[(k_1,r_1)(k_2,r_2)=(k_1+k_2,r_1r_2)\] as the multiplication in $\Z$ is addition.    

\subsection{Quotient Groups and an Isomorphism Theorem}
At this point, we have the ability to construct a group, transition between groups, and "multiply" groups to make new ones. Just as with high-school algebra, we can now consider dividing, or taking quotients of groups. 
\begin{definition}
	Let $G$ be a group and $H$ any subgroup. We denote by $G/H$(resp. $H\backslash G)$ the set of all left(resp. right) \textbf{Cosets} \[gH=\{gh:h\in H\}\]
	under the equivalence relation that $g\sim g'\iff g=gh'$ for some $h\in H.$
This is in general not a group as multiplication is not well defined. 
\end{definition}
Notice how the notation for this set of cosets is the same notation we use for equivalence relations on a set. The reason for this is that then we take left(right) cosets, we are essentially glueing $G$ along the orbits of the subgroup $H.$

The first question one can ask about this set is when does it  becomes a group? In other words, for what $H\leq G$ is $G/H$ a group. The following Theorem provides an answer. 
\begin{theorem}
	Let $G$ be a group and $N$ a subgroup. Then $G/N$ (read $G$ mod $N$) is a group under the operation $(gN)(hN)=(gh)N$ if and only if $N\trianglelefteq G.$ Further there is a canonical homomorphism $G\to G/N$ which sends $g\mapsto gN$ such that $\ker (G\to G/N)=N.$ 
\end{theorem}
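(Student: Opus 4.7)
The plan is to prove the biconditional in two directions and then exhibit the canonical projection. For the forward implication, assume $N \trianglelefteq G$; the main task is to check that the proposed multiplication $(gN)(hN) = (gh)N$ is well-defined on equivalence classes of cosets. I would take two pairs of representatives $g_1 N = g_2 N$ and $h_1 N = h_2 N$, write $g_2 = g_1 n$ and $h_2 = h_1 m$ for some $n, m \in N$, and compute $g_2 h_2 = g_1 n h_1 m$. Normality supplies $n h_1 = h_1 n'$ for some $n' \in N$ (since $h_1^{-1} N h_1 = N$), whence $g_2 h_2 = g_1 h_1 n' m \in (g_1 h_1) N$. Once well-definedness is secured, associativity is inherited from $G$ because $((gN)(hN))(kN) = (ghk)N = (gN)((hN)(kN))$, the coset $eN = N$ is immediately a two-sided identity, and $g^{-1}N$ serves as the inverse of $gN$.

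For the reverse implication, suppose only that $(gN)(hN) := (gh)N$ gives a well-defined binary operation on $G/N$. Fix $g \in G$ and $n \in N$; I would exploit well-definedness by noting that $nN = N = eN$, so the product $(gN)(nN)(g^{-1}N)$ must equal $(gN)(eN)(g^{-1}N) = (gg^{-1})N = N$. On the other hand, the same product equals $(gng^{-1})N$ by the definition of the operation, forcing $gng^{-1} \in N$. Hence $gNg^{-1} \subseteq N$ for every $g$, and applying this inclusion to $g^{-1}$ as well yields the reverse containment, so $gNg^{-1} = N$ and $N \trianglelefteq G$.

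Finally, for the canonical map $\pi : G \to G/N$ defined by $\pi(g) = gN$, the homomorphism property is immediate from the definition of the quotient product, since $\pi(gh) = (gh)N = (gN)(hN) = \pi(g)\pi(h)$. The kernel is computed directly: $\ker \pi = \{g \in G : gN = eN\}$, and from the coset equivalence relation $gN = eN$ holds exactly when $g \in N$, so $\ker \pi = N$.

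The one genuinely delicate step is well-definedness in the forward direction, which is where normality earns its keep; everything else reduces to rote verification of the group axioms or direct unpacking of the definition of a coset. The conceptual takeaway is that normality of $N$ is precisely the condition that lets multiplication descend unambiguously from $G$ to the set of cosets $G/N$.
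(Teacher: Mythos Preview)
Your proof is correct and, in the forward direction, essentially matches the paper: both verify well-definedness from normality and then check the group axioms and the kernel of the projection.

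The genuine difference is in the reverse implication. You argue directly: exploiting well-definedness on the product $(gN)(nN)(g^{-1}N)$ forces $gng^{-1}\in N$. The paper instead first establishes the canonical projection $\pi:G\to G/N$ with $\ker\pi = N$, and then invokes the earlier proposition that the kernel of any group homomorphism is normal. Your route is more self-contained and elementary, requiring no appeal to prior results; the paper's route is shorter once that proposition is in hand and highlights the conceptual point that normal subgroups are exactly kernels. Both are standard and equally valid.
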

\begin{proof}
	We first need to show that the proposed group operation is well defined. Suppose $xN=gN$ and $yN=hN.$ These two statements are equivalent to $x=gn$ and $y=hn'.$ Then \[(xy)N=(xN)(yN)=(gn)N(hn')N=g(nN)h(n'N)=(gN)(hN)=(gh)N\]
Thus the multiplication is well defined. \\

\noindent $(\Rightarrow)$ Now assume $N$ is normal in $G.$ Multiplication is associative by definition and the unit element is $eN.$ It remains to show that $gN$ has an inverse and that it is unique. Let $g^{-1}$ be the inverse of $g$ in $G.$ Then \[ (gN)(g^{-1}N)=(gg^{-1})N=eN\]
So $g^{-1}N$ is an inverse for $gN.$ Suppose there exists some $y\in G$ such that $(gN)(yN)=(yN)(gN)=N.$ Then starting from the middle: \[ yN=(eN)(yN)=(g^{-1}N)(gN)(yN)=(g^{-1}N)(eN)=g^{-1}N\]
Hence, $g^{-1}N=yN$ and $G/N$ is a group. 

We defer the other direction of the proof for a moment. Define $\varphi:G\to G/N$ by $\varphi(g)=gN.$ This is a homomorphism by the multiplication in $G/N.$ If $x\in \ker \varphi$ then $\varphi(x)=xN=N.$ Therefore $x\in N$ and $\ker\varphi \subseteq N.$ The reverse inclusion is obvious and thus \[\ker \varphi=N\]

\noindent $(\Leftarrow)$ Now suppose $G/N$ is a group. Consider the canonical projection $G\to G/N$. Then $\ker \varphi=N$ by above and by Proposition 2.2.15 we conclude that $N$ is normal.    
\end{proof}
\begin{corollary}
	Let $G$ be an abelian group. Then for every subgroup $H\leq G,$ $G/H$ is an abelian group. 
\end{corollary}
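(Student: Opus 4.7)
The plan is to leverage Theorem 2.2.19, which already does most of the heavy lifting: it tells us that $G/H$ carries a well-defined group structure exactly when $H$ is normal in $G$. So the proof splits naturally into two stages, first establishing that normality is automatic in the abelian setting, and then checking that the quotient inherits commutativity.

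First I would show that every subgroup of an abelian group is normal. Fix $g \in G$ and $h \in H$; since $G$ is abelian, $ghg^{-1} = hgg^{-1} = h \in H$, so $gHg^{-1} \subseteq H$, and the reverse inclusion follows by the same argument applied to $g^{-1}$. Therefore $gHg^{-1} = H$ for all $g \in G$, i.e., $H \trianglelefteq G$. This is the crucial enabling step, but it is essentially a one-line computation.

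Next, with $H$ normal, Theorem 2.2.19 immediately gives us that $G/H$ is a group under the operation $(gH)(g'H) = (gg')H$. It only remains to verify that this operation is commutative. For any two cosets $gH, g'H \in G/H$, compute
\[
(gH)(g'H) = (gg')H = (g'g)H = (g'H)(gH),
\]
where the middle equality uses the commutativity of $G$. This establishes that $G/H$ is abelian.

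There is no real obstacle here; the argument is essentially a bookkeeping exercise once Theorem 2.2.19 is in hand. The only subtlety worth emphasizing is the first step, since it is the reason the hypothesis that $G$ be abelian (not merely that $H$ be a subgroup) is what unlocks the theorem: without normality, the quotient set $G/H$ would not even be a group, let alone an abelian one.
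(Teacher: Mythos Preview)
Your proof is correct and follows essentially the same approach as the paper: establish that $H$ is normal in $G$ (the paper phrases this as $gH = Hg$, you use the conjugation form $gHg^{-1} = H$), invoke the preceding theorem to conclude $G/H$ is a group, and then verify commutativity of coset multiplication via the identical chain $(gH)(g'H) = (gg')H = (g'g)H = (g'H)(gH)$.
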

\begin{proof}
	The fact that $G$ is abelian tells us that $gH=Hg$ for all $g\in G.$ To see that $G/H$ is abelian, let $g,g'\in G.$ Then \[(gH)(g'H)=(gg')H=(g'g)H=(g'H)(gH)\]
\end{proof}

\begin{example}
	Recall the group $\Z_n$ from above. To formally define $\Z_n,$ we consider the group $\Z$ and the subgroup of multiples of $n$ denoted $n\Z.$ Then \[\Z_n:=\Z/n\Z\] 
	That is, we glue the integers along the multiples of $n.$ The group operation in $\Z$ is $+$ and therefore \[ x\equiv y \mod n \iff x+y=kn, k\in \Z\]
 The quotient is a group as $\Z$ is abelian. 	
\end{example}

Now that we have the idea of quotients, we can define one of the most useful theorems in algebra: the First Isomorphism Theorem. The proof of which will introduce one of the most fundamental objects in algebra: the commutative diagram. These will show up many times in the latter parts of this text and as such, we encourage the reader to try and prove the following theorem themselves before reading the proof. 
\begin{theorem}[First Isomorphism Theorem]\label{First Iso}
	Let $G,H$ be groups and $\varphi:G\to H$ be a group homomorphism. Then \[G/\ker \varphi\cong \varphi(G)\]	
\end{theorem}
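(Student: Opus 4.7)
The plan is to construct an explicit isomorphism $\tilde{\varphi}\colon G/\ker\varphi \to \varphi(G)$ induced by $\varphi$ itself. Write $K = \ker\varphi$. By Proposition 2.2.15, $K$ is a normal subgroup of $G$, so by Theorem 2.2.19 the quotient $G/K$ is a group. The image $\varphi(G) \subseteq H$ is easily seen to be a subgroup of $H$ (closure and inverses follow from the homomorphism property), so both sides of the claimed isomorphism are honest groups.

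I would define $\tilde{\varphi}\colon G/K \to \varphi(G)$ by $\tilde{\varphi}(gK) = \varphi(g)$. The first and most delicate step is well-definedness: if $gK = g'K$, then $g' = gk$ for some $k \in K$, and so
\[
\varphi(g') = \varphi(gk) = \varphi(g)\star\varphi(k) = \varphi(g)\star e_H = \varphi(g),
\]
so the value assigned to a coset does not depend on the representative chosen. This is the only step where I actually use the defining property of $K$ as the kernel (rather than merely as a normal subgroup), and it is the part of the argument one most easily gets wrong, so I would emphasize it.

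Next I would verify the homomorphism property directly from the group operation on $G/K$ defined in Theorem 2.2.19:
\[
\tilde{\varphi}\bigl((gK)(g'K)\bigr) = \tilde{\varphi}\bigl((gg')K\bigr) = \varphi(gg') = \varphi(g)\star\varphi(g') = \tilde{\varphi}(gK)\star\tilde{\varphi}(g'K).
\]
Surjectivity onto $\varphi(G)$ is essentially tautological: every element of $\varphi(G)$ is of the form $\varphi(g)$ for some $g \in G$, and by construction this equals $\tilde{\varphi}(gK)$.

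For injectivity I would invoke Theorem 2.2.16 and show $\ker\tilde{\varphi}$ is trivial in $G/K$. If $\tilde{\varphi}(gK) = e_H$, then $\varphi(g) = e_H$, so $g \in K$, hence $gK = K$, which is the identity of $G/K$. Combining surjectivity and trivial kernel yields that $\tilde{\varphi}$ is a bijective homomorphism, i.e.\ an isomorphism, so $G/\ker\varphi \cong \varphi(G)$. The main obstacle is purely conceptual rather than computational: one must notice that the natural candidate map from the quotient is forced on us by the projection $G \to G/K$ (giving rise to the commutative triangle promised in the theorem's preamble), and that well-definedness on cosets is precisely what the kernel condition guarantees.
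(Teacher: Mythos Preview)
Your proof is correct and follows essentially the same approach as the paper: define $\tilde\varphi(gK)=\varphi(g)$, check well-definedness via the kernel condition, verify the homomorphism property, and establish bijectivity by showing surjectivity onto $\varphi(G)$ and triviality of $\ker\tilde\varphi$. The only cosmetic difference is that the paper packages surjectivity via a commutative triangle with the quotient map $q$, whereas you argue it directly.
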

\begin{proof}
	Consider the commutative diagram \[ \begin{tikzcd}
	G \arrow[r,"\varphi"] \arrow[d,swap,"q"] & \varphi(G)\\
	G/\ker \varphi \arrow[ur,swap,dashed,"\hat{\varphi}"]
	\end{tikzcd}\] 
	The top arrow is surjective by definition and the map $q$ is the canonical quotient. Denote the cosets in $G/\ker \varphi$ as $[g].$ We define the map $\hat{\varphi}([g])=\varphi(g).$ To show that $\hat{\varphi}$ is well defined, consider $[g]=[h]$ that is $g=ah$ where $a\in \ker \varphi.$ Then \[\varphi(g)=\varphi(ah)=\varphi(a)\varphi(h)=\varphi(h)\]
	Thus, $\hat{\varphi}$ is well defined. It is a homomorphism as \[  \hat{\varphi}([g][g'])=\hat{\varphi}([gg'])=\varphi(gg')=\varphi(g)\varphi(g')=\hat{\varphi}([g])\hat{\varphi}([g']) \]
By the commutativity of the diagram, $\hat{\varphi}$ is surjective. We compute \[\ker \hat{\varphi}=\{[g]\in G/\ker \varphi: \varphi(g)=e_H\}= \ker \varphi \] 	
As $\ker \varphi$ is the identity element in the quotient space, $\hat{\varphi}$ is injective. Hence, $\hat{\varphi}$ is an isomorphism.  
\end{proof}

\begin{corollary}
	If $\varphi:G\to H$ is a surjective homomorphism then \[G/\ker \varphi\cong H.\] 
\end{corollary}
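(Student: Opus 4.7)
The plan is to derive this as an immediate consequence of Theorem \ref{First Iso}. The First Isomorphism Theorem already gives the isomorphism $G/\ker\varphi \cong \varphi(G)$ for \emph{any} group homomorphism $\varphi$, so the only work left is to identify $\varphi(G)$ with the codomain $H$ using the surjectivity hypothesis.

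Concretely, I would proceed in two short steps. First, I would unpack the definition of surjectivity: $\varphi$ surjective means that for every $h \in H$ there exists $g \in G$ with $\varphi(g) = h$, which by definition of the image says precisely that $\varphi(G) = H$ as sets. Since $\varphi(G)$ inherits its group structure from $H$, the set-level equality is automatically an equality of groups. Second, I would substitute this equality into the conclusion of Theorem \ref{First Iso} to obtain $G/\ker\varphi \cong \varphi(G) = H$, and invoke transitivity of the isomorphism relation to conclude.

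There is essentially no obstacle here, since all the real content, namely the construction of the induced map $\hat{\varphi}$ on cosets and the verification that it is a well-defined injective homomorphism, was carried out in the proof of Theorem \ref{First Iso}. The only mild point worth stating explicitly in the write-up is that $\varphi(G)$ and $H$ coincide as groups (not merely as sets), so that the isomorphism from the First Isomorphism Theorem lands in $H$ itself rather than in a proper subgroup. This justifies writing the conclusion with $H$ in place of $\varphi(G)$.
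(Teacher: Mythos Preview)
Your proposal is correct and matches the paper's intent: the corollary is stated immediately after Theorem~\ref{First Iso} without an explicit proof, precisely because it follows at once by substituting $\varphi(G)=H$ (from surjectivity) into the conclusion of the First Isomorphism Theorem.
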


The next tool we will discuss is fundamental to the study of algebra. 
\begin{definition}
	Consider a sequence of groups \[\begin{tikzcd}
	...\arrow[r]&G_{i-1} \arrow[r,"d_i"] & G_{i} \arrow[r,"d_{i+1}"] & G_{i+1}\arrow[r] &... 
	\end{tikzcd}\]
	we say that the sequence is \textbf{exact} at $G_i$ if $\ker d_{i+1}=\im d_i.$ If the sequence  is exact at every $G_i,$ we say the sequence is exact and we call it a \textbf{Long Exact Sequence}. If the sequence has the following form \[\{e\}\to G_1\to G_2\to G_3\to \{e\}\]
	we say the sequence is a \textbf{Short Exact Sequence}.
\end{definition}

\begin{example}
Let $G$ be a group and $N$ a normal subgroup. We can rephrase the quotient construction as the unique (up to isomorphism) group $H$ such that the following sequence is exact \[ \{e\}\to N\to G\to H\to \{e\}\]
Here, the arrow $N\to G$ is the inclusion. Exactness tells us that $N\to G$ is injective, and that $G\to H$ is surjective. Thus, by the first isomorphism theorem, $G/\ker (G\to H)\cong H.$ As $\ker (G\to H)=\im(N\to G)=N,$ we have our result. 
\end{example}

\subsection{Group Actions}
Let $G$ be a group. Just as with the dihedral groups $D_n,$ we can ask how a group may act on a set; that is, how does it permute the elements? The formalization of this, a group action, is essential when understanding the later topics in this section. We give the following two definitions
\begin{definition}
	Let $G$ be a group and $X$ be a set. A (left)\textbf{Group Action} on $X$ is a map $\cdot:G\times X\to X$ such that \begin{enumerate}
	\item $h\cdot(g\cdot x)=hg\cdot x$
	\item $\exists e\in G$ such that $e\cdot x=x$ for all $x\in X.$
	\end{enumerate}
\end{definition}
\begin{definition}
	Let $G$ be a group and $X$ a set as above. A (left)group action is a group homomorphism \[\varphi:G\to \text{Sym}(X)\]
	where $\text{Sym}(X)=\{f:X\to X: f$ is bijective$\}.$ This is a group under composition. Inversion is well defined as every map is bijective. This is called the permutation representation of the group $G$ on $X.$ 
\end{definition}
\begin{lemma}
	Definition $2.2.26$ and $2.2.27$ are equivalent. 
\end{lemma}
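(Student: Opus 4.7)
The plan is to produce a two-way correspondence: given data in the sense of Definition 2.2.26, construct a homomorphism in the sense of Definition 2.2.27, and vice versa, then observe the two constructions are mutually inverse. The bulk of the work is purely formal unwrapping of definitions, so I would organize it so that the key identities appear as cleanly as possible.

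First I would start from an action $\cdot: G\times X\to X$ satisfying the two axioms and define $\varphi: G\to \text{Sym}(X)$ by $\varphi(g)(x):=g\cdot x$. The immediate task is to show $\varphi(g)$ actually lands in $\text{Sym}(X)$, i.e.\ is a bijection. I would produce a two-sided inverse explicitly: using associativity and the identity axiom,
\[
\varphi(g^{-1})\bigl(\varphi(g)(x)\bigr)=g^{-1}\cdot(g\cdot x)=(g^{-1}g)\cdot x=e\cdot x=x,
\]
and symmetrically on the other side, so $\varphi(g)^{-1}=\varphi(g^{-1})$. The homomorphism property is then a one-line computation: for any $x\in X$,
\[
\varphi(hg)(x)=(hg)\cdot x=h\cdot(g\cdot x)=\varphi(h)\bigl(\varphi(g)(x)\bigr)=\bigl(\varphi(h)\circ\varphi(g)\bigr)(x),
\]
which is exactly $\varphi(hg)=\varphi(h)\circ\varphi(g)$, and this is the multiplication in $\text{Sym}(X)$.

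Conversely, given a homomorphism $\varphi:G\to \text{Sym}(X)$, I would define $g\cdot x:=\varphi(g)(x)$. The associativity axiom is just the equation above read backwards, and for the identity axiom I would invoke the fact (noted in the proof of Proposition 2.2.15) that any group homomorphism sends identity to identity, so $\varphi(e)=\id_X$ and hence $e\cdot x=x$ for all $x$.

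Finally I would close the loop by observing that starting with an action, passing to a homomorphism, and unwrapping back gives the same map $G\times X\to X$ we began with, and analogously in the other direction; both are tautological once the constructions are spelled out. The only real subtlety is ensuring $\varphi(g)\in \text{Sym}(X)$ in the first direction, which is why I would isolate the $\varphi(g^{-1})$ inverse computation as its own step rather than lumping it in with the homomorphism check.
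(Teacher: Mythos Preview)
Your proposal is correct and follows essentially the same approach as the paper: define $\varphi(g)(x)=g\cdot x$, check it lands in $\text{Sym}(X)$ via the inverse $\varphi(g^{-1})$, and verify the homomorphism identity from associativity, with the converse direction handled by reading the same equations backwards. The paper's version is terser (it waves at the converse as ``clear'' and does not explicitly close the loop), but the substance is identical.
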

\begin{proof}
	It is clear that $2.2.27\implies 2.2.26$ as a group homomorphism gives the associativity and the identity element of the group gives the identity map.  
	
	Thus it suffices to show that $2.2.26\implies 2.2.27.$ Define $\psi_g$ to be the map on $X$ such that $\psi_g(x)=g\cdot x.$ We know that $\psi_g$ is invertible $(\psi_{g^{-1}}$) and thus $\psi_g\in \text{Sym}(X).$ Define a map $\varphi:G\to \text{Sym}(X)$ by $\varphi(g)=\psi_g.$ Then by the associative property of the action we get that \[\varphi(gg')=\psi_{gg'}=\psi_{g}\comp \psi_{g'}=\varphi(g)\comp \varphi(g')\]
	Hence, $\varphi$ is a group homomorphism and the definitions are equivalent.   
\end{proof}
We can think of group actions as shuffling the elements of the set they act on. The kernel of an action is precisely the kernel of the resulting homomorphism. We say an action is \textbf{faithful} if the associated permutation representation is injective. Further, we call an action \textbf{transitive} if the has precisely one orbit. That is, for every pair $(x,y)\in X\times X$, there exists $g\in G$ such that $g\cdot x=y.$  
\begin{remark}
	We have been careful to refer to left and right multiplication. If $G$ is non-abelian, these are different operations. When doing more advanced mathematics, one can consider multiplication or an action on both the left and the right. This has some major consequences but as we will not make use of the them, we have made the decision to omit such a discussion.
\end{remark}  
\begin{lemma}
	Let $G$ be a group and suppose $G$ acts on a set $X$.
	 \begin{enumerate}
		\item Let $\stab_G(x)=\{g\in G:g\cdot x=x\}$ and $\orb_G(x)=\{g\cdot x:g\in G\}$ denote the stabilize and orbit of the point $x\in X$ under the action of $G.$ Then $\stab_G(x)$ is a subgroup of $G.$
		\item If $X=G,$ then the action is transitive and faithful. Further, any subgroup acts faithfully. 
	\end{enumerate}
\end{lemma}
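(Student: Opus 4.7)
The plan is to prove each part in turn, leveraging the Subgroup Criterion (Lemma 2.2.10) for part (a) and a direct computation for part (b).

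For part (a), I would invoke the Subgroup Criterion: it suffices to show that $\stab_G(x)$ is nonempty and that for any $g,h\in\stab_G(x)$, the product $gh^{-1}$ is again in $\stab_G(x)$. Nonemptiness is immediate, since the identity axiom of a group action gives $e\cdot x = x$, so $e\in\stab_G(x)$. For the closure condition, suppose $g,h\in\stab_G(x)$. First I would verify that $h^{-1}\in\stab_G(x)$: applying $h^{-1}$ to the equation $h\cdot x = x$ and using the associativity axiom $h^{-1}\cdot(h\cdot x) = (h^{-1}h)\cdot x = e\cdot x = x$ yields $h^{-1}\cdot x = x$. Then $(gh^{-1})\cdot x = g\cdot(h^{-1}\cdot x) = g\cdot x = x$, so $gh^{-1}\in\stab_G(x)$. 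By Lemma 2.2.10, $\stab_G(x) \leq G$.

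For part (b), when $X=G$ the action in question is left multiplication, i.e.\ $g\cdot h = gh$ using the group operation of $G$. To establish transitivity, I would take arbitrary $h,h'\in G$ and exhibit the element $g = h'h^{-1}$, for which $g\cdot h = (h'h^{-1})h = h'$; hence there is a single orbit. For faithfulness, I would compute the kernel of the associated permutation representation $\varphi:G\to\operatorname{Sym}(G)$ from Lemma 2.2.28. An element $g\in\ker\varphi$ satisfies $g\cdot h = h$ for every $h\in G$; taking $h=e$ forces $g=g\cdot e = e$, so $\ker\varphi = \{e\}$ and by Theorem 2.2.16 the homomorphism $\varphi$ is injective, i.e.\ the action is faithful.

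Finally, for the last sentence, let $H\leq G$ be any subgroup acting on $G$ by the restricted left multiplication. The same kernel computation applies verbatim: if $h\in H$ satisfies $hg=g$ for all $g\in G$, then in particular $h\cdot e = h = e$, so $H$ acts faithfully on $G$.

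I do not anticipate a serious obstacle here; the only subtlety is being careful to derive $h^{-1}\in\stab_G(x)$ from $h\in\stab_G(x)$ using only the two action axioms rather than assuming it, and to note that when $|G|>1$ the restricted action of a proper subgroup $H$ is typically not transitive (so transitivity does not extend to arbitrary subgroups, only faithfulness does).
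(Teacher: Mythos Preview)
Your proposal is correct and follows essentially the same approach as the paper: for part~(a) both you and the paper verify $e\in\stab_G(x)$, show $h^{-1}\in\stab_G(x)$ via the computation $h^{-1}\cdot x = h^{-1}\cdot(h\cdot x) = (h^{-1}h)\cdot x = x$, and invoke the Subgroup Criterion; for part~(b) both arguments exhibit $g = h'h^{-1}$ for transitivity and use cancellation for faithfulness (the paper argues directly that $gh=g'h\implies g=g'$, while you compute the kernel and evaluate at $h=e$, but these are the same idea).
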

\begin{proof}
	\begin{enumerate}
		\item It is clear that $\stab_G(x)$ is a subset of $G.$ It carries the standard group multiplication and is non-empty as $e\in \stab_G(x).$ It suffices to  show that all non-identity elements have an inverse. Let $g\in \stab_G(x).$ Then \[  x=e\cdot x=(g^{-1}g)\cdot x=g^{-1}\cdot (g\cdot x)=g^{-1}\cdot x\]
Thus, $g^{-1}\in \stab_G(x)$ and by the subgroup criterion, $\stab_G(x)$ is a subgroup of $G$. 
		\item Let $G$ act on itself by left multiplication. To show the action is faithful, suppose $g\cdot h=g'\cdot h.$ Then \[ gh=g'h\iff ghh^{-1}=g'hh^{-1}\iff g=g'\]
		So the map $G\to \text{Sym}(G)$ is injective. To show it is transitive, let $h,i\in G,$ we need to show that there is an element $g\in G$ such that $gh=i.$ Pick $g=ih^{-1}.$ This is a group element and $gh=ih^{-1}h=i.$ Therefore every element is in the orbit of a single element, namely the identity element. Hence, the action is faithful and transitive. As $H\leq G,$ this faithful map restricts to any subgroup. \end{enumerate} \end{proof}

\begin{corollary}
	Let $G$ act on a set $X.$ If this action is transitive, then it is equivalent to the action of $G$ on $G/H$ by left multiplication for some $H\leq G.$  
\end{corollary}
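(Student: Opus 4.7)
The plan is to recover the set $X$ from a single orbit calculation: since the action is transitive, the whole of $X$ is obtained by sweeping any chosen base point around by $G$, so the only ``new'' information is the redundancy, i.e.\ which group elements fix the base point. That redundancy is exactly a subgroup, and quotienting by it should produce a bijection onto $X$ that intertwines the two actions. So the first step is to fix a base point $x_0 \in X$ (using transitivity, any choice works) and set $H := \stab_G(x_0)$. By Lemma 2.2.30(a), $H \leq G$, so $G/H$ is a well-defined set of left cosets on which $G$ acts by left multiplication, $g \cdot (g'H) = (gg')H$.

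Next I would define the candidate intertwiner $\Phi : G/H \to X$ by $\Phi(gH) = g \cdot x_0$, and then carry out the four routine verifications in order. \textbf{Well-definedness:} if $gH = g'H$ then $g' = gh$ for some $h \in H$, and $g' \cdot x_0 = g \cdot (h \cdot x_0) = g \cdot x_0$. \textbf{Injectivity:} if $g \cdot x_0 = g' \cdot x_0$, apply $g^{-1}$ to both sides to get $(g^{-1}g') \cdot x_0 = x_0$, so $g^{-1}g' \in H$ and $gH = g'H$. \textbf{Surjectivity:} this is precisely where transitivity is used; for any $y \in X$ there exists $g \in G$ with $g \cdot x_0 = y$, so $\Phi(gH) = y$. \textbf{Equivariance:} for any $g, g' \in G$,
\[
\Phi(g \cdot (g'H)) = \Phi((gg')H) = (gg') \cdot x_0 = g \cdot (g' \cdot x_0) = g \cdot \Phi(g'H),
\]
using the associativity axiom of the original action on $X$.

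The main subtlety, and the only part that is not purely formal, is pinning down what ``equivalent'' means for two $G$-actions so that the argument actually justifies it. The cleanest formulation is that we want a $G$-equivariant bijection between the two $G$-sets; once that is made explicit, $\Phi$ is exactly such a map. A minor point worth noting is that the subgroup $H$ is not canonical, since a different choice of base point $x_0' = g_0 \cdot x_0$ produces the conjugate stabilizer $g_0 H g_0^{-1}$; so the corollary really says ``for some $H$,'' not a distinguished one, and any such $H$ arises as the stabilizer of some point. I do not foresee any genuine obstacle beyond being careful about left vs.\ right multiplication in the equivariance computation.
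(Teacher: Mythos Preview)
Your proof is correct and follows essentially the same approach as the paper: fix a base point $x_0$, set $H = \stab_G(x_0)$, define $\Phi(gH) = g \cdot x_0$, and verify that $\Phi$ is a $G$-equivariant bijection. Your write-up is somewhat more careful (explicitly separating well-definedness from injectivity and noting the conjugacy of stabilizers under change of base point), but the argument is the same.
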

\begin{proof}
	Let $x\in X$ and  consider $H=\stab_G(x).$ Transitivity gives us that for all $y\in X,$ $y=gx$ for some $g\in G.$ Suppose $gx=g'x$ then $(g')^{-1}g\in H.$ This makes the map  \[ f:G/H\to X\;\;\;\; gH\mapsto gx\] a bijection. It remains to show that this map is $G$-equivariant. Let $g\in G$ and $w\in X.$ We can write $w=g_1x$ for some $g_1.$ Then \[ f(gg_1H)=gg_1x=gw=gf(g_1H)\]
	Hence $f$ is $G$-equivariant and the actions are equivalent. 
\end{proof}
In this case, $G/H$ is called the orbit space of the action as no element is stabilized in the set. This will play an important role in the next chapter. 

\section{Vector Spaces and Linear Algebra}
Linear algebra is one of the oldest and core subjects to mathematics. It began as the study of solutions to linear systems of equations and has grown into the study of transformations on vector spaces. For example, given the following set of equations 
\begin{align*}
	3x+2y&=4\\
	x+6z&=13\\
	3x+6y-z&=0
\end{align*}
what values of $x,y,z$ satisfy them? There are a variety of ways to find solutions, but perhaps the simplest is to use matrices. 
\begin{definition}
	A \textbf{Matrix} is any rectangular array of numbers, symbols, operators, etc. arranged in rows and columns such that addition and multiplication are well defined. If $A$ is a matrix of finite size, it is convention to read the lengths of the sides as ``rows by colums". That is a matrix with $3$ rows and $4$ columns is a $3\times 4$ matrix. Addition is taken component-wise whereas multiplication is done as follows: let $A,B$ be  $m\times n$ and $n\times k$ matrices. Then \[ (AB)_{ij}=\sum_{l=1}^n A_{il}B_{lj} \]
where $A_{ij}$ is the element of $A$ in the $i^{th}$ row and $j^{th}$ column. An $n\times n$ matrix $A$ is $\textbf{invertible}$ if there exists an $n\times n$ matrix $B$ such that $AB=BA=I_n$ which is the matrix with $1s$ along the main diagonal and $0$ elsewhere. 
\end{definition}
We can turn the system of equations above into the single matrix equation \[ \Mat{3 & 2 & 0\\1 & 0 & 6\\ 3 & 6 & -1} \Mat{x\\y\\z}= \Mat{4\\13\\0}\]
We leave it to the reader to check that $x=1,y=\frac{1}{2},z=2$ is the solution. We shall spend no time talking about the various methods for solving linear systems of equations as they are no use to the latter parts of the text. Instead we shall spend a majority of this section on abstract vector spaces defined over a field (defined below).

\subsection{Field Theory, Briefly}

As we have just seen with groups, endowing a set with multiplication has some striking implications. In this section we consider a new algebraic object, a field. Broadly, this is a set equipped with two operations, addition and multiplication which are compatible. 
\begin{definition}
	Let $F$ be a set and suppose it is equipped with two operations $+,\cdot.$ Let $F^\times$ denote the set of non-zero elements of $F.$ Suppose $(F,+)$ and $(F^\times,\cdot)$ are abelian groups. If for all $a,b,c\in F,$ \[ a(b+c)=ab+ac=ba+ca=(b+c)a\]
	then $F$ is a \textbf{Field}. In a field, we denote the identity for the addition as $0=0_F$ and for multiplication as $1=1_F.$ For any field, we can define  the $\textbf{characteristic}$ of $F,$ $\Char F$ to be the minimal  $n\in \N$ such that  $n\cdot 1=0.$ If no such $n$ exists, we say that $\Char F=0.$
\end{definition} 

\begin{example}
	The quintessential example of a field is the real numbers $\R.$ One can then construct $\C$ the complex numbers as a field which contains $\R.$ These fields both have $\Char F=0.$ For an example of positive characteristic, consider $\Z_p$ where $p$ is prime. This is a field and has characteristic $p.$ A good exercise to test your understanding is to prove that $\Z_n,$ for $n\neq p^l$ for some $l\neq 0\in \N,$ fails to be a field.  
\end{example}

\begin{example}[Polynomials]\label{Polynomials}
	Let $F$ be a field and denote by $F[x]$ the set of all formal polynomials $\sum^n a_ix^i,$ with $a_i\in F.$  For any polynomial $f\in F[x]$ define the degree of $f,$ denoted $\deg f,$ to be $\deg f=\max\{i:a_i\neq 0\}.$  We define addition as \[ \sum^n a_ix^i+\sum^m b_ix^i=\sum^{\max\{n,m\}} (a_i+b_i)x^i\]
where $a_i$(resp.$b_i)$ is considered to be $0$ if $i>n(resp. i>m)$ and multiplication as \[ \left(\sum^n a_ix^i\right)\cdot \left( \sum^m b_ix^i  \right)=\sum^{n+m}\left( \sum_{j+k=i} a_jb_k\right) x^i\]
This makes $F[x]$ a group under addition. It is not a group under multiplication as the set of invertible elements is precisely the constant polynomials as $x^n$ does not have an inverse for $n\geq 1.$ Therefore $F[x]$ is not a field. As we will see later, $F[x]$ is a ring. (See Section 2.4) If $f\in F[x]$ cannot be written as $f=gh$ for $g,h\in F[x]$ and $\deg g,\deg h\neq 0$ then $f$ is said to be $\textbf{irreducible}.$

We sometimes adjoin numbers to a field in the same way we do with formal variables. Let $i=\sqrt{-1}.$ Then $\R[i]$ Then by the rules above this consists of all finite sums $\sum r_j i^j.$ However, $i^2=-1$ and we can reduce this set to be \[ \R[i]=\{ a+bi: a,b\in \R\}=\C\]
This is precisely the definition of the complex numbers.   
\end{example}

\begin{remark}
We will only concern ourselves with characteristic $0$ as positive characteristic is a bit technical and does not play a role in the later chapters of this text. 
\end{remark}

\begin{definition}
	Let $E$ be a field which contains $F$ as a subfield. Then we say that $E$ is an \textbf{extension} of $F$ and denote this $E/F.$ Further, the degree of the extension, denoted $[E:F],$ is the integer $n$ such that $E\cong F^n=\prod^n F.$   
\end{definition}
Similar to groups, we can define Field homomorphisms. 
\begin{definition}
	Let $F,E$ be field and $f:F\to E.$ If for all $a,b\in F$ we have that \begin{align*} f(a+b)=f(a)+f(b)&& f(ab)=f(a)f(b)
 \end{align*}
 the $f$ is a \textbf{Field Homomorphism}. A bijective field homomorphism is an isomorphism. 
\end{definition}
For groups, this was where the story ended. For fields, due to the added structure, we have the following lemma. 
\begin{lemma}
	Every non-zero field homomorphism is injective. 
\end{lemma}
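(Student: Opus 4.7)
The plan is to reduce the question to the additive group homomorphism $f \colon (F,+) \to (E,+)$ and apply Theorem 2.2.16, so it suffices to show that $\ker f = \{0_F\}$ whenever $f$ is not the zero map. The key structural fact I will exploit is that in a field every non-zero element is a unit, together with the absence of zero divisors (which follows immediately: if $xy = 0$ and $x \neq 0$, multiply by $x^{-1}$ to get $y = 0$).

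First I would pin down the behaviour of the multiplicative identity. From $f(1_F) = f(1_F \cdot 1_F) = f(1_F) f(1_F)$ we get $f(1_F)(f(1_F) - 1_E) = 0_E$ in $E$. Since $E$ is a field, this forces $f(1_F) \in \{0_E, 1_E\}$. If $f(1_F) = 0_E$, then for every $x \in F$, $f(x) = f(x \cdot 1_F) = f(x) f(1_F) = 0_E$, making $f$ the zero homomorphism, contradicting our hypothesis. Hence $f(1_F) = 1_E$.

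Next I would rule out non-trivial kernel elements. Suppose, toward a contradiction, that $a \in \ker f$ with $a \neq 0_F$. Then $a^{-1}$ exists in $F$, and
\[ 1_E = f(1_F) = f(a \cdot a^{-1}) = f(a) \, f(a^{-1}) = 0_E \cdot f(a^{-1}) = 0_E, \]
which contradicts $1_E \neq 0_E$ in the field $E$. Therefore $\ker f = \{0_F\}$.

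Finally, viewing $f$ as a homomorphism of the abelian groups $(F,+)$ and $(E,+)$, Theorem 2.2.16 tells us that trivial kernel is equivalent to injectivity, so $f$ is injective. The only step that required genuine care, rather than direct appeal to the group-theoretic results from Section 2.2, was the dichotomy for $f(1_F)$; everything else is forced by the presence of multiplicative inverses, which is exactly the extra structure a field provides over a group.
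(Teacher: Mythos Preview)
Your proof is correct and follows essentially the same approach as the paper: both derive a contradiction from a non-zero kernel element $a$ by computing $f(1_F) = f(a)f(a^{-1}) = 0$. The only difference is that you explicitly establish $f(1_F) = 1_E$ up front, whereas the paper uses this fact and defers its verification to the reader; your version is slightly more self-contained but the underlying argument is identical.
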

\begin{proof}
	Let $F,E$ be fields and $x,y\in F.$ Suppose $\alpha:F\to E$ is a morphism. If $\alpha(x)=\alpha(y),$ then \[ \alpha(x)-\alpha(y)=0\iff \alpha(x-y)=0=\alpha(0)\]
	If $a-b\neq 0,$ then put $q=a-b.$ Using the multiplication, $\alpha(q)\alpha(q^{-1})=\alpha(1)=1.$ But $\alpha(q)=0.$ This is a contradiction and thus $a-b=0$ and $\alpha$ is injective.  
\end{proof}
\noindent In this proof we used the fact that for non-trivial field homomorphisms $\alpha(1)=1.$ We leave it to the reader to check this. 
\begin{definition}
	Let $F$ be a field and $S$ any subset. Denote by $F_S$ the subfield of $F$ containing $S.$ It is a fairly simple exercise to show that this field always exists. For the special case that $S=\{1\},$ we call $F_1=F'$ the prime subfield of $F$ as it is the field generated by $1.$ A less trivial exercise is to prove that if $F$ is finite with characteristic $p$ then $F'\cong \Z_p$ and if $F$ is infinite and $\Char F=0,$ then $F'\cong \Q.$
\end{definition}
\begin{definition}
	Let $L/K$ be a field extension. An element $a\in L$ is $\textbf{algebraic}$ over $K$ if there exists $f\in K[x]$ such that $f(a)=0.$ $L$ is called an \textbf{algebraic extension} if every element is algebraic. A field $L$ is called \textbf{algebraically closed} if any for all $f\in L[x],$ $f(x)=0\implies x\in L.$ 
\end{definition}	
\begin{example}\text{}
	\begin{enumerate}
		\item $\C/\R$ is an algebraic field extension as the degree of the extension is finite and by the Fundamental Theorem of Algebra, $\C$ is algebraically closed. 
		\item $\Q(\sqrt{2})/\Q$ is an algebraic extension. 
		\item $\R/\Q$ is \textbf{not} an algebraic extension. Consider the element $e=\lim_{n\to \infty} (1+\frac{1}{n})^n.$ This is known to be transcendental 
	\end{enumerate}
\end{example}

\begin{proposition}\label{minimal polynomial}
	Let $L/K$ be an algebraic extension. Then for every element $\alpha\in L,$ there exists a unique monic irreducible \footnote{Definition: A monic polynomial is a polynomial whose highest degree term has coefficient 1} $m_\alpha\in K[x]$ such that $m_{\alpha}(\alpha)=0$ and $\deg m_\alpha$ is minimal among polynomials which have $\alpha$ as a root.  
\end{proposition}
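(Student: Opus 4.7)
The plan is to establish existence first and then uniqueness, using the well-ordering of $\mathbb{N}$ and the division algorithm on $K[x]$ respectively. Because $\alpha$ is algebraic over $K$, the set
\[ S_\alpha = \{ f \in K[x] : f \neq 0 \text{ and } f(\alpha) = 0 \} \]
is nonempty. The degree function sends $S_\alpha$ into $\mathbb{N}$, so by well-ordering there is some $p \in S_\alpha$ of minimal degree. Dividing $p$ by its leading coefficient (which is a unit in $K$ since $K$ is a field, see Definition 2.3.2 in Example \ref{Polynomials}) produces a monic polynomial $m_\alpha \in K[x]$ of the same minimal degree, and clearly $m_\alpha(\alpha)=0$.

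Next I would verify irreducibility. Suppose for contradiction $m_\alpha = fg$ in $K[x]$ with $\deg f, \deg g < \deg m_\alpha$ and both nonconstant. Then in $L$ we have $f(\alpha)g(\alpha) = m_\alpha(\alpha) = 0$, and since $L$ is a field (in particular an integral domain), either $f(\alpha) = 0$ or $g(\alpha) = 0$. Either factor then lies in $S_\alpha$ but has degree strictly less than that of $m_\alpha$, contradicting minimality. Hence $m_\alpha$ is irreducible.

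For uniqueness I would invoke the division algorithm in $K[x]$: for any $p \in S_\alpha$, write $p = q \, m_\alpha + r$ with $\deg r < \deg m_\alpha$ or $r = 0$. Evaluating at $\alpha$ gives $r(\alpha) = 0$, so $r \in S_\alpha \cup \{0\}$; minimality of $\deg m_\alpha$ forces $r = 0$, hence $m_\alpha \mid p$. Now if $m'_\alpha$ is another monic irreducible polynomial of minimal degree with $m'_\alpha(\alpha) = 0$, then $m_\alpha \mid m'_\alpha$, and since both are monic of equal degree we conclude $m'_\alpha = m_\alpha$.

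The main obstacle, in the sense of a step that is not essentially bookkeeping, is justifying the division-with-remainder step that powers uniqueness; the paper has defined $K[x]$ as a set with operations but has not proven the Euclidean algorithm there. I would either cite it as a standard fact about polynomial rings over a field or, alternatively, avoid it entirely by the shorter argument: if $m_\alpha$ and $m'_\alpha$ are both monic of the same minimal degree with $\alpha$ as a root, then $m_\alpha - m'_\alpha$ has strictly smaller degree (the leading $x^{\deg m_\alpha}$ terms cancel) and still vanishes at $\alpha$, so by minimality it must be zero. This lightweight route sidesteps the division algorithm and yields uniqueness directly.
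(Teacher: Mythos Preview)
Your proof is correct and complete; the paper itself explicitly omits the proof of this proposition (stating ``We shall omit the proof of this proposition as it does not add to the text''), so there is no argument to compare against. Your approach is the standard one, and the alternative uniqueness argument via $m_\alpha - m'_\alpha$ is a nice touch that keeps everything self-contained given that the paper has not established the division algorithm for $K[x]$.
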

\noindent We shall omit the proof of this proposition as it does not add to the text. 

The last theorem we shall prove on fields tells us that every intermediate set, closed under addition and multiplication, of an algebraic extension is a field. More precisely, 
\begin{theorem}
	Let $L/K$ be an algebraic extension and $S$ a set such that $S$ is a group under addition and is closed under multiplication. If $L\supseteq S\supseteq K,$ then $S$ is a field.  
\end{theorem}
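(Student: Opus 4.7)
The plan is to show that every nonzero element of $S$ has a multiplicative inverse lying in $S$; once this is established, the remaining field axioms come for free because $S$ inherits an associative, commutative, distributive, unital arithmetic from the ambient field $L$, together with an additive group structure and closure under multiplication by hypothesis. Note also that $S$ contains the multiplicative identity, since $1 \in K \subseteq S$.

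First I would fix a nonzero $\alpha \in S$ and invoke the hypothesis that $L/K$ is algebraic: by Proposition \ref{minimal polynomial}, $\alpha$ has a unique monic irreducible minimal polynomial
\[ m_\alpha(x) = x^n + a_{n-1}x^{n-1} + \cdots + a_1 x + a_0 \in K[x], \]
with $m_\alpha(\alpha) = 0$. The next step is to argue that the constant term $a_0$ is nonzero. If $n = 1$, then $m_\alpha(x) = x + a_0$, and $a_0 = 0$ would force $\alpha = 0$, contradicting our choice. If $n \geq 2$ and $a_0 = 0$, then $m_\alpha(x) = x \cdot (x^{n-1} + a_{n-1}x^{n-2} + \cdots + a_1)$ factors as a product of two polynomials of positive degree, contradicting irreducibility.

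With $a_0 \neq 0$ in hand, the identity $m_\alpha(\alpha) = 0$ can be rearranged to
\[ \alpha \cdot \bigl(\alpha^{n-1} + a_{n-1}\alpha^{n-2} + \cdots + a_1\bigr) = -a_0, \]
so that
\[ \alpha^{-1} = -a_0^{-1}\bigl(\alpha^{n-1} + a_{n-1}\alpha^{n-2} + \cdots + a_1\bigr). \]
This is the crux: the right-hand side is manifestly built from elements of $K$ (namely $-a_0^{-1}$ and the $a_i$) and powers of $\alpha$, combined via the two operations under which $S$ is closed. Since $K \subseteq S$ and $S$ is closed under multiplication, each term $a_i \alpha^{i-1}$ belongs to $S$; since $S$ is a group under addition, the sum lies in $S$; and multiplying by $-a_0^{-1} \in K \subseteq S$ keeps us in $S$. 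Hence $\alpha^{-1} \in S$.

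The main obstacle is really just the step of extracting $\alpha^{-1}$ from the minimal polynomial relation, which hinges entirely on showing $a_0 \neq 0$; once that is resolved, the rest is a bookkeeping check that the closure hypotheses on $S$ propagate through the explicit formula. I do not expect the distributivity, commutativity, or associativity axioms to pose any difficulty since they are inherited from $L$, and the only genuinely new content beyond ``$S$ is a subring containing $K$'' is the production of inverses, which the algebraicity of $L/K$ supplies.
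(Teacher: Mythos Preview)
Your proof is correct and follows essentially the same approach as the paper: both use the minimal polynomial of a nonzero element $\alpha \in S$ over $K$ to solve for $\alpha^{-1}$ as a $K$-polynomial in $\alpha$, then invoke closure of $S$ under addition and multiplication. Your version is in fact slightly more careful, since you explicitly justify $a_0 \neq 0$ via irreducibility of $m_\alpha$, whereas the paper divides by $a_0$ without comment.
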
 
\begin{proof}
	As $S\subseteq L,$ it is commutative and has a unit element. It suffices to show that for all $s\neq 0\in S$ that $s^{-1}$ exists and is contained in $S.$ Existence follows from the fact that $s\in L$ and is non-zero. To show it is contained in $S,$ we use Proposition \ref{minimal polynomial}. As $L/K$ is an algebraic extension, the minimal polynomial $m_s$ of $s$ over $K$ exists. Let \[ m_s=x^n+a_{n-1}x^{n-1}...+a_0\]    
with each $a_i\in  K.$ Evaluating at $x=s,$ we get \[  -a_0=s(s^{n-1}+...+a_1)\implies s(s^{n-1}+...+a_1)\left(\frac{-1}{a_0}\right)=1\]
By Lemma 2.15, we have that $s^{-1}=(s^{n-1}+...+a_1)\left(\frac{-1}{a_0}\right)\in S.$ Hence, $S$ is a field. 
\end{proof}
Just as with groups, we can talk about actions of fields on sets. This does not vary from the theory of groups however as $\text{Sym}(X)$ is not a field so defining the action in this way is uninteresting. We thus need a different object to study. 

\subsection{Vector Spaces}
Linear algebra has emerged from its concrete origins in system of equations to the beautiful abstract algebra it is today. Vector spaces comprise the main objects of study. These objects, as we will see, are incredibly well understood and intersect every area of mathematics. The main references for this section are $\cite{Cooperstein2015}$ and \cite{Knapp2006}. 

We begin with the definition.  
\begin{definition}
Let $V$ be a set, and $F$ a field. Equip $V$ with two operations \begin{align*}
	+:V&\times V\to V\\
	\cdot:F&\times V\to V
\end{align*}
which are compatible in the sense that for all $f\in F$ and $v,w\in V,$ we have that $f(v+w)=fv+fw=(v+w)f$ and $1v=v.$ If under these operations $V$ is an abelian group together with an action of $F,$ we say $V$ is an $F$-$\textbf{Vector Space},$ with elements $v\in V$ called \textbf{vectors} and elements $f\in F$ called \textbf{scalars}. The element $fv$ is a scaled vector. A subset $W\subseteq V,$ which is closed under the operations of addition and scalar multiplication is called an $F$-vector subspace. Typically we simply say subspace if the underlying field is understood. 
\end{definition}

\begin{example}
	We have already seen some examples of vector spaces and subspaces. 
	\begin{enumerate} 
	\item Let $F$ be a field and $E$ a finite field extension. It is clear that a field satisfies the definition of a vector space over itself. Now, by the finiteness condition on $E,$ we know that $E\cong F^n$ and therefore we can extend the action of $F$ to each component of $E.$ That is \[ f\cdot e=f\cdot (e_1,...,e_n)=(fe_1,fe_2,...,fe_n)\]
	This is given by the diagonal inclusion of $F\hookrightarrow F^n$ which sends \[ f\mapsto \underset{\underset{n-times}{\rotatebox[origin=c]{-90}{$\Bigg\}$}}}{(f,f,...,f)}\]  
	
	\item For a non-trivial example consider the space $F[x].$ 
	\end{enumerate}
\end{example}

\begin{definition}
An $F$-$\textbf{linear combination}$ of vectors is anything of the form $v=\sum a_iv_i$ for finitely many $i$ with each $a_i\in F.$ If $v_1,...,v_m$ is a collection of vectors in a vector space $V,$ denote by \[ \ip{v_1,...,v_m}\] the set consisting of all linear combinations of the $v_i.$ This is canonically a subspace of $V.$ We say that $v_1,...,v_m$ is a \textbf{spanning set} for a vector space $V$ if every $v\in V$ can be written as a linear combination of the $v_i.$ Given a set $B$ we will denote by \[ \text{Span}_F(B)\]
the minimal vector space generated by the elements of $B.$ We will omit $F$ if it is clear from the situation and or if the section is true regardless of the field chosen.    
\end{definition}

\begin{corollary}
	Every vector space admits a spanning set.
\end{corollary}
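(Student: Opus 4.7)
The plan is to exhibit an explicit spanning set for an arbitrary vector space $V$ over a field $F$. The key observation is that the definition of a spanning set only requires that every vector in $V$ be expressible as some $F$-linear combination of elements drawn from the candidate set; there is no requirement that the set be small, linearly independent, or in any way efficient. So the natural candidate is the largest possible subset, namely $V$ itself.

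First I would take $B = V$ as the proposed spanning set, and check the definition: for any $v \in V$, the scalar $1 \in F$ (which exists since $F$ is a field, hence $1_F \neq 0_F$) satisfies $1 \cdot v = v$ by the vector space axiom $1v = v$. This single-term expression exhibits $v$ as an $F$-linear combination of a single vector in $B$, so $v \in \ip{B}$. Since $v$ was arbitrary, $V \subseteq \ip{B} \subseteq V$, and thus $V$ is spanned by $B$.

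The only conceptual subtlety, and what I would flag as the main point worth noting rather than an obstacle, is that no finiteness is being asserted: the corollary only claims the existence of \emph{some} spanning set, not a finite or minimal one. The more substantive question, which I would defer to later results, is whether a vector space admits a spanning set that is also linearly independent (a \textbf{basis}), and whether that basis can be taken finite. Those claims genuinely require additional machinery (Zorn's lemma in the general case, or dimension arguments in the finite-dimensional case), but the present corollary is a direct unwinding of the definition.
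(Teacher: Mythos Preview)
Your proof is correct and takes essentially the same approach as the paper: the paper simply remarks that the corollary ``follows immediately from the definition as $V$ is a spanning set for itself.'' Your version supplies the extra verification that $v = 1\cdot v$ exhibits each vector as a one-term linear combination, which is a reasonable elaboration of the same idea.
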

This follows immediately from the definition as $V$ is a spanning set for itself. A more interesting statement is that there exists a unique (up to conjugation) \textit{minimal} spanning set 

\begin{definition}
	Let $v_1,...,v_n$ be vectors in a vector space $V.$ We say these vectors are \textbf{linearly independent} if \[ \sum^n_{i=1} a_i v_i=0 \iff a_i=0\;\; \forall i\]
	We will commonly abuse the term \textit{linearly independent} and refer to sets as linearly independent if all of the finite subsets of elements are linearly independent.  
\end{definition}
\begin{example} \textbf{}
	\begin{enumerate}
		\item Let $V=\C$ treated as a real vector space via the inclusion of $\R\hookrightarrow \C.$ Its elements are written as $z=x+iy.$ Let $z_1,z_2,z_3$ be three, non-colinear ($z_i\neq a_jz_j \;\;\forall i,j\in \{1,2,3\}$) complex numbers. It can be shown that $z_3$ can be written uniquely as $a_1z_1+a_2z_2.$         
		\item For a more concrete example consider $V=\R^3.$ Let \[v_1=\Mat{1\\2\\0} \;\;\;\;\;\;\;\;v_2=\Mat{4\\-2\\9}\;\;\;\;\;\;\;\;v_3=\Mat{16\\2\\27}\]
		It should be easy to see that $v_3=4v_1+3v_2.$ Notice that if we change the third coordinate of $v_2$ to $0,$ we have that $v_3$ is no longer a linear combination of $v_1$ and $v_2.$     
	\end{enumerate}
\end{example}

\begin{definition}
	Let $\mathcal{B}=\{v_i\}_{i\in I}$ be a spanning set of the vector space $V$. We call $\mathcal{B}$ a \textbf{basis} if it is linearly independent. We denote elements of $V$ with respect to this basis as column vectors (tuples) $v=(k_1,...,k_n,...)^t$ which means $v=\sum_I k_1v_i.$ 
\end{definition}
It should be noted immediately that any basis $\mathcal{B}$ for a vector space is necessarily minimal among the sets with the above properties.  

\begin{theorem}\label{cardinality}
	Let $\mathcal{B}$ and $\mathcal{C}$ be two bases for the vector space $V.$ Then $|\mathcal{B}|= |\mathcal{C}|.$
\end{theorem}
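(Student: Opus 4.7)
The plan is to split into the finite and infinite cases, with the finite case handled by a Steinitz-style exchange argument and the infinite case handled by a cardinality count on finite support expansions.

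First I would establish a preliminary exchange lemma: if $\{v_1,\ldots,v_n\}$ is a basis for $V$ and $w = \sum_i a_i v_i$ is any vector with $a_k \neq 0$ for some $k$, then replacing $v_k$ by $w$ yields another basis. This is a short computation since $v_k = a_k^{-1}\bigl(w - \sum_{i\neq k} a_i v_i\bigr)$ shows the new set still spans, and linear independence follows by expanding any dependence relation and using independence of the original basis together with $a_k \neq 0$. Here I need the scalars to come from a field, so that $a_k^{-1}$ exists; this is the only place the field hypothesis is essential.

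Next, assume $\mathcal{B} = \{v_1,\ldots,v_n\}$ is finite. Given any linearly independent set $\{w_1,\ldots,w_m\} \subseteq V$, I would iteratively apply the exchange lemma: express $w_1$ in the basis $\mathcal{B}$, swap it in for some $v_k$ (which must have nonzero coefficient since $w_1 \neq 0$), and repeat with $w_2$ against the new basis, and so on. At each step some original $v_k$ gets replaced — it cannot happen that all $w$'s are expanded using only previously introduced $w_j$'s, since that would contradict linear independence of $\{w_1,\ldots,w_m\}$. Consequently $m \le n$. Applying this with $\{w_i\}$ taken to be any finite subset of $\mathcal{C}$ gives $|\mathcal{C}| \le n$, and then by symmetry (swapping the roles of $\mathcal{B}$ and $\mathcal{C}$) we obtain $|\mathcal{B}| = |\mathcal{C}|$.

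For the infinite case, where $\mathcal{B}$ is infinite, the exchange argument is inadequate and I would switch to a cardinality count. For each $c \in \mathcal{C}$, write $c = \sum_{b \in S_c} \lambda_{b,c}\, b$ where $S_c \subseteq \mathcal{B}$ is the finite set of basis vectors appearing with nonzero coefficient. The union $\bigcup_{c \in \mathcal{C}} S_c$ must equal $\mathcal{B}$: any $b \in \mathcal{B}$ omitted from the union would be independent of $\mathcal{C}$'s span, contradicting that $\mathcal{C}$ spans. Using standard cardinal arithmetic, $|\mathcal{B}| = \bigl|\bigcup_{c} S_c\bigr| \le |\mathcal{C}| \cdot \aleph_0 = |\mathcal{C}|$ since $\mathcal{C}$ is infinite (if $\mathcal{C}$ were finite, the finite case already handled via symmetry forces $\mathcal{B}$ finite too). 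The symmetric argument gives $|\mathcal{C}| \le |\mathcal{B}|$, and the Schroeder–Bernstein theorem concludes $|\mathcal{B}| = |\mathcal{C}|$.

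The main obstacle is the infinite case: the exchange argument does not transfer directly, and one has to be careful with cardinal arithmetic and the implicit use of the axiom of choice (both to ensure bases exist in general and to manipulate cardinalities). The finite case, by contrast, is essentially bookkeeping around the one-step exchange lemma, and is the part most likely to be presented in this introductory chapter; the author may well restrict to finite-dimensional $V$ and omit the transfinite portion.
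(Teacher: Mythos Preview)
Your proposal is correct and matches the paper's approach essentially step for step: both split into finite and infinite cases, use a Steinitz exchange argument for the finite case, and in the infinite case express each $c \in \mathcal{C}$ via a finite subset of $\mathcal{B}$, argue the union of these supports recovers all of $\mathcal{B}$, bound cardinalities, and finish with Schroeder--Bernstein. Your closing guess is off, though --- the paper does treat the infinite case rather than restricting to finite dimension.
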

\begin{proof}
	We shall prove this is two cases $|\mathcal{B}|$ is finite and $|\mathcal{B}|$ is infinite. 
	 Suppose first that $|\mathcal{B}|<\infty.$ We want to give bounds on the size of $\mathcal{C}.$ 
	\begin{lemma}
		Suppose that $|\mathcal{C}|>|\mathcal{B}|.$ Then $\mathcal{C}$ is linearly dependent. 
	\end{lemma}
	\begin{proof}
		As $\mathcal{B}$ is a basis, the set $\mathcal{B}\cup c_1$ must be linearly dependent. Therefore, up to reordering, we can assume that $b_n\in \text{Span}\{ c_1,b_1,...,b_{n-1}\}.$ This is now a linearly independent set. Notice that by assumption $\{c_j\}$ is linearly independent.  Therefore,  repeating the above process with $c_j$ for $2\leq j\leq n$ and reordering, we conclude that $\{ c_1,..., c_n\}$ is a linearly independent, spanning set. As $|\mathcal{C}|>n,$ we then conclude that $\mathcal{C}$ is linearly dependent.      
	\end{proof}	
\noindent From this lemma, we conclude that $|\mathcal{C}|\leq |\mathcal{B}|.$ The key step of the proof relied on the fact that $\mathcal{B}$ was a basis. We can similarly apply this logic to $\mathcal{C}$  and deduce then that $|\mathcal{B}|\leq |\mathcal{C}|.$ Hence, they must be equal. 
	
	Now assume $|\mathcal{B}|$ is infinite. The method above will not work as sets with infinite cardinality as adding an element does not give any information regarding linear dependence. We can rephrase this part of the proof however as \textit{there exists a bijection $f:\mathcal{B}\to \mathcal{C}$ }. We can construct such a function in the following way: let $\mathcal{B}=\{ b_i: i\in I\}$ and $\mathcal{C}=\{c_j:j\in J\}$ with $I,J$ some indexing sets of infinite cardinality. For an arbitrary element $c_j\in \mathcal{C},$ we know that $c_j\in \text{Span}(\mathcal{B}).$ In particular, we know that $c_j\in \text{Span}(B_j)$ a finite subset of $\mathcal{B}.$ Put \[ B=\bigcup_{j\in J} B_j\]
As $\mathcal{C}$ is a basis, it is in particular a spanning set. Therefore $B$ is also a spanning set. As $B\subseteq \mathcal{B}.$ we know that  $B=\mathcal{B}$ and therefore \[\mathcal{B}=\bigcup_{j\in J} B_j\]
As each $B_j$ is finite we know that $\left| \bigcup_{j\in J} B_j \right| \leq |J|=|\mathcal{C}|.$ Hence, $|\mathcal{B}|\leq |\mathcal{C}|$ and, by applying the same logic, we have that $|\mathcal{C}|\leq |\mathcal{B}|.$  The proof is completed by the following theorem, a proof for which can be found in \cite[Appendix A.6]{Knapp2006}.
\end{proof}
\begin{theorem}[Schroeder-Bernstein]
	If $A$ and $B$ are sets such that there exists an injective function $f:A\to B$ and and injective function $g:B\to A$ then $|A|=|B|.$ 
\end{theorem}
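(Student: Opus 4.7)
The plan is to construct an explicit bijection $h : A \to B$ from the given injections $f : A \to B$ and $g : B \to A$. The core idea is to partition $A$ into two disjoint pieces, a set $C$ and its complement $A \setminus C$, chosen so that $f$ sends $C$ bijectively onto $f(C)$ and $g$ sends $B \setminus f(C)$ bijectively onto $A \setminus C$. Given such a $C$, define
\[ h(a) = \begin{cases} f(a) & \text{if } a \in C, \\ g^{-1}(a) & \text{if } a \in A \setminus C, \end{cases} \]
where $g^{-1}(a)$ is well defined pointwise on $A \setminus C$ by the injectivity of $g$. The remaining work is to locate $C$ and to verify that $h$ is a bijection.

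To construct $C$, I would look for a solution to the fixed-point equation
\[ C = A \setminus g(B \setminus f(C)), \]
which is exactly the condition that the two pieces fit together as described. This is achieved by introducing the operator $\varphi : \mathcal{P}(A) \to \mathcal{P}(A)$ with $\varphi(X) = A \setminus g(B \setminus f(X))$ and checking monotonicity with respect to inclusion: if $X \subseteq Y$, then $f(X) \subseteq f(Y)$, hence $B \setminus f(Y) \subseteq B \setminus f(X)$, hence $g(B \setminus f(Y)) \subseteq g(B \setminus f(X))$, and finally $\varphi(X) \subseteq \varphi(Y)$. With monotonicity secured, the Knaster--Tarski fixed-point theorem applied to the complete lattice $\mathcal{P}(A)$ produces a fixed point; in fact one can avoid any choice principle by taking the concrete set $C = \bigcup_{n \geq 0} \varphi^n(\emptyset)$, which is a fixed point because $\varphi$ preserves unions of increasing chains (using injectivity of $g$ to commute $g$ with intersections).

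With $C$ in hand, the verification that $h$ is a bijection splits into two clean checks. For injectivity, observe that $h$ maps $C$ into $f(C)$ and maps $A \setminus C$ into $B \setminus f(C)$; since these two pieces of $B$ are disjoint, any collision of values would have to occur within a single piece, and within each piece injectivity follows from injectivity of $f$ (respectively $g$). For surjectivity, any $b \in f(C)$ has a preimage in $C$ under $f$, and any $b \in B \setminus f(C)$ satisfies $g(b) \in g(B \setminus f(C)) = A \setminus C$, so $h(g(b)) = g^{-1}(g(b)) = b$ by injectivity of $g$.

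The main obstacle will be verifying the fixed-point identity $A \setminus C = g(B \setminus f(C))$ rigorously and checking that $g^{-1}$ is genuinely defined on all of $A \setminus C$ — both of which require carefully unpacking $\varphi(C) = C$. An alternative route I would consider if the fixed-point machinery feels too heavy is the orbit decomposition: for each element of $A \cup B$, trace its backward chain under alternating applications of $f^{-1}$ and $g^{-1}$ until (possibly) terminating, classify points according to whether the chain terminates in $A$, terminates in $B$, or is bi-infinite, and define $h$ to use $f$ on the first and third classes and $g^{-1}$ on the second. The bookkeeping for this method is messier, but it has the advantage of being wholly elementary and exposing the structural reason that the theorem holds.
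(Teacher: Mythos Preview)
Your proof is correct and follows the standard Knaster--Tarski fixed-point construction of the bijection; the orbit-decomposition alternative you sketch is also valid. However, the paper does not actually prove this theorem: it states Schroeder--Bernstein without proof and defers the argument to \cite[Appendix A.6]{Knapp2006}, using the result only as a black box to finish the cardinality theorem for bases. So there is no approach in the paper to compare against; you have supplied a complete argument where the paper offers none.
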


This now begs the question: "does every vector space admit a basis?" The next theorem will give an answer to this, but before giving a proof, we need the following famous lemma from Logic. 
\begin{lemma}[Zorn's Lemma]\label{Zorn's Lemma} 
	Let $P$ be a partially ordered set. Suppose that every totally ordered set has an upper bound. Then $P$ contains a maximal element.  
\end{lemma}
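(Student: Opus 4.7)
The plan is to recognize that Zorn's Lemma is logically equivalent to the Axiom of Choice, so the natural route is either to take it as axiomatic or to derive it from Choice via a transfinite-recursion argument. I will sketch the latter. The strategy is by contradiction: assume $P$ has no maximal element, and then manufacture a strictly increasing transfinite sequence in $P$ so long that it exceeds the size of $P$ itself, violating Hartogs' theorem on the existence of an ordinal with no injection into a given set.

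First I would set up the machinery. By the Axiom of Choice, fix a choice function on the nonempty subsets of $P$. For each chain (totally ordered subset) $C \subseteq P$, the hypothesis supplies an upper bound $u(C) \in P$. By the assumption that $P$ has no maximal element, $u(C)$ is strictly below some element of $P$, so using the choice function I can pick $f(C) \in P$ with $f(C) > u(C)$, and in particular $f(C) \notin C$. The map $f$ therefore takes every chain and returns an element strictly larger than every element of that chain.

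Next, I would define a transfinite sequence $(x_\alpha)$ indexed by ordinals by recursion: pick any $x_0 \in P$, set $x_{\alpha+1} = f(\{x_\beta : \beta \leq \alpha\})$, and at a limit ordinal $\lambda$ set $x_\lambda = f(\{x_\beta : \beta < \lambda\})$. A routine transfinite induction shows $\{x_\beta : \beta < \alpha\}$ is a chain at every stage and that $\alpha \mapsto x_\alpha$ is strictly increasing, so in particular injective. Hence the entire class of ordinals would inject into $P$.

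The main obstacle, and the step I would be most careful with, is the last line: I need to rule out this injection. Here I would invoke Hartogs' theorem, which produces an ordinal $\kappa$ that does \emph{not} inject into $P$. Applying the recursion above up to $\kappa$ yields an injection $\kappa \hookrightarrow P$, a contradiction. Therefore the assumption that $P$ had no maximal element fails, and $P$ contains a maximal element. A secondary subtlety is justifying the transfinite recursion itself on a proper class of ordinals without running into foundational issues; the standard fix is to carry out the recursion only up to the Hartogs ordinal $\kappa$, so that at every stage we are working with a genuine set and the recursion theorem applies verbatim.
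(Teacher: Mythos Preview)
Your sketch is correct and follows the standard derivation of Zorn's Lemma from the Axiom of Choice via transfinite recursion and Hartogs' theorem. The paper, however, does not prove this statement at all: it explicitly omits the proof, remarking only that ``the proof of this lemma will be omitted as it does not add to the text,'' and treats Zorn's Lemma as a black-box input from logic to be invoked in the subsequent basis-existence and maximal-ideal arguments. So there is nothing to compare against; you have supplied a valid argument where the paper deliberately provides none.
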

\begin{definition}
	A \textbf{partial order} on a set $X$ is a reflexive, antisymmetric, transitive, binary relation $\preceq$. A \textbf{total order} is a partial order such that for all pairs $(x,y)$ either $x\preceq y$ or $y\preceq x.$     
\end{definition}
The rest of the components of the lemma are self explanatory. The proof of this lemma will be omitted as it does not add to the text. Although it seems innocuous, this lemma provides the technical support for many proofs in algebra. For example: 

\begin{theorem}\label{basis}
	Let $V$ be a vector space defined over the field $F.$ Then: 
	\begin{enumerate}
		\item Every spanning set contains a basis. 
		\item Every linearly independent subset can be extended to a basis.
		\item $V$ has a basis. 
	\end{enumerate}
\end{theorem}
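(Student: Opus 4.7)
The plan is to deduce all three statements from Zorn's Lemma, after which part (3) is immediate: since $V$ spans itself, (1) applied to the spanning set $V$ produces a basis. So the real work is in (1) and (2), and the two proofs are structurally identical.

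First I would isolate the key technical fact used in both arguments, which I will call the extension lemma: if $L \subseteq V$ is linearly independent and $v \in V$ satisfies $v \notin \text{Span}_F(L)$, then $L \cup \{v\}$ is again linearly independent. The verification is a direct calculation: given a dependence $av + \sum_{i} a_i \ell_i = 0$ with $\ell_i \in L$, either $a = 0$, in which case linear independence of $L$ forces all $a_i = 0$, or $a \neq 0$, in which case $v = -a^{-1}\sum a_i \ell_i$ lands in $\text{Span}_F(L)$, contradicting the hypothesis on $v$.

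For (2), let $L_0$ be the given linearly independent subset and consider the poset
\[ \mathcal{P} = \{ L \subseteq V : L \supseteq L_0 \text{ and } L \text{ is linearly independent} \} \]
ordered by inclusion. To apply Lemma \ref{Zorn's Lemma}, I would show that any chain $\mathcal{C} \subseteq \mathcal{P}$ has an upper bound, the natural candidate being $U = \bigcup_{L \in \mathcal{C}} L$. The subtle point here, and the step I expect to need the most care, is checking that $U$ is linearly independent: given finitely many vectors $u_1, \dots, u_n \in U$, each $u_j$ lies in some $L_j \in \mathcal{C}$, and since $\mathcal{C}$ is totally ordered, one of the $L_j$ contains all the others, so all the $u_j$ sit inside a single linearly independent set. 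Hence any finite dependence among them is trivial. Zorn then produces a maximal element $B \in \mathcal{P}$. If $B$ failed to span $V$, pick $v \in V \setminus \text{Span}_F(B)$; the extension lemma says $B \cup \{v\}$ is linearly independent and lies in $\mathcal{P}$, contradicting maximality of $B$. Hence $B$ is a basis containing $L_0$.

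For (1), let $S$ be the spanning set and run exactly the same argument inside $S$: order the linearly independent subsets of $S$ by inclusion, take unions along chains, and extract a maximal linearly independent $B \subseteq S$ via Zorn. Maximality combined with the extension lemma forces every $s \in S$ to lie in $\text{Span}_F(B)$, so $\text{Span}_F(B) \supseteq \text{Span}_F(S) = V$, and $B$ is a basis. The main obstacle throughout is purely bookkeeping — once the finite-character nature of linear independence is used correctly to verify that chain-unions are independent, everything else is mechanical, and (3) drops out of either (1) or (2) by taking $S = V$ or $L_0 = \varnothing$.
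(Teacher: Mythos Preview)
Your proposal is correct and follows essentially the same approach as the paper: Zorn's Lemma applied to the poset of linearly independent subsets (containing $L_0$ for (2), contained in $S$ for (1)), with chain-unions shown independent via the finite character of linear dependence, and maximality plus your extension lemma forcing the maximal element to span. The only cosmetic difference is that you isolate the extension lemma explicitly, whereas the paper embeds that calculation inside the proof of (b).
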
	
We present the proof given in \cite{Knapp2006}.
\begin{proof}
	(b) Let $E$ be a linearly independent subset of $V.$ Let $\mathcal{S}$ be the collection of all linearly independent subsets of $V$ containing $E.$ Then $\mathcal{S}$ is a partially ordered set under inclusion and non-empty as $E\in \mathcal{S}$. Let $\mathcal{T}$ be a totally ordered subset of $\mathcal{S}$ and consider \[ A=\bigcup_{T\in \mathcal{T}} T\]
	We claim that $A\in \mathcal{S}.$ It clearly contains $E$ by construction. It remains to show it is linearly independent. To see this, suppose not. Then there exist $v_1,...,v_n\in A$ such that $c_1v_1+...+c_nv_n=0$ with not all $c_i=0.$ Let $A_j\in \mathcal{T}$ be an element which contains $v_j.$ Then as $\mathcal{T}$ is totally ordered. There exists some $A_n'$ such that $A_n'\supseteq A_j$ for all $j\leq n.$ As $A_n'$ is linearly independent, $c_i=0$ for all $i$, a contradiction. Hence, $A$ is linearly independent and an upper bound for $\mathcal{T}.$ Thus, all totally ordered sets have an upper bound and by Zorn's Lemma, there is a maximal element $B\in \mathcal{S}.$ it remains to be shown that $B$ is a spanning set. Let $v\in V$ be arbitrary. Suppose $v\notin \text{Span}_FB.$ Then $\{v\}\cup B$ is a linearly dependent set by the maximality of $B.$ Therefore, there exist constants $c,c_1,...,c_m$ and vectors $v_1,...,v_m$ such that \[ cv+c_1v_1+...+c_mv_m=0\] with not all $c,c_1,...,c_m=0.$ We know that $c\neq 0$ as $B$ is linearly independent. Therefore $v=-c^{-1}(v_1c_1+...+v_mc_m).$ Hence, $v\in \text{Span}_FB$ and $B$ is a spanning set.      
	
	(a) Now Let $E$ be a spanning set. Let $\mathcal{S}$ denote the partially ordered set of linearly independent subsets contained in $E$ ordered by inclusion. Let $\mathcal{T}$ be a totally ordered subset of $\mathcal{S}.$ Let $A$ be the union of all of the elements of $\mathcal{T}.$ Then it is clearly an upper bound by the argument in (b) above. By Zorn's Lemma $\mathcal{S}$ contains a maximal element $M$ and by an easy modification of the proof showing that $B$ was linearly independent in part (b), we conclude that $M$ is a spanning set and therefore $M$ is a basis. (c) now follows from (a) by taking $E=V$ and follows from (b) by taking $E=\varnothing.$    
\end{proof}

Now, by Theorems \ref{basis} and \ref{cardinality}, we know bases exist and that their cardinality is unique. Therefore it is an invariant of the vector space and motivates the following definition. 
\begin{definition}
	Let $V$ be an $F$-vector space and $\mathcal{B}$ a basis. By the \textbf{F-dimension} of $V$ we mean \[ \dim_F V=|\mathcal{B}|\] Here it is important to distinguish the field of definition. 
\end{definition}	 
\begin{example}
	Let $\mathbb{F}_p$ denote the field with $p$ elements. It is a fun exercise to prove that for any natural number $n\in \N,$ there is a field extension $\F_{p^n}.$ Each of these fields is a vector space of dimension $n$ over $\F_p$ given by adjoining a root of an irreducible polynomial of degree $n$ and thus is isomorphic to $\F_p^n.$ We can see this isomorphism explicitly after we develop the theory of rings in the next section.  
\end{example}  

\begin{example}
	We now give an interesting example of an infinite dimensional vector space. Consider $\R$ defined over $\Q.$ At first glance, this looks non-sensical as an infinite dimensional vector space as $\Q$ is dense in $\R.$ However, suppose $\R\cong \Q^n$ for some $n\in \N.$ Then we can pick a basis $\{x_1,...,x_n\}$ of $\R$ over $\Q.$ By Cantor's diagonalization argument, we know that $|\R|>|\Q|.$ In fact, $\Q$ is countably infinite and $\R$ is uncountably infinite. Using the basis we have picked, the claim $\R\cong \Q^n$ would imply that $\R$ is countably infinite as the finite product of countably infinite sets is necessarily countably infinite. This is a contradiction and thus \[ \dim_\Q \R\neq n\;\;\;\;\;\; \forall n\in \N\]
Another way to think about this is to look at all transcendental numbers, $t,$ over $\Q$ (numbers such as $\pi,e,\ln(2)$ etc.) If we look at $\Span[\Q]{t}\cong \Q\subsetneq \R$ we get disjoint one dimensional subspaces for each unique transcendental number. 
\begin{lemma} 
	There are only countably many algebraic numbers. 
\end{lemma}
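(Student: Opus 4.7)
The plan is to express the set of algebraic numbers as a countable union of finite sets. An algebraic number over $\Q$ is by definition a root of some nonzero polynomial $f \in \Q[x]$, so the set of all algebraic numbers is
\[ A = \bigcup_{f \in \Q[x] \setminus \{0\}} \{\alpha \in \R : f(\alpha) = 0\}. \]
To show $|A| \leq \aleph_0$, I would show that the index set $\Q[x]$ is countable and that each term of the union is finite.

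First, I would show $\Q[x]$ is countable. For each $n \geq 0$, polynomials of degree at most $n$ are determined by their coefficients, giving a bijection with $\Q^{n+1}$. Since $\Q$ is countable (a standard consequence of $\Q$ being a quotient of $\Z \times \Z$, both countable), a straightforward induction on $n$ using the fact that $\Q^{n+1} \cong \Q^n \times \Q$ and that finite products of countable sets are countable shows $|\Q^{n+1}| = \aleph_0$. Writing $\Q[x] = \bigcup_{n \geq 0} \{f : \deg f \leq n\}$ presents $\Q[x]$ as a countable union of countable sets, hence countable.

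Next, for each nonzero $f \in \Q[x]$ of degree $n$, the set of roots is finite of size at most $n$; this follows from the fact that if $\alpha$ is a root then $f(x) = (x - \alpha) g(x)$ with $\deg g = n - 1$ (a routine induction on degree using polynomial division in $\Q[x]$, which by Example \ref{Polynomials} is a ring where division with remainder makes sense). Combining these observations, $A$ is a countable union of finite sets, and the standard diagonalization argument (which I would cite rather than reprove) shows that such a union is countable.

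The only thing that requires any care is the claim that a countable union of countable (or finite) sets is countable, but since this is used implicitly in the preceding discussion of $\dim_\Q \R$ via Cantor's argument, I would treat it as a background fact from set theory. The main conceptual point worth emphasizing is that countability of $A$ is \emph{independent} of the ambient field of definition: one could equally well replace $\Q[x]$ with $\Z[x]$ by clearing denominators, and the counting argument goes through verbatim. Combined with the preceding paragraph, this lemma immediately implies that the set of transcendental numbers is uncountable, giving a $\Q$-linearly independent family of cardinality $2^{\aleph_0}$ in $\R$.
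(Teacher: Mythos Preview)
Your proof is correct and follows essentially the same approach as the paper: both arguments show that $\Q[x]$ is countable and (implicitly in the paper's case) that each nonzero polynomial has only finitely many roots, yielding a countable union of finite sets. The only cosmetic difference is that the paper phrases the countability of $\Q[x]$ via its countable basis $\{x^i\}_{i\in\N}$ as a $\Q$-vector space, whereas you stratify by degree and use $\Q^{n+1}$ directly; your version is more explicit about the finiteness of root sets, which the paper leaves tacit.
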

\begin{proof}
	A real number, $r,$ is algebraic if there exists $f\in \Q[x]$ such that $f(r)=0.$ Therefore, we need a bound on the cardinality of $\Q[x]$ as this gives an upper bound on the cardinality of the algebraic numbers. Notice that $\{x^i\}_{i\in \N}$ is a basis for $\Q[x]$ as a $\Q$ vector space. This is a countable basis and therefore $\Q[x]$ is a countably infinite dimensional vector space. Hence, $\Q[x]$ is countably infinite as a set and therefore the cardinality of the algebraic numbers is at most countably infinite.   
\end{proof}
\begin{corollary}
	There are uncountably many transcendental numbers. 
\end{corollary}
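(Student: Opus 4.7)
The plan is to deduce the uncountability of the transcendental numbers from the preceding lemma (countability of the algebraics) together with the uncountability of $\R$, which was already invoked via Cantor's diagonalization argument earlier in Example 2.3.11. The key observation is that every real number is either algebraic or transcendental, so $\R$ is the disjoint union $\R = \mathbb{A} \sqcup \mathbb{T}$, where $\mathbb{A}$ denotes the algebraic reals and $\mathbb{T}$ the transcendental reals.

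First I would note that a finite (or even countable) union of countable sets is countable. This is a standard fact about cardinal arithmetic that can be proved by an explicit enumeration: if $A = \bigcup_{n \in \N} A_n$ with each $A_n$ countable, one lists the elements in a two-dimensional grid and traverses it diagonally. I would cite this fact without reproving it, since it is standard and does not enrich the text. Next I would suppose, for contradiction, that $\mathbb{T}$ is countable. Then $\R = \mathbb{A} \cup \mathbb{T}$ would be expressible as the union of two countable sets, hence countable.

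This contradicts the fact that $|\R| > |\Q|$ established in Example 2.3.11 via Cantor's diagonalization, which showed $\R$ to be uncountably infinite. Hence the assumption that $\mathbb{T}$ is countable is false, and $\mathbb{T}$ must be uncountable.

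There is no real obstacle here; the statement is a one-line consequence of the preceding lemma once one accepts that countable unions of countable sets remain countable. If anything, the only subtlety is ensuring the reader recognizes that every real is either algebraic or transcendental by definition, so no real number is left out of the partition $\R = \mathbb{A} \sqcup \mathbb{T}$. I would therefore keep the proof very short, essentially a single paragraph, and cross-reference Example 2.3.11 for the uncountability of $\R$.
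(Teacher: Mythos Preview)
Your proposal is correct and is exactly the standard argument; the paper in fact omits any proof of this corollary, treating it as immediate from the preceding lemma together with the uncountability of $\R$ already cited in the surrounding example. Your write-up simply makes explicit the one-line reasoning the paper leaves implicit.
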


\noindent Using the construction from above, we now know that sitting inside $\R$ are uncountably many copies of $\Q$, each having trivial intersection, and thus $\R$ is an infinite dimensional vector space over $\Q.$ 	   
\end{example}

\subsection{Linear Transformations and Quotients}
Now that we have the notions of basis and dimension, we can introduce the idea of linear maps between vector spaces. These play a massive role in modern mathematics as well as many applied areas. The reason, as will be shown shortly, is that linear maps are in some sense the ``easiest" functions to understand. Further, there is a natural association of a matrix to any linear map, regardless of dimension. This will give us a clear method to tackle problems like. Example 2.3.19(b) and after Definition 2.3.1. First, we introduce the notion of quotient for vector spaces. This treatment will mirror the treatment for groups above, but will elucidate the differences that vector spaces bring. 

Similar to the case of sets, we want to impose a notion of equivalence on a generic vector space $V.$ We do this by identifying an entire subspace, not just a subset. 
\begin{definition}
	Let $W\subseteq V$ be a subspace. We define the \textbf{quotient space} $V/W=V/\sim$ where $v\sim v'$ if $v-v'\in W.$ It is easy to check that this is an equivalence relation. As $V$ is an abelian group, we have that $V/W$ is also an abelian group under the operation $[x]+[y]=[x+y]$. We define scalar multiplication as $k[v]:=[kv].$ This turns $V/W$ into a vector space.     
\end{definition}	
We shall see some examples of these after Theorem 2.75 below. Before this, we give the first definition of linear maps and some first properties. 

\begin{definition}
	Let $K$ be a field and $V,W$ be two $K$-vector spaces. We say a function $f:V\to W$ is a \textbf{linear transformation} if for all $v,v'\in V$ and $k,k'\in K,$ \[ f(kv+k'v')=f(kv)+f(k'v')=kf(v)+k'f(v')\in W\] 
	The set of all $v\in V$ such that $f(v)=0$ is called the $\textbf{kernel}$ and is denoted $\ker f.$ Similarly, the image, denoted $\im f$ is defined as the set of $w\in W$ such that $w=f(v)$ for some $v.$ We retain the same definitions of isomorphism as for groups above. 
\end{definition}	 

\begin{lemma}
	The canonical map $q:V\to V/W$ is linear and surjective.
\end{lemma}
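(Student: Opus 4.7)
The plan is to unpack the definition of $q$ and then verify the two claimed properties directly from the vector space structure on $V/W$ that was installed in Definition 2.3.29. Write $q(v) = [v]$, where $[v]$ denotes the equivalence class of $v$ under the relation $v \sim v' \iff v - v' \in W$.

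For linearity, I would take arbitrary $v, v' \in V$ and $k, k' \in K$, and compute
\[
q(kv + k'v') = [kv + k'v'] = [kv] + [k'v'] = k[v] + k'[v'] = k\, q(v) + k'\, q(v'),
\]
where the second equality uses the addition operation $[x] + [y] = [x+y]$ on the quotient, and the third uses the scalar multiplication $k[v] = [kv]$. Both of these were defined and shown to be well defined as part of establishing that $V/W$ is a vector space, so no extra verification is required here.

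For surjectivity, I would take an arbitrary element $[v] \in V/W$. By construction, every equivalence class has at least one representative, so choose any $v \in V$ with $[v]$ the given class. Then $q(v) = [v]$ by definition of $q$, exhibiting a preimage and proving surjectivity.

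The hard part, honestly, will not be the argument itself but rather resisting the temptation to overproduce; the statement is almost immediate from the definitions, and the only subtle point (well-definedness of $+$ and scalar multiplication on $V/W$) has already been dispatched upstream. The one thing I would be careful about is to state explicitly which definition each equality invokes, so the reader sees that linearity of $q$ is not an additional hypothesis but rather an automatic consequence of how the quotient operations were built.
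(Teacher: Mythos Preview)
Your proof is correct. The linearity argument is essentially identical to the paper's. For surjectivity, however, the paper takes a different route: it chooses a basis $\mathcal{C}$ for $V/W$, lifts each basis element to a representative in $V$, and then argues that $q$ hits a spanning set and is therefore onto. Your argument is more elementary and more general: you simply observe that every class $[v]$ has a representative $v$ with $q(v)=[v]$, which requires nothing beyond the definition of the quotient set. The paper's approach implicitly invokes the existence of a basis (hence the axiom of choice in the infinite-dimensional case) for a fact that, as you show, is immediate from the construction; your version is preferable.
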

\begin{proof}
	By definition, $q(kx+y)=[kx+y]=[kx]+[y]=k[x]+[y]=kq(x)+q(y).$ Therefore, $q$ is a linear transformation. Now let $\mathcal{C}$ be a basis for $V/W.$. Let $C'$ be a choice of representatives for the elements of $\mathcal{C}$ in $V.$ Then $\mathcal{C}=q(C')$ and extending by linearity, we get that $V/W=\text{Span }\mathcal{C}=\text{Span }q(C').$ Hence, $q$ is surjective.
\end{proof}

\begin{dtheorem}\label{Basic_Linear}
	Let $f:V\to W$ be a linear transformation. Then: 
	\begin{enumerate}
		\item $\ker f$ and $\im f$ are vector subspaces of $V$ and $W$ respectively. We then call $\dim_K \ker f$ the \textbf{nullity} and $\dim_K \im f$ the \textbf{rank}. 
		\item $f$ is injective if and only if $\ker f=0.$
		\item(First Isomorphism Theorem) $V/\ker f\cong \im f.$
		\item If $\dim_KV=\dim_KW<\infty$ then the following are equivalent: \begin{enumerate}
			\item f is injective
			\item f is surjective
			\item f is an isomorphism
		\end{enumerate} 
	\end{enumerate} 
\end{dtheorem}
\begin{proof}
	(a), (b), and (c) follow from the fact that linear functions are additive group homomorphisms that also respect scalar multiplication. This implies that $\ker f$ and $\im f$ are additive abelian groups closed under scalars by the $K$-equivariance. What remains to be proven for (c) is that the following diagram of linear maps commutes \[ \begin{tikzcd}
		V \arrow[r,"f"] \arrow[d,swap, "q"] & \im f\\
		V/\ker f \arrow[ur, swap, "\hat{f}"] 
	\end{tikzcd}\]  Forgetting the $K$-equivariance momentarily, the diagram commutes on the level of abelian groups by the proof of Theorem $\ref{First Iso}.$ Therefore, we need to show that $K$-equivariance of $\hat{f}.$ If $k\in K,$ then \[ \hat{f}(k[v])=\hat{f}([kv])=f(kv)=kf(v)=k\hat{f}([v])\] 
	By the proof of Theorem \ref{First Iso}, we know that $\hat{f}$ is a bijective linear map and thus a. vector space isomorphism.  
	
	(d) If suffices to prove that $(i)\iff (ii)$ as $(iii)\implies (i),(ii)$ trivially and $(i) \implies  (ii)$ makes $f$ a bijective linear map, hence an isomorphism. \\\\($\Rightarrow$) If $f$ is injective, pick $B$ a basis for $V.$ Then $f(B)$ is linearly independent by linearity. Since $\dim W=\dim V,$ $f(B)$ is a basis for $W$ and $f$ is surjective.  \\\\ ($\Leftarrow$) If $f$ is surjective, again let $B$ be a basis for $V$ and $f(B)$ the corresponding basis of $W.$ Let $u\in \ker f.$ We need to show $u=0.$ As $B$ is a basis, let $u=k_1v_1+...+k_nv_n$ be the unique expansion of $u$ in the basis $B.$ By the linearity of $f,$ we know that $f(u)=k_1f(v_1)+...+k_nf(v_n)=0_W.$ However, $f(B)$ is a basis for $W$ and consequently $k_i=0$ for all $i.$ Thus $u=0.$ This completes the proof. 
\end{proof}
    
\begin{corollary}\label{Dimensions}
	If $V$ and $W$ are finite dimensional vector spaces such that $dim V=\dim W,$ then $V\cong W$
\end{corollary}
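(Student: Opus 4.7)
The plan is to build an explicit isomorphism by matching bases, then invoke part (d) of Definition/Theorem~\ref{Basic_Linear} to bypass the work of checking both injectivity and surjectivity by hand.

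First, I would invoke Theorem~\ref{basis}(c) to produce bases $\mathcal{B} = \{v_1, \ldots, v_n\}$ of $V$ and $\mathcal{C} = \{w_1, \ldots, w_n\}$ of $W$. By Theorem~\ref{cardinality}, the cardinalities of any two bases of a vector space agree, so the hypothesis $\dim_K V = \dim_K W$ together with finite-dimensionality lets me index both bases by the same finite set $\{1, \ldots, n\}$. This reduction is what makes the matching step possible.

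Next, I would define $f : V \to W$ on the basis by $f(v_i) = w_i$ and extend by linearity: for any $v = \sum k_i v_i \in V$, set $f(v) = \sum k_i w_i$. The map is well-defined because the expansion of any $v$ in the basis $\mathcal{B}$ is unique (this uniqueness is what the linear independence of a basis buys us), and it is $K$-linear by direct inspection. Its image contains every basis element of $\mathcal{C}$, so $\im f = \operatorname{Span}_K \mathcal{C} = W$, i.e.\ $f$ is surjective.

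Finally, with $f$ surjective and $\dim_K V = \dim_K W < \infty$, Definition/Theorem~\ref{Basic_Linear}(d) upgrades surjectivity to isomorphism, yielding $V \cong W$. I do not anticipate any real obstacle here: the only subtle point is that the construction of $f$ relies silently on the existence part of Theorem~\ref{basis}, and the equivalence in~\ref{Basic_Linear}(d) genuinely requires finiteness of the dimension — the same argument would fail without it, since an injective (or surjective) endomorphism of an infinite-dimensional space need not be bijective.
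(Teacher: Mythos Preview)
Your proof is correct and follows essentially the same approach as the paper: construct a linear map by sending a basis of $V$ to a basis of $W$, then invoke Definition/Theorem~\ref{Basic_Linear}(d) to conclude it is an isomorphism. The only cosmetic difference is that the paper observes the map is injective before applying (d), whereas you observe surjectivity; since (d) makes these equivalent, the arguments are interchangeable.
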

\begin{proof}
	Let $B$ be a basis for $V$ and $C$ a basis of $W$ let $f:V\to W$ be defined by \[ f(k_1b_1+...+k_nb_n)=k_1c_1+...+k_nc_n\]
	This is clearly injective and by Theorem \ref{Basic_Linear}(d), an isomorphism.  
\end{proof}

We will not provide a proof for the following theorem as it is more or less an exercise in Category theory which will be postponed until Chapter 3.
\begin{theorem}\label{UMP_VS}
	Let $\mathcal{B}$ be a basis for a vector space $V.$ Let $U$ be any other vector space. If $f:\mathcal{B}\to U$ is any function, then there exists a unique linear transformation $F:V\to U$ such that the following diagram commutes: \[ \begin{tikzcd}
		\mathcal{B} \arrow[r,"f"] \arrow[d,swap,  "\iota"] & U \\
		V \arrow[ur,swap, "F"] &
	\end{tikzcd}\]
\end{theorem}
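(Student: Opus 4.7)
The plan is to construct $F$ by prescribing its values on the basis $\mathcal{B}$ to agree with $f$, and then extending by linearity, using uniqueness of basis expansions to guarantee well-definedness. The commutativity of the diagram and uniqueness of $F$ then follow almost immediately, the latter from the fact that a linear map is determined by its values on a spanning set.

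First I would fix an arbitrary $v\in V$. Since $\mathcal{B}$ is a basis, there exists a unique finite subset $\{b_1,\ldots,b_n\}\subseteq \mathcal{B}$ and unique nonzero scalars $k_1,\ldots,k_n\in K$ such that $v=\sum_{i=1}^n k_i b_i$ (uniqueness being a direct consequence of linear independence: if two such expressions existed, subtracting them would give a nontrivial dependence among basis vectors). I then define
\[
F(v)=\sum_{i=1}^n k_i f(b_i),
\]
with $F(0)=0_U$. Because the expansion of $v$ is unique, $F$ is a well-defined function $V\to U$.

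Next I would verify that $F$ is linear. Given $v=\sum k_i b_i$ and $w=\sum l_j b_j$ (with the indexing sets enlarged to a common finite set $\{b_1,\ldots,b_m\}$ by allowing zero coefficients), and $\alpha,\beta\in K$, the element $\alpha v+\beta w$ has the unique expansion $\sum(\alpha k_i+\beta l_i)b_i$, so
\[
F(\alpha v+\beta w)=\sum_i(\alpha k_i+\beta l_i)f(b_i)=\alpha\sum_i k_i f(b_i)+\beta\sum_i l_i f(b_i)=\alpha F(v)+\beta F(w).
\]
Commutativity of the diagram is immediate: for $b\in\mathcal{B}$, the unique expansion is $b=1\cdot b$, so $F(\iota(b))=F(b)=1\cdot f(b)=f(b)$.

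For uniqueness, suppose $G:V\to U$ is any linear transformation satisfying $G\circ\iota=f$, i.e.\ $G(b)=f(b)$ for every $b\in\mathcal{B}$. Then for any $v=\sum k_i b_i\in V$, linearity of $G$ forces
\[
G(v)=\sum_i k_i G(b_i)=\sum_i k_i f(b_i)=F(v),
\]
so $G=F$. The only subtle point throughout is the appeal to uniqueness of the basis expansion, which I expect to be the main (though mild) obstacle if one wishes to be fully rigorous; this is where the linear independence hypothesis on $\mathcal{B}$ is essential, and without it the formula defining $F$ could assign different values to the same vector depending on which representation is chosen.
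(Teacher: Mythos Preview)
Your proof is correct and entirely standard: define $F$ by linear extension using the unique basis expansion, verify linearity, check $F\circ\iota=f$, and deduce uniqueness from the fact that a linear map is determined on a spanning set.

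The paper, however, does not actually prove this theorem. Immediately after the statement it says: ``We will not provide a proof for the following theorem as it is more or less an exercise in Category theory which will be postponed until Chapter 3.'' So there is no proof in the paper to compare against; your direct elementary argument fills the gap that the author deliberately left open. The category-theoretic perspective alluded to would presumably recast this as the statement that the free-vector-space functor is left adjoint to the forgetful functor (compare the analogous universal property for free modules stated later in the paper), but that viewpoint still ultimately rests on exactly the construction you gave.
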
  

This is an example of a \textbf{universal mapping propery}. These types of theorems are abundant in algebra and will be seen to be parts of more general schema in Chapter 3. 

\begin{example}\label{Key_Definitions} We now give some examples of vector spaces that arise from the consideration of various linear maps. \\
(a)  \textbf{(Direct Sums and Direct Products)} Let $\{V_i\}_{i\in I}$ be a collection of vector spaces. We define two objects \begin{align*} \bigoplus_{i\in I} V_I=\{ (v_i)_{i\in I}: \text{all but finitely many } v_i=0\} && \prod_{i\in I} V_i=\{ (v_i)_{i\in I}\} \end{align*} the direct sum and direct product respectively of vector spaces. 
		These objects come with natural linear maps $\iota_j:V_j\to \bigoplus V_i$ and $\pi_j:\prod V_i \to V_j.$ For a finite indexing set, $\bigoplus V_i= \prod V_i$ and thus the symbols $\oplus$ and $\times$ will be used interchangeably. IN general however $\bigoplus V_i\hookrightarrow \prod V_i.$The key feature of $\oplus$ for finite indexing sets is that \[\dim\left(\bigoplus V_i\right) =\sum \dim V_i\] This follows from the fact that we can take individual bases in each coordinate space. As will be seen in the next chapter, $\oplus$  is a coproduct (or colimit) of vector spaces and $\times$ is a product (or limit) of vector spaces. \\\\
(b) \textbf{(Hom and Dual Spaces)} Let $\Hom_F(V,W)$ denote the set of all $F$-linear transformations $V\to W.$ This can be made into an $F$-vector space by defining addition and scalar multiplication point-wise. For the case of $W=F,$ we denote \[\Hom_F(V,F)=V^*\] the dual space to $V.$ If $\dim V<\infty,$ then there exist isomorphisms (non-canonically) of $V\cong V^*$ and (canonically) of $V\cong V^{**}$ the double dual. We call elements of $V^*$ \textbf{linear functionals } on $V.$ Let $T:V\to W$ be a linear transformation, and define $T^*=T^t$ the \textbf{transpose map} as $T^t(g)=g\comp T:W^*\to V^*.$ Below, we will show the motivation behind such a naming and its relation matrices. It can be seen that in general \[\dim\Hom(V,W)=\dim V\cdot \dim W.\]        
\end{example}
 To finish this subsection, we shall go back to the start and relate matrices to linear maps on vector spaces. 
\begin{theorem}\label{Matrix_of_Linear}
	Let $V$ and $W$ be finite dimensional vector spaces over the field $K$ and $f:V\to W$ a linear transformation. Then, there exists a matrix $A$ such that with respect to the bases on $V$ and $W, f(v)=Av.$ where the right side is taken to be matrix multiplication of the $\dim W\times \dim V$ matrix by the $\dim V\times 1$ vector in $V.$ Further more, given any matrix $M$ of size $\dim W\times \dim V,$ this corresponds to a linear map $g:V\to W.$ Moreover, this correspondence is bijective.   
\end{theorem}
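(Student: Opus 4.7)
The plan is to exhibit the correspondence explicitly by using a choice of bases on both sides, and then appeal to Theorem \ref{UMP_VS} to conclude bijectivity cleanly. First I would fix bases $\mathcal{B}=\{v_1,\dots,v_n\}$ of $V$ and $\mathcal{C}=\{w_1,\dots,w_m\}$ of $W$, where $n=\dim_K V$ and $m=\dim_K W$. Since $\mathcal{C}$ is a basis, for each $j$ the image $f(v_j)\in W$ has a unique expansion $f(v_j)=\sum_{i=1}^m a_{ij}w_i$ with $a_{ij}\in K$. I would then define $A$ to be the $m\times n$ matrix with entries $A_{ij}=a_{ij}$. The point of this definition is that the $j$-th column of $A$ records the coordinates of $f(v_j)$ in the basis $\mathcal{C}$.

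Next I would verify that $f(v)=Av$ as coordinate vectors for every $v\in V$. Writing $v=\sum_{j} k_j v_j$, linearity of $f$ and the definition of $A$ give
\[
f(v)=\sum_{j=1}^{n} k_j f(v_j)=\sum_{j=1}^{n}\sum_{i=1}^{m} k_j a_{ij} w_i=\sum_{i=1}^{m}\left(\sum_{j=1}^{n} a_{ij}k_j\right)w_i,
\]
and the coefficient of $w_i$ on the right is exactly the $i$-th entry of the matrix product $Av$ (where $v$ is identified with its coordinate column $(k_1,\dots,k_n)^t$). This establishes the forward direction and the matrix formula.

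For the reverse direction, given any $m\times n$ matrix $M$, I would define a set map $\tilde{g}\colon \mathcal{B}\to W$ by $\tilde{g}(v_j)=\sum_i M_{ij}w_i$, and then invoke Theorem \ref{UMP_VS} to extend $\tilde g$ uniquely to a linear transformation $g\colon V\to W$. Call the assignments $\Phi(f)=A$ and $\Psi(M)=g$. Bijectivity then reduces to two routine checks: $\Phi(\Psi(M))=M$ holds because by construction $\Psi(M)$ sends $v_j$ to the vector whose coordinates in $\mathcal{C}$ are the $j$-th column of $M$, so reading off the columns recovers $M$; and $\Psi(\Phi(f))=f$ follows because $\Psi(\Phi(f))$ agrees with $f$ on the basis $\mathcal{B}$ by construction, so by the uniqueness clause of Theorem \ref{UMP_VS} the two linear maps are equal on all of $V$.

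There is no real obstacle here beyond index-juggling; the only subtlety worth flagging is that the entire correspondence depends on the choice of $\mathcal{B}$ and $\mathcal{C}$, so the bijection should be understood as a bijection between $\Hom_K(V,W)$ and the space of $m\times n$ matrices \emph{after fixing bases}. The uniqueness half of Theorem \ref{UMP_VS} is doing the substantive work in the step $\Psi\circ\Phi=\mathrm{id}$; without it, one would have to argue linearity of the assembled map by hand. Everything else is a direct unpacking of the definition of matrix multiplication.
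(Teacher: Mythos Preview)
Your proof is correct, but it takes a different route from the paper's. You construct the correspondence $\Phi$ and its inverse $\Psi$ explicitly and check by hand that they are mutual inverses, using the uniqueness clause of Theorem~\ref{UMP_VS} for the identity $\Psi\circ\Phi=\mathrm{id}$. The paper instead treats the space $\mathcal{M}(V,W)$ of $m\times n$ matrices and $\Hom_K(V,W)$ as finite-dimensional vector spaces, builds a single linear map $\hat\varphi:\mathcal{M}(V,W)\to\Hom_K(V,W)$ by sending the elementary matrix $E_{ij}$ to the linear extension of $b_i\mapsto c_j$, shows the images $\{f_{ij}\}$ are linearly independent, and then concludes by the dimension count $\dim\Hom_K(V,W)=mn$ (from Example~\ref{Key_Definitions}) together with Theorem~\ref{Basic_Linear}(d) that $\hat\varphi$ is an isomorphism.

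Your argument is more self-contained: it does not rely on knowing $\dim\Hom_K(V,W)$ in advance, and it directly verifies the coordinate formula $f(v)=Av$, which the paper's proof leaves implicit. The paper's approach, on the other hand, emphasizes that the bijection is in fact a $K$-linear isomorphism of vector spaces, a structural fact your set-theoretic inverse check does not highlight. Both are valid; yours is the standard ``columns are images of basis vectors'' computation, while the paper's is a dimension-counting argument in the ambient category.
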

\begin{proof}
	Put $\mathcal{M}(V,W)$ to be all matrices in the bases $\mathcal{B},\mathcal{C}$ over $V,W$ respectively. Notice that this is a vector space over $K$ and that the matrices $E_{ij}$, whose only non-zero entry is a $1$ in position $ (i,j)$, forms a basis. Define $f_{ij}:V\to W$ to be the unique linear extension (Theorem \ref{UMP_VS}) of the map on the bases which sends $b_i\mapsto c_j$. This gives an inclusion of $\mathcal{E}$, the basis of $\mathcal{M}(V,W),$ into $\Hom(V,W)$ via the map \[ \varphi:\mathcal{E}\to \Hom(V,W)\;\;\;\;\;\;\;\;\;\; \varphi(E_{ij})=f_{ij}\]  We claim that $\{f_{ij}\}$ is a linearly independent set. To see this, consider the unique extension $\hat{\varphi}: \mathcal{M}(V,W)\to \Hom(V,W)$ and the arbitrary sum \[ 0=\sum_{i,j} a_{ij}f_{ij}\]
	Evaluating this at one of the $b_i,$ we get that \[ 0=\sum_j a_{ij}w_j\] and by linear independence all $a_{ij}=0.$ Hence, $\{f_{ij}\}$ is linearly independent and as there are $\dim W \cdot \dim V$ many elements, we know that it is a basis for $\Hom(V,W)$ by Example \ref{Key_Definitions} and Corollary \ref{Dimensions}. Hence, \[ \hat{\varphi}:\mathcal{M}(V,W)\to \Hom(V,W)\] is a surjection and by Theorem \ref{Basic_Linear}, an isomorphism. This completes the proof. 
\end{proof}

What this tells us is that every matrix can be treated as a linear transformation and thus the transpose map $f^t:W^*\to V^*$ has a matrix representation as the transpose matrix. As we will see later, this correspondence between matrices and linear maps can be exploited to prove a variety of theorems. One of the main theorems will be on determinants, to be defined in section 2.5 which relates invertibility of a matrix (and of the corresponding linear map) to its determinant.

\section{Ring Theory}
We now enter the belly of the algebraic beast. Ring and module (section 2.5) theory generalizes both fields and vector spaces in a way which makes doing mathematics with them significantly more difficult. However, we are lucky in that for the main applications in Chapter 4 and 5, we only need sufficiently nice objects called local and/or noetherian rings. Modules over these rings are relatively controlled and thus are incredibly important for analyzing these objects. A majority of this section comes from \cite{Knapp2006},\cite{Rotman2015} and \cite{DummitFoote2004}. The material on commutative rings follows \cite{Matsamura1986} and \cite{AtiyahMacdonald1969}. Similar to the previous sections, we begin with some definitions: 

\begin{definition}
	Let $R$ be a set equipped with two associative binary operations ($+,\times$). We call $R$ a \textbf{ring} if the following hold: 
	\begin{enumerate}
		\item $R$ is an abelian group under $+.$ 
		\item $R$ is closed under $\times.$ That is for all $a,b\in R,$ $a\times b=ab\in R.$ 
		\item For all $a,b,c\in R,$ $a(b+c)=ab+ac$ and $(a+b)c=ac+bc.$
	\end{enumerate}
	If in addition there exists an element $1_R$ such that $x\times 1_R=x,$ for all $x\in R$ then we say that $R$ is \textbf{unital}. We call $R$ \textbf{commutative} if $a\times b=b\times a$ for all $a,b\in R.$ A \textbf{ring homomorphism} is a function $f:R\to S$ such that for all $s,t\in R,$ \begin{align*} f(s+t)=f(s)+f(t) && f(st)=f(s)f(t) \end{align*}
	If $R$ and $S$ are unital, then we also impose the condition that $f(1_R)=1_S.$ The set of units (multiplicative invertible elements) is denoted $R^\times.$
\end{definition}
\begin{remark}
	\color{red} It is common practice to assume that all rings are unital. This makes one's job much easier when considering homomorphisms and related objects. We shall follow this convention for the remainder of the text and note the instances when an object does not contain a unit. \color{black}
\end{remark}
\begin{lemma}\label{Units}
	Let $R$ be a ring. Then the set $R^\times$ is a group under multiplication. 
\end{lemma}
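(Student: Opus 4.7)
The plan is to verify the group axioms for $R^\times$ using the multiplication inherited from $R$, and then appeal to Theorem 2.2.11 to avoid redundant verification. First I would note that $R^\times$ is nonempty: since $R$ is unital, $1_R \cdot 1_R = 1_R$, so $1_R$ is its own inverse and therefore $1_R \in R^\times$. Associativity is automatic, since multiplication in $R^\times$ is the restriction of multiplication in $R$, which is associative by the ring axioms.

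The main content is then closure under multiplication and closure under inversion. For closure, let $a, b \in R^\times$ with inverses $a^{-1}, b^{-1} \in R$. I would compute directly that
\[
(ab)(b^{-1}a^{-1}) = a(bb^{-1})a^{-1} = a \cdot 1_R \cdot a^{-1} = aa^{-1} = 1_R,
\]
and similarly on the other side, showing $ab$ has $b^{-1}a^{-1}$ as a two-sided inverse and hence lies in $R^\times$. For closure under inversion, observe that the very equations $aa^{-1} = a^{-1}a = 1_R$ that witness $a \in R^\times$ also witness $a^{-1} \in R^\times$, with $a$ serving as its inverse.

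With nonemptiness, associativity, a right identity $1_R$, and a right inverse for every element (namely $a^{-1}$ constructed above), Theorem 2.2.11 applies and gives that $R^\times$ is a group under multiplication. I expect no serious obstacle here; the only point requiring slight care is making sure we do not accidentally use commutativity (since $R$ is not assumed commutative), which is why the computation of $(ab)(b^{-1}a^{-1})$ must be written with the factors in the specified order, and why one must verify the inverse relation on both sides. Uniqueness of the inverse, if desired, then follows from Lemma 2.2.7 applied inside the group $R^\times$.
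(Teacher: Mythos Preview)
Your proof is correct and complete. The paper, however, states this lemma without proof and moves directly to examples, so there is no argument in the text to compare against; your verification of closure, identity, and inverses (with appropriate care about non-commutativity) is exactly the standard argument one would supply here.
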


\begin{example}
	Rings play a key role in the later parts of this text and therefore it is imperative that we have a wealth of examples to draw from. 
	\begin{enumerate}
		\item Let $F$ be a field, then $F$ is a commutative, (unital) ring, where every non-zero element has an inverse. Therefore $F^\times=F-0.$ 
		\item All of the sets $\Z,\Q,\R,\C$ are rings with additional and multiplication defined as usual. In fact, $\Z$ is the prototypical example of a commutative ring which is not a field. For $\Q,\R,\C$ their group of units is the set of non-zero elements. For $\Z,$ its easy to see that $\Z^\times=\{\pm 1\}\cong \Z/2\Z.$ 
		\item Let $V$ be a finite dimensional $K$-vector spaces of $\dim V=n,$ then \[M_{n}(K):=\mathcal{M}(V,V)\] the set of $n\times n$ matrices is a ring with identity element $I_n=\text{diag}(1,...,1)$ the matrix with 1s along the main diagonal. Further, the group of units is special and gets its own symbol \[ GL_n(K):=M_n(K)^\times\] 
		\item Consider the vector subspace of $M_2(\C)$ with basis \begin{align*}
			\textbf{1}=\Mat{1 & 0\\ 0& 1} &&  \textbf{i}=\Mat{i & 0\\ 0& -i} && \textbf{j}=\Mat{0 & 1\\ -1& 0} && \textbf{k}=\Mat{0 & i\\ i& 0}   
		\end{align*}
		We denote this space as $\mathbb{H}=\text{Span}_\R\{\textbf{1},\textbf{i},\textbf{j},\textbf{k}\}.$ These are the hamiltonian quaternions and are an example of a \textbf{division ring}, one where every non-zero element has a multiplicative inverse.  
		\item Let $V$ be a vector space over a field of $\text{char } F\neq 2$ and equip $V$ with a \text{bilinear} map \[ [-,-]:V\ds V\to V\] which satisfies the following conditions for all $x,y,z\in V$ \begin{enumerate}
			\item $[x,y]=-[y,x] $ (Anti-commutativity)
			\item $[x,[y,z]]+[z,[x,y]]+[y,[z,x]]=0$ (Jacobi Identity)
		\end{enumerate}	
		Then $V$ is a $\textbf{Lie Algebra}.$ Every lie algebra is a non-commutative, non-associative, non-unital ring. These will play a part in the theory developed in Chapter 3. 
		\item Consider $R[x]$ the polynomial ring with coefficients in a ring $R.$. This is a ring as discussed in Example \ref{Polynomials}. The group of units is necessarily $R^\times$ as these are the only elements with formal inverses. 
	\end{enumerate}
\end{example}

\begin{proposition}\label{Ring_Z_module}
	Let $R$ be a ring. Then there exists a unique ring homomorphism $\varphi:\Z\to R$ 
\end{proposition}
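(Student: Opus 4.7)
The plan is to show existence by explicit construction, then leverage the unital convention (in the red remark) to force uniqueness from additivity alone.

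For existence, I would define $\varphi:\mathbb{Z}\to R$ by
\[
\varphi(n) = \begin{cases} \underbrace{1_R + 1_R + \cdots + 1_R}_{n\text{ times}} & n > 0 \\ 0_R & n = 0 \\ -\varphi(-n) & n < 0 \end{cases}
\]
and then verify the three homomorphism axioms. Additivity $\varphi(m+n)=\varphi(m)+\varphi(n)$ follows by a case analysis on signs, each case being an easy induction using the fact that $(R,+)$ is an abelian group (so $-$ distributes as expected). Preservation of the unit, $\varphi(1)=1_R$, is immediate from the definition. The only nontrivial axiom is multiplicativity, $\varphi(mn)=\varphi(m)\varphi(n)$. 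For $m,n>0$ this is a double induction using the two-sided distributive law from Definition 2.4.1(c): one can write $\varphi(m)\varphi(n) = (1_R+\cdots+1_R)(1_R+\cdots+1_R)$ and expand using distributivity to get a sum of $mn$ copies of $1_R$, which is $\varphi(mn)$. The cases involving zero or negative arguments then reduce to the positive case after pulling out signs, using the identity $(-x)y = x(-y) = -(xy)$ valid in any ring (itself a consequence of distributivity applied to $0 = (x + (-x))y$).

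For uniqueness, suppose $\psi:\mathbb{Z}\to R$ is any ring homomorphism. Since we are working in the unital convention, $\psi(1) = 1_R$. Additivity then forces $\psi(n) = n\cdot 1_R = \varphi(n)$ for every positive integer $n$ by induction, $\psi(0)=0_R$ because $\psi$ is in particular an additive group homomorphism (so Proposition 2.2.15's proof applies), and $\psi(-n) = -\psi(n) = -\varphi(n) = \varphi(-n)$ because additive group homomorphisms send inverses to inverses. Hence $\psi = \varphi$.

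The main obstacle, modest as it is, is the multiplicativity check: there is a temptation to invoke it as ``obvious,'' but it genuinely requires the full two-sided distributivity axiom because $R$ is not assumed commutative. Everything else is bookkeeping dictated by the unit-preservation requirement built into the definition of a ring homomorphism between unital rings.
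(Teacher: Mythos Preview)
Your proposal is correct and follows essentially the same approach as the paper: construct $\varphi$ by $n\mapsto n\cdot 1_R$ and argue uniqueness from the fact that a unital ring homomorphism is determined by where it sends $1$. Your treatment is considerably more careful than the paper's (which is terse and glosses over the multiplicativity check), but the underlying idea is identical.
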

\begin{proof}
	Fix $r\in R$ and define $\varphi_r(n)=nr=r+r+...+r$ ($n$-times) in $R.$ For all $r\neq 1_R,$ this is a homomorphism of non-unital rings. Notice that each of these is determined completely and uniquely by where it sends $1.$ Hence, put $\varphi=\varphi_1:\Z\to R.$ This sends $1\mapsto 1_R$ and therefore is the desired homomorphism.  
\end{proof}
We define the kernel of a ring homomorphism in direct analog to vector spaces and group homomorphisms. The following theorem is the ring version of Theorem \ref{Basic_Linear}.  
We leave the proof as an exercise to the reader as it follows with slight modification from the proof of Theorem \ref{Basic_Linear}.  

\begin{theorem}
	Let $S$ and $R$ be rings and $\varphi:R\to S$ a ring homomorphism. Then: 
	\begin{enumerate}
		\item $\ker \varphi$ is a subring with no unit and $\varphi(R)$ is a ring. 
		\item $\varphi$ is injective if and only if $\ker \varphi=0.$
	\end{enumerate}	
\end{theorem}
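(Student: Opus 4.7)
The plan is to mirror the proof of Theorem \ref{Basic_Linear}, exploiting the fact that a ring homomorphism is first of all a homomorphism of the underlying additive abelian groups, and then adding the multiplicative compatibility. For part (a), I would first verify that $\ker\varphi$ is closed under addition and negation purely from additive homomorphism properties (this is essentially Proposition 2.2.15 applied to the additive group), which already makes $\ker\varphi$ an additive subgroup of $R$. Next I would check closure under multiplication: if $a,b\in\ker\varphi$, then $\varphi(ab)=\varphi(a)\varphi(b)=0_S\cdot 0_S=0_S$, so $ab\in\ker\varphi$. Together with the distributive laws inherited from $R$, this makes $\ker\varphi$ a (non-unital) subring. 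The reason it carries no unit is that by our standing convention $\varphi(1_R)=1_S$, so $1_R\in\ker\varphi$ would force $1_S=0_S$ and hence $S=0$; thus in any nontrivial setting $1_R\notin\ker\varphi$, and $\ker\varphi$ cannot contain a multiplicative identity compatible with its inclusion into $R$.

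For $\varphi(R)$, I would show it inherits the ring structure of $S$: closure under $+$ and $\times$ follows directly from $\varphi(a)+\varphi(b)=\varphi(a+b)$ and $\varphi(a)\varphi(b)=\varphi(ab)$, the additive inverse of $\varphi(a)$ is $\varphi(-a)$, and $1_S=\varphi(1_R)\in\varphi(R)$. Associativity, distributivity, and commutativity of addition are inherited from $S$, so $\varphi(R)$ is a unital subring of $S$.

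For part (b), the argument is identical in spirit to the vector-space case. In the forward direction, if $\varphi$ is injective and $r\in\ker\varphi$, then $\varphi(r)=0_S=\varphi(0_R)$ forces $r=0_R$. In the reverse direction, assume $\ker\varphi=0$ and suppose $\varphi(a)=\varphi(b)$; then $\varphi(a-b)=\varphi(a)-\varphi(b)=0_S$, so $a-b\in\ker\varphi=\{0_R\}$, giving $a=b$.

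There is no substantive obstacle here; the only subtle point is the phrase \textbf{subring with no unit}, which I would flag explicitly to emphasize that under the convention that ring morphisms preserve $1$, the kernel is genuinely non-unital whenever $S\neq 0$. Everything else is a direct transcription of the additive-group and vector-space proofs, with the multiplicative axiom of $\varphi$ supplying the extra closure conditions needed to upgrade subgroups and subspaces to subrings.
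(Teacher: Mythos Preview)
Your proposal is correct and follows exactly the approach the paper indicates: the paper does not write out a proof but explicitly leaves it as an exercise, remarking that it follows with slight modification from the proof of Theorem~\ref{Basic_Linear}. Your argument is precisely that modification, handling the additive-group part as in the vector-space case and then checking multiplicative closure, so there is nothing to add.
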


Let us focus on $\ker \varphi$ for a moment. It is a special example of an \textbf{Ideal} of $R.$ 
\begin{definition}\label{Prime_maximal}
	Let $R$ be a ring. A \textbf{left ideal} $I\leq (R,+)$ is a subgroup of the additive group of $R$ such that \[RI=\{ri:r\in R, i\in I\}\subseteq I\]
	Similarly a right ideal is a subset $J\subseteq R$ such that $JR\subseteq J.$ We call an ideal, $\lie{m},$ \textbf{maximal} if there are no other ideals of $R$ which properly contain $\lie{m}.$ We call an ideal, $\lie{p},$ $\textbf{prime}\footnote{The word prime here comes from the notion of prime integer. Normally a number is prime if its only factors are 1 and itself. An equivalent condition is that $p\in \Z$ is prime if and only if when $p$ divides the product $ab$ for some $a,b\in \Z,$ then either $p$ divides $a$ or $p$ divides $b.$}$ if $ab\in \lie{p}$ then either $a\in \lie{p}$ or $b\in \lie{p}.$   
\end{definition}
\begin{remark}\label{Commutative_Ideal}
	Notice that in a commutative ring $R,$ every left ideal is also a right ideal. An ideal which is a left and right ideal, is called $\textbf{two-sided}.$ Further, over a commutative ring we can think of ideals in the same way we thought about vector spaces. The main difference however is that we cannot normally pick a generating set for $I$ as non-trivial ideals exist for rings which do not exist for fields. 
\end{remark}

\begin{example}\label{Principal_ideal_example}Lets consider some ideals in the rings given above. 
	\begin{enumerate}
		\item Every field has no proper non-zero ideals. This follows from the fact that an ideal $I$ is necessarily a vector space over $F$ and therefore has a basis. If $I$ is non-zero the basis has to be the element $1_F.$ 
		\item For any ring $R,$ let $S$ be a subset. We can form $\ip{S}$ the \textbf{ideal generated by S} by taking \[ \ip{S}=\bigcap_{S \subset I \subseteq R} I  \]
		where $I$ is an ideal. We leave it to the reader to check that the intersection of ideals is necessarily an ideal. We call an ideal $\textbf{principal}$ if $I=\ip{r}$ for some element $r\in R.$  
		\item $\Z$ is an example of a \textbf{Principal ideal domain}. This means that every ideal is principal and thus $m\Z$ are all possible ideals. Because of this, any ideal which contains $1\in \Z$ must be $\Z$ itself. In fact in any ring, an ideal which contains $1_R$ must be the entire ring.
	\end{enumerate}
\end{example}
\begin{lemma}
	Let $\varphi:B\to A$ be a ring homomorphism. Then if $\lie{p}$ is a prime ideal in $A,$ $\varphi^{-1}(\lie{p})$ is a prime ideal in $B.$ This does not hold true for maximal ideals. 
\end{lemma}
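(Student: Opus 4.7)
The plan is to handle the two claims separately: first the positive statement that preimages of prime ideals are prime, then a concrete counterexample showing the corresponding statement for maximal ideals fails.

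For the first part, I would begin by verifying that $\varphi^{-1}(\mathfrak{p})$ is an ideal of $B$ at all. Since $\mathfrak{p}$ is an additive subgroup of $A$ and $\varphi$ is an additive homomorphism, $\varphi^{-1}(\mathfrak{p})$ is an additive subgroup of $B$ (this is essentially the same argument as for kernels, since $\ker \varphi = \varphi^{-1}(0) \subseteq \varphi^{-1}(\mathfrak{p})$). To check the absorption property, let $b \in B$ and $x \in \varphi^{-1}(\mathfrak{p})$; then $\varphi(bx) = \varphi(b)\varphi(x) \in A \cdot \mathfrak{p} \subseteq \mathfrak{p}$, so $bx \in \varphi^{-1}(\mathfrak{p})$, and similarly on the right. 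Also $\varphi^{-1}(\mathfrak{p})$ is a proper subset of $B$ since $\varphi(1_B) = 1_A \notin \mathfrak{p}$ (primeness of $\mathfrak{p}$ in the sense of Definition \ref{Prime_maximal} implicitly requires $\mathfrak{p}$ to be proper, so $1_A \notin \mathfrak{p}$).

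The primeness check is then nearly immediate from the definition. Suppose $xy \in \varphi^{-1}(\mathfrak{p})$ for some $x, y \in B$. Then $\varphi(xy) = \varphi(x)\varphi(y) \in \mathfrak{p}$, and since $\mathfrak{p}$ is prime in $A$, either $\varphi(x) \in \mathfrak{p}$ or $\varphi(y) \in \mathfrak{p}$, i.e.\ either $x \in \varphi^{-1}(\mathfrak{p})$ or $y \in \varphi^{-1}(\mathfrak{p})$. This is the prime condition, so $\varphi^{-1}(\mathfrak{p})$ is a prime ideal of $B$.

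For the second part, I would exhibit the simplest counterexample I know. Take the inclusion $\varphi: \mathbb{Z} \hookrightarrow \mathbb{Q}$. Since $\mathbb{Q}$ is a field, the zero ideal $(0) \subseteq \mathbb{Q}$ is maximal (by Example \ref{Principal_ideal_example}(a), fields have no proper nonzero ideals). However, $\varphi^{-1}(0) = (0) \subseteq \mathbb{Z}$ is not maximal, since for any prime $p$ the ideal $p\mathbb{Z}$ properly contains $(0)$ and is itself proper. The zero ideal of $\mathbb{Z}$ is prime (consistent with the first part of the lemma), but not maximal, giving the desired failure.

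There is no serious obstacle here; the only subtlety is the implicit convention that prime (and maximal) ideals are assumed proper, which is needed to ensure $\varphi^{-1}(\mathfrak{p})$ is a proper ideal and relies on $\varphi(1_B) = 1_A$. If the paper's convention allows the improper ideal to count as prime, the argument is even shorter, but the counterexample still works because $(0) \subsetneq p\mathbb{Z} \subsetneq \mathbb{Z}$ genuinely witnesses non-maximality.
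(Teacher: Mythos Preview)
Your proof is correct and follows essentially the same approach as the paper: the primeness argument via $\varphi(xy)=\varphi(x)\varphi(y)\in\mathfrak{p}$ is identical, and your counterexample $\mathbb{Z}\hookrightarrow\mathbb{Q}$ with the zero ideal is exactly the one the paper uses. You are in fact more careful than the paper, which omits the verification that $\varphi^{-1}(\mathfrak{p})$ is an ideal and that it is proper.
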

\begin{proof}
	Let $ab\in \varphi^{-1}(\lie{p}).$ Then $\varphi(ab)=\varphi(a)\varphi(b)\in \lie{p}$ and thus one of $\varphi(a)$ or $\varphi(b)$ is an element of $\lie{p}.$ Hence, either $a$ or $b$ is an element of $\varphi^{-1}(\lie{p})$ and it is a prime ideal of $B.$ 
	
	For a counterexample in the maximal case, consider the canonical inclusion $\iota: \Z\hookrightarrow \Q.$ As $\Q$ is a field, its only ideal is $0$ but $ \iota^{-1}(0)=0,$ which is not maximal in $\Z.$  
\end{proof}

Similar to vector spaces and groups, we can take quotients of rings by two-sided ideals. To see why these are the natural choice for quotients, consider that we want to have the quotient $R/I$ become a ring again. To do this, let $I$ be an arbitrary subgroup of $(R,+).$ A coset of $I$ in $R$ will be denoted $r+I$ for $r\in R.$ If we define addition and multiplication in the obvious way \[ (r+I)+(s+I)=(r+s)+I\]
\[ (r+I)(s+I)=(rs)+I\]
As $(R,+)$ is an abelian group, we know that the as groups $R/I$ is well defined under $+.$ We need to make sure it is well defined under $\times.$ That is for any $r,s\in R$ and $\alpha,\beta\in I$ we should have that \[ (r+\alpha)(s+\beta)+I=rs+I\]
If we set $r=s=0$ we see that $I$ must be closed under multiplication. Thus, $I$ is a subring (without unit) of $R$. By setting $s=0$ we see that $r\beta\in I$ for all $r\in R$ and $\beta\in I.$ Therefore $I$ is closed under multiplication on the left by $R.$ Setting $r=0$ and letting $s$ vary we also see that $I$ must be closed under multiplication by $R$ on the right. Conversely, if $I$ is closed under left and right multiplication by $R$ then the relation above must be satisfied. Hence, being a two-sided ideal is a necessary and sufficient condition for $R/I$ to be a ring. What we have just shown is the following technical lemma: 

\begin{lemma}\label{Ideals}
	Let $I\subseteq (R,+)$ be a subgroup. A necessary and sufficient condition for $R/I$ to have the structure of a ring is that $I$ is a two-sided ideal of $R.$
\end{lemma}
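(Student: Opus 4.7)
The plan is to prove both directions of the biconditional by working directly with the well-definedness of the proposed multiplication $(r+I)(s+I) := rs + I$. The additive structure is free: because $(R,+)$ is abelian, any subgroup $I$ is automatically normal, so Theorem 2.2.18 already gives $R/I$ the structure of an abelian group under $(r+I)+(s+I) = (r+s)+I$. Hence the entire question is whether multiplication on cosets descends to the quotient.

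For the sufficiency direction, I would assume $I$ is a two-sided ideal and show the formula for multiplication is independent of representatives. Pick $r' = r + \alpha$ and $s' = s + \beta$ with $\alpha,\beta \in I$, and expand
\[ r's' = rs + r\beta + \alpha s + \alpha\beta. \]
Since $I$ is a left ideal, $r\beta \in I$; since it is a right ideal, $\alpha s \in I$; and since $I$ is in particular closed under multiplication (being a two-sided ideal contained in $R$), $\alpha\beta \in I$. Therefore $r's' - rs \in I$, so $r's' + I = rs + I$ and multiplication on $R/I$ is well-defined. Associativity of multiplication and the two distributive laws then transfer verbatim from $R$ by evaluating each side on representatives, so $R/I$ is a ring. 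If $R$ is unital, $1_R + I$ serves as the multiplicative identity.

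For the necessity direction, I would run the same computation in reverse. Suppose $R/I$ is a ring under the operations above. Well-definedness forces $r\beta + \alpha s + \alpha\beta \in I$ for every $r,s \in R$ and every $\alpha,\beta \in I$. Now I would isolate each summand by choosing the free parameters appropriately: setting $r = s = 0$ yields $\alpha\beta \in I$, so $I$ is closed under multiplication; setting $s = 0$ and $\alpha = 0$ yields $r\beta \in I$, so $RI \subseteq I$; and setting $r = 0$ and $\beta = 0$ yields $\alpha s \in I$, so $IR \subseteq I$. The first condition is then subsumed by either of the latter two, and together they say exactly that $I$ is a two-sided ideal.

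The argument is essentially bookkeeping — the author has already written the substantive content in the paragraph preceding the statement — so there is no real obstacle. The only subtlety worth flagging is the independence of the three choices of parameters in the necessity direction: one must justify that the single containment $r\beta + \alpha s + \alpha\beta \in I$ cannot be exploited to salvage well-definedness without each term lying in $I$ on its own, and this is immediate because the substitutions above specialize $r, s, \alpha, \beta$ independently.
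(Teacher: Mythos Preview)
Your proposal is correct and follows essentially the same argument as the paper: the paper's proof is the paragraph immediately preceding the lemma, which expands $(r+\alpha)(s+\beta)$ and then sets $r=s=0$, $s=0$, and $r=0$ in turn to extract closure under multiplication, left absorption, and right absorption, exactly as you do. The only cosmetic difference is that you also record the specializations $\alpha=0$ and $\beta=0$ to isolate the left and right absorption conditions cleanly, which is a minor sharpening of the same bookkeeping.
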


\begin{proposition}[First Isomorphism Theorem for Rings]\label{First_Iso_Ring}
	Let $\varphi:R\to S$ be a ring homomorphism. Then:
	\begin{enumerate} \item $\ker \varphi$ is a two-sided ideal of $R$ and  $R/\ker \varphi \cong \varphi(R).$
	\item If $I$ is any two-sided ideal of $R$, the map \[ \pi:R\to R/I\;\;\;\;\;\;\;\;\;\;\;\;\;\; r\mapsto r+I\]
	is a surjective ring homomorphism with kernel $I.$ Hence, every two-sided ideal of $R$ can be realized as the kernel of some homomorphism.  
	\end{enumerate}
\end{proposition}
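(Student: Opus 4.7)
The plan is to bootstrap everything from the group-theoretic first isomorphism theorem (Theorem~\ref{First Iso}) and Lemma~\ref{Ideals}, then upgrade the resulting bijection to a ring isomorphism by checking it respects multiplication and the unit. Part (b) will essentially be a direct verification from the construction of $R/I$.

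For part (a), I would first show that $\ker\varphi$ is a two-sided ideal. Since $\varphi$ is in particular a homomorphism of additive abelian groups, Proposition 2.2.15 gives that $\ker\varphi$ is a (normal) subgroup of $(R,+)$. For the absorption property, pick $k\in \ker\varphi$ and $r\in R$; then $\varphi(rk)=\varphi(r)\varphi(k)=\varphi(r)\cdot 0_S=0_S$, and symmetrically $\varphi(kr)=0_S$, so $rk,kr\in\ker\varphi$. By Lemma~\ref{Ideals}, $R/\ker\varphi$ therefore carries a well-defined ring structure with multiplication $(r+\ker\varphi)(s+\ker\varphi)=rs+\ker\varphi$. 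Next, Theorem~\ref{First Iso} applied to the underlying additive groups produces a group isomorphism $\hat{\varphi}:R/\ker\varphi\to \varphi(R)$ sending $[r]\mapsto \varphi(r)$. All that remains is to verify $\hat{\varphi}$ is a ring homomorphism:
\[
\hat{\varphi}([r][s])=\hat{\varphi}([rs])=\varphi(rs)=\varphi(r)\varphi(s)=\hat{\varphi}([r])\hat{\varphi}([s]),
\]
and $\hat{\varphi}([1_R])=\varphi(1_R)=1_S$ (using that we work with unital rings, as stated in the remark following Definition 2.4.1). Combined with the bijectivity obtained from the group-theoretic result, this gives the ring isomorphism.

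For part (b), the map $\pi(r)=r+I$ is defined precisely as the quotient map constructed in Lemma~\ref{Ideals}, so the ring structure on $R/I$ was designed to make $\pi$ a homomorphism. Concretely, $\pi(r+s)=(r+s)+I=(r+I)+(s+I)=\pi(r)+\pi(s)$ and $\pi(rs)=rs+I=(r+I)(s+I)=\pi(r)\pi(s)$, while $\pi(1_R)=1_R+I$ is the multiplicative identity of $R/I$. Surjectivity is immediate since every coset has the form $r+I=\pi(r)$. Finally, $\pi(r)=0_{R/I}=0+I$ if and only if $r\in I$, so $\ker\pi=I$; combining (a) and (b) shows that the two-sided ideals of $R$ are exactly the kernels of ring homomorphisms out of $R$.

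There is no real obstacle here; the only subtlety worth flagging is the unit convention. Because the paper adopts the rule (in the red remark) that all rings are unital and all homomorphisms preserve the unit, I need to explicitly verify $\hat{\varphi}(1_{R/\ker\varphi})=1_{\varphi(R)}$ and $\pi(1_R)=1_{R/I}$; otherwise the result is a straightforward transfer of the additive-group first isomorphism theorem through Lemma~\ref{Ideals}.
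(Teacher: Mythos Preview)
Your argument is correct and follows essentially the same route as the paper: reduce to the additive-group First Isomorphism Theorem (Theorem~\ref{First Iso}) and then verify the induced bijection $\hat{\varphi}$ respects multiplication and the unit, with part (b) handled by direct computation from Lemma~\ref{Ideals}. If anything, you supply more detail than the paper does (e.g., explicitly checking the two-sided absorption property of $\ker\varphi$ and the unit preservation), but the strategy is identical.
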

\begin{proof}
	(a) The majority of this proof mirrors that of Theorem \ref{First Iso}. What remains to be proven is that the map $\hat{\varphi}(r+I)=\varphi(r)$ is a bijection between $R/I$ and $\varphi(R).$ This follow immediately from the definitions of a ring homomorphism. \\
	
	(b) We know that $R/I$ is a ring from the discussion before the statement of the proposition. In particular $R$ and $R/I$ are abelian groups and therefore $\pi:R\to R/I$ is a group homomorphism. To see it is a ring homomorphism, consider two elements $r,s\in R.$ Then \[ \pi(rs)=rs+I=(r+I)(s+I)=\pi(r)\pi(s)\]
	Further, $\pi(1)=1+I=1_{R/I}.$ Hence, $\pi$ is a ring homomorphism. 
\end{proof}
The final proof of this subsection is possibly the most useful and interesting isomorphism theorem. 

\begin{theorem}[Fourth Isomorphism Theorem(s)]\label{Fourth_Iso_Rings} \text{}
	\begin{enumerate}
		\item Let $G$ be a group and $N\trianglelefteq G.$ Then the subgroups of $G/N$ are in one-to-one correspondence with the subgroups of $G$ containing $N.$ 
		\item Let $R$ be a ring and $I$ a two-sided ideal. Then the subrings of $R/I$ are in one-to-one correspondence with the subrings of $R$ containing $I.$ 
	\end{enumerate}
\end{theorem}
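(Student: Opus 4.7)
The plan is to produce, for each part, an explicit bijection between the two collections of subobjects using the canonical quotient map, and then check that the two assignments are mutually inverse. The heavy lifting is purely formal: correspondence theorems of this shape always reduce to chasing elements through the projection.

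For (a), let $\pi\colon G\to G/N$ denote the canonical surjection from Theorem 2.2.19, which has kernel $N$. Let $\mathcal{A}=\{H\leq G : N\leq H\}$ and $\mathcal{B}=\{\bar H\leq G/N\}$. First I would define $\Phi\colon\mathcal{A}\to\mathcal{B}$ by $\Phi(H)=\pi(H)=H/N$, and $\Psi\colon\mathcal{B}\to\mathcal{A}$ by $\Psi(\bar H)=\pi^{-1}(\bar H)$. One checks immediately that $\Phi(H)$ is closed under the quotient multiplication and inversion, hence a subgroup; and that $\pi^{-1}(\bar H)$ is a subgroup containing $N=\pi^{-1}(\{e_{G/N}\})\subseteq\pi^{-1}(\bar H)$, placing it in $\mathcal{A}$.

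The substantive step is verifying $\Psi\circ\Phi=\mathrm{id}_{\mathcal{A}}$ and $\Phi\circ\Psi=\mathrm{id}_{\mathcal{B}}$. The inclusion $H\subseteq \pi^{-1}(\pi(H))$ is automatic; for the reverse, take $g\in\pi^{-1}(\pi(H))$, so $gN=hN$ for some $h\in H$, whence $g=hn$ for some $n\in N$, and since $N\subseteq H$, one has $g\in H$. This is exactly where the hypothesis $N\leq H$ is used, so it is the step most likely to trip one up if the containment condition is dropped. The identity $\Phi(\Psi(\bar H))=\bar H$ follows from surjectivity of $\pi$: any coset $gN\in\bar H$ has the representative $g\in\pi^{-1}(\bar H)$, so $gN\in\pi(\pi^{-1}(\bar H))$, and the opposite inclusion is formal. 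This establishes the bijection.

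For (b), I would run the exact same argument with $\pi\colon R\to R/I$ from Proposition \ref{First_Iso_Ring}, replacing ``subgroup'' by ``subring'' everywhere. The only additional content is verifying that $\pi(S)$ is closed under the quotient multiplication and contains $1+I$ whenever $S$ is a subring of $R$ (immediate), and that $\pi^{-1}(\bar S)$ is closed under both ring operations and contains $I=\ker\pi$. The same element-chase as in (a) then shows that $\pi^{-1}(\pi(S))=S$ precisely because $I\subseteq S$: if $r\in\pi^{-1}(\pi(S))$, then $r+I=s+I$ for some $s\in S$, so $r-s\in I\subseteq S$ and hence $r\in S$. The reverse composition is again surjectivity of $\pi$. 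The hard part, as in (a), is resisting the temptation to forget the containment hypothesis; without $I\subseteq S$ (respectively $N\leq H$), the map $\Phi$ is still defined but fails to be injective, since any two subobjects differing by an element of $I$ (resp.\ $N$) would collapse to the same image.
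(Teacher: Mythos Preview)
Your proof is correct and follows essentially the same route as the paper: both set up the correspondence via the canonical projection $\pi$, sending a subobject $S$ containing $N$ (resp.\ $I$) to $\pi(S)$ and a subobject of the quotient to its preimage $\pi^{-1}(\bar S)$. The paper treats (b) first and claims (a) by the same argument, whereas you do the reverse; your version is in fact more complete, since you explicitly verify both compositions $\Psi\circ\Phi$ and $\Phi\circ\Psi$ are the identity and isolate where the containment hypothesis is used, while the paper's proof only checks $\pi^{-1}(P)/I=P$ and leaves the other direction implicit.
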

\begin{proof}
	We shall prove (b) and (a) will follow immediately by the same argument. Let $\pi:R\to R/I$ be the canonical projection map and $S$ be a subring of $R$ containing $I.$ Then $I$ is a two-sided ideal in $S$ and thus $S/I$ is a ring contained in $R/I.$ Now assume that $P\subseteq R/I$ is a subring. Then $\pi^{-1}(P)=\{r\in R: r+I\in P\}.$ We first check that this is a ring. If $a,b,c\in \pi^{-1}(P)$ then \[\pi(ab+c)=(ab+c)+I=(ab+I)+(c+I)=(a+I)(b+I)+(c+I)\] This is an element of $P$ by the definition of a ring. Thus, $\pi^{-1}(P)$ is a ring and $\pi^{-1}(P)/I=P$ so Lemma \ref{Ideals} tells us that $I$ is an ideal in $\pi^{-1}(P).$ This completes the proof.     
\end{proof}

An immediate corollary of this theorem is that 
\begin{corollary}
	Let $R$ be a commutative ring and $\lie{m}$ an ideal. Then $\lie{m}$ is maximal if and only if $R/\lie{m}$ is a field. 
\end{corollary}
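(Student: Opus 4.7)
The plan is to leverage the Fourth Isomorphism Theorem for Rings (Theorem 2.4.13), which sets up a correspondence between subrings of $R/\mathfrak{m}$ and subrings of $R$ containing $\mathfrak{m}$. The crux of the argument is that this correspondence restricts to a bijection on \emph{ideals}, and under it, $\mathfrak{m}$ is maximal precisely when $R/\mathfrak{m}$ has no nonzero proper ideals.

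First I would isolate an auxiliary fact that packages Example 2.4.5(a) together with its converse: a commutative ring $S$ with unit is a field if and only if the only ideals of $S$ are $(0)$ and $S$. One direction is exactly Example 2.4.5(a). For the converse, given $0 \neq x \in S$, the principal ideal $\langle x \rangle$ is nonzero, hence equals $S$ by hypothesis; so $1 \in \langle x \rangle$ and therefore $1 = rx$ for some $r \in S$, witnessing that $x$ is a unit. Since $x$ was arbitrary, $S^\times = S \setminus \{0\}$, and $S$ is a field.

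With this in hand, the main theorem proceeds in two directions. For ($\Rightarrow$), assume $\mathfrak{m}$ is maximal. The correspondence supplied by Theorem 2.4.13 sends an ideal $J \subseteq R/\mathfrak{m}$ to $\pi^{-1}(J)$, an ideal of $R$ containing $\mathfrak{m}$. By maximality the only such ideals are $\mathfrak{m}$ itself and $R$, which correspond respectively to $(0)$ and $R/\mathfrak{m}$. By the auxiliary fact, $R/\mathfrak{m}$ is a field. For ($\Leftarrow$), assume $R/\mathfrak{m}$ is a field. By the auxiliary fact, $R/\mathfrak{m}$ has only the trivial ideals, so pulling back through the correspondence shows no ideal of $R$ sits strictly between $\mathfrak{m}$ and $R$; hence $\mathfrak{m}$ is maximal.

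The main obstacle is that Theorem 2.4.13 is stated for subrings, not ideals, so I would need to briefly verify that the correspondence restricts as claimed. Concretely, one checks that if $J \subseteq R/\mathfrak{m}$ is an ideal then $\pi^{-1}(J)$ absorbs multiplication by $R$ (immediate, since $\pi(r \cdot a) = \pi(r)\pi(a) \in J$ whenever $\pi(a) \in J$), and conversely that if $I \supseteq \mathfrak{m}$ is an ideal of $R$ then $\pi(I) = I/\mathfrak{m}$ absorbs multiplication by $R/\mathfrak{m}$. Once this refinement of Theorem 2.4.13 is in place, the rest of the argument is a purely formal translation between the two sides of the correspondence.
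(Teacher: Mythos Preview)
Your proposal is correct and follows exactly the approach the paper intends: the paper gives no explicit proof but labels the result an ``immediate corollary'' of the Fourth Isomorphism Theorem, so your argument via the ideal correspondence is precisely the implied one. If anything you are more careful than the paper, since you explicitly verify that the subring correspondence of Theorem~2.4.13 restricts to ideals and supply the characterization of fields as commutative rings with only trivial ideals.
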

This single corollary will play a large role in the formulation of certain categorical and algebraic objects later in the text. 
\begin{proposition}\label{Maximal_ideal_exist}
	Every commutative ring has a maximal ideal. 
\end{proposition}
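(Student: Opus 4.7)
The plan is to apply Zorn's Lemma (Lemma \ref{Zorn's Lemma}) to the poset of proper ideals of $R$ ordered by inclusion, in direct analogy with the proof of Theorem \ref{basis}. Since the paper follows the convention that rings are unital and a proper ideal is one that does not contain $1_R$ (cf. Example \ref{Principal_ideal_example}(c)), we have a clean obstruction to propriety that is preserved under chain unions.

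First I would let $\mathcal{S}$ denote the collection of all proper two-sided ideals of $R$, partially ordered by inclusion. This set is non-empty since $\{0\} \in \mathcal{S}$ (tacitly assuming $R \neq 0$, which is the only case of interest). Next I would take an arbitrary totally ordered chain $\mathcal{C} \subseteq \mathcal{S}$ and form the candidate upper bound
\[ J = \bigcup_{I \in \mathcal{C}} I. \]
I would verify that $J$ is an ideal of $R$ by a routine check: given $x, y \in J$, there exist $I_x, I_y \in \mathcal{C}$ with $x \in I_x$, $y \in I_y$, and by total ordering one of these is contained in the other, say $I_x \subseteq I_y$, so $x - y \in I_y \subseteq J$; closure under multiplication by $R$ is immediate since each $I \in \mathcal{C}$ absorbs $R$-multiplication.

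The step that requires care — and the only place where something could go wrong — is showing $J$ is \emph{proper}, i.e. $J \in \mathcal{S}$. Here I would argue by contradiction: if $J = R$, then $1_R \in J$, so $1_R \in I$ for some $I \in \mathcal{C}$; but by the observation in Example \ref{Principal_ideal_example}(c), any ideal containing $1_R$ equals $R$, contradicting $I \in \mathcal{S}$. Thus $J$ is a proper ideal and an upper bound for $\mathcal{C}$ in $\mathcal{S}$.

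Having verified the hypotheses of Zorn's Lemma, I would conclude that $\mathcal{S}$ contains a maximal element $\mathfrak{m}$, which by Definition \ref{Prime_maximal} is a maximal ideal of $R$. The main obstacle, as indicated, is the propriety check for the chain union — this is the only place the argument uses a nontrivial structural feature of rings (unitality), and without it the union could silently collapse to all of $R$.
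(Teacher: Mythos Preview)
Your proof is correct and follows exactly the same approach as the paper: apply Zorn's Lemma to the poset of proper ideals ordered by inclusion, with the union of a chain as the upper bound. You have in fact supplied more detail than the paper's proof, which omits the verification that the union is an ideal and that it remains proper.
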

\begin{proof}
	Let $\mathcal{P}$ be the set of proper ideals of $R$ ordered by inclusion. Every chain $\mathcal{C}$ in $\mathcal{P}$ has an upper bound, namely $\bigcup_{C\in \mathcal{C}} C.$ This is easily seen to be an ideal. Applying Zorn's lemma, $\mathcal{P}$ has a maximal element, $\lie{m}.$ By definition, $\lie{m}$ is a maximal ideal.  
\end{proof}

\subsection{Commutative Algebra}
This will take up the remainder of this section on rings. The theory developed later in this text relies on the notion of local and noetherian rings. These play a huge role in algebraic geometry and the theory of smooth manifolds. Specifically they form the basis for which sheaves (see Chapter 3) can be built. It is known that understanding sheaves on a space is equivalent to understanding the space itself. Therefore,  to get a better grasp on the geometry later, we will need to understand sheaves. To do so, we start with commutative algebra and build our way up.

\begin{remark} For the remainder of this chapter, all rings are assumed to be commutative and unital. Ideals are two-sided (by Remark \ref{Commutative_Ideal}) and thus $R/I$ can be given the structure of a ring always. 
\end{remark} 
We start with Principal Ideal Domains (defined in Example \ref{Principal_ideal_example}) and their generalization: Unique Factorization Domains and integral domains. 

\begin{definition}
	Let $R$ be a ring. A $\textbf{zero-divisor}$ in $R$ is an element $a\in R$ such that either $ab=0$ or $ba=0$ for some $b\in R.$ A ring with no non-zero zero-divisors is called an \textbf{integral domain}. This amounts to being able to cancel the element from expressions such as \[ ab=cb\implies a=c\] 
	In an integral domain, an element $r$ which is nonzero and not a unit is called \textbf{irreducible} if whenever $r=ab$ then one of $a$ or $b$ is a unit. Otherwise $r$ is \textit{reducible}. An element is called $\textbf{prime}$ if the ideal $\ip{p}$ is a prime ideal in the sense of Definition \ref{Prime_maximal}. An integral domain is a principal ideal domain if every ideal is principal.    
\end{definition}      
Prime elements and irreducible elements are closely related. In most of the examples we have presented, they are in fact the same! The following lemma asserts this  

\begin{lemma}
	In an integral domain, $R$, every prime element is irreducible. If we assume further that $R$ is a principal ideal domain (P.I.D.) , then an element is prime if and only if it is irreducible. 
\end{lemma}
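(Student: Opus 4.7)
The plan is to prove the two directions as advertised. For the first direction, suppose $p \in R$ is prime and write $p = ab$ for some $a, b \in R$; the goal is to show that one of $a, b$ lies in $R^\times$. Since $p \mid ab$ (trivially), primality of $p$ gives (without loss of generality) $p \mid a$, so $a = pc$ for some $c \in R$. Substituting back yields $p = pcb$, i.e.\ $p(1 - cb) = 0$, and here the integral domain hypothesis lets me cancel the nonzero element $p$ to conclude $cb = 1$, so $b \in R^\times$. This shows every prime element is irreducible.

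For the second direction, assume further that $R$ is a PID and let $r$ be irreducible; I want to show $r$ is prime. Suppose $r \mid ab$, so $ab \in \langle r \rangle$; the goal is to show $a \in \langle r \rangle$ or $b \in \langle r \rangle$. The key move is to consider the ideal $I = \langle r, a \rangle$, which by the PID hypothesis is principal, say $I = \langle d \rangle$. Then $d \mid r$, so $r = ds$ for some $s \in R$, and now irreducibility of $r$ forces either $d$ or $s$ to be a unit.

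If $s$ is a unit, then $\langle d \rangle = \langle r \rangle$ (since $d = rs^{-1}$), and from $a \in \langle d \rangle$ we immediately get $r \mid a$. If instead $d$ is a unit, then $I = R$, so there exist $x, y \in R$ with $xr + ya = 1$; multiplying through by $b$ yields $b = xrb + yab$, and since $r$ divides both $xrb$ and (by assumption) $ab$, it divides $b$. Either way, $r$ is prime.

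The routine first direction uses only Definition~2.3 (integral domain cancellation) plus the definition of prime. The expected obstacle is purely bookkeeping in the second direction: choosing the auxiliary ideal $\langle r, a\rangle$ rather than trying to factor $ab$ directly, and then recognizing that the two outcomes of irreducibility (unit $s$ vs.\ unit $d$) correspond exactly to the two alternatives ($r \mid a$ vs.\ $r \mid b$) required by primality. Once the right ideal is written down, the argument is a short Bézout-style calculation.
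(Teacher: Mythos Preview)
Your proof is correct. The first direction (prime $\Rightarrow$ irreducible) is essentially identical to the paper's argument: both write $p=ab$, use primality to get $p\mid a$, substitute, and cancel $p$ using the integral domain hypothesis.

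The second direction, however, takes a genuinely different route. The paper argues that $\langle r\rangle$ is a \emph{maximal} ideal: any ideal $M=\langle m\rangle$ containing $r$ forces $r=mx$, and irreducibility makes $m$ or $x$ a unit, so $M=\langle r\rangle$ or $M=R$. It then invokes the fact that maximal ideals are prime. Your argument is instead a direct B\'ezout-style verification of the primality condition: you form $\langle r,a\rangle=\langle d\rangle$, split on which of $d,s$ is a unit, and in the second case multiply the relation $xr+ya=1$ through by $b$. Your approach is more self-contained, since the paper's proof appeals to ``maximal $\Rightarrow$ prime,'' which has not actually been established at that point in the text. The paper's route is slightly shorter and yields the stronger intermediate conclusion that $\langle r\rangle$ is maximal, but at the cost of that unproven dependency.
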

\begin{proof}
	Assume $p=ab.$ Then by definition $p$ divides $a$ or $p$ divides $b.$ Assume without a loss of generality that $p$ divides $a.$ Then $a=px$ for some $x\in R.$ Then \[ p=pxb\]
	As $R$ is an integral domain it has no zero-divisors and thus $xb=1$ and $b$ is a unit. Hence, $p$ is irreducible. 
	
	Now assume further that $R$ is a $P.I.D.$ we need to show that irreducible elements are prime. Let $r$ be an irreducible element. Suppose that $r\in M$ an ideal of $R.$ By the hypothesis, $M=\ip{m}$ and $r\in \ip{m}\implies r=mx$ for some $x\in R.$ By irreducibility, either $m$ or $x$ is a unit.  Thus either $\ip{m}=\ip{r}$ or $\ip{1}.$ Hence, $\ip{r}$ is a maximal ideal and all maximal. ideals are prime.      
\end{proof}

\begin{definition}
	An integral domain $R$ is called a $\textbf{Unique Factorization Domain}$ if every non-unit element has a unique (up to units) factorization into irreducible elements. 
\end{definition}
Unique factorization is a topic that should be familiar to everyone. It is a standard result in high-school level mathematics that every integer can be written as a product of prime numbers. As $\Z$ is a P.I.D. this agrees with the definition above. The following result puts all of these rings into context with what we have done prior. We shall not prove it as it does not add to the theory. 

\begin{theorem}\label{Inclusion_Rings}
	The following inclusion of integral domains holds: \[ \text{Fields}\subsetneq \text{Principal Ideal Domains} \subsetneq \text{Unique Factorization Domains} \]
\end{theorem}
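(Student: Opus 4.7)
The proof splits into four tasks: the two inclusions and their strictness. The plan is to dispose of the easy pieces quickly and concentrate on the content-bearing implication that every PID is a UFD.

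First I would handle Fields $\subseteq$ PIDs: a field $F$ has only the ideals $(0)$ and $F=(1)$, both principal, as already observed in Example \ref{Principal_ideal_example}(a). Strictness is witnessed by $\Z$, which was noted in Example \ref{Principal_ideal_example}(c) to be a PID but is not a field since $2$ has no multiplicative inverse. For the strictness of PIDs $\subsetneq$ UFDs, I would exhibit $\Z[x]$ as the witness. Granting that $\Z[x]$ is a UFD (this is the standard consequence of Gauss's lemma, which could be inserted or cited), the ideal $(2,x)$ is not principal: if $(2,x)=(f)$, then $f \mid 2$ forces $f \in \{\pm 1,\pm 2\}$, while $f \mid x$ combined with irreducibility of $x$ forces $f$ to be a unit; but $1 \notin (2,x)$ since the constant term of any element of $(2,x)$ is even, a contradiction.

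The heart of the theorem is the inclusion PIDs $\subseteq$ UFDs, which I would split into uniqueness and existence of factorizations. Uniqueness is essentially automatic from the lemma already proved above: in a PID, irreducible equals prime. Given two factorizations $p_1\cdots p_m = q_1\cdots q_n$ into irreducibles, primality of $p_1$ forces $p_1 \mid q_j$ for some $j$; irreducibility of $q_j$ together with the fact that $p_1$ is a non-unit then gives $q_j = u p_1$ for a unit $u$. Cancel $p_1$ using the integral domain hypothesis and induct on $\min(m,n)$.

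Existence is the real obstacle and requires the ascending chain property of PIDs. I would first prove: in a PID, every chain $(a_1) \subseteq (a_2) \subseteq \cdots$ stabilizes. The union $I = \bigcup_k (a_k)$ is an ideal, hence $I = (a)$ for some $a$; this $a$ lies in some $(a_N)$, forcing $(a_N) = (a_{N+k}) = I$ for all $k \geq 0$. I would then argue by contradiction: if some nonzero non-unit $r$ admits no factorization into irreducibles, then $r$ itself is reducible, so $r = r's'$ with neither factor a unit, and at least one of $r',s'$ must also fail to factor (else $r$ would). Call that factor $r_1$; the inclusion $(r) \subsetneq (r_1)$ is strict because the complementary factor is a non-unit in an integral domain. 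Iterating produces a strictly ascending chain of principal ideals, contradicting the stabilization just established.

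The chief difficulty I anticipate is the existence half: one must verify that the strict inclusion $(r) \subsetneq (r_1)$ genuinely holds at every stage (using the integral domain hypothesis to prevent associates from collapsing the chain) and that the recursive choice of a non-factorable factor can always be made. Once the chain condition is in hand, this packaging is routine, and the four pieces assemble into the claimed proper chain of inclusions.
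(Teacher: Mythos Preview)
Your proof is correct and follows the standard route. However, there is nothing to compare it against: the paper explicitly declines to prove this theorem, stating just before the statement that ``We shall not prove it as it does not add to the theory.'' Your argument supplies what the paper omits, and the ingredients you invoke are all available in the surrounding text---the lemma that irreducibles coincide with primes in a PID appears immediately above the theorem, and the fact that $\Z[x]$ is a UFD (via $R$ a UFD $\Rightarrow$ $R[x]$ a UFD) is cited in the example immediately following it. The existence-of-factorization step via the ascending chain condition on principal ideals is the expected move, and your verification that the chain $(r) \subsetneq (r_1) \subsetneq \cdots$ is strictly increasing (using that the complementary factor is a non-unit in a domain) is the only place requiring care; you have handled it correctly.
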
 

\begin{example}
	Let $F$ be a field and consider $F[x]$ the polynomial ring. It is a well known fact that $F[x]$ is a P.I.D. In fact, we can relax the restriction that $F$ is a field and consider $R[x]$ the polynomial ring with coefficients in a ring $R.$ There is a nice result \cite[Theorem 7, Chapter 9.3]{DummitFoote2004} that says that if $R$ is a U.F.D. then so is $R[x].$ Shortly, we shall see that there is another theorem of this variety, Hilbert's Basis Theorem, which asserts that if a ring is Noetherian then so is $R[x].$  
\end{example}

\begin{definition}
	Let $R$ be a ring and $I$ an ideal. We say that $I$ is \textbf{finitely generated} if there exists a finite set $S$ such that $I=\ip{S}.$ We call the ring $R$ \textbf{Noetherian} if every ideal is finitely generated.   
\end{definition}

\begin{theorem}[Hilbert's Basis Theorem] \label{Hilbert_basis_theorem}
	Let $R$ be a noetherian ring. Then $R[x_1,..,x_n]$ is noetherian for any $n\in \N.$ 
\end{theorem}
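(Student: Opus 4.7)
The plan is to prove the theorem by induction on $n$. The base case $n=0$ is immediate, and for the inductive step I would use the identity $R[x_1,\ldots,x_n] \cong R[x_1,\ldots,x_{n-1}][x_n]$, which reduces the theorem to the single-variable case. Thus the real content is showing that if $R$ is noetherian then so is $R[x]$, and this is where I would focus my effort.

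For the case $n=1$, let $I \subseteq R[x]$ be an arbitrary ideal; I need to produce a finite generating set. The key construction is to extract information from $I$ via leading coefficients. For each integer $d \geq 0$, define $J_d \subseteq R$ to be the set of leading coefficients of polynomials in $I$ of degree exactly $d$, together with $0$. A short check (closure under addition uses that if $f,g \in I$ have the same degree with leading coefficients summing to a nonzero element, that element is still a leading coefficient) shows each $J_d$ is an ideal of $R$. Multiplying a degree-$d$ polynomial in $I$ by $x$ shows $J_d \subseteq J_{d+1}$, so $J := \bigcup_d J_d$ is an ideal of $R$.

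Now I would exploit noetherianity of $R$. Since $J$ is finitely generated, pick generators $a_1,\ldots,a_k$ arising as leading coefficients of polynomials $f_1,\ldots,f_k \in I$ with degrees $d_1,\ldots,d_k$. Set $N = \max_i d_i$. For each $d < N$, the ideal $J_d$ is also finitely generated, with generators realized as leading coefficients of polynomials $g_{d,1},\ldots,g_{d,m_d} \in I$ of degree $d$. I claim the finite collection $\{f_i\} \cup \{g_{d,j}\}_{d<N}$ generates $I$.

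The main obstacle is verifying this last claim, and I would handle it by strong induction on degree. Given $f \in I$ with $\deg f = d$ and leading coefficient $c$: if $d \geq N$, write $c = \sum_i r_i a_i$ using generators of $J$, form $f - \sum_i r_i x^{d-d_i} f_i \in I$, and observe its degree is strictly less than $d$; if $d < N$, use the generators of $J_d$ instead to subtract off an $R[x]$-combination of the $g_{d,j}$ and strictly drop the degree. In either case we reduce to a polynomial in $I$ of smaller degree, and by induction (with base case the zero polynomial) $f$ lies in the ideal generated by the chosen finite set. The subtlety worth watching is the bookkeeping when $d \geq N$: one must use $x^{d-d_i} f_i$, not just $f_i$, to match degrees before subtracting, and this is exactly why the construction needs both the "high degree" generators $f_i$ and the "low degree" generators $g_{d,j}$.
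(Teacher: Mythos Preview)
Your proof is correct and is the standard argument for Hilbert's Basis Theorem. The paper itself explicitly omits the proof, stating only that it ``is moderately technical and will be omitted,'' so there is nothing to compare against; your leading-coefficient-ideal approach with the high-degree/low-degree generator split is exactly the classical proof one would expect here.
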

\noindent The proof of this theorem is moderately technical and will be omitted. 

The main reason we consider Noetherian rings is that we have the following proposition which characterizes chains of ideals in $R.$

\begin{proposition}
	A ring $R$ is Noetherian if and only if every ascending chain $I_1\subseteq I_2 \subseteq ... $ of ideals stabilizes: that is there exists $n^*\in \N$ such that for all $n\geq n^*$. we have that $I_n^*=I_n.$  
\end{proposition}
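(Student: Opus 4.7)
The plan is to prove both directions separately, with the forward direction using the union of the chain to produce a finitely generated ideal whose generators must appear by some finite stage, and the reverse direction using contraposition: if some ideal is not finitely generated, we build an explicit non-stabilizing chain.

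For the forward direction, assume $R$ is Noetherian and let $I_1 \subseteq I_2 \subseteq \cdots$ be an ascending chain of ideals. I would first set $I = \bigcup_{n \in \N} I_n$ and verify it is an ideal of $R$: closure under addition works because any two elements lie in a common $I_n$ by the chain condition, and absorption $rI \subseteq I$ is inherited from each $I_n$. By the Noetherian hypothesis, $I = \langle a_1, \ldots, a_k \rangle$ for finitely many generators. Each $a_i$ lies in some $I_{n_i}$, so setting $n^* = \max\{n_1, \ldots, n_k\}$ the chain condition forces $a_i \in I_{n^*}$ for all $i$. Then $I \subseteq I_{n^*} \subseteq I$, and for any $n \geq n^*$ we sandwich $I_{n^*} \subseteq I_n \subseteq I = I_{n^*}$, yielding stabilization.

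For the reverse direction I would prove the contrapositive. Suppose $R$ has an ideal $I$ which is not finitely generated; I will construct a strictly ascending chain. Pick $a_1 \in I$ arbitrarily (nonzero). Having chosen $a_1, \ldots, a_n$, the ideal $\langle a_1, \ldots, a_n \rangle$ is, by construction, finitely generated and hence a proper subset of $I$, so I may select $a_{n+1} \in I \setminus \langle a_1, \ldots, a_n \rangle$. Setting $J_n = \langle a_1, \ldots, a_n \rangle$ gives $J_1 \subsetneq J_2 \subsetneq \cdots$, a non-stabilizing chain, contradicting the ACC hypothesis. Hence every ideal of $R$ is finitely generated, i.e.\ $R$ is Noetherian.

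The only subtle step is the recursive construction in the reverse direction, which implicitly invokes the axiom of (dependent) choice to pick the sequence $\{a_n\}$; this should be mentioned but is standard and compatible with the Zorn's Lemma framework already used earlier in the chapter (Lemma~\ref{Zorn's Lemma}). Everything else is a direct unwinding of definitions, so no real obstacle is expected.
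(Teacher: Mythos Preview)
Your proof is correct and follows essentially the same approach as the paper. The forward direction is identical (union of the chain, finitely many generators, take the maximal index), and your reverse direction via contraposition---building a strictly ascending chain $\langle a_1\rangle \subsetneq \langle a_1,a_2\rangle \subsetneq \cdots$ from a non-finitely-generated ideal---is the same idea the paper uses, though your presentation is slightly more direct: the paper routes the argument through Zorn's Lemma on the poset of finitely generated subideals before arriving at the same chain construction.
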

\begin{proof}
	$(\Rightarrow)$ Let $I_1\subseteq I_2 \subseteq...$ be an ascending chain of ideals. Then by assumption all of these are finitely. generated. Consider $I=\bigcup_{n\in \N} I_n.$ This is also an ideal as any two elements can be taken to be in some higher indexed ideal and thus addition is well defined. Multiplication by $R$ also follows immediately. Thus, $I$ is an ideal and finitely generated by assumption. Let $\{a_1,...,a_n\}$ be a generating set. Then each of these is contained in some $I_j.$ Take $j^*=\max\{j:a_j\in I_j\}$ then all of these elements lie in $I_j^*$ and the chain stabilizes. 
	
	$(\Leftarrow)$ Let $I$ be an ideal and consider the set of all finitely generated ideals contained in $I.$ This has a maximal element $\lie{m}$ by Zorn's Lemma (Lemma \ref{Zorn's Lemma}). We assert that $\lie{m}=I.$ If not, then there would be an ascending chain of ideals which did not stabilize, namely take the generating set $N$ of $I.$ Then we can pick $x_i\in N$ such that $Rx_1\subsetneq Rx_1+Rx_2\subseteq ...  .$ This is an infinite chain of ideals and it does not stabilize. Hence, $I=\lie{m}$ and $I$ is finitely generated. This completes the proof.      
\end{proof}

We will see early on in the next chapter that we can define the notion of Noetherian for topological spaces. We use this notion to relate noetherian rings to a certain topology on $\Spec{R}$ called the Zariski topology. 

Another key class of rings are rings that have a single maximal ideal. 

\begin{definition}
	Let $R$ be a ring. We say that $R$ is a $\textbf{local ring}$ if there exists a single maximal ideal in $R.$ It is customary to denote local rings as $(R,\lie{m})$ or $(R,\lie{m},k)$ where $k=R/\lie{m}$ is called the residue field of $R.$ It is easy to show that $\lie{m}$ is precisely the set of all non-units in $R.$   
 \end{definition}

We shall end this section with a discussion of localization and local rings. This will be related to some geometry in the next chapter. We shall push off giving examples of local rings until the next chapter when they arise quite naturally in the theory of manifolds and schemes.  

\begin{dtheorem}
	A subset $S\subseteq R$ is called \textbf{multiplicative} if $x,y\in S\implies xy\in S.$ We define the \textbf{localization} with respect to the multiplicative set $S$ as the set of symbols \[ S^{-1}R=\left\{ \frac{r}{s} :r\in R,s\in S\right\}/\sim\]
	where $\frac{r}{s}\sim \frac{a}{b}$ if there exists $t\in R$ such that $t(rb-sa)=0.$ 
	If $S=R-\lie{p}$ for some prime ideal $\lie{p}$ then $S^{-1}R=R_\lie{p}$ is a local ring.    
\end{dtheorem}
\begin{proof}
	Using the standard definitions of addition and multiplication of fractions, we see that $S^{-1}R$ is indeed a ring. We now need to show that $R_{\lie{p}}$ is a local ring. We claim that $\lie{p}R_{\lie{p}}$ is a maximal ideal in $R_\lie{p}.$ It is easily shown to be an ideal, and thus it suffices to show maximality. Suppose not, then $\exists I\subseteq R$ an ideal such that $\lie{m}\subsetneq I$ by Proposition $\ref{Maximal_ideal_exist}.$ As $\lie{p}R_\lie{P}$ consists of all non-unit elements, it follows that $I$ must contain a unit. Therefore $I$ contains $1$ and must be $R$ $R_\lie{p}$ itself. Hence, $\lie{p}R_\lie{p}$ is a maximal ideal. Uniqueness follows from the same reason: any other ideal must contain a unit and therefore is the entire ring. Hence, $R_\lie{p}$ is a local ring.    
\end{proof}
This theorem gives the motivation for calling the operation Localization. As will be seen later, prime ideals will be the most important ideals in a ring as they precisely give a bijection between certain bits of geometry and algebra. For now, we shall move on to module theory. 

\section{Modules and Multilinear Algebra}
The final topic of this section is Module theory. This is a generalization of vector spaces over a field as we now allow for the ground space to be a ring. It is far more common to come across modules over rings than vector spaces. For this reason, there is an entire theory of modules and their generalizations to categories which is used extensively in the next chapter. This section draws from $\cite{DummitFoote2004}, \cite{Matsamura1986}, \cite{Rotman2015}, \cite{Lang2002},$ and $\cite{Knapp2006}.$ 
\begin{definition}
	Let $R$ be a ring. An abelian group $M$ is called an \textbf{$R$-module} if there exists an action map $R\times M\to M$ which is associative. We also impose that $1m=m$ for all $m\in M.$ A module is said to be \textit{finitely generated} if there exists a finite set $S$ such that $M=\text{Span}_RS.$ Here we adopt the same notion of span as for vector spaces. A \textbf{module homomorphism} is a linear map which is $R$-equivariant. A module is an \textbf{$R$-algebra} if $M$ also has the structure of a ring. That is, a map $\mu: M\times M\to M$ which is a multiplication and satisfies the axioms of multiplication in a ring. 
\end{definition}

\noindent Let us look at some examples of modules and submodules. 
\begin{example}
	\begin{enumerate}
		\item Let $M=R.$ Then $R$ caries the structure of an $R$-module trivially. Further, every ideal $I$ can be considered as an $R$-submodule. Moreover, we can define $R^n=R\ds R\ds .... \ds R$ as a module by multiplication in each coordinate.    
		\item Let $\varphi:A\to B$ be a ring homomorphism. Then $B$ can be given the structure of an $A$-module by defining $a\cdot b=\varphi(a)b.$ The properties of a ring homomorphism guarantee that this is indeed an action and satisfies the axioms of a module. 
		\item Let $F$ be a field and $V$ a vector space over $F.$ Then $V$ is an $F$ module. In particular, all $F$-modules are vector spaces. This is not true for modules. 
		\item Let $R=\Z$ and let $M=\Z\ds \Z/m\Z.$ Then $M$ is an $R$-module by the multiplication defined by \[ n\cdot (a,[b])=(na,[nb])\] In fact, \textit{any} abelian group $G$ has a natural structure of a $\Z$-module. This comes from the identification of $n\cdot g=g^n$ or $ng$ depending on whether one uses multiplicative or additive notation. 
		\item The polynomial ring $R[x_1,...,x_n]$ is an $R$-module in the obvious way $r\cdot f=rf.$    
	\end{enumerate}
\end{example}
\noindent The main difference between vector spaces and modules is that modules do not always have bases (in fact they rarely have them).  For example, over most rings, the existence of quotient rings implies that modules can actually be very from from just $R^n.$ 

Let us consider some operations on modules. 
\begin{definition}
	Let $M,N$ be $R$-modules. We define \[ M\ds N:=\{ (m,n): m\in M, n\in N \} \]
	to be the \textbf{external direct sum} of modules. For some collection of modules $\{M_i\}$ we can take their direct sum. If the indexing set is infinite we define $\bigoplus_I M_i$ as the tuples with finitely many non-zero elements. The direct sum comes equipped with natural morphisms $M,N\hookrightarrow M\ds N.$ We can also build the  \textbf{internal direct sum} for two submodules of a larger module $P$. In this case we denote the internal direct sum as \[ M+N:=\{m+n:m\in M,n\in N\}\]
	It is an easy exercise to show that these are isomorphic if $M\cap N=0.$       
\end{definition}
The definition here is exactly the same as the one given in Example $\ref{Key_Definitions}.$ 

\subsection{Quotients, Morphisms, and Free Modules} 
For $M$ an $R$-module and $N\subseteq M,$ we can define the quotient module $M/N$ in the same way we defined $V/W$ for vector spaces $W\subseteq V.$ Furthermore, we can realize every submodule as the kernel of some module homomorphism via the short exact sequence \[ 0\to N\to M\to M/N\to 0\]
In fact, short exact sequences play a key role in the theory of modules. For example: 

\begin{dproposition}
	We say a short exact sequence $0\to M\to N\to P\to 0$ \textbf{splits} (on the left) if there exists a morphism $N\to M$ which when composed with the inclusion is the identity. It splits on the right if there exists a morphism $P\to N$ which composes with the projection to be the identity. If a short exact sequence splits on the right, then $N\cong M\ds P.$ This gives a characterization of short split short exact sequences.  
\end{dproposition}
 \begin{proof}
 	Let $i:M\to N$ and $p:N\to P$ be the arrows in the above exact sequence. As we assume the sequences is split, we know $\exists j:P\to N$ such that $p\comp j=1_P.$ We will show that $N=\im i\ds \im j$. Let $n\in N,$ then $n-(j\comp p)(n)\in \ker p$ as $p(n-jpn)=p(n)-p(n)=0.$ By exactness, there exists some $m\in M$ such that $i(m)=n-jpn.$ It follows then that $N$ is the internal direct sum \[ N=\im i +\im j\]
 	We need to prove that $\im i\cap \im j=0.$ Let $a\in M$ and $b\in P$ such that $i(a)=x=j(b).$ Applying $p$ to both sides, we get that $j(b)=x=0.$ Hence, $N\cong M\ds P.$  
 \end{proof}
 
 \begin{corollary}
 If $0\to U\to V\to W\to 0$ is a short exact sequence of vector spaces, then the following conditions are equivalent: \begin{enumerate}
 	\item The sequence splits on the left. 
	\item The sequence splits on the right. 
	\item $V\cong U\ds W.$
 \end{enumerate}
  \end{corollary}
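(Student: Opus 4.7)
The plan is to cycle through the three conditions, leveraging the preceding Definition/Proposition as much as possible rather than redoing the work. The implication (2) $\Rightarrow$ (3) is already handed to us: a right-splitting of the sequence produces an isomorphism $V \cong U \oplus W$ by the argument just given for modules, and vector spaces are modules over a field. So the real content of the corollary is closing the loop among the three conditions.

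First I would verify (3) $\Rightarrow$ (1) and (3) $\Rightarrow$ (2) together, since they are the easy direction. Fix an isomorphism $\varphi : V \to U \oplus W$ compatible with the maps $i$ and $p$, in the sense that $\varphi \circ i$ is the canonical inclusion $U \hookrightarrow U \oplus W$ and the canonical projection $U \oplus W \to W$ equals $p \circ \varphi^{-1}$. Then composing $\varphi$ with the projection $U \oplus W \to U$ gives a map $V \to U$ which is the identity on $U$, so the sequence splits on the left; similarly composing the inclusion $W \hookrightarrow U \oplus W$ with $\varphi^{-1}$ yields a section of $p$, so it splits on the right.

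The main step is (1) $\Rightarrow$ (3), which is the dual of the argument in the preceding proposition. Suppose $r : V \to U$ satisfies $r \circ i = 1_U$. Define
\[ \phi : V \to U \oplus W, \qquad \phi(v) = (r(v), p(v)). \]
This is linear. For injectivity, if $\phi(v) = 0$ then $p(v) = 0$ so by exactness $v = i(u)$ for some $u \in U$, and $r(v) = r(i(u)) = u = 0$, whence $v = 0$. For surjectivity, given $(u,w) \in U \oplus W$, use surjectivity of $p$ to pick $v_0$ with $p(v_0) = w$ and set $v = i(u - r(v_0)) + v_0$; a direct check gives $r(v) = u$ and $p(v) = w$. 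Hence $\phi$ is a vector space isomorphism and (3) holds. Combined with (3) $\Rightarrow$ (2), this also gives (1) $\Rightarrow$ (2).

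I expect no serious obstacle here: once one recognizes that the preceding proposition did the hard direction (a right splitting yields the direct sum decomposition), the remaining arrows are formal manipulations. The only thing to be careful about is the compatibility of $\varphi$ with $i$ and $p$ in (3) $\Rightarrow$ (1),(2), which is implicit in what it means for a short exact sequence to be isomorphic to $0 \to U \to U \oplus W \to W \to 0$; this is worth a sentence of justification but is not substantive. Nothing in the argument uses that the ground ring is a field, so the corollary is really a statement about arbitrary modules, with the vector space setting appearing only because that is the context of this subsection.
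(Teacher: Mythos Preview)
The paper does not supply a proof of this corollary; it is stated immediately after the Definition/Proposition on right splittings and left to the reader. Your argument is correct and more detailed than anything the paper offers: you invoke the preceding proposition for $(2)\Rightarrow(3)$, give a clean dual argument for $(1)\Rightarrow(3)$ via $\phi(v)=(r(v),p(v))$, and read off the remaining implications from the direct-sum structure.

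One caution about your closing remark. You assert that nothing uses the ground ring being a field, so the corollary holds for arbitrary modules. That is true \emph{only} under your reading of (3) as ``the sequence is isomorphic to the standard split sequence $0\to U\to U\oplus W\to W\to 0$,'' which you flag but perhaps underplay. If (3) is read literally as an abstract isomorphism $V\cong U\oplus W$ with no compatibility required, then the equivalence fails over general rings: there exist non-split extensions $0\to A\to B\to C\to 0$ of abelian groups with $B\cong A\oplus C$ abstractly. Over a field the distinction evaporates, because every vector space is free (hence projective), so every short exact sequence of vector spaces splits and all three conditions hold automatically. That field-specific fact is really what makes the corollary safe under either reading, and it is worth saying so rather than suggesting the vector-space hypothesis is idle.
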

	
 \begin{corollary}[Rank-Nullity Theorem]
 	For a linear map $f:V\to W$ between vector spaces, we have that \[\text{rank}(f)+\text{nullity}(f)=\dim V.\] 
 \end{corollary}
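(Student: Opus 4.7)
The plan is to reduce the Rank-Nullity Theorem to a direct sum decomposition produced by the preceding splitting corollary. Given $f:V\to W$, I would first extract from $f$ the short exact sequence of vector spaces
\[
0 \to \ker f \xrightarrow{\iota} V \xrightarrow{\tilde{f}} \im f \to 0,
\]
where $\iota$ is the inclusion and $\tilde{f}$ is $f$ with codomain restricted to $\im f$. Exactness at $\ker f$ is the injectivity of $\iota$, exactness at $\im f$ is the surjectivity of $\tilde{f}$ onto its image, and exactness at $V$ is literally the definition $\ker \tilde{f} = \ker f = \im \iota$. So the sequence is well-formed without any extra work.

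Next I would invoke the previous corollary, which asserts that every short exact sequence of vector spaces splits and hence produces an isomorphism of the middle term with the direct sum of the outer terms. Applying it here yields $V \cong \ker f \oplus \im f$. At this point the theorem is essentially a dimension count: by Example \ref{Key_Definitions}(a), for finite indexing sets the dimension of a direct sum is the sum of the dimensions of its summands, so
\[
\dim V \;=\; \dim(\ker f \oplus \im f) \;=\; \dim \ker f + \dim \im f \;=\; \text{nullity}(f) + \text{rank}(f),
\]
which is the claimed identity.

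The only subtle point, and hence the main obstacle, is ensuring the splitting corollary genuinely applies in this setting. The previous corollary is stated for a general short exact sequence of vector spaces, and its proof ultimately relies on Theorem \ref{basis} (itself resting on Zorn's Lemma \ref{Zorn's Lemma}) to produce a basis of $\im f$ that can be lifted back to $V$; one then defines the right splitting by sending each basis vector to a chosen preimage. Since this machinery is already developed, no new technicalities arise. For a rigorous statement one should assume $\dim V < \infty$ so that the additivity of dimension over finite direct sums is unambiguous; with that assumption the identity above is the full content of the Rank-Nullity Theorem.
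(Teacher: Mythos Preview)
Your proposal is correct and follows the same route as the paper: the paper's proof consists solely of the line ``Set up the following exact sequence $0\to \ker f \to V\to \im(f)\to 0$,'' leaving the invocation of the splitting corollary and the dimension count implicit. You have simply spelled out those implicit steps, including the observation that the splitting ultimately rests on Theorem~\ref{basis}, so there is nothing to correct.
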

 \begin{proof}
 	Set up the following exact sequence \[ 0\to \ker f \to V\to \im(f)\to 0\]
 \end{proof} 
\noindent Furthermore, combining this result with the First Isomorphism Theorem (\ref{Basic_Linear}), we see that every short exact sequence of finite dimensional vector spaces splits! We would like an analogous theorem for modules, but this cannot happen by the following example: 

\begin{example}
	Consider the exact sequence of $\Z$-modules \[ 0\to \Z\to \Q\to \Q/\Z \to 0\]
	This is exact because it takes the form of $0\to \ker f\to M\to f(M)\to 0$ for some module homomorphism. To see it does not split, we need to show that \[ \Q\not\cong \Z\ds \Q/\Z\]
	An easy calculation shows that $\Q/Z$ has elements of arbitrary finite order and therefore cannot be a direct summand of $\Q.$ 
\end{example}
\noindent In fact, it is a fairly standard exercise to show that an exact sequence of $R$-modules splits if and only if the middle term is a direct sum of the first and third terms. 

We now move on to some theorems for modules which we have seen before. 
 
 \begin{theorem}[First and Fourth Isomorphism Theorems]\label{Module_Isos} \text{}
 	\begin{enumerate}
		\item Let $\varphi:M\to N$ be an $R$-module homomorphism. Then $M/\ker \varphi \cong \varphi(M).$
		\item Let $N\subseteq M.$ Then the submodules of $M/N$ are in one-to-one correspondence with the submodules of $M$ containing $N.$ 
	\end{enumerate}
 \end{theorem}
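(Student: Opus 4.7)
The plan is to mirror the proofs already given for rings (Proposition \ref{First_Iso_Ring}) and for the ring version of the Fourth Isomorphism Theorem (Theorem \ref{Fourth_Iso_Rings}), upgrading each step to account for the additional $R$-action on modules. Since modules are abelian groups with a compatible scalar multiplication, the additive/group-theoretic content of both statements is already essentially established by the proofs for rings; the essential new verification is $R$-equivariance at each stage.

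For part (a), I would set up the diagram
\[\begin{tikzcd}
M \arrow[r,"\varphi"] \arrow[d,swap,"q"] & \varphi(M)\\
M/\ker\varphi \arrow[ur,swap,dashed,"\hat{\varphi}"] &
\end{tikzcd}\]
and define $\hat{\varphi}([m]) = \varphi(m)$. Well-definedness follows because if $[m]=[m']$ then $m-m'\in \ker\varphi$, so $\varphi(m)=\varphi(m')$. Additivity is inherited from the group-theoretic proof of Theorem \ref{First Iso}; the new check is $R$-equivariance, namely $\hat{\varphi}(r[m]) = \hat{\varphi}([rm]) = \varphi(rm) = r\varphi(m) = r\hat{\varphi}([m])$, which is immediate from $\varphi$ being an $R$-module map. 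Surjectivity is automatic from the factorization, and injectivity is the computation $\ker\hat{\varphi} = \{[m] : \varphi(m)=0\} = \ker\varphi/\ker\varphi = 0$, whence Theorem \ref{Basic_Linear}(b) (or its analog for modules) gives $\hat{\varphi}$ is injective.

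For part (b), let $\pi:M\to M/N$ be the canonical projection. I would exhibit mutually inverse maps
\[\Phi: \{P : N\subseteq P\subseteq M\text{ submodule}\} \longleftrightarrow \{Q \subseteq M/N \text{ submodule}\} : \Psi,\]
by $\Phi(P) = \pi(P) = P/N$ and $\Psi(Q) = \pi^{-1}(Q)$. That $\Phi(P)$ is a submodule of $M/N$ and that $\Psi(Q)$ is a submodule of $M$ containing $N$ both follow from the fact that $\pi$ is an $R$-module homomorphism: sums, additive inverses, and scalar multiples are preserved under images and preimages of submodules, and $N = \pi^{-1}(0) \subseteq \pi^{-1}(Q)$. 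To show $\Phi$ and $\Psi$ are inverse, I would verify $\pi(\pi^{-1}(Q)) = Q$ from surjectivity of $\pi$, and $\pi^{-1}(\pi(P)) = P$ using the hypothesis $N\subseteq P$: the nontrivial inclusion is that if $\pi(m)\in \pi(P)$ then $m = p + n$ for some $p\in P, n\in N$, and since $N\subseteq P$ we conclude $m\in P$.

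The main obstacle, such as it is, lies in the last equality $\pi^{-1}(\pi(P)) = P$: without the assumption $N\subseteq P$, one only obtains $\pi^{-1}(\pi(P)) = P + N$, which is why the correspondence is indexed by submodules \emph{containing} $N$ rather than arbitrary submodules of $M$. Everything else is bookkeeping, so the argument is essentially the ring-theoretic proof of Theorem \ref{Fourth_Iso_Rings} with ideals replaced by submodules and left/right multiplication by $R$ verified in place of the ring multiplication check.
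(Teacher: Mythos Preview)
Your proposal is correct and takes essentially the same approach as the paper, which simply states that the proof follows immediately from the ring case (Proposition~\ref{First_Iso_Ring} and Theorem~\ref{Fourth_Iso_Rings}). In fact you supply considerably more detail than the paper does, explicitly carrying out the $R$-equivariance checks and the inverse-correspondence verification that the paper leaves implicit.
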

 \begin{proof}
 	The proof of this follows immediately from the proof for the ring case: Theorems \ref{First_Iso_Ring} and \ref{Fourth_Iso_Rings}.
 \end{proof}
It should be no surprise at this point that this theorem is true. After all, we can regard modules as abelian groups and the result held true there. The only thing needing to be checked is $R$-equivariance, but this follows immediately from the definitions. 

\begin{definition}
	We call an $R$-module \textbf{free} if $M\cong R^n=R\ds R\ds ...\ds R$ for some $n\in \N.$ We say that $M$ is \textbf{finitely presented} if there exists a short exact sequence \[ 0\to K\to F\to M\to 0\] such that $K,F$ are free, finitely generated $R$-modules. For any set $S$ we can build the free $R$-module $R\ip{S}$ with basis $S.$ The following theorem gives a universal property for such modules.  
\end{definition}
\begin{theorem}[Universal Property of Free Modules]
	Let $S$ be a set and $M$ an $R-$module such that there exists a map $\varphi:S\to M.$ Then there exists a unique $R$-module homomorphism $\hat{\varphi}:R\ip{S}\to M$ such that the following diagram commutes: \[ \begin{tikzcd}
		S \arrow[r,"\varphi"] \arrow[d,swap,"i"] & M \\
		R\ip{S} \arrow[ur,swap,"\hat{\varphi}"] &
	\end{tikzcd}\]
\end{theorem}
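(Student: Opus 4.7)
The plan is to mimic the analogous proof for vector spaces (Theorem 2.3.35 in the text): define $\hat{\varphi}$ on the basis $S$ and then extend by $R$-linearity, obtaining a well-defined $R$-module homomorphism, and then use the generating property of $S$ to obtain uniqueness. The inclusion $i: S \hookrightarrow R\langle S\rangle$ sends each $s\in S$ to the formal symbol $1\cdot s$, and by definition of the free module on $S$, every element $x\in R\langle S\rangle$ can be written uniquely as a finite $R$-linear combination $x=\sum_{s\in S} r_s s$ with only finitely many $r_s\neq 0$.

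First I would \emph{define} the map $\hat{\varphi}:R\langle S\rangle\to M$ on a general element by
\[
\hat{\varphi}\!\left(\sum_{s\in S} r_s s\right)=\sum_{s\in S} r_s\,\varphi(s),
\]
where the sum on the right makes sense since only finitely many terms are nonzero and $M$ is closed under addition and the $R$-action. Because the representation $x=\sum r_s s$ is unique (this is precisely the content of $S$ being a basis of the free module $R\langle S\rangle$), this prescription is well-defined. Commutativity of the diagram is immediate: for $s\in S$, $\hat{\varphi}(i(s))=\hat{\varphi}(1\cdot s)=1\cdot\varphi(s)=\varphi(s)$.

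Next I would verify that $\hat{\varphi}$ is an $R$-module homomorphism. Given $x=\sum r_s s$, $y=\sum t_s s$ in $R\langle S\rangle$ and $a\in R$, we compute
\[
\hat{\varphi}(ax+y)=\hat{\varphi}\!\left(\sum (ar_s+t_s)s\right)=\sum(ar_s+t_s)\varphi(s)=a\sum r_s\varphi(s)+\sum t_s\varphi(s)=a\hat{\varphi}(x)+\hat{\varphi}(y),
\]
where the middle equality uses the definition and the last uses the $R$-module axioms in $M$.

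Finally, for uniqueness, suppose $\psi:R\langle S\rangle\to M$ is any $R$-module homomorphism with $\psi\circ i=\varphi$. Then for any $x=\sum r_s s\in R\langle S\rangle$, $R$-linearity of $\psi$ forces
\[
\psi(x)=\sum r_s\,\psi(s)=\sum r_s\,\varphi(s)=\hat{\varphi}(x),
\]
so $\psi=\hat{\varphi}$. The main (and essentially only) obstacle is justifying well-definedness in the first step: this rests on the assertion that $S$ is a basis of $R\langle S\rangle$, which is built into the construction of the free module but is worth stating explicitly, since the analogous statement fails for a generating set that is not a basis (as it does for generic $R$-modules, where an expression in terms of generators need not be unique).
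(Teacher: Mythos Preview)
Your proof is correct and is the standard argument: define $\hat{\varphi}$ on the basis $S$, extend by $R$-linearity (well-definedness following from uniqueness of the representation $x=\sum r_s s$), check it is an $R$-module homomorphism, and deduce uniqueness from the fact that $S$ generates $R\langle S\rangle$. The paper itself states this theorem without proof (just as it did for the analogous vector-space statement, Theorem~\ref{UMP_VS}, which it deferred as ``an exercise in Category theory''), so there is nothing to compare against; your write-up fills the gap cleanly.
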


\noindent The idea of finitely presented modules becomes important for the theory of sheaves which will be developed at the end of the next chapter. For now, we have a fundamental result on modules over a principal ideal domain. 

\begin{theorem}[Fundamental Theorem of Finitely Generated Modules over a P.I.D.] \label{Modules_over_PIDs} 
	Let $R$ be a P.I.D. and $M$ a finitely generated $R-$module. Then \[ M\cong R^k\ds R/\ip{r_1}\ds R/\ip{r_2}\ds...\ds R/\ip{r_n} \] for some non-unit elements $\{r_i\}. $
\end{theorem}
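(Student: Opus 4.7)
The plan is to realize $M$ as a quotient of a free module and then diagonalize the inclusion of the relations submodule via a Smith Normal Form argument. First, since $M$ is finitely generated by, say, $m_1, \ldots, m_n \in M$, the universal property of free modules provides a surjection $\pi \colon R^n \twoheadrightarrow M$ sending the standard basis element $e_i$ to $m_i$. Setting $K = \ker \pi$, Theorem \ref{Module_Isos} gives $M \cong R^n / K$, so the problem reduces to understanding the inclusion $K \hookrightarrow R^n$.

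Before diagonalizing, I would first establish the auxiliary fact that any submodule of $R^n$ is itself free of rank at most $n$. The proof is by induction on $n$. For $n=1$, a submodule of $R$ is an ideal, hence principal (since $R$ is a PID), and any nonzero principal ideal $\ip{a}$ inside the integral domain $R$ is isomorphic to $R$ via $r \mapsto ra$. For the inductive step, consider the coordinate projection $\rho \colon R^n \to R$ onto the last factor. Then $\rho(K) = \ip{d}$ is principal, yielding a short exact sequence $0 \to K \cap R^{n-1} \to K \to \ip{d} \to 0$; since $\ip{d}$ is free, the splitting criterion established earlier in the text forces this sequence to split, and the inductive hypothesis applied to $K \cap R^{n-1}$ gives the freeness of $K$.

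The main obstacle, and the heart of the proof, is the Smith Normal Form: I claim there exist bases $\{e_1', \ldots, e_n'\}$ of $R^n$ and $\{f_1, \ldots, f_m\}$ of $K$ (with $m \leq n$), together with nonzero elements $d_1 \mid d_2 \mid \cdots \mid d_m$ in $R$, such that $f_i = d_i e_i'$. My approach would be to choose a nonzero element of $K$ whose \emph{content} (the ideal generated by its coordinates in some basis of $R^n$) is maximal among all such contents; such a maximum exists because $R$ is Noetherian, every ideal being principal and hence finitely generated. Calling this maximal content $\ip{d_1}$, one uses the PID structure to perform row-and-column-type basis changes (elements of $GL_n(R)$ on $R^n$ and of $GL_m(R)$ on $K$) that split off an $R\cdot (d_1 e_1')$ direct summand, after which one inducts on $n$ to handle the complementary summand. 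The divisibility relation $d_1 \mid d_2$ falls out of the maximality of $\ip{d_1}$: were some entry not divisible by $d_1$, one could use the Bezout-type identity available in a PID to produce an element of $K$ with strictly larger content, a contradiction.

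With the normal form in hand, $K = \bigoplus_{i=1}^m R\,d_i e_i'$ sits diagonally inside $R^n = \bigoplus_{i=1}^n R e_i'$, so
\[M \cong R^n/K \cong R^{n-m} \oplus \bigoplus_{i=1}^m R/\ip{d_i}.\]
Any factor $R/\ip{d_i}$ with $d_i$ a unit is zero and may be discarded. Setting $k = n - m$ and relabeling the remaining non-unit $d_i$ as $r_1, \ldots, r_n$, we obtain the claimed decomposition.
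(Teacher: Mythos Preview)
Your proof is correct and follows the same overall strategy as the paper: surject a free module $R^n$ onto $M$, invoke Theorem~\ref{Module_Isos} to identify $M \cong R^n/K$, and then analyze the kernel. The difference is one of completeness. The paper simply \emph{asserts} that $\ker\varphi$ is generated by elements of the form $r_1 x_1,\ldots,r_m x_m$ with respect to the original basis $x_1,\ldots,x_n$ of $R^n$, and then reads off the quotient term by term. That assertion is exactly the nontrivial content of the theorem: in general the kernel is not diagonal in the \emph{given} basis, and one must change bases on both $R^n$ and $K$ to achieve it. Your argument supplies precisely this missing step, first showing $K$ is free (by induction using the splitting of free quotients) and then running a Smith Normal Form argument via maximal content to produce the compatible bases with $f_i = d_i e_i'$. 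So both proofs share the same skeleton, but yours is the one that actually justifies the key structural claim; the paper's version should be read as a sketch that presupposes the normal-form result you prove.
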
 
\begin{proof}
	We will show existence of such a decomposition. Let $v_1,...,v_n$ be a generating set for $M$ and consider $R\ip{x_1,...,x_n}$ the free module on the same number of generators. It is obvious that there is a homomorphism $\varphi:R^n\to M$ which sends $x_i\mapsto v_i.$ This homomorphism is surjective by construction. Therefore, by Theorem \ref{Module_Isos}, we have that $M\cong R\ip{S}/\ker \varphi.$ We know then that $r_1x_1,...,r_mx_m$ is a generating set for $\ker \varphi$ and therefore $\ker \varphi=\bigoplus_{i\leq m} Rr_ix_i.$ Taking the quotient, we get that \[ M\cong \bigoplus_{i\leq n} Rx_i /\bigoplus_{i\leq m}  Rr_ix_i=\bigoplus \left( Rx_i./R r_ix_i \right) \ds R^{n-m} \]
	The terms of the direct sum become $R/\ip{r_i}$ under the natural identification. Hence, \[ M\cong R^{n-m}\ds R/\ip{r_1}\ds R/\ip{r_2}\ds...\ds R/\ip{r_n}\] 
\end{proof}
As $\Z$ is a principal ideal domain, this applies to abelian groups as well. As will be seen in the next chapter, this theorem becomes incredibly important in homology theory for finite cell complexes. 
	
\subsection{Multilinear algebra} 
The final part of this chapter will concern multilinear algebra. In this section we introduce an other operation on modules which gives another way of building new modules from old ones, the tensor product, and discuss related topics such as exterior powers of modules and vector spaces. 

For this section, let $R$ be a ring and $L, M, N$ $R$-modules. Further, let $S\supseteq R$ be a ring containing $R.$  
\begin{definition}
	We call a function $\theta:M\ds N\to L$ \textbf{bilinear} if it is linear in each argument. In general, we have multilinear functions $f:\bigoplus M_i\to L$ which are linear in each argument.  
\end{definition} 
It should not come as surprising that multilinear functions are a bit more difficult to deal with than linear functions. There is a way to convert between the two, but this involves a new module. 
\begin{definition}[Definition/Construction]
	Let $F(M\times N)$ denote the free $R$-module generated by $M\times N.$ Consider the submodule $G$ which is generated by the relations \begin{align*}
		(a,b)&\sim (a,b)\\
		(a+a',b)&\sim(a,b)+(a',b)\\
		(a,b+b')&\sim(a,b)+(a,b')\\
		(ar,b)&\sim (a,rb)\;\;\;\; r\in R
	\end{align*}
We define $M\tensor_RN=F(M\times N)/G.$ As $R$ is commutative, we have that $M\tensor_RN$ is an $R-$module with multiplication defined by the final relation of $G.$	
There is a canonical map \[ \tensor:V\times W\to V\tensor W\] which sends $(m,n)\mapsto m\tensor n.$ Elements of $M\tensor_RN$ are sums of the formal symbols $m\tensor n$ which is called a $\textbf{simple tensor}$ 
\end{definition}

\begin{theorem}[Universal Property]
	For every bilinear map $\varphi:M\times N\to L$ there exists a unique linear map $\hat{\varphi}:M\tensor N\to L$ such that $\varphi=\hat{\varphi}\comp \tensor.$
\end{theorem}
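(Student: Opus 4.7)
The plan is to leverage the Universal Property of Free Modules together with the explicit construction of $M\tensor_R N$ as a quotient $F(M\times N)/G$. Given a bilinear map $\varphi:M\times N\to L$, I would first extend $\varphi$ to the free module, then show the extension factors through the quotient by $G$, and finally verify uniqueness from the fact that simple tensors generate $M\tensor_R N$.

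First I would invoke the Universal Property of Free Modules with $S=M\times N$, the set underlying the free module, and the function $\varphi:M\times N\to L$. This produces a unique $R$-linear map $\tilde{\varphi}:F(M\times N)\to L$ such that $\tilde{\varphi}\comp i=\varphi$, where $i$ is the canonical inclusion of $M\times N$ into $F(M\times N)$. So far no use of bilinearity has been made — only that $\varphi$ is a set-function.

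Next I would use the hypothesis of bilinearity to show that $\tilde{\varphi}$ kills the submodule $G\subseteq F(M\times N)$ generated by the relations in the construction. Since $\tilde{\varphi}$ is $R$-linear, it suffices to check its value on each generator of $G$: for instance, $\tilde{\varphi}\bigl((a+a',b)-(a,b)-(a',b)\bigr)=\varphi(a+a',b)-\varphi(a,b)-\varphi(a',b)=0$ by linearity of $\varphi$ in the first argument, and similarly $\tilde{\varphi}\bigl((ar,b)-(a,rb)\bigr)=\varphi(ar,b)-\varphi(a,rb)=0$ by $R$-bilinearity. Hence $G\subseteq \ker\tilde{\varphi}$, so by Theorem \ref{Module_Isos}(b) (the fourth isomorphism correspondence, or equivalently the standard factoring of homomorphisms through quotients) $\tilde{\varphi}$ descends to a well-defined $R$-linear map $\hat{\varphi}:M\tensor_R N\to L$ given on simple tensors by $\hat{\varphi}(m\tensor n)=\varphi(m,n)$. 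By construction $\hat{\varphi}\comp \tensor=\varphi$.

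For uniqueness, suppose $\psi:M\tensor_R N\to L$ is another $R$-linear map satisfying $\psi\comp\tensor=\varphi$. Then $\psi$ and $\hat{\varphi}$ agree on every simple tensor $m\tensor n$; since simple tensors generate $M\tensor_R N$ as an $R$-module and both maps are $R$-linear, $\psi=\hat{\varphi}$. The main obstacle, if any, is purely bookkeeping: one must be careful that the relations defining $G$ are exactly the ones needed for a bilinear map to descend, and that the extension step and the factoring step are logically separate. There is no deeper difficulty — the statement is essentially tautological once the quotient construction is in hand, which is precisely why this universal property is the ``correct'' definition of the tensor product.
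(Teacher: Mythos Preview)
Your argument is correct and is the standard one: extend $\varphi$ to the free module via the universal property of free modules, check that bilinearity forces the extension to vanish on the generators of $G$, factor through the quotient, and deduce uniqueness from the fact that simple tensors generate. The paper itself does not supply a proof of this theorem---it is stated and then immediately followed by the remark that the universal property is often taken as the \emph{definition} of the tensor product---so there is no alternative argument to compare yours against. One small nit: the result you invoke to descend $\tilde{\varphi}$ to the quotient is really the universal property of quotient modules (if $G\subseteq\ker\tilde{\varphi}$ then $\tilde{\varphi}$ factors uniquely through $F/G$), which is closer in spirit to the first isomorphism theorem than the fourth; your parenthetical already acknowledges this, but it would be cleaner to cite it directly rather than routing through the lattice correspondence.
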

This theorem is sometimes given as the definition of the tensor product as it implies the tensor product is unique up to isomorphism. The nice part of this theorem is it gives a bijection \[ \text{Bil}(M,N,L)\cong \Hom_R(M\tensor N,L) \]
where $\text{BIl}(M,N,L)$ is the set of bilinear maps $M\times N\to L.$ Therefore, we can turn multilinear functions into linear ones by using the appropriate number of tensors. This is also true for all arbitrary collections of modules and multilinear maps.   
 
Let us look at some immediate applications of tensor products. 
\begin{proposition}
	Let $V,W$ be finite dimensional vector spaces (or $R$-modules) over $K$ a field. Then there is a non-canonical isomorphism \[  V^*\tensor W\cong \Hom_K(V,W)\]
	which sends $\Pi(\varphi\tensor w)=\varphi(-)w.$ 
\end{proposition}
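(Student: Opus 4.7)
The plan is to build $\Pi$ using the universal property of the tensor product and then verify bijectivity by a basis argument. First I would define the map $\beta: V^* \times W \to \Hom_K(V, W)$ by $\beta(\varphi, w)(v) = \varphi(v) w$. Linearity in each argument is immediate from the pointwise $K$-vector space structure on $\Hom_K(V, W)$: for $\varphi, \psi \in V^*$, $w \in W$, $k \in K$, and $v \in V$, one computes $\beta(k\varphi + \psi, w)(v) = (k\varphi(v) + \psi(v))w = k \beta(\varphi,w)(v) + \beta(\psi, w)(v)$, and similarly on the $W$-slot. Since $\beta$ is bilinear, the universal property of the tensor product produces a unique linear map $\Pi: V^* \tensor W \to \Hom_K(V, W)$ with $\Pi(\varphi \tensor w)(v) = \varphi(v) w$, which is precisely the map in the statement.

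Next I would show $\Pi$ is an isomorphism by comparing bases. Fix a basis $\{e_1, \ldots, e_n\}$ of $V$ with dual basis $\{e_1^*, \ldots, e_n^*\}$ of $V^*$ defined by $e_i^*(e_k) = \delta_{ik}$, and a basis $\{f_1, \ldots, f_m\}$ of $W$. I claim $\{e_i^* \tensor f_j\}_{i,j}$ is a basis of $V^* \tensor W$. Spanning is straightforward: any simple tensor $\varphi \tensor w$ expands via $\varphi = \sum_i \varphi(e_i) e_i^*$ and $w = \sum_j c_j f_j$ into a $K$-linear combination of the $e_i^* \tensor f_j$ using bilinearity of $\tensor$, and simple tensors span by construction. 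For linear independence, given a relation $\sum a_{ij} e_i^* \tensor f_j = 0$, I apply $\Pi$ and evaluate at $e_k$: the resulting element of $W$ is $\sum_j a_{kj} f_j$, which must be zero, so $a_{kj} = 0$ for all $k, j$ by linear independence of $\{f_j\}$.

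Now the image under $\Pi$ of the basis element $e_i^* \tensor f_j$ is the unique linear map $V \to W$ sending $e_k \mapsto \delta_{ik} f_j$ for each $k$. These are (up to relabeling of indices) exactly the basis $\{f_{ji}\}$ of $\Hom_K(V, W)$ built in Theorem \ref{Matrix_of_Linear} via Theorem \ref{UMP_VS}. Since $\Pi$ carries a basis of the $nm$-dimensional space $V^* \tensor W$ to a basis of the $nm$-dimensional space $\Hom_K(V, W)$, it is an isomorphism by the usual extension-by-linearity argument (equivalently, apply Theorem \ref{Basic_Linear}(d) once one knows the dimensions agree and $\Pi$ is surjective).

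The main obstacle is the linear independence step for $\{e_i^* \tensor f_j\}$: because $V^* \tensor W$ is defined as the quotient $F(V^* \times W)/G$, one cannot read off independence of tensors directly from independence of the factors. The argument above sidesteps this by pushing the putative relation through $\Pi$ itself and extracting coefficients via evaluation at basis vectors of $V$, which reduces the question to independence in $W$ — a structure we already control. Everything else is routine bookkeeping with the universal property.
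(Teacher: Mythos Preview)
Your proof is correct but takes a different route from the paper. The paper first records the auxiliary lemma $\dim(V\tensor W)=\dim V\cdot\dim W$ (by exhibiting the simple tensors $b_i\tensor c_j$ as a basis), combines it with $\dim\Hom_K(V,W)=\dim V\cdot\dim W$, and then argues that it suffices to show $\Pi$ is injective, invoking Theorem~\ref{Basic_Linear}(d). Its injectivity check, however, only considers a \emph{simple} tensor $\phi\tensor w$ in the kernel and shows $\phi=0$ or $w=0$; strictly speaking this does not rule out nonzero sums of simple tensors lying in $\ker\Pi$, so the paper's argument as written is incomplete at exactly the point you flagged as the ``main obstacle.'' Your approach sidesteps this: you establish spanning of $\{e_i^*\tensor f_j\}$ directly, then push a dependence relation through $\Pi$ and evaluate at $e_k$ to extract $\sum_j a_{kj}f_j=0$, reducing to independence in $W$. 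This simultaneously proves the dimension lemma and shows $\Pi$ carries a basis to the basis $\{f_{ji}\}$ of Theorem~\ref{Matrix_of_Linear}, giving the isomorphism in one stroke. The paper's strategy is shorter in outline but your basis-to-basis argument is the one that actually closes the gap.
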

\begin{lemma}
	For $V,W$ as above, $\dim V\tensor W=\dim V\cdot \dim W.$ 
\end{lemma}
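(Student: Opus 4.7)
The plan is to exhibit an explicit basis for $V \tensor W$ of cardinality $\dim V \cdot \dim W$. Fix bases $\{e_1,\ldots,e_n\}$ of $V$ and $\{f_1,\ldots,f_m\}$ of $W$, and put $\mathcal{B}=\{e_i \tensor f_j : 1\leq i\leq n,\,1\leq j\leq m\}$. I will show first that $\mathcal{B}$ spans, and then that $\mathcal{B}$ is linearly independent; since $|\mathcal{B}|=nm$, the result will follow by the definition of dimension.

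First I would handle spanning. A general element of $V\tensor W$ is a finite sum of simple tensors $v\tensor w$. Writing $v=\sum_i a_i e_i$ and $w=\sum_j b_j f_j$ with $a_i,b_j\in K$, and applying the bilinearity relations imposed in the construction of the tensor product (the relations defining the submodule $G$), we get
\[
v\tensor w \;=\; \sum_{i,j} a_i b_j\,(e_i\tensor f_j),
\]
so $v\tensor w\in \text{Span}_K \mathcal{B}$. Since any element of $V\tensor W$ is a sum of such simple tensors, $\mathcal{B}$ spans.

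The main obstacle is linear independence, since an apparent relation $\sum c_{ij}(e_i\tensor f_j)=0$ could in principle arise from the quotient by $G$, and one cannot argue directly inside the free module $F(V\times W)$. The standard trick is to invoke the universal property to detect nontrivial coefficients. For each fixed pair of indices $(k,\ell)$, let $e_k^*\in V^*$ and $f_\ell^*\in W^*$ be the dual basis vectors, and define a bilinear map
\[
\varphi_{k\ell}: V\times W \to K, \qquad \varphi_{k\ell}(v,w) = e_k^*(v)\, f_\ell^*(w).
\]
Bilinearity is clear from the linearity of $e_k^*$ and $f_\ell^*$, so the universal property of the tensor product furnishes a unique linear map $\hat\varphi_{k\ell}: V\tensor W\to K$ satisfying $\hat\varphi_{k\ell}(e_i\tensor f_j) = \delta_{ik}\delta_{j\ell}$.

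Now suppose $\sum_{i,j} c_{ij}(e_i\tensor f_j)=0$ in $V\tensor W$. Applying $\hat\varphi_{k\ell}$ to both sides yields $c_{k\ell}=0$, and since $(k,\ell)$ was arbitrary, every coefficient vanishes. Hence $\mathcal{B}$ is linearly independent and therefore a basis, giving $\dim(V\tensor W)=nm=\dim V\cdot \dim W$. The same argument works verbatim for free modules of finite rank over a commutative ring $R$, replacing dual bases by coordinate functionals on $R^n$ and $R^m$.
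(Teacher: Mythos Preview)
Your proof is correct and follows the same approach as the paper: both identify $\{e_i\tensor f_j\}$ as a basis of $V\tensor W$. In fact your argument is more complete than the paper's, which simply asserts this set is a basis without justifying linear independence; your use of the universal property with the bilinear forms $\varphi_{k\ell}$ fills in exactly the step the paper glosses over.
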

\begin{proof}
	Let $B,C$ be bases for $V$ and $W$ respectively. Then we can pick as a basis for $V\tensor W,$ the set of simple tensors $b_i\tensor c_j$ for $b_i\in B$ and $c_j\in C.$ There are $|B|\times |C|$ of these. 
\end{proof}

\begin{proof}[Proof of Proposition 2.2.15]
	It suffices to prove that this map is injective as the dimension of the spaces are the same, we can apply Theorem \ref{Basic_Linear}(d). So, let $\phi\tensor w\in \ker \Pi.$ Then $\Pi(\phi\tensor w)(v)=\phi(v)w=0$ for all $v\in V.$ If $w=0$ we are done. So assume $w\neq 0.$ Then we know that $\phi(v)=0$ for all $v\in V.$ Therefore, by the uniqueness of $0\in V^*,$ we have that $\phi=0.$ This completes the proof.   
 \end{proof}
\begin{remark}
	Notice that the map itself is canonical, but the choice of basis is not. In general, for simple tensors $\varphi\tensor w\in V^*\tensor W$ the map is canonical, we need bases to extend this map to the entire tensor product.  
\end{remark}

Sometimes it is useful to consider the module $M$ as an $S$-module instead of an $R$-module.
 The following lemma gives a way to do such a thing. 
\begin{lemma}
	We can \textbf{extend scalars} from $R$ to $S$ by taking \[ M\mapsto M\tensor_R S\]
	Where the module structure on $S$ is given by the inclusion map. This is an $S$ module.   
\end{lemma}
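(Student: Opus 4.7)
The plan is to produce an $S$-action on $M \otimes_R S$ by exploiting the universal property of the tensor product, and then verify the module axioms on simple tensors and extend by linearity. The target formula is the expected one: $s \cdot (m \otimes s') = m \otimes (ss')$ for $s, s' \in S$ and $m \in M$, extended additively. The entire issue is well-definedness, since a general element of $M \otimes_R S$ is only a finite sum of simple tensors modulo the relations defining $G$, so I cannot just declare the formula on simple tensors and call it a day.

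First I would fix $s \in S$ and consider the map $\beta_s : M \times S \to M \otimes_R S$ sending $(m, s') \mapsto m \otimes (s s')$. I would check that $\beta_s$ is $R$-bilinear: additivity in each slot is immediate from additivity of $\otimes$, and the key $R$-balance condition $\beta_s(mr, s') = \beta_s(m, r s')$ follows from the fact that the inclusion $R \hookrightarrow S$ is a ring homomorphism (so $r s' \in S$ and $s(rs') = r(ss')$, which lets us pull $r$ across the tensor via the defining relation $(ar, b) \sim (a, rb)$). By the universal property of $\otimes_R$, $\beta_s$ factors through a unique $R$-linear map $\mu_s : M \otimes_R S \to M \otimes_R S$ with $\mu_s(m \otimes s') = m \otimes (ss')$.

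Next I would define the $S$-action by $s \cdot x := \mu_s(x)$ for $x \in M \otimes_R S$, and verify the module axioms. Since both sides of each axiom are $R$-linear in the tensor argument, it suffices to check them on simple tensors $m \otimes s'$. The identity axiom $1 \cdot (m \otimes s') = m \otimes s'$ is immediate. Associativity $s \cdot (t \cdot (m \otimes s')) = (st) \cdot (m \otimes s')$ reduces to $m \otimes s(ts') = m \otimes (st)s'$, which holds by associativity of multiplication in $S$. Compatibility with addition in $S$, namely $(s + t) \cdot (m \otimes s') = s \cdot (m \otimes s') + t \cdot (m \otimes s')$, reduces to $m \otimes (s + t)s' = m \otimes s s' + m \otimes t s'$, which holds by distributivity in $S$ together with additivity of $\otimes$ in the second slot.

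The main obstacle is the well-definedness step in the first paragraph above; one must not write down the action by choosing a representative sum of simple tensors, since different representatives might give different answers. Routing everything through the universal property of $\otimes_R$ dissolves this obstacle cleanly, because the $R$-bilinearity of $\beta_s$ automatically builds in invariance under the relations defining $G$. After that, the module axioms are formal consequences of axioms in $S$, checked slot-by-slot on simple tensors and extended by $R$-linearity.
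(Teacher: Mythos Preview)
Your proposal is correct and follows the same approach as the paper: define $s\cdot(m\otimes t)=m\otimes st$ on simple tensors and extend. The paper's proof is a one-liner that asserts this is ``a valid action'' by commutativity of $R$ and $S$, whereas you carefully route the well-definedness through the universal property of $\otimes_R$; this extra care is warranted and fills in exactly what the paper leaves implicit.
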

\begin{proof}
	Let $s\in S.$ We need to define $s(m\tensor t)$ and then extend by linearity. Well, simply define $s(m\tensor t)=m\tensor st.$ As $R$ and $S$ are commutative, this is a valid action. 
\end{proof}
\begin{lemma}
	There is a canonical isomorphism of $R\tensor_R M\cong M$ for any $R$-module $M.$ 
\end{lemma}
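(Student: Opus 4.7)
The plan is to construct maps in both directions and verify they are mutually inverse. First, I would define $\varphi : R \times M \to M$ by $\varphi(r, m) = rm$. This is $R$-bilinear: additivity in each slot is immediate from $M$ being an abelian group under the $R$-action, and the compatibility $\varphi(sr, m) = (sr)m = s(rm) = \varphi(r, sm)$ uses the associativity axiom of the module action together with commutativity of $R$ (so that $sr = rs$ makes the $R$-balanced relation hold). By the universal property of the tensor product, $\varphi$ factors uniquely through an $R$-linear map $\hat{\varphi} : R \tensor_R M \to M$ determined on simple tensors by $\hat{\varphi}(r \tensor m) = rm$.

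Next I would construct a candidate inverse $\psi : M \to R \tensor_R M$ by $\psi(m) = 1 \tensor m$. This is $R$-linear: $\psi(m + m') = 1 \tensor (m+m') = 1 \tensor m + 1 \tensor m' = \psi(m) + \psi(m')$ by the bilinear relations defining $\tensor$, and $\psi(rm) = 1 \tensor rm = r \tensor m = r(1 \tensor m) = r\psi(m)$, where the middle equality uses the relation $(ar, b) \sim (a, rb)$ from the construction of the tensor product.

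Finally I would check both compositions. For $\hat{\varphi} \comp \psi$, we have $\hat{\varphi}(\psi(m)) = \hat{\varphi}(1 \tensor m) = 1 \cdot m = m$, so $\hat{\varphi} \comp \psi = \id_M$. For $\psi \comp \hat{\varphi}$, on a simple tensor $r \tensor m$ we compute $\psi(\hat{\varphi}(r \tensor m)) = \psi(rm) = 1 \tensor rm = r \tensor m$. Since every element of $R \tensor_R M$ is a finite sum of simple tensors and $\psi \comp \hat{\varphi}$ is $R$-linear, it agrees with $\id_{R \tensor_R M}$ on a generating set and hence everywhere.

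The only genuinely delicate step is verifying that $\varphi$ is $R$-bilinear in the sense required to invoke the universal property — specifically, that the relation $\varphi(ar, b) = \varphi(a, rb)$ holds, which is the one relation in the construction of $\tensor$ that is not purely additive. This follows from commutativity of $R$ and associativity of the module action, but is worth stating explicitly. Everything else is a direct application of the universal property and the $R$-module axioms. Canonicity is visible from the construction: neither map required any choice of basis or generators, only the ring unit $1 \in R$ and the module action.
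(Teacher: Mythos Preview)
Your proof is correct and follows essentially the same approach as the paper: define the multiplication map $r\tensor m\mapsto rm$ and its inverse $m\mapsto 1\tensor m$, then verify they are mutual inverses. The paper's version is terser (it omits the explicit bilinearity check and the invocation of the universal property), but the underlying argument is identical.
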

\begin{proof}
	Let $\varphi:M\tensor_R R\to M$ be given by $\varphi(\sum m_i\tensor r_i)=\sum r_im_i$ We claim this is an isomorphism. Consider the map $m\mapsto m\tensor 1.$ This is an inverse for $\varphi$ on both the left and right. Hence, $\varphi$ is an isomorphism.  \end{proof}

Now let us investigate the module $M$ and its tensor powers: $M^{\tensor n}=\bigotimes^n M.$ These spaces parametrize, in some sense, the multilinear maps of $\prod^nM \to M.$ 
We can build an algebra out of these modules by taking a large direct sum. 
\begin{definition}
	Let $V$ be an $R$-module. The \textbf{Tensor Algebra} of $M$ is the $R$-algebra \[ T^\bullet(M)=\bigoplus_{n\in \N} M^{\tensor n} \] The algebra structure on $T^\bullet(M)\footnote{Some authors simply write $T(M)$ or $T^*(M)$ for the tensor algebra, we do not use these as it will become difficult to distinguish $T(M)$ and $TM$ in the next chapter. }$ is given by concatenation $v\in M^{\tensor n}$ and $w\in M^{\tensor m}$ then $v\tensor w\in M^{\tensor m+n}.$    
\end{definition}
We have the following universal property of the tensor algebra, 
\begin{proposition}[Universal Mapping Property of the Tensor Algebra]
	Let $A$ be an $R$-algebra and $f:M\to A$ an $R$-module homomorphism. Then there exists a unique $R$-algebra homomorphism $\hat{f}:T^\bullet(M)\to A$ extending $f$ such that the following diagram commutes: \[ \begin{tikzcd}
		M \arrow[r,"f"] \arrow[d,swap,"i"] & A \\
		T^\bullet(M) \arrow[ur,swap,"\hat{f}"] &
	\end{tikzcd}\]
\end{proposition}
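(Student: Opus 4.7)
The plan is to build $\hat{f}$ piece by piece on each graded component $M^{\otimes n}$, use the universal property of the tensor product (already established) to ensure well-definedness, then glue via the universal property of the direct sum, and finally check compatibility with the algebra structures.

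First, I would define $\hat{f}$ on each summand. On $M^{\otimes 0} = R$, set $\hat{f}(r) = r \cdot 1_A$; on $M^{\otimes 1} = M$, set $\hat{f} = f$. For $n \geq 2$, consider the map $\Phi_n : M \times M \times \cdots \times M \to A$ defined by $\Phi_n(m_1, \ldots, m_n) = f(m_1) f(m_2) \cdots f(m_n)$, where the product is taken in $A$. Since $f$ is $R$-linear and multiplication in the $R$-algebra $A$ is $R$-bilinear, $\Phi_n$ is $R$-multilinear. By the universal property of the tensor product (applied iteratively), there is a unique $R$-module homomorphism $\hat{f}_n : M^{\otimes n} \to A$ with $\hat{f}_n(m_1 \otimes \cdots \otimes m_n) = f(m_1) \cdots f(m_n)$ on simple tensors. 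Assembling these into $\hat{f} = \bigoplus_n \hat{f}_n : T^\bullet(M) \to A$ produces a well-defined $R$-module homomorphism, and the required diagram commutes by construction.

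Next I would verify that $\hat{f}$ is an algebra homomorphism. It already respects unit and scalar structure, so it suffices to check multiplicativity. For simple tensors $v = m_1 \otimes \cdots \otimes m_p \in M^{\otimes p}$ and $w = m_1' \otimes \cdots \otimes m_q' \in M^{\otimes q}$, the product $v \otimes w \in M^{\otimes(p+q)}$ is again a simple tensor, and
\[ \hat{f}(v \otimes w) = f(m_1) \cdots f(m_p) f(m_1') \cdots f(m_q') = \hat{f}(v) \hat{f}(w) \]
by associativity of multiplication in $A$. Extending by bilinearity of the algebra product and of $\hat{f}$ gives $\hat{f}(xy) = \hat{f}(x)\hat{f}(y)$ for all $x, y \in T^\bullet(M)$.

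Finally, for uniqueness, suppose $g : T^\bullet(M) \to A$ is any $R$-algebra homomorphism with $g \circ i = f$. Since $g$ preserves the unit, $g$ agrees with $\hat{f}$ on $M^{\otimes 0}$. On $M^{\otimes 1}$ it equals $f = \hat{f}$ by hypothesis. On simple tensors of higher degree, the concatenation $m_1 \otimes \cdots \otimes m_n$ is literally the algebra product $m_1 \cdot m_2 \cdots m_n$ inside $T^\bullet(M)$, so $g(m_1 \otimes \cdots \otimes m_n) = g(m_1) \cdots g(m_n) = f(m_1) \cdots f(m_n) = \hat{f}(m_1 \otimes \cdots \otimes m_n)$, and extending $R$-linearly forces $g = \hat{f}$. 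The main point of care is the first step: one must ensure that the iterated application of the universal property of the tensor product genuinely produces a well-defined map on $M^{\otimes n}$, which is precisely where $R$-multilinearity of $\Phi_n$ (and hence the commutativity of $R$ used to form tensor products) is essential.
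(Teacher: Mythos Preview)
Your proof is correct and follows the standard approach. The paper itself does not actually supply a proof of this proposition; it simply remarks that ``the proof of this is the same flavor as for the other universal mapping properties and thus will not be produced here.'' Your argument is exactly the kind of proof the paper is alluding to: invoke the universal property of $\otimes$ degree by degree, assemble via the direct sum, verify multiplicativity on simple tensors, and deduce uniqueness from the fact that degree-one simple tensors generate $T^\bullet(M)$ as an algebra.
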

The proof of this is the same flavor as for the other universal mapping properties and thus will not be produced here. What we will concern ourselves with however is a certain ideal of $T^\bullet(M).$ 
\begin{definition}
	A tensor $v\in T^\bullet(M)$ is called $\textbf{alternating}$ if $v$ has the following form: \[ v= m_1\tensor ... m_i \tensor ...\tensor  m_i \tensor ...\tensor m_n\]
	The repeating element is the focus. Let $\lie{J}$ be the ideal of $T^\bullet(M)$ generated by all such alternating elements. Sometimes we say that $\lie{J}=\ip{v\tensor v}$ for $v\in T^\bullet(M).$   
\end{definition}
\begin{lemma}
	$\lie{J}$ coincides with the ideal \[\lie{L}=\ip{x\tensor y+y\tensor x-(x+y)\tensor(x+y)+x\tensor x+y\tensor y}\] only if $\text{Char} (R)\neq 2.$ 
\end{lemma}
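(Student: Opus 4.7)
The plan is to prove the two inclusions $\lie{L} \subseteq \lie{J}$ and $\lie{J} \subseteq \lie{L}$ separately, and to see precisely where the hypothesis $\Char R \neq 2$ enters. Reading the generator of $\lie{L}$ as the antisymmetric element $x\tensor y + y\tensor x$ (equivalently, as the paper's expanded form shows, the expression obtained from expanding $(x+y)\tensor(x+y)$), the crucial identity is
\[
(x+y)\tensor(x+y) - x\tensor x - y\tensor y = x\tensor y + y\tensor x.
\]
The right-hand side is a generator of $\lie{L}$, while each of the three terms on the left is of the form $v\tensor v$ and therefore lies in $\lie{J}$. This immediately gives $\lie{L}\subseteq \lie{J}$, and it holds in every characteristic.

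For the reverse inclusion, I would begin from the observation that every generator of $\lie{L}$ is of the form $x\tensor y + y\tensor x$. Specializing $y = x$ produces $2(x\tensor x) \in \lie{L}$ for every $x \in T^\bullet(M)$. If $2$ is invertible in $R$, i.e.\ $\Char R \neq 2$, multiplying by $2^{-1}$ yields $x\tensor x \in \lie{L}$. Since $\lie{J}$ is generated as an ideal by such alternating elements $x \tensor x$, we obtain $\lie{J}\subseteq \lie{L}$, and hence equality. This shows that $\Char R\neq 2$ is a \emph{sufficient} condition; the lemma claims it is also necessary.

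To establish necessity, I would exhibit a counterexample when $\Char R = 2$. Take $R = \F_2$ and $M = R^2$ with basis $\{e_1,e_2\}$, and examine the degree-2 component of the quotients. By the definition of the tensor product, $M\tensor_R M$ has $R$-basis $\{e_i\tensor e_j\}_{i,j\in\{1,2\}}$. Over $\F_2$ the generators of $\lie{L}$ in degree 2 are $R$-linear combinations of elements $e_i\tensor e_j + e_j\tensor e_i$; these span the subspace spanned by $e_1\tensor e_2 + e_2\tensor e_1$ (the diagonal terms $e_i\tensor e_i + e_i \tensor e_i = 2(e_i\tensor e_i) = 0$ contribute nothing). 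In particular $e_1\tensor e_1 \notin \lie{L}$. However $e_1\tensor e_1 \in \lie{J}$ by definition, so $\lie{J}\neq \lie{L}$.

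The main obstacle in the argument is the last step: verifying with care that $e_1\tensor e_1$ really is outside $\lie{L}$ in characteristic $2$. The clean way is to produce an explicit $R$-linear projection $M\tensor M \to R$, sending $e_1\tensor e_1 \mapsto 1$ and $e_i \tensor e_j + e_j\tensor e_i \mapsto 0$ for all $i,j$ (the second condition is automatic when $i=j$ in characteristic $2$, and is the vanishing on the single symmetric cross-term when $i\neq j$). The existence of this functional witnesses $e_1\tensor e_1 \notin \lie{L}$ and completes the lemma.
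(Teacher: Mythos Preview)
Your argument is correct and is precisely the ``easy manipulation of the defining relations for tensors'' that the paper offers as its entire proof; you have simply written out the details the paper omits. Two minor slips worth noting: $x\tensor y + y\tensor x$ is symmetric, not antisymmetric, and your sufficiency argument needs $2$ to be a unit in $R$ rather than merely $\Char R \neq 2$ (these differ over $\Z$, for instance)---but since the lemma only asserts the ``only if'' direction, your characteristic-$2$ counterexample is what actually proves the stated claim, and that part is fine.
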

This is an easy manipulation of the defining relations for tensors. 
What this lets us build is the final object of this chapter: the Exterior Algebra. 
\begin{dtheorem}
	Let $R$ be a ring with $\text{Char} (R)\neq 2$ and put $\bigwedge^\bullet(M)=T^\bullet(M)/\lie{J}.$ This is called the $\textbf{exterior algebra}$ of $M$ and comes with the following universal property: Given any $R$-algebra $A$ and a map $\phi:M\to A$ such that $\varphi(m)^2=0,$ there exists a unique algebra homomorphism $\bigwedge^\bullet(M)\to A$ which makes the associated diagram commute.  	
\end{dtheorem}
\begin{proof}
	The universal property for the tensor algebra gives us a map, $\Psi,$ to $A.$ Taking the kernel of this map, we see that it is precisely when $\Psi(m\tensor m)=0.$ Hence, $\Psi$ descends to a map on $\bigwedge^\bullet(M).$ This completes the proof.  
\end{proof}
It is common practice to denote elements of $\bigwedge^\bullet(M)$ with $\wedge$ instead of $\tensor.$ In this way, we get immediately that $v\wedge w=-w\wedge v.$ This is equivalent to the condition, $v\wedge v=0.$
  
\begin{remark}
	We shall end this section with some nice properties of the exterior algebra so that we can use them in the next chapter readily. 
	\begin{enumerate}
		\item We can build $\bigwedge^k(M)$ in a similar way to building $\bigwedge^\bullet(M)$ we simply quotient $T^k(M)=\bigoplus^k M^{\tensor n}.$ In this vein, $\bigwedge^k(M) \wedge \bigwedge^{l}(M)\subseteq \bigwedge^{l+k} (M)$ which gives $\bigwedge^\bullet(M)$ an algebra structure. 
		\item If $V$ is a finite dimensional vector space of dimension $n.$ Then it can be shown that $\dim \bigwedge^k(V)=\binom{n}{k}.$ Therefore $\bigwedge^\bullet(M)$ is a finite dimensional algebra. 
		\item Recall the definition of a lie algebra from above. A different way to say the conditions of a lie algebra are that $V$ is a vector space equipped with a map \[[-,-]:{\bigwedge}^2(V)\to V\] satisfying the Jacobi identity. 
		\item As we will see in the next section, we can equivalently consider $\bigwedge^k(V)$ the vector space of differential $k$-forms on $V.$ This allows us to do calculus on these spaces and is a bridge between the theory of manifolds (chapter 3) and algebra, among others. 
		\item(Determinants) Let $V$ have dimension $n$ and consider the top exterior power $\bigwedge^n(V).$ This is a $1$-dimensional space by (2) above. Consider any $T\in \Hom(V,V):=\End(V)$ and define the extension \[T:\bigwedge^n(V)\to \bigwedge^n(V) \;\;\;\;\;\;\;\;\;\;\;\;\;\;\;\;\;\;\; T(v_1\wedge...\wedge v_n)=Tv_1\wedge ...\wedge Tv_n \]
		As this is an endomorphism of a $1$-dimensional space, it must be given by $Tv=\lambda v$ for some $\lambda\in K.$ Therefore we define the \textbf{determinant} of $T$ to be the unique number $\lambda$ such that \[ Tv_1\wedge...\wedge Tv_n=(\det T) (v_1\wedge...\wedge v_n) \]
		It then follows from the definition that for $S,T\in \End(V),$ we get $\det ST=\det S\cdot \det T.$ Those readers familiar with the determinant formula of a matrix should notice this as the standard property of the determinant. Furthermore, we have the following lemma 
		\begin{lemma}
			A matrix $M$ is invertible if and only if $\det M\neq 0.$ 
		\end{lemma}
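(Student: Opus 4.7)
The plan is to translate the statement into one about linear maps via Theorem \ref{Matrix_of_Linear}, which gives a bijective correspondence between $n \times n$ matrices over $K$ and endomorphisms $T \in \End(V)$ once a basis is fixed. Under this correspondence, $M$ is invertible precisely when the associated $T$ is an isomorphism, and the definition of $\det$ given just above depends only on $T$ (since $\bigwedge^n(V)$ is one-dimensional and any choice of basis produces the same scalar $\lambda$). So it suffices to prove: for $T \in \End(V)$ with $\dim V = n < \infty$, $T$ is an isomorphism if and only if $\det T \neq 0$.

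For the forward direction, suppose $T$ is invertible with inverse $T^{-1}$. Applying the multiplicativity property $\det(ST) = \det(S)\det(T)$ noted immediately above the lemma to $T \circ T^{-1} = \id_V$, and observing that $\det \id_V = 1$ (since $\id_V$ acts as the identity on $\bigwedge^n(V)$), I would obtain $\det(T) \cdot \det(T^{-1}) = 1$, so $\det T \neq 0$.

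For the reverse direction, fix a basis $v_1, \ldots, v_n$ of $V$, so that $v_1 \wedge \cdots \wedge v_n$ is a nonzero generator of the one-dimensional space $\bigwedge^n(V)$. By definition,
\[
Tv_1 \wedge \cdots \wedge Tv_n = (\det T)(v_1 \wedge \cdots \wedge v_n).
\]
The assumption $\det T \neq 0$ says the left-hand side is nonzero. From this I want to deduce that $Tv_1, \ldots, Tv_n$ are linearly independent. The key intermediate fact I would establish is the standard lemma that in $\bigwedge^\bullet(V)$, if a tuple $w_1, \ldots, w_n$ is linearly dependent, then $w_1 \wedge \cdots \wedge w_n = 0$: expressing $w_k = \sum_{j \neq k} c_j w_j$ and expanding by multilinearity yields a sum of wedges each containing a repeated factor, and such wedges vanish because the defining ideal $\lie{J}$ contains every alternating tensor. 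Once this lemma is in hand, $Tv_1, \ldots, Tv_n$ must be linearly independent, and since there are $n$ of them in an $n$-dimensional space, Theorem \ref{basis}(b) promotes them to a basis. Thus $T$ is surjective, and by Theorem \ref{Basic_Linear}(d) surjectivity in equal finite dimensions is equivalent to being an isomorphism.

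The main obstacle is the vanishing lemma for wedges of linearly dependent vectors: everything else is either a direct invocation of the multiplicative property of $\det$ or a one-line appeal to Theorem \ref{Basic_Linear}(d). In particular, the characteristic hypothesis $\Char(R) \neq 2$ used to define $\bigwedge^\bullet(M)$ is what lets me pass freely between the two descriptions of $\lie{J}$ and conclude $w \wedge w' = -w' \wedge w$, which is exactly what drives the expansion argument. Once that bookkeeping is done, the if-and-only-if falls out cleanly in both directions.
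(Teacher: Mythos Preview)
Your proposal is correct and, in fact, more thorough than the paper's own proof. The paper gives a one-sentence argument only for the direction $\det M \neq 0 \Rightarrow M$ invertible: it observes that nonvanishing of $\det M$ makes the induced endomorphism of the one-dimensional space $\bigwedge^n(V)$ an isomorphism, and from this concludes that $M$ itself is invertible. That leap is exactly the content of your vanishing lemma for wedges of linearly dependent vectors, which the paper leaves implicit. Your forward direction via multiplicativity of $\det$ is not in the paper at all, so you have supplied the missing half of the if-and-only-if. The underlying approach is the same---both arguments work through the action on $\bigwedge^n(V)$---but your version fills in the steps the paper elides.
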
 
		\begin{proof}
			Abusing notation, by Theorem \ref{Matrix_of_Linear} we consider the linear transformation associated to the matrix $M.$ Then $\det M\neq 0$ implies that $M:\bigwedge^n(V)\to \bigwedge^n(V)$ is an isomorphism. Hence, $M$ has an inverse as a linear transformation and thus as a matrix.  
		\end{proof}
		This gives a nice way to think about determinants as the \textit{volume} of the parallelepiped spanned by the basis vectors $Tv_1,...,Tv_n.$ 
	\end{enumerate}
\end{remark}
This completes the chapter.

\chapter{Topology and Geometry: From Spaces to Sheaves} 
This section will run through the basics of category theory, (point-set) topology, differential geometry, and sheaf theory. The main goal is to define and give important properties of \textit{manifolds}. To mathematicians, these are generalizations of Euclidean space and provide a natural context to do calculus on non-flat spaces (more on this in Section 3.3).  There is some ambiguity on the definition of a manifold for psychologists which causes some technical problems when comparing computational models which claim to rely on the 
"manifold" structure. We shall give the formal, mathematical constructions of these objects and in Chapter 4, use this to construct a perceptual space which encodes the generalized perceptual categories of Chapter 1. Before then, we want to bridge the gap from the previous chapter to this one by exploring category theory. 

\section{Category Theory} 
Category theory began as an observation that many of the well known results of algebra (such as the First isomorphism theorems above) seemed to be linked. We now know that the reason this is true follows from general facts about what are known as \textit{Additive} and \textit{Abelian} categories. Although this theory is beautiful to those who fully understand the concepts, it can be seen as esoteric and impenetrable by some beginners. As we are assuming little to no familiarity with these topics,  we shall go into a bit more detail for most of the proofs in this section and provide several examples for each definition and theorem. For references, we make extensive use of \cite{MacLane1971}, \cite{Knapp2006},\cite{Knapp2007},\cite{Rotman2009}, and \cite{Lee2012}.     

\subsection{Categories and Functors} 
Before giving the definition of a category, we want to understand, more precisely, the language used in the previous chapter. The main goal will be to understand the relationship between morphisms of groups, rings, and modules. Category theory provides a setting in which these are all intimately related.  

\begin{example}
	\item Let $G,H$ be groups (not necessarily abelian). Denote by $\Hom(G,H)$ the set of all group homomorphisms. If $G$ and $H$ are assumed to be abelian, then $\Hom(G,H)$ can be endowed with the structure of an abelian group in a natural way: for any $f\in \Hom(G,H)$ define \[ n\cdot f(g)=f(ng)=nf(g)\in H\]
	Notice that for $H$ non-abelian we can still define a $\Z$-module structure on $\Hom(G,H)$ by $n\cdot f(g)=f(ng).$ We can similarly define a $\Z$-module structure if $G$ is non-abelian and $H$ is abelian. Thinking about $\Hom$ as a \textit{function on the set of all groups}\footnote{We are intentionally being sloppy here. As will be seen shortly $\Hom(-,-)$ is a functor $\textbf{Grp}\to \textbf{Set}.$} we can ask if it preserves group homomorphisms. To check this, let $\varphi:G\to G'$ be a morphism of groups. Define \[ \varphi^*:\Hom(G',H)\to \Hom(G,H)\;\;\;\;\;\;\;\;\;\;\;\; f\mapsto f\comp \varphi\]
	If instead we had a morphism $\psi:H\to H',$ then there is a canonical map \[ \psi_*:\Hom(G,H)\to \Hom(G,H')\]
	defined as you would imagine. Therefore, $\Hom$ can somehow detect which argument a morphism was taken in. If it is the first argument then the order is reversed, whereas the second argument preserves the order. 
	
	\item If we generalize the above example to rings and ring homomorphisms, we get the exact same result. Let $R,R',S,S'$ be rings and $\varphi:R\to R'$, $\psi:S\to S'$ be ring homomorphisms. Then $\varphi^*$ and $\psi_*$ are defined according to the definitions above.
	\item The same story for rings works with modules as well. This should not be surprising however as every abelian group is a $\Z$-module and we know how $\Hom$ works for abelian groups. 
\end{example}
This undercuts the original conclusion about $\Hom$; it can detect which argument is being manipulated but cannot (without some poking) detect group, ring, or module structures. What we do know is that it also plays suitably nice with morphisms for the correct objects. It is precisely this notion which categories and functors generalize. 

\begin{definition}
	A (small) \textbf{category} is a triple $\script{C}=(\operatorname{Obj}(\script{C}), \Hom_\script{C}(-,-),\circ)$ with $\operatorname{Obj}(\script{C})$ a set, an assignment for any two objects $A,B\in \operatorname{Obj}(\script{C})$ a set $\Hom_\script{C}(A,B)$ of \textbf{morphisms} between $A$ and $B$, and a function $\circ$ such that for all $A,B,C\in \operatorname{Obj}(\script{C}),$ \[\circ: \Hom_{\script{C}}(B,C)\times \Hom_{\script{C}}(A,B) \to \Hom_{\script{C}}(A,C) \]
	These are subject to the following axioms
	\begin{enumerate}
		\item $\Hom$ sets are disjoint (that is every element has a unique domain and codomain).
		\item There exists $1_A\in \Hom_{\script{C}}(A,A)$ for all $A\in \operatorname{Obj}(\script{C})$ such that $1_A\circ f =f.$ and $g\circ 1_A=g.$
		\item The map $\circ$ is associative.  
	\end{enumerate}  
	If it is clear from the context, we shall simply write $\Hom(A,B)$ for the set of morphisms. A \textbf{subcategory} of $\script{C}$ is a triple $\script{D}=(\operatorname{Obj}(\script{D}),\Hom(-,-),\comp)$ where $\operatorname{Obj}(\script{D})$ is a subset of $\Obj(\script{C})$ and $\Hom_\script{D}(A,B)\subseteq \Hom_{\script{C}}(A,B).$ Composition is taken as in $\script{C}.$    
\end{definition}
Notice that this definition does not require the objects themselves to be sets. This distinction is what makes proving things in category theory particularly frustrating: one cannot reference elements of an object when defining a morphism. 
\begin{example}\label{Categories_examples} 
\begin{enumerate}
	\item Consider the following graph \[
\begin{tikzcd}
                                                               & \bullet \arrow[ld] \arrow[loop, distance=2em, in=125, out=55] &                                                               \\
\bullet \arrow[rr] \arrow[loop, distance=2em, in=215, out=145] &                                                               & \bullet \arrow[lu] \arrow[loop, distance=2em, in=35, out=325]
\end{tikzcd}\]
	Define a category $\script{C}$ whose objects are the vertices of the above graph, the morphisms are the arrows, and composition is concatenation of paths. Notice that the objects of this category have no notion of element (i.e. they are not sets) and therefore if we wish to prove something about this category, we have to rely on "arrow theoretic" proof. That is to say we need to understand the morphisms in the category instead of the objects. 
	
	\item We now return to the algebraic objects of the previous chapter. For your favorite object in the previous chapter, it should be obvious that they form a category. We denote the categories as such: \begin{enumerate}
		\item $\textbf{Grp}$: the category of groups. 
		\item $\textbf{Ring}$: the category of rings. 
		\item $\textbf{Field}$: the category of fields. 
		\item $R-\textbf{Mod}$: the category of $R$-modules for a fixed ring $R.$ 
		\item $\textbf{Vect}_K$: the category of $K$-vector spaces. 
		\item $\textbf{Ab}$: the category of abelian groups. 
	\end{enumerate} 
	Notice that $\textbf{Ab}$ is a subcategory of $\textbf{Grp}.$ In fact, every category above can be realized as a subcategory of $\textbf{Grp}!$ 
	
	\item The "category" of Sets is denoted $\textbf{Set}.$ The quotations here are for caution: the "collection of all sets" is not itself a set (try to prove this!) but instead a proper class. We are going to ignore almost all set theoretic problems that may arise. Nonetheless, this is an honest category (once you fix your model of set theory) and it is quite important. A majority of what will come up when we discuss functors can be realized as some generalization of something involving sets.   
\end{enumerate}
\end{example}
\begin{remark}
	For the remainder of this thesis, we shall denote categories by calligraphic or script letters $\mathcal{C},\script{C}$ if we are in a general setting, or a corresponding bold-face name such as $\textbf{Grp}$ for the category of groups. 
\end{remark}

\begin{definition}
	Let $\script{C}$ and $\script{D}$ be two categories. We define the product category $\script{C}\times \script{D}$ as the category whose objects are pairs $(C,D)$ and whose morphisms are pairs $(f,g).$ 
\end{definition}

Now that we have the notion of a category, we may ask if there are any "special" morphisms in this category. What we mean by special here will become apparent shortly. Consider the category $\textbf{Set}.$ The following lemma gives a different characterization of injective and surjective functions which is easily generalizable. 

\begin{lemma}\label{Concrete_set} 
	Let $f:A\to B$ and $g:A'\to B'$ be two functions. Then $f$ is injective if and only if for any two arrows $i_1,i_2:C\to A,$ the equality \[f\comp i_1=f\comp i_2\implies i_1=i_2.\] 
	Similarly, $g$ is surjective if and only if for any two arrows $s_1,s_2:B'\to C',$ the equality \[s_1\comp g=s_2\comp g\implies s_1=s_2.\] This means that injective maps are $\textit{left}$ cancellable and surjective maps are $\textit{right}$ cancellable. 
\end{lemma}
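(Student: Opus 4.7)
The plan is to prove each biconditional by verifying the forward direction via an element-chase and the reverse direction by cleverly choosing the test object $C$ (resp. $C'$) to detect the failure of injectivity (resp. surjectivity). Both directions will ultimately reduce to the element-level definitions already given in Chapter 2, but the arrow-theoretic formulation is what will later let us port these notions into arbitrary categories.

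For the injectivity equivalence, the forward direction is immediate: if $f \comp i_1 = f \comp i_2$ then evaluating at each $c \in C$ gives $f(i_1(c)) = f(i_2(c))$, and injectivity of $f$ forces $i_1(c) = i_2(c)$ for every $c$, i.e.\ $i_1 = i_2$. For the converse, suppose $f(a_1) = f(a_2)$; I would pick $C$ to be any singleton $\{\ast\}$ and define $i_1(\ast) = a_1$, $i_2(\ast) = a_2$. Then $f \comp i_1 = f \comp i_2$ by construction, so left-cancellability yields $i_1 = i_2$, hence $a_1 = a_2$. The singleton is the natural test object because morphisms out of a singleton into $A$ correspond bijectively to elements of $A$.

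For the surjectivity equivalence, the forward direction again only uses the definition: if $s_1 \comp g = s_2 \comp g$ and $b' \in B'$ is arbitrary, surjectivity of $g$ gives $a'$ with $g(a') = b'$, and then $s_1(b') = s_1(g(a')) = s_2(g(a')) = s_2(b')$, so $s_1 = s_2$. The reverse direction is where the work is: I would argue by contrapositive. Assume $g$ is not surjective and pick $b_0 \in B' \setminus g(A')$. Choose $C' = \{0,1\}$ and let $s_1$ be the constant map at $0$, and $s_2$ the map that is $0$ everywhere except $s_2(b_0) = 1$. Since no element of $A'$ maps to $b_0$ under $g$, the compositions $s_1 \comp g$ and $s_2 \comp g$ agree on all of $A'$, so $s_1 \comp g = s_2 \comp g$; but $s_1 \neq s_2$, contradicting right-cancellability.

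The main obstacle is really conceptual rather than technical: one has to recognize that the right test object to insert on the left of $f$ is a singleton (which parametrizes elements of the domain), while the right test object to place on the right of $g$ is a two-point set (which separates points of the codomain). Once these choices are made the verification is immediate. Everything here uses only the set-theoretic machinery from Section 2.1, so no structure beyond \textbf{Set} is needed. This is the template we will repeat when defining \emph{monomorphisms} and \emph{epimorphisms} in a general category.
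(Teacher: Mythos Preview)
Your proof is correct and, for the injectivity half, uses exactly the same device as the paper: the singleton $\{\ast\}$ as the test object so that maps $\{\ast\}\to A$ pick out elements, together with the observation that the forward implication is an immediate element-chase. The paper only treats the injective case and leaves surjectivity to the reader; your contrapositive argument with the two-point set $\{0,1\}$ is the standard way to finish that half and goes beyond what the paper actually writes out.
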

\begin{proof}
	We shall prove the injective case and leave the surjective case to the reader. ($\Rightarrow$) Assume that $f$ is left cancellable. For any $a,a'\in A,$ let $\varphi_a:\{*\}\to A$ be the function which picks out the element $a.$ Then if $f$ is left cancelable and \[ f(\varphi_a)=f(\varphi_{a'})\implies \varphi_a=\varphi_{a'}\implies a=a'\] 
	Hence $f$ is injective. The other direction is obvious from the definition of injective. This completes the proof. 
\end{proof}
Notice that we can re-write the injectivity condition on the level of diagrams as \[ \begin{tikzcd}
C \arrow[r, "i_2"', shift right] \arrow[r, "i_1", shift left] & A \arrow[r, "f"] & B
\end{tikzcd}\]
More generally, we can think of arrows in arbitrary categories which have the left (resp. right) cancellable property. 
\begin{definition}
	Let $\mathcal{C}$ be a category and $f:A\to B$ be a morphism. We say that $f$ is \textbf{monic} when for any pair of morphisms $g,h:C\rightrightarrows A,$ the equality. $f\comp g=f\comp h$ implies $g=h.$ We say that $f$ is \textbf{epic} when for any pair of morphisms $p,q:B\rightrightarrows D,$ the equality. $p\comp f=q\comp f$ implies $p=q.$ We call $f$ an \textbf{isomorphism} if there exists $r:B\to A$ such that $fr=1_B$ and $rf=1_A.$ Further, we denote isomorphisms by either $A\cong B$ or $A\overset{\sim}{\to} B.$  
\end{definition}

	In all \textit{concrete} categories (ones which can be realized as subcategories of \textbf{Set}) monic maps are injective and epic maps are surjective. This mirrors the result of Lemma \ref{Concrete_set}. In fact, this is precisely the definitions of isomorphism coincide with the categorical one for all of the algebraic objects in Chapter 2! In general, the converse is not true. Let $R,S$ be two rings and $UR,US$ their underlying sets. Then an injective function $f:UR\to US$ need not be a ring homomorphism. For an easy example, consider $R=S=\Z.$ Then the map \[ 2: \Z\to \Z\;\;\;\;\;\;\;\;\;\;\;\;\;\;\;\;x\mapsto 2x\] is a perfectly well defined injective function but is definitely not a ring homomorphism as $1$ cannot be written as $2z$ for some $z\in \Z.$ 
	
Something else which needs generalization is the equivalence in \textbf{Set} between isomorphisms and bijections. In general, every isomorphism is necessarily monic and epic. The converse may not be true (take for example the ring example above but change where the identity is sent). We want to deal with categories where this is true. 
	 
\begin{definition}
	A category $\script{B}$ is called $\textbf{balanced}$ if all monic, epic morphisms are isomorphisms. 
\end{definition}
It should be clear that all concrete categories are balanced. More often than not, this is something which needs to be proven but is not too hard. 

Before moving forward,  it is important to label some distinguished objects of certain categories. 
\begin{definition}
	An object $T\in \script{C}$ is a \textbf{terminal} object if for all objects $A\in \script{C}, $ there exists a unique (denoted $\exists!$) $A\to T.$ An object $I\in \script{C}$ is \textbf{initial} if for all objects $A\in \script{C}, $ there exists a unique $I\to A.$ A \textbf{zero} object is an object which is both terminal and initial.  
\end{definition}
\begin{proposition}\label{Inital_terminal_zero_unique} 
	Initial, terminal, and zero objects are unique up to unique isomorphism. 
\end{proposition}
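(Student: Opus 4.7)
The plan is to prove the terminal case first, deduce the initial case by a formal dualization, and obtain the zero case as an immediate corollary. Throughout, the only tools needed are the defining universal properties themselves (existence plus uniqueness of certain arrows) and the axiom that every object has an identity morphism; no ``element chasing'' is available since objects of a general category need not be sets.

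For the terminal case, suppose $T$ and $T'$ are both terminal in $\mathcal{C}$. By the terminal property of $T'$ applied to the object $T$, there exists a unique morphism $f \in \Hom_{\mathcal{C}}(T, T')$; symmetrically, there exists a unique $g \in \Hom_{\mathcal{C}}(T', T)$. The next step is to identify the composites $g \circ f$ and $f \circ g$. Here I would exploit uniqueness in the universal property twice: $g \circ f$ lies in $\Hom_{\mathcal{C}}(T, T)$, but so does $1_T$, and the terminal property of $T$ forces $\Hom_{\mathcal{C}}(T, T)$ to be a singleton, so $g \circ f = 1_T$. Running the same argument with the roles of $T$ and $T'$ swapped yields $f \circ g = 1_{T'}$. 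Thus $f$ is an isomorphism, and its uniqueness as a morphism $T \to T'$ (hence as an isomorphism) is immediate from the terminal property of $T'$.

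For the initial case, I would simply remark that the definition of an initial object in $\mathcal{C}$ is exactly the definition of a terminal object in the ``opposite category'' $\mathcal{C}^{\mathrm{op}}$ (the category with the same objects but all arrows reversed, which the reader can easily check is a category). Since the terminal argument above used only the category axioms, it applies verbatim in $\mathcal{C}^{\mathrm{op}}$, giving a unique isomorphism between any two initial objects of $\mathcal{C}$. Writing the argument out directly (without invoking $\mathcal{C}^{\mathrm{op}}$) is also painless: if $I$ and $I'$ are initial, the unique arrows $f \colon I \to I'$ and $g \colon I' \to I$ compose to elements of the singletons $\Hom_{\mathcal{C}}(I,I)$ and $\Hom_{\mathcal{C}}(I',I')$, which must be $1_I$ and $1_{I'}$ respectively.

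For the zero case, a zero object is by definition both initial and terminal, so any two zero objects $Z, Z'$ are connected by a unique isomorphism by either of the arguments above (both give the same arrow since there is only one arrow $Z \to Z'$ available). The main ``obstacle'' here is not technical but stylistic: one must resist the temptation to argue with elements and instead track carefully which hom-set a given arrow lives in and which universal property forces it to be the unique inhabitant of that set. Once that discipline is maintained, the proof is a short diagram-free calculation that will serve as a template for many later uniqueness-up-to-unique-isomorphism results (products, coproducts, limits, and so on).
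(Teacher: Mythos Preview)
Your proof is correct and takes essentially the same approach as the paper: obtain the unique arrows in each direction from the universal property, then use that the endomorphism hom-sets are singletons to force the composites to be identities. The only cosmetic difference is that the paper writes out the initial case and asserts the others are identical, whereas you write out the terminal case and dualize.
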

\begin{proof}
	The proof for initial, terminal, and zero objects is exactly the same. For this reason, we shall only prove the initial case. Let $I_1,I_2$ be two initial objects. By definition there exist unique morphisms $\iota_1:I_1\to I_2$ and $\iota_2:I_2\to I_1.$ It suffices to show that $\iota_2\comp  \iota_1=1_{I_1}$ and $\iota_1\comp \iota_2=1_{I_2}.$ As the objects are initial, the set $\Hom(I_i,I_i)$ contains a single element, namely $1_{I_i}.$ As the composition $\iota_1\comp \iota_2\in \Hom(I_2,I_2)$ it must be $1_{I_2}.$ By the same reasoning we have that $\iota_2\comp\ \iota_1=1_{I_1}.$ Hence, $I_1\cong I_2$ and this isomorphism is unique.       
\end{proof}

\begin{example}
	Zero, initial, and. terminal objects are incredibly important in the theory of \textit{abelian} categories (section 3.1.4). For this reason, we give the following exmaples:
	\begin{enumerate}
		\item In \textbf{Grp} the zero object is the trivial group $G=\{1\}.$ 
		\item In \textbf{Ring} the initial object is $\Z$ while there is no terminal object.
		\item In R-\textbf{Mod} the zero object is the $0$ module.    
	\end{enumerate} 
\end{example}

\subsubsection{Functors} 
Now that we have the notion of a category, we want to define morphisms of categories. Similar to the restrictions of a ring homomorphism, we want a morphism of categories to preserve both the objects and the morphisms. 
\begin{definition}
	Let $\script{C},\script{D}$ be two categories. A (covariant) \textbf{functor} $F:\script{C}\to \script{D}$ subject to the following: 
	\begin{enumerate}
		\item For all $A\in \operatorname{Obj}(\script{C})$, $F(A)\in \operatorname{Obj}(\script{D})$ and similarly for morphisms.  
		\item If $A\overset{f}{\rightarrow} B \overset{g}{\rightarrow} C$ is a sequence of morphisms in $\script{C},$ then $F(g\comp f)=F(g)\comp F(f)$ is a morphism in $\script{D}.$
		\item $F(1_A)=1_{F(A)}.$    
	\end{enumerate}
	Dually, we have the notion of $\textbf{contravariant}$ functors for which $F(g\comp f)=F(f)\comp F(g).$ It is common practice to write $FX$ for an object as opposed to $F(X).$ We shall use these notations interchangeably.  
\end{definition} 	

Functors play a core role in the rest of the theory presented in this thesis. Specifically, they will form an important class of objects called sheaves (see section 3.3.2 below) which will ease the technical burden of understanding the geometry. of perceptual spaces.  

\begin{lemma}
	Let $F:\script{C}\to \script{D}$ be a functor. Then if $\varphi:A\to B$ is an isomorphism in $\script{C},$ then $F(\varphi)$ is an isomorphism in $\script{D}.$ 
\end{lemma}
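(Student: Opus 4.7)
The plan is to unpack the definition of isomorphism in $\script{C}$, apply $F$, and invoke the two defining properties of a functor to conclude that the image arrow is an isomorphism in $\script{D}$. Since isomorphism was defined purely in terms of composition and identity morphisms, and functors preserve both, the result should be essentially a diagram chase of two equations.

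First I would use the hypothesis to produce a witness: by definition of isomorphism in $\script{C}$, there exists a morphism $\psi: B \to A$ satisfying $\varphi \comp \psi = 1_B$ and $\psi \comp \varphi = 1_A$. The candidate inverse to $F(\varphi)$ in $\script{D}$ will be $F(\psi)$; the task reduces to checking the two required identities.

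Next I would apply $F$ to each of the two defining equations. Using property (b) of the functor definition, $F(\varphi \comp \psi) = F(\varphi) \comp F(\psi)$ and $F(\psi \comp \varphi) = F(\psi) \comp F(\varphi)$. Using property (c), $F(1_B) = 1_{F(B)}$ and $F(1_A) = 1_{F(A)}$. Combining these, we obtain $F(\varphi) \comp F(\psi) = 1_{F(B)}$ and $F(\psi) \comp F(\varphi) = 1_{F(A)}$, which is exactly what is required for $F(\varphi)$ to be an isomorphism in $\script{D}$ with inverse $F(\psi)$.

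There is no real obstacle here; the argument is mechanical once the right witness $F(\psi)$ is identified, and the whole proof is essentially bookkeeping with the functor axioms. The only caveat is the contravariant case, which was introduced in the same definition: if $F$ is contravariant, one instead gets $F(\psi) \comp F(\varphi) = 1_{F(B)}$ and $F(\varphi) \comp F(\psi) = 1_{F(A)}$, which still exhibits $F(\varphi): F(B) \to F(A)$ as an isomorphism with inverse $F(\psi)$. So the statement holds in either variance, and I would include a one-line remark to that effect after the main argument.
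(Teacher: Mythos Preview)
Your proposal is correct and follows essentially the same approach as the paper: produce the inverse $\psi$, apply $F$ to both composite identities, and invoke the functor axioms (b) and (c) to conclude. The paper's proof is just a more compressed version of your argument and omits your remark on the contravariant case.
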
	
\begin{proof}
	Let $\psi$ be $\varphi^{-1}$ in $\script{C}.$ Computing $F(\varphi\comp \psi)$ and $F(\psi\comp \varphi),$ we see that by property (b) of the definition of a functor, we have that \begin{align*} 1_{F(A)}=F(1_A)=F(\psi)\comp F(\varphi)  &&  1_{F(B)}=F(1_B)=F(\varphi)\comp F(\psi)\end{align*} 
	Hence, $F(\varphi)$ is an isomorphism.  
\end{proof}

The following examples of functors will play an exceptional role in section 3.3 below. 
\begin{example}\text{}
	\begin{enumerate}
		\item Let $(-)^{op}:\textbf{Cat}\to \textbf{Cat}$ be an endofunctor of the category of categories (this morphisms in this category are functors). This sends a category $\script{C}$ to the $\textbf{opposite category}$ $\script{C}^{op}.$ The objects of this category are the objects of $\script{C}$ but the morphisms have their target and source flipped. That is, if $f:A\to B$ is a morphism in $\script{C}$ then $f^{op}:B\to A$ is a morphism in $\script{C}^{op}.$ This allows us to redefine contravariant functors as covariant functors from the opposite category. As an added fact, $(\script{C}^{op})^{op}=\script{C}.$ 
		\item Consider $\Hom_\script{C}(-,-):\script{C}^{op}\times \script{C}\to \textbf{Set}.$ This is a \textit{bifunctor} and is covariant in the first argument and covariant in the second argument.  
		\item In $R$-\textbf{Mod}, $-\tensor_R-$ is a bifunctor, covariant in both arguments. As we assume $R$ is commutative, $\tensor$ makes $R$-\textbf{Mod} into a symmetric monoidal category \footnote{We shall not define this here, but instead suggest \cite[Chapter XI]{Kassel1995}. Kassel uses the term \textit{tensor cateogry} which is equivalent to ``monoidal cateogry."}.  Algebras are monoid objects in this category. 
		\item Let $U:\textbf{Grp}\to \textbf{Set}$ be the \textit{forgetful functor} which sends a group to its underlying set. In fact, in any concrete category we have a forgetful functor to$\textbf{Set}.$ 
	\end{enumerate}
\end{example}

If $\script{C}$ and $\script{D}$ are categories, then denote by \[ \operatorname{Fun}(\script{C},\script{D}):=\{ F:\script{C}\to \script{D}\}\]
We want to turn this into a category. In order to do this, we need to introduce the idea of a \textit{morphism of functors}. 
\begin{definition}
	Let $F,G:\script{C}\to \script{D}$ be two functors of the same variance. A \textbf{natural transformation} is a family of morphisms $\{\tau_X\}$ which intertwine the functors as the following diagram shows \[ \begin{tikzcd}
F(X) \arrow[r, "\tau_X"] \arrow[d, "F(f)"'] & G(X) \arrow[d, "G(f)"] \\
F(Y) \arrow[r, "\tau_Y"']                   & G(Y)                  
\end{tikzcd}   \] In this case we write $\tau:F\to G.$  
\end{definition} 
These define the morphisms in $\operatorname{Fun}(\script{C},\script{D})$ and make it a category. Isomorphisms are natural transformations for which every $\tau_X$ is an isomorphism in $\script{D}.$ In this case, we say that two functors are \textit{naturally equivalent}.  The following lemma gives a description of Natural transformations involving the $\Hom(A,-)$ functor. 
\begin{lemma}[Yoneda Lemma]
	Let $G:\script{C}\to \textbf{Set}$ be a functor and $A$ an object in $\script{C}.$ Then there is a bijection \[ y:\operatorname{Nat}(\Hom(A,-),G)\to G(A)\] 
\end{lemma}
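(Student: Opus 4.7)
The plan is to define $y$ explicitly and construct an inverse, both of which are forced by the naturality condition. Given a natural transformation $\tau:\Hom(A,-)\to G$, the only canonical element of $G(A)$ available is built by feeding the only canonical element of $\Hom(A,A)$ — namely $1_A$ — through the component $\tau_A$. So I would set
\[ y(\tau) := \tau_A(1_A) \in G(A). \]

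For the inverse, given $x \in G(A)$, I need to produce a natural transformation $\tau^x:\Hom(A,-)\to G$. The naturality square forces the definition: for each object $B$, let $\tau^x_B:\Hom(A,B)\to G(B)$ be the map $f \mapsto G(f)(x)$. First I would verify that this is indeed a natural transformation by taking any morphism $h:B\to C$ and checking that the square
\[ \begin{tikzcd}
\Hom(A,B) \arrow[r,"\tau^x_B"] \arrow[d,"h\comp -"'] & G(B) \arrow[d,"G(h)"] \\
\Hom(A,C) \arrow[r,"\tau^x_C"'] & G(C)
\end{tikzcd} \]
commutes; tracing $f\in \Hom(A,B)$ around both sides yields $G(h)(G(f)(x))$ and $G(h\comp f)(x)$, which agree by functoriality of $G$.

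Next I would check that the two assignments are mutually inverse. One direction is immediate: $y(\tau^x) = \tau^x_A(1_A) = G(1_A)(x) = 1_{G(A)}(x) = x$, again by functoriality. The other direction, showing $\tau^{y(\tau)} = \tau$, is the step I expect to be the main conceptual obstacle — it is where naturality actually earns its keep. For any $f:A\to B$, apply naturality of $\tau$ to $f$, giving a commuting square with $\Hom(A,f) = f\comp -$ on the left and $G(f)$ on the right. Chasing the element $1_A\in \Hom(A,A)$ through both paths yields
\[ \tau_B(f) \;=\; \tau_B(f\comp 1_A) \;=\; G(f)(\tau_A(1_A)) \;=\; G(f)(y(\tau)) \;=\; \tau^{y(\tau)}_B(f), \]
so the two natural transformations agree componentwise. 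This establishes the bijection, and the proof is complete. The only real subtlety is keeping variance straight and recognizing that naturality at the single object $A$, evaluated on $1_A$, already determines $\tau$ globally — that is the content of the lemma.
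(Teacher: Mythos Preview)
Your proof is correct and takes essentially the same approach as the paper: both define $y(\tau)=\tau_A(1_A)$, both construct the candidate inverse via $f\mapsto G(f)(x)$, and both rely on the same naturality-square chase at $1_A$. The only difference is organizational --- the paper argues injectivity and surjectivity separately, whereas you build an explicit two-sided inverse --- but the underlying computations are identical.
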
 
\begin{proof}
	Define $y(\tau)=\tau_A(1_A).$ To show this is injective, suppose $y(\tau)=\tau_A(1_A)=\sigma_A(1_A)=y(\sigma).$ For any object $B\in \script{C},$ and $\varphi\in \Hom(A,B),$ we have the following commutative diagram \[  \begin{tikzcd}
\Hom(A,A)\arrow[r, "\tau_A"] \arrow[d, "\varphi_*"'] & G(A) \arrow[d, "G\varphi"] \\
\Hom(A,B) \arrow[r, "\tau_B"']                   & G(B)                  
\end{tikzcd}  \]
So that $\tau_B(\varphi)=G\varphi \tau_A(1_A)=G\varphi \sigma_A(1_A)=\sigma_B(\varphi).$ Hence, $\tau_B=\sigma_B$ for all $B\in \script{C}$ and thus $\tau=\sigma.$ So $y$ is injective. 

To show it is surjective, let $x\in G(A).$ For every object $B\in \script{C}$ and $\psi\in \Hom(A,B),$ define $\tau_B(\psi)=(G\psi)(x).$ We claim then that $\tau$ is a natural transformation. Indeed, for any $\theta\in \Hom(B,C),$ then commuting square \[   \begin{tikzcd}
\Hom(A,B)\arrow[r, "\tau_B"] \arrow[d, "\theta_*"'] & G(B) \arrow[d, "G\theta"] \\
\Hom(A,C) \arrow[r, "\tau_C"']                   & G(C)                  
\end{tikzcd}  \]   
Then going clockwise we get that $G\theta\tau_B(\psi)=G\theta G\psi(x).$ Going counter-clockwise we have that $\tau_C(\theta_*\psi)=\tau_C(\theta\psi)=G\theta\psi(x).$ As $G$ is a functor, these are equal. Thus, $\tau$ is a natural transformation and $\tau_A(1_A)=G1_A(x)=x.$ Hence $y$ is bijective. This completes the proof.  
\end{proof}

Now let $F:\script{C}\to \script{D}$ be a functor and $X,Y\in \script{C}.$ Then $F$ induces a function on $\Hom$-sets \[ F_{X,Y}:\Hom_{\script{C}}(X,Y)\to \Hom_{\script{D}}(FX,FY)\]
which takes a function $f$ to $F(f).$ 
\begin{definition}
	We say that $F$ is: \begin{enumerate}
		\item \textbf{Full} if $F_{X,Y}$ is surjective for all $X,Y.$ 
		\item \textbf{Faithful} if $F_{X,Y}$ is injective for all $X,Y.$ 
		\item \textbf{Fully-Faithful} if $F_{X,Y}$ is bijective for all $X,Y.$ 
	\end{enumerate}
\end{definition}

Therefore, concrete categories are those which admit a faithful functor into \textbf{Set}. In general, fully-faithful functors play the same role as bijective functions on sets. In \textbf{Cat} isomorphisms are necessarily fully-faithful. In general, a bijection on the level of $\Hom$-sets is incredibly important.  

\subsection{Adjoints} 
We now explore the final claim of the previous part. Let $F:\script{C}\rightleftarrows \script{D}:G$ be functors such that there exists a natural transformation $\eta:1_C\to GF.$ Then we want to understand the induced morphism \[ \Hom_\script{D}(FX,Y)\to \Hom_{\script{C}}(X,GY)\]
\begin{definition}
	Let $F:\script{C}\rightleftarrows \script{D}:G.$ We say that $(F,G)$ are an \textbf{adjoint pair} if \[  \Hom_\script{D}(FX,Y)\overset{\sim}{\longrightarrow} \Hom_{\script{C}}(X,GY)\]
	for all $X\in \script{C}$, $Y\in \script{D}.$ Further, the bijection is natural in $X$ and $Y.$ In this case, we say that $F$ is \textit{left adjoint} to $G$ and $G$ is \textit{right adjoint} to $F.$ We denote this by $F\dashv  G.$ 
\end{definition}
\begin{theorem}
	An adjoint pair $(F,G)$ induces two natural transformations $\eta:1_\script{C}\to GF$ and $\varepsilon:FG\to 1_{\script{D}}$ such that the compositions \begin{align*}
		F\overset{F\eta}{\longrightarrow} FGF \overset{\varepsilon F}{\longrightarrow} F && G\overset{\eta G}{\longrightarrow} GFG \overset{G \varepsilon }{\longrightarrow} G
	\end{align*} 
	are the identity morphisms. 
\end{theorem}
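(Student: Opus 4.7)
The plan is to first construct the unit $\eta$ and counit $\varepsilon$ directly from the adjunction bijection, and then verify the two triangle identities using only naturality and the defining property of each. Let $\phi_{X,Y}:\Hom_{\script{D}}(FX,Y)\overset{\sim}{\to}\Hom_{\script{C}}(X,GY)$ denote the adjunction bijection, which by hypothesis is natural in $X$ and $Y$. I would define $\eta_X=\phi_{X,FX}(1_{FX})$ for each $X\in\script{C}$ and $\varepsilon_Y=\phi^{-1}_{GY,Y}(1_{GY})$ for each $Y\in\script{D}$. That these assemble into natural transformations $\eta:1_{\script{C}}\to GF$ and $\varepsilon:FG\to 1_{\script{D}}$ follows routinely by inserting the definitions into the naturality square of $\phi$ for a morphism $f:X\to X'$ (respectively $g:Y\to Y'$) and chasing $1_{FX}$ (respectively $1_{GY}$) through the resulting diagram; both possible routes produce the same element because $\phi$ is natural in the indicated variable.

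For the first triangle identity, I would evaluate $\phi_{X,FX}\bigl(\varepsilon_{FX}\comp F\eta_X\bigr)$. Naturality of $\phi$ in the first variable, applied to the morphism $\eta_X:X\to GFX$, rewrites this as $\phi_{GFX,FX}(\varepsilon_{FX})\comp \eta_X$. But $\phi_{GFX,FX}(\varepsilon_{FX})=1_{GFX}$ by the very definition of $\varepsilon$, so the composite collapses to $\eta_X=\phi_{X,FX}(1_{FX})$. Since $\phi_{X,FX}$ is a bijection, I conclude $\varepsilon_{FX}\comp F\eta_X=1_{FX}$. The second triangle identity is obtained by the dual argument: applying $\phi^{-1}_{GY,Y}$ to $G\varepsilon_Y\comp \eta_{GY}$ and using naturality of $\phi^{-1}$ in the second variable together with the defining equation $\phi_{GY,FGY}(1_{FGY})=\eta_{GY}$ reduces the expression to $\varepsilon_Y=\phi^{-1}_{GY,Y}(1_{GY})$, and injectivity of $\phi^{-1}_{GY,Y}$ yields $G\varepsilon_Y\comp \eta_{GY}=1_{GY}$.

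The main obstacle is purely bookkeeping: one must apply naturality in the correct variable with the correct specific morphism, and one must keep straight whether $\phi$ or $\phi^{-1}$ is in play at each step, since confusing the two reverses the direction of the required identity. Beyond that, no further structural input is needed; everything is forced by the bijection and its naturality, which is why this proof is often stated as a corollary of the Yoneda lemma applied to representable functors.
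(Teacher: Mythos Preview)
Your proof is correct and follows essentially the same approach as the paper: both define $\eta_X=\phi_{X,FX}(1_{FX})$ and $\varepsilon_Y=\phi_{GY,Y}^{-1}(1_{GY})$, verify naturality via the naturality of $\phi$, and then obtain the triangle identities by applying the adjunction bijection and its naturality to reduce $G\varepsilon_Y\comp\eta_{GY}$ (respectively $\varepsilon_{FX}\comp F\eta_X$) to an identity. Your write-up is in fact somewhat more explicit than the paper's about which variable of $\phi$ is being used at each step, but the argument is the same.
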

\begin{proof}
	Let $\varphi_{X,Y}:\Hom_\script{D}(FX,Y)\overset{\sim}{\longrightarrow} \Hom_{\script{C}}(X,GY)$ be the bijection for the adjoint pair. Then if $Y=FX,$ the element $1_{FX}\in \Hom_{\script{D}}(FX,FX)$ induces a morphism \[ \eta_X:=\varphi_{FX,FX}(1_{FX}): X\to GFX\] 
	Define $\eta:1_\script{C}\to GF$ by $\eta_X.$ We need to show that $\eta$ is natural in $X.$ Consider the following diagram: 
	  \[   \begin{tikzcd}
X\arrow[r, "\eta_X"] \arrow[d, "f"'] & GFX \arrow[d, "GF(f)"] \\
Y \arrow[r, "\eta_Y"']                   & GFY                 
\end{tikzcd}  \] 
It commutes by the fact that $\varphi$ is natural in both $X,Y.$ Similarly, we define $\varepsilon_Y:=\varphi_{GY,Y}^{-1}(1_{GY}).$ Its naturality is checked in a similar manner. Now, \[ 1_{GY}=\varphi_{GY,Y}(\varepsilon_Y)=G(\varepsilon_Y)\comp \eta_{GY}\]
again by the naturality of $\varphi$. We have the respective statement for $1_{FX}.$ This completes the proof.  
\end{proof}
\begin{remark}
	The natural transformations $\eta:1_{\script{C}}\to GF$ and $\varepsilon:GF\to 1_{\script{D}}$ are called the \textbf{Unit} and $\textbf{Counit}$ of the adjunction. We then denote an adjunction as a quadruple $(F,G,\eta,\varepsilon).$ 
\end{remark}	

\begin{corollary}
	If $(F,G,\eta,\varepsilon)$ and $(F',G,\eta',\varepsilon')$ are adjoint pairs, then $F$ and $F'$ are naturally isomorphic. 
\end{corollary}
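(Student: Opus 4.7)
The plan is to show that $F$ and $F'$, both being left adjoint to the same functor $G$, represent the same $\textbf{Set}$-valued bifunctor on $\script{C}^{op}\times \script{D}$, and then extract a natural isomorphism $F\cong F'$ by invoking the Yoneda Lemma proved above.

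Write $\varphi_{X,Y}:\Hom_\script{D}(FX,Y)\overset{\sim}{\to}\Hom_\script{C}(X,GY)$ and $\varphi'_{X,Y}:\Hom_\script{D}(F'X,Y)\overset{\sim}{\to}\Hom_\script{C}(X,GY)$ for the two adjunction bijections. First I would compose them into
\[ \Psi_{X,Y}:=(\varphi'_{X,Y})^{-1}\comp \varphi_{X,Y}:\Hom_\script{D}(FX,Y)\overset{\sim}{\longrightarrow}\Hom_\script{D}(F'X,Y), \]
which is a bijection natural in both $X$ and $Y$ since each $\varphi$ is. Fixing $X$ and letting $Y$ vary yields a natural isomorphism of representable functors $\Psi_{X,-}:\Hom_\script{D}(FX,-)\Rightarrow \Hom_\script{D}(F'X,-)$.

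Next I would apply the Yoneda Lemma with $A=FX$ and $G=\Hom_\script{D}(F'X,-)$ to obtain an element $\alpha_X:=\Psi_{X,FX}(1_{FX})\in \Hom_\script{D}(F'X,FX)$ that completely determines $\Psi_{X,-}$ via the formula $\Psi_{X,Y}(f)=f\comp \alpha_X$. Running the same construction on the inverse natural isomorphism $\Psi^{-1}_{X,-}$ produces $\beta_X:=\Psi^{-1}_{X,F'X}(1_{F'X})\in \Hom_\script{D}(FX,F'X)$ with $\Psi^{-1}_{X,Y}(g)=g\comp \beta_X$. Then $\beta_X\comp \alpha_X=\Psi_{X,F'X}(\beta_X)=1_{F'X}$ and $\alpha_X\comp \beta_X=\Psi^{-1}_{X,FX}(\alpha_X)=1_{FX}$, so $\beta_X:FX\to F'X$ is an isomorphism with inverse $\alpha_X$.

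Finally, I would promote $\beta=\{\beta_X\}$ (equivalently $\alpha$) to a natural transformation. Naturality of $\Psi$ in the first slot says that for $h:X\to X'$ the identity $\Psi_{X,Y}(f\comp Fh)=\Psi_{X',Y}(f)\comp F'h$ holds; setting $Y=FX'$ and $f=1_{FX'}$ collapses this to $Fh\comp \alpha_X=\alpha_{X'}\comp F'h$, which is precisely the naturality square for $\alpha:F'\Rightarrow F$, and its pointwise inverse $\beta$ is then automatically natural too. The main obstacle here is purely bookkeeping: the Yoneda Lemma as stated is only for a fixed $A$, so one has to track the extra naturality in $X$ by hand and verify that the object-level isomorphisms $\alpha_X$ assemble coherently across $\script{C}$, which is exactly what naturality of $\Psi$ in its first slot guarantees.
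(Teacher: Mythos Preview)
Your argument is correct. Both your proof and the paper's encode the same principle---a left adjoint represents $\Hom_\script{C}(-,G-)$, and representing objects are unique up to canonical isomorphism---but they package it differently. The paper invokes the fact that each $\eta_X$ and $\eta'_X$ is a universal arrow from $X$ to $G$ and appeals to uniqueness of initial objects in the comma category, obtaining $\theta_X$ in one line; this is terse but leans on a universality statement not quite made explicit in the preceding theorem. You instead compose the two adjunction bijections, apply the Yoneda Lemma to extract $\alpha_X$ and $\beta_X$, and verify naturality by hand from naturality of $\Psi$ in the first variable. Your route is longer but entirely self-contained given what the paper has actually proved, and it makes the isomorphism $\beta_X=\Psi^{-1}_{X,F'X}(1_{F'X})$ concrete rather than existential.
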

\begin{proof}
	$\eta$ and $\eta'$ are universal arrows for each $x.$ Therefore, there exists a unique isomorphism $\theta_X:FX\to F'X$ for all $X\in \script{C}.$ This family of isomorphisms is natural in $X$ by the universality of the units. Hence, $F\cong_{\operatorname{Nat}} F'.$ 
\end{proof}

Adjoint functors play a large role in understanding isomorphisms of categories. In fact, two categories are \textit{equivalent} if there exists an adjoint pair $(F,G,\eta,\varepsilon)$ such that $\eta$ and $\varepsilon$ are natural isomorphisms. To build up some intuition, here are some examples of adjoint functors. 

\begin{example}
	\begin{enumerate}
		\item Let $\ip{}:\textbf{Set}\to \textbf{Grp}$ be the \textit{free group} functor and $U$ the forgetful functor. This sends a set $X$ to the group $\ip{X}$ which is the group generated by all words in the elements of $X.$ It is characterized by the property that for any function $f:X\to G$ a group, there exists a unique group homomorphism $\hat{f}:\ip{X}\to G.$ We claim this makes $\ip{}\dashv U.$ In fact, the universal property gives a bijection \[ \Hom_{\textbf{Grp}} (\ip{X},G) \leftarrow \Hom_{\textbf{Set}}(X,UG)\]
		In fact, for any concrete algebraic object we get an adjunction between the free functor and the forgetful functor in the same way. 
		\item Consider $\Hom(M,-)$ and $-\tensor_RM$ as covariant endofunctors of $R$-\textbf{Mod}. Then for any objects $A,B\in R$-\textbf{Mod}, there is a bijection \begin{align*}
			\Hom(A\tensor M,B)\to \Hom(A,\Hom(M,B)) && f\mapsto \tilde{f}
		\end{align*}
		where $ \tilde{f}(a)(m)=f(a\tensor m) .$ In this case, we have some additional facts that come from the adjunction. The two most interesting (and important) ones are: \begin{align*} \Hom(M,\prod A_i)=\prod \Hom(M,A_i) && M\tensor \bigoplus A_i=\bigoplus (M\tensor A_i)
		\end{align*} for arbitrary indexing sets. We will see shortly that this is more generally a property of adjoint functors between \textit{abelian categories}.
	\end{enumerate}
\end{example}

\subsubsection{Limits and Colimits} 
We now want to generalize the last example and understand products and coproducts in generic categories. These manifest as \textit{limits} and \textit{colimits} respectively. Recall that a product of two objects $A,B$ is an object $A\times B$ together with two maps $A\times B\to A$ and $A\times B\to B.$ To be more precise, this is somehow the universal object such that for any other object with maps $C\to A$ and $C\to B,$ there exists a unique map $C\to A\times B$ such that the following diragram commutes \[ \begin{tikzcd}
C \arrow[rd, "\exists!", dotted] \arrow[rrd, bend left] \arrow[rdd, bend right] &                               &   \\
                                                                                & A\times B \arrow[d] \arrow[r] & A \\
                                                                                & B                             &  
\end{tikzcd}\]

Let us now generalize this. 
\begin{definition}
	An \textbf{inverse system} in a category $\script{C}$ is a collection indexed by a partially ordered set $I$, $\{A_i,\varphi^j_i:A_j\to A_i\}_{i\preceq j}$ such that $\varphi_{k}^j\varphi_j^i=\varphi^i_k$ for all $ i\preceq j\preceq k.$ Equivalently, an inverse system is a functor $A:I^{op}\to \script{C}$ such that $A(i)=A_i$ and $A(i\to j)=\varphi_{i}^j.$ Therefore, $A\in \script{C}^{I^{op}}=\operatorname{Fun}(I^{op},\script{C}).$
\end{definition}
An inverse system is thus a \textit{diagram} in the category $\script{C}$ of shape $I^{op}.$ 
\begin{example}\label{limits} 
	\begin{enumerate}
		\item Let $I=\{1,2,3\}$ with the partial order $1\preceq 2$ and $1\preceq 3.$ Then diagrams of shape $I^{op}$ look like \[ \begin{tikzcd}
			\text{} & A \arrow[d]\\
			B\arrow[r] & C
		\end{tikzcd}\]
		\item If $I$ is discrete (that is the only partial order is equality) then a diagram of shape $I^{op}$ is an indexed family of objects. This is the case for products as above.  
		\item Let $M$ be a concrete object. Then the subsets of $M$ are ordered under inclusion and thus give a diagram of shape $M^{op}.$  
	\end{enumerate}
\end{example}
\begin{definition}
	Let $A\in \script{C}^{I^{op}}$ be an inverse system. Then we define the \textbf{inverse limit} (projective limit or limit) as the universal object $\varprojlim A_i$ together with morphisms $\alpha_j:\varprojlim A_i\to A_j$ for all $j$ satisfying the following compatibility conditions: 
	\begin{enumerate}
		\item $\varphi_i^j(\alpha_j)=\alpha_i$ for $i\preceq j.$ 
		\item If $C$ is an object of $\script{C}$ together with morphisms $\{\beta_i\}$ which are compatible with $A,$ then there exists a unique morphism so that the following diagram commutes for all $i\preceq j$: \[ \begin{tikzcd}
C \arrow[rd, "\exists!", dotted] \arrow[rrd, bend left,"\beta_j"] \arrow[rdd, bend right,"\beta_i"] &                               &   \\
                                                                                & \varprojlim A_i \arrow[d,swap,"\alpha_j"] \arrow[r,"\alpha_i"] & A_i \\
                                                                                & A_j     \arrow[ur,swap,"\varphi^j_i"]                        &  
\end{tikzcd}\] 
	\end{enumerate}
\end{definition}
These objects are complicated to look at but are so useful that it's worth the technicalities. The following examples tie together some previous topics which at first so not seem necessarily related but are all examples of limits.  
\begin{example}
	\begin{enumerate}
		\item	Consider the following diagram $D$ in $R$-\textbf{Mod} \[ \begin{tikzcd}
			\text{} & A \arrow[d,"f"]\\
			0\arrow[r] & C
		\end{tikzcd}\]
		Then $\varprojlim D=\ker f.$ In this case, we see that the limit must have the following set representation \[ \varprojlim D=\{ (x,y)\in 0\times A: 0=f(y)\}\]
		In fact, arbitrary limits exist in $R$-\textbf{Mod} by a simple argument considering sets like those above. 
		\item Clearly, products as above are now limits. over the discrete set $I=\{1,2\}.$ 
		\item We define the \textbf{pullback} of a diagram of the form Example \ref{limits} (a), to be their limit. Almost always, these have a set representation as in example (a) here. In this case, we denote $\varprojlim D=A\times_C B.$  
		\item If we want to define intersections without using elements, we can do it using limits. Let $A\to C$ and $B\to C$ be monic morphisms (they are subobjects). Taking the limit of this diagram we get \[  \begin{tikzcd}
			A\cap B \arrow[r] \arrow[d]& A \arrow[d,"i"]\\
			B\arrow[r,"j"] & C
		\end{tikzcd}\]
		The resulting morphisms are clearly monic. 
	\end{enumerate}
\end{example}

We have the dual notion to the above construction. 
\begin{definition}
	A \textbf{direct system} in a category $\script{C}$ is a collection indexed by a partially ordered set $I$, $\{A_i,\varphi^i_j:A_i\to A_j\}_{i\preceq j}$ such that $\varphi_{k}^j\varphi_j^i=\varphi^i_k$ for all $ i\preceq j\preceq k.$ Equivalently, an direct system is a functor $A:I \to \script{C}$ such that $A(i)=A_i$ and $A(i\to j)=\varphi_{j}^i.$ Therefore, $A\in \script{C}^{I}=\operatorname{Fun}(I,\script{C}).$
\end{definition}

\begin{example}\label{colimits} 
	\begin{enumerate}
		\item Let $I=\{1,2,3\}$ with the partial order $1\preceq 2$ and $1\preceq 3.$ Then diagrams of shape $I^{op}$ look like \[ \begin{tikzcd}
			 A \arrow[d] \arrow[r] & B\\
			C & 
		\end{tikzcd}\]
		\item If $I$ is discrete (that is the only partial order is equality) then a diagram of shape $I$ is an indexed family of objects. This is the case for products as above.  
		\item Let $M$ be a concrete object. Then the subsets of $M$ are ordered under inclusion and thus give a diagram of shape $M.$   
	\end{enumerate}
\end{example}
\begin{definition}
	Let $A\in \script{C}^{I}$ be an direct system. Then we define the \textbf{direct limit} (inductive limit or colimit) as the universal object $\varinjlim A_i$ together with morphisms $\alpha_j: A_j\to \varinjlim A_i$ for all $j$ satisfying the following compatibility conditions: 
	\begin{enumerate}
		\item $\alpha_j\varphi_j^i=\alpha_i$ for $i\preceq j.$ 
		\item If $C$ is an object of $\script{C}$ together with morphisms $\{\beta_i\}$ which are compatible with $A,$ then there exists a unique morphism so that the following diagram commutes for all $i\preceq j$: \[ \begin{tikzcd}
                                                              & A_i \arrow[rdd, "\beta_i", bend left] \arrow[d, "\alpha_i"] \arrow[ld, "\varphi^i_j"'] &   \\
A_j \arrow[rrd, "\beta_j", bend right] \arrow[r, "\alpha_j"'] & \varinjlim A \arrow[rd, "\exists!", dotted]                                            &   \\
                                                              &                                                                                        & C
\end{tikzcd}
		\]
	\end{enumerate}
\end{definition}

\begin{example}
	\begin{enumerate}
		\item	Consider the following diagram $D$ in $R$-\textbf{Mod} 
		 \[ \begin{tikzcd}
			 A \arrow[d] \arrow[r,"f"] & B\\
			0 & 
		\end{tikzcd}\]
		Then $\varinjlim D=\coker f.$ In this case, we see that the limit must have the following set representation \[ \varinjlim D=(B\ds 0)/\{ (f(x),0)\in 0\ds A: x\in A\}\]
		In fact, arbitrary colimits exist in $R$-\textbf{Mod} by a simple argument considering sets like those above. 
		\item Clearly, coproducts as above are now colimits. over the discrete set $I=\{1,2\}.$ 
		\item We define the \textbf{pushout} of a diagram of the form Example \ref{colimits} (a), to be their colimit. Almost always, these have a set representation as in example (a) here. In this case, we denote $\varprojlim D=A\ds_C B.$  
		\item If we want to define internal sums without using elements, we can do it using colimits. Let $A,B$ be two objects. Then $A\cap B\to A$ and $A\cap B\to B$ are monic morphisms (they are subobjects). Taking the colimit of this diagram we get \[  \begin{tikzcd}
			A\cap B \arrow[r,"j"] \arrow[d,swap,"i"]& A \arrow[d] \\
			B \arrow[r]& A+B
		\end{tikzcd}\]
		The resulting morphisms are clearly monic. 
	\end{enumerate}
\end{example}

The following proposition gives motivation for thinking of limits and colimits as functors. 

\begin{proposition}
	Let $I$ be a partially ordered set. Then all limits and colimits exist in $R$-\textbf{Mod}.  
\end{proposition}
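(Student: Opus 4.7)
The plan is to construct both the limit and colimit explicitly as submodules and quotient modules of the product and direct sum, respectively, and then verify the universal properties. Since $R$-\textbf{Mod} has all products and coproducts indexed by any set (the direct product $\prod_i A_i$ and direct sum $\bigoplus_i A_i$), the strategy is to cut these objects down (for limits) or quotient them (for colimits) by relations encoding compatibility with the structure maps of the system.

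For limits, given an inverse system $\{A_i, \varphi_i^j\}_{i \preceq j}$, I would define
\[ \varprojlim A_i = \left\{ (a_i) \in \prod_i A_i : \varphi_i^j(a_j) = a_i \text{ for all } i \preceq j\right\} \]
First I would check that this is an $R$-submodule of $\prod A_i$, which follows from the $R$-linearity of each $\varphi_i^j$ (sums and scalar multiples of compatible tuples remain compatible). The structure maps $\alpha_j$ are the restrictions of the canonical projections $\pi_j: \prod A_i \to A_j$, and the compatibility $\varphi_i^j \alpha_j = \alpha_i$ is immediate from the definition. For the universal property, given any $C$ with compatible maps $\beta_i: C \to A_i$, the universal property of the product gives a unique $\beta: C \to \prod A_i$ with $\pi_i \beta = \beta_i$; compatibility of the $\beta_i$ forces the image of $\beta$ to land in $\varprojlim A_i$, yielding the required unique factorization.

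For colimits, given a direct system $\{A_i, \varphi^i_j\}_{i \preceq j}$, I would define
\[ \varinjlim A_i = \bigoplus_i A_i / N \]
where $N$ is the submodule generated by all elements $\iota_i(a) - \iota_j(\varphi^i_j(a))$ for $i \preceq j$ and $a \in A_i$, with $\iota_k: A_k \hookrightarrow \bigoplus A_i$ the canonical inclusions. The structure maps $\alpha_j$ are the compositions of $\iota_j$ with the quotient, and the relation $\alpha_j \varphi_j^i = \alpha_i$ holds by construction. For the universal property, given $C$ with compatible maps $\beta_i$, the coproduct property gives a unique $\tilde\beta: \bigoplus A_i \to C$ extending the $\beta_i$; compatibility forces $\tilde\beta$ to vanish on $N$, and so $\tilde\beta$ descends uniquely through the quotient to give the required map $\varinjlim A_i \to C$.

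The main obstacle is purely bookkeeping: one must check carefully that $N$ is well defined as a submodule and that every relation $\tilde\beta(\iota_i(a) - \iota_j(\varphi^i_j(a))) = \beta_i(a) - \beta_j \varphi^i_j(a) = 0$ actually follows from the compatibility of $\{\beta_i\}$, so that the factorization through the quotient is legitimate. Uniqueness in both cases is forced because the structure maps $\alpha_j$ jointly generate the limit (as a subobject of the product) and cogenerate the colimit (as a quotient of the coproduct). No new ideas beyond the explicit constructions are required, so the proposition reduces to these two diagram chases.
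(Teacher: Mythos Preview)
Your proof is correct and follows essentially the same approach as the paper: the paper constructs the limit as the submodule $L = \{(a_i) \in \prod A_i : \varphi_i^j(a_j) = a_i\}$ of compatible threads and verifies the universal property exactly as you do. The paper then declares the colimit case ``formally dual and left as a fun exercise,'' so your explicit quotient construction $\bigoplus A_i / N$ actually goes further than the paper's own proof.
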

\begin{proof}
	We prove the case of limits. The case of colimits is then formally dual and left as a fun exercise. Consider $L\subseteq \prod_{i\in I} A_i$ the submodule of threads \[ L=\{ (a_i): \varphi^j_i(a_j)=a_i\}\]
	By construction this comes with compatible maps $\alpha_i:L \to A_i.$ 
	
	Now let $X$ be any module with compatible maps $\{\beta_i\}$. Define $\theta:X\to \prod A_i$ by \[ \theta(x)=(\beta_i(x))\]
	Then $\im \theta\subseteq L$. Further $\alpha_i\theta: x\mapsto (f_i(x))\mapsto f_i(x).$ Hence, the limit diagram commutes. To show that $\theta$ is unique, let $\pi:X\to L$ be another such morphism. Then $\pi(x)=(a_i)$ and $\alpha_i\pi(x)=a_i.$ Thus if $\alpha_i\pi(x)=f_i(x),$ we have that $\pi=\theta$ and thus \[ L\cong \varprojlim A_i\] This completes the proof.  
\end{proof}
This proposition says that $R$-\textbf{Mod} is complete and cocomplete (meaning that all limits and colimits exist). So clearly, $\varinjlim:R$-$\textbf{Mod}^I\to R$-\textbf{Mod} is functorial. We would like to show this in general. This is not true however. 
\begin{example}
	Let $\textbf{Ring}$ be the category of rings. Then if $\{R_i\}$ is an indexed family of objects, \[ \varinjlim R_i\not\in \textbf{Ring}\]
	Why is this? Well, the unit element is necessarily $(1,1,...).$ But this is non-zero in every entry and thus cannot be an element of the colimit (in this case it is the infinite direct sum).  In fact, most categories are not complete or cocomplete. When they are, it is obvious that $\varinjlim$ and $\varprojlim$ are functors. For more information, see \cite{Hilton1997}.    
\end{example}

\subsection{Abelian Categories and $R$-\textbf{Mod}}
We now move into the final subsection. Here we are interested in categories which generalize the category of $R$-modules or abelian groups. The defining charactersitics of these categories is that we can: 
\begin{itemize}
	\item Always take kernels and cokernels
	\item Have an object $0.$
	\item Can take arbitrary products and coproducts. 
	\item $\Hom(A,B)$ is an abelian group (or $R$-module). 
\end{itemize}
What of these properties is necessary in generalizing? This section will give an answer to this. At the end, we will introduce some homological algebra. This will allow us to associate invariants to modules. We start with additive categories. 

\begin{definition}
	A category $\script{A}$ is \textbf{additive} if the following are true: 
	\begin{enumerate}
		\item $\Hom(A,B)$ is an abelian group for all $A,B\in \script{A}.$
		\item There exists a zero object $0.$
		\item Composition is distributive. That is $f(g+h)=fg+fh$ and $(g+h)i=gi+hi.$ 
		\item Finite products and coproducts exist.   
	\end{enumerate}
	A functor $F:\script{A}\to \script{B}$ is \textbf{additive} if $F(f+g)=F(f)+F(g).$ That is the morphism \[ F_{X,Y}:\Hom(X,Y)\to \Hom(FX,FY)\] is a group homomorphism. 
\end{definition}
The following proposition gives some properties of additive categories and additive functors. 
\begin{proposition}
	Let $\script{A},\script{B}$ be additive categories. Then finite products and coproducts are isomorphic. Moreover if $T$ is an additive functor, then $T(A\ds B)=T(A)\ds T(B).$ 
\end{proposition}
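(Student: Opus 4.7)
The plan is to prove this in two steps: first, show that in any additive category a binary product $A\times B$ can be simultaneously equipped with the structure of a coproduct (a \emph{biproduct}), so that $A\times B\cong A\ds B$; second, show that an additive functor automatically preserves this biproduct structure. The key technical tool in both steps is a characterization of the biproduct by a system of five equations involving projections, injections, and the additive structure on $\Hom$-sets.

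For the first step, I would use the zero object to define $\iota_A:A\to A\times B$ via the universal property of the product as the unique arrow with $\pi_A\iota_A=1_A$ and $\pi_B\iota_A=0$, and analogously $\iota_B:B\to A\times B$ with $\pi_B\iota_B=1_B$, $\pi_A\iota_B=0$. The crucial identity is
\[
\iota_A\pi_A+\iota_B\pi_B=1_{A\times B}.
\]
To prove it, compute $\pi_A(\iota_A\pi_A+\iota_B\pi_B)=\pi_A+0=\pi_A$ and similarly $\pi_B(\iota_A\pi_A+\iota_B\pi_B)=\pi_B$; by the uniqueness clause in the universal property of $A\times B$, the endomorphism on the left must equal $1_{A\times B}$. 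With this relation in hand, I would verify the coproduct universal property directly: given $f:A\to C$ and $g:B\to C$ set $h:=f\pi_A+g\pi_B$, so that $h\iota_A=f$ and $h\iota_B=g$, and for any other $h'$ satisfying these identities, $h'=h'\cdot 1_{A\times B}=h'(\iota_A\pi_A+\iota_B\pi_B)=f\pi_A+g\pi_B=h$. Hence $A\times B$ is also a coproduct, and uniqueness of universal objects gives $A\times B\cong A\ds B$.

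For the second claim, observe first that an additive functor $T$ satisfies $T(0)=0$ as a morphism: $T(0)+T(0)=T(0+0)=T(0)$ in an abelian group forces $T(0)=0$. Since $T$ also preserves identities and composition, applying $T$ to each of the five biproduct equations
\[
\pi_A\iota_A=1_A,\quad \pi_B\iota_B=1_B,\quad \pi_A\iota_B=0,\quad \pi_B\iota_A=0,\quad \iota_A\pi_A+\iota_B\pi_B=1_{A\times B}
\]
yields the corresponding equations for $T(A),T(B),T(A\times B)$ with transported morphisms $T(\iota_\bullet),T(\pi_\bullet)$. By the first step applied in $\script{B}$, these five equations are sufficient to characterize a biproduct of $T(A)$ and $T(B)$; hence $T(A\ds B)\cong T(A)\ds T(B)$ with the canonical comparison map.

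The main obstacle is establishing the biproduct identity $\iota_A\pi_A+\iota_B\pi_B=1_{A\times B}$: this is the one place where the abelian group structure on $\Hom$-sets genuinely intertwines with the universal property of the product, and once it is secured, both the coproduct verification and the functorial transfer become formal consequences.
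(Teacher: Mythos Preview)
The paper does not actually prove this proposition; it simply cites \cite{Rotman2009}. Your argument is correct and is precisely the standard proof one finds in that reference: characterize the biproduct by the five equations, then observe that an additive functor transports those equations verbatim. One small point worth making explicit in your second step: in step one you deduced the coproduct universal property from the five equations alone (the product universal property was only used to \emph{establish} the identity $\iota_A\pi_A+\iota_B\pi_B=1$), and by a symmetric argument the five equations also yield the product universal property; so once $T$ preserves the five equations, $T(A\times B)$ is automatically both a product and a coproduct of $T(A),T(B)$ in $\script{B}$, with no further appeal needed.
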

For a proof of this statement see \cite{Rotman2009}. 

Now, using the constructions of $\ker$ and $\coker$ from above, we can prove 
\begin{lemma}
	Let $f\in \Hom_{\script{A}}(A,B)$ be a morphism in an additive category. \begin{enumerate}
		\item If $\ker f$ exists, then $f$ is monic if and only if $\ker f=0.$
		\item If $\coker f$ exists, then $f$ is epic if and only if $\coker f=0.$ 
	\end{enumerate}	
\end{lemma}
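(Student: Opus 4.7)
The plan is to prove (a) directly, then observe that (b) follows by formal duality in the opposite (still additive) category, since the opposite of an additive category is additive and reverses $\ker \leftrightarrow \coker$ and monic $\leftrightarrow$ epic. So I will focus on (a) and only sketch the dual.

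For the ($\Leftarrow$) direction of (a), assume $\ker f = 0$. To verify $f$ is monic, suppose $g, h : X \rightrightarrows A$ satisfy $fg = fh$. Here is where I would invoke the additive structure: since $\Hom(X,B)$ is an abelian group and composition is bilinear (distributive), I can rewrite this as $f(g-h) = fg - fh = 0$. By the universal property of $\ker f$, the map $g-h$ factors through $\ker f = 0$, which forces $g-h = 0$, i.e.\ $g = h$. So $f$ is monic. This direction is the easier one.

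For ($\Rightarrow$), assume $f$ is monic and let $k : K \to A$ denote the kernel map. Since $fk = 0 = f \cdot 0_{K,A}$ (using the zero morphism from $K$ to $A$), the monic hypothesis yields $k = 0$. The subtle step is then to conclude that the \emph{object} $K$ is a zero object, not just that the structure map is zero. I would argue this from the universal property applied to an arbitrary object $X$: any morphism $X \to A$ composing to $0$ with $f$ factors uniquely through $k$; applied to the zero morphism $X \to A$, this says there is a unique morphism $X \to K$, since any $h : X \to K$ satisfies $kh = 0\cdot h = 0$. Hence $\Hom(X,K)$ is a singleton for every $X$, i.e.\ $K$ is terminal. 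In an additive category $\Hom(X,K)$ is an abelian group, and a one-element abelian group is the trivial group, so $K$ is also initial; thus $K = 0$. This is the step I expect to be the main obstacle, precisely because one must resist the urge to reason ``elementwise'' and instead use the universal property together with the abelian-group structure on Hom-sets.

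Finally, part (b) follows by dualizing: applying (a) to $f^{\mathrm{op}} : B \to A$ in $\script{A}^{\mathrm{op}}$, using that $\ker(f^{\mathrm{op}}) = \coker(f)$, that $f^{\mathrm{op}}$ is monic in $\script{A}^{\mathrm{op}}$ iff $f$ is epic in $\script{A}$, and that $\script{A}^{\mathrm{op}}$ is again additive (abelian-group Hom-sets and bilinearity of composition are self-dual properties). Alternatively, one can just rerun the two arguments above with arrows reversed, replacing the identity $f(g-h) = fg - fh$ with $(p-q)f = pf - qf$ and the kernel universal property with the cokernel's.
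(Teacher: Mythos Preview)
Your proof is correct and follows essentially the same approach as the paper's. Both use $f\iota = 0 = f\cdot 0$ together with $f$ monic to conclude the kernel map is zero, and both invoke the universal property of the kernel for the converse; the paper simply stops once $\iota = 0$, whereas you add the extra (and worthwhile) step of deducing that the kernel \emph{object} is zero via the uniqueness clause of the universal property. Your duality argument for (b) matches the paper's one-line ``the proof for cokernels is dual.''
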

\begin{proof}
	Let $\iota:\ker f\to A$ be the morphism from the diagramatic definition above. If $\iota=0,$ and if $g:X\to A$ satisfies $fg=0,$ then by the universal property of limits, there exists a morphism $\theta:X\to \ker f$ with $g=\iota \theta=0.$ Hence, $f$ is monic. 
	
	For for the opposite direction consider the diagram \[ K\underset{0}{\overset{\iota}{\rightrightarrows}}A\overset{f}{\to} B \]
	Since $f\iota=0=f0,$ we have that $\iota=0.$ The proof for cokernels is dual to this one. 
\end{proof}

\begin{definition}
	An additive category $\script{A}$ is \textbf{Abelian} if \begin{enumerate}
		\item Every morphism has a kernel and cokernel.
		\item Every monomorphism is a kernel and every epimorphism is a cokernel. 
	\end{enumerate}
\end{definition}
\begin{example}
	In $R$-\textbf{Mod}, we have that every submodule $S\subseteq M$ can be realized as a kernel via the map $M\to M/S.$ Cokernels are then the projections as given by the first isomorphism theorem (Theorem \ref{Module_Isos}).  Therefore, the requirements of an abelian category make it look strikingly like $R$-\textbf{Mod}.
\end{example}

We are now able to form the same definitions as in Chapter 2, but now in the context of abelian categories. 
\begin{definition}
	A sequence of morphisms $A\overset{f}{\to} B\overset{g}{\to} C$ in $\script{A}$ is called \textbf{exact} if $\ker g=\im f$ as subobjects in $\script{A}.$ Now let $0\to A\to B\to C\to 0$ be exact in $\script{A}$ We say an additive functor $F:\script{A}\to \script{B}$ between abelian categories is \begin{enumerate}
		\item \textbf{Left Exact} if $0\to FA\to FB\to FC$ is exact.
		\item \textbf{Right Exact} if $FA\to FB\to FC\to 0$ is exact.
		\item \textbf{Half Exact} if $FA\to FB\to FC$ is exact. 
		\item \textbf{Exact} if $0\to FA\to FB\to FC\to 0$ is exact. 
	\end{enumerate}
\end{definition}

\begin{lemma}[Snake Lemma] \label{Snake_Lemma}
	Consider the following commuting diagram in an abelian category  \[ 
\begin{tikzcd}
\text{} & A' \arrow[d, "\psi"'] \arrow[r, "\alpha_1"] & A \arrow[d, "\varphi "] \arrow[r, two heads, "\alpha_2"] & A'' \arrow[d, "\theta"] \arrow[r] & 0\\
0\arrow[r] & B' \arrow[r,hook, "\beta_1"']                    & B \arrow[r, "\beta_2"']                       & B''  &                  
\end{tikzcd}\] 
If the rows are exact, then there exists a morphism $\partial:\ker \theta\to \coker \psi$ making the following sequence exact \[ \ker \psi\to \ker \varphi\to \ker \theta\to \coker \psi \to \coker \varphi\to \coker \theta\]
\end{lemma}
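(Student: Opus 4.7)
The plan is to construct all six arrows in the proposed exact sequence and then check exactness at each interior position. Via the Freyd--Mitchell embedding theorem one may pretend to work in $R$-Mod and reason with elements; equivalently, every step below has a formulation in terms of universal properties of kernels and cokernels. I will organize the argument in three stages: (i) the four flanking induced morphisms, (ii) the connecting morphism $\partial$, and (iii) exactness at each of the four interior spots.

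First, I would produce the flanking arrows. The composite $\varphi \circ \alpha_1$ restricted to $\ker\psi$ equals $\beta_1 \circ \psi$ restricted to $\ker\psi$, which is zero, so by the universal property of $\ker\varphi$ there is a unique factorization $\ker\psi \to \ker\varphi$. An analogous argument with $\alpha_2,\varphi,\theta$ yields $\ker\varphi \to \ker\theta$. Dually, the composites $B' \xrightarrow{\beta_1} B \to \coker\varphi$ and $B \xrightarrow{\beta_2} B'' \to \coker\theta$ annihilate $\im \psi$ and $\im \varphi$ respectively, by commutativity of the original squares; the universal property of the cokernel then furnishes $\coker\psi \to \coker\varphi$ and $\coker\varphi \to \coker\theta$.

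Second, I would construct $\partial\colon \ker\theta \to \coker\psi$. Given $x \in \ker\theta$, surjectivity of $\alpha_2$ produces $a \in A$ with $\alpha_2(a) = x$; then $\beta_2(\varphi(a)) = \theta(\alpha_2(a)) = 0$, so $\varphi(a) \in \ker\beta_2 = \im\beta_1$, and since $\beta_1$ is monic there is a unique $b' \in B'$ with $\beta_1(b') = \varphi(a)$. Define $\partial(x) = [b'] \in \coker\psi$. Well-definedness: a second lift of $x$ differs from $a$ by some $\alpha_1(a')$, which changes $\varphi(a)$ by $\varphi\alpha_1(a') = \beta_1\psi(a')$ and hence changes $b'$ by $\psi(a')$, which is zero in $\coker\psi$. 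Additivity of $\partial$ follows because the construction is natural in the chosen lift.

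The main obstacle is the exactness verification at the two "turning" positions $\ker\theta$ and $\coker\psi$, where $\partial$ enters; the other two positions reduce to routine chases using the first isomorphism theorem. At $\ker\theta$: if $y \in \ker\varphi$ and $x = \alpha_2(y)$, taking the lift $a = y$ gives $\varphi(a) = 0$, hence $b' = 0$, so $\partial\alpha_2(y) = 0$; conversely, if $\partial(x) = 0$ then $b' = \psi(a')$ for some $a'$, and $a - \alpha_1(a')$ lies in $\ker\varphi$ and still maps under $\alpha_2$ to $x$. At $\coker\psi$: if $[b'] = \partial(x)$ then $\beta_1(b') = \varphi(a)$, so $[\beta_1(b')] = 0$ in $\coker\varphi$; conversely, if $[\beta_1(b')]$ vanishes in $\coker\varphi$ then $\beta_1(b') = \varphi(a)$ for some $a \in A$, whence $\theta\alpha_2(a) = \beta_2\beta_1(b') = 0$ and $\partial(\alpha_2(a)) = [b']$. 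The difficulty is not conceptual but organizational: one must carefully track which representative is lifted, projected, or quotiented at each step, so that the ambiguities in choosing $a$ and $b'$ dissolve into the intended equivalence classes.
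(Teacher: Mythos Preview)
Your proof is correct and complete, but it follows the classical element-chasing route (via the Freyd--Mitchell embedding) rather than the paper's purely arrow-theoretic construction. The paper never picks elements or lifts; instead it adjoins $\ker\theta$ and $\coker\psi$ to the diagram, forms the pullback $A\times_{A''}\ker\theta$ and the pushout $\coker\psi\oplus_{B'}B$, and observes that the composite $A\times_{A''}\ker\theta \to \coker\psi\oplus_{B'}B$ vanishes under post- and pre-composition with the relevant maps, forcing it to factor through both $\ker\theta$ and $\coker\psi$ --- this factorization \emph{is} $\partial$. Exactness is then checked by further pullback squares rather than by tracking representatives. Your approach is more transparent and is the standard one in module categories; the paper's approach has the advantage of working directly in an arbitrary abelian category without appealing to an embedding theorem (which strictly speaking requires the category to be small), and it stays closer to the categorical language used throughout the surrounding section.
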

\begin{proof}
	Extend the above diagram to include $\ker \theta$ and $\coker \psi.$ Now form the pull-back and pushout accordingly: 
		\[ 
\begin{tikzcd}
\text{} & & A\times_{A''} \ker \theta \arrow[r,two heads] \arrow[d] & \ker \theta \arrow[d,hook] \arrow[r]& 0 \\
\text{} & A' \arrow[d, "\psi"'] \arrow[r, "\alpha_1"] & A \arrow[d, "\varphi "] \arrow[r, two heads, "\alpha_2"] & A'' \arrow[d, "\theta"] \arrow[r] & 0\\
0\arrow[r] & B' \arrow[d,two heads]  \arrow[r,hook, "\beta_1"']                    & B \arrow[r, "\beta_2"']   \arrow[d]                    & B''  &    \\
0 \arrow[r] & \coker \psi \arrow[r,hook] & \coker \psi \ds_{B'} B       & &         
\end{tikzcd}\] 
From this, we immediately see that the sequence \[ 0\to A'\to A\times_{A''} \ker \theta\to \ker \theta\to 0\]
and the dual statement with the cokernel are exact. Label the morphisms $\sigma:=(A\to A\times_{A''} \ker \theta), \gamma:=(\coker \psi \ds_{B'} B \to B''),$ and the composite morphism $\epsilon:=(A\times_{A''} \ker \theta\to  \coker \psi \ds_{B'} B).$ From the exactness of the rows in the above diagram, we get that $\gamma \epsilon=0$ and $\epsilon \sigma=0$ and thus $\epsilon$ factors through the cokernel of $\sigma$ and the kernel of $\gamma.$ As these two objects are $\ker \theta$ and $\coker \psi,$ define \[ \delta:\ker \theta\to \coker\psi\] as this morphism. 

This yields a sequence of morphisms \[ \ker \psi \to \ker \varphi \to \ker \theta \overset{\delta}{\longrightarrow} \coker \psi \to \coker \varphi\to \coker \theta\]
For all pairs of morphisms not involving $\delta,$ exactness follows immediately. For the remaining morphisms, note that it suffices to show that $\ker \varphi \to \ker \theta \to \coker \psi$ is exact as we can then dualize the argument to get the same result for the dual sequence. To show this, let $S\in \mathcal{A}$ and $\pi:S\to \ker\theta$ any morphism such that $\delta \pi=0.$ Form the pullback and adjoin it to the diagram as follows \[ 
\begin{tikzcd}
                                             & S_1 \arrow[d] \arrow[r, two heads]                                                      & S \arrow[d, "\pi"]                             \\
                                             & A\times_{A''}\ker \theta \arrow[d] \arrow[r, two heads] \arrow[ldd, dashed, shift left] & \ker \theta \arrow[llddd, "\delta", bend left] \\
A' \arrow[d, "\psi"'] \arrow[r, "\alpha_1"]  & A \arrow[d, "\varphi"]                                                                  &                                                \\
B' \arrow[r, "\beta_1"] \arrow[d, two heads] & B                                                                                       &                                                \\
\coker \psi                                  &                                                                                         &                                               
\end{tikzcd}\]
where the dashed morphism, call this $f$, exists by the fact that $A\times_{A''}\ker \theta \to B\to B''$ is the zero morphism. Now, the composition $S_1\to \coker \psi$ is $0$ and thus, we can find an epic morphism $S_0\twoheadrightarrow S_1$ such that the composition $S_0\to B'$ factors through $A'$. Denote by $g$ the morphism $S_0\to A'$.  Define the composite morphism $\lambda:S_0\to A$ and then consider \[ \lambda- f\comp k:S_0\to A\]
This must factor through $\ker \varphi$ by the commutativity of the diagram above. Hence, we get a commuting square \[ \begin{tikzcd}
	S_0\arrow[r,two heads] \arrow[d] & S_1 \arrow[d] \\
	\ker \varphi \arrow[r] \arrow[d] & \ker \theta \arrow[d]\\
	A\arrow[r,"\alpha_2"] & A''
\end{tikzcd}\]
The existence of this commuting diagram is equivalent to the exactness of the sequence $\ker \varphi \to \ker \theta \to \coker \psi.$ Dualizing this argument we get the exactness of the other morphisms. This completes the proof. 
\end{proof}
Now we can tie together adjoints and abelian categories. 
\begin{theorem}
	Let $F:\script{A}\rightleftarrows \script{B}:G$ be adjoint functors with $F\dashv G.$ Then $F$ is right exact and $G$ is left exact. Further $F(\varinjlim A_i)=\varinjlim F(A_i)$ and $G(\varprojlim A_i)=\varprojlim G(A_i).$ 
\end{theorem}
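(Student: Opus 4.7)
The plan is to prove the two limit-preservation statements first, then deduce the exactness statements as easy corollaries. The key tool throughout is the natural adjunction isomorphism $\Hom_{\script{B}}(FX,Y) \cong \Hom_{\script{A}}(X,GY)$ together with the Yoneda principle: two objects are isomorphic if their $\Hom$-functors are naturally isomorphic.

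First I would tackle the claim that $G$ preserves limits. Let $\{A_i, \varphi_i^j\}$ be an inverse system in $\script{B}$ with limit $\varprojlim A_i$. For any test object $X \in \script{A}$, I would compute, naturally in $X$,
\[
\Hom_{\script{A}}(X, G\varprojlim A_i) \cong \Hom_{\script{B}}(FX, \varprojlim A_i) \cong \varprojlim \Hom_{\script{B}}(FX, A_i) \cong \varprojlim \Hom_{\script{A}}(X, GA_i),
\]
where the first and third isomorphisms use the adjunction, and the middle isomorphism is the defining universal property of the limit applied to $\Hom_{\script{B}}(FX,-)$. The right-hand side is, again by the universal property, $\Hom_{\script{A}}(X, \varprojlim GA_i)$. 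Since this bijection is natural in $X$, the Yoneda lemma gives a canonical isomorphism $G\varprojlim A_i \cong \varprojlim GA_i$ that is compatible with the structure morphisms. The argument for $F$ and colimits is formally dual: one tests against $\Hom_{\script{B}}(-,Y)$ and uses that left adjoints land on the left of $\Hom$, converting a colimit in $\script{A}$ into a limit in $\mathbf{Set}$ that matches $\varinjlim FA_i$.

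Next I would deduce the exactness statements. Suppose $0 \to A' \xrightarrow{f} A \xrightarrow{g} A'' \to 0$ is short exact in $\script{A}$. In an abelian category this exactly says that $f = \ker g$ and $g = \coker f$. Applying the colimit-preservation of $F$ to the cokernel diagram gives $Fg = \coker(Ff)$, which unpacks to say that $FA \xrightarrow{Fg} FA'' \to 0$ is exact and $\im Ff = \ker Fg$; this is the definition of right exactness. Dually, applying the limit-preservation of $G$ to $f = \ker g$ shows $0 \to GA' \to GA \to GA''$ is exact, establishing left exactness of $G$.

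The main conceptual obstacle is justifying the natural isomorphism $\Hom(X, \varprojlim A_i) \cong \varprojlim \Hom(X, A_i)$ (and its coproduct dual) rigorously; this is essentially a restatement of the universal property of the limit, but one must check that the identification is natural in $X$ so that Yoneda can be invoked, rather than giving only an isomorphism at each object. A secondary technical point is that I am implicitly using that $\mathbf{Ab}$ (or $\mathbf{Set}$) is complete and cocomplete, so $\varprojlim \Hom(-,A_i)$ and $\varinjlim \Hom(A_i,-)$ actually exist as the relevant hom-functors. Once these are in hand, no diagram chase is required: the whole proof is driven by the adjunction plus Yoneda, with the Snake Lemma \ref{Snake_Lemma} reserved for finer properties beyond what is asked here.
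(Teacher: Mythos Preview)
Your proposal is correct and is precisely the standard argument the paper gestures at but does not write out: the paper simply states that ``the proof relies on the Yoneda Embedding \cite{MacLane1971} which we will not cover,'' whereas you have supplied the actual chain of natural isomorphisms $\Hom(X,G\varprojlim A_i)\cong\varprojlim\Hom(X,GA_i)$ and its dual, then deduced exactness from the fact that kernels and cokernels are particular (co)limits.

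Two minor remarks. First, in your limit-preservation step you phrase things as identifying $\varprojlim \Hom(X,GA_i)$ with $\Hom(X,\varprojlim GA_i)$, which presupposes that $\varprojlim GA_i$ exists; it is cleaner to read the computation as showing directly that $G\varprojlim A_i$ \emph{satisfies the universal property} of $\varprojlim GA_i$, so the limit exists and equals $G\varprojlim A_i$ in one stroke. Second, your aside about needing (co)completeness of $\mathbf{Set}$ or $\mathbf{Ab}$ can be sharpened: because $\Hom$ is contravariant in the first variable, both computations only require \emph{limits} in $\mathbf{Set}$, never colimits. The mention of the Snake Lemma at the end is harmless but unrelated to anything needed here.
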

The proof relies on the Yoneda Embedding \cite{MacLane1971} which we will not cover. This theorem thus implies a stronger result than we stated before about $\Hom$ and $\tensor.$ 
\begin{corollary}
	$\Hom$ is left exact in both arguments and $\tensor$ is right exact in both arguments. 
\end{corollary}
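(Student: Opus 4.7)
The plan is to derive all four claims of the corollary as formal consequences of the preceding theorem applied to the tensor-Hom adjunction $-\tensor_R M \dashv \Hom_R(M,-)$ established earlier in the chapter. By the theorem, left adjoints are right exact and right adjoints are left exact, so two of the four claims drop out immediately: $-\tensor_R M$ is right exact (this is right exactness of $\tensor$ in its first argument), and $\Hom_R(M,-)$ is left exact (this is left exactness of $\Hom$ in its second argument).

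For right exactness of $\tensor$ in its second argument, I would appeal to the commutativity of $R$: the assignment $m \tensor a \mapsto a \tensor m$ provides a natural isomorphism of functors $M \tensor_R (-) \cong (-) \tensor_R M$, and a functor naturally isomorphic to a right exact functor is itself right exact. So this case reduces to the one already handled.

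The subtle case is left exactness of $\Hom$ in its first, contravariant argument. Here I would combine the tensor-Hom adjunction with the symmetry of $\tensor$ to produce a natural chain
\[ \Hom_R(A, \Hom_R(B, N)) \cong \Hom_R(A \tensor_R B, N) \cong \Hom_R(B \tensor_R A, N) \cong \Hom_R(B, \Hom_R(A, N)). \]
This chain exhibits $\Hom_R(-, N)\colon (R\text{-}\textbf{Mod})^{op} \to R\text{-}\textbf{Mod}$ as adjoint to its own opposite; in particular it is a right adjoint, and hence by the preceding theorem it is left exact on $(R\text{-}\textbf{Mod})^{op}$. Unwinding the convention, this is precisely the statement that it sends a right exact sequence $A \to B \to C \to 0$ in $R$-\textbf{Mod} to a left exact sequence $0 \to \Hom_R(C, N) \to \Hom_R(B, N) \to \Hom_R(A, N)$.

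The hard part will be purely notational bookkeeping: keeping straight what ``left exact'' means for a contravariant functor, and packaging the $\Hom$-$\Hom$ symmetry above as an honest adjunction between opposite categories rather than merely a natural bijection. Once that is sorted, no element-chasing is required; the corollary is a formal consequence of the tensor-Hom adjunction together with the preceding theorem.
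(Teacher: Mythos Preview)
Your proposal is correct and is exactly the argument the paper intends: the corollary is stated immediately after the theorem that left adjoints are right exact and right adjoints are left exact, with no proof given, so it is meant to follow formally from the tensor-Hom adjunction established earlier. Your elaboration---handling the second tensor slot via commutativity of $\tensor$ over a commutative ring, and the contravariant $\Hom$ slot via the symmetric adjunction $\Hom_R(A,\Hom_R(B,N))\cong\Hom_R(B,\Hom_R(A,N))$---correctly fills in the details the paper leaves implicit.
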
  

Therefore, given a short exact sequence of $R$-modules, the resulting sequences \[ 0\to \Hom(Y,A)\to \Hom(Y,B)\to \Hom(Y,C)\]\[ Y\tensor A\to Y\tensor B\to Y\tensor C\to 0\]
are exact. 
What we would like to understand is when $\Hom$ and $\tensor$ are exact everywhere. Thus, for the rest of this chapter, we shall assume we are working in $R$-modules. This may seem at first like we are becoming too specific to be of any use for category theory. The following theorem tells us that this is not correct. 
\begin{theorem}[Mitchell]
	Let $\script{A}$ be a small abelian category. Then there exists an exact, fully-faithful functor $\script{A}\to R$-\textbf{Mod} for some ring $R.$ 
\end{theorem}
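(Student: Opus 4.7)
The plan is to proceed via a two-stage embedding: first embed $\script{A}$ into a suitably large Grothendieck abelian category using a Yoneda-type construction, and then embed that category into $R$-$\textbf{Mod}$ by choosing $R$ to be the opposite endomorphism ring of a carefully chosen generator. Throughout, smallness of $\script{A}$ is essential so that the direct sums we take actually exist.

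First I would set $\script{L} = \operatorname{Lex}(\script{A}^{op}, \textbf{Ab})$, the category of left-exact additive functors $\script{A}^{op} \to \textbf{Ab}$. This is a reflective subcategory of the presheaf category $\operatorname{Fun}(\script{A}^{op}, \textbf{Ab})$, and standard arguments show it is a Grothendieck abelian category: it is complete and cocomplete, filtered colimits are exact, and since $\script{A}$ is small the object $U = \bigoplus_{A \in \script{A}} h_A$, where $h_A = \Hom_{\script{A}}(-, A)$, is a generator. Moreover, each representable $h_A$ is projective in $\script{L}$ (epimorphisms in $\script{L}$ can be detected sectionwise by the reflector), so $U$ is a projective generator.

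Next I would consider the Yoneda embedding $y: \script{A} \to \script{L}$, $A \mapsto h_A$. By the Yoneda lemma this is fully faithful. Left exactness is automatic because representables preserve finite limits. The subtle point, which I expect to be the main obstacle, is \emph{right} exactness: given an exact sequence $A \to B \to C \to 0$ in $\script{A}$, one must check that $h_A \to h_B \to h_C \to 0$ is exact in $\script{L}$. This is false in the ambient presheaf category, but becomes true in $\script{L}$ because cokernels in $\script{L}$ are computed by reflecting the pointwise cokernel back into $\operatorname{Lex}$, and the abelian structure on $\script{A}$ forces this reflection to agree with $h_C$. Unpacking this carefully requires using that every epimorphism in $\script{A}$ is a cokernel (one of the defining axioms of abelian categories) together with the snake lemma (Lemma \ref{Snake_Lemma}) applied to chase cokernels through the relevant diagrams.

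Finally I would set $R = \operatorname{End}_{\script{L}}(U)^{op}$ and define $\Phi = \Hom_{\script{L}}(U, -): \script{L} \to R\text{-}\textbf{Mod}$. Since $U$ is a generator, $\Phi$ is faithful, and a Gabriel-Popescu style argument (or a direct check exploiting that $U$ is the coproduct of \emph{all} representables) gives fullness: any $R$-linear map $\Hom(U, F) \to \Hom(U, G)$ is determined on each summand $h_A$ by an element of $G(A)$, and these glue compatibly because of the module structure. Since $U$ is projective, $\Phi$ is exact. The composition $\Phi \circ y: \script{A} \to R\text{-}\textbf{Mod}$ is then fully faithful and exact, as desired. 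The two hardest steps are verifying right exactness of $y$ on all of $\script{A}$ and verifying fullness of $\Phi$; both ultimately rely on exploiting the abelian axioms in $\script{A}$ rather than just additive structure.
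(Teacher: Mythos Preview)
The paper does not prove this theorem; it simply states it and refers the reader to \cite{Rotman2009} for details. So there is no proof in the paper to compare against, and your outline is already substantially more than what appears there.

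That said, your sketch contains a genuine error. You assert that each representable $h_A$ is projective in $\script{L} = \operatorname{Lex}(\script{A}^{op}, \textbf{Ab})$, justified by the remark that ``epimorphisms in $\script{L}$ can be detected sectionwise by the reflector.'' This is false: epimorphisms in $\script{L}$ are \emph{not} sectionwise surjections in general, and representables need not be projective there. Concretely, take $\script{A}$ to be (a small skeleton of) finitely presented abelian groups; then $\script{L} \simeq \textbf{Ab}$ and the Yoneda embedding identifies $h_A$ with $A$ itself, but $\Z/2\Z$ is certainly not projective in $\textbf{Ab}$. Hence your $U = \bigoplus_A h_A$ is a generator of $\script{L}$ but not a projective one, so $\Phi = \Hom_{\script{L}}(U,-)$ is only left exact, and the composite $\Phi \circ y$ fails to be exact---which is the entire content of the theorem.

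The standard repairs do not get projectivity for free from the representables. One route uses that $\script{L}$, being Grothendieck, has an injective \emph{cogenerator} $I$, and works with $\Hom(-,I)$ (applied to $\script{A}^{op}$ to fix the variance), then argues separately for fullness on the image of $\script{A}$. Mitchell's original argument instead builds a genuinely projective generator in a larger category by a more elaborate construction. Either way, the step ``$h_A$ is projective in $\script{L}$'' is where your plan breaks, and it cannot be patched by an appeal to the reflector.
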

See \cite{Rotman2009} for details.

\subsubsection{Projective, Injective, and Flat modules} 
\begin{definition}
	An $R$-module $P$ is \textbf{projective} if for every surjective map $M\to N$ and any map $P\to N$ there exists a map $M\to P$ making the following diagram commute \[\begin{tikzcd}
                                                      & P \arrow[d] \arrow[dl,swap, "\exists", dotted] &   \\
M \arrow[r, two heads]  & N \arrow[r] & 0
\end{tikzcd}\]

	Dually an $R$-module $I$ is \textbf{injective} if for every injective map $0\to L\to M$ and any morphism $L\to I$ there exists a  morphism making the following diagram commute: 
	\[\begin{tikzcd}
            & I                           &                                   \\
0 \arrow[r] & L \arrow[u] \arrow[r, hook] & M \arrow[lu, "\exists"', dotted]
\end{tikzcd}\]
\end{definition}
These definitions seem obtuse and out of nowhere. The following lemma makes them seem less so arbitrary. 
\begin{lemma}
	The functor $\Hom(P,-)$ is exact if and only if $P$ is projective. Also, the functor $\Hom(-,I)$ is exact if and only if $I$ is injective. 
\end{lemma}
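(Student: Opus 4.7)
The plan is to reduce both equivalences to unpacking the definitions of projective and injective, using the left-exactness of $\Hom$ established in the Corollary just above.

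First I would treat the projective case. Recall that $\Hom(P,-)$ is always left exact in its second argument, so for any short exact sequence $0 \to A \to B \xrightarrow{\pi} C \to 0$ the induced sequence
\[ 0 \to \Hom(P,A) \to \Hom(P,B) \xrightarrow{\pi_*} \Hom(P,C) \]
is exact. Thus $\Hom(P,-)$ is exact if and only if $\pi_*$ is surjective whenever $\pi$ is. But saying that $\pi_*$ is surjective means that every morphism $g: P \to C$ lies in the image of $\pi_*$, i.e.\ there exists $\tilde g: P \to B$ with $\pi \circ \tilde g = g$. This is exactly the lifting diagram in the definition of projective modules, giving both directions of the equivalence simultaneously.

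Next I would handle the injective case dually. The contravariant functor $\Hom(-,I)$ sends $R$-\textbf{Mod} to $R$-\textbf{Mod}$^{op}$, and left exactness in the first argument (which again comes from the Corollary above) means that for any short exact sequence $0 \to L \xrightarrow{\iota} M \to N \to 0$, the sequence
\[ 0 \to \Hom(N,I) \to \Hom(M,I) \xrightarrow{\iota^*} \Hom(L,I) \]
is exact. So $\Hom(-,I)$ is exact precisely when $\iota^*$ is surjective whenever $\iota$ is monic. Surjectivity of $\iota^*$ means that every morphism $f: L \to I$ is of the form $\tilde f \circ \iota$ for some $\tilde f: M \to I$, which is exactly the extension property defining injective modules.

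There is no real obstacle here; the argument is a routine unwinding of definitions. The only small subtlety worth flagging is the convention issue in the contravariant case: one must be careful that, for $\Hom(-,I)$, the word \emph{exact} refers to taking short exact sequences to short exact sequences (with the arrows reversed), so the content is that monomorphisms are sent to epimorphisms. Once this convention is fixed, both halves of the lemma are literally the lifting/extension diagrams rewritten in terms of surjectivity of the induced map on $\Hom$-sets, so no separate construction or auxiliary lemma is required.
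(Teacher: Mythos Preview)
Your proposal is correct and follows essentially the same approach as the paper: use the left exactness of $\Hom$ to reduce exactness to surjectivity of the induced map, then observe that this surjectivity is literally the lifting (resp.\ extension) property in the definition of projective (resp.\ injective). The paper only writes out the injective case and leaves the projective one to the reader, whereas you spell out both, but the arguments are identical.
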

\begin{proof}
	We prove the injective case and leave the projective one to the reader as it is the same argument. 
	$(\Rightarrow)$ Assume first that $\Hom(-,I)$ is exact. Then for any exact sequence of modules \[ 0\to A\to B\to C\to 0\]
	the sequence \[0\to \Hom(C,I)\to \Hom(B,I)\to \Hom(A,I)\to 0\]
	is exact. In particular, the map $ \Hom(B,I)\to \Hom(A,I)$ is surjective. Being surjective means that for any morphism $\varphi:A\to I, $ there exists a morphism $\hat{\varphi}:B\to I$ which makes the diagram above commute. This is precisely the definition of $I$ being injective. 
	
	Now assume $I$ is injective. Then we have a surjective map $\pi: \Hom(B,I)\to \Hom(A,I)$ by definition. For any $f\in \Hom(A,I)$ the definition tells us that $f=i^*(g)$ for some $g\in \Hom(B,I).$ Hence, $\pi=i^*$ and $\Hom(-,I)$ is exact. This completes the proof. 
\end{proof}

\begin{definition}
	A module is called $\textbf{flat}$ if $-\tensor_R M$ is exact. Moreover, every projective module is flat \cite{Rotman2009}.  
\end{definition}

For a given module $M,$ we want to understand how far $M$ is from being projective, injective, or flat. Clearly the functors $\Hom$ and $\tensor$ will not tell us this information. What they imply is that $M$ is simply not flat (projective, injective). To remedy this, we will find a \textit{free resolution} of $M$ which is quasi-isomorphic to $M$ so that we can measure how far $M$ is from being one of the special modules above. 
\begin{definition}
	A \textbf{free resolution} of an $R$-module $M$ is an exact sequence $F_\bullet\to M\to 0.$ That is, a collection of free modules $F_i$ and morphisms $\alpha_i$ so that \[ ...\to F_1\to F_0\to M\to 0\] is exact (hereby realizing $M$ as the cokernel of the map $F_1\to F_0$). If every $F_i$ is projective (resp. flat) then $F_\bullet$ is a projective (resp. flat) resolution of $M.$ As injective modules are dual to projective ones, we have that an \textbf{injective resolution} of $M$ is an exact sequence $0\to M\to I^\bullet.$    
\end{definition}  
We care about these resolutions because if we look at the quotients $\ker \alpha_i/\im \alpha_{i+1}=0$ for all $i>0.$ If we truncate the sequence and only consider up to $F_1.$ Then the cokernel of $\alpha_1=M.$ Therefore, this sequence is in some sense no different from $M$ itself. The next part goes into more detail about this. 

\subsubsection{Derived Functors} 
For a general abelian category, we have the notion of short exact sequences. In addition to this, we have the notion of \textit{(co)chain complexes}.  These will be the central objects we want to consider when answering the questions posed in the previous section. 

\begin{definition}
	Let $(C_\bullet,d_\bullet)$ be a collection of objects in an abelian category $\script{A}$ together with a morphism $d_n:C_n\to C_{n-1}.$ We call $(C_\bullet,d)$ a \textbf{chain complex} if $d_{n-1} \comp d_n=0.$ If instead we have an object $(C^\bullet,\partial^\bullet)$ such that $\partial^n:C^n\to C^{n+1}$ such that $\partial^{n+1} \comp \partial^n=0$ then we say the pair is a \textbf{cochain complex}. It is common practice to drop the index on the differential $d_\bullet$ or $\partial^\bullet$ and simply denote them $d$ and $\partial.$ We shall adopt this convention.      
\end{definition}    

A morphism of (co)chain complexes $(C_\bullet,d)$ and $(D_\bullet,d')$ is a \textbf{chain map} $f_\bullet$ (resp. $f^\bullet$), that is a collection of maps $f_i$ so that the following diagram commutes for all $n,$ \[ 
\begin{tikzcd}
	C_n\arrow[r,"d"]  \arrow[d,swap, "f_{n}"] & C_{n-1} \arrow[d,"f_{n-1}"] \\ D_n \arrow[r,swap, "d'"] & D_{n-1} 
\end{tikzcd}
\]       

With this notion of morphism, we can build a new category $(\textbf{c})\textbf{Ch}(\script{A})$ of (co)chain complexes. Notice that because of the condition $d^2=0,$ we have that $\im d_n\subseteq \ker d_{n-1}.$ 
\begin{definition}
	Let $(C_\bullet,d)$ be a chain complex. Define the \textbf{$n-$th homology groups} of $C_\bullet$ as \[ H_n(C_\bullet)=\ker d_n/\im d_{n+1}\]
These are in fact groups as shown in \cite{Rotman2009}. \\

	Two chain complexes are \textbf{quasi-isomorphic} if there exists a chain map $f_\bullet:C_\bullet\to D_\bullet$ such that  $(f_i)_*:H_i(C_\bullet)\overset{\sim}{\to} H_i(D_\bullet)$ where $(f_i)_*$ is defined as $[\alpha]\mapsto [f_i \comp \alpha].$ This is well-defined by the definition of a chain map. Further $f_i\comp \alpha\in \ker d_{n}'.$ We have completely analogously the definition of \textbf{cohomology groups} $H^i(C^\bullet).$ We call a (co)chain complex is \textit{exact} if all of the (co)homology groups are identiically $0.$        
\end{definition}

We now return to the content of the previous section. Let $M$ be an $R$-module and $P_\bullet$ a projective resolution of $M.$ 

It then follows from the discussion above that $P_\bullet\to 0$ (truncating the free resolution of $M$ at $P_0$) and $M$ are quasi-isomorphic as chain complexes (here $M$ is considered as the trivial chain complex with differential $0$ everywhere). We can use this to our advantage. 
 For any $R$-module $A,$ consider $\Hom(-,A).$ The resulting cochain complex \[ \Hom(P_0,A)\to \Hom(P_1,A)\to \Hom(P_2,A)\to ...\]
is no longer exact.  
\begin{definition}
	The \textbf{n-th cohomology groups} or $n$-th $\Ext$ groups of $M$ and are denoted \[ \Ext_R^n(M,A):=H^i(\Hom(P_n,A))\] \end{definition}
\begin{remark}
	It can be shown \cite{Hilton1997} that these groups do not depend on the resolution taken. In fact, it does not even matter if we resolve $A$ or $M.$ There is a dual construction of $\Ext^n(M,A)$ where instead of a projective resolution of $M,$ we take an injective resolution of $A.$  
\end{remark}
For $\tensor,$ we have the corresponding construction but now we only use projective resolutions as $\tensor$ is covariant in both arguments. 
\begin{definition}
	The \textbf{n-th homology groups} or $n$-th $\Tor$ groups of $M$ are \[ \Tor_n^R(M,A):=H_i(P_n\tensor A) \]
\end{definition}
We now generalize to arbitrary abelian categories. 
\begin{definition}
	An abelian category is said to \textbf{have enough projectives} if every element has a projective resolution (respectively, enough injectives and enough flats)
\end{definition}
Let $\script{A}$ be an abelian category with enough projectives and $F:\script{A}\to \script{B}$ be a right exact functor. Then for any projective resolution of an object $M,$ we can repeat the operation above to define the \textit{derived functors} of $F.$ To be more specific, let $P_\bullet$ be a projective resolution of $M.$ 
\begin{definition}
	The functors \[ L_iF(M)=\ker (FP_n\to FP_{n-1})/\im (FP_{n+1}\to FP_n) \] are called the \textbf{left derived functors} of $F.$ Dually if $G$ is left exact and $I^\bullet$ is an injective resolution, we can define $R^iG$ as the \textbf{right derived functors} for $G.$ 
\end{definition}    
One may ask why we do not consider the left derived functors for a left exact functor. The answer to this is that these are all zero, or at least un interesting. They tell you nothing about exactness as $0$s appear in the sequences. 
\begin{proposition}
	If $F$ is exact then $R^iF$ and $L_iF$ are $0$ for all $i>0.$ 
\end{proposition}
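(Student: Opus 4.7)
The plan is to exploit the fact that an exact functor preserves exact sequences of arbitrary length, so that applying $F$ to a resolution produces another exact complex, whose (co)homology in positive degrees is automatically zero.

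First I would handle the left derived functors. Let $M \in \script{A}$ and fix a projective resolution
\[ \cdots \to P_2 \overset{d_2}{\to} P_1 \overset{d_1}{\to} P_0 \overset{\varepsilon}{\to} M \to 0. \]
By definition this long sequence is exact at every spot. Since $F$ is exact, it preserves exactness of any short exact sequence, and hence (by breaking the long resolution into the short exact sequences $0 \to \ker d_n \to P_n \to \im d_n \to 0$, applying $F$ to each, and splicing back together) it also preserves exactness of the whole complex. Therefore
\[ \cdots \to FP_2 \to FP_1 \to FP_0 \to FM \to 0 \]
is exact. Truncating $FM$, the complex $FP_\bullet$ has homology zero at every index $n \geq 1$, i.e.\ $\ker(FP_n \to FP_{n-1}) = \im(FP_{n+1} \to FP_n)$, which is exactly the statement $L_iF(M) = 0$ for $i > 0$.

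The argument for $R^iF$ is formally dual: start with an injective resolution $0 \to M \to I^0 \to I^1 \to \cdots$, apply $F$, invoke exactness (again by the splicing argument applied to the short exact sequences $0 \to \im \partial^{n-1} \to I^n \to \im \partial^n \to 0$), and read off that the cochain complex $FI^\bullet$ is exact in positive degree, so $R^iF(M) = 0$ for $i > 0$.

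The only subtlety — and what I would call the ``hard part,'' though it is mild — is justifying that exactness of $F$ on \emph{short} exact sequences implies exactness on the long resolution. This is handled by the splicing trick above: any exact complex $\cdots \to A_{n+1} \to A_n \to A_{n-1} \to \cdots$ decomposes into the short exact sequences $0 \to \ker(A_n \to A_{n-1}) \to A_n \to \ker(A_{n-1} \to A_{n-2}) \to 0$, and $F$ preserves each of these by hypothesis; reassembling then yields exactness of $F$ applied to the whole complex. Everything else is formal from the definitions of $L_iF$ and $R^iF$ given above, together with the fact (implicit in the preceding remark about independence of resolution) that these derived functors may be computed from any projective, respectively injective, resolution.
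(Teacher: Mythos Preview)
Your proof is correct and follows the same approach as the paper: apply $F$ to the resolution, use exactness of $F$ to conclude the resulting complex is exact, and read off that the (co)homology vanishes in positive degree. The paper's proof is a one-liner that simply asserts ``the resulting long sequences are exact,'' whereas you have supplied the splicing justification for why short-exact-sequence preservation upgrades to long exact sequences; this is a welcome elaboration but not a different method.
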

\begin{proof}
	As $F$ is exact, the resulting long sequences are exact. Hence, the quotient groups are $0$ and $R^iF$ (resp. $L_iF$) is $0.$
\end{proof}
\begin{remark}
	The derived functors measure the extent to which $M$ is not projective, injective, or flat. More generally, they measure how far $F$ is from being exact. If $R^iF$ is non-zero for only very large $i,$ then $F$ is very close to being exact. Whereas if $R^2F$ is non-zero, then $F$ is nowhere close to being exact.
\end{remark} 
The final theorem we present in this section is the most useful for computing these functors. 
\begin{theorem}
	Let $0\to A\to B\to C\to 0$ be exact in $\script{A}$ and $F:\script{A}\to \script{B}$ be a right exact functor. Then there is a long exact sequence \[ ... L_iF(A)\to L_iF(B)\to L_iF(C)\to L_{i-1}F(A) \to ...\]
	in the derived functors. The same is true for left exact functors. 
\end{theorem}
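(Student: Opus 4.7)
The plan is to deduce this from the Snake Lemma (Lemma \ref{Snake_Lemma}) applied in each degree, after first lifting the short exact sequence $0\to A\to B\to C\to 0$ to a short exact sequence of projective resolutions. First I would establish a ``Horseshoe Lemma": given projective resolutions $P_\bullet^A \to A$ and $P_\bullet^C \to C$, there exists a projective resolution $P_\bullet^B \to B$ together with chain maps making
\[ 0\to P_\bullet^A \to P_\bullet^B \to P_\bullet^C \to 0 \]
a short exact sequence of chain complexes, with $P_n^B = P_n^A \oplus P_n^C$ in each degree. Concretely, one takes $P_n^B := P_n^A \oplus P_n^C$; the augmentation and differentials are constructed inductively by lifting the map $P_0^C \to C$ through the epimorphism $B \twoheadrightarrow C$, which is possible precisely because $P_0^C$ is projective.

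Next I would apply $F$ termwise. The key observation, which is what makes everything go through, is that each row $0\to P_n^A \to P_n^B \to P_n^C \to 0$ is split exact (since $P_n^C$ is projective, the surjection $P_n^B\twoheadrightarrow P_n^C$ splits). Any additive functor preserves split exact sequences, so
\[ 0 \to FP_\bullet^A \to FP_\bullet^B \to FP_\bullet^C \to 0 \]
is a short exact sequence of chain complexes in $\script{B}$, even though $F$ is only right exact. At this point the problem reduces to pure homological algebra in $\script{B}$.

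Now I would iterate the Snake Lemma through the associated bigraded diagram of cycles and boundaries. In degree $n$ the rows $0\to FP_n^A \to FP_n^B \to FP_n^C \to 0$ fit into a diagram with the next degree's copy via the differentials, and chasing through Lemma \ref{Snake_Lemma} in each degree produces a connecting morphism
\[ \partial_n: L_nF(C) \longrightarrow L_{n-1}F(A) \]
together with exactness at each term of the claimed sequence. For the right-exact case one must also verify that $L_0 F(M) = FM$ (immediate from right exactness of $F$ applied to $FP_1 \to FP_0 \to FM \to 0$), which pins down the right end of the long exact sequence. Independence of the whole construction from the choice of resolutions follows from the standard comparison theorem for projective resolutions.

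The main obstacle will be the Horseshoe Lemma, since its inductive construction of the middle resolution is where projectivity is essentially used and where one must be careful that the resulting square of differentials really commutes. After that, the Snake Lemma does essentially all of the remaining work, and the left-exact case is formally dual: replace projective by injective resolutions, reverse arrows, and obtain $R^iF$ with a connecting morphism $R^iF(C)\to R^{i+1}F(A)$ by the same argument in $\script{A}^{op}$.
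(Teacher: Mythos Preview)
Your proposal is correct and follows the same approach as the paper, which simply states that the proof ``is immediate from the Snake Lemma.'' You have filled in the details the paper leaves implicit---the Horseshoe Lemma, the observation that split exactness is preserved by additive functors, and the iteration of the Snake Lemma through degrees---all of which are standard and necessary to make that one-line invocation actually work.
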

The proof of this is immediate from the Snake Lemma \ref{Snake_Lemma}.  The reason it is so important is because if we know that either $A,B,$ or $C$ is $F$-acyclic (that is $L_iF(C)=0$) then we get isomorphisms of the remaining groups! This single fact underlies most of homological algebra and will be integral in section 3.3.2. 
This completes this brisk tour of category theory. 
\newpage

\section{Topology}
We shall depart from category theory for the time being and return to it in section 3.2.2. For the meantime, we shall introduce the second major topic of this chapter: topological spaces. The purpose of these objects is to formalize the somewhat colloquial notions of connectedness, compactness, and other concepts. The culmination of all of this will be to define and give some basic properties of singular homology groups for a topological space. This concept will prove contentious in chapter 4 as some researchers have recently proposed using homology to discover geometric properties of the perceptual space.  

\subsection{Topological Spaces and Continuous Maps} 
The story of topology starts with the definition of a topological space. Before we give this though, we want to motivate the study of such objects by looking at the familiar case of $\R^n$ and in particular $\R.$ In high-school algebra, we call sets of the form $(a,b)$ open and $[a,b]$ closed. Similarly, sets of the form $B_r(p)=\{ x\in \R^n: |x-p|<r\}$ are open in $\R^n$ and if we change $<$ to $\leq,$ we get closed sets. In fact, we can have arbitrary open sets in $\R^n$ but all of them are built out of sets of the form above. We want to generalize all of this and formalize what we mean by \textit{open} and \textit{closed}. Some good references for this section are \cite{Lee2011}, \cite{Munkres2000}, and \cite{Fuchs2016}. The last of which is a fairly recent and thorough treatment of the material in Section 3.2.2. 

\begin{definition}
	Let $X$ be a set. A \textbf{topology} on $X$ is a collection of subsets $\mathcal{T}\subseteq \mathcal{P}(X)$ the power set, subject to the following conditions: 
	\begin{enumerate}
		\item $\varnothing,X \in \mathcal{T}.$
		\item $\mathcal{T}$ is closed under arbitrary union. That is if $\{U_i\}_{i\in I}$ is a collection of elements of $\mathcal{T}$ with $|I|$ arbitatry, then \[ \bigcup_{i\in I} U_i\in \mathcal{T}\]
		\item $\mathcal{T}$ is closed under finite intersections. That is if $\{U_i\}_{i\in I}$ is a collection of elements of $\mathcal{T}$ with $|I|<\infty$, then \[ \bigcap_{i\in I} U_i\in \mathcal{T}\]
	\end{enumerate}   
	Elements of the topology are called \textbf{open} sets. A subset $V\subseteq X$ is called \textbf{closed} if $X-V\in \mathcal{T}.$
	A set equipped with a topology is called a \textbf{topological space}.   
\end{definition} 
Notice that open and closed are not mutually exclusive: $X$ is always closed and open (sometimes abbreviated to \textit{clopen}) and some sets, such as $[0,1)$ in $\R$ are both not closed and not open. Further, simply because a set is not open does not imply closure. 

\begin{example}\label{topologies}
	\item For any set, we can give it the \textbf{discrete topology} where every subset is declared open. Dually, we can define the trivial topology in which only $\varnothing$ and $X$ are open. 
	\item For any subset $A\subseteq (X,\mathcal{T}),$ we can topologize $A$ by taking the open sets to be \[A\cap \mathcal{T}:=\{ A\cap U : U\in \mathcal{T}\}\]
	This is called the \textbf{subspace topology}. 
	\item The topology generated by the open balls in $\R^n$ above is called the \textbf{standard topology} on $\R^n.$
\end{example} 

We want to formalize the final example above. That is we want to answer the question: \textit{what does it mean to generate a topology?}
Similar to a basis for a vector space, we want to define an analogous object for a topology. 

\begin{definition}
	Let $X$ be a topological space with topology $\mathcal{T}.$ Then a collection of subsets, $\script{B}$, of $X$ is called a basis for the topology $\mathcal{T}$ if the following conditions are satisfied: 
	\begin{enumerate}
		\item Every $B\in \script{B}$ is open in $X.$
		\item Every open set $U\in \mathcal{T}$ can be written as a union of some collection of elements of $\script{B}.$ 
	\end{enumerate}
\end{definition}

It should now be clear that the standard. topology on $\R^n$ is the topology with basis consisting of the open balls. Now that we have this definition, we want to understand when it is applicable. Further, what conditions on a collection of subsets of a topological space make it a basis? The following proposition answers this in full. 

\begin{proposition}\label{Basis} 
	Let $X$ be a set and $\script{B}$ a collection of subsets. Then $\script{B}$ is a basis of a topology on $X$ if and only if the following conditions are satisfied: \begin{enumerate} 
		\item $\bigcup_{B\in \script{B}} B=X$
		\item For every $B_1,B_2\in \script{B}$, $B_1\cap B_2\in \script{B}$ and if $B_1\cap B_2\neq \varnothing,$ there exists $B_3\in \mathscr{B}$ such that $B_3\subseteq B_1\cap B_2.$ 
	\end{enumerate}
	In fact, this topology is the unique topology generated by $\script{B}.$ 
\end{proposition}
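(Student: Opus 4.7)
The plan is to prove both directions of the biconditional and then establish uniqueness. For the forward direction, I would assume $\script{B}$ is a basis for some topology $\mathcal{T}$ on $X$. Since $X \in \mathcal{T}$, property (b) of a basis gives that $X$ is a union of basis elements, forcing $\bigcup_{B \in \script{B}} B = X$. For the intersection condition, given $B_1, B_2 \in \script{B}$ (both open), their intersection $B_1 \cap B_2$ lies in $\mathcal{T}$ by closure under finite intersection, and hence writes as a union of basis elements; picking any point $x \in B_1 \cap B_2$ (if nonempty) yields a basis element $B_3$ in this union with $x \in B_3 \subseteq B_1 \cap B_2$.

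For the converse, I would define
\[ \mathcal{T} = \left\{ U \subseteq X : U = \bigcup_{B \in \mathscr{B}'} B \text{ for some } \mathscr{B}' \subseteq \script{B} \right\} \]
and verify the three axioms of a topology. The empty union gives $\varnothing \in \mathcal{T}$, and $X \in \mathcal{T}$ by condition (a). Closure under arbitrary union is immediate since a union of unions of basis elements is itself a union of basis elements. The interesting step is closure under finite intersection: it suffices to handle two sets $U = \bigcup_\alpha B_\alpha$ and $V = \bigcup_\beta B_\beta'$, for which $U \cap V = \bigcup_{\alpha, \beta} (B_\alpha \cap B_\beta')$; by condition (b), each nonempty $B_\alpha \cap B_\beta'$ is covered pointwise by basis elements contained inside it, hence is a union of elements of $\script{B}$, so $U \cap V \in \mathcal{T}$.

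For uniqueness, suppose $\mathcal{T}'$ is any topology on $X$ containing $\script{B}$. Then $\mathcal{T}'$ must contain every union of basis elements by closure under arbitrary union, giving $\mathcal{T} \subseteq \mathcal{T}'$. Conversely, any topology $\mathcal{T}'$ for which $\script{B}$ is a basis must satisfy $\mathcal{T}' \subseteq \mathcal{T}$, since by the basis property every $U \in \mathcal{T}'$ is a union of elements of $\script{B}$.

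The main obstacle is the finite-intersection step in the converse: one has to be careful that the pointwise covering supplied by condition (b) actually reassembles into a single union expressing $B_\alpha \cap B_\beta'$. I would handle this by the standard trick of writing $B_\alpha \cap B_\beta' = \bigcup_{x \in B_\alpha \cap B_\beta'} B_3^{(x)}$, where $B_3^{(x)}$ is a basis element with $x \in B_3^{(x)} \subseteq B_\alpha \cap B_\beta'$, guaranteed by condition (b). I note in passing that the stronger part of condition (b) as stated (that $B_1 \cap B_2 \in \script{B}$ itself) is actually stronger than needed; only the existence of $B_3 \subseteq B_1 \cap B_2$ through each point is required, which is what I would use.
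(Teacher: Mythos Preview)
Your proposal is correct and follows essentially the same approach as the paper's proof: both argue the forward direction by using that $X$ and $B_1\cap B_2$ are open (hence unions of basis elements), and both handle the converse by defining $\mathcal{T}$ as arbitrary unions of basis elements and verifying the finite-intersection axiom via a pointwise covering argument. Your treatment of uniqueness is more explicit than the paper's (which simply asserts it follows from the definition of a basis), and your observation that the clause ``$B_1\cap B_2\in\script{B}$'' in condition (b) is stronger than necessary is well taken---indeed, neither your proof nor the paper's actually uses or verifies that stronger clause.
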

\begin{proof}
	Suppose $\script{B}$ is a basis. Then (a) is satisfied immediately as every open set is a union of basis elements and $X$ is open in any topology. For (b), as. $B_1$ and $B_2$ are open, $B_1\cap B_2$ is open. Therefore we can write \[ B_1\cap B_2=\bigcup B_i\]
	where $B_i\in \script{B}$ are basis elements. Pick any of the $B_i$ to satisfy (b). 
	
	For the reverse. direction, we need to show that the conditions above imply that $\mathcal{T}_\script{B}$ is indeed a topology on $X.$ By the (a), $X,\varnothing \in \mathcal{T}_\script{B}.$ Let $\{U_i\}$ be an arbitrary collection of open sets.  Then each $U_i=\bigcup_{j\in J_i} B^i_j.$ with each $B^i_j\in \script{B}.$ Then \[ \bigcup U_i=\bigcup_{I} \bigcup_{J_i} B^i_{j} \]
	So $\mathcal{T}_\script{B}$ is closed under arbitrary unions. To show it is. closed under finite intersection, let $U_1,U_2\in \mathcal{T}_{\script{B}}.$ Then for every $x\in U_1\cap U_2,$ there exists some $B_1\subseteq U_1$ and $B_2\subseteq U_2$ such that $x\in B_1\cap B_2.$ By condition (b), we know there exists some $B_3$ such that $x\in B_3\subseteq B_1\cap B_2\subseteq U_1\cap U_2.$ 
	Then $U_1\cap U_2$ is a union of each of these basis elements as $x$ varies and hence is open. Therefore $\mathcal{T}_\script{B}$ is closed under pairwise intersection and by induction, all finite intersections. Hence $\mathcal{T}_\script{B}$ is a topology on $X.$ 
	Uniqueness follows immediately from the definition of a basis. This completes the proof. 
\end{proof} 

This proposition says that it suffices to define a topology by giving a basis. In Section 3.3, we will use this to topologize manifolds in a unique way so that they are sufficiently nice. 

We need to step back a bit and think about how we topologize $\R^n.$ We have given a basis for some topology on $\R^n$ above. What if we want to build a topology on $\R^n$ out of the topologies on $\R.$ To answer this, we shall generalize to the notion of product topology. 
\begin{definition}
	Let $\{X_\alpha\}_{\alpha\in J}$ be a $J$-indexed family of topological spaces. As a basis for some topology on the product space $\prod_J X_\alpha,$ we have the sets of the form \[ \prod U_\beta \]
	where $U_\beta$ is open in $X_\beta$ and $U_\beta=X_\beta$ for all but finitely many $\beta\in J.$ This topology is called the \textbf{product topology}.  
\end{definition}
There is a naive topology on the product which removes the final condition that $U_\beta=X_\beta$ for all but finitely many $\beta$. This is called the \textit{box topology}. In the case of $J$ finite, these are equivalent. It is generally less useful than the product topology as it is too fine; that is too many sets are open. For this reason, whenever we have a product space, we assume it has the product topology.

In $\R^n,$ it is relatively easy to distinguish whether or not a point lies within a given set. For a general topological space, this is daunting as the topology may be particularly bad. We need to generalize the above notion arbitrary spaces so that we can speak of boundaries of sets. To be more formal, let $X\subseteq Y$ be topological spaces.  We say that $x\in \operatorname{Int}(X)$ the \textbf{interior} of $X$ if there exists an open set $U\subsetneq X$ such that $x\in U.$ The boundary of $X$, denoted $\partial X$ is the collection of points $\{y\}$ such that for any open set $P$ containing $y,$ $P\cap X$ is non-trivial. We define the \textbf{closure} of $X$ to be \[ \overline{X}=\operatorname{Int}(X)\cup \partial X\]
It should be noted that this only makes sense for topological subspaces. More generally it makes sense in the context of embeddings (see Example  \ref{Continuous_Examples} below).  

\begin{proposition}
	Let $X$ be a topological space and $A$ a subspace. Then $\operatorname{Int}(A)$ is open, $\partial{A}$ is closed, and $\overline{A}$ is closed.
\end{proposition}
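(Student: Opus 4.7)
The plan is to handle each of the three statements in turn, unpacking the definitions and showing openness either directly or via the complement.

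First I would tackle $\operatorname{Int}(A)$. By the definition given, $x\in \operatorname{Int}(A)$ exactly when there is some open set $U_x\subseteq A$ with $x\in U_x$. Thus
\[
\operatorname{Int}(A) \;=\; \bigcup_{x\in \operatorname{Int}(A)} U_x,
\]
and this is open as a union of open subsets of $X$ (axiom (b) of a topology). No real work is needed here beyond writing down the union.

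Next I would show $\partial A$ is closed by proving $X-\partial A$ is open. Take $y\notin \partial A$. Unwinding the definition, there exists an open set $P$ with $y\in P$ such that $P\cap A=\varnothing$. The key observation is that $P$ itself is a uniform witness: for any $z\in P$, the same set $P$ is an open neighborhood of $z$ disjoint from $A$, so $z\notin \partial A$ either. Hence $P\subseteq X-\partial A$, and $X-\partial A$ is a union of such open sets around each of its points, therefore open. So $\partial A$ is closed.

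Finally for $\overline{A}=\operatorname{Int}(A)\cup \partial A$, I would first note that $\operatorname{Int}(A)\subseteq \partial A$: if $x\in U\subseteq A$ with $U$ open, then every open neighborhood $P$ of $x$ satisfies $x\in P\cap U\subseteq P\cap A$, so $P\cap A\neq\varnothing$. This gives $\overline{A}=\partial A$, and closedness is inherited from the previous step. Alternatively, one can mimic the argument for $\partial A$ directly: if $y\notin \overline{A}$ then in particular $y\notin \partial A$, so some open $P\ni y$ has $P\cap A=\varnothing$; the same $P$ then witnesses that every one of its points also lies outside $\overline{A}$ (since such points are neither in $\operatorname{Int}(A)\subseteq A$ nor in $\partial A$), so $X-\overline{A}$ is open.

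The only genuine subtlety — and what I would flag as the main obstacle — is purely definitional: with the author's conventions, $\partial A$ coincides with the usual topological closure rather than with the topological boundary. Recognizing this collapses the three claims into essentially one argument (openness of the complement of the set of points all of whose neighborhoods meet $A$), rather than three independent computations. The containment $\operatorname{Int}(A)\subseteq \partial A$ is what makes the decomposition $\overline{A}=\operatorname{Int}(A)\cup\partial A$ non-circular and lets the closure inherit closedness from the boundary without further work.
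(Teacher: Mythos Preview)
Your argument is correct and in fact cleaner than the paper's. You and the paper agree on the first step (writing $\operatorname{Int}(A)$ as a union of open witnesses), but diverge afterwards. The paper attacks $\partial A$ via the decomposition $X-\partial A=\operatorname{Int}(A)\cup(X-\overline{A})$ and then argues that $X-\overline{A}$ is open; closedness of $\overline{A}$ is obtained along the way rather than as a consequence. You instead prove $\partial A$ closed directly (the open witness $P$ with $P\cap A=\varnothing$ is a uniform witness for all of its points), then observe $\operatorname{Int}(A)\subseteq\partial A$ under the author's definition, so $\overline{A}=\partial A$ and closedness of the closure is immediate. Your route is more elementary and sidesteps a subtlety: the decomposition the paper invokes is actually incompatible with the author's convention (since $\operatorname{Int}(A)\subseteq\partial A$, points of $\operatorname{Int}(A)$ lie on the wrong side of that claimed equality), so your direct argument is the safer one here. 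The definitional observation you flag --- that $\partial A$ as defined coincides with the usual closure --- is exactly the right thing to notice and is what makes your reduction $\overline{A}=\partial A$ go through without circularity.
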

\begin{proof} 
	For each point $x\in \operatorname{Int}(A)$ let $U_x\subseteq \operatorname{Int}(A)$ be an open set containing $x.$ Then $\operatorname{Int}(A)$ is. the union of these $U_x$ and is thus open. Consider $X- \partial A$ we wish to show that this is open. From the definition, \[ X-\partial A=\operatorname{Int}(A)\cup (X-\overline{A}) \]
	Therefore, it suffices to show that $X-\overline{A}$ is open. Let $p\in  X-\overline{A}.$ As $p\notin \overline{A},$ there exists some $V\subseteq X$ open such that $V\cap \overline{A}=\varnothing$ and $p\in V.$ As $X-\overline{A}$ is a union of these open sets, it is open. This completes the proof. 
\end{proof} 
It should be clear now that a set is open (resp. closed) if and only if $A=\operatorname{Int}(A)$ (resp. $A=\overline{A}$).  

Now that we have the notions of topologies and bases, we can give a general definition of continuity. 
\begin{definition}
	Let $f:X\to Y$ be a function between topological spaces. We call $f$ \textbf{continuous} if for all $V\subseteq Y$ open, $f^{-1}(V)$ is open in $X.$ We call a map \textbf{open} if for all $U\subseteq X$ open, $f(U)$ is open in $Y.$ 
\end{definition}

Together with continuous maps, topological spaces define a category denoted $\textbf{Top}.$ If we add an additional stipulation that every space be given a distinguished point, then we can define the category $\textbf{Top}_*$ of pointed topological spaces and base-point preserving maps. 

The following examples of continuous maps are all fun exercises to the reader. They are incredibly important for later parts of this chapter. 

\begin{example}\label{Continuous_Examples}\text{}
\begin{enumerate}
	\item Let $(X,x_0)$ and $(Y,y_0)$ be pointed topological spaces. Then the constant map $x\mapsto y_0$ is continuous. 
	\item Let $f:X\to Y$ be a continuous map. Then for any subspace $A\subseteq X,$ the restriction map $f|_A:A\to Y$ is also continuous. 
	\item Let $f:X\to Y$ be a continuous map, and denote the image by $f(X).$ Then for any subspace $Z\subseteq Y$ with $f(X)\subseteq Z,$ the map $f^Z:X\to Z$ is continuous. 
	\item The composition of continuous maps is continuous. 
	\item Any inclusion map is continuous. That is, if $A\subseteq X$ then there exists a map $A\hookrightarrow X$ and this map is continuous. In general an injective continuous map is called a \textbf{topological embedding} if it is a homeomorphism onto its image. 
\end{enumerate}
\end{example}

Notice that the definition of continuity pays no mind to closed subsets. Could we possibly get a different definition if we replace open with closed in the definition? The following lemma gives a negative answer. 
\begin{lemma}
	A function $f:X\to Y$  is continuous if and only if for all closed subsets $V\subseteq Y$, $f^{-1}(V)$ is closed in $X.$ 
\end{lemma}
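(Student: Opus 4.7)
The plan is to prove both implications using a single elementary set-theoretic identity: for any function $f:X\to Y$ and any subset $V\subseteq Y$, one has $f^{-1}(Y-V)=X-f^{-1}(V)$. This follows immediately by chasing the definition of preimage: a point $x$ lies in $f^{-1}(Y-V)$ exactly when $f(x)\notin V$, equivalently when $x\notin f^{-1}(V)$. Once this identity is in hand, the equivalence between ``preimages of open sets are open'' and ``preimages of closed sets are closed'' becomes entirely mechanical, because openness and closedness in either $X$ or $Y$ are dual via complementation.

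For the forward direction, I would assume $f$ is continuous and take an arbitrary closed set $V\subseteq Y$. By definition of closed, $Y-V$ is open in $Y$, so by continuity $f^{-1}(Y-V)$ is open in $X$. Applying the identity above rewrites this as $X-f^{-1}(V)$, whose openness is precisely the statement that $f^{-1}(V)$ is closed in $X$.

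The reverse direction is the exact mirror image. Assuming that $f^{-1}$ carries closed subsets of $Y$ to closed subsets of $X$, I take an open $U\subseteq Y$, observe that $Y-U$ is closed, conclude $f^{-1}(Y-U)=X-f^{-1}(U)$ is closed in $X$, and therefore $f^{-1}(U)$ is open. Since $U$ was arbitrary, $f$ satisfies the original definition of continuity.

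There is no real obstacle here: the only nontrivial ingredient is the preimage-complement identity, and verifying it is a one-line diagram chase. If anything, the minor subtlety worth flagging is that one should not confuse this lemma with the assertion that continuous maps send closed sets to closed sets, which is false in general (such maps are called \emph{closed maps} and form a strictly smaller class, analogous to the \emph{open maps} already mentioned in the text). The lemma is strictly about preimages, and the proof above makes clear why this symmetry between open and closed holds only on the preimage side.
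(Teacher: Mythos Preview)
Your proof is correct. For the direction ``preimages of closed sets closed $\Rightarrow$ continuous'' you and the paper do exactly the same thing: take the complement and use the identity $f^{-1}(Y\setminus V)=X\setminus f^{-1}(V)$.

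For the forward direction, however, the paper takes a different route. Rather than reapplying the complement identity symmetrically as you do, it argues via closures: given closed $B\subseteq Y$, it shows $\overline{f^{-1}(B)}\subseteq f^{-1}(B)$ by invoking the inclusion $f(\overline{f^{-1}(B)})\subseteq \overline{B}=B$, which it attributes to continuity. Your approach is more elementary and self-contained, since the closure inclusion $f(\overline{A})\subseteq\overline{f(A)}$ is itself a characterization of continuity that the paper has not separately established at this point; your symmetric complement argument avoids that dependency entirely and makes the duality between the open-set and closed-set formulations transparent.
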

\begin{proof}
	$(\Leftarrow)$ Let $B$ be a closed set in $Y$ and $C$ its complement. By definition it is open. We want to show that $f^{-1}(C)$ is open in $X.$ Consider \[f^{-1}(B)=f^{-1}(Y)-f^{-1}(C)=X-f^{-1}(C)\]
	As $f^{-1}(B)$ is closed in $X,$ we conclude that $f^{-1}(C)$ is open. \\\\
	$(\Rightarrow)$ Let $B$ be closed in $Y.$ We need to show that $f^{-1}(B)$ is closed. We need to show that $\overline{f^{-1}(B) }=f^{-1}(B).$ Let $x\in \overline{f^{-1}(B) }.$ Then \[ f(x)\in f(\overline{f^{-1}(B)})\subseteq \bar{B}=B\]
	where the inclusion follows from continuity. 
	Therefore, $x\in f^{-1}(B)$ and $\overline{f^{-1}(B) }\subseteq f^{-1}(B).$ Hence, $f^{-1}(B)$ is closed. 
\end{proof}
\noindent Therefore, defining continuity in terms of closed sets is equivalent to defining it in terms of open sets. 

We now want to define quotient objects in $\textbf{Top}.$ Let $A$ be a subspace of a topological space $X.$ Then we define an equivalence relation on $X$ as $x\sim y$ if $x,y\in A.$ Then we have the quotient space $X/\sim$ which is also written $X/A$. We want to topolgize $X/A$ in a way which makes the canonical map $X\to X/A$ continuous. 
\begin{definition}
	The \textbf{quotient topology} is defined as the coarsest topology for which the canonical morphism $\pi:X\to X/A$ is continuous. Equivalently, $P\subseteq X/A$ is open if and only if $\pi^{-1}(P)$ is open in $X.$ 
\end{definition}

\begin{remark} This will allow us to give a topological structure to the Generalized Categories from Chapter 1 and give a coarse categorization from the perceptual space.  
\end{remark}

The quotient topology can be particularly opaque as it depends entirely on $X$ and $A.$ To give some idea of how it can manifest, lets give some examples of quotient spaces: 

\begin{example}
	\begin{enumerate}
		\item Let $S^1:=\{x\in \C: |x|=1\}.$ Then consider the subspace $\{-1,1\}.$ It then turns out that $S^1/\{-1,1\}$ is equivalent to two circles which touch at a single point. The topology of this space is then inherited from its embedding into $\C.$ Therefore, the quotient topology in this case is easy to see. 
		\item Consider $\Z\hookrightarrow \R.$ Then $\R/\Z$ is equivalent to the interval $[0,1]$ with the identification of $0\sim 1.$ Hence, the quotient space is $S^1.$ 
\end{enumerate}
\end{example}

What do we mean here by "equivalent?" We claimed above that $\textbf{Top}$ is a category and thus equivalent should mean an isomorphism. What are the isomorphisms in this category? 

\begin{definition}
	Let $f:X\to Y$ be a continuous map. We call $f$ a \textbf{homeomorphism} if there exists $g:Y\to X$ such that $g\comp f=1_X$ and $f\comp g=1_Y.$ Notice that every homeomorphism is necessarily a bijection. 
\end{definition}

If we consider "spaces up to homeomorphism" this is an equivalence relation. That is, we can think of isomorphism classes of topological spaces. This is a large area of research for say curves and surfaces. Before we move on to other general topological properties, we shall give some generic properties of homeomorphisms. 

\begin{theorem}
	Let $f:X\to Y$ be a bijective function between topological spaces. Then $f$ is a homeomorphism if and only if $f(\mathcal{T}_X)=\mathcal{T}_Y$. Further if $f$ is a homeomorphism, then $f$ is an open map.   
\end{theorem}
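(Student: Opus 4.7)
The plan is to unpack the condition $f(\mathcal{T}_X)=\mathcal{T}_Y$ into two set-theoretic inclusions and identify each with one of the two continuity statements that make up the definition of a homeomorphism. The bijectivity of $f$ is essential, since it lets us freely pass between $f(U)$ and $f^{-1}(V)$ via the identities $f^{-1}(f(U))=U$ and $f(f^{-1}(V))=V$.

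First I would handle the easy implication and the ``further'' clause simultaneously. Assume $f$ is a homeomorphism, so $f$ is continuous and there exists a continuous inverse $g:Y\to X$ with $g\comp f=1_X$ and $f\comp g=1_Y$. Given any open $U\subseteq X$, continuity of $g$ says $g^{-1}(U)$ is open in $Y$; but $g^{-1}(U)=f(U)$ because $g=f^{-1}$ as set maps. Hence $f(U)\in\mathcal{T}_Y$, which already proves the ``further'' statement that $f$ is open, and simultaneously gives $f(\mathcal{T}_X)\subseteq \mathcal{T}_Y$. For the reverse inclusion, take $V\in\mathcal{T}_Y$; continuity of $f$ gives $f^{-1}(V)\in\mathcal{T}_X$, and by surjectivity $V=f(f^{-1}(V))\in f(\mathcal{T}_X)$. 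Thus $f(\mathcal{T}_X)=\mathcal{T}_Y$.

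For the converse, assume $f$ is a bijection with $f(\mathcal{T}_X)=\mathcal{T}_Y$. Let $g=f^{-1}$, which exists as a set map by bijectivity. To show $f$ is continuous, pick any $V\in\mathcal{T}_Y$; by the assumed equality there exists $U\in\mathcal{T}_X$ with $f(U)=V$, and injectivity gives $U=f^{-1}(V)$, so $f^{-1}(V)\in\mathcal{T}_X$. To show $g$ is continuous, pick $U\in\mathcal{T}_X$; then $g^{-1}(U)=f(U)\in\mathcal{T}_Y$ by hypothesis. Hence $f$ admits a continuous two-sided inverse and is a homeomorphism.

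No step here is a genuine obstacle: the whole argument is a careful bookkeeping exercise in which the only subtlety is remembering that $g^{-1}(U)$ in the topological sense means the preimage under $g=f^{-1}$, which coincides with $f(U)$ precisely because $f$ is bijective. The mild pitfall to avoid in writing it up is conflating ``$f$ open'' with ``$f$ continuous''; the condition $f(\mathcal{T}_X)=\mathcal{T}_Y$ packages both into a single statement, and the proof should make the two inclusions play distinct roles.
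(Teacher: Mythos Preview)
Your proof is correct and follows essentially the same approach as the paper: both directions hinge on the identity $f(U)=(f^{-1})^{-1}(U)$ for a bijection, with continuity of $f^{-1}$ giving $f(\mathcal{T}_X)\subseteq\mathcal{T}_Y$ (and openness of $f$) and continuity of $f$ giving the reverse inclusion. Your write-up is somewhat more explicit than the paper's about where injectivity and surjectivity are each used, but the argument is the same.
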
 
\begin{proof}
	Notice that the second statement follows immediately from the first.\\
	$(\Rightarrow)$ Let $U\in \mathcal{T}_X.$ Then \[f(U)=(f^{-1})^{-1}(U)\] As $f$ is a homeomorphism, $f^{-1}$ is continuous and so $f(U)$ is open in $Y$ and thus $f(U)\in \mathcal{T}_Y.$ Therefore, we have an injection $f(\mathcal{T}_X)\hookrightarrow \mathcal{T}_Y.$ This map is surjective as $f$ is continuous. Thus, $f(\mathcal{T}_X)=\mathcal{T}_Y.$ \\\\
	$(\Leftarrow)$ Assume now that $f(\mathcal{T}_X)=\mathcal{T}_Y.$ $f$ is continuous as for any $V\in \mathcal{T}_Y,$ \[ f\comp f^{-1}(V)=V\] and therefore $f^{-1}(V)\in \mathcal{T}_X.$ Similarly, $f^{-1}$ is thus continuous. This completes the proof.   
\end{proof}
\begin{example}
	Some classic examples of homeomorphisms are translations and dilations of $\R^n.$ These are maps of the from $f(x)=x+\lambda$ and $f(x)=cx$ for some $\lambda\in \R^n$ and $c\in \R.$ More importantly, let $V,W$ be finite dimensional vector spaces. Then any linear map $V\to W$ is necessarily continuous. In fact, as we will see in the next section, these maps are smooth!
\end{example}

\begin{example}
	We end this subsection with an interesting example of topological spaces. Let $G$ be a group. Then we call $G$ a \textbf{topological group} if multiplication and inversion are continuous maps. A morphism of topological groups is a continuous group homomorphism.  
\end{example}

\subsubsection{Connectedness, Compactness, and Hausdorff} 
Now we give some characterizations of certain topological spaces. The properties are important for many mathematical applications and will be intrinsically important for for the next section and chapter 4. We shall do them all in one pass and then go into some detail about their relationships to each other.  

\begin{definition} Let $X$ be a topological space.
\begin{enumerate}
	\item $X$ is \textbf{connected} if there do not exist open sets $U_1,U_2$ such that $U_1\cap U_2=\varnothing$ and $U_1\cup U_2=X.$  
	\item $X$ is \textbf{compact} if for every open cover $\mathcal{U}$ of $X$ there exists a finite subcover. An open cover of a topological space is a collection of open sets $\mathcal{U}=\{U_i\}$ such that $X\subseteq\bigcup U_i.$  
	
	\item If $X$ is non-empty an contains at least two elements, then $X$ is \textbf{Hausdorff} if for any two distinct\footnote{Distinct here means that there exists some open set about $x$ which does not contain $y$. Spaces with this property are sometimes called Kolmogorov}  points, $x,y\in X,$ there exists open sets $U_x,U_y\subseteq X,$ such that $x\in U_x,$ $y\in U_y$, and  $U_x\cap U_y=\varnothing.$  
\end{enumerate} 
\end{definition}
These notions depend highly on the topology of $X.$ For instance, every space is connected (resp. compact) if equipped with the trivial topology and every space is disconnected (resp. non-compact) if it is equipped with the discrete topology. 
In general, a space is not connected but can be broken up into connected components. This partitions the set into distinct subsets which can be of great use. There is another notion of connectedness which is slightly stronger.  

\begin{definition}
	A topological space $X$ is path-connected if for each pair of points $a,b\in X,$ there exists a continuous path $\gamma:[0,1]\to X$ such that $\gamma(0)=a$ and $\gamma(1)=b.$ 
\end{definition}
\begin{proposition}
	If $X$ is path-connected, then $X$ is connected. 
\end{proposition}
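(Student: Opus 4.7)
The plan is to argue by contradiction, reducing the question to the connectedness of the unit interval $[0,1]\subseteq \mathbb{R}$. Suppose $X$ is path-connected but \emph{not} connected. Then by definition there exist open sets $U_1,U_2\subseteq X$ which are non-empty, disjoint, and satisfy $U_1\cup U_2=X$. Pick any $a\in U_1$ and $b\in U_2$; by path-connectedness there is a continuous map $\gamma:[0,1]\to X$ with $\gamma(0)=a$ and $\gamma(1)=b$.

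Next, I would pull back the disconnection along $\gamma$. Set $V_i=\gamma^{-1}(U_i)$ for $i=1,2$. Continuity of $\gamma$ makes each $V_i$ open in $[0,1]$; they are disjoint because $U_1\cap U_2=\varnothing$; they are non-empty because $0\in V_1$ and $1\in V_2$; and they cover $[0,1]$ because $U_1\cup U_2=X$. Thus $[0,1]=V_1\sqcup V_2$ is a separation of $[0,1]$ into two disjoint non-empty open sets, contradicting connectedness of $[0,1]$. This gives the desired contradiction and proves the proposition.

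The main obstacle is therefore the lemma that $[0,1]$ is connected with its standard topology, which is the one substantive ingredient and really the only place that the ordered-field structure of $\mathbb{R}$ (specifically, the least upper bound property) enters. I would prove it as a preliminary lemma: assume $[0,1]=A\sqcup B$ with $A,B$ open, disjoint, non-empty in $[0,1]$, and assume without loss of generality that $0\in A$. Let $s=\sup\{t\in[0,1]:[0,t]\subseteq A\}$, which exists by the supremum property. If $s\in A$, then since $A$ is open in $[0,1]$ there is an open neighborhood of $s$ inside $A$, contradicting that $s$ is an upper bound (we could push it to the right). If instead $s\in B$, then since $B$ is open there is an open neighborhood of $s$ inside $B$, which forces points arbitrarily close to but below $s$ to lie in $B$, contradicting the definition of $s$. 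Either way we reach a contradiction, so no such separation exists and $[0,1]$ is connected.

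With that lemma in hand the main argument above goes through verbatim, and no further care is required since we are only using openness, disjointness, and non-emptiness of the preimages $V_i$, all of which are automatic from the corresponding properties of $U_1,U_2$ together with continuity of $\gamma$.
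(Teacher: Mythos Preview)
Your proof is correct and follows essentially the same approach as the paper: assume a disconnection $X=U_1\cup U_2$, choose a path between points in each piece, and pull back to obtain a disconnection of $[0,1]$, contradicting its connectedness. The paper's version is terser and simply asserts that $[0,1]$ is connected, whereas you supply a proof of that fact via the supremum argument; this makes your write-up more self-contained but does not change the underlying strategy.
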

\begin{proof}
	Assume for the sake of contradiction that $X$ is disconnected. Let $X=U\cup V$ with $U\cap V=\varnothing.$ Let $a\in U,$ $b\in V,$ and $\gamma$ a path between them. Then $\gamma^{-1}(X)=\gamma^{-1}(U)\cup \gamma^{-1}(V).$ This implies that $[0,1]$ is. disconnected which is a contradiction. Hence, $X$ is connected.  
\end{proof} 
This proposition proves our assertion from before that path-connectedness is a stronger condition that connectedness. In fact, there are some highly non-trivial examples where the converse is not true.  

\begin{example}
	Let $X$ be the space of lines in $\R^2$ connecting the origin to the points $(1,\frac{1}{n})$, together with the point $(1,0)$ (note this does not include the line segment $(0,0)\to (1,0)$.)  Then $X$ is connected, but not path-connected. See Figure 3.1.   
\end{example} 

Similar to connected components, we can define \textit{path}-connected components. For a topological space $X,$ denote the set of path-connected components by $\pi_0(X).$ 

\begin{figure}[!htb]
    \centering
    \includegraphics[width=.8 \linewidth]{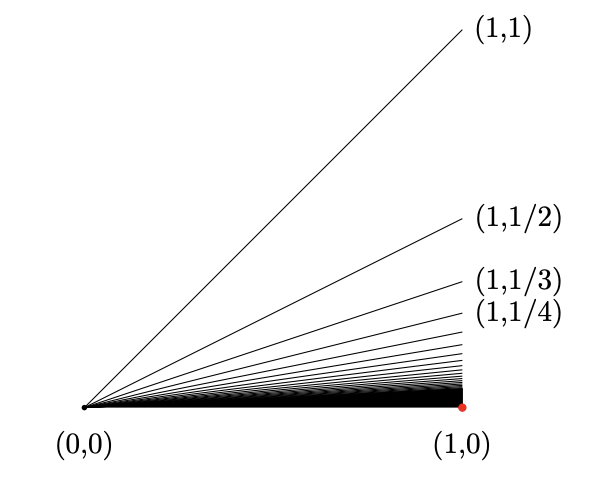}
    \caption{The Witches Broom. An example of a connected but not path-connected topological space. It is the union of all line segments $[(0,0),(1,\frac{1}{n})] \cup \{(1,0)\}.$ }
    \label{fig:wrapfig}
\end{figure}

\noindent We want to understand how each of these notions interacts with (1) each other and (2) continuous maps.  

Let us investigate (2) first. 

\begin{theorem}
	Let $f:X\to Y$ be a continuous function. If $X$ is connected (resp. compact), then so is $f(X).$  
\end{theorem}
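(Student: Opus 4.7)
The plan is to handle the two statements separately, since they require somewhat different arguments, but both proceed by taking a "bad" structure on $f(X)$ and pulling it back through $f$ to a bad structure on $X$, then invoking the hypothesis for a contradiction (or a direct finite subcover). Throughout I treat $f(X)$ as a topological space with the subspace topology inherited from $Y$, so that $f^Z : X \to f(X)$ is still continuous by Example \ref{Continuous_Examples}(c).

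For connectedness, I would argue by contrapositive. Assume $f(X)$ is disconnected; then there exist non-empty open subsets $A, B \subseteq f(X)$ with $A \cap B = \varnothing$ and $A \cup B = f(X)$. By the definition of the subspace topology (Example \ref{topologies}(b)), write $A = A' \cap f(X)$ and $B = B' \cap f(X)$ for some open $A', B' \subseteq Y$. Set $U = f^{-1}(A')$ and $V = f^{-1}(B')$. By continuity, $U$ and $V$ are open in $X$. They are non-empty because $A$ and $B$ are non-empty and contained in $f(X)$. Their union is all of $X$ because every $x \in X$ satisfies $f(x) \in f(X) = A \cup B \subseteq A' \cup B'$. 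Their intersection is empty because $f(U \cap V) \subseteq A' \cap B' \cap f(X) = A \cap B = \varnothing$. This contradicts connectedness of $X$, so $f(X)$ must be connected.

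For compactness, I would give a direct argument. Let $\mathcal{U} = \{U_i\}_{i \in I}$ be an arbitrary open cover of $f(X)$. Writing $U_i = V_i \cap f(X)$ with each $V_i$ open in $Y$, continuity of $f$ gives that each $f^{-1}(V_i)$ is open in $X$. The collection $\{f^{-1}(V_i)\}_{i \in I}$ covers $X$, since for every $x \in X$ we have $f(x) \in f(X) \subseteq \bigcup_i V_i$, hence $x \in f^{-1}(V_{i_0})$ for some $i_0$. By compactness of $X$, extract a finite subcover $f^{-1}(V_{i_1}), \ldots, f^{-1}(V_{i_n})$ of $X$. Then $U_{i_1}, \ldots, U_{i_n}$ form a finite subcover of $f(X)$: given $y = f(x) \in f(X)$, the point $x$ lies in some $f^{-1}(V_{i_k})$, so $y = f(x) \in V_{i_k} \cap f(X) = U_{i_k}$.

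Neither part presents a serious obstacle; the only subtlety is bookkeeping the subspace topology correctly so that openness in $f(X)$ can be promoted to openness in $Y$ before applying the preimage. This is essentially the only place the argument could go wrong, and it is routine. The proof does not extend to path-connectedness or Hausdorffness in the same clean way, but those are not claimed in the statement.
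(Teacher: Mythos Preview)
Your proof is correct and follows essentially the same approach as the paper: pull back a separation (resp.\ an open cover) of $f(X)$ through $f$, then invoke the hypothesis on $X$. If anything, your bookkeeping of the subspace topology is more careful than the paper's own write-up, which glosses over the distinction between ``open in $f(X)$'' and ``open in $Y$'' and omits the checks that the preimages are non-empty and disjoint.
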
   
\begin{proof}
	Let $X$ be connected. Assume for the sake of contradiction that $f(X)$ is disconnected. Then, let $f(X)=A\cup B.$ Each of these is open in $Y$ and thus $f^{-1}(A)$ and $f^{-1}(B)$ is open in $X.$ Further, $f^{-1}(A)\cup f^{-1}(B)=X.$ This contradicts the connectedness of $X$. Hence, $f(X)$ is connected. 
	
	Now assume $X$ is compact. Let $\mathcal{V}$ be an open cover for $f(X).$ Then \[ X=\bigcup_{V_i\in \mathcal{V}} f^{-1}(V_i) \]
	is an open cover. As $X$ is compact, there exist finitely many $V_i$ such that $X=\bigcup^n f^{-1}(V_i).$ Therefore $V_1,...,V_n$ are is a finite open subcover of $\mathcal{V}.$ Hence, $f(X)$ is compact. This completes the proof.     
\end{proof}

This theorem is highly important to \textit{any} field of mathematics that concerns itself with topologies of any kind. As it turns out, many theorems only work for compact spaces. So, knowing that compactness is preserved under continuous maps is crucial. Let's understand compact sets a bit better.

\begin{proposition} Let $X$  be a compact space. 
	\begin{enumerate}
		\item If $A\subseteq X$ is closed, then $A$ is compact. 
		\item If $X\subseteq Y$ a Hausdorff space, then $X$ is closed in $Y.$
	\end{enumerate}
\end{proposition}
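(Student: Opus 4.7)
For part (a), the plan is to use the compactness of $X$ to deduce compactness of $A$ by enlarging any open cover of $A$ with the complement. Specifically, I would start with an arbitrary open cover $\mathcal{U} = \{U_i\}_{i \in I}$ of $A$. Since $A$ is closed in $X$, the set $X - A$ is open. I would then consider the collection $\mathcal{U} \cup \{X - A\}$, which is an open cover of all of $X$. By compactness of $X$, this cover admits a finite subcover; discarding $X - A$ from this finite subcover (if present) yields finitely many elements of $\mathcal{U}$ whose union still contains $A$, since the points of $A$ cannot be covered by $X - A$. Hence $A$ is compact.

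For part (b), the plan is to show that $Y - X$ is open by constructing, for each point $y \in Y - X$, an open neighborhood contained entirely in $Y - X$. Fix such a $y$. For every $x \in X$, the Hausdorff property of $Y$ supplies disjoint open sets $U_x, V_x \subseteq Y$ with $x \in U_x$ and $y \in V_x$. The family $\{U_x\}_{x \in X}$ is then an open cover of $X$ (using Example \ref{Continuous_Examples}(b) to view it inside $Y$, or equivalently taking the traces $U_x \cap X$ in the subspace topology). By compactness of $X$, there exist finitely many points $x_1, \dots, x_n \in X$ such that
\[
X \subseteq U_{x_1} \cup \cdots \cup U_{x_n}.
\]
Now I would set $V = V_{x_1} \cap \cdots \cap V_{x_n}$. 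This is open, as a \emph{finite} intersection of open sets (this is exactly where we need the subcover to be finite, and so exactly where compactness of $X$ is used). Clearly $y \in V$, and disjointness $U_{x_i} \cap V_{x_i} = \varnothing$ forces $V \cap U_{x_i} = \varnothing$ for each $i$, whence $V \cap X = \varnothing$, i.e.\ $V \subseteq Y - X$. Since $y$ was arbitrary, $Y - X$ is a union of open sets, hence open, and $X$ is closed.

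The only real subtlety is in part (b), where one must resist the temptation to take an infinite intersection of the $V_x$ (which would fail to be open in general). The whole point of using compactness of $X$ is to reduce to a \emph{finite} collection of separating pairs, so that the intersection of the $V_{x_i}$ survives as an open set. Both Hausdorffness and compactness are used essentially and in tandem here; dropping either hypothesis breaks the construction, as one can see by considering the trivial topology (not Hausdorff) or an infinite discrete subspace (not compact).
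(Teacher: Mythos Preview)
Your proof is correct and follows essentially the same approach as the paper's: for (a) you adjoin the open complement $X - A$ to an arbitrary cover of $A$ and extract a finite subcover of $X$, and for (b) you separate $y$ from each $x \in X$ by Hausdorffness, use compactness to reduce to finitely many separating pairs, and intersect the finitely many neighborhoods of $y$. Your additional remark about why the finiteness of the subcover is essential (so that the intersection of the $V_{x_i}$ remains open) is a nice touch that the paper does not make explicit.
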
 
\begin{proof}
	(a) Let $\mathcal{A}$ be an open cover of $A.$ As $A$ is closed, $A^c=X- A$ is open in $X$ and $\mathcal{A}\cup A^c$ is an open cover for $X.$ As $X$ is compact, there exists a finite subcover. If this resulting subcover contains $A^c,$ discard it. Else, this is a finite cover of $A.$ This proves (a). \\
	
	(b) Let $y\in X^c.$ We want to construct an open set $V$ containing $y$ such that $V\cap X=\varnothing.$ As $Y$ is Hausdorff, for every $x\in X,$ there exists disjoint open sets $U_x$ and $\widetilde{U_x}$ such that $x\in U_x$ and $y\in \widetilde{U_x}.$ Then $\bigcup_{x\in X} U_x$ is an open cover of $X$. By compactness, there is a finite collection of points $\{x_i\}$ such that $X=\bigcup_{i} U_{x_i}.$ Put \[ V=\bigcap_{i} \widetilde{U_{x_i}}\]
	This is open and disjoint from $X$ by construction. Hence, $X^c$ is open and thus $X$ is closed. 
\end{proof}

In a similar theme to topologies, we would like to know how connectedness, compactness, and Hausdorff-ness interact with products. 
 
\begin{proposition}
	Let $\{X_i\}$ be a family of connected (resp. Hausdorff) spaces. Then $\prod X_i$ is connected (resp. Hausdorff). 
\end{proposition}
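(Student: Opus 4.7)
The plan is to handle the two assertions separately, since they are quite different in flavour: Hausdorff-ness is a purely coordinate-wise condition, whereas connectedness needs genuine work for infinite products.

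For the Hausdorff case: given distinct points $x = (x_i), y = (y_i) \in \prod X_i$, pick an index $j$ with $x_j \neq y_j$ and use the Hausdorff property of $X_j$ to produce disjoint open sets $U, V \subseteq X_j$ with $x_j \in U$ and $y_j \in V$. The preimages $\pi_j^{-1}(U)$ and $\pi_j^{-1}(V)$ are then disjoint open sets of the product topology separating $x$ from $y$. The only thing to verify is that the coordinate projection $\pi_j$ is continuous, which is immediate from the definition of the product topology (the preimages of open sets in $X_j$ are among the subbasic open sets).

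For connectedness I would proceed in two stages. First, the finite case: to show $X \times Y$ is connected when both factors are, fix a basepoint $(x_0, y_0)$; for any $(x, y)$ the ``cross'' $(X \times \{y_0\}) \cup (\{x\} \times Y)$ is connected, being the union of two homeomorphic copies of connected spaces sharing the point $(x, y_0)$. Every point therefore lies in a connected subset containing $(x_0, y_0)$, and an arbitrary union of connected sets with a common point is connected, so $X \times Y$ is connected. Induction extends this to any finite product.

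For the general case, fix $p = (p_i) \in \prod X_i$ and, for each finite $F \subseteq I$, let $A_F$ be the subspace of tuples that agree with $p$ outside $F$. Then $A_F$ is homeomorphic to the finite product $\prod_{i \in F} X_i$ and hence connected by the first stage. Since $p \in A_F$ for every $F$, the union $A := \bigcup_F A_F$ is again connected. The final step is to show $A$ is dense: any basic open set of the product topology has the form $\prod U_i$ with $U_i = X_i$ outside some finite $F$, and such a set manifestly meets $A_F$ (take any $q_i \in U_i$ for $i \in F$ and $q_i = p_i$ otherwise). Invoking the short lemma that the closure of a connected set is connected, we conclude $\prod X_i = \overline{A}$ is connected.

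The main obstacle is the density/closure step in the infinite case; the finite-product step and the Hausdorff argument are essentially routine. I would also need to prove (or cite) the auxiliary fact that $\overline{C}$ is connected whenever $C$ is, which is a quick contradiction argument: any separation $\overline{C} = U \sqcup V$ with $U, V$ nonempty and open in $\overline{C}$ would, upon intersection with the dense set $C$, produce a separation of $C$.
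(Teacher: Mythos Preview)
Your argument is correct. The paper does not actually give a proof of this proposition; it simply leaves it as an exercise, remarking that it ``follows entirely from the definitions.'' Your treatment is therefore strictly more than what the paper provides, and in fact your density-plus-closure argument for the infinite connected case is the standard one---the paper's remark somewhat understates the work required there, since connectedness of an infinite product does not follow as immediately from the definitions as the Hausdorff case does.
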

\noindent We leave the proof of this proposition as an exercise to the reader as it follows entirely from the definitions. 

For compactness, there are two results and both are surprising. 
\begin{theorem}[Heine-Borel]
	A subset of $\R^n$ is compact if and only if it is closed and bounded. 
\end{theorem}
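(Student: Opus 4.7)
The plan is to prove the two directions separately, leveraging the results already established in the excerpt about compactness, closedness in Hausdorff spaces, and continuous images of compact sets.

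For the forward direction, suppose $X \subseteq \R^n$ is compact. Since $\R^n$ is Hausdorff (as a finite product of Hausdorff spaces, by the product proposition for Hausdorff), the proposition showing that a compact subset of a Hausdorff space is closed immediately gives that $X$ is closed. To show $X$ is bounded, I would cover $X$ by the open balls $\{B_k(0)\}_{k \in \N}$, which forms an open cover of all of $\R^n$ and in particular of $X$. Compactness yields a finite subcover, and since the balls are nested, $X \subseteq B_N(0)$ for some $N$, which is precisely boundedness.

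For the reverse direction, the strategy is to reduce to showing that a closed box $[a_1,b_1] \times \cdots \times [a_n,b_n]$ is compact, since any closed and bounded $X$ sits inside such a box and is then a closed subset of a compact space, hence compact by the earlier proposition on closed subsets of compact spaces. To handle the box, I would first prove the one-dimensional case that $[a,b]$ is compact, and then argue that a finite product of compact spaces is compact. The second fact can be shown directly for two factors (then by induction): given an open cover $\mathcal{U}$ of $X \times Y$ refined to basic open rectangles $U \times V$, fix $x \in X$, cover the compact slice $\{x\} \times Y$ by finitely many such rectangles, intersect the $U$-factors to get a \emph{tube} $W_x \times Y$ covered by finitely many elements of $\mathcal{U}$, then cover $X$ by finitely many such $W_x$ using compactness of $X$.

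The main obstacle, and the step I would dwell on most carefully, is showing that $[a,b] \subseteq \R$ is compact. Given an open cover $\mathcal{U}$ of $[a,b]$, I would define
\[ S = \{ t \in [a,b] : [a,t] \text{ is covered by finitely many elements of } \mathcal{U} \}. \]
This set is nonempty (contains $a$) and bounded above by $b$, so by the least upper bound property of $\R$ it has a supremum $s \in [a,b]$. Picking an element $U_0 \in \mathcal{U}$ with $s \in U_0$ and using that $U_0$ is open, there is some $\epsilon > 0$ with $(s-\epsilon, s+\epsilon) \cap [a,b] \subseteq U_0$; combining a finite cover of $[a, s-\epsilon/2]$ with $U_0$ shows both that $s \in S$ and that $s = b$, otherwise one could push slightly past $s$ and contradict the supremum. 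This completeness argument is the only nontrivial input; everything else reassembles results already proved in the excerpt.
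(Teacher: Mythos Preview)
Your proof is correct and follows the standard route (closedness from Hausdorffness, boundedness from a nested-ball cover, and the reverse direction via the supremum argument for $[a,b]$ together with the tube lemma for finite products). However, the paper does \emph{not} actually prove Heine--Borel: it simply states the theorem and then, after stating Tychonoff, writes ``For a proof of both theorems see \cite{Munkres2000}.'' So there is no paper proof to compare against; you have supplied a full argument where the paper defers to a reference.

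One small remark: your finite-product step is exactly the right choice here, since it avoids invoking Tychonoff (and hence the axiom of choice) for what is only a finite product. That keeps Heine--Borel cleanly separated from the choice-dependent Tychonoff theorem that the paper mentions immediately afterward.
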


\begin{theorem}[Tychonoff]
	Let $\{X_i\}$ be an arbitrary collection of compact spaces. Then $\prod X_i$ is compact. 
\end{theorem}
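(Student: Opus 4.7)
The plan is to invoke the dual characterization of compactness via the finite intersection property (FIP): a space $X$ is compact if and only if every family $\mathcal{F}$ of subsets of $X$ with the FIP (meaning $\bigcap_{k=1}^{n} F_k \neq \varnothing$ for every finite subfamily $F_1,\ldots,F_n \in \mathcal{F}$) satisfies $\bigcap_{F \in \mathcal{F}} \overline{F} \neq \varnothing$. This equivalence is immediate by taking complements in the open-cover definition and is a short standalone exercise; I would state and prove it as a preliminary lemma.

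Given this characterization, let $X = \prod_{i \in I} X_i$ and let $\mathcal{A}$ be any FIP family in $X$; the goal is to exhibit a point in $\bigcap_{A \in \mathcal{A}} \overline{A}$. First, by Zorn's Lemma (Lemma 2.3.31), I would extend $\mathcal{A}$ to a family $\mathcal{M}$ that is maximal among FIP families of subsets of $X$ containing $\mathcal{A}$. The hypotheses of Zorn are satisfied since the union of a chain of FIP families is again FIP (any finite subcollection lies inside some single element of the chain). Maximality then yields two structural properties, both proved by adjoining the would-be element and deriving a contradiction with maximality: (i) any finite intersection of members of $\mathcal{M}$ again lies in $\mathcal{M}$; and (ii) any subset of $X$ that meets every member of $\mathcal{M}$ already lies in $\mathcal{M}$.

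Next, for each coordinate $i \in I$, consider the family $\{\overline{\pi_i(M)} : M \in \mathcal{M}\}$ in $X_i$. Since projections preserve the FIP and taking closures can only enlarge a family, this is an FIP family of closed sets in the compact space $X_i$, so by hypothesis I may choose $x_i \in \bigcap_{M \in \mathcal{M}} \overline{\pi_i(M)}$. Define $x = (x_i) \in X$. I claim $x \in \overline{M}$ for every $M \in \mathcal{M}$, which in particular gives $x \in \overline{A}$ for every $A \in \mathcal{A}$. Indeed, a basic open neighborhood of $x$ in the product topology has the form $U = \bigcap_{i \in J} \pi_i^{-1}(U_i)$ for some finite $J \subseteq I$ and open $U_i \ni x_i$ in $X_i$. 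For each $i \in J$, the choice of $x_i$ forces $U_i$ to meet $\pi_i(M)$ for every $M \in \mathcal{M}$, so $\pi_i^{-1}(U_i)$ meets every $M$; property (ii) then promotes $\pi_i^{-1}(U_i)$ into $\mathcal{M}$, and property (i) promotes their finite intersection $U$ into $\mathcal{M}$. Since $\mathcal{M}$ has the FIP, $U$ meets every element of $\mathcal{M}$, so every basic neighborhood of $x$ meets $M$, i.e.\ $x \in \overline{M}$.

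The main obstacle is the passage through the maximal FIP family: without maximality one can still construct the candidate point $x$ coordinate-by-coordinate, but there is no way to verify that basic neighborhoods of $x$ meet every member of $\mathcal{A}$, because the product topology only controls \emph{finitely many} coordinates at a time. Properties (i) and (ii) are precisely what convert the one-coordinate-at-a-time compactness arguments into a statement about the product, and this is where Zorn's Lemma (hence the Axiom of Choice) is genuinely indispensable — Tychonoff's theorem is in fact equivalent to AC. A secondary bookkeeping point is ensuring that the preliminary FIP lemma is stated for arbitrary (not necessarily closed) subsets, since the family $\mathcal{A}$ we wish to handle need not consist of closed sets; this is why the conclusion is phrased in terms of $\overline{A}$ rather than $A$ itself.
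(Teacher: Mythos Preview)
Your proposal is correct and is the classical maximal-FIP-family argument (essentially the proof in Munkres). The paper, however, does not give its own proof of Tychonoff's theorem at all: it states the result, remarks that it is equivalent to the axiom of choice, and defers to \cite{Munkres2000} for a proof. So there is nothing to compare against beyond noting that your approach is precisely the one the cited reference uses.
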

\noindent Although we shall not prove this, it is interesting to know that this theorem is equivalent to the axiom of choice as its proofs rely entirely on Zorn's Lemma. This is arguably the most important theorem in all of point-set topology. For a proof of both theorems see \cite{Munkres2000}. 

\subsubsection{Metric Spaces}
We now give a brief introduction to metric spaces which will allow us to formally discuss "perceptual metrics" in chapter 4. 
\begin{definition}
Let $X$ be a set. A \textbf{metric} on $X$ is a function $d:X\times X\to \R_{\geq 0}\cup \{\infty\} $ such that \begin{enumerate}
	\item For $x,y\in X,$ $d(x,y)=d(y,x). $
	\item For all $x,y\in X,$ $d(x,y)=0\iff x=y.$
	\item For all $x,y,z\in X,$ $d(x,z)\leq d(x,y)+d(y,z).$ 
\end{enumerate}  
The final condition is called the triangle inequality and is the defining characteristic of metrics. The set $(X,d)$ is called a metric space. A function $f:(X,d)\to (Y,g)$ between metric spaces is called a \textbf{metric map} if \[ g(f(x),f(y))\leq d(x,y)\] If equality holds for all $x,y,$ then $f$ is called an \textbf{isometry}. The collection of all metric spaces and all metric maps forms a category denoted $\textbf{Met}.$ 
\end{definition}

\begin{theorem}
	Let $(X,d)$ be a metric space. Then $d$ induces a topology on $X$ (called the \textbf{metric topology}). This gives a faithful functor \[ \textbf{Met}\hookrightarrow \textbf{Top}\]
	The image is the category of \textit{metrizable spaces} (those which are homeomorphic to metric spaces).    
\end{theorem}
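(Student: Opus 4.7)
The plan is to construct the functor explicitly, using Proposition \ref{Basis} to build the topology from the metric, and then to check each claim in turn.

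First I would define the candidate basis. For $x\in X$ and $r>0$ let $B_r(x)=\{y\in X : d(x,y)<r\}$ be the \emph{open ball} of radius $r$ about $x$, and put $\script{B}_d=\{B_r(x) : x\in X, r>0\}$. Condition (a) of Proposition \ref{Basis} is immediate since $x\in B_1(x)$ gives $X=\bigcup_{x\in X}B_1(x)$. For condition (b), given $B_{r_1}(x_1),B_{r_2}(x_2)\in\script{B}_d$ with intersection point $z$, set $r=\min\{r_1-d(x_1,z),\,r_2-d(x_2,z)\}>0$; the triangle inequality gives $B_r(z)\subseteq B_{r_1}(x_1)\cap B_{r_2}(x_2)$, which is the required refinement. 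Thus $\script{B}_d$ generates a unique topology $\mathcal{T}_d$ on $X$, which we call the metric topology. Define $F$ on objects by $F(X,d)=(X,\mathcal{T}_d)$.

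Next I would define $F$ on morphisms and verify functoriality. Given a metric map $f:(X,d)\to(Y,g)$, I claim the same set-function $f:(X,\mathcal{T}_d)\to(Y,\mathcal{T}_g)$ is continuous. By Proposition \ref{Basis} it suffices to check preimages of basic opens, so fix $B_r(y_0)\subseteq Y$ and $x\in f^{-1}(B_r(y_0))$. Put $s=r-g(f(x),y_0)>0$; for any $x'\in B_s(x)$ the metric map condition gives $g(f(x'),f(x))\leq d(x',x)<s$, and then $g(f(x'),y_0)<r$ by the triangle inequality, so $B_s(x)\subseteq f^{-1}(B_r(y_0))$. Covering $f^{-1}(B_r(y_0))$ by such balls exhibits it as open. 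Setting $F(f)=f$ as a set-map, functoriality ($F(g\circ f)=F(g)\circ F(f)$ and $F(1_X)=1_{F(X)}$) is automatic since composition and identities of set-functions are preserved.

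Faithfulness is then essentially free: if $F(f)=F(g)$ as continuous maps, then $f$ and $g$ agree pointwise on the common underlying set, hence agree as metric maps. (Note $F$ is generally \emph{not} full, since a continuous map between metric spaces need not be $1$-Lipschitz; this is why the statement only asserts a faithful embedding.) Finally, for the image description: by construction the objects in the image of $F$ are exactly the spaces $(X,\mathcal{T}_d)$ arising from some metric, so a topological space $(Y,\mathcal{T})$ lies in the essential image of $F$ if and only if there is a homeomorphism $(Y,\mathcal{T})\cong (X,\mathcal{T}_d)$ for some $(X,d)\in\textbf{Met}$, which is precisely the definition of metrizability.

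The main obstacle is not any single calculation but a conceptual point: one must be careful about what "image" means, since $F$ is faithful but not full and the isomorphisms in $\textbf{Met}$ (isometries) form a strictly smaller class than the isomorphisms in $\textbf{Top}$ (homeomorphisms) between the image objects. Consequently the claim should be read as an identification of the \emph{essential} image with the subcategory of metrizable spaces, rather than a full replete subcategory equivalence; the essential surjectivity onto metrizable spaces is exactly their definition, so once the basis argument and the continuity of metric maps are in hand the remaining content is bookkeeping.
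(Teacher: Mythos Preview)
Your proof is correct and follows the same approach as the paper: both define the open balls $B_r(x)$ and invoke Proposition~\ref{Basis} via the triangle inequality to show they form a basis. Your argument is in fact more complete than the paper's, which verifies only the basis conditions and then simply names the functor as the forgetful functor $(X,d,\mathcal{T})\mapsto (X,\mathcal{T})$ without checking continuity of metric maps, faithfulness, or the image description; your careful treatment of these points, and your observation about essential image versus full repleteness, are genuine additions.
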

\begin{proof} 
	Let $x\in X$ and put  \[ B_r(x):=\{y\in X: d(x,y)<r\}\]
	Let $\script{B}$ be the collection of all such balls for all points $x\in X.$ We claim that $\script{B}$ is a basis. It suffices to check the conditions of Proposition \ref{Basis}. Clearly, $X=\bigcup_{B\in \script{B}} B.$ Let $B_r(x)$ and $B_{r'}(x')$ be two elements of $\script{B}$ such that $B_r(x)\cap B_{r'}(x')\neq \varnothing.$ By the triangle inequality, for any $y\in B_r(x)\cap B_{r'}(x')$ we can find $\delta_1<r$ and $\delta_2<r'$ such that  $B_{\delta_1}(y)\subseteq  B_r(x)$ and $B_{\delta_2}(y)\subseteq B_{r'}(x').$ Pick $\delta=\min\{ \delta_1,\delta_2\}.$ Then $B_\delta(y)$ is contained in the intersection. Hence, $\script{B}$ is a basis for a topology on $X.$ 
	
	The functor $\textbf{Met}\to \textbf{Top}$ is precisely the forgetful functor which sends $(X,d,\mathcal{T})\to (X,\mathcal{T}).$   
\end{proof}

Now consider $\{x_n\}$ a sequence of points in a metric space. We say that $\{x_n\}$ converges to a point $x$ if $d(x,x_n)\to 0$ as $n\to \infty.$ A \textbf{Cauchy Sequence} is a sequence $\{x_n\}$ such that there exists $n^*$ and for all $m,n>n^*,$ $d(x_m,x_n)<\epsilon$ for any $\epsilon>0.$ 
\begin{definition}
	We call a metric space \textbf{complete} if every Cauchy sequence converges.   
\end{definition} 
\begin{theorem}
	Let $(X,d)$ be a metric space. Then there exists a metric space $(\widehat{X},d)$ such that $\widehat{X}$ is complete with respect to $d$ and there is a map $X\to \widehat{X}.$ which is an isometry with dense\footnote{Given a topological space $X$ and a subspace $A,$ we call $A$ \textit{dense} in $X$ if $\overline{A}=X.$} image.  
\end{theorem}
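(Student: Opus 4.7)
The plan is to construct $\widehat{X}$ as the set of equivalence classes of Cauchy sequences in $X$, mimicking the classical construction of $\R$ from $\Q$. First, let $C(X)$ denote the set of all Cauchy sequences in $X$. Define an equivalence relation on $C(X)$ by declaring $\{x_n\}\sim \{y_n\}$ if and only if $d(x_n,y_n)\to 0$ as $n\to\infty$. Reflexivity and symmetry are immediate; transitivity follows from the triangle inequality on $d$. Set $\widehat{X}:=C(X)/\sim$, and define a candidate metric by
\[ \hat{d}([\{x_n\}],[\{y_n\}])=\lim_{n\to\infty} d(x_n,y_n).\]

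Next I would verify this is well-posed. The existence of the limit uses that $\{d(x_n,y_n)\}$ is a Cauchy sequence in $\R$ (from $|d(x_n,y_n)-d(x_m,y_m)|\leq d(x_n,x_m)+d(y_n,y_m)$ by the triangle inequality), and $\R$ is complete. Independence of representatives is a short triangle-inequality argument. The three metric axioms for $\hat{d}$ then transfer from $d$; positive-definiteness is precisely the definition of $\sim$.

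For the isometric embedding, define $\iota:X\to \widehat{X}$ by $\iota(x)=[\{x,x,x,\ldots\}]$ (the constant sequence). It is immediate that $\hat{d}(\iota(x),\iota(y))=d(x,y)$. For density, let $[\{x_n\}]\in \widehat{X}$ and $\varepsilon>0$. Choose $N$ so that $d(x_n,x_m)<\varepsilon$ for all $n,m\geq N$; then $\hat{d}([\{x_n\}],\iota(x_N))=\lim_{m\to\infty} d(x_m,x_N)\leq \varepsilon$. Hence $\iota(X)$ is dense.

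The main obstacle, as expected, is completeness of $(\widehat{X},\hat{d})$. Given a Cauchy sequence $\{\xi_k\}$ in $\widehat{X}$, I would use density of $\iota(X)$ to pick, for each $k$, an element $y_k\in X$ with $\hat{d}(\xi_k,\iota(y_k))<1/k$. A triangle-inequality estimate shows that $\{y_k\}$ is then Cauchy in $X$, since
\[ d(y_k,y_l)=\hat{d}(\iota(y_k),\iota(y_l))\leq \tfrac{1}{k}+\hat{d}(\xi_k,\xi_l)+\tfrac{1}{l}.\]
Let $\xi:=[\{y_k\}]\in \widehat{X}$. It remains to show $\xi_k\to \xi$. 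Estimating $\hat{d}(\xi_k,\xi)\leq \hat{d}(\xi_k,\iota(y_k))+\hat{d}(\iota(y_k),\xi)< 1/k+\lim_{l\to\infty} d(y_k,y_l)$, and the second term can be made small for $k$ large by the Cauchy property of $\{y_k\}$. The subtle point is the careful interplay between the ``horizontal'' Cauchyness (along $k$) and the ``vertical'' Cauchyness (inside each representative sequence), which is really a diagonal argument in disguise; making sure the two scales of approximation do not interfere is where one must be careful.
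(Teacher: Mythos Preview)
Your construction is the standard Cauchy-sequence completion and is correct; the ``diagonal'' completeness argument you sketch goes through exactly as indicated. Note, however, that the paper does not actually give its own proof of this theorem: it simply cites \cite{Knapp2005a} and moves on. The argument you have written is essentially the one found in that reference, so there is nothing to compare.
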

\noindent See \cite{Knapp2005a} for a full proof of this statement. 

With this theorem in mind, we want to give the definitions/exmaples of some complete metric spaces and how they arise. 

\begin{dexample}
	Let $V$ be a $\C$-vector space. A \textbf{norm} on $V$ is a map $V\to \C$ such that \begin{enumerate}
		\item $||x||\geq 0$ with equality if and only if $x=0.$
		\item $||ax||=|a|\cdot ||x||$ for all $a\in \C.$
		\item $||x+y||\leq ||x||+||y||$ for all $x,y\in V.$ 
	\end{enumerate} 
	Clearly, a norm induces a metric $d(x,y)=||x-y||$ on $V.$ We call $V$ a \textbf{Banach} space if $(V,d)$ is complete. \\
	
	Similarly, we can define a $\textbf{hermitian inner product}$ on $V$ as a sesquilinear (one and a half linear) map $\ip{,}:V\to \C$ such that \[ \ip{x,y}=\overline{\ip{y,x}}\] 
	where $\overline{\ip{,}}$ denotes the complex conjugate. 
	
	This defines a norm and hence a metric on $V.$ If $(V,d)$ is complete with respect to this metric, then $V$ is a \textbf{Hilbert Space}. These are some of the most important spaces for Harmonic analysis and representation theory. We shall use Banach spaces and tensor products to understand manifolds better in Section 3.3. 
\end{dexample}
	
\subsection{Basic Algebraic Topology} 
In this section, we shall introduce a different approach to topology which considers a weaker form of equivalence but focuses on algebraic invariants attached to topological spaces. The main references for this subsection are \cite{Hatcher2001}, \cite{Rotman1988}, and \cite{Fuchs2016}. We start with the notion of homotopy. 

\begin{definition}
	Let $X \overset{f}{\underset{g}{\rightrightarrows}} Y$ be continuous maps. A \textbf{homotopy} between $f$ and $g$ is a continuous function \[ H: [0,1]\times X\to Y\]
	such that $H(0,x)=f(x)$ and $H(1,x)=g(x).$ If such a homotopy exists we say that $f$ and $g$ are homotopic, and denote this $f\simeq g.$ Two spaces are said to be \textbf{homotopy equivalent} if there exist function $f,g$ such that $f\comp g\simeq 1_X$ and $g\comp f\simeq 1_Y.$   
\end{definition}  
Notice that considering "spaces up to homotopy equivalence" is a weaker condition than "spaces up to homeomorphism". In fact, spaces which are homeomorphic are necessarily homotopy equivalent. In fact if we consider pointed topological spaces, then there is a category $\textbf{Htpy}$ where the morphisms are homotopy classes of maps. In this category, we consider the morphisms with source $S^1$ and a fixed target $X$. More generally we consider morphisms with source $S^n.$ 
\begin{definition}
	The space of maps \[ \pi_n(X):= \Hom_{\textbf{Htpy}}(S^n,X)\]
	are called the $\textbf{$n^{th}$ homotopy groups}.$ The group law is defined by concatenation in each coordinate. A topological space is called \textbf{simply connected} if $\pi_1(X)=0.$ Further, if $\pi_n(X)=0$ for all $n\geq 1,$ then $X$ is contractible. Equivalently, $X$ is homotopy equivalent to a point.  
\end{definition}    
For $n\geq 2,$ these are abelian groups. These are algebraic invariants for the space $X.$ By this we mean that if $X\simeq Y,$ then $\pi_n(X)\cong \pi_n(Y)$ for all $n$ \cite{Hatcher2001}. The problem with these homotopy groups is that they are almost always not computable, and even if they are it is incredibly difficult. For this reason, we want to consider a better algebraic invariant: homology and cohomology. These in some sense classify the number of holes of each dimension in a space. 
\begin{example}\label{Torus}
	Let $T^2=S^1\times S^1$ be the torus depicted below. \[\begin{tikzpicture}[yscale=cos(70)]
    \draw[double distance=5mm] (0:1) arc (0:180:1);
    \draw[double distance=5mm] (180:1) arc (180:360:1);
  \end{tikzpicture}\]
  It is clear that this has two loops which cannot be continuously deformed into one another: one goes around the large center hole and the other around the thickness of the torus. Are there any 2-dimensional holes? Before we give the answer, consider that topological tori are hollow. Therefore, there is some inner volume contained in a torus which stops certain loops from being contractible. 
  
  The answer to the above question is \textit{yes} and there is only 1. There are no higher-dimensional holes. We shall see that a formal way to answer these questions is by computing the homology groups for $T^2,$ which given the statements above should be \[ H_n(X,\Z)=\begin{cases}
  		\{*\} & n\geq 3\\
		\Z & n=0,2\\
		\Z^2 & n=1
  \end{cases} \]
\end{example}

\subsubsection{Simplicial Complexes}
In a way, simplicial complexes are the most basic topological objects for which to define homology and cohomology. As is such, we give a brief introduction to them here. 
\begin{definition}
	A \textbf{simplex} $\Delta^k$ is the convex hull of $n+1$ points embedded in $\R^n.$ A \textbf{simplicial complex} is a union of copies of $\Delta^i$ such that $\Delta^i\cap \Delta^j=\Delta^{k}$ with $k\leq j.$  
	
\end{definition}
To define homology one needs the language of chains
\begin{definition}[Chains]\label{Chains} 
	 Let $K$ be a simplicial complex and denote by \[ C_n^\Delta(K)=\left\{ \sum_i m_i\Delta^n|m_i\in \Z\right\}  \]
	 the free abelian group generated by $n-$simplices. If $\Delta_i^n=[v_0,...,v_n]$ then define the boundary map \[\partial_n:C_n^\Delta(K)\to C_{n-1}^\Delta(K)  \] 
	 in the following manner: \[\partial_n(\Delta^n)=\sum_{i=0}^{n}(-1)^i[v_0,...,\widehat{v_i},...,v_n]  \]
	 This makes $\partial_n$ a group homomorphism. This set is called the set of \textbf{simplicial n-chains}
	
\end{definition}
This yields the following sequence for any given $K,$ \[\begin{tikzcd}
...\arrow[r,"\partial_{n+1}"] &C_n^\Delta\arrow[r,"\partial_n"]&C_{n-1}^\Delta \arrow[r,"\partial_{n -1}"]&C_{n-2}^\Delta \arrow[r,"\partial_{n-2}"]&...
\end{tikzcd}  \]

\begin{lemma}
	$\partial_n\comp \partial_{n+1}=0.$
\end{lemma}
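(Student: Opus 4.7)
The plan is to reduce the identity $\partial_n \comp \partial_{n+1} = 0$ to a statement about a single generator of $C_{n+1}^\Delta(K)$. Since each $\partial_k$ is defined to be a group homomorphism and is extended $\Z$-linearly from its values on simplices, it suffices to verify that $\partial_n(\partial_{n+1}([v_0,\dots,v_{n+1}])) = 0$ for an arbitrary $(n+1)$-simplex $[v_0,\dots,v_{n+1}]$. Everything then reduces to a combinatorial sign-tracking calculation.

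First I would expand the inner boundary using Definition \ref{Chains}:
\[ \partial_{n+1}([v_0,\dots,v_{n+1}]) = \sum_{i=0}^{n+1}(-1)^i [v_0,\dots,\widehat{v_i},\dots,v_{n+1}]. \]
Applying $\partial_n$ term-by-term produces a double sum indexed by pairs $(i,j)$ with $i\neq j$, each term being an $(n-1)$-simplex with both $v_i$ and $v_j$ deleted. The key bookkeeping step is that the sign attached to deleting $v_j$ in the face $[v_0,\dots,\widehat{v_i},\dots,v_{n+1}]$ depends on the relative position of $j$ with respect to $i$: if $j<i$, then $v_j$ still sits in slot $j$ and contributes $(-1)^j$; if $j>i$, then $v_j$ has been shifted down to slot $j-1$ and contributes $(-1)^{j-1}$. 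This is the one place where it is easy to slip up, so I would write it out explicitly.

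Splitting the double sum according to which of $i,j$ is smaller, I would obtain
\[ \partial_n\partial_{n+1}([v_0,\dots,v_{n+1}]) = \sum_{j<i} (-1)^{i+j}\,\sigma_{ij} + \sum_{j>i} (-1)^{i+j-1}\,\sigma_{ij}, \]
where $\sigma_{ij}$ denotes the simplex with $v_i$ and $v_j$ removed (written with indices in increasing order to avoid ambiguity). After relabeling the second sum by swapping the roles of $i$ and $j$, both sums index the same set of ordered pairs $\{(i,j) : j<i\}$ and the same face $\sigma_{ij}$, but with opposite overall signs. Hence the two sums cancel term by term, giving $\partial_n\partial_{n+1}([v_0,\dots,v_{n+1}])=0$. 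By linearity this propagates to all of $C_{n+1}^{\Delta}(K)$, completing the proof.

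The main obstacle, and really the only subtle point, is the $j>i$ sign correction $(-1)^{j-1}$ rather than $(-1)^j$; once this is handled correctly the cancellation is automatic from the symmetry $(i,j) \leftrightarrow (j,i)$. Everything else is formal, and no additional hypotheses on $K$ (finiteness, dimension, etc.) are needed.
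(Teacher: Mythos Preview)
Your proof is correct and follows essentially the same approach as the paper's: both reduce to a generator, expand the double boundary as a double sum, split according to whether $j<i$ or $j>i$, and observe the pairwise cancellation from the sign discrepancy $(-1)^{i+j}$ versus $(-1)^{i+j-1}$. If anything, your write-up is more explicit about the index-shift bookkeeping and the relabeling step that exhibits the cancellation.
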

\begin{remark}
	We will drop the superscript $\Delta$ when it is clear that the chain complex is constructed from a simplicial complex.
\end{remark}
\begin{proof}
	We apply the definition twice to the generators of $C_{n+1}.$ \begin{align*}
		\partial_n\partial_{n+1}(\Delta^n)&=\partial_n\left( \partial_{n+1}[v_0,...,v_n]\right)\\
		&=\partial_n\left( \sum_{i=0}^{n+1}(-1)^i[v_0,...,\widehat{v_i},...,v_n]  \right)\\
		&=\sum_{i=0}^n\left( \sum_{j=0}^{i-1} (-1)^j[v_0,...,\widehat{v_i},...,\widehat{v_j},...,v_{n+1}]+\sum_{j=i+1}^{n+1} (-1)^{j-1}[v_0,...,\widehat{v_i},...,\widehat{v_j},...,v_{n+1}]    \right)   \\
		&=0
	\end{align*}
\end{proof}

So it is clear that $\im(\partial_n)\subseteq \ker \partial_{n-1}$ for all $n.$ 
\begin{definition}
	For all $n,$ put \[ Z_n^\Delta(K)=\ker \partial_n   \]
	consisting of \textbf{cycles} and 
	\[B_n^{\Delta}(K)=\im \partial_{n+1}   \]
	consisting of \textbf{boundaries}. 
	Define the \textbf{n-th homology group } \[ H_n^\Delta(K)=Z_n^\Delta(K)/B_n^\Delta(K)   \]
\end{definition}
Let $f$ be a simplicial map (that is to say that $f(\sum t_i\Delta^i)=\sum t_i f(v_i)$) between two complexes $K,L.$ Then $f$ induces a map on the chain complexes \[ f_\sharp:C_n^\Delta(K)\to C_n^\Delta(L), \;\; f_\sharp(\Delta^n)=f\comp \Delta^n   \]
and thus a map on the homology  groups \[ f_*:H_n^\Delta (K)\to H_n^\Delta(L),\;\; f_*([z])=[f\comp z]   \]
This gives us the following commutative square \[ \begin{tikzcd}
Z_n^\Delta(K)\arrow[r,"\partial_n"] \arrow[d,"f_\sharp"] & Z_{n-1}^\Delta(K) \arrow[d,"f_\sharp"]\\
Z_n^\Delta(L)\arrow[r,"\partial_n"]& Z_{n-1}^\Delta(L)
\end{tikzcd}   \]

\begin{lemma}
	The boundary map preserves equivalence classes of the homology groups, that is \[\partial_n (z+\partial_{n+1} c)=\partial_n(z) \]
\end{lemma}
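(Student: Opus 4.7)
The plan is to observe that this statement is an immediate consequence of two facts already established in the excerpt: first, that $\partial_n : C_n^\Delta(K) \to C_{n-1}^\Delta(K)$ is a group homomorphism (stated in Definition \ref{Chains}), and second, that $\partial_n \circ \partial_{n+1} = 0$ (proven in the lemma directly preceding). So there is essentially nothing to do beyond invoking these two facts in sequence.

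Concretely, I would start from the left-hand side, apply additivity of $\partial_n$ to split $\partial_n(z + \partial_{n+1}c) = \partial_n(z) + \partial_n(\partial_{n+1}c)$, then apply the previous lemma to conclude $\partial_n(\partial_{n+1}c) = 0$, leaving $\partial_n(z)$. There are no cases to handle and no sign bookkeeping because the prior lemma already absorbed the alternating-sign computation.

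The only genuinely conceptual point worth noting — which I would mention in a one-line remark after the display — is why this lemma matters: it shows that the induced map on quotients $\partial_n : Z_n^\Delta(K)/B_n^\Delta(K) \to Z_{n-1}^\Delta(K)/B_{n-1}^\Delta(K)$ would be well-defined if we ever tried to define such a thing, but more importantly it confirms that translating a cycle $z$ by a boundary $\partial_{n+1}c$ (i.e., moving within its homology class) does not change $\partial_n(z)$; since $z$ is already a cycle this value is $0$, but the statement is the formal well-definedness check needed later when comparing chain maps and their induced maps $f_*$ on homology. There is no real obstacle: the proof is a two-line computation, and the only subtlety is being explicit about which previously established fact is doing the work at each step.
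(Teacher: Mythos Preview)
Your proposal is correct and matches the paper's proof essentially line for line: the paper also invokes the homomorphism property to split $\partial_n(z+\partial_{n+1} c)=\partial_n(z)+\partial_n(\partial_{n+1} c)$ and then uses $\partial_n\circ\partial_{n+1}=0$ to kill the second term.
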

\begin{proof}
	We use the homomorphism property to get that \[ \partial_n(z+\partial c)=\partial_n(z)+\partial_n(\partial_{n+1} c)=\partial_n(z)  \]
	as desired.
\end{proof}
Consider the torus from Example \ref{Torus}. To give the torus a simplicial structure, we want to realize it as a quotient space. That is $T^2\cong \R^2/\Z^2.$ For this reason we can view the torus as a square with opposite sides glued together. Now the simplicial structure should be more or less obvious.

In general, a the sequence of maps in Definition \ref{Chains} is called a chain complex. Further we denote \[C_\bullet^\Delta=\begin{tikzcd}
...\arrow[r,"\partial_{n+1}"] &C_n^\Delta\arrow[r,"\partial_n"]&C_{n-1}^\Delta \arrow[r,"\partial_{n -1}"]&C_{n-2}^\Delta \arrow[r,"\partial_{n-2}"]&...
\end{tikzcd}  \]

Notice that for two simplicial complexes, and a simplicial map $f:K\to L$ we get the following map of complexes \[ f_\sharp:C_\bullet^\Delta(K)\to C_\bullet^\Delta(L)   \]
\begin{lemma}
	Consider two chain complexes $C_\bullet,D_\bullet$ and a chain map $g=(g_n)_{n\geq 0}.$ Then there exists a family of induced homomorphisms \[g_{n,*}:H_n^\Delta(C_\bullet)\to H_n^\Delta(D_\bullet)   \] 
\end{lemma}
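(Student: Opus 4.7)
The plan is to define the induced map on homology by the obvious formula $g_{n,*}([z]) := [g_n(z)]$ for any cycle $z \in Z_n^\Delta(C_\bullet)$, and then verify in turn that (i) the right-hand side makes sense as an element of $H_n^\Delta(D_\bullet)$, (ii) the assignment is independent of the representative chosen, and (iii) the resulting map is a group homomorphism. All three facts will follow directly from the single hypothesis that $g=(g_n)$ is a chain map, i.e.\ that the diagram
\[
\begin{tikzcd}
C_n \arrow[r,"\partial_n^C"] \arrow[d,"g_n"'] & C_{n-1} \arrow[d,"g_{n-1}"] \\
D_n \arrow[r,"\partial_n^D"'] & D_{n-1}
\end{tikzcd}
\]
commutes for every $n$.

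First I would verify that $g_n$ sends cycles to cycles. If $z \in Z_n^\Delta(C_\bullet)$, then $\partial_n^C z = 0$, and the commuting square gives $\partial_n^D g_n(z) = g_{n-1} \partial_n^C z = g_{n-1}(0) = 0$, so $g_n(z) \in Z_n^\Delta(D_\bullet)$. Hence the class $[g_n(z)] \in H_n^\Delta(D_\bullet)$ exists. Next I would verify that $g_n$ sends boundaries to boundaries: if $b = \partial_{n+1}^C c \in B_n^\Delta(C_\bullet)$, then applying the chain map property at level $n+1$ yields $g_n(b) = g_n \partial_{n+1}^C c = \partial_{n+1}^D g_{n+1}(c)$, which lies in $B_n^\Delta(D_\bullet)$.

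Combining these two facts establishes well-definedness: if $z' = z + \partial_{n+1}^C c$ is another representative of the class $[z]$, then $g_n(z') = g_n(z) + \partial_{n+1}^D g_{n+1}(c)$, so $[g_n(z')] = [g_n(z)]$ in $H_n^\Delta(D_\bullet)$. Setting $g_{n,*}([z]) := [g_n(z)]$ therefore gives a genuine function $H_n^\Delta(C_\bullet) \to H_n^\Delta(D_\bullet)$. The homomorphism property is then immediate: since each $g_n$ is itself a group homomorphism between the free abelian groups of chains,
\[
g_{n,*}([z] + [z']) = [g_n(z+z')] = [g_n(z) + g_n(z')] = g_{n,*}([z]) + g_{n,*}([z']).
\]

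The only real content here is well-definedness, and it is not so much an obstacle as a direct unpacking of the commuting-square condition at two adjacent levels ($n$ for preservation of cycles, $n+1$ for preservation of boundaries). No additional hypotheses on the chain complexes are needed, and the construction is functorial: one can check that $(1_C)_{n,*} = 1_{H_n^\Delta(C_\bullet)}$ and $(g \circ f)_{n,*} = g_{n,*} \circ f_{n,*}$ by essentially the same one-line computation, so $H_n^\Delta(-)$ extends to a functor from the category of chain complexes to $\textbf{Ab}$, which is the usage the rest of the chapter will want.
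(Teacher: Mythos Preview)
Your proof is correct and follows essentially the same approach as the paper: both show that $g_n$ carries cycles to cycles and boundaries to boundaries via the commuting square, then conclude well-definedness. You are slightly more thorough in explicitly checking the homomorphism property and remarking on functoriality, which the paper omits.
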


\begin{proof}
	Let $\xi\in H_n(C_\bullet).$ Let $z\in Z_n(C_\bullet)$ such that $\xi=[z].$ Consider $g_n(z).$ Then \[ \partial_n g_n(z)=g_{n-1}(0)=0  \]
	So $g_n(z)\in Z_n(D_\bullet).$ That is to say $[g_n(z)]=\eta\in H_n(D_\bullet).$ 
	Let $z'\in C_n$ such that $z\sim z'.$ Suppose $z'=z=\partial_{n+1}(c),c\in C_{n+1}.$ Then \begin{align*} g_n(z')&=g_n(z)+g_n\partial_{n+1}(c)\\
	&=g_n(z)+\partial_{n+1}g_{n+1} (c)\\
	&=g_n(z)+b, b\in B_n(D_\bullet)   
	\end{align*}
	so $g_n(z')\sim g_n(z)$ and $g_{n,*}$ is well defined. 
\end{proof}
\subsubsection{Singular Complexes}

\begin{definition}
	Define the set of \textbf{singluar n-chains} as \[ C_n(X)=\Z[\{\sigma^n:\Delta^n\to X  \}/\sim ] \]
	where $\sim$ is homotopy equivalence. Each element of this group can be written as 
	 \[c=n_1\sigma^n_1+...+n_k\sigma^n_k \]
	This is the set of all possible embeddings of an $n-$simplex into $X.$
	Further, we have a \textbf{boundary operator} \[\partial_n:C_n(X)\to C_{n-1}(X), \partial(\sigma^n)=\sum^n_{i=1} (-1)^i \sigma^n|_{[v_0,...,\widehat{v_i},...,v_n]}   \]
	satisfying the same relations as with simplicial boundary maps. 
 \end{definition}

Further, for every pair of maps $f:X\to Y,g:Y\to W$ we get the induced maps on chain complexes \[ g_{n,\sharp}\comp f_{n,\sharp}=(g\comp f)_\sharp  \]
This induces a functor $H_n:\textbf{Top}\to \textbf{Ab}$ given by $X\mapsto H_n(C_\bullet(X)).$ 
\begin{definition}[Chain Homotopy]
	If $C_\bullet$ and $D_\bullet$ are chain complexes and $f,g$ chain maps then a \textbf{Chain Homotopy} $E=(E_n)_{n\geq 0} $ is a collection of homomorphisms \[E_n:C_n\to D_{n+1}  \]
	such that \[\partial _{n+1}E_n+E_{n-1}\partial _n=g_n-f_n.\]
\end{definition}
\begin{lemma}
	If $f$ and $g$ are chain homotopic, then \[f_{n,*}=g_{n,*}:H_n(C_\bullet)\to H_n(D_\bullet)  \]
\end{lemma}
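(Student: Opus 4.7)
The plan is to apply the chain homotopy identity directly to a cycle and observe that the resulting difference lies in the image of the boundary map, hence vanishes in homology.

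First I would fix a homology class $[z] \in H_n(C_\bullet)$ represented by some cycle $z \in Z_n(C_\bullet)$, so by definition $\partial_n z = 0$. The goal reduces to showing $g_n(z) - f_n(z) \in B_n(D_\bullet)$, since then $[g_n(z)] = [f_n(z)]$ in $H_n(D_\bullet) = Z_n(D_\bullet)/B_n(D_\bullet)$, which is precisely the equality $g_{n,*}[z] = f_{n,*}[z]$.

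The key step is to evaluate the chain homotopy identity $\partial_{n+1} E_n + E_{n-1} \partial_n = g_n - f_n$ at the cycle $z$. Since $\partial_n z = 0$, the second term $E_{n-1}\partial_n(z)$ vanishes, leaving
\[ g_n(z) - f_n(z) = \partial_{n+1}(E_n(z)). \]
The right-hand side manifestly belongs to $\im \partial_{n+1} = B_n(D_\bullet)$, so the difference $g_n(z) - f_n(z)$ is a boundary in $D_\bullet$.

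Finally I would note that the induced maps $f_{n,*}, g_{n,*}$ on homology were shown to be well-defined in the preceding lemma (independent of the choice of representative), so the pointwise equality $f_{n,*}[z] = g_{n,*}[z]$ for every $[z] \in H_n(C_\bullet)$ yields the equality of maps $f_{n,*} = g_{n,*}$. No obstacle really arises here: the entire content of the proof is the observation that the $E_{n-1}\partial_n$ term is killed by the cycle condition, reducing the chain homotopy identity to an exhibited boundary. The only mild care needed is bookkeeping of indices, ensuring $E_n : C_n \to D_{n+1}$ so that $\partial_{n+1} E_n(z)$ makes sense as an element of $D_n$.
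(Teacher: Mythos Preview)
Your proof is correct and follows essentially the same approach as the paper: take a cycle $z$, apply the chain homotopy identity, use $\partial_n z = 0$ to drop the $E_{n-1}\partial_n$ term, and conclude that $g_n(z) - f_n(z) = \partial_{n+1}E_n(z)$ is a boundary. The paper's version is simply terser, writing the resulting equation $g_n(z) = f_n(z) + \partial_{n+1}E_n(z)$ in one line without spelling out the vanishing of the second term.
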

\begin{proof}
	Let $z\in Z_n(C_\bullet).$ Put $\xi=[z]\in H_n(C_\bullet).$ Then, \[g_n(z)=f_n(z)+\partial_{n+1}E_n(z)  \]
	Hence, $g_n(z)\sim f_n(z)$ so $g_*\xi=f_*\xi.$  
\end{proof}

\begin{theorem}[Homotopy Invariance]
	If $f\simeq g:X\to Y$ then on singluar homology, \[f_*=g_*:H_n(X)\to H_n(Y)  \]
\end{theorem}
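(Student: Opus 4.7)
The plan is to reduce the theorem to the previously proved lemma that chain-homotopic maps induce the same map on homology. So it suffices to construct a chain homotopy $P_\bullet:C_\bullet(X)\to C_{\bullet+1}(Y)$ satisfying $\partial_{n+1} P_n + P_{n-1}\partial_n = g_\sharp - f_\sharp$. Given the homotopy $H:[0,1]\times X\to Y$ between $f$ and $g$, the natural geometric object to exploit is the prism $\Delta^n\times[0,1]$: its top face $\Delta^n\times\{1\}$ carries $g\circ\sigma$, its bottom face $\Delta^n\times\{0\}$ carries $f\circ\sigma$, and its total boundary encodes the discrepancy between them along with the lateral faces corresponding to the boundary of $\sigma$. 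This is exactly the content of the desired chain homotopy identity.

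First I would triangulate the prism. Label the vertices of $\Delta^n$ as $v_0,\ldots,v_n$, and write $v_i'=(v_i,1)$ and $v_i=(v_i,0)$ for the lifts in $\Delta^n\times[0,1]$. The standard decomposition is
\[
\Delta^n\times[0,1]\;=\;\sum_{i=0}^{n}(-1)^i\,[v_0,\ldots,v_i,v_i',\ldots,v_n'],
\]
each summand being an $(n+1)$-simplex. I would then define $P_n:C_n(X)\to C_{n+1}(Y)$ on a singular simplex $\sigma:\Delta^n\to X$ by
\[
P_n(\sigma)\;=\;\sum_{i=0}^{n}(-1)^i\,\bigl(H\circ(\sigma\times 1_{[0,1]})\bigr)\bigl|_{[v_0,\ldots,v_i,v_i',\ldots,v_n']},
\]
and extend $\Z$-linearly. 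This is well defined since each face $[v_0,\ldots,v_i,v_i',\ldots,v_n']$ is affinely homeomorphic to the standard $\Delta^{n+1}$.

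Next I would compute $\partial_{n+1}P_n(\sigma)+P_{n-1}\partial_n(\sigma)$ using the singular boundary formula. Expanding $\partial_{n+1}$ on each prism simplex splits the terms into three types: (i) the face where $v_i'$ is deleted from the $i$-th simplex and $v_i$ deleted from the $(i-1)$-st simplex, which telescope and pair off; (ii) the two extremal faces $[v_0',\ldots,v_n']$ and $[v_0,\ldots,v_n]$, which yield exactly $g\circ\sigma$ and $-f\circ\sigma$; and (iii) the lateral faces, which one checks reassemble to $-P_{n-1}(\partial_n\sigma)$. After a careful bookkeeping of signs the identity $\partial_{n+1}P_n + P_{n-1}\partial_n = g_\sharp - f_\sharp$ drops out. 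Applying the previous lemma, $f_*=g_*$ on $H_n$.

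The main obstacle is unquestionably the sign-bookkeeping in step three: the prism decomposition involves $(n+1)(n+2)$ individual face terms, and correctly identifying which ones cancel in telescoping fashion and which assemble into $P_{n-1}\partial_n$ requires a meticulous index shift between the $i$-th and $(i-1)$-st prism simplices. The cleanest way to manage this is to verify the identity first on the universal case $\sigma=1_{\Delta^n}:\Delta^n\to\Delta^n$ and then transport along naturality, since $P$ is natural with respect to continuous maps of the target; this reduces a combinatorially messy verification in $Y$ to a single computation in $\Delta^n\times[0,1]$.
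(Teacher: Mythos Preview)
Your proposal is correct and is precisely the standard prism-operator argument. The paper itself does not give a proof here---it simply cites Hatcher---and the argument in Hatcher is exactly the construction you outline, so your approach matches the intended reference.
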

\begin{proof}
	See \cite{Hatcher2001}.
\end{proof}
This theorem shows us that singluar homology is invariant under homotopy. This will become hugely important when classifying topological spaces. 

Let $A\subseteq X.$ We want to explore the computability of $H_n(X).$ As we know it, $\pi_n(X)$ is difficult to compute. It turns out, that $H_n(X)$ is relatively easy to compute (in most cases) and therefore is used substantially more by various areas of mathematics as a way of providing invariants to spaces. The following definition hints at one possible way of computing these groups explicitly.  \begin{definition}[Relative Chain Groups]
	Let $A\subseteq X$ be a subspace. Define the \textbf{relative Chain Groups} \[C_n(X,A)=C_n(X)/C_n(A)  \]
\end{definition}  
\begin{remark}
	Notice that chains in $X$ descend to chains relative to $A.$ That is, the following diagram exists and the top square commutes: \[ \begin{tikzcd}
	C_n(A) \arrow[r,"\partial_n"] \arrow[d] & C_{n-1}(A)\arrow[d]\\
	C_n(X) \arrow[r,"\partial_n"] \arrow[d,"q"]& C_{n-1}(X)\arrow[d,"q"]\\
	C_n(X,A) \arrow[r,dashed,"\partial_n"] & C_{n-1}(X,A)
	\end{tikzcd}   \]
\end{remark}
\begin{definition}[Relative Homology groups]
	In light of the previous remark, let $A\subseteq X$ and $C_n(X,A)$ the relative chain groups. We define the relative homology groups \[ H_n(X,A)=Z_n(X,A)/B_n(X,A)   \]
	where $Z_n(X,A)=\partial_n^{-1}C_{n-1}(A)/C_n(A)$ and $B_n(X,A)=[B_n(X)+ C_n(A)]/C_n(A).$ We can re-write $H_n(X,A)$ as \[ H_n(X,A)=\frac{\partial_n^{-1}C_{n-1}(A)}{B_n(X)+C_n(A)}  \]
\end{definition}
Now the question stands: how do we restrict an element of $H_n(X)$ to an element of $H_N(X,A)?$ Simple. Pass it to the quotient. This will still be a cycle. 

Suppose $\eta\in H_n(X,A)$ and $\bar{z}\in Z_n(X,A)$ such that $\eta=[\bar{z}].$ Then $\partial_n z\in Z_{n-1}(A).$ This gives us a map \[ \partial:H_n(X,A)\to H_{n-1}(A)    \]
which sends $z+\partial_{n+1} c+d\mapsto \partial_n d\in B_{n-1}(A)$ where $d\in C_n(A).$
Piecing all of this information together gives us the following Theorem. 
\begin{theorem}
	For any subspace $A\subseteq X,$ we get the following exact sequence of homology groups \[ \begin{tikzcd}
	...\arrow[r]&H_n(A) \arrow[r,"i_*"] &H_n(X)\arrow[r,"j_*"] & H_n(X,A) \arrow[lld,swap,"\partial"]&\\
	& H_{n-1}(A)\arrow[r,swap,"i_*"]& H_{n-1}(X) \arrow[r] &...
	\end{tikzcd}   \]
\end{theorem}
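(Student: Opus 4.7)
The plan is to extract this long exact sequence from the short exact sequence of chain complexes
\[ 0 \to C_\bullet(A) \xrightarrow{i_\sharp} C_\bullet(X) \xrightarrow{j_\sharp} C_\bullet(X,A) \to 0 \]
and then apply the Snake Lemma (Lemma \ref{Snake_Lemma}) in a degreewise fashion. First I would verify that this really is a short exact sequence in $\textbf{Ch}(\textbf{Ab})$: at each level $n$, the inclusion $C_n(A) \hookrightarrow C_n(X)$ is injective by definition, the quotient $C_n(X) \twoheadrightarrow C_n(X)/C_n(A) = C_n(X,A)$ is surjective by construction, and exactness in the middle is precisely the definition of the quotient. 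Commutativity with the boundary operators is exactly the statement (observed in the remark preceding the definition of $H_n(X,A)$) that $\partial_n$ descends to the quotient, and this is forced by $\partial_n(C_n(A)) \subseteq C_{n-1}(A)$.

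Next I would construct the connecting homomorphism $\partial: H_n(X,A) \to H_{n-1}(A)$. This is essentially done in the paragraphs preceding the theorem: given $[\bar z] \in H_n(X,A)$ represented by $z \in \partial_n^{-1}C_{n-1}(A)$, the element $\partial_n z$ lies in $C_{n-1}(A)$ and is a cycle there (since $\partial_{n-1}\partial_n z = 0$). One must check well-definedness: if $z' = z + \partial_{n+1}c + d$ with $d \in C_n(A)$, then $\partial_n z' - \partial_n z = \partial_n d \in B_{n-1}(A)$, so the class in $H_{n-1}(A)$ is unchanged. This is precisely the content of the Snake Lemma applied to the diagram obtained by fitting together three consecutive levels of the short exact sequence of chain complexes.

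The bulk of the work is then proving exactness at each of the three types of terms $H_n(A)$, $H_n(X)$, $H_n(X,A)$. I would use the Snake Lemma directly in the form proved earlier: considering the commuting diagram of chain groups in degrees $n$ and $n-1$, with vertical arrows given by the respective boundary maps, the kernels of these vertical maps are essentially the cycles and the cokernels are essentially the homology groups (modulo a further quotient by boundaries). A slight bookkeeping issue is that the Snake Lemma yields a six-term sequence involving kernels and cokernels of the vertical maps, and one must verify that splicing these six-term sequences across all degrees produces the desired infinite sequence with the claimed maps $i_*$, $j_*$, and $\partial$. The naturality of $i_\sharp$ and $j_\sharp$ with respect to $\partial_\bullet$ ensures that the induced maps on homology are $i_*$ and $j_*$ as defined previously.

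The main obstacle, assuming the Snake Lemma is taken as given, is actually organizational rather than conceptual: one must carefully check that the connecting maps produced by successive applications of the Snake Lemma (one for each $n$) fit together with the $i_*, j_*$ to form a single long exact sequence. Specifically, the exactness at $H_n(X)$ requires both that $j_* i_* = 0$ (immediate from $j_\sharp i_\sharp = 0$) and that every class in $\ker j_*$ comes from $H_n(A)$, which requires a small diagram chase: if $j_\sharp(z) = \partial_{n+1} \bar w$ then one lifts $\bar w$ to some $w \in C_{n+1}(X)$ and observes that $z - \partial_{n+1} w \in C_n(A)$ is a cycle mapping to $[z]$. The other two exactness statements follow by analogous chases, and these are exactly the chases packaged into the proof of the Snake Lemma already established. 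This completes the argument.
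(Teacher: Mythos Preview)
Your proposal is correct and takes essentially the same approach as the paper: both derive the long exact sequence by applying the Snake Lemma to the short exact sequence of chain complexes $0 \to C_\bullet(A) \to C_\bullet(X) \to C_\bullet(X,A) \to 0$. The paper's proof is simply the one line ``This follows immediately from the Snake Lemma,'' so your version is in fact considerably more detailed than what appears there.
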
 
\begin{proof}
This follows immediately from the Snake Lemma. 
\end{proof}

Consider the triple $B\subseteq A\subseteq X$ and the associated exact sequence \[ 0\to C_n(A,B)\to C_n(X,B)\to C_n(X,A)\to 0 \]

\begin{theorem}[Excision Theorem]
	Let $X$ be a topological space and $A$ a subset. Suppose that $Z\subseteq \close{Z}\subseteq \operatorname{Int}(A).$ Then \[H_n(X,A)\cong H_n(X\setminus Z,A\setminus Z)   \]
\end{theorem}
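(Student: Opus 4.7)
The plan is to follow the classical proof via barycentric subdivision, which shows that the homology of $X$ can be computed using only chains whose images are "small" relative to a fixed open cover. Once this is in hand, the excision isomorphism will follow by comparing two descriptions of the same quotient group. Throughout, the hypothesis $\close{Z}\subseteq \operatorname{Int}(A)$ will be used to ensure that $\mathcal{U}=\{X\setminus \close{Z},\, \operatorname{Int}(A)\}$ is an open cover of $X$.

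First I would set up the subcomplex of \emph{$\mathcal{U}$-small chains}. Define $C_n^{\mathcal{U}}(X)\subseteq C_n(X)$ to be the subgroup generated by singular simplices $\sigma:\Delta^n\to X$ whose image lies entirely inside some element of $\mathcal{U}$. Since the boundary of a small simplex is small, $C_\bullet^{\mathcal{U}}(X)$ is a subcomplex of $C_\bullet(X)$, and one can form the corresponding homology groups $H_n^{\mathcal{U}}(X)$ and the relative version $H_n^{\mathcal{U}}(X,A):=H_n(C_\bullet^{\mathcal{U}}(X)/C_\bullet(A))$.

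The central (and hardest) step is to show that the inclusion $\iota:C_\bullet^{\mathcal{U}}(X)\hookrightarrow C_\bullet(X)$ is a chain homotopy equivalence. For this I would construct the barycentric subdivision operator $S:C_n(X)\to C_n(X)$ and a chain homotopy $T:C_n(X)\to C_{n+1}(X)$ satisfying $\partial T+T\partial =\operatorname{Id}-S$, built by induction on $n$ starting from the standard coning construction on $\Delta^n$ and then extending naturally to singular simplices by $S(\sigma)=\sigma_\sharp S(\iota_n)$ where $\iota_n$ is the identity on $\Delta^n$. The geometric input is that for each singular simplex $\sigma$, the cover $\{\sigma^{-1}(U):U\in\mathcal{U}\}$ of the compact space $\Delta^n$ admits a Lebesgue number, so after finitely many iterated subdivisions $S^k(\sigma)$ is $\mathcal{U}$-small. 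Assembling the finite homotopies $T_k=\sum_{i<k} TS^i$ into a well-defined chain homotopy on all of $C_\bullet(X)$ gives $\iota$ a chain homotopy inverse, hence $H_n^{\mathcal{U}}(X)\cong H_n(X)$ and likewise $H_n^{\mathcal{U}}(X,A)\cong H_n(X,A)$. This subdivision argument is the principal obstacle, because one must verify carefully that $S$ and $T$ are natural with respect to $\sigma_\sharp$, that $\partial T+T\partial=\operatorname{Id}-S$ on all of $C_\bullet(X)$, and that the assembled homotopy on all of $C_\bullet(X)$ actually terminates on each chain.

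Finally I would identify the quotient $C_n^{\mathcal{U}}(X)/C_n(A)$ with $C_n(X\setminus Z)/C_n(A\setminus Z)$. Every generator of $C_n^{\mathcal{U}}(X)$ has image in $X\setminus\close{Z}\subseteq X\setminus Z$ or in $\operatorname{Int}(A)\subseteq A$; simplices of the second type lie in $C_n(A)$ and therefore vanish in the quotient, so the remaining classes are represented by simplices in $X\setminus Z$. Conversely the inclusion $(X\setminus Z,A\setminus Z)\hookrightarrow (X,A)$ sends $C_\bullet(X\setminus Z)$ into $C_\bullet^{\mathcal{U}}(X)$ and $C_\bullet(A\setminus Z)$ into $C_\bullet(A)$, and a short diagram chase shows the induced map on quotients is an isomorphism of chain complexes. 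Passing to homology gives $H_n(X\setminus Z,A\setminus Z)\cong H_n^{\mathcal{U}}(X,A)\cong H_n(X,A)$, which is the desired conclusion.
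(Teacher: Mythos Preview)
The paper does not actually prove the Excision Theorem; immediately after stating it the text says ``We shall not prove the Excision theorem, but instead note its usefulness.'' So there is no argument in the paper to compare against, and your outline is simply the standard barycentric--subdivision proof (as in Hatcher). The overall architecture is correct, but there is a genuine slip in the final identification step, caused by your particular choice of cover.

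With $\mathcal{U}=\{X\setminus\close{Z},\operatorname{Int}(A)\}$ two things go wrong. First, $C_\bullet(A)$ is not in general a subcomplex of $C_\bullet^{\mathcal{U}}(X)$: a singular simplex lying in $A$ need not lie in $\operatorname{Int}(A)$ nor in $X\setminus\close{Z}$, so the quotient $C_\bullet^{\mathcal{U}}(X)/C_\bullet(A)$ you form is not literally defined. Second, your claim that the inclusion sends $C_\bullet(X\setminus Z)$ into $C_\bullet^{\mathcal{U}}(X)$ is false for the same reason: a simplex with image in $X\setminus Z$ may meet $\close{Z}\setminus Z$ and simultaneously leave $\operatorname{Int}(A)$, so it is not $\mathcal{U}$-small. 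The fix is to use instead the (not necessarily open) cover $\mathcal{V}=\{A,\,X\setminus Z\}$, whose \emph{interiors} still cover $X$ because $\operatorname{Int}(A)\cup(X\setminus\close{Z})=X$; this is exactly the hypothesis the Lebesgue-number argument needs. Then $C_n^{\mathcal{V}}(X)=C_n(A)+C_n(X\setminus Z)$, the quotient $C_n^{\mathcal{V}}(X)/C_n(A)$ makes sense, and the second isomorphism theorem for abelian groups gives
\[
C_n(X\setminus Z)\big/C_n(A\setminus Z)\;=\;C_n(X\setminus Z)\big/\bigl(C_n(A)\cap C_n(X\setminus Z)\bigr)\;\cong\;\bigl(C_n(A)+C_n(X\setminus Z)\bigr)\big/C_n(A),
\]
which is the isomorphism of chain complexes you want. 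With this correction the rest of your outline goes through.
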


We shall not prove the Excision theorem, but instead note its usefulness. It tells us that $H_n(X,A)$ is computable. If we consider $H_n(X,x_0)$ for some point $x_0\in X,$ then this is $H_n(X)$ and therefore the homology groups are computable. It can be shown further that $H_n(X)$ is finitely generated if $X$ has finitely many $n$-cells. Furthermore, under some mild conditions, we have an isomorphism \[ H_n(X,A)\cong H_n^\Delta(X,A)\]
This should not be surprising as giving a simplicial structure to a topological space is precisely dictating the embeddings of simplices of various dimensions. \\

The final topic we shall introduce in this section is \textit{Cohomology}.  This is somehow formally dual to the notion of homology. It will seem a bit contrived in this setting, but in some fields (like algebraic geometry and differential geometry) cohomology is the most natural algebraic invariant on a space. 

Consider $C_\bullet(X)$ the singular chain complex for $X.$ The functor $\Hom(-,G)$ for any abelian group $G$ (we could have chosen a ring $R$ here if we only consider it as an abelian group). Then we get a new chain complex \[ C^\bullet(X):=\begin{tikzcd}
...\arrow[r,"d^{n-1}"] &C^{n-1}(X;G) \arrow[r,"d^n"]&C^n(X;G) \arrow[r,"d^{n+1}"]&C^{n+1}(X;G) \arrow[r,"d^{n+2}"]&...
\end{tikzcd}  
\]
where $C^n(X;G):=\Hom_\Z(C_n(X),G)$ and $d^n:=\Hom(\partial_{n},G)=\partial_{n}^*.$ This is a chain complex as $\Hom(-,G)$ is a functor.   
\begin{definition}
	The \textbf{cohomology groups} of a topological space $X$ are the groups \[ H^n_{sing}(X;G):=\ker d^{n+1}/\im d^n\]
\end{definition}	
The one immediate advantage of cohomology is that there is a canonical ring structure on \[ H^*(X;G)=\bigoplus_{n\in N} H^n(X;G) \]
call the \textbf{cup product} defined as follows: let $\varphi\in C^l(X;G)$ and $\psi\in C^k(X;G).$ Then \[ \varphi\smile \psi ( [v_0,...,v_{k+l}])=\varphi(\sigma|_{[v_0,...,v_k]})\psi(\sigma|_{[v_k,...,v_{k+l}]})\]
This cup product induces a map $H^l(X;G)\times H^k(X;G)\to H^{k+l}(X;G)$ which is compatible with the quotients. 

The only theorem we will present here is the Universal Coefficient Theorem. Naively, one would assume that somehow $H_n(X)$ and $H^n(X;G)$ are related (possibly by $\Hom$). This is not necessarily true. What is true however is that there exists a short exact sequence involving these two, as the following theorem dictates: 

\begin{theorem}[Universal Coefficient Theorem]
	Let $X$ be a topological space, $C_\bullet(X)$ its singular chain complex and $C^\bullet(X;G)$ an associated cochain complex. Then the cohomology groups are determined by the split exact sequence \[ 0\to \Ext^1_\Z(H_{n-1}(X),G) \to H^n(X;G) \to \Hom(H_n(X),G)\to 0\]
\end{theorem}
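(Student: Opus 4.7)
The plan is to exploit the fact that $C_n(X)$ is a free abelian group for every $n$, so that every subgroup --- in particular the cycles $Z_n = \ker \partial_n$ and boundaries $B_n = \im \partial_{n+1}$ --- is again free. First I would write down the short exact sequence of free abelian groups
\[0 \to Z_n \hookrightarrow C_n(X) \xrightarrow{\partial_n} B_{n-1} \to 0,\]
observe that it splits because $B_{n-1}$ is free and therefore projective, and apply $\Hom(-, G)$ to obtain a split short exact sequence at each level of cochains. This structural splitting is the engine that drives the rest of the argument.

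Next I would construct the right-hand map of the desired sequence. Every cocycle $\varphi \in Z^n(X;G)$ satisfies $\varphi \comp \partial_{n+1} = 0$ and hence vanishes on $B_n$, so its restriction to $Z_n$ descends to a homomorphism $H_n(X) = Z_n/B_n \to G$. Coboundaries restrict to zero on $Z_n$, so this produces a well-defined natural map
\[\alpha : H^n(X;G) \to \Hom(H_n(X), G).\]
Surjectivity of $\alpha$ follows by a two-step lift: given $f : H_n(X) \to G$, pull back along the quotient $Z_n \twoheadrightarrow H_n(X)$ to get $\tilde f : Z_n \to G$, then extend via the chosen splitting $C_n \cong Z_n \oplus B_{n-1}$ to a cochain $C_n \to G$. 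This cochain is automatically a cocycle, and its class is sent by $\alpha$ to $f$.

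The central step is identifying $\ker \alpha$ with $\Ext^1_\Z(H_{n-1}(X), G)$. Unwinding definitions, a cocycle $\varphi$ lies in $\ker \alpha$ precisely when $\varphi|_{Z_n} = 0$, in which case it factors through $C_n/Z_n \cong B_{n-1}$; thus at the cochain level $\ker \alpha$ is $\Hom(B_{n-1}, G)$, and quotienting by coboundaries $\psi \comp \partial_n$ yields $\coker\!\bigl(\Hom(C_{n-1}, G) \to \Hom(B_{n-1}, G)\bigr)$. On the other hand, since $B_{n-1}$ and $Z_{n-1}$ are both free, the sequence
\[0 \to B_{n-1} \hookrightarrow Z_{n-1} \to H_{n-1}(X) \to 0\]
is a length-one free resolution of $H_{n-1}(X)$, so by the definition of derived functors from Section 3.1.4, $\Ext^1_\Z(H_{n-1}(X), G)$ equals $\coker\!\bigl(\Hom(Z_{n-1}, G) \to \Hom(B_{n-1}, G)\bigr)$. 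These two cokernels agree because the splitting $C_{n-1} \cong Z_{n-1} \oplus B_{n-2}$ makes $\Hom(C_{n-1}, G) \twoheadrightarrow \Hom(Z_{n-1}, G)$ surjective, so the two sources have the same image in $\Hom(B_{n-1}, G)$.

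Finally, splitting the sequence is the step I expect to be most delicate. Any fixed retraction $C_n \twoheadrightarrow Z_n$ coming from the structural splitting above produces a section of $\alpha$ by restricting cocycles, and this gives the required splitting. The hard part --- and the reason the statement is not cleaner --- is that this section is genuinely non-canonical: it depends on auxiliary choices of splittings, so the split short exact sequence is \emph{not} natural in the space $X$, only in the coefficient group $G$. This non-naturality is precisely why both $\Hom$ and $\Ext^1$ terms must appear, rather than collapsing into a single direct sum at the level of functors.
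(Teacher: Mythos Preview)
Your argument is correct and follows the standard route (free resolution from cycles and boundaries, identify the kernel of the evaluation map with an $\Ext$ cokernel, split via a non-canonical retraction). There is nothing to compare against here: the paper does not give its own proof of this statement but simply cites Rotman, so your write-up in fact supplies what the paper omits.

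One small point worth tightening: when you write ``at the cochain level $\ker\alpha$ is $\Hom(B_{n-1},G)$,'' you are implicitly using that \emph{every} cocycle representative of a class in $\ker\alpha$ vanishes on $Z_n$, not just some representative. This is true (the induced map $H_n\to G$ being zero forces $\varphi|_{Z_n}=0$ for any cocycle $\varphi$), but it is worth saying explicitly since the passage from cohomology classes back to cochains is where readers typically get confused.
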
 
\noindent For a proof, see $\cite{Rotman1988}.$ This ends the section on algebraic topology.

\section{Differentiable Manifolds and Vector Bundles}
Manifolds pop up in every area of mathematics and play the starring role in the model we develop in Chapter 4. They are generalizations of Euclidean space ($\R^n$) and allows for a variety of new geometry to occur. All together they form a category $\textbf{Man}_\infty$ which gives a concrete example of a suitably bad category whose objects are easy to understand. This section will run through the basic theory of manifolds, vector bundles, and sheaves. We conclude with a discussion of de Rham theory which ties together the topological information on a manifold. Good references for the first two sections are \cite{Wedhorn2016}, \cite{Lee2012}, \cite{Tu2011}, and \cite{Guillemin_Pollack1974}.     

\subsection{Smooth Maps and the category $\textbf{Man}_\infty$}
There are two approaches to smooth manifolds which are commonly used: analytic and algebraic. We shall focus on the algebraic theory as it more closely ties in the later sections here. We will not entirely neglect the analytic theory as we need the notion of differentiation which is purely analytic. We start with the definition of an atlas:

\begin{definition}
	Let $M$ be a topological space and $\alpha\in \N\cup \{\infty\}$. A \textbf{chart} at $p\in M$ is a pair $(\varphi,U)$ with $p\in U\subseteq M$ open, and $\varphi:U\to \varphi(U)\subseteq \R^n$ (for $n$ not depending on $p$) a homeomorphism. A collection $\script{A}$ of charts is called a $C^\alpha$-$\textbf{atlas}$ on $M$ if for all $p,q\in M,$ there are charts $(\varphi_p,U_p)$ and $(\varphi_q,U_q)$ which are \textit{compatible}: the transition map \[ \varphi_p\comp \varphi_q^{-1}:\varphi_q(U_p\cap U_q)\to \varphi_p(U_p\cap U_q)\] is a $C^\alpha$-homeomorphism (each partial derivative is $\alpha$-times differentiable in each coordinate). If $\alpha=\infty,$ then we call the chart maps and the transition maps smooth. In this case, the atlas is called \textbf{smooth}.     
\end{definition}

\begin{example}
	Let $S^n=\{ x\in \R^{n+1} : |x|=1\}$ where $|x|=\sqrt{x_1^2+...+x_{n+1}^2}$ be the $n$-sphere equipped with the subspace topology. To construct a smooth atlas on $S^n,$ we need to give charts. Consider the open subsets \[ U_i^{\pm}:= \{ x\in S^n: x_i>0 (\operatorname{resp}. <0)\} \]
	and the function $f:\mathbb{D}^n\to \R^n$ by $f(u)=\sqrt{1-|u|^2}.$ Then $U_i^{+}\cap S^n$ is the graph of this function and $U_i^-\cap S^n $ is the graph of $-f.$ Each $x_i\in U_i^+\cap S^n$ can then be written as \[x_i=f(x_1,...,\widehat{x_i}, ....,x_{n+1})\]
  Define the maps $\varphi_i^{\pm}:U_i\to \R^n$ by  $\varphi_i^{\pm}(x_1,...,x_{n+1})=(x_1,...,\widehat{x_i},...,x_{n+1}).$
  There are seen to be smooth. Further they are compatible trivially. Hence, $\script{A}=\{ (\varphi_i^{\pm},U_i^{\pm})\}$ is a smooth atlas on $S^n.$  
\end{example}

\begin{definition}\label{Manifold} 
	Let $M$ be a topological space equipped with a $C^\alpha$-atlas $\script{A}.$ We call $M$ an $n$-dimensional $C^\alpha$-$\textbf{manifold}$ if $M$ is Hausdorff and there exists a countable basis for the topology.  
\end{definition}
\begin{remark}
	Normally, the requirement of an atlas is stated as $M$ is \textit{locally Euclidean}. This is the key property of manifolds over normal Euclidean space. They do not need to be $\R^n$ or even $C^\alpha$-homeomorphic to $\R^n,$ only locally.  
\end{remark}

We shall study only smooth manifolds here. The non-smooth cases are important, however not for this thesis. For smooth manifolds, we would like to know that $M$ does not depend on the atlas. 
\begin{proposition}
	Let $M$ be a smooth manifold with atlas $\script{A}.$ Then there exists some $\script{A}^\sharp$ a unique atlas which is maximal and contains all atlases on $M.$ 
\end{proposition}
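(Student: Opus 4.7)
The plan is to define $\mathscr{A}^\sharp$ explicitly as the set of all smooth charts on $M$ that are smoothly compatible with every chart of $\mathscr{A}$, and then verify in turn that (i) $\mathscr{A}^\sharp$ contains $\mathscr{A}$, (ii) $\mathscr{A}^\sharp$ is itself a smooth atlas, (iii) $\mathscr{A}^\sharp$ contains every smooth atlas on $M$ that contains $\mathscr{A}$, and (iv) any atlas satisfying the conclusion equals $\mathscr{A}^\sharp$. Step (i) is immediate: $\mathscr{A}$ is a smooth atlas, so its charts are pairwise compatible and hence lie in $\mathscr{A}^\sharp$. Step (iii) is essentially tautological from the definition: if $\mathscr{B}\supseteq \mathscr{A}$ is any smooth atlas, then every chart of $\mathscr{B}$ is smoothly compatible with every chart of $\mathscr{A}\subseteq \mathscr{B}$, so $\mathscr{B}\subseteq \mathscr{A}^\sharp$. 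Step (iv) then follows: any maximal atlas $\mathscr{A}'$ containing $\mathscr{A}$ satisfies $\mathscr{A}'\subseteq \mathscr{A}^\sharp$ by (iii), while $\mathscr{A}^\sharp$ itself is an atlas containing $\mathscr{A}'$, so maximality forces equality.

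The real content, and the only place where anything needs to be checked, is step (ii): I must show that any two charts in $\mathscr{A}^\sharp$ are themselves smoothly compatible, even though the definition only asserts compatibility with members of $\mathscr{A}$. Let $(\varphi,U),(\psi,V)\in \mathscr{A}^\sharp$, and fix $p\in U\cap V$; the goal is to show that the transition map $\psi\circ \varphi^{-1}$ is smooth on an open neighborhood of $\varphi(p)$ in $\varphi(U\cap V)$. Since $\mathscr{A}$ is an atlas covering $M$, pick a chart $(\chi,W)\in \mathscr{A}$ with $p\in W$, and work on the open set $\varphi(U\cap V\cap W)\subseteq \R^n$. On this set we have the factorization
\[
\psi\circ \varphi^{-1} \;=\; (\psi\circ \chi^{-1})\circ (\chi\circ \varphi^{-1}),
\]
where $\chi\circ \varphi^{-1}$ is smooth because $\varphi\in \mathscr{A}^\sharp$ is compatible with $\chi\in \mathscr{A}$, and $\psi\circ \chi^{-1}$ is smooth because $\psi\in \mathscr{A}^\sharp$ is compatible with $\chi\in \mathscr{A}$. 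Compositions of smooth maps between open subsets of $\R^n$ are smooth, so $\psi\circ \varphi^{-1}$ is smooth on $\varphi(U\cap V\cap W)$. Covering $U\cap V$ by such $W$ as $p$ ranges over $U\cap V$, and using that smoothness is a local property, $\psi\circ \varphi^{-1}$ is smooth on all of $\varphi(U\cap V)$.

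The main obstacle is exactly the transitivity argument in step (ii): compatibility is a pairwise relation, and a priori there is no reason two charts that each "agree with $\mathscr{A}$" should agree with each other. The fix is the local factorization through a chart of $\mathscr{A}$ displayed above, which works precisely because $\mathscr{A}$ covers $M$ and smoothness of a map $\R^n\to \R^n$ can be checked locally. Once this step is in hand, maximality and uniqueness are formal consequences of the definition of $\mathscr{A}^\sharp$ as the set of \emph{all} charts compatible with $\mathscr{A}$, with no further work required.
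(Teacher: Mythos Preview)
Your proof is correct and follows essentially the same approach as the paper: define $\mathscr{A}^\sharp$ as all charts compatible with $\mathscr{A}$, verify it is an atlas by locally factoring the transition map $\psi\circ\varphi^{-1}$ through a chart of $\mathscr{A}$, and deduce uniqueness from the definition. Your write-up is in fact more careful than the paper's, making explicit the covering argument and the formal deduction of maximality and uniqueness.
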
 
\begin{proof}
	Define $\script{A}^\sharp$ as the set of all charts which are smoothly compatible with the charts in $\script{A}.$ Let $(\varphi,U)$ and $(\psi,V)$ be charts in  $\script{A}^\sharp.$ Put $x=\varphi(p)\in \varphi(U\cap V).$ Then as $\script{A}$ is an atlas, there exists a chart $(\theta,W)$ such that $p\in W.$ As $p\in U\cap V\cap W$ the intersection is non-empty. Therefore by construction the map \[ (\psi\comp \theta^{-1})\comp (\theta\comp \varphi^{-1}):\varphi(U\cap V\cap W)\to \psi(U\cap V\cap W) \]
	is smooth and therefore $\psi\comp \varphi^{-1}$ is smooth. Hence, $\script{A}^\sharp$ is an atlas on $M$ containing $\script{A}.$ 
	
	To show it is unique, let $\script{B}$ be another such atlas. Then in particular, each of its charts is smoothly compatible with charts in $\script{A}.$ Hence, $\script{B}\subseteq \script{A}^\sharp$ and by maximality they are equal. 	
\end{proof}
\begin{example}
	The following examples of manifolds show up everywhere and thus should be well understood. 
	\begin{enumerate}
		\item The unit sphere $S^n$ from above was shown to exhibit a smooth atlas. The fact that it is Hausdorff and second countable follows from being a compact subset of $\R^{n+1}.$ 
		\item Consider the action of $\R$ on $\R^n$ by $r(x_1,...,x_n)=(rx_1,...,rx_n).$ Then the quotient space $(\R^n-\{0\})/\R$ is called the $\textit{real projective space}$ and is denoted $\mathbb{P}^{n-1}(\R)$ or just $\mathbb{P}^n$ if the field is understood. We denote elements here as equivalence classes $[x_1,...,x_n].$ These are equivalence classes of lines in $\R^n$ which go through the origin. To give charts on $\mathbb{P}^n,$ we consider maps of the form \[\varphi_i[x_1,...,x_n]=\left(\frac{x_1}{x_i},...,\frac{x_{i-1}}{x_i},\frac{x_{i+1}}{x_i},...,\frac{x_n}{x_i}\right)\in \R^{n-1}.\] Then an easy check shows that these are smooth and are compatible. Hence, $\mathbb{P}^n$ is a smooth manifold. Moreover, it is compact!      
		\item Let $M$ and $N$ be two smooth manifolds. Then $M\times N$ has the structure of a smooth manifold given by charts of the form $(\varphi\times \psi,U_M\times V_N).$ 
		\item Let $M(m,n,\R)$ be the $m\times n$ matrices with real entries. This is a smooth manifold by the diffeomorphism $M(m,n,\R)\to \R^{mn}.$ If $m=n$ we denote this by $M(n,\R)$ or $M_n(\R).$ Notice that for $m=n,$ $M_n(\R)$ comes equipped with a ring structure given by matrix multiplication. In this case, there are many distinguished open submanifolds. The most important is $GL_n(\R)$ the group of invertible linear transformations. We will return to this example later as it is the principal example of a \textit{Lie Group}. These will turn out to be group objects in the category of manifolds.  
	\end{enumerate}	
\end{example}

Due to the above proposition, we will assume without a loss of generality that $M$ is equipped with its maximal atlas. Now, we can define morphisms of smooth manifolds. 
\begin{definition}
	Let $M$ and $N$ be two smooth manifolds. Then a function $F:M\to N$ is a \textbf{smooth map} if for all $(\varphi,U)\in \script{A}_M$ and $(\psi,V)\in \script{A}_N$ such that $F(U)\cap V\neq \varnothing$ the map $\psi\comp F\comp \varphi^{-1}:\varphi(U)\to \psi(V)$ is smooth. As a diagram: \[\begin{tikzcd}
U \arrow[d, "\varphi"'] \arrow[rr, "F"]              & & V \arrow[d, "\psi"] \\
\varphi(U) \arrow[rr, "\psi\circ F\circ\varphi^{-1}"] & & \psi(V)            
\end{tikzcd}\]
A bijective smooth map whose inverse is smooth is a diffeomorphism. 
\end{definition}

The composition of smooth maps is smooth by an extended version of the diagram above. Therefore, we have defined a category $\textbf{Man}_\infty$ of smooth manifolds with morphisms as smooth maps. Using this, we can now define the functor: \[ C^\infty:\textbf{Man}_{\infty}^{op} \to \textbf{Alg}_\R \]
where $C^\infty(M)=\Hom_{\textbf{Man}_\infty}(M,\R)$ and we define the operations point-wise. It is contravariant by the following: 
let $F:M\to N$ be a morphism. Then \begin{align*} F^*:C^\infty(N)\to C^\infty(M)   && F^*(s)=s\comp F       \end{align*}
Further, if $M\to N\to P$ is a sequence, then \[ (F\comp G)^*(d)=d\comp (F\comp G)=(d\comp F)\comp G=G^*\comp F^*\]

A derivation of this ring at $p\in $M is a function $d:C^{\infty}(M)\to \R$ such that $d$ is linear and $d(fg)=f(p)(dg)+(df)g(p).$ 
\begin{remark}
	In fact, $C^\infty(M)$ is a Banach space. This changes the situation is quite a subtle way. If $M\times N$ is a product manifold, one would expect the smooth functions to be $C^\infty(M)\tensor_\R C^\infty(N).$ However, this cannot be true as trigonometric functions exist. Therefore,  we need to take come metric completion of this tensor product. For more information see \cite{Ryan2002}.  
\end{remark}
\begin{definition}
	Let $M$ be a smooth manifold. If $p\in M,$ we define the $\textbf{tangent space}$ to $M$ at $p$ to be \[ T_pM=\{ (f:C^{\infty}(M)\to \R): f \text{ is a derivation at } p \}\]
	This is clearly an $\R$-vector space. In fact, it is finite dimensional. 
\end{definition} 
\noindent Elements of the tangent space should be thought of as vectors which are tangent to $M$ at the point $p.$ We put $\dim_p M=\dim_\R T_pM.$

The \textit{germ} of a function $f:M\to \R$ at the point $p$ is an equivalence class $[f]$ where two functions $f,g$ are equivalent at $p$ if there exists an open neighbourhood $W$ of $p$ such that $f=g$ on $W.$ Denote by $C^\infty_{M,p}$ the set of all germs at $p.$ This is a local ring with maximal ideal $\lie{m}_p$ all functions which are non-zero at $p.$   

The following Theorem gives equivalent formulations of the tangent space. 
\begin{theorem}\label{Tangent_Spaces} 
	Let $M$ be a smooth manifold and $T_pM$ its tangent space at $p.$ The space of germs $C^\infty_{M,p}$ is a local ring with maximal ideal $\lie{m}_p.$ The following are equivalent formulations of the tangent space: 
	\begin{enumerate}
		\item Let $D_pM=\operatorname{Der}(C^\infty_{M,p},\R)$\footnote{A derivation is a linear map $f$ such that it satisfies the Leibniz rule for multiplication: $f(xy)=f(x)y+xf(y).$ $\operatorname{Der}$  is the space of all such functions. For more details, see \cite{Lee2012} or \cite{Wedhorn2016}.}. Then $D_pM\cong T_pM$ by the map which sends $[f]\mapsto f.$ 
		\item Let $\gamma:(-1,1)\to M$ be a smooth curve with $\gamma(0)=p$. Then \[C_pM=\{\gamma'(0) : (\gamma:[0,1]\to M) \text{ a smooth curve and }\gamma(0)=p  \}/\sim\] where $\gamma\sim \delta$ if for all germs $f\in C^\infty_{M,p}$ we have $(f\comp \gamma)'(0)=(f\comp \delta)'(0).$ Then $C_pM\cong T_pM.$ 
		\item $\left(\lie{m}_p/(\lie{m}_p)^2\right)^*\cong T_pM.$ 
	\end{enumerate}
\end{theorem}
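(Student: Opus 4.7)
The plan is to establish the chain of isomorphisms $T_pM \cong D_pM \cong C_pM \cong (\mathfrak{m}_p/\mathfrak{m}_p^2)^*$ by constructing explicit natural maps in each case and verifying that they preserve the relevant algebraic structure. Throughout I will work locally: fix a chart $(\varphi,U)$ around $p$ with $\varphi(p)=0$, so that any smooth $f$ near $p$ has a first-order Taylor expansion in the coordinates $x^i = \varphi^i$.

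First I would handle $T_pM \cong D_pM$. The map is restriction: any derivation $v \in T_pM$ sends $f$ to $v(f)$, and if $f,g$ agree on a neighborhood of $p$ then $v(f-g)=0$ because one can multiply $f-g$ by a bump function vanishing at $p$ and use the Leibniz rule to conclude the derivative there is zero. Thus $v$ descends to $C^\infty_{M,p}$. For the inverse, a germ-derivation pulls back along $C^\infty(M)\to C^\infty_{M,p}$, and a bump-function argument shows this is surjective onto $T_pM$. The subtle point is showing derivations are determined by germs; this is where one needs smooth partitions of unity, which exist because $M$ is Hausdorff and second-countable (Definition \ref{Manifold}).

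Next, for $C_pM \cong T_pM$, define $\Phi:C_pM\to T_pM$ by $\Phi([\gamma])(f) = (f\circ \gamma)'(0)$. This is well-defined by the defining equivalence on $C_pM$, linear over $\R$, and satisfies the Leibniz rule by the ordinary product rule in one variable. Injectivity is immediate from the definition of $\sim$. For surjectivity, given a derivation $v$, use the chart $\varphi$ to write $v$ in coordinates (once we know derivations are determined by their action on the $x^i$, see below), say $v = \sum a^i \partial/\partial x^i|_p$, and take $\gamma(t) = \varphi^{-1}(ta^1,\dots,ta^n)$; direct computation gives $\Phi([\gamma]) = v$.

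The heart of the proof, and where the main obstacle lies, is $D_pM \cong (\mathfrak{m}_p/\mathfrak{m}_p^2)^*$, because this is what actually pins down the dimension and shows the other two spaces are finite-dimensional. Any derivation $v$ kills constants (since $v(1) = v(1\cdot 1) = 2v(1)$) and kills $\mathfrak{m}_p^2$ by Leibniz, hence descends to a linear functional on $\mathfrak{m}_p/\mathfrak{m}_p^2$. The inverse requires an application of \emph{Hadamard's lemma}: for any germ $f$ near $0\in\R^n$, one can write
\[
f(x) = f(0) + \sum_{i=1}^n x^i\, g_i(x),
\]
where $g_i$ are smooth germs with $g_i(0) = \partial f/\partial x^i(0)$. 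Given a functional $\ell$ on $\mathfrak{m}_p/\mathfrak{m}_p^2$, set $D(f) := \ell([f-f(p)])$; using Hadamard one verifies $D$ is a derivation and that the assignment is inverse to restriction. As a corollary, $\{x^1,\dots,x^n\}$ descends to a basis of $\mathfrak{m}_p/\mathfrak{m}_p^2$, so $\dim T_pM = n$. The obstacle is really Hadamard's lemma itself, which in turn rests on the fundamental theorem of calculus applied to $\int_0^1 \partial f/\partial x^i(tx)\,dt$; once granted, the three isomorphisms fit together canonically and the common finite dimension $n$ drops out.
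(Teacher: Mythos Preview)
Your proof is correct and follows the standard route one finds in the cited references. The paper itself does not give a proof of this theorem; it simply writes ``See \cite{Wedhorn2016} for a proof of this'' and adds the remark that part (c) is the hardest and ``relies heavily on the fact that $M$ is $C^\infty$.'' Your sketch is exactly the argument one finds in Wedhorn or Lee, and you have correctly isolated Hadamard's lemma as the step where the $C^\infty$ hypothesis is essential: the integral formula $g_i(x)=\int_0^1 \partial_i f(tx)\,dt$ produces smooth $g_i$ only because differentiation under the integral sign can be iterated indefinitely, which fails for $C^k$ with $k<\infty$. This matches the paper's parenthetical observation that for $C^n$ manifolds $\dim \mathfrak{m}_p/\mathfrak{m}_p^2$ is infinite. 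So there is nothing to compare approach-wise; you have supplied the proof the paper omitted.
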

See \cite{Wedhorn2016} for a proof of this. The hardest one to prove is $(c)$ and it relies heavily on the fact that $M$ is $C^\infty.$ If $M$ were say $C^n,$ then this would not be true and $\dim \lie{m}/\lie{m}^2$ is infinite. 

The operation of passing to the tangent space is functorial in $M.$ That if if $F:M\to N$ is a morphism, then \[ T_pF:T_pM\to T_{F(p)}N\] is a linear map defined by $d\mapsto d\comp F^*.$ Therefore, $T_p(G\comp F)=d\comp (G\comp F)^*=d\comp F^*\comp G^*=T_{F(p)}G\comp T_pF.$ 

\begin{definition}
	Let $M$ be a smooth manifold. We define the \textbf{tangent bundle} of $M$ as the disjoint union\footnote{For a definition see \cite{Lee2012}. It is the coproduct in the category of sets.}  \[ TM=\coprod_{p\in M} T_pM=\{(p,v): p\in M, v\in T_pM\}\]
	The \textbf{cotangent bundle} is $T^*M=\coprod (T_pM)^*$ defined analogously.  
\end{definition}
There is a canonical projection $\pi_M:TM\to M$ given by $(p,v)\mapsto p.$ Then $\pi_M^{-1}(p)=T_pM.$ Picking a basis of $T_pM$ so that we may identify it with $\R^n,$ we find that in a neighbourhood of $p$, $\pi_M^{-1}(U)\cong U\times \R^n.$ This property of the tangent bundle is called \textit{local trivialization}. Further, using this identification we get that $TM$ (and thus $T*M$) are smooth manifolds of dimension $2\dim M.$  This definition makes $T:\textbf{Man}_\infty\to \textbf{Man}_\infty$ into an endofunctor \cite{Lee2012}. 

Using the tangent bundle, we can now study certain $C^\infty(M)$ modules which arise naturally. Let $s:M\to TM$ be a smooth map such that $\pi_M\comp s=id_M.$ We call $s$ a $\textbf{section}$ of $TM$ and denote the space of all sections as \[ \Gamma(M,TM):=\{ (s:M\to TM): \pi_m\comp s=id_M\}\]
This is an $\R$-vector space under point-wise addition. Moreover it can be given the structure of a $C^\infty(M)$-module. If $s$ be a section and $f\in C^\infty(M).$ Then we define $(f\cdot s)(p)= (p,f(p)s(p)).$ If $U\subseteq M$ is a submanifold, we define $\Gamma(U,TM)$ as the sections of the bundle over $U.$ 
\begin{definition}
	A smooth section $s\in\Gamma(M,TM)$ is called a smooth $\textbf{vector field}$ on $M.$ It associated to each point in $M$ a tangent vector $v\in T_pM.$ We call $M$ \textbf{parallelizable} if there exists vector fields $\{V_1,...,V_n\}$ such that $\{V_1(p),...,V_n(p)\}$ is a basis for $T_pM$ for all $p.$  
\end{definition} 
\begin{proposition}
	Let $M$ be parallelizable, then $TM=M\times \R^n.$ 
\end{proposition}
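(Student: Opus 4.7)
The plan is to construct an explicit diffeomorphism $\Phi: M \times \R^n \to TM$ using the global frame provided by parallelizability, and then show that $\Phi$ respects the bundle projection so that the two sides really are isomorphic as smooth bundles (not merely as smooth manifolds of the same dimension).

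First, I would define $\Phi(p, (a_1, \ldots, a_n)) = \left(p, \sum_{i=1}^n a_i V_i(p)\right)$, where $\{V_1, \ldots, V_n\}$ is the global frame. Since each $V_i$ is a smooth section of $TM$ and scalar multiplication/addition in each fiber $T_pM$ are smooth (they are the vector space operations on $\R^n$ inherited through local trivializations), $\Phi$ is smooth. It is a bijection: surjectivity follows because $\{V_1(p), \ldots, V_n(p)\}$ spans $T_pM$ at every $p$, and injectivity follows because this collection is linearly independent, so the coefficients $a_i$ are uniquely determined. Moreover $\pi_M \comp \Phi = \pi_1$, the projection onto $M$, so $\Phi$ intertwines the bundle structures.

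The content of the proof is showing that $\Phi^{-1}$ is smooth. Given $(p, v) \in TM$, by parallelizability there are unique scalars $b_1(p, v), \ldots, b_n(p, v)$ such that $v = \sum b_i(p,v) V_i(p)$; set $\Phi^{-1}(p, v) = (p, (b_1(p,v), \ldots, b_n(p,v)))$. To check smoothness, work in a chart $(\varphi, U)$ on $M$ so that $\pi_M^{-1}(U) \cong U \times \R^n$ via the coordinate frame $\partial/\partial x^1, \ldots, \partial/\partial x^n$. On $U$, write $V_i(p) = \sum_j A_{ji}(p) \, \partial/\partial x^j|_p$, where the functions $A_{ji}$ are smooth because each $V_i$ is a smooth section. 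Then the matrix $A(p) = (A_{ji}(p))$ is invertible at every $p \in U$ (since the $V_i(p)$ form a basis), so by Cramer's rule $A(p)^{-1}$ depends smoothly on $p$. If $v$ has local coordinates $c_j(p,v)$ in the basis $\partial/\partial x^j|_p$ (which are smooth coordinates on $TM$ by the local trivialization), then the vector of coefficients $(b_i)$ is $A(p)^{-1} (c_j)$, hence smooth in $(p, v)$.

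The main obstacle I anticipate is the careful bookkeeping between the smooth structure on $TM$ (as constructed in the excerpt via local trivializations $\pi_M^{-1}(U) \cong U \times \R^n$) and the algebraic expression of $v$ in the global frame; once we know the change-of-basis matrix $A(p)$ is pointwise invertible with smooth entries, the smoothness of $A^{-1}$ and hence of $\Phi^{-1}$ is a routine consequence of Cramer's rule applied to the fact that $C^\infty(U)$ is a ring closed under division by nowhere-vanishing functions (here $\det A$). Since $\Phi$ is a fiberwise linear diffeomorphism over $M$, we conclude $TM \cong M \times \R^n$ as vector bundles, which in particular gives the asserted diffeomorphism.
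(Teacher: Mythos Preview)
Your proof is correct and takes essentially the same approach as the paper: both construct the map between $TM$ and $M\times\R^n$ using the global frame $\{V_1,\ldots,V_n\}$, sending $(p,\sum a_iV_i(p))\leftrightarrow(p,(a_1,\ldots,a_n))$. The paper's proof is terser---it asserts smoothness ``trivially'' and concludes diffeomorphism from the fiberwise linear isomorphism---whereas you supply the details the paper omits, in particular the Cramer's rule argument for smoothness of the inverse via the invertible change-of-basis matrix $A(p)$.
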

\begin{proof}
	Let $\{V_1,...,V_n\}$ be a parallelization of $M.$ Then the map $\varphi:TM\to M\times \R^n$ given by \[ \varphi(p,\sum a_iV_i(p))=(p,\sum a_i e_i)\]
	is smooth trivially. Further, as $T_pM\cong \R^n$ via the isomorphism $V_i(p)\mapsto e_i,$ we get that this map is a diffeomorphism. Hence, $TM\cong M\times \R^n$ is trivial.   
\end{proof}

One important operation that vector fields admit is the Lie Derivative. Given two vector fields $V$ and $W,$ we define \[ \script{L}_V(W)=[V,W]=VW-WV\]
Here $Xf(p)=X_p(f)$ is a derivation of $C^\infty(M).$ It is readily checked that $[X,Y]$ is again a vector field. Hence $\Gamma(M,TM)$ admits the structure of a Lie algebra.

\subsubsection{Immersions and Submersions}
Given the discussion above, we can now formulate some special morphisms in $\textbf{Man}_\infty.$ 
\begin{definition}
	Let $F:M\to N$ be a morphism in $\textbf{Man}_\infty.$ We define the $\textbf{rank}$ of $F$ at the point $p\in M,$  as \[ \operatorname{rk}(T_pF:T_pM\to T_{F(p)}N)\] If we pick bases for $T_pM$ and $T_{F(p)}N$ respectively, then by choosing bases, we can get a smooth map \begin{align*} M\to M(m,n,\R) && p\mapsto T_pF \end{align*}  
	Further, the matrix of $T_pF$ is (up to a choice of basis) \[ \Mat{I_{rk(F)} & 0\\0 & 0}\]
	 
\end{definition} 

From this definition it follows that if $r=\operatorname{rk}(F)$ at $p$ and $(\varphi,U)$ is a chart at $p,$ then there exists a smooth function \[ g:\varphi(U)\to \R^{n-r}\]
which sends $0\to 0$ and $T_0g=0.$ We now have the immediate corollary
\begin{corollary}
	For every $p\in M,$ there exists an open neighbourhood $U$ such that $rk_p(F)\leq rk_{q}(F)$ for all $q\in U.$ 
\end{corollary}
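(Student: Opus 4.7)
The plan is to exploit the smooth matrix-valued map $\Phi:M\to M(m,n,\R)$, $p\mapsto T_pF$, noted in the paragraph immediately preceding the corollary, and then invoke the standard lower semi-continuity of rank on matrices. Fix $p\in M$ and set $r=\operatorname{rk}_p(F)$. First I would pick charts $(\varphi,U_0)$ at $p$ and $(\psi,V_0)$ at $F(p)$ with $F(U_0)\subseteq V_0$, so that in local coordinates the derivative $T_qF$ is represented by the Jacobian matrix $J(q)$ of $\psi\comp F\comp \varphi^{-1}$ at $\varphi(q)$. Because $F$ is smooth, the entries of $J(q)$ depend smoothly (hence continuously) on $q\in U_0$.

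Next I would use the algebraic characterization of rank: a matrix $A\in M(m,n,\R)$ has $\operatorname{rk}(A)\geq r$ if and only if there exists an $r\times r$ submatrix of $A$ with nonzero determinant. Since $\operatorname{rk}(J(p))=r$, choose such an $r\times r$ submatrix and call its determinant $\Delta(q)$, regarded as a function of $q\in U_0$. The map $\Delta:U_0\to \R$ is a polynomial in the entries of $J(q)$, and as these entries are continuous in $q$, the function $\Delta$ is continuous on $U_0$.

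Now, by construction $\Delta(p)\neq 0$. By continuity there is an open neighbourhood $U\subseteq U_0$ of $p$ on which $\Delta$ does not vanish. For every $q\in U$ the matrix $J(q)$ therefore contains an $r\times r$ minor with nonzero determinant, so $\operatorname{rk}(J(q))\geq r=\operatorname{rk}_p(F)$. Since rank is invariant under the change-of-basis induced by the charts (the chart differentials are linear isomorphisms), this gives $\operatorname{rk}_q(F)\geq \operatorname{rk}_p(F)$ for all $q\in U$, as desired.

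There is no real obstacle here; the only subtlety is ensuring that the matrix representation behaves well under the choice of charts, which is handled by the fact that rank is a basis-independent invariant of linear maps, so the chart-dependent statement transports back to the chart-free inequality on ranks of $T_qF$.
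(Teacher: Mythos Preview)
Your argument is correct and follows precisely the line the paper sets up: the paper introduces the smooth map $p\mapsto T_pF\in M(m,n,\R)$ and then states the corollary as immediate without proof, and your minor-determinant argument is the standard way to cash out that immediacy. There is nothing to add.
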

This tells us that the rank of a smooth function can only stay the same or increase in a neighbourhood of a point. If equality holds, we say $F$ has \textbf{constant rank} at $p.$

\begin{corollary}\label{Constant Rank} 
	If $F$ has constant rank at $p,$ then there exist charts $(\varphi,U)$ and $(\psi,V)$ of $p$ and $F(p)$ respectively, such that \[ \psi\comp F\comp \varphi^{-1}(x_1,...,x_m)=(x_1,...,x_r,0,...,0)\]
\end{corollary}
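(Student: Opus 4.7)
The plan is to reduce to a local statement on Euclidean space, apply the Inverse Function Theorem to straighten out the $r$ coordinates that carry the rank, and then use the constant rank hypothesis to show that the remaining coordinates of $F$ cannot depend on anything. Begin by picking any initial charts $(\varphi_0, U_0)$ around $p$ and $(\psi_0, V_0)$ around $F(p)$, translating so that $\varphi_0(p) = 0 \in \R^m$ and $\psi_0(F(p)) = 0 \in \R^n$. Replace $F$ by $\tilde F := \psi_0 \comp F \comp \varphi_0^{-1}$, so we may assume $F$ is a smooth map between open neighborhoods of the origin in $\R^m$ and $\R^n$ of constant rank $r$ near $0$. After pre- and post-composing with linear isomorphisms (which are charts) we may assume the upper-left $r\times r$ block of the Jacobian $(T_0 F)_{ij}$ is invertible.

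Write $F = (F^1,\ldots,F^n)$. Define
\[ \Phi : \R^m \to \R^m, \qquad \Phi(x_1,\ldots,x_m) = (F^1(x),\ldots,F^r(x), x_{r+1},\ldots,x_m). \]
Its Jacobian at the origin is block upper-triangular with an invertible $r\times r$ block and an $(m-r)\times(m-r)$ identity block, hence invertible. The Inverse Function Theorem (which I would cite from \cite{Lee2012} or \cite{Wedhorn2016}) supplies a smooth local inverse $\Phi^{-1}$ on some neighborhood $U$ of $0$, and $\varphi := \Phi \comp \varphi_0$ is the desired chart on the domain. By construction, for $u \in \Phi(U)$,
\[ F\comp \Phi^{-1}(u_1,\ldots,u_m) = (u_1,\ldots,u_r,\, G^{r+1}(u),\ldots,G^n(u)) \]
for smooth functions $G^{r+1},\ldots,G^n$.

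The main obstacle, and the only place where the constant rank hypothesis is really used, is to show that each $G^i$ depends only on $u_1,\ldots,u_r$. Compute the Jacobian of $F\comp \Phi^{-1}$: it has the block form
\[ \begin{pmatrix} I_r & 0 \\ * & \partial G / \partial u'' \end{pmatrix}, \]
where $u'' = (u_{r+1},\ldots,u_m)$. The total rank equals $r + \operatorname{rk}(\partial G/\partial u'')$. Since $F \comp \Phi^{-1}$ has constant rank $r$ on a neighborhood of $0$ (rank is invariant under composition with diffeomorphisms), the block $\partial G/\partial u''$ must vanish identically there. Hence $G^i(u) = G^i(u_1,\ldots,u_r)$ for $i>r$.

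Finally, define the chart on the codomain by
\[ \psi(y_1,\ldots,y_n) = \bigl(y_1,\ldots,y_r,\, y_{r+1}-G^{r+1}(y_1,\ldots,y_r),\ldots,y_n-G^n(y_1,\ldots,y_r)\bigr). \]
Its Jacobian at $0$ is block lower-triangular with identity diagonal blocks, so $\psi$ is a local diffeomorphism and, after composing with $\psi_0$, gives a valid chart around $F(p)$. A direct substitution yields
\[ \psi\comp F\comp \varphi^{-1}(x_1,\ldots,x_m) = (x_1,\ldots,x_r,0,\ldots,0), \]
which is what was required. The subtle point throughout is tracking that each replacement of charts preserves the constant rank property, so that Step~3 above is legitimate; this follows formally from functoriality of $T_p$ and the chain rule.
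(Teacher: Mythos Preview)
Your argument is the standard proof of the Constant Rank Theorem and is correct; the only things to watch are that you should restrict to a neighborhood where rank is constant before concluding $\partial G/\partial u'' \equiv 0$, and that the reduction of the Jacobian block uses the row operation $[I_r\;0;\,A\;B]\to[I_r\;0;\,0\;B]$, which you implicitly invoke. Both are fine as written.

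For comparison: the paper does not actually prove this corollary. It is stated immediately after the remark that there exists a smooth $g:\varphi(U)\to\R^{n-r}$ with $g(0)=0$ and $T_0 g=0$, and then the text moves on to discuss immersions and submersions. So your proposal supplies substantially more than the paper does. What you have written is essentially the proof in \cite{Lee2012}, which the paper cites but defers to; your version is self-contained and makes explicit the one genuinely nontrivial step (that constant rank forces $\partial G/\partial u''=0$), whereas the paper's setup only records the consequence $T_0 g=0$ at the single point $p$ without explaining why $g$ is independent of the extra variables in a neighborhood.
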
 
This corollary is incredibly important to the study of manifolds as it gives a local representation of $F$ in such a way that we can disregard a significant number of variables. There are two extreme cases of the above corollary. 

\begin{definition}
	$F:M\to N$ is called a: \begin{enumerate}
		\item \textbf{Immersion} if $T_pF$ is injective for all $p.$ 
		\item \textbf{Submersion} if $T_pF$ is surjective for all $p.$
	\end{enumerate}
	A smooth immersion which is also a topological embedding is called a \textbf{smooth embedding}. 
\end{definition} 

Embeddings are particularly useful as they exhibit manifolds and sitting inside others. Immersions are also incredibly important. The following example is of an object which cannot be embedded into $\R^3$ but instead can be immersed. 
\begin{example}
	Let $I^2$ be the product of $[0,1]$ with itself. We are going to build $K$ the Klein bottle. 
	Consider the relation $(x,0)\sim (x,1)$ and $(0,y)\sim (1,1-y).$ Then we get an object which cannot be embedded into $\R^3$ but can be immersed. This is the glueing of two mobius bands together to get a 1-sided object with no edges. To show it is a manifold is not particularly difficult as we have a representation of it above.  
	\begin{figure}[!htb]
    \centering
    \includegraphics[width=.8 \linewidth]{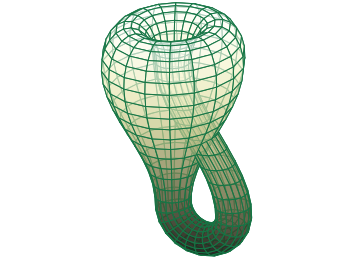}
    \caption{The Klein bottle immersed in $\R^3.$ It can be embedded in $\R^4.$ }
    \label{fig:wrapfig}
\end{figure}
\end{example}
We want to understand how immersions, submersions, and embeddings interact with surjective, injective, and bijective maps. 

\begin{theorem}[Global Rank Theorem]
	Let $F:M\to N$ be a smooth map of smooth manifolds with constant rank. Then 
	\begin{enumerate}
		\item If $F$ is injective, then $F$ is an immersion. 
		\item If $F$ is surjective, then $F$ is a submersion.
		\item If $F$ is bijective, then $F$ is a diffeomorphism. 
	\end{enumerate}
\end{theorem}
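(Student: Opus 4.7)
The plan is to reduce all three parts to Corollary \ref{Constant Rank}, which gives the local normal form $\psi\comp F\comp \varphi^{-1}(x_1,\dots,x_m)=(x_1,\dots,x_r,0,\dots,0)$ for a constant-rank map. Each conclusion then follows by showing that the rank $r$ cannot be too small.

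For part (a), I would argue by contraposition: suppose $F$ is not an immersion, so the constant rank satisfies $r<m$. Fix a point $p\in M$ and apply Corollary \ref{Constant Rank} to get charts in which $F$ is the projection $(x_1,\dots,x_m)\mapsto (x_1,\dots,x_r,0,\dots,0)$. Inside the chart domain $\varphi(U)\subseteq\R^m$ pick any two points that agree in the first $r$ coordinates but differ in at least one of the remaining $m-r\geq 1$ coordinates; both map to the same point of $\psi(V)$, and pulling back through the chart homeomorphisms produces two distinct points of $M$ with the same image under $F$. This contradicts injectivity, forcing $r=m$, i.e.\ $T_pF$ is injective for every $p$.

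For part (b), the obstacle is moving from a local statement (the image lies in a lower-dimensional slice) to a global one (the image cannot be all of $N$). Suppose $r<n$. By Corollary \ref{Constant Rank}, for each $p\in M$ there is a chart neighborhood $U_p$ such that $F(U_p)$ is contained in $\psi_p^{-1}\bigl(\{y\in\psi_p(V_p):y_{r+1}=\cdots=y_n=0\}\bigr)$, the preimage of a measure-zero subset of $\R^n$. Because $M$ is second countable (part of Definition \ref{Manifold}), we may cover $M$ by countably many such $U_{p_i}$, so $F(M)=\bigcup_i F(U_{p_i})$ is a countable union of sets of Lebesgue measure zero in $N$ (measure zero being a coordinate-independent notion on a smooth manifold). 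But $N$ itself is not measure zero whenever $n\geq 1$, so $F(M)\neq N$, contradicting surjectivity. The case $n=0$ is trivial since then $r=0=n$ automatically. Hence $r=n$ and $T_pF$ is surjective at every $p$.

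For part (c), bijectivity combined with parts (a) and (b) forces $T_pF$ to be both injective and surjective, so $r=m=n$ and $T_pF$ is a linear isomorphism at every point. The normal form of Corollary \ref{Constant Rank} with $r=m=n$ makes $\psi\comp F\comp\varphi^{-1}$ the identity on an open subset of $\R^n$, so $F$ is a local diffeomorphism at each point. A bijective local diffeomorphism has a well-defined set-theoretic inverse whose restriction to each image chart is a local diffeomorphism; hence $F^{-1}$ is smooth and $F$ is a diffeomorphism. The main technical obstacle is really part (b), where one must invoke second countability together with the coordinate-invariance of measure-zero sets on a manifold; everything else is a clean application of the local normal form.
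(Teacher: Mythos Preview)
The paper does not actually prove this theorem: immediately after the statement it says ``The proof of this relies on a strong theorem from functional analysis. As we do not develop this theory here, the proof will be omitted,'' and refers the reader to \cite{Lee2012}. Your argument is correct and is essentially the standard proof one finds in Lee. The only point of contact worth noting is part (b): the ``strong theorem from functional analysis'' the paper alludes to is almost certainly the Baire category theorem, which is how Lee handles surjectivity (each $F(U_{p_i})$ is nowhere dense in $N$, and a Baire space is not a countable union of nowhere dense sets). You instead use a measure-zero argument, which is an equally clean alternative and arguably more elementary, since it avoids any appeal to $N$ being a Baire space and only needs that diffeomorphisms preserve Lebesgue-null sets. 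Both routes rely on second countability in exactly the same way, so the two approaches are interchangeable here.
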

The proof of this relies on a strong theorem from functional analysis. As we do not develop this theory here, the proof will be omitted. For a full treatment, see \cite{Lee2012} and \cite{Knapp2005b}. This theorem gives a sufficient condition for a smooth map to be an immersion (resp. submersion) and it is much more easily checked than the normal immersion (submersion) condition.

\subsubsection{Vector Bundles}
We now want to understand some generalizations of the (co)tangent bundle from above. 
\begin{definition}
	Let $M$ be a smooth manifold. We call a triple $(E,\pi, V)$ consisting of a smooth manifold, a projection map, and a real vector space \textbf{real vector bundle}  of rank $\dim V$ over $M$ if: 
	\begin{enumerate}
		\item $\pi:E\to M$ is surjective and a local diffeomorphism. 
		\item For each $p\in M,$ the fibre $\pi^{-1}(p)\cong {p}\times V\cong V$ is endowed with the structure of a $\dim V$-dimensional real vector space.  
		\item For each $p\in M$, there exists a neighbourhood $U$ of $p$ and a homeomorphism $\Phi:\pi^{-1}(U)\to U\times V$ satisfying: 
			\begin{enumerate}
				\item $\pi_U\comp \Phi=\pi$ (where $\pi_U:U\times V\to U$ is the projection)
				\item For each $q\in U,$ the restriction $\Phi_q:E_q\to \{q\}\times V$ is a vector space isomorphism. 
			\end{enumerate}
	\end{enumerate}
\end{definition}
Similarly, we could have defined vector bundles as $E=\coprod_{p\in M} V_p$ where $V_p=\{p\}\times V.$ In this sense, we see that $TM$ and $T^*M$ are vector bundles. Similar to those, $\Gamma(M,E)$ is a $C^\infty(E)$-module. The main purpose of this section is to understand transformations on bundles and transformations between them. 
\begin{definition}
	Let $(E,\pi)$ and $(E',\pi')$ be vector bundles over $M$ and $M'$ respectively. Then a \textbf{bundle homomorphism} is a map $F:E\to E'$ which is linear on each fibre, such that there exists a map $f:M\to M'$ and the following diagram commutes: 
	\[
	\begin{tikzcd}
E \arrow[d, "\pi"'] \arrow[r, "F"] & E' \arrow[d, "\pi'"] \\
M \arrow[r, "f"']                    & M'                      
\end{tikzcd}
	\]
\end{definition} 
	
\begin{proposition}
	If $F$ is smooth, then $f$ is smooth.  
\end{proposition}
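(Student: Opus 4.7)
The plan is to exploit the commutative square $\pi' \circ F = f \circ \pi$ together with the existence of a canonical smooth right-inverse to $\pi$, namely the zero section. The obstacle to directly deducing smoothness of $f$ from smoothness of $F$ is that $\pi$ is not invertible, so we cannot simply write $f = \pi' \circ F \circ \pi^{-1}$. However, we do not need a global inverse: a global smooth section of $\pi$ suffices, and the zero section provides exactly this.

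First I would define the zero section $s_0 \colon M \to E$ by $s_0(p) = 0_p$, where $0_p$ denotes the zero vector in the fibre $E_p$ (which makes sense because each $E_p$ is endowed with a vector space structure by the definition of a vector bundle). I would then verify that $s_0$ is smooth by working in a local trivialization: if $\Phi \colon \pi^{-1}(U) \to U \times V$ is a bundle chart, then $\Phi \circ s_0|_U$ is precisely the map $p \mapsto (p, 0)$, which is smooth as a map of smooth manifolds. Since smoothness is a local property and the bundle charts cover $M$, the map $s_0$ is smooth globally. By construction, $\pi \circ s_0 = \operatorname{id}_M$.

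With the zero section in hand, the proof reduces to a one-line composition argument:
\[
f = f \circ \operatorname{id}_M = f \circ \pi \circ s_0 = \pi' \circ F \circ s_0.
\]
The final expression is a composition of three smooth maps ($s_0$ by the previous paragraph, $F$ by hypothesis, and $\pi'$ because it is a local diffeomorphism, hence smooth, as part of the vector bundle data). Since the composition of smooth maps is smooth, $f$ is smooth, as desired.

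The main conceptual content is recognizing that any vector bundle admits a canonical global smooth section, which converts the surjectivity of $\pi$ into a usable algebraic identity. I expect no genuine obstacle here; the only minor care required is justifying smoothness of $s_0$ via local trivializations, and confirming that the vector space isomorphism $\Phi_q \colon E_q \to \{q\} \times V$ sends $0_q$ to $(q,0)$, which follows because $\Phi_q$ is linear.
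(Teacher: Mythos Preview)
Your proof is correct and follows exactly the same route as the paper: write $f = \pi' \circ F \circ s_0$ using the zero section (which the paper denotes $\zeta$) and conclude by composition of smooth maps. You simply supply more detail on why the zero section is smooth, which the paper takes for granted.
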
	
\begin{proof}
	$f=\pi'_M\comp F\comp \zeta$ where $\zeta$ is the zero section. This is a composition of smooth maps and therefore smooth. 
\end{proof}
This lets us define a category $\textbf{Bun}(M)$ whose objects are vector bundles over $M$ and where morphisms are bundle homomorphisms. The forgetful functor \[U:\textbf{Bun}(M)\to \textbf{Man}_\infty\] is faithful. In general, it is not full as there exist smooth maps $E\to E'$ which do not commute with the projection maps. We will denote by $\textbf{Bun}(M)^{<\infty}$ the category of finite rank vector bundles. This category will become interesting in the next section when we relate it to categories of certain sheaves.    

\begin{example} We now construct some interesting bundles over various manifolds. 	
\begin{enumerate}
	\item Let $M=S^1.$ Define an equivalence relation on $\R^2$ by $(x,y)\sim (x',y')$ if $(x',y')=(x+n,(-1)^ny).$ Put $E=\R^2/\sim.$ We claim $E$ is a non-trivial bundle over $S^1.$ First, let $q:\R^2\to E$ be the quotient map. Consider the following diagram \[\begin{tikzcd}
\R^2 \arrow[d, "\pi_1"'] \arrow[r, "q"] & E \arrow[d, "\pi", dashed] \\
\R \arrow[r, "\varepsilon"']            & S^1                       
\end{tikzcd}\]
where $\varepsilon(x)=e^{2\pi ix}.$ Then $\pi$ is determined as the map which makes this diagram commute. This makes $(E,\pi)$ a real line bundle on $S^1$ which is non-trivial (by the twist of $(-1)^n$).  This is the chief example of how local information can be deceptive when trying to understand something globally. 

	\item Let $M$ be a manifold and $V$ any vector space. Then $M\times V$ has the canonical structure of a vector bundle on $M.$  
	\item Let $E,E'$ be vector bundles over $M.$ Then $E\ds E'$ is a vector bundle whose fibres are $V\ds V'.$ This is called the \textbf{Whitney Sum} of vector bundles. 
\end{enumerate}
\end{example}

If $E$ and $E'$ are vector bundles on a smooth manifold $M,$ denote their space of smooth sections by $\Gamma(E)$ and $\Gamma(E').$ If $F:E\to E'$ is a bundle homomorphism, it induces a map \[ \widetilde{F}:\Gamma(E)\to \Gamma(E')\]
given by \[ \widetilde{F}(\sigma)(p)=F(\sigma(p))\]
Because a bundle homomorphism is linear on fibres, $\widetilde{F}$ is $\R$-linear on sections. In fact, it is even $C^\infty(M)$-linear. We can characterize all $C^\infty(M)$-linear maps $\Gamma(E)\to \Gamma(E')$ by the following Theorem. 

\begin{theorem}
	Let $E,E'$ be vector bundles on $M$ and $\script{F}:\Gamma(E)\to \Gamma(E')$ a map. Then $\script{F}$ is $C^\infty(M)$-linear if and only if $\script{F}=\widetilde{F}$ for some $F:E\to E'.$ 
\end{theorem}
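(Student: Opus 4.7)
The plan is to prove the two directions separately. The direction ``$F$ exists $\Rightarrow$ $\mathscr{F}=\widetilde{F}$ is $C^\infty(M)$-linear'' is immediate from the fibrewise linearity of a bundle homomorphism: for $f\in C^\infty(M)$ and $\sigma\in \Gamma(E)$, $\widetilde{F}(f\sigma)(p)=F(f(p)\sigma(p))=f(p)F(\sigma(p))=f(p)\widetilde{F}(\sigma)(p)$, and additivity is the same computation. The real content is the converse.

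For the converse, I would first show that $\mathscr{F}$ is a \emph{local} operator: if $\sigma\in \Gamma(E)$ vanishes on an open set $U$, then $\mathscr{F}(\sigma)$ vanishes on $U$. Given $p\in U$, choose a bump function $\phi\in C^\infty(M)$ with $\phi(p)=1$ and $\operatorname{supp}\phi\subseteq U$; since $\sigma$ vanishes on $U$ we have $\phi\sigma\equiv 0$, so $\sigma=(1-\phi)\sigma$, and by $C^\infty(M)$-linearity $\mathscr{F}(\sigma)(p)=(1-\phi)(p)\mathscr{F}(\sigma)(p)=0$. The next step is the stronger claim that $\mathscr{F}(\sigma)(p)$ depends only on the value $\sigma(p)$. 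Given $\eta\in \Gamma(E)$ with $\eta(p)=0$, pick a local frame $(e_1,\ldots,e_k)$ on a chart neighbourhood $U$ of $p$ and write $\eta|_U=\sum f_i e_i$ with $f_i(p)=0$. Using a bump function $\phi$ as above, produce global extensions $\tilde e_i:=\phi e_i\in \Gamma(E)$ and $\tilde f_i:=\phi f_i\in C^\infty(M)$; then $\phi^2\eta=\sum \tilde f_i\tilde e_i$ on all of $M$, and applying $\mathscr{F}$ together with evaluation at $p$ yields $\mathscr{F}(\eta)(p)=\sum f_i(p)\,\mathscr{F}(\tilde e_i)(p)=0$. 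Combined with locality and linearity this gives the pointwise-dependence statement.

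With pointwise dependence in hand, define $F:E\to E'$ by $F(v):=\mathscr{F}(\sigma)(p)$ for any $\sigma\in \Gamma(E)$ with $\sigma(p)=v$; such $\sigma$ exists by local triviality together with a bump-function extension, and $F(v)$ is well-defined by the previous paragraph. Fibrewise $\R$-linearity of $F$ then follows by choosing simultaneous representatives: if $\sigma_j(p)=v_j$ then $(a\sigma_1+\sigma_2)(p)=av_1+v_2$ and $\mathscr{F}$ is at least $\R$-linear. Commutativity with the projections is automatic because $F$ sends $E_p$ into $E'_p$, so the underlying base map is $\operatorname{id}_M$. Finally, smoothness of $F$ is checked locally: in a local frame $(e_i)$ of $E$ on $U$, the global extensions $\tilde e_i$ produce smooth sections $\mathscr{F}(\tilde e_i)\in \Gamma(E')$, and over a possibly shrunken neighbourhood where $\phi\equiv 1$ we have $F(\sum a_i e_i(q))=\sum a_i\,\mathscr{F}(\tilde e_i)(q)$, which in local trivializations of $E$ and $E'$ is manifestly smooth. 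Verifying $\widetilde{F}=\mathscr{F}$ is then a tautology from the definition of $F$.

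The main obstacle is the pointwise-dependence step, which is where $C^\infty(M)$-linearity (as opposed to mere $\R$-linearity) is essential: the whole argument rests on being able to pull scalar factors $f_i$ out of $\mathscr{F}$ and evaluate them at $p$. The bump-function/partition-of-unity technology handles the gap between local frames (which live only on $U$) and global sections of $E$ (which $\mathscr{F}$ is defined on), and once locality and pointwise dependence are secured the rest of the proof is bookkeeping. A subtlety worth flagging is that one must ensure the local representation used to check smoothness is independent of the bump function and frame chosen; this follows from the well-definedness of $F$ established before the smoothness argument.
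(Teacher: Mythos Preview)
Your proof is correct and complete. Note, however, that the paper does not actually prove this theorem: it states that ``the proof of this goes beyond the scope of this text'' and refers the reader to \cite{Lee2012}. Your argument is precisely the standard one found in that reference (the so-called tensor characterization lemma / bundle homomorphism characterization lemma): bump functions establish locality, then pointwise dependence via a local frame, then the pointwise definition of $F$ with smoothness checked in local trivializations. So there is nothing to compare against in the paper itself, but your write-up matches the intended cited proof.
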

The proof of this goes beyond the scope of this text. See \cite{Lee2012} for details. What this theorem tells us is that for vector bundles, we have a bijective correspondence between \[ \Hom_{C^\infty(M)}(\Gamma(E),\Gamma(E'))\overset{\sim}{\longrightarrow} \Hom_{\textbf{Bun}(M)}(E,E')\]

\begin{remark}
	The key to vector bundles is that they somehow encode both global and local information of the manifold. Further, understanding the category $\textbf{Bun}(M)$ is in some sense equivalent to understanding the slice category (see \cite{MacLane1971} for a definition) $\textbf{Man}_\infty/M.$ Overall, we shall use these objects to transfer information from the physical space of a sensory system to the perceptual space. In fact, this will be how we build the perceptual space. 
\end{remark}

We could have equivalently defined \textit{fibre bundles} and gone through this section in more generality. These are similar to vector bundles but we do not require that the fibres be vector spaces. The story of these objects is largely mysterious as they are nearly too general to say anything interesting about. Importantly though, they still have the property that all fibres are isomorphic. 

This concludes the section on manifolds.

\subsection{Sheaves} 
The story of sheaves begins where we just finished: fibre bundles. Notice that fibre bundles are characterized by the fact that the fibres over every point are necessarily isomorphic. Sheaves seek to generalize this idea by removing the restriction of constant fibres. Sheaves are key in nearly every area of mathematics as they encode geometric information which is otherwise difficult to access. In the late 1960s, Alexander Grothendieck first developed the idea that understanding sheaves on a space is equivalent (and in some sense better) than understanding the space itself. In this section we will give the first properties of (pre)sheaves, define ringed spaces, and construct the category $\mathcal{O}_X$-\textbf{Mod}. We conclude the section with a brief introduction to sheaf cohomology, which in the same style as cohomology in the previous section, will provide rich invariants to the associated manifolds. Most of the material of this section comes from \cite{Iversen1986}, \cite{Hartshorne1977} , \cite{Wedhorn2016}, \cite{Eisenbud_Harris2000}, and \cite{Bredon1997}. As this forms the most technical material of this thesis, we shall only prove those statements which are fundamental to the reader's understanding and will point to the appropriate reference otherwise.  
\begin{remark}
	Due to the technical stress of this section, we encourage the reader to skip a majority of the proofs of the statements presented here. The proofs of some of a majority of these theorems can be opaque on a first pass and thus should be revisited only if a deeper understanding is desired. 
\end{remark}

Before we give the formal definitions of sheaves, recall some of the facts we proved about $C^\infty(-)$ as a functor $\textbf{Man}_\infty\to \textbf{Ring}.$ Fix $M\in \textbf{Man}_\infty.$ We know that $C^\infty_M(U)$ is a ring for any open submanifold $U\subseteq M.$ Additionally, $C^\infty_{M,x}$ is a local ring for each $x\in M.$ Further, we showed that given an open cover $\mathcal{U}$ of $M$ and smooth functions defined on each $U_i$ such that $f_i|_{U_i\cap U_j}=f_j|_{U_i\cap U_j}$ then there exists a unique global smooth function $g$ with the property that $g|_{U_i}=f_i.$ What we will see is that $C^\infty_M$ is the \textit{structure sheaf} of $M$. For now, let's start, as always, with some definitions.

\begin{definition}
    Let $(X,\mathcal{T})$ be a topological space and $\script{C}$ a category. A \textbf{presheaf} on $X$ is a functor \[\mathcal{F}:\mathcal{T}^{op}\to \script{C}.\] $\mathcal{T}^{op}$ is the category whose objects are open subsets of $X$ and whose morphisms are one point sets if $V\subseteq U$ and empty otherwise. Morphisms of presheaves are natural transformations of functors.  
\end{definition}
\begin{remark}
    Notice that for $V\subseteq U,$ there is a unique morphism denoted \[\res^U_V:\mathcal{F}(U)\to \mathcal{F}(V).\] We sometimes call $\mathcal{F}(U)$ the set of sections of $\mathcal{F}$ over $U,$ and denote this $\Gamma(U,\mathcal{F}).$ Additionally, instead of writing $\res^U_V(s)$ for the image of $s$ in $\mathcal{F}(V)$, we instead write $s|_V.$   
\end{remark}
Every presheaf is the same as a contravariant functor. We use the term presheaf when we want to discuss some gluing conditions which we will see later. Some classical examples of presheaves are \[C^\alpha_M=\{f:M\to \R:f \operatorname{ is } \alpha \operatorname{ times differentiable}\}\] for a real $C^\alpha$-manifold $M$ and $\alpha\in \N \cup \{\infty\}.$ 

\begin{definition}
    Let $X$ be a topological space and $\mathcal{F}$ a presheaf on $X.$ $\mathcal{F}$ is a \textbf{sheaf} if the following condition is satisfied \begin{enumerate}
        \item[(\textbf{Sh})] If $U\subseteq X$ is an open set and $\{U_i\}_{i\in I}$ is an open cover of $U$ such that for all $i$ there exists $f_i\in \mathcal{F}(U_i)$ and for all $i\neq j\in I$ $f_i|_{U_i\cap U_j}=f_j|_{U_i\cap U_j}$ then there exists a unique $f\in \mathcal{F}(U)$ such that $f|_{U_i}=f_i.$  
    \end{enumerate}
\end{definition}
\begin{remark}
    This definition can be generalized to general categories. To do this correctly however one needs the language of sites. We will not cover these but refer the reader to \cite{Metzler2003},\cite{Kashiwara2006}, and \cite{Carchedi2011} for an in depth treatment. 
\end{remark}
\begin{example}
    We have already seen an example of a sheaf, namely $C^\alpha.$ It is easy to check the gluing condition (\textbf{Sh}). Other common examples are $\Omega^p_M$ the set sheaf of differential forms of degree $p$ and $L$ the sheaf of locally constant functions on a space. 
\end{example}
Sheaves allow for local information to be glued together to make global information. What we mean by local here is up to some interpretation. We can either mean open neighbourhoods of points or the points themselves. As points are almost never open (except for discrete sets) we need to figure out how to define $\mathcal{F}(x).$ The following definition gives an answer in a category which admits colimits.  

\begin{definition}\label{Stalk} 
    Let $X$ be a topological space and $U(x)=\{U\in \operatorname{Open}(X):x\in U\}.$ Suppose $\mathcal{F}$ is a (pre)sheaf on $X.$ We define the $\textbf{stalk}$ of $\mathcal{F}$ to be \[ \mathcal{F}_x=\varinjlim_{U(x)} \mathcal{F}(U)\] 
\end{definition}
Here, we interpret the colimit as being taken over successively smaller sets containing $x.$ In fact, if there exists some minimal $U_x$ contained in all neighbourhoods of $x,$ then $\mathcal{F}_x=\mathcal{F}(U_x).$ We now want to understand how morphisms of sheaves interact with the stalks.   

\begin{remark}
	For the remainder of this text, we shall consider only sheaves of rings or more generally $R$-modules for some ring $R.$ This simplifies the situation and also turns out to be the situation for most spaces.  
\end{remark}	

\begin{proposition}
    A morphism of sheaves on a space $X,$ $\varphi:\mathcal{F}\to \mathcal{G}$ is an isomorphism if and only if it the induced map on stalks $\varphi:\mathcal{F}_x\to \mathcal{G}_x$ is an isomorphism.
\end{proposition}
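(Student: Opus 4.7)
The plan is to proceed by proving both directions separately, with the forward direction being nearly immediate from functoriality of the stalk construction, and the reverse direction requiring careful use of the sheaf axiom (\textbf{Sh}). For the forward direction, if $\varphi$ admits a two-sided inverse $\psi$ as a morphism of sheaves, then since the stalk $(-)_x = \varinjlim_{U(x)}(-)$ is a colimit (hence a functor by the final proposition of Section 3.1.2), the induced maps $\varphi_x$ and $\psi_x$ on stalks must also be mutually inverse. So the real content lies in the converse.

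For the converse, assume $\varphi_x$ is an isomorphism on stalks for every $x \in X$, and fix an open $U \subseteq X$. I would first prove injectivity of $\varphi_U:\mathcal{F}(U) \to \mathcal{G}(U)$. Given $s,t \in \mathcal{F}(U)$ with $\varphi_U(s) = \varphi_U(t)$, passing to the stalk at $x$ yields $\varphi_x(s_x) = \varphi_x(t_x)$, so $s_x = t_x$ by injectivity of $\varphi_x$. By the colimit description of stalks in Definition \ref{Stalk}, there exists an open $V_x \subseteq U$ containing $x$ on which $s|_{V_x} = t|_{V_x}$. The sets $\{V_x\}_{x \in U}$ form an open cover of $U$, so the uniqueness half of the sheaf condition (\textbf{Sh}) forces $s = t$.

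Next I would tackle surjectivity. Given a section $g \in \mathcal{G}(U)$, for each $x \in U$ choose a preimage $\tilde{f}^x \in \mathcal{F}_x$ of $g_x$ under $\varphi_x$. Lift $\tilde{f}^x$ to a representative $f^x \in \mathcal{F}(W_x)$ on some open neighborhood $W_x$ of $x$. Now $\varphi_{W_x}(f^x)$ and $g|_{W_x}$ have the same germ at $x$, so after shrinking $W_x$ to some smaller $V_x \subseteq W_x \cap U$ we may assume $\varphi_{V_x}(f^x) = g|_{V_x}$. On overlaps $V_x \cap V_y$ we compute
\[
\varphi_{V_x \cap V_y}(f^x|_{V_x \cap V_y}) = g|_{V_x \cap V_y} = \varphi_{V_x \cap V_y}(f^y|_{V_x \cap V_y}),
\]
and invoking the injectivity already proven on the open set $V_x \cap V_y$ gives $f^x|_{V_x \cap V_y} = f^y|_{V_x \cap V_y}$. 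The gluing part of (\textbf{Sh}) then produces a unique $f \in \mathcal{F}(U)$ with $f|_{V_x} = f^x$ for every $x$, and $\varphi_U(f)$ agrees with $g$ on each $V_x$, so once more by (\textbf{Sh}) we conclude $\varphi_U(f) = g$.

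The main obstacle, and the only delicate point, is the shrinking step in the surjectivity argument: one must remember that equality of germs at $x$ only guarantees equality of sections after passing to a possibly smaller neighborhood, which is exactly why the shrunk $V_x$ is needed before gluing. Once this is handled, the argument is a clean instance of the general principle that local data assembles to global data for sheaves. Finally, the inverse morphism of sheaves is obtained by declaring $\varphi^{-1}_U$ to be the set-theoretic inverse of each $\varphi_U$ and checking naturality in $U$, which follows from uniqueness of the preimage constructed above together with naturality of $\varphi$.
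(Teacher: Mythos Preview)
Your proof is correct and follows essentially the same approach as the paper: the forward direction via functoriality of stalks, and the converse by first establishing injectivity of each $\varphi_U$ through the uniqueness part of the sheaf axiom, then surjectivity by lifting germs, shrinking neighborhoods, invoking the already-proven injectivity on overlaps, and gluing. Your explicit remark about verifying naturality of the inverse $\varphi_U^{-1}$ is a small addition the paper leaves implicit.
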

\begin{proof}
	($\Rightarrow$) Let $x\in X$ and $U(x)$ as in Definition \ref{Stalk}. Consider \[\varinjlim: \textbf{Ring}^{U(x)} \to \textbf{Ring}\]where we consider $U(x)$ as a partially ordered set. As $\varphi$ is a natural transformation it gives two direct systems \begin{align*} 
	\{ \mathcal{F}(U),Res_V^U\}_{U(x)} && \{ \mathcal{G}(U),Res_V^U\}_{U(x)} 
\end{align*}
As $\varphi$ is an isomorphism, $\varphi_U$ is an isomorphism for all $U\in U(x).$ Therefore \[ \varinjlim_{U(x)} \{ \varphi_U:\mathcal{F}(U)\overset{\sim}{\longrightarrow} \mathcal{G}(U) \} =\varphi_x:\mathcal{F}_x\overset{\sim}{\longrightarrow} \mathcal{G}_x\]
is an isomorphism. As $x$ was arbitrary, we see that $\varphi_x$ is an isomorphism on all stalks.  

($\Leftarrow$) Now assume that $\varphi_x$ is an isomorphism for all $x\in X.$ We shall show that $\varphi_U$ is a bijection for all $U$ and thus taking $\psi_U=\varphi_U^{-1}$ makes $\varphi$ an isomorphism of sheaves. Let us first show that $\varphi_U$ is injective. If $s\in \mathcal{F}(U)$ is such that $\varphi_U(s)=0,$ then on all stalks $\varphi_x(s)=0.$ As $\varphi_x$ is an isomorphism, we see that $s_x=0$ for all $x\in U.$ Therefore, there exists some $W_x\subseteq U$ such that $s|_W=0$ with $x\in W_x.$ As $\bigcup W_x$ is a cover for $U,$ by the sheaf condition there exists a unique $s^*\in \mathcal{F}(U)$ such that $s^*|_W=s_W=0.$ By uniqueness, $s^*=s=0$ and $\varphi_U$ is injective. 

To show it is surjective, let $t\in \mathcal{G}(U).$ Let $x\in U$ and $t_x\in \mathcal{G}_x$ be the germ of $t$ at $x.$ As $\varphi_x$ is surjective, there exists $s_x\in \mathcal{F}_x$ such that $\varphi_x(s_x)=t_x.$ Pick a representative section $s(x)\in \mathcal{F}(V_x)$ such that $s(x)=s_x.$ Then $\varphi_{V_x}(s(x))$ and $t|_{V_x}$ have the same germ in $\mathcal{G}_x.$ Possibly replacing $V_x$ by a smaller open set, we may assume that $\varphi_{V_x}(s(x))=t|_{V_x}.$ The collection $\{V_x\}$ forms an open cover of $U$ and on each $V_x$ we have a section $s(x).$ Let $p,q\in X$ be distinct points. Then $s(p)|_{V_p\cap V_q} $ and $s(q)|_{V_p\cap V_q}$ are two sections in $\mathcal{F}(V_p\cap V_q)$ which are sent by $\varphi$ to $t|_{V_p\cap V_q}.$ As $\varphi_U$ is injective, we conclude that \[s(p)|_{V_p\cap V_q}=  s(q)|_{V_p\cap V_q}\]
By the sheaf condition there exists $s\in \mathcal{F}(U)$ such that $s|_{V_p}=s(p).$ Lastly, we need to check that $\varphi_U(s)=t.$ By construction $\varphi_{V_x}(s)=t|_{V_x}$ for all $x\in U.$ Now, applying the sheaf condition again to $\varphi_U(s)-t$ we see that this must be $0$ and hence $\varphi_U(s)=t$ and $\varphi$ is surjective. This completes the proof.   
\end{proof}

The collection of all $\script{C}$-valued sheaves on a topological space for a category denoted $\textbf{Sh}(X,\script{C})$ (Presheaves also form a category). Per the remark above, we shall denote \textbf{Sh}$(X,R$-\textbf{Mod}$):=$\textbf{Sh}$(X)$ when $R$ is well understood. Notice that for a morphism of sheaves the kernel presheaf defines a sheaf but the cokernel presheaf does not. Further, we would like for quotients to exist in this category. To remedy this, we come to the following definition. 

\begin{dproposition}
    For any presheaf $\mathcal{F}$ there is a sheaf $\widetilde{\mathcal{F}}$ and a natural morphism $\theta:\mathcal{F}\to \widetilde{\mathcal{F}}$ with the following universal property: for any sheaf $\mathcal{G}$ and morphism of presheaves $\varphi:\mathcal{F}\to \mathcal{G},$ there exsits a unique morphism of sheaves $\widehat{\varphi}:\widetilde{\mathcal{F}}\to \mathcal{G}$ with $\widehat{\varphi}\comp \theta=\varphi.$ That is, the following diagram commutes \[ \begin{tikzcd}
        \mathcal{F} \arrow[r,"\theta"] \arrow[d,swap,"\varphi"] & \widetilde{\mathcal{F}} \arrow[dl,dashed," \widehat{\varphi}"]\\
        \mathcal{G}&
    \end{tikzcd}
    \]
    The sheaf $\widetilde{\mathcal{F}}$ is called the $\textbf{sheafification}$ of $\mathcal{F}.$ One can prove that sheafification is functorial in presheaves. In fact, it is left adjoint to the forgetful functor $\textbf{Sh}(X)\to \textbf{PSh}(X).$   
\end{dproposition}
\begin{lemma}
	The canonical map $\theta:F\to \widetilde{F}$ induces an isomorphism on stalks. 
\end{lemma}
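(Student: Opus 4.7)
The plan is to leverage the universal property of sheafification (stated as an adjunction in the proposition above) together with a cleverly chosen test sheaf in order to extract the statement via the Yoneda lemma, avoiding an explicit construction of $\widetilde{\mathcal{F}}.$ The test sheaf will be the \emph{skyscraper sheaf} at the point $x,$ and the main bookkeeping task is tracing through the adjunction isomorphism to identify it with precomposition by $\theta_x.$

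First, for any $R$-module $M$ I will introduce the skyscraper sheaf $i_x M$ defined by $(i_x M)(U) = M$ when $x \in U$ and $(i_x M)(U) = 0$ otherwise, with restriction maps given by the identity on $M$ (when both sets contain $x$) or the zero map. A brisk check of condition (\textbf{Sh}) shows this is honestly a sheaf: the only non-trivial gluing happens when some $U_i$ contains $x,$ in which case the sheaf value on $U$ is forced to match that $U_i.$ The key intermediate observation is that for any presheaf $\mathcal{G}$ there is a natural bijection $\Hom_{\textbf{PSh}}(\mathcal{G}, i_x M) \cong \Hom_R(\mathcal{G}_x, M),$ obtained by sending a morphism $\varphi$ to the map on stalks induced by the maps $\varphi_U:\mathcal{G}(U)\to M$ for $U \ni x,$ and using the universal property of the colimit defining $\mathcal{G}_x.$

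Next, I will chain together the adjunction $\widetilde{(-)} \dashv \iota$ given in the definition/proposition with the identification above, for both $\mathcal{F}$ and $\widetilde{\mathcal{F}}.$ Since $i_x M$ is a sheaf, $\iota(i_x M) = i_x M,$ and so there is a natural chain of isomorphisms
\[
\Hom_R(\widetilde{\mathcal{F}}_x, M) \cong \Hom_{\textbf{Sh}}(\widetilde{\mathcal{F}}, i_x M) \cong \Hom_{\textbf{PSh}}(\mathcal{F}, i_x M) \cong \Hom_R(\mathcal{F}_x, M),
\]
natural in $M.$ Unwinding the adjunction bijection in the middle, a morphism $\widehat{\varphi}: \widetilde{\mathcal{F}} \to i_x M$ is sent to $\widehat{\varphi} \circ \theta,$ and so the composite bijection above is precisely precomposition by the stalk map $\theta_x: \mathcal{F}_x \to \widetilde{\mathcal{F}}_x.$ By the Yoneda lemma applied to the representable functor $\Hom_R(-, M)$ on $R$-\textbf{Mod}, a natural transformation that is a componentwise bijection arises from a unique isomorphism between the representing objects; hence $\theta_x$ is an isomorphism.

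The hardest step will be ensuring that the naturality in $M$ is genuine and that the composite really is precomposition by $\theta_x$ rather than some twist thereof; this amounts to verifying that the stalk functor $(-)_x$ is itself the left adjoint to the skyscraper functor $i_x,$ and that the unit of the sheafification adjunction is exactly $\theta.$ As a sanity check, one may alternatively construct $\widetilde{\mathcal{F}}$ explicitly as the sheaf of locally coherent sections $s:U\to \coprod_{y\in U}\mathcal{F}_y$ (those locally representable by elements of $\mathcal{F}$), with $\theta_U(t)(y)=t_y,$ and verify directly that every germ at $x$ of such a coherent section is represented on some neighborhood of $x$ by an honest section of $\mathcal{F}$ (surjectivity), while two local sections inducing equal coherent sections near $x$ must already be equal on some smaller neighborhood (injectivity).
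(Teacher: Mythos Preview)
Your proof is correct but takes a genuinely different route from the paper. The paper simply writes down the explicit construction of $\widetilde{\mathcal{F}}(U)$ as the set of locally coherent families $(s_y)_{y\in U}\in\prod_{y\in U}\mathcal{F}_y$ and observes that, under this description, the map $\theta_x$ on stalks is literally the identity. This is exactly the ``sanity check'' you relegate to your final paragraph.

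Your primary argument instead works purely from the universal property: you establish the stalk--skyscraper adjunction $(-)_x \dashv i_x$, compose it with the sheafification adjunction, and invoke Yoneda to conclude that $\theta_x$ is an isomorphism. This is more conceptual and has the advantage of being construction-independent: it shows that \emph{any} functor satisfying the sheafification universal property must preserve stalks, without ever needing to know what $\widetilde{\mathcal{F}}$ looks like. The cost is that it is considerably longer and requires setting up the skyscraper adjunction carefully, whereas the paper's argument is a one-line observation once the explicit model is on the table. Both are valid; the paper favors brevity, while your approach better isolates which categorical facts are actually doing the work.
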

\begin{proof}
	Consider the construction of the sheafification of $\mathcal{F}$ as \[ \widetilde{F}(U)=\left\{ (s_x)\in \prod_{x\in U} \mathcal{F}_x: \forall x\in U, \exists W\subseteq U \operatorname{ and } \exists t\in \mathcal{F}(W) \operatorname{ such \;that } \forall w\in W, t|_{W}=s|_W  \right\}\]
The restriction maps are given by the restriction on the products. Now, by definition $\theta_x$ is necessarily the identity.  	
\end{proof}

\begin{remark}
    There is another way to build the sheaf associated to a presheaf. Given a presheaf $\mathcal{F}$ on $X,$ we can construct a sheaf $\textbf{Sp\'e}(\mathcal{F})=\bigsqcup_{p\in X} \mathcal{F}_p.$ This has a natural projection $\pi:\textbf{Sp\'e}(\mathcal{F})\to X$ which projects each stalk onto the point it is over. We topologize this space by endowing it with the strongest topology such that the sections $s\in \mathcal{F}(U)$ are continuous. It can be shown that these definitions agree.
\end{remark}
\noindent The sheafification operation allows us to define cokernels, quotients, and constant sheaves. All of this together tells us that if $\script{A}$ is an abelian category, then $\operatorname{Sh}(X,\script{A})$ is also an abelian category \cite{Iversen1986}. Specifically, \textbf{Sh}$(X)$ is an abelian category.
\begin{example}\text{}
	\begin{enumerate}
		\item Let $A$ be a ring. Then $A$ defines a presheaf $A_X$ by $A_X(U)=A$ for all $U$ open. If $X$ is connected, then this is a sheaf. If $X$ is disconnected this is not true. Suppose $X=X_0\sqcup X_1.$ Then if $a\neq b\in A,$ then defining  $a\in A_X(X_0) $ and $b\in A_X(X_1),$ they agree trivially on the empty intersection yet there is no element $c$ such that $c=a$ and $c=b.$ Hence, $A_X$ is not a sheaf in general. Therefore, we consider $\widetilde{A_X}$ which is the sheaf of locally constant functions on $X$ with values in $A.$ 
		\item We define $i_{x,*}(A)$ to be the $\textbf{skyscraper sheaf}$ which is defined by \[ i_{x,*}(A)(U)=\begin{cases}
			A & x\in U \\
			0 & x\notin U
		\end{cases}\]
		This is a sheaf on \textit{any} topological space and plays a key role in the theory as it provides good counter examples to many conjectural relationships.  
		
		\item Let $\mathcal{F}$ be a sheaf and $\mathcal{G}$ a subsheaf on $X,$ Then the functor $U\mapsto \mathcal{F}(U)/\mathcal{G}(U)$ is a presheaf. It is not a sheaf in general however. Therefore we can take the sheafification to get $(\mathcal{F}/\mathcal{G})(U)$ as a sheaf on $X.$ In general, $(\mathcal{F}/\mathcal{G})(U)$ does not agree with $\mathcal{F}(U)/\mathcal{G}(U)$.
	\end{enumerate}
\end{example}

As $\textbf{Sh}(X)$ is an abelian category, we can consider exact sequences of sheaves. 
\begin{definition}
	A sequence of sheaves on a space $X$ is a sequence \[0\to \mathcal{F}\hookrightarrow \mathcal{G}\twoheadrightarrow \mathcal{H}\to 0\]
	is \textbf{exact} if $\im  [\mathcal{F}\to \mathcal{G}]\cong \ker [ \mathcal{G}\to \mathcal{H}].$ Equivalently, this sequence is exact if the corresponding sequence \[ 0\to \mathcal{F}_x\to \mathcal{G}_x\to \mathcal{H}_x\to 0\] is exact on the level of stalks for each $x\in X$.   
\end{definition}	 
It follows from an identical argument for $\Hom,$ that $\Gamma(X,-)$ is a left exact functor \textbf{Sh}$(X)\to R$-\textbf{Mod}. For this reason we define \[ H^i(X,\mathcal{F}):=R^i\Gamma(X,\mathcal{F})\]
as the \textbf{Sheaf cohomology groups} of $\mathcal{F}.$ These will tie together the entire chapter in Section 3.2.4 via Theorem \ref{DeRham}. Before then however, we want to consider how sheaves perform under maps between spaces.

Up until this point, we have considered a fixed space $X.$ If we have a morphism of topological spaces $f:X\to Y,$ we want to build a sheaf on $Y$ which comes from $f$ in some way. 
\begin{definition}
    Let $f:X\to Y$ be a map of topological spaces. Suppose $\mathcal{F}$ is a sheaf on $X.$
    The \textbf{direct Image} (or \textbf{pushforward}) sheaf on $Y$ with respect to $f$ is the sheaf \[ f_* \mathcal{F}(V):=\mathcal{F}(f^{-1}(V))  \] 
    Further, we define the $\textbf{inverse image}$ sheaf on $X$ of a sheaf on $Y$ as \[ f^{-1}\mathcal{G}(U)=\varinjlim_{f(U)\subset V} \mathcal{G}(V)\]
\end{definition}
\begin{remark}
    In the previous definition, one may want to give a naive definition of the inverse image sheaf in the the style of the pushforward, that is $f^{-1}\mathcal{G}(U)=\mathcal{G}(f(U))$. This fails immediately however as we are not guaranteed that $f(U)$ is open. 
\end{remark}
Sometimes, topological spaces come naturally equipped with sheaves. Examples of this situation are smooth manifolds. to every real topological manifold $M,$ we have $C^0_M$ the sheaf of continuous functions $M\to \R$. 

\begin{definition}
    A \textbf{ringed space} is a topological space $X$ equipped with a sheaf of rings $\mathcal{O}_X$ called the structure sheaf of $X.$ A morphism of ringed spaces is a pair $(f,f^\sharp)$ with $f:X\to Y$ a continuous map and $f^\sharp:\mathcal{O}_Y\to f_*\mathcal{O}_X$ a map of sheaves. We call $(X,\mathcal{O}_X)$ a \textbf{locally ringed space} if the stalks $\mathcal{O}_{X,p}$ are local rings for all $p\in X.$ A morphism of locally ringed spaces is a pair where the map on sheaves is a local homomorphism of local rings (on stalks it sends the maximal ideal at $f(p)$ to the maximal ideal at $p$ surjectively). We call $\mathcal{O}_X$ the \textit{structure sheaf} of $X.$      
\end{definition}

\begin{proposition}
	Let $(M,\mathcal{O}_M)$ be a locally ringed space. Then $M$ is a smooth manifold in the sense of Definition \ref{Manifold} if and only if there exists an open cover $M=\bigcup U_i$ such that for each $U_i$ there exists $Y\subseteq \R^n$ open such that there is an isomorphism of locally ringed spaces $(U_i,\mathcal{O}_M(U_i))\overset{\sim}{\longrightarrow} (Y,C^\infty_{\R^n}(Y)).$ 
\end{proposition}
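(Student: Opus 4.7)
The proof plan splits along the biconditional, with the forward direction essentially being a check that the usual structure sheaf on a manifold works, and the backward direction being an extraction of an atlas from the local isomorphism data.

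For the forward direction, suppose $M$ is a smooth manifold with maximal atlas $\mathscr{A}^\sharp$. I would define $\mathcal{O}_M := C^\infty_M$, the sheaf of smooth real-valued functions, and first verify this really is a sheaf of rings (routine: sums, products, and restrictions of smooth functions are smooth, and smoothness is a local condition so the gluing axiom is immediate). To see $(M,\mathcal{O}_M)$ is locally ringed, I would exhibit the maximal ideal of $\mathcal{O}_{M,p}$ as the germs of functions vanishing at $p$; the complement consists of germs nonvanishing at $p$, which are units (take a reciprocal on a small neighborhood). For the chart $(\varphi_i, U_i) \in \mathscr{A}^\sharp$ with $\varphi_i(U_i) = Y_i \subseteq \R^n$, I would define the sheaf map $\varphi_i^\sharp: C^\infty_{\R^n}|_{Y_i} \to (\varphi_i)_* \mathcal{O}_M|_{U_i}$ by pullback, $f \mapsto f \circ \varphi_i$. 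The definition of smoothness on $M$ in terms of coordinate charts is precisely the statement that this pullback map is a bijection on every open subset of $Y_i$, hence an isomorphism of sheaves; it is a local homomorphism on stalks since pullback sends a germ vanishing at $\varphi_i(p)$ to a germ vanishing at $p$.

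For the reverse direction, suppose $(M,\mathcal{O}_M)$ comes with a cover $\{U_i\}$ and isomorphisms $(\psi_i, \psi_i^\sharp) : (U_i, \mathcal{O}_M|_{U_i}) \overset{\sim}{\to} (Y_i, C^\infty_{\R^n}|_{Y_i})$ of locally ringed spaces. The underlying homeomorphisms $\psi_i : U_i \to Y_i \subseteq \R^n$ form a candidate atlas; I need only verify $C^\infty$-compatibility of the transition maps $\tau_{ij} := \psi_i \circ \psi_j^{-1} : \psi_j(U_i \cap U_j) \to \psi_i(U_i \cap U_j)$. The key observation is that on the overlap $U_i \cap U_j$, we have two isomorphisms of locally ringed spaces from $(U_i \cap U_j, \mathcal{O}_M|_{U_i \cap U_j})$ to open subsets of $\R^n$ equipped with $C^\infty_{\R^n}$. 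Consequently, each of the $n$ coordinate functions $x^k$ on $\psi_i(U_i \cap U_j) \subseteq \R^n$ pulls back via $\psi_i^\sharp$ to a section of $\mathcal{O}_M$ over $U_i \cap U_j$, which in turn corresponds via $(\psi_j^\sharp)^{-1}$ to a section of $C^\infty_{\R^n}$ on $\psi_j(U_i \cap U_j)$, i.e., a smooth function. But this function is precisely the $k$-th component of $\tau_{ij}$, so $\tau_{ij}$ is smooth; the same argument with the roles of $i$ and $j$ swapped shows $\tau_{ji}$ is smooth, giving compatibility.

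The main obstacle — and the step that warrants the most care — is ensuring that the sheaf-map half of the morphism data really forces the transition maps to be smooth, rather than merely continuous. In principle one could imagine a homeomorphism $U_i \to Y_i$ together with some \emph{exotic} isomorphism $\psi_i^\sharp$ of structure sheaves that does not coincide with pullback along $\psi_i$. This is ruled out by the locally ringed hypothesis: a local homomorphism $C^\infty_{\R^n,\psi_i(p)} \to \mathcal{O}_{M,p}$ between these particular local rings is forced to agree with pullback along $\psi_i$, since a smooth function on $\R^n$ is determined up to units by its value and the values of its coordinate differences near $\psi_i(p)$, and the local condition pins down where the maximal ideal goes. (I would cite \cite{Wedhorn2016} for the standard argument that morphisms between such structure sheaves are automatically given by composition.) Finally, Hausdorffness and second countability transfer from the $Y_i$'s locally; for the global property one either assumes it as part of the data or imposes it separately as is done in Definition \ref{Manifold}.
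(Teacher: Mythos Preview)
Your proof is correct and follows the same basic strategy as the paper: in the forward direction, equip $M$ with $C^\infty_M$ and check that charts give the required local isomorphisms; in the backward direction, take the underlying homeomorphisms as charts. The paper's proof, however, is much terser---its backward direction simply declares the result ``obvious'' and says the sheaf condition guarantees gluing, without ever checking that the transition maps are smooth. Your argument tracking the coordinate functions $x^k$ through the sheaf isomorphisms, and your explicit identification of the subtlety (that one must know $\psi_i^\sharp$ is pullback along $\psi_i$, which is where the locally ringed hypothesis does real work), fills a genuine gap in the paper's exposition and is the correct way to make the argument rigorous.
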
	
\begin{proof}
	($\Leftarrow$) This direct in obvious by defining the charts of the atlas to be the projection onto the first coordinate of the morphisms $(f_i,f_i^\#)$ of ringed spaces. Then the sheaf condition guarantees the glueing axiom holds. 
	
	($\Rightarrow$) This direction is a bit more subtle. Let $M$ be a smooth manifold with atlas $\script{A}.$ Let $M=\bigcup U_i$ and $V\subset M$ an open subset. Then define \[\mathcal{O}_M(V)=\{ f:V\to \R : f|_{U_i\cap V}\comp \varphi_i^{-1}: \varphi_i(U_i\cap V)\to \R \operatorname{ is } C^\infty\}\]
	This makes $(M,\mathcal{O}_M)$ a ringed space. Further, it follows immediately that the induced morphisms $(U_i,\mathcal{O}_M|_{U_i}) \longrightarrow (Y,C^\infty_{\R^n}|_Y)$ are isomorphisms of ringed spaces. As the target is locally ringed, so is $(M,\mathcal{O}_M).$ 
\end{proof}

\begin{corollary}
	Let $M$ be a smooth manifold with smooth atlas $\script{A}.$ Then $(M,C^{\infty}_M)$ is a locally ringed space. 
\end{corollary}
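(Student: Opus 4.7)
The plan is to deduce this corollary essentially for free from the proposition that immediately precedes it, since the forward direction of that proposition already constructs the structure sheaf $\mathcal{O}_M$ explicitly as the sheaf $C^\infty_M$ of functions that become smooth when composed with the chart inverses. So the first step is simply to invoke that construction: given the smooth atlas $\script{A} = \{(\varphi_i, U_i)\}$, define
\[
C^\infty_M(V) = \{ f : V \to \R \mid f|_{U_i \cap V} \comp \varphi_i^{-1} \in C^\infty(\varphi_i(U_i \cap V)) \text{ for all } i \},
\]
and verify the sheaf axioms. Closure under the ring operations is immediate because smoothness in $\R^n$ is preserved under pointwise addition and multiplication, and the restriction maps are literal restrictions of functions. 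The gluing axiom follows from the fact that $C^\infty_{\R^n}$ satisfies the sheaf condition on each $\varphi_i(U_i)$ together with compatibility of the transition maps $\varphi_i \comp \varphi_j^{-1}$ (which are smooth by the definition of an atlas).

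Next I would establish that $(M, C^\infty_M)$ is a genuine ringed space and even satisfies the local model condition of the preceding proposition. For each chart $(\varphi_i, U_i)$, the map $\varphi_i$ together with the pullback $\varphi_i^\sharp(g) = g \comp \varphi_i$ furnishes a morphism $(U_i, C^\infty_M|_{U_i}) \to (\varphi_i(U_i), C^\infty_{\R^n}|_{\varphi_i(U_i)})$ of ringed spaces, and by the very definition of $C^\infty_M$ this morphism is an isomorphism with inverse given by composition with $\varphi_i^{-1}$. Thus the hypothesis of the proposition holds, and in particular we have local isomorphisms of ringed spaces onto open subsets of $(\R^n, C^\infty_{\R^n})$.

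It remains to check that $C^\infty_M$ is a sheaf of local rings. This is the step that actually requires content beyond bookkeeping, but it is short. Fix $p \in M$ and choose a chart $(\varphi_i, U_i)$ around $p$. Taking stalks commutes with the ringed-space isomorphism above, so
\[
C^\infty_{M,p} \cong C^\infty_{\R^n, \varphi_i(p)}.
\]
Thus it suffices to show the right-hand side is a local ring. Let $\lie{m}_{\varphi_i(p)} \subseteq C^\infty_{\R^n, \varphi_i(p)}$ be the ideal of germs of smooth functions vanishing at $\varphi_i(p)$. Any germ $[g]$ with $g(\varphi_i(p)) \neq 0$ is nonzero on a neighborhood of $\varphi_i(p)$, and on that neighborhood $1/g$ is smooth, so $[g]$ is a unit; conversely, any germ vanishing at $\varphi_i(p)$ cannot be a unit since evaluation at $\varphi_i(p)$ is a ring homomorphism to $\R$. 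Hence $\lie{m}_{\varphi_i(p)}$ is precisely the set of non-units, which makes $C^\infty_{\R^n, \varphi_i(p)}$ local, and transferring back gives the same for $C^\infty_{M,p}$.

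The main obstacle, if there is one, is really just the bookkeeping involved in checking that the pullback of ringed-space morphisms under the charts induces \emph{local} ring homomorphisms on stalks, so that the ``locally ringed'' structure is preserved; but this is automatic because the maximal ideal on each side is characterized intrinsically as the ideal of germs vanishing at the basepoint, and pullback by a continuous map that preserves basepoints sends vanishing germs to vanishing germs. Everything else is a direct application of the preceding proposition.
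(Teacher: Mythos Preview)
Your proposal is correct and follows essentially the same route as the paper. The paper does not give a separate proof for this corollary; it is meant to follow immediately from the $(\Rightarrow)$ direction of the preceding proposition, whose proof already observes that the chart maps yield isomorphisms of ringed spaces $(U_i, \mathcal{O}_M|_{U_i}) \to (Y, C^\infty_{\R^n}|_Y)$ and concludes ``as the target is locally ringed, so is $(M, \mathcal{O}_M)$.'' Your write-up simply supplies the details the paper leaves implicit, in particular the verification that $C^\infty_{\R^n,q}$ is local via the evaluation ideal, which the paper takes for granted.
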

	
In some sense, locally ringed spaces are the correct setting to study everything we have seen already. Manifolds and all of their analytic properties can be re-phrased in terms of operation on the sheaf $C^\infty(M).$ The only object which we have seen so far that needs some further discussion is vector bundles. We first discuss a generalization. 
\begin{definition}
	Let $(X,\mathcal{O}_X)$ be a ringed space. An $\mathcal{O}_X$\textbf{-Module} is a sheaf $\mathcal{F}$ on $X$ such that for each $U\subseteq X$ open, there is a map \[ \mathcal{O}_X(U)\times \mathcal{F}(U)\to \mathcal{F}(U)\]
	which turns $\mathcal{F}(U)$ into an $\mathcal{O}_X(U)$-module. A morphism of $\mathcal{O}_X$-modules is a morphism of sheaves which is $\mathcal{O}_X$-equivariant. 
\end{definition}  

In direct analogy with $R$-modules, we can consider some operations on $\mathcal{O}_X$-modules. 
\begin{example} For this set of examples, let $\mathcal{F}$ and $\mathcal{G}$ be $\mathcal{O}_X$-modules.
	\begin{enumerate}
		\item \textbf{(Direct Sums)} We can define $\mathcal{F}\ds \mathcal{G}(U)$ by $\mathcal{F}(U)\ds \mathcal{G}(U).$ It is nearly immediate that this is a sheaf. Therefore if $I$ is a finite indexing set, we can define the direct sum for over this set and this will be a sheaf. In the finite case, this does not hold true and therefore one must sheafifiy. 
		\item \textbf{(Tensor Products)}Consider the presheaf $T:U\mapsto \mathcal{F}(U)\tensor_{\mathcal{O}_X(U)} \mathcal{G}(U).$ This is not a sheaf in general (this takes some work to find an example). Therefore, we define $\mathcal{F}\tensor_{\mathcal{O}_X} \mathcal{G}=\widetilde{T}.$ 
		\item \textbf{(Hom)} We can consider the presheaf $U\mapsto \Hom_{\mathcal{O}_X|_U}(\mathcal{F}|_U,\mathcal{G}|_U).$ This is actually a sheaf and is denoted as \[ \mathcal{H}om_{\mathcal{O}_X}(\mathcal{F},\mathcal{G})\]
		It also turns out that $\mathcal{H}om$ and $\tensor_{\mathcal{O}_X}$ are adjoint endofunctors.
		\item \textbf{(Duals)} We define \[ \mathcal{F}^*:=\mathcal{H}om_{\mathcal{O}_X}(\mathcal{F}, \mathcal{O}_X)\]
		This is a sheaf on $X.$ There is a canonical morphism $\mathcal{F}\to \mathcal{F}^{**}$ the double dual given on stalks by \[ s_x\mapsto ev_{s_x}:\mathcal{F}_x\to \mathcal{O}_{X.x} \]
		the evaluation at $s_x$ map. Further, this gives another construction of the tangent and cotangent bundles. 
	\end{enumerate}
\end{example}
\noindent Now we want to define "free" $\mathcal{O}_X$-modules. 
\begin{remark} 
	For the remainder of this text, we shall write $\mathcal{O}_U$ for the restriction of the structure sheaf to $U\subseteq X$ open. 
\end{remark}

\begin{definition}
	We call an $\mathcal{O}_X$-module $\mathcal{F}$ \textbf{finite locally free} if there exists an open cover $\mathcal{U}=\{U_i\}_{i\in I}$ such that $\mathcal{F}|_{U_i}$ is isomorphic (as sheaves) to $\mathcal{O}_{U_i}^n$ for some $n\in \N.$ In this case, $\mathcal{F}_x$ is a free $\mathcal{O}_{X,x}$-module. Define $\operatorname{rk}_x(\mathcal{F}):=\operatorname{rk}_{\mathcal{O}_{X,x}}(\mathcal{F}_x).$ This defines a locally constant function \begin{align*}  X\to \N&& x\mapsto \operatorname{rk}_x(\mathcal{F})  \end{align*}
	called the \textit{rank} of $\mathcal{F}.$ 
\end{definition}

We can build a category $FLF(X)$ of all finite locally free sheaves on $X.$ It turns out that for FLF sheaves, the canonical morphism $j:\mathcal{F}\to \mathcal{F}^{**}$ is an isomorphism. These look surprisingly close to a generalization of vector bundles, and the following theorem explains why. 

\begin{theorem}\label{Bun_Sheaf} 
	There is an equivalence of categories $\operatorname{Bun}(X)^{<\infty}\leftrightarrows FLF(X)$ for any ringed space $X.$   
\end{theorem}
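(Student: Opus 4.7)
The plan is to construct a pair of functors $\Gamma:\operatorname{Bun}(X)^{<\infty}\to FLF(X)$ and $\mathbb{V}:FLF(X)\to \operatorname{Bun}(X)^{<\infty}$ and verify that they are mutually quasi-inverse. In one direction, to a finite rank bundle $(E,\pi,V)$ we associate its sheaf of sections $\mathcal{F}_E$ defined by $U\mapsto \Gamma(U,E)=\{s:U\to E:\pi\circ s=\operatorname{id}_U\}$, with the obvious restriction maps. The usual verification shows $\mathcal{F}_E$ is a sheaf, and it is an $\mathcal{O}_X$-module because sections can be multiplied pointwise by elements of $\mathcal{O}_X(U)$. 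To see it is finite locally free, pick local trivializations $\Phi_i:\pi^{-1}(U_i)\overset{\sim}{\to} U_i\times V$; these yield $\mathcal{O}_{U_i}$-module isomorphisms $\mathcal{F}_E|_{U_i}\cong \mathcal{O}_{U_i}^{\dim V}$. A bundle morphism $F:E\to E'$ over $\operatorname{id}_X$ induces the map $\widetilde{F}$ on sections, which is $\mathcal{O}_X$-linear by the preceding theorem on $C^\infty(M)$-linearity, so $\Gamma$ is indeed a functor.

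In the reverse direction, for $\mathcal{F}\in FLF(X)$ choose an open cover $\{U_i\}$ with trivializations $\psi_i:\mathcal{F}|_{U_i}\overset{\sim}{\to}\mathcal{O}_{U_i}^n$. On $U_{ij}:=U_i\cap U_j$ the composite $\psi_j\circ \psi_i^{-1}$ is an automorphism of $\mathcal{O}_{U_{ij}}^n$, which corresponds to an element $g_{ij}\in GL_n(\mathcal{O}_X(U_{ij}))$, and these satisfy the cocycle identity $g_{ik}=g_{jk}g_{ij}$ on triple overlaps simply because they arise from composing sheaf isomorphisms. I would then define $\mathbb{V}(\mathcal{F})$ as the quotient $\coprod_i U_i\times V/\!\sim$ where $(x,v)\in U_i\times V$ is identified with $(x,g_{ij}(x)v)\in U_j\times V$ for $x\in U_{ij}$; the projection to $X$ is induced from the projections on each piece, and the $GL_n$-valued cocycle produces a bundle atlas with smooth (linear-on-fibers) transitions. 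A morphism $\varphi:\mathcal{F}\to \mathcal{G}$ of $FLF$-sheaves is, locally in the two atlases, a matrix with entries in $\mathcal{O}_X$; these patch to give a bundle map $\mathbb{V}(\varphi):\mathbb{V}(\mathcal{F})\to \mathbb{V}(\mathcal{G})$.

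To close the equivalence, I would exhibit natural isomorphisms $\eta:\operatorname{id}\Rightarrow \mathbb{V}\circ \Gamma$ and $\varepsilon:\Gamma\circ \mathbb{V}\Rightarrow \operatorname{id}$. For $\eta_E:E\to \mathbb{V}(\Gamma(-,E))$, over each trivializing chart both bundles are canonically $U_i\times V$ with the \emph{same} transition cocycle (namely the one extracted from the $\psi_i$), so glueing the local identity maps produces the required bundle isomorphism. For $\varepsilon_{\mathcal{F}}:\Gamma(-,\mathbb{V}(\mathcal{F}))\to \mathcal{F}$, I would define it on a trivializing $U_i$ by reading off the coordinates of a section in the basis determined by $\psi_i$; the cocycle identity guarantees this is well defined independently of the chosen trivialization, and the previous theorem on isomorphisms of sheaves reduces the check that $\varepsilon$ is an isomorphism to a verification on stalks, where both sides become $\mathcal{O}_{X,x}^n$.

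The main obstacle, I expect, lies not in the conceptual skeleton above but in the bookkeeping of the glueing construction $\mathbb{V}$: one must check that the transition data $\{g_{ij}\}$ produces a bundle with enough smoothness/continuity and that the construction is \emph{independent} of the trivializing cover, up to canonical isomorphism, so that $\mathbb{V}$ is a well-defined functor. This is essentially the content of a \v{C}ech-style descent argument for $GL_n$-valued cocycles. Once this independence is in hand, functoriality and the naturality of $\eta,\varepsilon$ are straightforward, and the equivalence is established.
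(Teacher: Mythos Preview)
Your proposal is correct and follows the standard argument (sections functor in one direction, \v{C}ech cocycle/glueing construction in the other, with the quasi-inverse verified chartwise). The paper itself does not give a proof of this statement at all --- it simply cites \cite{Wedhorn2016} --- so there is nothing to compare against; your sketch is essentially the argument one finds in that reference.
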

For a proof, see \cite{Wedhorn2016}.  What this theorem tells us is that we can assign to each vector bundle a finite locally free sheaf and vice-versa. Therefore, as the tangent and cotangent bundles are finite rank vector bundles on a manifold $M,$ we get corresponding sheaves $\mathcal{T}_M$ and $\Omega_M^1.$ It turns out that we can define the cotangent bundle using the sheaf $\mathcal{H}om$ from above \[ \Omega_M^1=\mathcal{T}_M^*=\mathcal{H}om(    \mathcal{T}_M,C^\infty_M)\]
Furthermore, as $C^\infty_{M,x}$ is a local ring (the maximal ideal $\lie{m}_x$ is all non-zero functions at $x$) we can define $T_xM=(\lie{m}_x/\lie{m}_x^2)^*$ and then \[ \mathcal{T}_{M,x}=(\lie{m}_x/\lie{m}_x^2)^*\]
This gives an explicit description of the stalks of $\mathcal{T}_M.$  

We now turn to some homological methods to end this subsection. 
Together with morphisms, $\mathcal{O}_X$-\textbf{Mod}$\hookrightarrow$ \textbf{Sh}$(X)$ is a full subcategory which can be shown to have enough injectives \cite{Iversen1986}. Injective $\mathcal{O}_X$-modules are defined analogously to $R$-modules. For this reason, given $\mathcal{F}$ in $\mathcal{O}_X$-\textbf{Mod}, we can find an injective resolution $\mathcal{J}^\bullet$ and thus a quasi-isomorphism \[ \mathcal{F}\overset{qis}{\longrightarrow} \mathcal{J}^\bullet\]
This gives us a way of computing $H^i(X,\mathcal{F}).$ In a similar manner to $R$-modules, \[ H^i(X,\mathcal{F})\cong H^i( \Gamma(X,\mathcal{J}^\bullet))\]

The following remarkable theorem gives yet another way to compute sheaf cohomology for constant sheaves corresponding to a ring $R$. 
\begin{theorem}\label{sheaf_singular} 
	Let $R$ be a ring and $\widetilde{R}_M$ the constant sheaf on $(M,C^\infty_M).$ Then there is an isomorphism \[ H^i(M,\widetilde{R}_M)\cong  H^i_{sing}(M;R)\]
\end{theorem}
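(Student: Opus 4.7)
The plan is to realize both sides as the cohomology of quasi-isomorphic cochain complexes by constructing an acyclic resolution of $\widetilde{R}_M$ out of singular cochains. Let $\mathcal{S}^n$ denote the sheafification of the presheaf $U\mapsto C^n_{sing}(U;R)=\Hom_\Z(C_n(U),R)$, with the usual coboundary $d^n$ inherited from the singular coboundary. This gives a cochain complex of sheaves $\mathcal{S}^\bullet$ together with a natural augmentation $\widetilde{R}_M\to \mathcal{S}^0$ sending a locally constant function to the $0$-cochain that evaluates it on singular $0$-simplices.

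The first step is to show that $0\to \widetilde{R}_M\to \mathcal{S}^\bullet$ is exact as a sequence of sheaves. By the criterion established above, it suffices to check exactness on stalks $\mathcal{S}^\bullet_x$. Since every point of a smooth manifold admits a fundamental system of contractible open neighborhoods (e.g.\ Euclidean balls obtained from a chart), the direct limit defining $\mathcal{S}^n_x$ agrees with $\varinjlim_{U\ni x} C^n_{sing}(U;R)$, and by the homotopy invariance theorem from Section 3.2.3 together with the universal coefficient theorem each $H^i$ of this colimit vanishes for $i>0$ and equals $R$ for $i=0$. This produces the required resolution.

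The second step is to verify that each $\mathcal{S}^n$ is acyclic for the global sections functor $\Gamma(M,-)$. The natural way is to show that $\mathcal{S}^n$ is a \emph{fine} sheaf, meaning it admits partitions of unity subordinate to any open cover; in our setting this follows because $M$ is a paracompact Hausdorff second countable manifold, so the smooth partitions of unity from Section 3.3 can be used to split any section of $\mathcal{S}^n$ across an open cover. Fine sheaves satisfy $H^i(M,\mathcal{F})=0$ for $i>0$, so by the general machinery of derived functors (Theorem of Section 3.1.4) applied to the resolution above,
\[
H^i(M,\widetilde{R}_M)\;\cong\; H^i(\Gamma(M,\mathcal{S}^\bullet)).
\]

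The last and hardest step is to identify $H^i(\Gamma(M,\mathcal{S}^\bullet))$ with $H^i_{sing}(M;R)$. There is an obvious morphism of complexes $C^\bullet_{sing}(M;R)\to \Gamma(M,\mathcal{S}^\bullet)$ sending a global cochain to its germ-class of restrictions; the content is showing this map is a quasi-isomorphism. The technical obstacle is that sections of the sheafification $\mathcal{S}^n$ are \emph{locally} defined singular cochains that need not extend to globally consistent ones. The standard remedy is a barycentric subdivision argument: given an open cover $\mathcal{U}$ of $M$, let $C^{\mathcal{U}}_\bullet(M)\subseteq C_\bullet(M)$ be the subcomplex of $\mathcal{U}$-small singular chains; the subdivision operator produces a chain homotopy showing the inclusion is a quasi-isomorphism, and dualizing shows that cochains that are locally-defined compute the same cohomology as globally-defined ones. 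This is precisely what promotes $\Gamma(M,\mathcal{S}^\bullet)$ to being quasi-isomorphic to $C^\bullet_{sing}(M;R)$, and combining with the previous displayed isomorphism yields the theorem. The main obstacle throughout is this final identification, since it is the one place where the local-to-global principle for sheaves has to be reconciled by hand with the intrinsically global definition of singular cohomology.
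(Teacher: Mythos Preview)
Your approach is exactly what the paper sketches: it says only that ``the proof is quite technical but relies on sheafifying the singular cochain complex,'' and defers the details to \cite{Wedhorn2016}. You have supplied precisely those details, so in substance you and the paper agree.

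One technical point deserves care. In your second step you argue that $\mathcal{S}^n$ is fine because $M$ admits smooth partitions of unity. But $\mathcal{S}^n$ is a sheaf of $R$-modules for an arbitrary ring $R$, and it carries no natural $C^\infty_M$-module structure unless $R$ is an $\R$-algebra; multiplying an $R$-valued cochain by a smooth $\R$-valued bump function does not land back in $R$-valued cochains in general. The standard fix is to observe instead that $\mathcal{S}^n$ is \emph{flabby}: at the presheaf level the restriction $C^n_{sing}(V;R)\to C^n_{sing}(U;R)$ is already surjective (extend by zero on simplices whose image does not lie in $U$), and one checks that the sheafification $\mathcal{S}^n$, which can be identified with the quotient of $C^n_{sing}(-;R)$ by the subpresheaf of locally-zero cochains, inherits surjective restriction maps. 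Flabby sheaves are $\Gamma$-acyclic for the same formal reason injectives are, so your derived-functor computation goes through unchanged. With that correction the argument is sound.
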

The proof is quite technical but relies on \textit{sheafifying} the singular cochain complex. Once this is done it follows nearly immediately. For a full proof, see \cite{Wedhorn2016}. 
This concludes the section on sheaves. 

\begin{remark}
	The main point of sheaves is to facilitate the transfer of local information to global information via glueing. Notice that the axioms for sheaves and thus everything else in this section, was designed so that, under the right conditions , the sections glued to global ones. This action of taking local information to global information is precisely what needs to happen in the olfactory system. We have local actions of granule cells on mitral cells and these "glue" together to form an action of the entire GC layer. As you can guess, the notion of sheaves will show up to help with the mathematical formulation of this property.   
\end{remark}

\subsection{de Rham Cohomology} 
We end this chapter (and therefore all of the background material) with a short disucssion of de Rham theory for manifolds. This centers on the construction of \textit{differential forms} on a manifold and the exterior derivative. The main theorem we will prove is de Rham's theorem which gives an isomorphism of sheaf cohomology with so-called \textit{de Rham cohomology}.  This combined with Theorem \ref{sheaf_singular} gives the grand conclusion that singular cohomology on manifolds can be computed via the de Rham complex. The main references here are \cite{Lee2012} and \cite{Wedhorn2016}.  

This story begins with the construction of differential $k$-forms on a manifold. Before we can do this though, we need to define and study \textit{smooth functors}. These allow us to transform vector bundles and will extend to endofunctors of $\textbf{Bun}(M)^{<\infty}.$ 
\begin{definition}
	Let $F:\textbf{Vect}_\R\to \textbf{Vect}_\R$ be a functor (we will assume covariant but this is not neccessary). We say that $F$ is \textbf{smooth} if the induced map \[ F^\flat:\Hom(V,W)\to \Hom(F(V),F(W)) \]
	is smooth as a map of smooth manifolds. 
\end{definition}
\begin{example}Some common smooth functors which play a key role in the theory of smooth manifolds are presented below. 	\begin{enumerate}
		\item The functor $(-)^{\tensor k}$ is a smooth functor via the Hom-tensor adjunction. Indeed even taking $T^\bullet(-)$ is smooth. In general, most of the operations on vector spaces are smooth functors. Some care needs to be taken in the case of infinite indexing sets, but we shall ignore these cases.
		\item The functor $\bigwedge^k(-)$ is smooth. This will form the basis for all of de Rham theory. In general, if $F$ arises as a quotient of $\tensor^k$ by some homogeneous ideal (its generated by elements of the same degree) then $F$ is smooth.  
		\item The functor $(-)^*:=\Hom(-,\R)$ is smooth. This follows from the previous example.  
		\item If we fix a vector space $W,$ then $\Hom(W,-)$ and $\Hom(-,W)$ are smooth functors. This follows from the first and third example for the case of finite dimensional spaces. For infinite dimensional spaces this is more subtle and less useful. 
	\end{enumerate}
\end{example}

Now let $M$ be a smooth manifold and $\pi:E\to M$ be a vector bundle. If $F$ is a smooth functor, then $F$ admits an extension 
\[ \widehat{F}:\textbf{Bun}(M)^{<\infty}\to \textbf{Bun}(M) \]
by sending $E\mapsto \widehat{F}(E)$ where $\widehat{F}(E)_p=F(E_p).$ 
If $F$ takes finite dimensional vector spaces to finite dimensional vector space, then $\widehat{F}$ lands in $\textbf{Bun}(M)^{<\infty}.$ 
\begin{example}
	Consider the cotangent bundle from before $T^*M=\coprod_{p\in M} T^*_pM.$ Then we realize this as \[ T^*M=\widehat{(TM)^*}\] To construct differential forms, we need to consider $\bigwedge^k(T^*M).$ This is a smooth vector bundle on $M$ of rank $\binom{\dim M}{k}.$ By Theorem \ref{Bun_Sheaf}, we can associate a finite locally free $C^\infty_M$-module to $T^*M.$ What we would like to show is that this associated sheaf is $\Omega_M^1$ from before.   
\end{example}
We now give a second construction of $\Omega_M^1.$  
\begin{definition}
	Let $A$ be an $R$-algebra and $B$ an $A$-module. Then the module of derivations \[ \Omega_{B/A}=\{ db: b\in B\}/\sim\]
	where $\sim$ is defined by the relations for derivations as above. For $(X,\mathcal{O}_X)$ a ringed space, we can define \[\Omega_X^1(U):=\Omega_{\mathcal{O}_X(U)/R} \]
	where $\mathcal{O}_X$ is a sheaf of $R$-modules. For a manifold $(M,C^\infty_M),$ we have \[ \Omega_M^1(U)=\Omega_{C^\infty_M(U)/\R}\]
	This is \textbf{Cotangent sheaf} of $M$ and the tangent sheaf is its dual as a $C^\infty_M$-module. We define differential $k$-forms again, now as sections of the sheaf $\bigwedge^k \Omega_M^1.$ This is the locally free sheaf associated the $k$-th exterior power of the cotangent bundle on $M.$.  
\end{definition}
\begin{remark}
	In general if $(X,\mathcal{O}_X)$ is a locally ringed space, we cannot define $p$-forms as above. This is because $\Omega_X^1$ need not be a FLF sheaf. To remedy this, we use the canonical morphism \[ \Omega_X^1\to \Omega_X^{**}\]
	and take exterior powers. 
\end{remark}

\begin{proposition}
	The two constructions of $\Omega_M^1$ are equivalent. 
\end{proposition}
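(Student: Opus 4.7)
The plan is to exhibit a natural morphism of sheaves between the two constructions and verify it is an isomorphism by passing to stalks, where Theorem 3.3.8 reduces the claim to a purely algebraic statement about derivations of $C^\infty_{M,p}$.

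First I would construct a morphism $\Phi:\Omega_{C^\infty_M/\R}\to \widetilde{(T^*M)}$ (the finite locally free sheaf associated to the cotangent bundle under the equivalence of Theorem \ref{Bun_Sheaf}). On an open $U\subseteq M$, the map sends a generator $df$, for $f\in C^\infty_M(U)$, to the section $p\mapsto d_pf\in T_p^*M$, where $d_pf(v):=v(f)$ for $v\in T_pM$ viewed as a derivation at $p$. One checks that $p\mapsto d_pf$ is a smooth section of $T^*M$ (smoothness is a local statement verified in any chart via the standard basis $dx^1,\dots,dx^n$), and that the assignment $f\mapsto (p\mapsto d_pf)$ kills the defining relations of Kähler differentials (linearity and the Leibniz rule), so $\Phi$ is well defined as a morphism of $C^\infty_M$-modules. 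Functoriality in $U$ is automatic, so we obtain a morphism of presheaves, hence of sheaves after sheafification.

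Next I would verify $\Phi$ is an isomorphism by checking its induced map on stalks $\Phi_p:\Omega_{C^\infty_M/\R,p}\to T_p^*M$. Since direct limits commute with the construction of Kähler differentials, $\Omega_{C^\infty_M/\R,p}\cong \Omega_{C^\infty_{M,p}/\R}$, so the claim reduces to showing that the universal $\R$-linear derivation $d:C^\infty_{M,p}\to \lie{m}_p/\lie{m}_p^2$, $f\mapsto [f-f(p)]$, identifies $\Omega_{C^\infty_{M,p}/\R}$ with $\lie{m}_p/\lie{m}_p^2=T_p^*M$ (the last equality by Theorem 3.3.8(c)). To see this, observe that any $\R$-linear derivation $\delta:C^\infty_{M,p}\to N$ into a $C^\infty_{M,p}/\lie{m}_p=\R$-module $N$ annihilates constants (by $\delta(1)=2\delta(1)$) and satisfies $\delta(fg)=f(p)\delta(g)+g(p)\delta(f)$, so $\delta$ vanishes on $\lie{m}_p^2$ and factors uniquely through $\lie{m}_p/\lie{m}_p^2$. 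This gives a bijection $\operatorname{Der}_\R(C^\infty_{M,p},N)\cong \Hom_\R(\lie{m}_p/\lie{m}_p^2,N)$, matching the universal property of $\Omega_{C^\infty_{M,p}/\R}$, so $\Omega_{C^\infty_{M,p}/\R}\cong \lie{m}_p/\lie{m}_p^2$ canonically, and under this identification $\Phi_p$ is the identity.

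The main obstacle is ensuring the algebraic module $\Omega_{C^\infty_{M,p}/\R}$ really is isomorphic to $\lie{m}_p/\lie{m}_p^2$; the subtlety is that the purely algebraic Kähler differentials of a $C^\infty$-ring can be too large (they do not automatically enforce the smooth chain rule $d(\varphi\circ f)=\varphi'(f)\,df$ for $\varphi\in C^\infty(\R)$). The cleanest way around this is to invoke a Hadamard-style lemma: every germ $f\in C^\infty_{M,p}$ can be written, in local coordinates $(x^1,\dots,x^n)$ centered at $p$, as $f=f(p)+\sum_i x^i g_i$ with $g_i\in C^\infty_{M,p}$ and $g_i(p)=\partial f/\partial x^i(p)$, which forces any derivation to be determined by its values on $x^1,\dots,x^n$ and shows $\lie{m}_p/\lie{m}_p^2$ is freely generated by $[x^1],\dots,[x^n]$. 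Combined with the universal property argument above, this makes $\Phi_p$ an isomorphism for every $p$, and so by the proposition preceding Definition 3.3.40 the morphism $\Phi$ is an isomorphism of sheaves, completing the proof.
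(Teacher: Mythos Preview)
Your approach is far more detailed than the paper's, which consists of the single line ``This follows from Theorem \ref{Bun_Sheaf} and Proposition \ref{Tangent_Spaces}.'' You have the right instinct to build an explicit morphism and check it on stalks, and you correctly flag the delicate point about algebraic K\"ahler differentials of $C^\infty$-rings. However, there is a genuine gap in your stalk computation.

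You claim $\Phi_p$ identifies $\Omega_{C^\infty_{M,p}/\R}$ with $\lie{m}_p/\lie{m}_p^2=T_p^*M$. But the stalk at $p$ of the finite locally free sheaf associated to $T^*M$ is \emph{not} the cotangent space $T_p^*M$; it is the module of germs of sections of $T^*M$ near $p$, a free $C^\infty_{M,p}$-module of rank $n$. The object $\lie{m}_p/\lie{m}_p^2$ is the \emph{fiber} (an $n$-dimensional $\R$-vector space, annihilated by $\lie{m}_p$), not the stalk. Your universal-property argument explicitly restricts to derivations into $C^\infty_{M,p}/\lie{m}_p$-modules, so what you actually prove is $\Omega_{C^\infty_{M,p}/\R}\otimes_{C^\infty_{M,p}}\R\cong \lie{m}_p/\lie{m}_p^2$, i.e.\ an isomorphism of fibers. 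That is not enough to conclude $\Phi_p$ is an isomorphism unless you already know the source is finitely generated (so that Nakayama applies), which is precisely the contested point.

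Relatedly, your Hadamard argument does not do what you claim. Writing $f=f(p)+\sum x^i g_i$ and applying a derivation $\delta$ into a general $C^\infty_{M,p}$-module gives $\delta(f)=\sum x^i\delta(g_i)+\sum g_i\,\delta(x^i)$; the terms $x^i\delta(g_i)$ do not vanish, and iterating Hadamard produces an infinite regress that does not terminate for smooth germs. Indeed, for the purely algebraic $\Omega_{C^\infty(\R)/\R}$ one has $d(e^x)\neq e^x\,dx$, so the algebraic module is genuinely larger than the smooth one. The paper sidesteps all of this by gesturing at Theorem \ref{Tangent_Spaces}(c), whose proof (as the paper itself notes) already absorbs the hard $C^\infty$-specific input; your explicit argument exposes rather than resolves that difficulty.
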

\begin{proof}
	This follows from Theorem \ref{Bun_Sheaf} and Proposition \ref{Tangent_Spaces}.   
\end{proof}

Now, consider $T^*M$ as the vector bundle associated to $\Omega_M^1.$ Then considering that $\bigwedge^kT^*M$ is a vector bundle as above we can sheafify it. As is expected, \[ \bigwedge^kT^*M\mapsto \Omega_M^k:=\bigwedge^k\Omega_M^1\]

\begin{definition}
Using the constructions above, the module of \textbf{differential k-forms} is 
the $C^\infty_M(M)$-module \[\Omega^k(M):=\Gamma(M,\bigwedge^kT^*M)\] 
\end{definition}
These modules come with a a differential $d^k:\Omega^k(M)\to \Omega^{k+1}(M)$ called the \textbf{exterior} \textbf{derivatives}. In the greatest generality, if $\omega\in\Omega^k(M)$ and $V_0,...,V_k$ are smooth vector fields on $M,$ then \[ d^k\omega(V_0,...,V_k)=\sum (-1)^i \omega(V_0,...,\widehat{V_i}, ...,V_k) +\sum (-1)^{ij} \omega([V_i,V_j],V_0,...,\widehat{V_i},...,\widehat{V_j},...,V_k)\]
 where $\widehat{V_i}$ means omission.

\begin{lemma}
	$d^{k+1}\comp d^k=0.$ 
\end{lemma}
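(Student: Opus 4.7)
The plan is to prove the identity locally in a coordinate chart and then invoke the fact that $d$ is a local operator (which follows from its definition via derivations on germs of smooth functions) to conclude globally. Since $\Omega^k(M)$ is generated as a $C^\infty(M)$-module by forms of the type $f\, dx^{i_1}\wedge\cdots\wedge dx^{i_k}$ in any chart, and $d$ is $\R$-linear, it suffices to verify $d^{k+1}\circ d^k = 0$ on such generators.

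First I would check the base case $k=0$: for $f\in C^\infty(M)$, in local coordinates $df = \sum_i \frac{\partial f}{\partial x^i}\, dx^i$. Applying $d$ again yields
\[
d(df) = \sum_{i,j} \frac{\partial^2 f}{\partial x^j \partial x^i}\, dx^j \wedge dx^i.
\]
By Clairaut's theorem (equality of mixed partials, available since $f$ is smooth) the coefficients are symmetric in $i,j$, while the wedge product is antisymmetric, so the double sum vanishes termwise after pairing $(i,j)$ with $(j,i)$.

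Next I would establish the graded Leibniz rule: for $\omega\in \Omega^p(M)$ and $\eta\in \Omega^q(M)$,
\[
d(\omega \wedge \eta) = d\omega \wedge \eta + (-1)^p \omega \wedge d\eta.
\]
This I would verify first for $\omega = f\, dx^I$ and $\eta = g\, dx^J$ using the base case and the observation that $d(fg) = f\, dg + g\, df$ (which is immediate from the definition of a derivation, together with the fact that elements of $\Omega^1$ are generated over $C^\infty$ by exact forms $dx^i$). The signs come out correctly because sliding $dg$ across $dx^I$ introduces a factor of $(-1)^p$.

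With these two ingredients in hand, the general case follows by induction on $k$ and on the length of the sum. For a generator $\omega = f\, dx^{i_1}\wedge\cdots\wedge dx^{i_k}$, write $\omega = f\cdot \alpha$ with $\alpha = dx^{i_1}\wedge\cdots\wedge dx^{i_k}$; since each $dx^{i_j} = d(x^{i_j})$ is already exact, $\alpha$ is a wedge of exact forms. Applying the Leibniz rule twice gives $d\omega = df\wedge \alpha$ (the $f\, d\alpha$ term vanishes by induction, as $d\alpha$ is a sum of terms of the form $d(dx^{i_j})\wedge(\cdots)$). Then $d^2\omega = d(df\wedge \alpha) = d^2f \wedge \alpha - df \wedge d\alpha = 0$, using the $k=0$ case and again that $d\alpha = 0$. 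The main obstacle is really bookkeeping: verifying the signs in the Leibniz rule and ensuring that the induction on $k$ is set up so that the identity $d(dx^{i_j}) = 0$ is available at each stage. Once these are in place, patching the local vanishing to a global statement is automatic because $d$ commutes with restriction to open subsets, and equality of two forms on each chart of a cover implies equality on $M$ by the sheaf property of $\Omega^{k+2}$.
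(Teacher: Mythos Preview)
Your proof is correct: the local-coordinate argument via Clairaut's theorem and the graded Leibniz rule is the standard textbook route to $d^2=0$, and your bookkeeping is fine.

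However, the paper takes a genuinely different approach. It defines $d$ globally by the Cartan formula on vector fields,
\[
d\omega(V_0,\dots,V_k)=\sum (-1)^i\,V_i\bigl(\omega(V_0,\dots,\widehat{V_i},\dots,V_k)\bigr)+\sum (-1)^{i+j}\,\omega([V_i,V_j],V_0,\dots,\widehat{V_i},\dots,\widehat{V_j},\dots,V_k),
\]
and then simply asserts that the proof of $d^{k+1}\circ d^k=0$ is ``identical to the proof for the singular chain complex'': one applies the formula twice and observes that the resulting double sum cancels pairwise by sign, exactly as in the computation $\partial_n\partial_{n+1}=0$ carried out earlier in the paper. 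Your argument is coordinate-dependent and locates the vanishing in the analytic fact that mixed partials commute; the paper's argument is coordinate-free and locates the vanishing in pure combinatorics of alternating signs, emphasizing the formal parallel between $d$ and the simplicial boundary map. Your approach is more self-contained and makes the role of smoothness explicit; the paper's approach is terser and highlights the structural analogy that motivates the de Rham complex in the first place.
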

The proof of this is identical to the proof for the singular chain complex. This complex is called the \textbf{De Rham Complex} associated to $M.$ 

\begin{definition}
A differential $k$-form $\omega$ is called \textbf{closed} if $d\omega=0.$ It is called \textbf{exact} if $\omega=d\eta$ for some $(k-1)$-form $\eta.$ The above lemma tells us that every exact form is closed.
 \end{definition}
 
Therefore, we can define a cohomology theory for $M$ via this complex as \[ H^i_{DR}(M):=\ker d_i/\im d_{i-1} \]
It then follows immediately that $H^0(M)\cong \R^{\pi_0(M)}.$ 
 
 \begin{example}
 	For $\R^n,$ the differential $1$-forms are generated by the formal symbols $dx_i$ where $\{x_i\}$ is a basis for $\R^n.$ For higher degrees, we have then that \[ \omega=\sum \alpha_{i_1,...,i_\ell} dx_{i_1}\wedge ...\wedge dx_{i_\ell} \]
	with $\alpha_{i_1,...,i_\ell}\in \R.$ Further, every $k$-form for $k\geq 1$ is necessarily closed. Hence, $H^i_{DR}(\R^n)=0$ for $i\geq 1$ and. $H^0(\R^m)=\R.$ 
 \end{example}
 
 Now that we have the notion of de Rham cohomology, we want to know what its relation is to sheaf cohomology with the corresponding complex of sheaves constructed in the examples above. 
	
\begin{theorem}\label{DeRham} 
	Let $M$ be a $C^\infty$-manifold. Then, we have the following isomorphism: 
		\[ H_{DR}^i(M)\cong H^i(M,\tilde{\R})\]
		where $\tilde{\R}$ is the constant sheaf on $M.$ 
\end{theorem}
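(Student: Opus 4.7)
The plan is to exhibit the de Rham complex of sheaves as an acyclic resolution of the constant sheaf $\widetilde{\R}$, so that its global sections compute $H^i(M,\widetilde{\R})$ by the abstract nonsense of derived functors. First I would sheafify the de Rham complex: define $\underline{\Omega}^k_M$ as the sheaf $U \mapsto \Omega^k(U)$, with the exterior derivative $d$ giving sheaf morphisms $d^k:\underline{\Omega}^k_M \to \underline{\Omega}^{k+1}_M$. The canonical inclusion $\widetilde{\R} \hookrightarrow \underline{\Omega}^0_M = C^\infty_M$ sending a locally constant function to itself extends to a complex of sheaves
\[ 0 \to \widetilde{\R} \to \underline{\Omega}^0_M \xrightarrow{d^0} \underline{\Omega}^1_M \xrightarrow{d^1} \underline{\Omega}^2_M \xrightarrow{d^2} \cdots\]

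The first key step is to prove this is a resolution, i.e.\ exact as a complex of sheaves. By the criterion that a sequence of sheaves is exact iff it is exact on stalks, it suffices to check exactness at each $x\in M$. Exactness at $\underline{\Omega}^0_M$ follows from the fact that a smooth function with zero derivative on a connected open set is constant. Exactness at higher degree is the Poincar\'e lemma: on any contractible open subset of $\R^n$ (and hence in a basis of neighborhoods of $x$ coming from a chart), every closed form is exact. The stalk statement follows by taking direct limits over such neighborhoods.

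The second key step is to show that the sheaves $\underline{\Omega}^k_M$ are $\Gamma(M,-)$-acyclic, so that $H^i(M,\widetilde{\R})$ may be computed as the cohomology of $\Gamma(M, \underline{\Omega}^\bullet_M) = \Omega^\bullet(M)$, which by definition is $H^i_{DR}(M)$. The standard device is to prove each $\underline{\Omega}^k_M$ is a \emph{fine sheaf}, using the existence of smooth partitions of unity on the paracompact Hausdorff space $M$: given a locally finite open cover $\{U_\alpha\}$, a subordinate partition of unity $\{\rho_\alpha\}\subset C^\infty(M)$ gives endomorphisms of $\underline{\Omega}^k_M$ with supports in $U_\alpha$ summing to the identity. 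Then a standard argument (via \v{C}ech cohomology vanishing for fine sheaves, or directly through the fact that an acyclic resolution computes sheaf cohomology, a consequence of the long exact sequence theorem at the end of Section 3.1.4 applied to the short exact sequences obtained by splitting the resolution into pieces $0 \to Z^k \to \underline{\Omega}^k \to Z^{k+1} \to 0$) yields $H^i(M, \underline{\Omega}^k_M) = 0$ for $i>0$.

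With the resolution in hand and acyclicity established, the conclusion is immediate: applying the left exact functor $\Gamma(M,-)$ to the resolution and truncating the $\widetilde{\R}$ term, we get the complex $\Omega^\bullet(M)$ whose cohomology is simultaneously $H^i_{DR}(M)$ by definition and $H^i(M,\widetilde{\R}) = R^i\Gamma(M,\widetilde{\R})$ by the acyclic resolution principle. The main obstacle I expect is the acyclicity step: proving the Poincar\'e lemma is a concrete calculation (the cone construction), but verifying that fine sheaves are $\Gamma$-acyclic and in particular carefully piecing together the long exact sequences from the split resolution into the final isomorphism requires some care about paracompactness of $M$ (which is built into Definition 3.3.3 via second countability plus Hausdorff) and is where the analytic content of the theorem really lives.
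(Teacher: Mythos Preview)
Your proposal is correct and follows the standard route (Poincar\'e lemma for exactness of the de Rham resolution, fineness via partitions of unity for acyclicity, then the acyclic-resolution principle). The paper itself does not give a proof at all: it simply states that the result ``follows from the constructions above'' and refers the reader to \cite{Wedhorn2016}, where exactly the argument you outline is carried out, so your write-up is strictly more detailed than what appears in the text.
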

\noindent This follows from the constructions above. For more details see \cite{Wedhorn2016}.  

The reason we care about this theorem is that it gives an analytic interpretation of singular cohomology. By Theorem \ref{sheaf_singular}, we have that de Rham cohomology is isomorphic to singular cohomology. Therefore, de Rham cohomology is encoding topological information about the manifold. Further, this isomorphism gives another way to compute sheaf cohomology. 

This ends the Chapter as well as the background material. We encourage the motivated reader to spend some time understanding the final sections here as they are both technical and widely applicable. They will be useful in understanding Chapter 4, as well as some recent claims of computational neuroscientists on the construction of geometric frameworks for perceptual spaces via homology and cohomology.

\chapter{A Geometric Framework for Olfactory Learning and Processing}
\subsection*{Abstract}
We present a generalized theoretical framework for olfactory representation, learning, and perception using the theory of smooth manifolds and sheaves. This framework enables the simultaneous depiction of sampling-based physical similarity and learning-dependent perceptual similarity, including related perceptual phenomena such as generalization gradients, hierarchical categorical perception, and the speed-accuracy tradeoff. Beginning with the space of all possible instantaneous afferent inputs to the olfactory system, we develop a dynamic model for perceptual learning that culminates in a perceptual space in which qualitatively discrete \textit{odor} representations are hierarchically constructed, exhibiting statistically appropriate consequential regions ("boundaries") and clear relationships between the broader and narrower identities to which a given stimulus might be assigned.  Individual training and experience generates correspondingly more sophisticated odor identification capabilities. Critically, because these idiosyncratic hierarchies are constructed from experience, geometries that fix curvature are insufficient to describe the capabilities of the system.  In particular, the use of a hyperbolic geometry to map or describe odor spaces is contraindicated.  

\section{Introduction}
The task of sensory systems is to provide organisms with reliable, actionable information about their environments.  However, such information is not readily available; the environmental features that are ecologically relevant to an organism are rarely directly evident in primary receptor activation patterns.  Rather, these representations of interest must be \textit{constructed} from the combined signals of populations of sensory receptors.  This construction process is mediated by sophisticated networks of neural circuitry that draw out different aspects of potentially important information from the raw input patterns.  We previously have proposed that these interactions and transformations can be most effectively modeled as a \textit{cascade of successive representations} \cite{Cleland2014}, in which each neuronal ensemble constructs its representation by sampling the activity of its antecedents.

The representational cascade that underlies odor recognition and identification is impressively powerful and compact.  Olfactory bulb circuits impose an internally generated temporal structure on afferent inputs \cite{Li_Cleland2017,Li_Cleland2013,Kashiwadani1999,Bathellier2006} while also regulating contrast \cite{Cleland2006}, normalizing neuronal activity levels \cite{Cleland2011,Banerjee2015,Cleland2020Engin}, and managing patterns of synaptic and structural plasticity \cite{Chatterjee2016,Strowbridge2009,Gao2009}. Transient periods of synchronization with postbulbar networks such as piriform cortex are likely to govern interareal communication \cite{Fries2015,Frederick2016,Kay2014}, including feedback effects on bulbar plasticity \cite{Strowbridge2009,Gao2009}.  The resulting perceptual system learns rapidly and is conspicuously resistant to retroactive and compound interference \cite{Herz1996, Stevenson2007}. Odors of interest also can be readily identified despite direct interference from simultaneously encountered competing odorants; this is a major unsolved problem in olfactory neuroscience, as competition for receptor binding sites by multiple odorant species profoundly degrades the odorant-specific receptor activity profiles on which odor recognition ostensibly depends.  We have constructed olfactory circuit models that learn rapidly, resist retroactive interference, and exhibit robust recall under high Bernoulli-Gaussian noise (which models a combination of sampling uncertainty, innate stimulus variance, and high levels of unpredictable competitive interference from other ambient odors) using a strategy of successive recurrent representations shaped by prior learning \cite{Imam2020}.  The success of this approach accentuates the implications of the profound plasticity of the early olfactory system:  odor representations, and the basic function of olfactory perception itself, are fundamentally and critically dependent on learning \cite{Wilson2003,WilsonBook2006,Royet2013}.  Odors and their implications -- excepting a few species-specific innately recognizable odors -- must be learned through individual experience.  Indeed, there is abundant evidence for the perceptual learning of meaningful odor representations, from their generalization properties \cite{Cleland2009} to the mechanisms of odor learning and memory \cite{Tong2014,Vinera2015,Wilson2003,Mandairon2011,Kermen2010}, to the association of odors with meaning and context even in peripheral networks \cite{Doucette2008,Nunez2014,Ramirez2018,WilsonBook2006,Mandairon2014,Herz2005,Aqrabawi2018a,Aqrabawi2020,Levinson2020}.  What we lack is a common theoretical framework in which all of these phenomena can be usefully embedded. 

\subsection{Perceptual frameworks}
Theoretical frameworks for understanding sensory systems include \textit{perceptual spaces} and \textit{hierarchical structures}.  Both are founded on metrics of similarity \cite{Zaidi2013,Edelman2012,Shepard1987,Clapper2019}, though the former presumes an essentially continuous space of some dimensionality into which individual stimulus representations are deployed, whereas the latter presumes some degree of qualitative category membership for each such representation, with intercategory similarities potentially being embedded in the hierarchical proximities among categories.  Perceptual spaces can be defined  using a variety of metrics, including both physical metrics such as wavelength (color) or frequency (pitch) and perceptual metrics such as those revealed by generalization gradients \cite{Shepard1987,Cleland2009,Cleland2002} or by ratings on continuous scales by test subjects.  Indeed, study of the transformations between physical and perceptual metric spaces is foundational to understanding sensory systems from this perspective \cite{Zaidi2013,Meister2015,Victor2017}.  In contrast, hierarchical structures arise from perceptual categorization processes, though relationships among the resulting categories still may respect underlying similarities in the physical properties of stimuli (see \textit{Discussion}). Critically, it is categories that are generally considered to be embedded with associative meaning (\textit{categorical perception}) \cite{Harnad1987,Goldstone2010,Aschauer2018}; a useful theoretical framework must concern itself with the construction of these categories with respect to the physical similarity spaces that are sampled during sensory activity.  That is, along their representational cascades, sensory systems can be effectively considered to transition from a physical similarity space metric to a perceptually modified space, arising from perceptual learning and within which hierarchical categorical representations can be constructed. 

Interestingly, the olfactory modality lacks a clear, organism-independent physical metric such as wavelength or pitch along which the receptive fields of different sensory neuron populations can be deployed (and against which the nonuniform sampling properties of the sensory system can be measured) \cite{Cleland2014}.  However, olfaction does provide an objective basis for an \textit{organism-dependent} physical similarity space.  In this framework, the activity of each odorant receptor type -- e.g., each of the $\sim400$ different odorant receptors of the human nose or the $>1000$ different odorant receptors of the rodent nose -- comprises a unit dimension.  Specifically, the instantaneous activation level of the convergent population of each receptor type provides a value from zero to one (maximum activation), such that any possible odorant stimulus can be defined as a unit vector embedded in a physical metric space with dimensionality equal to the number of receptor types.  Critically, in this framework, (1) the dimensions of this receptor-based metric space ($R$-space; see below) are linearly independent of one another, and (2) every possible profile of receptor activation, including any occluding effects of multiple agonists and antagonists competing for common receptors, is interpretable.  

Linear independence among the dimensions of $R$-space is important for analytical purposes, but their orthogonality is irrelevant \cite{Cooperstein2015}.  This is a vital distinction, not least because orthogonality depends on the statistics of the chemosensory input space and hence cannot be uniquely defined as a property of the olfactory system per se.  In principle, each receptor type should have regions of its receptive field that distinguish it from any other single receptor type, such that activation of a given receptor need not always imply activation of a particular different receptor (that is, no two dimensions will be identical).  However, within any given sensory world, as defined by a finite set of odorant stimuli with established probabilities of encounter, there will be reliable activity correlations among many pairs of receptor types that can support substantial dimensionality reduction \cite{Haddad2010}. Critically, however, these reduced dimensionalities are not characteristic of the olfactory system per se, as they are strongly reflective of the statistics of the stimulus set used and its particular interactions with the deployed complement of receptors. 

Such dimensionality-reduction efforts also have been applied to olfactory perceptual data \cite{Koulakov2011,Castro2013,Zarzo2006}.  These results also are limited by the statistics of stimulus sets, but additionally engage the problem of just what the olfactory system constructs from the space of its underlying physical inputs.  It is reasonably clear (even axiomatic) that the  sampling of physical odorant spaces is not uniform \cite{Castro2013,Koulakov2011} -- that is, odors are \textit{signal sparse} \cite{Berke2017} -- but, perhaps more importantly, the process of odor learning itself directly affects perceived olfactory similarity relationships within a context of learned generalization gradients \cite{Cleland2009,Cleland2011}.  A general framework for olfactory perception must reflect all of these phenomena, embedding physical and perceptual similarity spaces into a common geometrical framework that admits the construction of experience-dependent perceptual categories.  

\section{The geometries of olfaction}
In addition to dimensionality, the second fundamental property of a sensory space is its intrinsic geometry \cite{Zaidi2013}.  Establishing a geometry provides access to theorems by which representational structures can be formally defined and manipulated. However, it is not necessary to restrict the topology of a sensory space to a single geometry with fixed curvature -- in fact, as we indicate below, this is neither advisable nor ultimately possible for the olfactory modality.  Specifically, we here show that a mature olfactory perceptual space cannot be simply characterized as "Euclidean", "hyperbolic", or otherwise, as its dependence on plasticity instead produces a space comprising discrete regions that exhibit the properties of different geometries.  

We here present a generalized geometric framework for the construction of odor representations. The framework is based on the molecular/physiological encoding capacities of the input layer and the transformation of these physiological \textit{odorant} representations by perceptual learning into meaningful, cognitive \textit{odor} representations to which meaning can be ascribed.  Key features include the simultaneous depiction of sampling-based physical similarity and learning-dependent perceptual similarity within the perceptual space, a basis for the speed-accuracy tradeoff \cite{Frederick2017,Rinberg2006,Zariwala2013,Abraham2004}, and a soft hierarchical categorization process that culminates in a perceptual space in which qualitatively discrete \textit{odor} representations are hierarchically constructed through experience, exhibiting statistically appropriate consequential regions with probabilistic boundaries that reflect learned generalization gradients \cite{Cleland2009,Cleland2011,Shepard1987}.  Critically, individual training and experience generates progressively more sophisticated hierarchies and concomitantly superior odor identification capabilities \cite{Royet2013}.  

A simplified illustration of the analytical framework is depicted in Diagram \ref{Diagram1}. Briefly, the space of instantaneous physical inputs to an olfactory receptor activation space ($R$-space) comprising $N$ receptor types can be depicted as an $N$-dimensional unit cube.  Transformations arising primarily from initial post-sampling computations generate a modified receptor space termed $R'$; this space inherits the dimensionality of $R$-space but respects the nonuniform likelihoods of different state points within that space.  The subsequent transformation from $R'$ to $S$-space ("scent space") reflects the perceptual and categorical learning processes that construct perceptual representations of meaningful \textit{odors}.  

\vspace{0.5cm}
\begin{equation}\label{Diagram1} 
\begin{tikzcd}
R\arrow[r,"B"] &R' \arrow[dr,"\xi"] \arrow[d,swap,"\Delta"] \\
&S& M \arrow[l,"C^\infty(\R^m)"] 
\end{tikzcd} 
\end{equation}
\text{}

Formally, $R$ is a unit parallelepiped defined by primary olfactory receptor activation levels. $R'$ denotes a subspace of normalized points, following glomerular processing, and is the image $B(R).$  $M$ is a vector bundle over $R'$ of rank $\#$mitral cells and is generated by the output of mitral cells. $\xi$ denotes the input to mitral cells following glomerular processing, comprising a sparsened, statistically conservative manifold; it is a section of the vector bundle $M$. $S$ denotes the perceptual space, and is realized as a transformation of $R'$-space that embeds odor learning.

Importantly, this theoretical model is broadly independent of precisely where in the olfactory representational cascade these computations take place.  However, we consider that the map $B$ from $R$-space to $R'$-space reflects signal conditioning computations performed within the glomerular layer of the olfactory bulb \cite{Cleland2014,Cleland2020Engin}, whereas the subsequent transformation into $S$-space is mediated by computations within the olfactory bulb external plexiform layer network \cite{Imam2020}, inclusive of its reciprocal interactions with deeper olfactory cortices.  Briefly, we propose that the construction of categorical odor representations through statistical experience arises from learning-dependent weight changes between mitral cell principal neurons and granule cell interneurons in the external plexiform layer of the olfactory bulb.  In this theory, plastic interactions between these two populations construct meaningful, categorical \textit{}{odor} representations from the continuous, physical \textit{odorant} representations of $R'$-space based upon individual experience.  To construct this theoretical $S$-space, and attribute to it the capacities of generalization, speed-accuracy tradeoff, and experience-dependent hierarchical categorization, we first build a transitional space $M$ based on mitral cell activity representations, inclusive of the actions performed on these representations via their interactions with granule cell interneurons (Diagram 1).  This resulting $S$-space does not, indeed cannot, admit a single geometry, because of the essential requirement for locally adaptable curvature.  We describe this generative process in detail below. 

\setcounter{theorem}{0}
\subsection{$R$-Space}
The first representational structure in olfaction is directly derived from the ligands of the physical odorant stimulus interacting with the set of chemoreceptive fields presented by the animal's primary odorant receptor complement.  Both vertebrate and arthropod olfactory systems are based on large numbers of receptor neurons, each of which expresses one primary odorant receptor out of a family of tens (in \textit{Drosophila}) to over 1000 (in mice, rats, and dogs).  The axons of primary sensory neurons expressing the same receptor converge together to form discrete \textit{glomeruli} across the surface of the olfactory bulb (in vertebrates; the arthropod analogue is the antennal lobe), enabling second-order projection neurons (mitral cells) to sample selectively from one or a few receptor types.  The response of each receptor type
to an odor stimulus constitutes a unit vector that can range in magnitude from nonresponsive (0) to maximally activated (1).  A complete representational space for instantaneous samples of this input stream consequently has a dimensionality equal to the number of odorant receptor types $N$.  That is, in a species with three odorant receptors, the space containing all possible instantaneous input signals would be a three-dimensional unit parallelepiped (depending on the original placement of the vectors in 3-space), whereas the $R$-space of a mouse expressing 1000 receptor types would comprise a 1000-dimensional unit space.  As noted above, it is not necessary that these vectors be  orthogonal, only that they be linearly independent \cite{Cooperstein2015}; indeed, the orthogonality of these vectors cannot even be defined without reference to the statistics of the particular physical environment in which they are deployed.  

Formally, $R$-space is defined as the space of linear combinations of these vectors with coefficients in $(0,1).$ Consider the space of all possible odorant stimuli in a species expressing $N$ odorant receptor classes. Each odorant stimulus $s^*$ corresponds to a unique instantaneous glomerular response profile that can be represented as a vector $s^*\in \R^N$.  Normalizing the activation in each glomerulus enables us to consider $s^*\in \prod^n(0,1)$, the unit cube in $N$ dimensions. Denote this receptor activation-based representational space $R.$ Because the tangent space at all points is $T_xR\cong \R^N$, $R$ has dimension $N$ as a manifold.  

By considering a product of spaces, we are assuming that the responses of different glomeruli are orthogonal. In the greatest generality, we would need to consider points on a unit parallelepiped generated by the glomeruli. We can apply an invertible linear transformation (namely the matrix generated by the Gram-Schmidt process) to this parallelepiped to generate a cube (and vice-versa); this is a mathematical formalism and does not affect the particulars of this situation. Consequently, for  the  remaining sections, we can assume without a loss  of  generality that $R=\prod^n(0,1).$  

\subsection{Glomerular-layer computations, $R'$}
The first computational layer of the olfactory bulb -- the glomerular layer -- computes a number of transformations important for the integrity and utility of odor representations, including contrast enhancement \cite{Cleland2006}, global normalization \cite{Cleland2011,Banerjee2015}, and potentially other effects \cite{Borthakur2019}.  These processes substantially alter the respective probabilities of the points in $R$-space; for example, global feedback normalization in the deep glomerular layer ensures that the points at which most or all of the vectors have very high values will be improbable.  The outcome of this transformation is represented as $R'$, essentially a manifold embedded in $R$-space. 

In addition to the systematically unlikely points in $R$ that are omitted from the manifold $R'$, it is also the case that, under natural circumstances, most of the possible sensory stimuli $s^*$ that could be encountered in $R'$ actually never will be encountered in an organism's lifetime.  That is, odor representations within $R$-space are \textit{signal sparse} \cite{Berke2017}. Moreover, we argue that \textit{odor sources} $s^*$ are discrete, but inclusive of variance in quality and concentration, and hence constitute \textit{volumes} (manifolds) within $R'$. To account for this, we denote this variance by $s^*=(x,U_x)$, where $x\in R'$ and $U_x$ denotes an $n$-tuple of variances (i.e., one variance for each dimension of freedom in $R'$). That is to say, \[U_x=(\sigma_1^2,...,\sigma_n^2)\]
From this we arrive at the following definition:

\begin{definition}
    A pair $(x,U_x)$ constitutes an \textbf{odor source volume} in $R'$ if $U_x\neq 0$ and $(x,U_x)=s^*$ for some odorant $s^*.$  
\end{definition}

\noindent
That is, an odor source volume corresponds to a manifold within $R'$ that comprises the population of odorant stimulus vectors arising from the range of variance in receptor activation patterns exhibited by a particular, potentially meaningful, odor source.  This includes variance arising from nonlinearities in concentration tolerance mechanisms that cannot be completely avoided \cite{Cleland2011} as well as genuine quality variance across different examples of a source.  For example, the odors of \textit{oranges} vary across cultivars and degrees of ripeness; the odors of \textit{red wines} vary across grape cultivars, terroir, and production methods.  The source representation in $R'$ thereby corresponds to an odor source (e.g., orange, red wine), inclusive of its variance, and delineates the consequential region of the corresponding \textit{odor} category that will be developed via perceptual learning.  Critically, it is not important at this stage to specify multiple levels of organization within odor sources (e.g., red wine, resolved into Malbec, Cabernet, Montepulciano, etc., then resolved further by producer and season); it is the process of odor learning itself that will progressively construct this hierarchy of representations at a level of sophistication corresponding to individual training and experience. 

\subsection{$M$-Space}
The transformation from $R'$ to $S$-space depicted in Diagram 1 is mediated by the interactions of mitral and granule cells.  In this framework, mitral cells directly inherit afferent glomerular activity from $R'$ (Diagram 1, $\Delta$), but their activity also is modified substantially by patterns of granule cell inhibition that, via experience-dependent plasticity, effectively modify mitral cell receptive fields to also incorporate higher-order statistical dependencies sourced from the entire multiglomerular field.  (A simplified computational implementation of this constructive plasticity is presented in the learning rules of Imam and Cleland, 2020).  This is depicted in Diagram 1 as an effect $C^\infty(\R^m)$ of a mitral cell product space $M$ which contributes to the construction of $S$, in order to highlight the smooth deformations of $R'$ into $S$ via passage to $M.$    

This effects of mitral cell interactions, arising from experience, are modeled locally as a product space $M$ based on the principle that each glomerulus -- corresponding to a receptor type in $R'$ -- directly contributes to the activity of some number of distinct mitral cells.  In the mammalian architecture (shared by some insects, including honeybees), mitral cells receive direct afferent input from only a single glomerulus, such that the afferent activity in each mitral cell (or group of sister mitral cells) corresponds directly to a single receptor type.  In this "naive" case, $M$-space is globally a product.  To formalize this, we label the glomeruli $g_1,...,g_q.$ To each, we associate the number of mitral cells to which it projects; denoted $m_i\in \Z.$ Let $k=\sum^q m_i.$ Then, the naive space constructed from these data is \[R'\times \R^k=\{(r,v):r\in R',v\in \R^k\}     \]  
The interpretation of this space is as follows:  to each point in $R'$, we can associate a vector that is an identifier for how subsequent mitral-granule cell interactions in the olfactory bulb will transform the input in service to identifying it as a known percept.  The manifolds associated with particular odor source volumes in $R'$ will, owing to experience-dependent plasticity, come to exhibit related vectors that, in concert, manifest source-associated consequential regions.  These regions reflect categorical perceptual representations and are measurable as odor generalization gradients.  Simplified computational implementations have depicted these acquired representations as fixed-point attractors, tolerant of background interference and sampling error but lacking explicit consequential regions \cite{Imam2020}.

We refer to this space as \textit{naive} because it is globally a product space only for the mammalian architecture, in which the dimensionality of mitral cell output $m$ (the number of distinct mitral cells, grouping sister mitral cells together) is identical to that of glomerular output $k$.  However, this network architecture is not general; in nonmammalian tetrapods, for example, individual mitral cells may sample from more than one glomerulus \cite{Mori81a,Mori81b}.  This introduces a twist into the product space and ruins the naive structure, as $m$ now can be less than $k$.  In this general case where $m \leq k$, the mitral cell space becomes a rank $m$ vector bundle \[\R^m\hookrightarrow M \overset{\pi}{\rightarrow} R'\] over $R'$.  Nevertheless, it can be depicted locally as a product space because vector bundles are locally trivializable. Given any odor source volume $(x,U_x)$ we know that there exists either a subset $U'\subset U_x$ such that $\pi^{-1}(U')\cong U'\times \R^m$ or $U'\supseteq U_x$, then we can look exclusively that $U_x\times \R^m\cong \pi^{-1}(U')$ and this is a trivial bundle over the base.   

For simplicity, we here analyze the mammalian architecture case. In this architecture, the vector bundle is trivial because $m=k$; no mitral cells innervate multiple glomeruli, and there is no possible twisting of the fibers. Therefore, in mammals, $M$ is globally a product space,  \[M=R'\times \R^m\] 
rendering $M$ a smooth manifold with the convenient property that to every input $x\in R'$ we associate a point $(x,v)$, where $v$ is a vector whose $i^{th}$ component is the value of the output of the $i^{th}$ mitral cell.  Formally, we say that $M$ is a (trivial) vector bundle over $R'$ with fibre $\R^m.$ Then, the smooth maps which send $x\mapsto (x,v)$ such that composition with projection onto the first coordinate is the identity are called global smooth sections of the bundle, and the set of these is denoted $\Gamma(R',M).$ To any smooth manifold $P$, we can associate the ring of smooth functions \[C^\infty(P)=\{f:P\to \R:f \text{ is smooth}\}\] To any open subset, we have a restriction map $\res^P_U:C^\infty(P)\to C^\infty(U).$ In general, if $U\subseteq P$ is open, then $\Gamma(U,E)$ is a $C^\infty(U)-$module for any bundle $\pi:E\to P.$ $C^\infty(-)$ makes $P$ into a locally ringed space and $\Gamma(-,E)$ is a sheaf of $C^\infty(-)$-modules.

\subsection{$S$-Space}

$S$-space, or scent space, is a constructed perceptual space tasked with preserving physical relationships among odorants while also embedding the transformations arising from perceptual learning, specifically including those forming incipient categorical \textit{odors}.  To do this, we embed $R'$ into a higher-dimensional space (with dimension $N+1$).  Under this embedding, we represent perceptual learning in $S$ by growing $U_x$ in the positive $N+1th$ direction around odor source volumes in $R'$, which does not affect distance relationships in ${\R^N}$ (Figure~\ref{Landscapes}A).  (Discrimination training also can grow $U_x$ in the negative $N+1th$ direction).  To quantify this transformation, we construct two distance metrics, $d_{phys}$ and $d^{per}$ on $S$. 

\begin{definition}
	Let $x,y\in S$ be two points.  We define the \textit{physical metric} between the two points as the Euclidean distance between them in $R.$ In notation, \[d_{phys}(x,y)=|\pi_{\R^N}(x)-\pi_{\R^N}(y)|  \] 
\end{definition}
\noindent This metric reflects the physical similarities of the objects in the receptor space, which are not affected by perceptual learning (i.e., distension in $N+1$).  
\begin{definition}
	Let $x,y\in S.$ Consider $x$ and $y$ as vectors in $\R^{N+1}$. Then, let $\gamma:[0,1]\to S$ be the curve defined by $\gamma(0)=x$, $\gamma(1)=y$ and $\pi_{\R^{N}}(\gamma'(t))=w\cdot\left[\pi_{\R^N} \left( \gamma(1)-\gamma(0)\right) \right]$ with $w$ some real number dependent on $t$. The \textit{perceptual metric}, \[d^{per}(x,y)=\int_0^1||\gamma'(t)||dt  \] is the arc-length along the surface of $S$ between the points $x$ and $y$ (Figure~\ref{Landscapes}A). Notice that $\pi_{\R^N}(\gamma')$ is well defined as $S\hookrightarrow \R^{N+1}$ and thus the tangent space $T_{\gamma(t)}S\subseteq T_{\gamma(t)}\R^{N+1}=\R^{N+1}.$ 
\end{definition}  

\begin{figure}[h!]
  \centering
  \includegraphics[width=\linewidth]{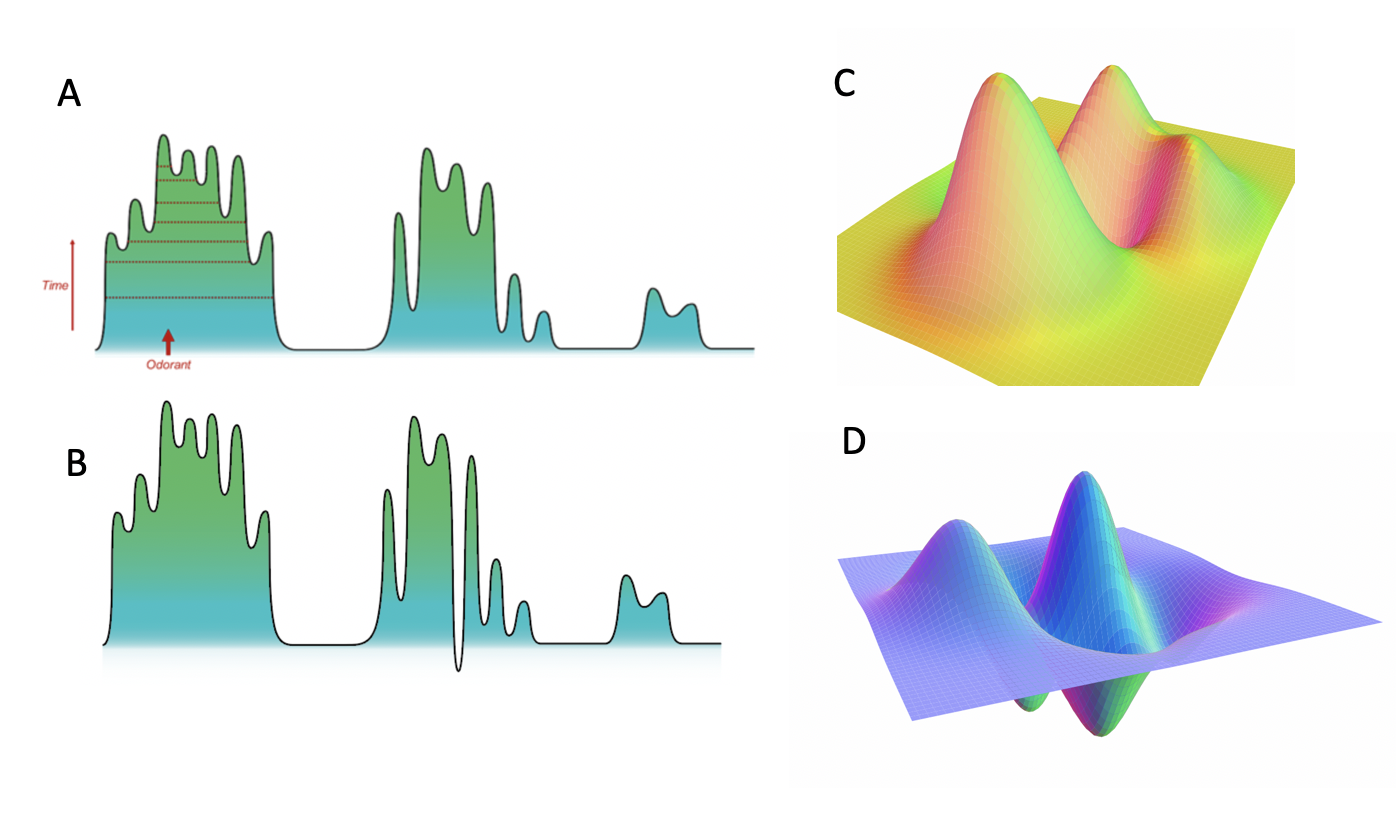}
  \caption{\textbf{Depictions of $S$-space in the cases of $N=1,2.$} (A) Three distinct odors in $S$-space in the case of $N=1.$ Going left to right, the first odor is highly learned with many distinct \textit{sub}-odors. Further, it is decorated with a distinction of a specific odor and the time axis. Per the discussion below, each red dotted line represents the formation of equivalence classes of odors at a given time. As time increases, specificity increases and this is reflected in the diagram. The second odor is overall less learned than the first, yet the first two sub-odors are known to be distinct as shown by the large valley between them. The third odor is poorly learned. (B) After learning has occurred, a valley has been created between the two sub-odor classes in the second odor. As the valley extends below the original line, we know that these two sub-odors are perceptually very different. (C)-(D) Depictions of a part of $S$-space for the case $N=2.$ Various amounts of learning have generated the landscapes presented. }  
  \vspace{0.5cm}
  \hrule
  \label{Landscapes}
\end{figure}

The relationship between these two metrics tracks the changes in $S$ induced by the construction of odor representations; specifically, $d^{per}$ reflects experience-dependent changes in the perceptual distance between $x,y\in S$ that are excluded from the $d_{phys}$ metric (Figure~\ref{Landscapes}A). Learning about an odor source $(x,U_x)$ progressively distends the volume (in $\R^N$) in the $N+1$ direction; over time, the shape of this distension will reflect the odor source volume in $R'$.  That is, over time, the breadths (in each of the $N$ dimensions) of the distension into the additional ($N+1th$) dimension will come to reflect the actual variances $U_x$ of the odor source $s^*=(x,U_x)$ as naturally encountered.  The quasi-discrete distensions formed in the additional dimension correspond to incipient categories -- i.e., categorically perceived \textit{odors} -- and their breadths and gradients can be measured behaviorally as generalization gradients \cite{Cleland2009,Cleland2011}. Importantly, the variance for each dimension of freedom of $U_x=(\sigma_1^2,...,\sigma_n^2)$ in $R'$ is independent; that is, different samples of a given natural odor source may vary substantially in some aspects of quality but not others, where an aspect of quality refers to the relative levels of activation of a given odorant receptor type (Figure~\ref{Landscapes}B). 

Formally, to construct the perceptual space $S$ in such a way that there exists a perceptual metric $d^{per}$ that interacts with the natural physical metric $d_{phys}$ of $R'$, we consider the embedding $R'\hookrightarrow \R^{N+1}.$ The open neighborhoods for each odor source volume define open sets in the subspace topology. If we embed $R'$ by the canonical inclusion $\R^N\to \R^{N+1},$ then $R'$ is flat in $\R^{N+1}$ because the final coordinate of its elements is $0.$ Therefore, we can consider transformations of $R'$ that smoothly vary the final coordinate. For each transformation $f$, denote the resulting space as $S:=S(f).$ This constitutes the evolving perceptual space. Define the map $\Delta:R'\to S$ as the distension of $R'$ in $N+1$ (Diagram 1). This map arises from considering $M$ and $R'$ simultaneously, and is a diffeomorphism trivially.

 To better understand the map $\Delta,$ we here construct it as the composition of maps among the spaces already described, specifically showing how the (acquired) properties of $M$  govern the mapping of $R'$ to $S$. The map $B:R\to R'$ reflects glomerular-layer transformations as described above. For a fixed smooth section $\xi:R'\to M$ (which always exists by the triviality of $M$), we generate Diagram 2 (an elaboration of Diagram 1),

\begin{equation}\label{Diagram2} \begin{tikzcd}
R \arrow[r, "B"]  & R' \arrow[d,swap, "\Delta(f)"] \arrow[dr,"\xi"]&\\
           & S  &M \arrow[l,"\text{id}_{R'}\times f"] 
\end{tikzcd} \end{equation}
where $\Delta(f)$ is defined to be a map that makes the diagram commute. 
Note that $\Delta$ depends on $f,$ and, therefore, so does $S$. That is, $S$ depends on the functions $\R^m\to \R$ from $M,$ which are smooth. To allow for ongoing plasticity, it is more correct to denote the perceptual space as $S:=S(f);$ however, as it will always be clear from context whether or not $f$ is fixed, we will simply refer to it as $S.$  The map $id\times f$ reflects the fact that $R'\subseteq \R^{N+1},$ and by construction $x_{N+1}=0$ for all $x\in R'.$ As $M=R'\times \R^m$, it follows that a dense set of maps $M\to \R^{N+1}$ which are the identity on $R'$ can be split as maps $i:R'\to \R^N$ and $f:\R^m\to \R.$ Therefore, because $\text{id}_{R'}\times C^\infty(\R^m)=C^\infty(\R^m),$ we abbreviate the collection of all maps $M\to S$ as $C^\infty(\R^m)$, as depicted in Diagram 1.  

The outcome of these transformations is a formal definition for the construction of categorical \textit{odor representations} in $S$:  

\begin{definition}
    Let $(x,U_x)$ be an odor source volume in $R'.$ We denote the image of this volume in $S$ as $(x,\widetilde{U_x}).$ This image denotes an $\textit{odor representation}$, also referred to as an $\textit{odor percept}$, or simply an  $\textit{odor}.$
\end{definition}


\subsection{Forms and timescales of odor learning}

The construction of odor representations $(x,\widetilde{U_x})$ in $S$ enables the depiction of learning as a geometric object, naturally encompassing the transition between the physical and perceptual space depictions of the olfactory landscape and illustrating the construction of meaningful categorical odor representations based on individual experience. As we describe below, these odor representations admit hierarchy and exhibit the advantages of categorical perception. However, they remain continuous in $S$, with consequential regions that are not discretely delimited; i.e., olfactory perceptual categorization is ultimately heuristic.  This affords some powerful advantages.  For example, it provides a natural basis for behaviorally observed odor generalization gradients \cite{Linster1999,Cleland2002,Cleland2009,Cleland2011}, and enables incipient proto-categories to merge once the variance structure $U_x$ of the odor source indicates that different samples fall within a common, relatively broad, distribution with shared implications \cite{Cleland2011}.  As discussed below, this framework also admits the ongoing construction of hierarchies within odor representations, while retaining a natural basis by which to fall back to more general levels of the hierarchy when signal quality is low.   

Importantly, odor learning as depicted here incorporates only the progressive learning of categorical odor percepts that enable the subsequent association of olfactory sensory information with its broader implications -- that is, what we argue is the bulbar component of odor learning.  Specifically, in our present framework, we allow for, but do not directly describe, the multitude of these implications and their diverse effects upon perceptual learning.  For example, the perceptual odor representations that may arise from unrewarded experience are likely to differ from those that arise from reinforcement learning. Learned response generalization from punishment-associated odor stimuli is broader than that arising from reward-associated odor stimuli.  Different olfactory conditioning paradigms may promote either generalization or discrimination among different odorant stimuli, corresponding to the construction of different perceptual distances $d^{per}$ between them.  Each of these distinct and specialized modes of learning is considered to transform the plasticity-dependent distensions into dimension $N+1$ in specifically appropriate ways, here proposed to be governed by ascending inputs into the olfactory bulb from higher association cortices such as the AON and piriform cortex that regulate bulbar plasticity via reciprocal interactions, thereby modifying granule cell feedback effects on mitral cell activity as described below\color{black}.  

Finally, the robustness of odor memories is a factor that the present framework incorporates but does not specifically address.  It is established that some odor learning -- particularly unrewarded exposures yielding odor-specific habituation -- persists for mere seconds or minutes \cite{Linster_Menon_Singh_Wilson_2009}, whereas other odor learning persists for days \cite{Tong_Peace_Cleland_2014} or a lifetime \cite{Saghatelyan_deChevigny_Schachner_Lledo_2004}. Mechanistically, long-term odor memories are associated with protein synthesis in olfactory bulb \cite{Tong_Peace_Cleland_2014} and also with the incorporation of new adult-generated interneurons into the bulbar network, which appears to occur only after a certain amount of time spent learning.  The reliability of a given stimulus over time, with consistent associated implications, is of course a critical factor in animal learning.  In the present framework, we consider that individual learned distentions in dimension $N+1$ of $S$-space will be variously persistent, either fading back towards flatness with a given time constant or enduring indefinitely, according to learning-dependent temporal tags that are not explicitly discussed herein.  

\subsection*{The geometry of local plasticity}
Plasticity in neural systems in general, and in the olfactory bulb in particular, is locally governed.  Changes in cellular and synaptic functional properties rely substantially on the synaptic interactions of directly connected neurons and the locally regulated release of neurochemicals.  These local effects, coordinated by sophisticated network interactions, collectively generate global systemic performance at the network level.  The present odor learning framework also arises from localized plasticity: distensions into the additional ($N+1th$) dimension of $S$ arise from learning the activity profiles of individual sensory inputs, and are not globally governed (specifically, we argue that this arises from learned patterns of granule cell feedback onto mitral cells in olfactory bulb; for a simplified computational implementation of this process, see \cite{Imam2020}).  However, to characterize the functionality of the olfactory system as a whole, it is necessary to formally glue such local plasticity operations together, along with any relevant global processes, within a single analytical framework.  To do this, we employ the theory of sheaves \cite{Wedhorn2016}.

\subsubsection{Sheaves enable localized learning}
We formally consider the local actions of granule cells onto mitral cells, and their concomitant modification of mitral cell output, as follows, considering that these actions may rely both on afferent sensory information and on additional inputs delivered onto granule cells by piriform cortex and other association cortices \cite{Isaacson_Strowbridge_1998}.  Recall from the previous section that for any vector bundle $\pi:E\to P,$ we generate $(C^\infty(-),\Gamma(-,E))$ a pair of sheaves on $P$ such that $\Gamma(U,E)$ comes equipped with an action of $C^\infty(U)$ for all open $U\subseteq P.$
We here define an analogous pairing of sheaves to describe the modification by granule cells of afferent information contained in the mitral cell ensemble. The first step in this definition is to define a functor \[ \mu:\mathcal{T} \to \textbf{R}^m\]
where $\mathcal{T}$ is the category defined by the topology on $R',$ and $\textbf{R}^m$ is the set whose objects are linear subspaces of $\R^m$ and morphisms \[ \text{Mor}_{\textbf{R}^m}(U,V)=\begin{cases} \varnothing & U\not\subseteq V\\
\{*\} & U\subseteq V
\end{cases}\]  
To describe what this functor does, we need to turn to the anatomy of bulb. For a given odorant, the induced signal passed from glomeruli to mitral cells may not excite some mitral cells. This corresponds to the situation where $\xi(s)=(s,v)$ and $v$ has some coordinates equal to $0.$ These non-zero coordinates form a basis for some subspace of $\R^m.$ Let $n(\xi,s)$ be the number of non-zero coordinates in $v.$ Let $O$ be any open subset of $R'.$ Then \[\mu(O)=\R^\ell\]
where $\ell=\max\{n(\xi,p): p\in O\}.$ Composing $\tau$ and $C^\infty$ and using the sheaf condition of $C^\infty$ we conclude that $C^\infty(\mu(-))\in \textbf{Sh}(R').$ Now, we define $G(-)$ as a flabby (flasque) sheaf of rings on $R'$ which act on $C^\infty(\mu(-)).$
This action is precisely the interaction of local inhibition on mitral cells, and in particular on those mitral cells that are activated by a given odorant stimulus. This makes $C^\infty(\mu))$ a $G$-module (as sheaves).   

\subsubsection{Localized discrimination learning}
Learning about an odor is generally modeled as growing a distension into the additional ($N+1th$) dimension of $S$, with the breadths of the distension across its $N$ dimensions ultimately reflecting the physical profile of quality variance $U_x$ associated with the corresponding odor source $s^*=(x,U_x)$. This category-construction framework can be considered common to diverse forms of odor learning (e.g., nonassociative, reinforcement), despite their differences in other properties as noted above.  However, explicit \textit{discrimination learning} -- in which animals are rewarded for distinguishing physically similar odorants from one another by associating them with different outcomes -- requires that these distensions into the additional dimension also be locally retractable, so as to reduce or eliminate the similarity-based categorical overlap that may exist between the odor source volumes \textit{a priori}.  This is particularly important given the remarkable olfactory discrimination capabilities exhibited by appropriately trained animals \cite{Moser_Bizo_Brown_2019}. 

Consider two physically similar odorants $s^*=(x,\widetilde{U_x})$ and $t^*=(y,\widetilde{U_y})$ in $S.$  Because the early stages of odor learning are characterized by broadened generalization gradients \cite{Cleland2009}, presumably reflecting sampling uncertainty, odor representations (distensions in $S$) at this stage are likely to overlap:  $\widetilde{U_x}\cap \widetilde{U_y}\neq \varnothing.$  This is appropriate, given the likelihood (prior to discrimination training) that two highly similar odor stimuli, sampled in close succession, simply constitute two samples from the same odor source volume. However, discrimination training is capable of rapidly and strongly separating highly similar odors, and the \textit{between-category separation} principle of category learning \cite{PGJ-Harnad2019} indicates that we need to move them \textit{further} apart than they would be prior to learning.  Hence, discrimination learning needs to be able to not only retract distensions to zero, but to expand them in the negative direction if need be (see Figure 4.1B).  

To do this, we construct a map that decreases only those values of $f$ which are sufficiently close (within some small $\varepsilon>0$) to a distance-minimizing path $\gamma$  connecting $x$ and $y$. Its existence follows from the existence of smooth bump functions on $M.$  Fix $f\in C^{\infty}(\R^m)$ so that $S=S(f).$ We consider functions $\alpha\in C^\infty(\R).$ Then, by defining the learning operation as $S\mapsto S(\alpha \comp f)$ we have a realization of this transformation of learning two odors apart. In fact, what we have done here is defined a $\widetilde{C^\infty(\R)}$-module structure on $C^\infty(\mu).$ Therefore, by considering only the interaction of $\alpha$ and $f$ over $\gamma,$ we have reduced the problem of discrimination learning to a 1-dimensional problem depicted in Figure 4.1A-B. 
The map resulting from discrimination learning lengthens the perceptual metric $d^{per}$ between two similar odor source volumes, partitioning and expanding the previously shared space between the two representations so as to arbitrarily increase their perceptual dissimilarity, all without altering the physical distance $d_{phys}$ between their centers.

Importantly, discrimination learning inherently depends on at least two odor sources, so can be targeted even more specifically between them.  In high-dimensional space, can separate two such sources nearly arbitrarily without affecting similarity relationships among other nearby odor representations.  This cannot be depicted in our lower-dimensional plots as the number of dimensions is too small for all of the odors to essentially be independent. 

\begin{remark}
 	Based on the construction above, we can take $\widetilde{C^\infty(\R)}$ to be a rough approximation of $G$ as a sheaf. We cannot conclude that they are precisely equal as this would need more analysis which we have not presented here.  
\end{remark}
Putting all of this together, we arrive at the final (for now) version of the model. 
We now have, $R,R',M,G,S$ and can complete the picture of the model (reference Diagram \ref{Diagram1}).  The appearance of $G$ and $C^\infty(\mu)$ encodes the local-to-global transformations of granule cells and their interaction with the maps $M\to S$ which preserve $R'.$ 
\begin{equation}\label{Diagram3} \begin{tikzcd}
R \arrow[r, "B"] & {(R',G,C^\infty(\mu))} \arrow[rd, "\xi"] \arrow[d, "\Delta(f) "', dotted] &\\
& S  & M=R'\times \mathbb{R}^m \arrow[l, "\text{Id}_{R'} \times f"]
\end{tikzcd}\end{equation}
All together, this diagram encodes everything which we have constructed above and the relations of the various spaces.

\subsection{The construction of hierarchical odor categories}
The last original part of this section is the construction of hierarchical categories from the continuous spaces we have built above. The surprising advantage of the process above is that it gives a geometric interpretation of the speed-accuracy tradeoff for identifying odors in the wild.   

Suppose now that we need to identify a given odor. For example, a fox in the wild may be hunting an animal and tracking it by scent or a human trying to discern a specific spice in a dish while at a restaurant. What is the mathematical interpretation of such a situation and how does the model deal with this interpretation.   
We first view each peak as a continuous categorization for that stimulus (This is the image of a fully learned system). For instance we may have a peak defined for ``oranges". As we move up the peak we refine the categorization. Here refinement means entering a subcategory. From the discussion above we know that the peak will be parsed into a variety of sub-peaks which correspond to physically similar but perceptually different types of orange. Pictured below is a complex of categories, ordered by inclusion \[ \text{Citrus Fruit }\supseteq \text{Oranges} \supseteq \text{Ripe Oranges }  \supseteq \text{Ripe Valencia Oranges} \]
Although this example is linearly ordered, there is no need for there to be only one chain of inclusions. Every peak can break up into at most finitely many distinct subpeaks and thus the decomposition can become arbitrarily complicated.   

Now we shall construct the categorization by successively taking intersections with an affine hyperplane (see Figure 4.1(A) for an illustration in the case $N=1$) Suppose $P$ is a peak, determined by some odorant pair $(x,U_x)$, with several subpeaks $\{P_i\}_{i\in I}$. Then as each sub-peak has a boundary, we can define the minimum value attained in $P_i$. Let $P_i^*\subseteq P_i$ be the subset consisting of all points of $P_i$ with minimal $x_{n+1}$ value. Let $H_0=\{x\in \R^{n+1}:x_{n+1}=0\}$ be a hyperplane in $\R^{n+1}$ and define $H_t=H_0+(0,0,...0,t)$. This is an affine transformation of $H_0$ and geometrically is the translation of $H_0$ in the ${n+1}^th$ direction.  
\begin{lemma}
    $P_i^*=P_i\cap H_t$ for some $t>0.$ Further if $n\geq 2$, $P_i^*$ is connected. 
\end{lemma}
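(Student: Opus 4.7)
The plan is to split the statement into two independent assertions and handle each by combining compactness with the geometric content of what a subpeak is. First, I would argue that $P_i$ is compact: by construction $S$ sits inside $\R^{n+1}$ and each peak is the image of an odor source volume $(x,U_x)$ under the smooth distension $\Delta(f)$, so peaks are bounded; subpeaks inherit closedness from the complement of lower level sets of $x_{n+1}$, and boundedness is immediate. Since the coordinate projection $x_{n+1}\colon P_i\to\R$ is continuous, the Extreme Value Theorem furnishes a value $t=\min_{p\in P_i} x_{n+1}(p)$. Setting this $t$, the equality $P_i^*=P_i\cap H_t$ is a tautology from the definitions of $P_i^*$ and $H_t$.

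The next step is to show $t>0$. Here I would use the fact that $P_i$ is a \emph{sub}peak of a peak $P$ arising from some learned odor $(x,\widetilde{U_x})$: the subpeak decomposition of $P$ is induced by valleys lying strictly above $H_0$, because the ambient peak $P$ itself is a strictly positive distension of $R'\subset H_0$ in the $(n+1)$-th direction. Were $t\leq 0$, the subpeak would extend down to (or below) the baseline, contradicting the fact that discrimination-generated negative excursions separate distinct peaks, not subpeaks within the same peak. One can phrase this cleanly by letting $v$ be the infimum of $x_{n+1}$ over the whole ambient peak $P$, noting $v\geq 0$ with equality only on the boundary of $P$ in $R'$, and observing that $P_i\subset\operatorname{Int}(P)$ forces $t>v\geq 0$.

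For the connectedness claim when $n\geq 2$, the plan is to use the Morse-theoretic picture of subpeaks. Perturbing $f$ slightly if necessary (permissible since $f\in C^\infty(\R^m)$ and the whole construction is generic), Sard's theorem lets us assume $t$ is a regular value of $x_{n+1}|_{P_i}$ in the interior, so that $P_i\cap H_t$ is a smooth $(n-1)$-dimensional submanifold. Geometrically, $P_i$ is a bump sitting atop a saddle of height $t$ separating it from neighboring subpeaks; the connected component of $\{x_{n+1}\geq t\}\cap P$ containing the summit of $P_i$ is topologically an $n$-disk, so its boundary $P_i^*$ is an $(n-1)$-sphere $S^{n-1}$. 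Since $S^{n-1}$ is connected for $n\geq 2$ (and exactly two points for $n=1$, explaining the dimension restriction), the claim follows.

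The main obstacle will be making the notion of a subpeak rigorous enough to invoke the "topologically an $n$-disk" assertion without circularity. I would do this by defining a subpeak of $P$ as a connected component of the super-level set $\{p\in P:x_{n+1}(p)>s\}$ for $s$ slightly below the saddle height $t$; this definition makes the disk structure immediate via the Morse lemma at the separating saddle, and makes the identification of $P_i^*$ with the vanishing cycle of that saddle transparent. The genericity of $f$ needed to avoid degenerate critical points can be absorbed into the general smoothness assumptions on the learning operation $\alpha\circ f$ established earlier.
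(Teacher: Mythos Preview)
Your argument is correct and lands on the same endpoint as the paper, but you take a longer and more heavily equipped path. The paper's proof of the first assertion is purely tautological: since $P_i^*$ is \emph{defined} as the set of points of $P_i$ with minimal $x_{n+1}$-coordinate, one simply lets $t^*$ be that common coordinate value and observes $P_i^* = P_i\cap H_{t^*}$ by unwinding definitions; no compactness or Extreme Value Theorem is invoked, and the positivity of $t^*$ is not separately argued. For connectedness, the paper asserts directly that $P_i^* = \partial P_i$ and that $P_i$ is homeomorphic to the $n$-disk $D^n$, whence $P_i^* \cong S^{n-1}$, which is connected for $n\geq 2$. Your route via Sard's theorem, regular values, and the Morse lemma recovers exactly this disk--sphere picture, but the machinery is heavier than needed and the perturbation step is delicate: the value $t$ you care about is the \emph{minimum} of $x_{n+1}$ on $P_i$, hence never a regular value, so the genericity argument has to be rerouted through the super-level-set description anyway. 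What your approach buys is an honest justification of the disk structure rather than taking it as part of the informal notion of ``subpeak''; what the paper's approach buys is brevity, at the cost of treating $P_i\cong D^n$ as a standing geometric hypothesis.
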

\begin{proof}
    Let $t^*$ be the ${n+1}^{th}$ coordinate of all elements in $P_i^*.$ Then by construction \[P_i^*\subseteq P_i\cap H_{t^*}\]
    For the reverse inclusion let $y\in P_i\cap H_{t^*}.$ Then $y\in P_i$ and $y_{n+1}=t^*$ and therefore $y\in P_i^*.$ Hence, $P_i^*=P_i\cap H_{t^*}.$
    The connectedness of $P_i^*$ follows immediately from the fact that $P_i^*=\partial P_i$ the boundary, and that $P_i$ is homeomorphic to $D^n$ the $n$-dimensional disk. For $n\geq 2$ $\partial D^n=S^{n-1}$ and is thus connected.  
\end{proof}

\noindent Using this lemma, we can now define a coarse categorization of $S.$ Let $t\in (0,1)$ be arbitrary. By the previous lemma, we know that if we consider $H_t\cap S$ we will get disjoint connected subsets of $S.$ So, consider the closed half space \[H_t^*=\{ x\in \R^{n+1}: x_{n+1}\geq t\}\]
Then $\partial H_t^*=H_t$ and $H_t^*\cap S$ is also a collection of disjoint connected subsets of $S.$ Let $\{S_i^t\}_{i\in I_t}$ be an enumeration of these subsets by the set $I_t$. Now let $\mathcal{P}$ be a partition of $(0,1).$ Then for each $p_j\in P$ we have the associated collection $\{S_i^{p_j}\}$ of subsets. We know by construction that for $j<j'$ that $\{S_i^{p_j}\}\supseteq\{S_i^{p_{j'}}\}.$ Therefore, we have built a method to break $S$ into discrete categories and given in the local structure of a tree. Using this, we arrive immediately at a hierarchical categorization of odors which is solely dependent on the amount of information learned about a class of odors.    

\subsection{"Olfactory space" is not hyperbolic}
The method we have built above prioritizes the construction of a coarse categorization (partial order) from a geometric structure. One may ask if it is possible to proceed in the other direction, that is build a geometric structure out of some form of categorization. This approach has been attempted by many researchers and in every case, there is a fundamental assumption made which makes the model unhelpful and in some cases, invalid. In \cite{Zhou2018} they make the claim that the human perceptual odor space (the analogue of $S$) is three dimensional and hyperbolic (constant negative curvature $-1$). This was based off of a calculation and subsequent averaging of the ranks of certain homology groups for simplicial complexes built from certain adjacency matrices. This computation was done on local data, more specifically on single odorants. After doing their analysis on the four odors tested, the researchers conclude that the each class of points best fit in a three dimensional hyperbolic space, even though there existed non-zero homology in higher degrees. From this they conclude that the entire space is hyperbolic. This is plainly false. From what they have shown, we have some evidence that the data locally looks hyperbolic but we cannot conclude any information about the global structure. Take for instance $TS^2$ where $S^2$ is the unit sphere in $\R^3$ and $TS^2$ is the tangent bundle. The Hairy Ball Theorem \cite{HairyBall} tells us that $TS^2$ is not a trivial bundle, and yet we can always locally trivialize a vector bundle. Therefore, the local structure tells you little about the global structure. This is one of the reasons their conclusion was flawed. Their claim also hinged on the computation of some homology groups for certain simplicial complexes generated by  "similarity matrices" and showing that the distributions of the rank of these groups closely matches simulation estimates for hyperbolic space. This would have worked, had they not  stopped computing the homology in degree 3. It does not take much thinking to concoct a graph (and thus a simplicial complex) whose homology groups are zero for n=1,2,3 and are non-trivial for some higher degree (for example, the iterated suspension of two points will yield simplicial complexes which are homotopy equivalent to spheres). This implies that the structure which they are trying to detect will have some higher dimensional pieces. Simply not considering these (possibly because of the method used in \cite{Giusti2015}) leads to a false conclusion. Hence, the conclusion that the olfactory perceptual space is hyperbolic is simply unfounded. More interestingly, should the perceptual space be related to the physical odorant space at all, there is no possible way to have constant curvature! In this situation, when learning occurred, it would be impossible to preserve the physical metric and the perceptual metric simultaneously.

\chapter{Future Directions}  
This chapter will serve to present those ideas which we have not incorporated into the model but believe are useful. Most of these topics are central to any field of mathematics and thus we should expect them to show up here too. Additionally, we close with a conjectural method to deal with noisy input odors and show its relation to some of the topics introduced in the first few chapters. 

\section{Lie Groups and Lie Algebras}

The representation theory of Lie groups is a fundamental field of mathematics. So fundamental in fact that one would be strained to find an area of mathematics which does not appear in the usual course of study. In this short chapter, we shall study one of theorems which lies in the intersection of complex analysis and representation theory: the Borel-Weil theorem. 

\begin{theorem}[Borel-Weil]\label{Borel-Weil}
	Let $K$ be a compact, connected Lie group and $T\subseteq K$ be a maximal torus. Let $G=K_\C$ be the complexification and $B=MA\overline{N}$ a Borel subgroup. Then the irreducible finite dimensional representations of $K$ stand in one-to-one correspondence with the dominant, analytically integral weights $\lambda\in \lie{t}^*$ with the correspondence given by \[ \lambda\mapsto \Gamma_H(K/T,L_{\lambda})\cong \script{F}_{B,\chi_{\lambda}}^{Hol}\]
	where $\Gamma_{H}(K/T,L_{\lambda})$ denotes the set of holomorphic sections of the bundle and \[\script{F}_{B,\chi_{\lambda}}^{Hol}=\left \{f:G\to \C \;\vline\; f(gb)=\chi_{\lambda}(b)^{-1}f(g), f \text{ holomorphic}  \right\}\]
	with $\chi_{\lambda}$ the character of $B$ associated to the analytically integral weight $\lambda.$ 
\end{theorem}

\noindent This was proven independently by Borel and Weil in \cite{Serre1954} and then by Harish-Chandra in \cite{HarishChandra1956}. The proof we shall present in section 5.5 is a combination of those presented in \cite{Knapp88}, \cite{Knapp86}, and \cite{Helgason2008}. As will be seen later, this theorem gives a geometric realization of a purely algebraic object and vice versa. Therefore, we may be able to apply similar methods to the model above and arrive at some striking consequences.

Recall that a smooth manifold is a second-countable, Hausdorff, topological space equipped with an atlas of (smooth) $C^\infty$-charts $\{\varphi_U:U\to \R^n\}$ which are injective. Morphisms of smooth manifolds are smooth maps which are compatible with the atlases. Putting these two together, we get the category $\mathcal{M}$ of smooth manifolds.
There is a functor $C^\infty(-):\mathcal{M}\to \R$-\textbf{Alg} which assigns to any smooth manifold $M$ an $\R$-algebra $C^\infty(M):=\{ f:M\to \R| f \operatorname{ smooth} \}$ with addition and multiplication defined point-wise. For each $p\in M$ we can define \[C^\infty_{M,p}:=\varinjlim_{U\ni p} C^\infty(U).\]

Let $M$ be a manifold and $p\in M$ a point. We define the tangent space at $p$ to be $T_pM:=\Der(C^\infty_{M,p}).$ This becomes an $\R$-vector space is we equip it with addition.The collection of all tangent spaces is called the tangent bundle and is denoted $TM.$ This admits a smooth structure and becomes a smooth manifold with dimension $2\dim M.$ The elements of $TM$ can be given as pairs $(p,v)$ where $p\in M$ and $v\in T_pM.$ There is a canonical projection $\pi_M:TM\to M$ which is a local diffeomorphism onto. A manifold is called $\textit{parallelizable}$ if $TM=M\times \R^n$ for $n=\dim M.$  

A \textit{section} of the canonical projection is a smooth map $f:M\to TM$ such that $\pi_M\comp f=1_M.$ The set of all smooth sections is denoted $\Gamma(M,TM)$ or $\lie{X}(M)$ and can be identified with the collection of smooth vector fields on $M.$ This has a natural structure as a $C^\infty(M)$-module. 

\begin{definition}
	A \textbf{Lie Group} is a group object in the category $\mathcal{M}.$ More explicitly, it is a smooth manifold $G$ equipped with two operations: \textbf{multiplication} $G\times G\to G$ which is smooth and \textit{inversion} $(-)^{-1}:G\to G$ which is also smooth. A \textbf{Lie group homomorphism} is a smooth map which respects the group structure.   
\end{definition} 

Let $G$ be a Lie group and $x\in G.$ Then $x$ defines a smooth automorphism $L_x:G\to G$ such that $L_x(y)=xy.$ An element $q\in \Gamma(G,TG)$ is called \textbf{left-invariant} if for all $x,y\in G$ we have \[T_yL_x(q_y)=q_{xy}\] where $T_yL_x : T_yG\to T_{xy} G$ is the tangent map. The space of all left-invariant vector fields on $G$ will be denoted as $\lie{X}_L(G).$ 
\begin{definition}
	A \textbf{Lie Algebra} is a vector space $V$ equipped with an alternating, bilinear form $ [-,-]:V\times V\to V$ satisfying the Jacobi Identity \[ [X,[Y,Z]]+[Z,[X,Y]]+[Y,[Z,X]]=0\] called the \textbf{Lie Bracket}. A \textbf{Lie algebra homomorphism} is a linear map $T:\lie{g}\to \lie{h}$ such that \[ T([X,Y])=[T(X),T(Y)] \]
	where the first bracket is in $\lie{g}$ and the second is taken in $\lie{h}.$  
\end{definition}

\begin{lemma}
	The map $(-)_1:\lie{X}_L(G)\to T_1(G)$ is a vector space isomorphism. Further, if we endow $\lie{X}_L(G)$ with the operation $[X,Y]=XY-YX.$ This makes $\lie{X}_L(G)$ a Lie algebra over $\R$. Further $(-)_1$ respects the bracket operation and gives $T_1(G)$ the structure of a Lie algebra. 
\end{lemma}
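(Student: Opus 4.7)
The plan is to prove the three assertions in sequence: (i) the evaluation at the identity is a linear isomorphism, (ii) the commutator bracket preserves left-invariance and satisfies the Lie algebra axioms on $\lie{X}_L(G)$, and (iii) transport this bracket to $T_1(G)$ along the isomorphism of (i). Throughout I will pivot freely between the two pictures of a vector field already present in the excerpt: a section $X\in \Gamma(G,TG)$ assigning $X_g\in T_gG$ to each $g\in G$, and the corresponding derivation $X:C^\infty(G)\to C^\infty(G)$ given by $(Xf)(g)=X_g(f)$.

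First I would handle the isomorphism. Linearity of $(-)_1$ is immediate because addition and scalar multiplication in $\lie{X}_L(G)$ are defined pointwise. For injectivity, if $X\in\lie{X}_L(G)$ satisfies $X_1=0$, then left-invariance forces $X_g=T_1L_g(X_1)=0$ for every $g\in G$, so $X=0$. For surjectivity, given $v\in T_1G$ I define $X^v_g:=T_1L_g(v)$ and check two things: that $X^v$ is smooth, and that it is left-invariant. Smoothness reduces to smoothness of the map $g\mapsto T_1L_g(v)$, which factors through the tangent functor applied to the smooth multiplication $G\times G\to G$ (restricted to $G\times\{1\}$); this is where the paper's functoriality of $T$ on $\textbf{Man}_\infty$ is used. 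Left-invariance is a direct computation using the chain rule (functoriality of $T_pF$): for $x,y\in G$,
\[
T_yL_x(X^v_y)=T_yL_x\bigl(T_1L_y(v)\bigr)=T_1(L_x\circ L_y)(v)=T_1L_{xy}(v)=X^v_{xy}.
\]
Hence $v\mapsto X^v$ is a two-sided inverse of $(-)_1$, establishing the isomorphism.

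Next I would prove that the commutator $[X,Y]=XY-YX$ on derivations preserves left-invariance. The key reformulation is that left-invariance of $X\in\lie{X}_L(G)$ is equivalent to the commutation relation
\[
X\circ L_x^{*}=L_x^{*}\circ X\qquad\text{on }C^\infty(G)
\]
for every $x\in G$, where $L_x^{*}f=f\circ L_x$. This equivalence follows by unwinding the definition: $(Xf)(xy)=X_{xy}(f)=T_yL_x(X_y)(f)=X_y(f\circ L_x)$. Given this, if $X,Y$ both commute with each $L_x^{*}$, then so does $XY-YX$, so $[X,Y]\in\lie{X}_L(G)$. Bilinearity is obvious, antisymmetry is $[X,Y]=XY-YX=-(YX-XY)=-[Y,X]$, and the Jacobi identity is the standard computation: expanding $[X,[Y,Z]]+[Y,[Z,X]]+[Z,[X,Y]]$ into eight triple products, each term $XYZ$, $YZX$, etc., appears once with a $+$ sign and once with a $-$ sign, so the sum vanishes. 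This makes $\lie{X}_L(G)$ a Lie algebra over $\R$.

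Finally I transport the bracket along $(-)_1$: for $v,w\in T_1G$ define $[v,w]:=[X^v,X^w]_1$. This is visibly bilinear and antisymmetric, and it satisfies Jacobi because $X^{[v,w]}=[X^v,X^w]$ by construction. By definition, $(-)_1$ is then a Lie algebra homomorphism (indeed, isomorphism). The main obstacle is the verification that $X^v_g=T_1L_g(v)$ defines a genuinely smooth vector field---this is the only non-formal step---and its proof rests on the functoriality of $T(-)$ applied to the smooth multiplication map of $G$, together with a coordinate computation in a chart around $g$. Everything else, once the derivation/section dictionary and the identity $X\circ L_x^{*}=L_x^{*}\circ X$ are in hand, reduces to the formal manipulation of commutators.
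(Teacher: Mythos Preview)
Your proof is correct and follows the same approach as the paper: construct the inverse to $(-)_1$ by left-translating a tangent vector at the identity, then note that the commutator bracket is compatible. The paper's proof is essentially a one-liner---it writes the inverse as $Xf(x)=X_1(L_{x^{-1}}f)$ with $L_{x^{-1}}f(y)=f(xy)$ (which unwinds to exactly your $X^v_g=T_1L_g(v)$) and declares the bracket compatibility ``obvious''---so your version is substantially more detailed, particularly in checking smoothness of $X^v$, verifying that $[X,Y]$ stays left-invariant via the commutation with $L_x^*$, and spelling out the Jacobi identity.
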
 
\begin{proof}
	The map has an inverse given by $Xf(x)=X_1(L_{x^{-1}} f)$ where $L_{x^{-1}}f(y)=f(xy).$ The fact that this map respects the Lie bracket is obvious. 
\end{proof}

\begin{corollary}
	$TG\cong G\times T_1(G).$ 
\end{corollary}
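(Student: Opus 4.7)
The plan is to use the lemma's vector space isomorphism $(-)_1 : \mathfrak{X}_L(G) \to T_1(G)$ to exhibit $G$ as a parallelizable manifold, and then invoke the proposition on parallelizable manifolds (stated earlier in the section: if $M$ is parallelizable then $TM \cong M \times \mathbb{R}^n$) to conclude. Equivalently, I would construct an explicit trivialization using left translations and verify it is a diffeomorphism.

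More concretely, I would first pick a basis $\{v_1,\ldots,v_n\}$ of $T_1(G)$, where $n = \dim G$. By the lemma, each $v_i$ extends uniquely to a left-invariant vector field $V_i \in \mathfrak{X}_L(G)$ characterized by $(V_i)_1 = v_i$, with the explicit formula $(V_i)_g = T_1 L_g(v_i)$. Since $L_g : G \to G$ is a diffeomorphism for every $g$, the differential $T_1 L_g : T_1(G) \to T_g(G)$ is a linear isomorphism, so $\{(V_1)_g,\ldots,(V_n)_g\}$ is a basis of $T_g(G)$ for every $g \in G$. This exhibits $G$ as parallelizable, and the earlier proposition then delivers $TG \cong G \times \mathbb{R}^n \cong G \times T_1(G)$ as smooth vector bundles over $G$.

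To make the isomorphism canonical (and independent of a choice of basis), I would define
\[
\Phi : G \times T_1(G) \longrightarrow TG, \qquad \Phi(g, X) = (g, T_1 L_g(X)),
\]
and check this is a diffeomorphism by writing down its inverse $\Psi(g,v) = (g, T_g L_{g^{-1}}(v))$. Both maps are linear on the fibers over $g$, commute with the projection to $G$, and are mutual inverses on each fiber because $L_g$ and $L_{g^{-1}}$ are inverse diffeomorphisms of $G$. The bundle isomorphism then follows once smoothness of $\Phi$ and $\Psi$ is established.

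The main obstacle is verifying smoothness of $\Phi$ and $\Psi$, since a priori one only knows these maps are fiberwise linear isomorphisms, not smooth globally. The clean way to handle this is to observe that $\Phi(g,X) = T_{(1,g)}\mu(\,\cdot\,,\,\cdot\,)$ applied appropriately, where $\mu : G \times G \to G$ is the (smooth) multiplication; concretely $T_1 L_g(X)$ is the image of $X$ under the partial differential of $\mu$ with the second argument fixed at $g$. Because $\mu$ is smooth as a map of manifolds, its partial differentials depend smoothly on the base point, so $\Phi$ is smooth; the same reasoning with inversion (also smooth, by the Lie group axioms) handles $\Psi$. Once smoothness is in hand, the rest of the verification is formal, and the canonical isomorphism $TG \cong G \times T_1(G)$ follows, recovering in particular the fact that every Lie group is parallelizable.
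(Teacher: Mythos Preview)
Your proposal is correct and takes essentially the same approach as the paper: extend a basis of $T_1(G)$ to left-invariant vector fields via the lemma, observe these give a global frame so $G$ is parallelizable, and conclude $TG \cong G \times T_1(G)$. The paper's proof is a one-line sketch of exactly this, while yours spells out the explicit trivialization $\Phi(g,X) = (g, T_1L_g(X))$ and addresses smoothness; this extra detail is welcome but not a different idea.
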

\begin{proof}
	Every basis of $T_1(G)$ consists of global left-invariant vector fields and hence $G$ is parallelizable. 
\end{proof}
For the remainder of this section we shall denote Lie algebras by the corresponding lower-case gothic letters. That is if $G$ is a Lie group, then its Lie algebra is $\lie{g}.$ 

\begin{example}
	\begin{enumerate}
		\item Let $G=\R^n$ together with addition. This is a Lie group with Lie algebra $\R^n.$ In general, any finite-dimensional real vector space is non-canonically isomorphic to $\R^n$ for some $n$ and therefore carries a smooth manifold structure and therefore a Lie groups structure. 
		\item Recall that a matrix is invertible if $\det X\neq 0.$ Then $GL_n(\R)$ (resp. $\C$) is the collection of all invertible $n\times n$ matrices with entries in $\R$ (resp. $\C$). It is called the $\textbf{General Linear group}$. This is an open subset of $M_n(\R)$ (resp. $\C$) and therefore carries an obvious manifold structure. In fact, matrix multiplication and matrix inversion are smooth operations. This makes $GL_n(\R)$ (resp. $\C$) a real Lie group of dimension $n^2$ (resp. $2n^2$). Its Lie algebra is $\lie{gl}_n(\R)=M_n(\R)$ (resp. $\C$).  
		\item Define the operation $-^*:M_n(\C)\to M_n(\C)$ by $X\mapsto \overline{X}^T.$ The matrix $X^*$ is called the adjoint matrix to $X.$ Let $U(n)\subseteq GL_n(\C)$ to be the set of matrices such that $X^*X=I_n$ the $n\times n$ identity matrix. This is the \textbf{Unitary group} and is a closed subgroup of $GL_n(\C)$ and thus inherits a Lie groups structure. To find its dimension we pass to the Lie algebra $\lie{u}(n).$ An easy computation shows that $\lie{u}(n)$ consists of all skew-hermitian matrices ($X^*=-X$) and thus $\dim \lie{u}(n)=n^2.$ Further, $U(n)$ is a \textit{real} Lie group. To see this, see what happens when we take $i\lie{u}(n).$         
		\item Let $S^1$ be the circle embedded as a submanifold of $\C.$ Then $S^1$ carries a Lie group structure by writing its entries in polar coordinates. Define the \textbf{Torus} $\mathbb{T}^n=\prod^n S^1.$ This carries a natural Lie group structure under component-wise multiplication. Its lie algebra is $i\R^n.$  
	\end{enumerate}
\end{example}

\subsection{Lie Algebras Generally} 
Lie algebras are significantly easier to deal with than Lie groups because they are essentially generalized vector spaces. Therefore, we want to understand the structure of various types Lie algebras so that we may possibly deduce some information about the associated Lie group. 

\begin{definition}
	A \textbf{Lie subalgebra} (normally shortened to simply subalgebra) of a Lie algebra $\lie{g}$ is a vector subspace $\lie{h}$ such that $[\lie{h},\lie{h}]\subseteq \lie{h}$ where the bracket of Lie algebras is shorthand for the set of all $[X,Y].$ An \textbf{ideal} of $\lie{g}$ is a subset $\lie{i}$ such that $[\lie{g},\lie{i}]\subseteq \lie{i}.$ A subalgebra $\lie{a}$ is called \textbf{abelian} if $[\lie{a},\lie{a}]=0.$  
\end{definition}
We will denote ideals of $\lie{g}$ as $\lie{i}\trianglerighteq \lie{g}$ and subalgebras as $\lie{h}\subseteq \lie{g}.$ Notice that $[\lie{i},\lie{g}]\subseteq \lie{i}$ is equivalent to the definition given above as this amounts to putting a negative sign everywhere, but $-\lie{i}=\lie{i}.$   
\begin{proposition}
	If $\lie{g}$ is a Lie algebra and $\lie{i}$ is an ideal, then $\lie{g}/\lie{i}$ has the structure of a lie algebra. 
\end{proposition}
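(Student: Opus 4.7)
The plan is to mimic the familiar construction of the quotient ring $R/I$: define the bracket on cosets by the obvious formula, verify that the ideal hypothesis makes it well-defined, and then inherit bilinearity, antisymmetry, and the Jacobi identity from $\lie{g}$ itself. Since $\lie{i}$ is in particular a vector subspace of $\lie{g}$, the quotient $\lie{g}/\lie{i}$ already carries a canonical vector space structure with cosets $X+\lie{i}$, and the canonical map $\pi:\lie{g}\to \lie{g}/\lie{i}$ is linear. All that remains is to install the bracket.

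First, I would define
\[
[X+\lie{i},\,Y+\lie{i}] \;:=\; [X,Y]+\lie{i}.
\]
The hard part — really the only nontrivial step — is well-definedness, which is precisely where the ideal hypothesis on $\lie{i}$ (as opposed to mere subalgebra) is used. Suppose $X+\lie{i}=X'+\lie{i}$ and $Y+\lie{i}=Y'+\lie{i}$, so that $X-X'\in\lie{i}$ and $Y-Y'\in\lie{i}$. Using bilinearity of the bracket in $\lie{g}$, I would write
\[
[X,Y]-[X',Y'] \;=\; [X-X',Y] + [X',Y-Y'],
\]
and then invoke the ideal property $[\lie{g},\lie{i}]\subseteq\lie{i}$ (together with antisymmetry so that $[X',Y-Y']=-[Y-Y',X']\in\lie{i}$) to conclude that the difference lies in $\lie{i}$. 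Hence the bracket on cosets does not depend on the chosen representatives.

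Once well-definedness is established, the remaining axioms are formal. Bilinearity of $[-,-]$ on $\lie{g}/\lie{i}$ follows from bilinearity on $\lie{g}$ and the fact that $\pi$ is linear; antisymmetry reduces to $[Y,X]+\lie{i} = -[X,Y]+\lie{i}$; and the Jacobi identity is just the image under $\pi$ of the Jacobi identity in $\lie{g}$, since $\pi$ is a surjective linear map intertwining the brackets by construction. A clean way to package the last step is to observe that $\pi$ is by definition a Lie algebra homomorphism onto $\lie{g}/\lie{i}$, so any multilinear relation satisfied identically in $\lie{g}$ descends to one in $\lie{g}/\lie{i}$. This completes the verification that $(\lie{g}/\lie{i},[-,-])$ is a Lie algebra, with the additional bonus that $\pi$ is a Lie algebra homomorphism whose kernel is $\lie{i}$ — the analogue of Proposition \ref{First_Iso_Ring} in this setting.
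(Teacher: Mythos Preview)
Your proof is correct and follows essentially the same approach as the paper: define the bracket on cosets by $[X+\lie{i},Y+\lie{i}]=[X,Y]+\lie{i}$, check well-definedness, and inherit the Lie algebra axioms from $\lie{g}$. If anything, you are more careful than the paper, which glosses over the well-definedness step with a reference to bilinearity alone; your explicit use of the ideal condition $[\lie{g},\lie{i}]\subseteq\lie{i}$ at that point is exactly what is needed.
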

\begin{proof}
	As a set, $\lie{g}/\lie{i}$ is simply the vector space quotient. To show that the Lie bracket descends to the quotient, we consider two classes $X+\lie{i},Y+\lie{i}\in \lie{g}/\lie{i}.$ Then \[ [X+\lie{i},Y+\lie{i}]=[X,Y]+\lie{i}\]
	by the bilinearity of the bracket. It then follows immediately that this bracket satisfies the Jacobi identity. Hence, $\lie{g}/\lie{i}$ is a lie algebra.   
\end{proof}
Similar to the case of ideals of a ring, it can be shown (quite easily) that any ideal can be realized as the kernel of a Lie algebra homomorphism, namely $\varphi:\lie{g}\to \lie{g}/\lie{i}.$    
\begin{definition}
	A Lie algebra $\lie{g}$ is $\textbf{simple}$ if it has no non-zero proper ideals. It is \textbf{semisimple} if it has no non-zero solvable ideals. We say that a Lie group $G$ is semisimple (resp. simple) if $\lie{g}$ is semisimple (resp. simple).   
\end{definition} 
A fact which we will not prove is that all semisimple Lie algebras have no center, and therefore all semisimple Lie groups have a $0$-dimensional center. Further, one can prove (say by Cartan's criterion for semisimplicity) that all semisimple Lie algebras can be realized as a direct sum of simple lie algebras \cite[Chapter 1]{Knapp96}

Semisimple Lie groups are of interest to many areas of mathematics and are fairly well understood. The small piece of the theory of lie groups that we need for the rest of this section is the \textit{representation theory of semisimple Lie groups and Lie algebras}. Before we get into this, we want to understand where representation theory comes from in the first place. Why might we care about representations? Suppose $G$ is a finite group (not assumed to be of Lie type) and let $G$ act on a set $X.$ Denote by $\script{F}(X)$ the set of all complex valued functions on $X.$ Then $\script{F}(X)$ is naturally a $\C$-vector space under point-wise addition and scalar multiplication. We can extend the action of $G$ on $X$ to an action on all of $\script{F}(X)$ by \[ (g\cdot f)(x)=f(g^{-1}\cdot x)\]
This representation will break up into a direct sum of irreducible representations of $G$ with some multiplicities (by Maschke's Theorem). Precisely how this representation breaks up tells us something about the structure of $X.$ In particular, if we put some conditions on the functions (that they are all $L^2$ for instance) then we can better understand $X$ and its symmetries. This has a similar flavour to understanding $\Aut{X}$ for $X$ in an arbitrary category. 

\subsection{A Theorem of Lie} 
\begin{definition}
	Let $\lie{g}$ be a Lie algebra over an arbitrary field. The \textbf{commutator series} for $\lie{g}$ is defined by $\lie{g}^1=[\lie{g},\lie{g}]$ and $\lie{g}^{n+1}=[\lie{g}^n,\lie{g}^n].$ We get a chain of Lie subalgebras \[ \lie{g}^0=\lie{g}\supseteq \lie{g}^1\supseteq \lie{g}^2\supseteq...\]
	We say that $\lie{g}$ is \textbf{solvable} if $\lie{g}^n=0$ for some $n.$ 
\end{definition}
\begin{definition}
	Let $\lie{g}$ be a Lie algebra over an arbitrary field. The \textbf{lower central series} for $\lie{g}$ is defined by $\lie{g}_1=[\lie{g},\lie{g}]$ and $\lie{g}_{n+1}=[\lie{g},\lie{g}_n].$ We get a chain of ideals \[ \lie{g}_0=\lie{g}\supseteq \lie{g}_1\supseteq \lie{g}_2\supseteq...\]
	We say that $\lie{g}$ is \textbf{nilpotent} if $\lie{g}_n=0$ for some $n.$ 
\end{definition}
\begin{corollary}
	If $\lie{g}$ is nilpotent then it is solvable. 
\end{corollary}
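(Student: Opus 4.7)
The plan is to establish the containment $\lie{g}^n \subseteq \lie{g}_n$ for every $n \geq 0$ by induction on $n$, and then to conclude immediately. If $\lie{g}$ is nilpotent, then $\lie{g}_n = 0$ for some $n$, so $\lie{g}^n \subseteq \lie{g}_n = 0$ forces $\lie{g}^n = 0$, which is precisely the condition for solvability.

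For the base case $n = 0$ one has $\lie{g}^0 = \lie{g} = \lie{g}_0$ by definition, so the containment is trivially an equality. For the inductive step, assume $\lie{g}^n \subseteq \lie{g}_n$. I would then compute
\[
\lie{g}^{n+1} = [\lie{g}^n, \lie{g}^n] \subseteq [\lie{g}, \lie{g}_n] = \lie{g}_{n+1},
\]
where the containment uses the inductive hypothesis in the second argument of the bracket together with the elementary observation that $\lie{g}^n \subseteq \lie{g}^0 = \lie{g}$ in the first argument (the latter itself follows by a trivial secondary induction, since $\lie{g}^{k+1} = [\lie{g}^k, \lie{g}^k] \subseteq \lie{g}^k$ by the bilinearity and closure of the bracket on any subalgebra). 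Here I am also implicitly using that the bracket is monotone in each argument, i.e.\ if $A \subseteq A'$ and $B \subseteq B'$ as subspaces then $[A,B] \subseteq [A',B']$, which is immediate from bilinearity.

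There is essentially no obstacle in this proof; the only point requiring even momentary attention is to keep the two filtrations distinct notationally and to verify that the lower central series really does descend inside $\lie{g}$ (so that the bracket $[\lie{g}, \lie{g}_n]$ makes sense and the inclusion above is meaningful). Once the inductive inequality $\lie{g}^n \subseteq \lie{g}_n$ is in hand, the corollary is a one-line consequence of the nilpotency hypothesis.
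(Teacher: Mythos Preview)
Your proof is correct. The paper states this corollary without proof, treating it as immediate from the definitions; your induction showing $\lie{g}^n \subseteq \lie{g}_n$ for all $n$ is precisely the standard way to make that immediacy explicit, and there are no gaps.
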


\begin{lemma}
	Every subalgebra of a solvable (resp. nilpotent)  Lie algebra is solvable (resp. nilpotent). 
\end{lemma}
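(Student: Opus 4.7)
The plan is to prove both statements simultaneously by a single induction argument, since the only difference between the commutator series and the lower central series is which term gets bracketed at each step. Let $\lie{h} \subseteq \lie{g}$ be a subalgebra. I would first establish, as the central lemma, the inclusions $\lie{h}^k \subseteq \lie{g}^k$ and $\lie{h}_k \subseteq \lie{g}_k$ for all $k \geq 0$, and then deduce the theorem as an immediate corollary: if $\lie{g}^n = 0$ then $\lie{h}^n \subseteq \lie{g}^n = 0$, and similarly for nilpotency.

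For the commutator series, I would induct on $k$. The base case $k = 0$ is just the assumption $\lie{h} \subseteq \lie{g}$. For the inductive step, assuming $\lie{h}^k \subseteq \lie{g}^k$, bilinearity of the bracket (and the fact that the bracket of two subsets is contained in the bracket of any two larger subsets) gives
\[
\lie{h}^{k+1} = [\lie{h}^k, \lie{h}^k] \subseteq [\lie{g}^k, \lie{g}^k] = \lie{g}^{k+1},
\]
completing the induction. For the lower central series, the same argument works with the small modification that the inductive step uses the inclusion $\lie{h} \subseteq \lie{g}$ in the first slot:
\[
\lie{h}_{k+1} = [\lie{h}, \lie{h}_k] \subseteq [\lie{g}, \lie{g}_k] = \lie{g}_{k+1}.
\]

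Once these inclusions are in hand, the final step is purely formal: solvability of $\lie{g}$ provides some $n$ with $\lie{g}^n = 0$, hence $\lie{h}^n = 0$, so $\lie{h}$ is solvable; nilpotency of $\lie{g}$ provides some $n$ with $\lie{g}_n = 0$, hence $\lie{h}_n = 0$, so $\lie{h}$ is nilpotent. There is no real obstacle here — the proof is essentially a bookkeeping exercise in induction. The only subtlety worth flagging is that one should remember that both series are defined purely in terms of brackets (no quotienting is involved), so the inclusion of sets of elements transports cleanly through each step; this is what distinguishes the subalgebra case from the analogous statement for quotients, which would require an additional argument about images of brackets.
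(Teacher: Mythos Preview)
Your proof is correct and takes essentially the same approach as the paper: the paper's proof is a one-liner observing that $[\lie{h},\lie{h}]\subseteq [\lie{g},\lie{g}]$, and you have simply spelled out the full induction that this observation is meant to suggest, covering both the commutator and lower central series explicitly.
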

\begin{proof}
	Clearly, for each $\lie{h}\subseteq \lie{g}$ the commutator series satisfies $[\lie{h},\lie{h}]\subseteq [\lie{g},\lie{g}].$ 
\end{proof}

\begin{theorem}[Lie's Theorem]
	Let $\lie{g}$ be a complex solvable Lie algebra and $(\pi,V)$ a representation. Then there exists a simultaneous eigenvector for all elements in $\pi(\lie{g}).$ 
\end{theorem}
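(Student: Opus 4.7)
The plan is to proceed by induction on $\dim \lie{g}$. The base cases $\dim \lie{g} \leq 1$ are immediate: when $\dim \lie{g} = 0$ any nonzero vector in $V$ works, and when $\dim \lie{g} = 1$ the single generator acts by a linear operator on a finite-dimensional complex vector space, which admits an eigenvector by the fundamental theorem of algebra. For the inductive step, I would first exploit solvability to produce an ideal of codimension one: since $\lie{g}$ is solvable and nonzero, $[\lie{g},\lie{g}] \subsetneq \lie{g}$, so $\lie{g}/[\lie{g},\lie{g}]$ is a nonzero abelian Lie algebra; pick any codimension-one subspace of it and pull back to obtain $\lie{h} \subseteq \lie{g}$ of codimension one. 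Since $\lie{h} \supseteq [\lie{g},\lie{g}]$, it is automatically an ideal, and as a subalgebra of a solvable Lie algebra it is itself solvable by the preceding lemma.

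Applying the inductive hypothesis to $(\pi|_{\lie{h}}, V)$ produces a nonzero $v_0 \in V$ and a linear functional $\lambda : \lie{h} \to \C$ with $\pi(h)v_0 = \lambda(h)v_0$ for all $h \in \lie{h}$. Define the weight space
\[ V_\lambda = \{v \in V : \pi(h)v = \lambda(h)v \text{ for all } h \in \lie{h}\}, \]
which is nonzero. Pick $x \in \lie{g}$ with $\lie{g} = \lie{h} \oplus \C x$. The entire argument now reduces to the following invariance claim: $V_\lambda$ is stable under $\pi(x)$. Granting this, $\pi(x)|_{V_\lambda}$ is an operator on a finite-dimensional complex vector space and hence possesses an eigenvector $v \in V_\lambda$, which is then a simultaneous eigenvector for all of $\pi(\lie{g})$, completing the induction.

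To prove invariance, fix a nonzero $v \in V_\lambda$, set $W_n = \operatorname{Span}_\C\{v, \pi(x)v, \ldots, \pi(x)^{n-1}v\}$, and let $N$ be minimal with $W_N = W_{N+1}$; set $W = W_N$. Then $W$ is $\pi(x)$-stable. The key technical step is to show by induction on $k$ that for every $h \in \lie{h}$,
\[ \pi(h)\pi(x)^k v - \lambda(h)\pi(x)^k v \in W_k. \]
This follows from the identity $\pi(h)\pi(x) = \pi(x)\pi(h) + \pi([h,x])$ together with the fact that $[h,x] \in \lie{h}$ (since $\lie{h}$ is an ideal), after applying the inductive hypothesis to both $\pi(h)$ acting on $\pi(x)^{k-1}v$ and $\pi([h,x])$ acting on $\pi(x)^{k-1}v$. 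Consequently $W$ is $\pi(\lie{h})$-stable, and in the basis $\{v, \pi(x)v, \ldots, \pi(x)^{N-1}v\}$ every $\pi(h)|_W$ is upper-triangular with $\lambda(h)$ on the diagonal; in particular $\Tr(\pi(h)|_W) = N\lambda(h)$. Applying this to $[h,x] \in \lie{h}$ yields
\[ N\lambda([h,x]) = \Tr(\pi([h,x])|_W) = \Tr\bigl([\pi(h),\pi(x)]|_W\bigr) = 0, \]
since both $\pi(h)$ and $\pi(x)$ preserve $W$ and the trace of a commutator on a finite-dimensional space vanishes. Because $\Char \C = 0$, we conclude $\lambda([h,x]) = 0$. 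Then for $v \in V_\lambda$,
\[ \pi(h)\pi(x)v = \pi(x)\pi(h)v + \pi([h,x])v = \lambda(h)\pi(x)v + \lambda([h,x])v = \lambda(h)\pi(x)v, \]
so $\pi(x)v \in V_\lambda$, as required.

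The main obstacle will be the trace-vanishing step showing $\lambda([h,x]) = 0$: everything before it is bookkeeping with the Leibniz-type identity $\pi(h)\pi(x) - \pi(x)\pi(h) = \pi([h,x])$, but extracting the weight-annihilation requires constructing the cyclic subspace $W$ and carefully tracking the triangular shape. This is precisely the step that fails in positive characteristic (where $N\lambda([h,x]) = 0$ no longer forces $\lambda([h,x]) = 0$) and over $\R$ (where the final appeal to an eigenvector of $\pi(x)|_{V_\lambda}$ breaks down), explaining why both hypotheses in the statement are essential.
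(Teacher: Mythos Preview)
Your argument is correct and is precisely the standard proof of Lie's theorem; the paper does not supply its own proof but simply refers the reader to \cite{Knapp86} and \cite{Bump2013}, where essentially this same induction-on-$\dim\lie{g}$ argument with the codimension-one ideal and the trace computation $\Tr(\pi([h,x])|_W)=0$ appears. One small point worth making explicit: your proof tacitly assumes $V$ is nonzero and finite-dimensional (needed both for the existence of eigenvectors and for the trace argument), which the statement in the paper omits but which is clearly intended from context.
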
	
This implies, for instance, that all elements of $\pi(\lie{g})$ act by upper triangular matrices on any $\pi(\lie{g})$ invariant subspaces. With the diagonal entries being the generalized eigenvalues of the matrices.     

\noindent For proofs of this theorem see \cite{Knapp86} or \cite{Bump2013}.

\subsection{Weights, Roots, and the Weyl Group}  
For this entire section, all statements not proven are presented in \cite{Knapp2005} with incredible detail. 
\begin{definition}
	Let $\lie{g}$ be a Lie algebra and $(\pi,V)$ a representation.  For $\alpha\in \lie{g}^*$ put \[V_\alpha= \{v\in V: (\pi(H)-\alpha(H)1)^nv=0, \forall H\in \lie{g}, n=n(v,H)\}\]
	 If $V_\alpha\neq 0,$ then $V_\alpha$ is called a \textbf{generalized weight space} and $\alpha$ a \textbf{weight}. We will denote the set of weights by $\Lambda(\lie{g},\pi).$ 
 \end{definition}
 If $V$ is finite dimensional then $\pi(H)-\alpha(H)1$ only has $0$ as a generalized eigenvalue and is nilpotent on $V_\alpha$ via the theory of Jordan normal forms. Therefore, we may assume that $n(v,H)=\dim V.$ In this case, we would like to somehow deduce information about $\pi$ from the generalized weight spaces.  
 
\begin{theorem}
	Let $\lie{h}$ be a nilpotent lie algebra and $(\pi,V)$ a finite dimensional complex representation. Then there are finitely many generalized weights of $\pi.$ Further, each generalized weight space is stable under $\pi(\lie{h})$ and $V=\bigoplus_{\alpha\in \Lambda(\lie{h},\pi)} V_\alpha$ 
\end{theorem}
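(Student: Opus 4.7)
The plan is to bootstrap the full decomposition out of the primary (generalized eigenspace) decomposition for a single element, with nilpotency of $\lie{h}$ supplying the crucial invariance property. First I would fix some $H \in \lie{h}$; classical Jordan theory on the finite-dimensional complex space $V$ gives
\[ V = \bigoplus_{\lambda \in \C} V_\lambda(H), \qquad V_\lambda(H) := \ker(\pi(H) - \lambda)^{\dim V}. \]
The entire theorem then rests on the claim that each $V_\lambda(H)$ is stable under $\pi(H')$ for every $H' \in \lie{h}$. Once this invariance is established, I would iterate over a basis $H_1,\dots,H_n$ of $\lie{h}$ and intersect the blocks to obtain a simultaneous decomposition $V = \bigoplus V_{(\lambda_1,\dots,\lambda_n)}$ with $V_{(\lambda_i)} = \bigcap_i V_{\lambda_i}(H_i)$; finiteness of the indexing set is automatic from $\dim V < \infty$.

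For the invariance -- the hard part -- the plan is to exploit the standard combinatorial identity
\[ (X-\lambda)^m Y \;=\; \sum_{j=0}^m \binom{m}{j}\,(\operatorname{ad} X)^j(Y)\,(X-\lambda)^{m-j}, \]
valid for any linear operators $X,Y$ on $V$ (one checks $\operatorname{ad}(X - \lambda\cdot I) = \operatorname{ad}(X)$ since scalars are central, then induct on $m$), applied with $X = \pi(H)$ and $Y = \pi(H')$. Nilpotency of $\lie{h}$ furnishes some $k$ with $(\operatorname{ad} H)^k = 0$ on $\lie{h}$, and functoriality of $\pi$ transports this to $(\operatorname{ad} \pi(H))^k \pi(H') = \pi((\operatorname{ad} H)^k H') = 0$. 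Choosing $m$ so large that $m - j \geq \dim V$ for all $j < k$ ensures that each surviving summand either vanishes by ad-nilpotency or annihilates $V_\lambda(H)$ through a high power of $X - \lambda$, yielding $\pi(H')\,V_\lambda(H) \subseteq V_\lambda(H)$. The passage from ad-nilpotency on $\lie{h}$ to the operator bound on $V$ via this identity is where the hypothesis genuinely bites, and I expect this bookkeeping to be the main obstacle.

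To finish, I would identify each nonzero $V_{(\lambda_i)}$ with some $V_\alpha$ and verify that the corresponding $\alpha$ is linear. Since $\lie{h}$ nilpotent implies $\lie{h}$ solvable, Lie's theorem applies to the restriction of $\pi$ to the $\pi(\lie{h})$-stable subspace $V_{(\lambda_i)}$: choose a basis in which every $\pi(H)\big|_{V_{(\lambda_i)}}$ is upper triangular. The diagonal of $\pi(H_i)$ is then constantly $\lambda_i$, so by $\C$-linearity of reading off diagonals, the diagonal of $\pi(H)$ is constantly $\alpha(H) := \sum_i c_i \lambda_i$ whenever $H = \sum_i c_i H_i$. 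This exhibits $\alpha$ as an element of $\lie{h}^*$ and makes $\pi(H) - \alpha(H)$ strictly upper triangular, hence nilpotent, on $V_{(\lambda_i)}$; therefore $V_{(\lambda_i)} \subseteq V_\alpha$, and the reverse inclusion is immediate from the defining conditions. Assembling the pieces yields $V = \bigoplus_{\alpha \in \Lambda(\lie{h},\pi)} V_\alpha$ with $\pi(\lie{h})$-stable summands and a finite weight set, as required.
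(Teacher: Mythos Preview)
Your proposal is correct and follows essentially the same approach as the paper: prove $\pi(\lie h)$-invariance of single-operator generalized eigenspaces via a commutator expansion exploiting ad-nilpotency, iterate over a basis $H_1,\dots,H_d$ to obtain a simultaneous decomposition, then invoke Lie's theorem to verify linearity of the weight on each block. The only cosmetic difference is in the invariance step: the paper iterates the first-order relation $(\pi(H)-\alpha(H))\pi(Y)=\pi(Y)(\pi(H)-\alpha(H))+\pi([H,Y])$ and inducts on the ad-nilpotency level of $Y$, whereas you package the same computation into the single binomial identity $(X-\lambda)^m Y=\sum_j\binom{m}{j}(\ad X)^j(Y)(X-\lambda)^{m-j}$, which is a slightly cleaner bookkeeping of the identical idea.
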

\begin{proof}
	We first prove that $V_\alpha$ is invariant under $\pi(\lie{h}).$ Fix $H\in \lie{h}.$ Then put \[ V_{\alpha,H}=\{ v\in V: (\pi(H)-\alpha(H)1)^nv=0, n=n(v)\}\]
	Now, by construction $V_\alpha=\bigcap_{H\in \lie{h}} V_{\alpha,H}.$ It suffices to prove that $V_{\alpha,H}$ is $\pi(\lie{h})$-invariant. 
	
	Now, as $\lie{h}$ is nilpotent, $\ad H$ is nilpotent for all $H.$ Put \[ \lie{h}_{(m)}=\{ Y\in \lie{h}: (\ad H)^mY=0\}\]
	so that $\lie{h}=\bigcup_{m=o}^{\dim \lie{h}} \lie{h}_{(m)}.$ We prove that $\pi(Y)V_{\alpha,H}\subseteq V_{\alpha,H}$ for $Y\in \lie{h}_{(m)}$ by induction on $m.$ 
	
	For the case of $m=0$ this is trivial as $\lie{h}_{(m)}=0.$ Therefore, assume that this holds for all $Z\in \lie{h}_{(m-1)}.$ If $Y\in \lie{h}_{(m)},$ then $[H,Y]\in \lie{h}_{(m-1)}$ be construction. Therefore, \[ (\pi(H)-\alpha(H)1)\pi(Y)=\pi(Y)(\pi(H)-\alpha(H))+\pi([H,X]) \]
  and \begin{align*}
  	(\pi(H)-\alpha(H)1)^2\pi(Y)&=(\pi(H)-\alpha(H)1)\pi(Y)(\pi(H)\\&-\alpha(H)1)+(\pi(H)-\alpha(H)1)\pi([H,Y])\\
	&=\pi(Y)(\pi(H)-\alpha(H)1)^2\\
	&+ (\pi(H)-\alpha(H)1)\pi([H,Y])+\pi([H,Y])(\pi(H)-\alpha(H)1)
  \end{align*}	    
  Iterating this computation, we get the general formula \begin{align*} (\pi(H)-\alpha(H)1)^\ell\pi(Y)&=\pi(Y) (\pi(H)-\alpha(H)1)^{\ell}\\&+\sum_{s=0}^{\ell-1} (\pi(H)-\alpha(H)1)^{\ell-1-s}\pi([H,Y])(\pi(H)-\alpha(H)1)^s \end{align*}
For $v\in V_{\alpha,H},$ we know that $(\pi(H)-\alpha(H)1)^Nv=0$ for $N\geq \dim V.$ Take $\ell=2N$ in the above expression and apply it to $v.$ The only terms which survive are those for which $s<N.$ In this case, $\ell-1-s\geq N$ and therefore $(\pi(H)-\alpha(H)1)^sv\in V_{\alpha,H},$ $\pi([H,Y])$ preserves $V_{\alpha,H}$ be the induction hypothesis, and \[ (\pi(H)-\alpha(H)1)^{\ell-1-s}\pi([H,Y])(\pi(H)-\alpha(H)1)^sv=0\]
Hence, $(\pi(H)-\alpha(H)1)^\ell\pi(Y)v=0$ and thus $V_{\alpha,H}$ is stable under $\pi(Y).$ This completes the induction and $V_\alpha$ is invariant under $\pi(\lie{h}).$ 

Now we can obtain the decomposition. Let $H_1,...,H_d$ be a basis for $\lie{h}.$ The Jordan decomposition for $\pi(H_1)$ gives a generalized eigenspace decomposition that we can write as \[ V=\bigoplus_\lambda V_{\lambda,H_1}\]
We can regard the complex numbers $\lambda$ as running over all values of $\alpha(H_1)$ for $\alpha\in \lie{h}^*$ arbitrary. Therefore, we can re-write the decomposition as \[ V=\bigoplus_{\alpha(H_1), \alpha\in \lie{h}^*} V_{\alpha(H_1),H_1}\]
However, $V_{\alpha(H_1),H_1}=V_{\alpha,H_1}$ which we defined at the beginning of the proof. Therefore, each of these spaces is stable under $\pi(\lie{h}).$ Therefore, we can further decompose it under $\pi(H_2)$ to get \[ V=\bigoplus_{\alpha(H_1)} \bigoplus_{\alpha(H_2)} \left( V_{\alpha,H_1}\cap V_{\alpha,H_2} \right) \]
we iterate this for all basis elements of $\lie{h}$ to get \[ V=\bigoplus_{\alpha(H_1),...,\alpha(H_d)} \left( \bigcap_{j=1}^d V_{\alpha,H_j} \right) \]
with each of these spaces $\pi(\lie{h})$-invariant. By Lie's theorem, each $\pi(H_i)$ acts simultaneously by an upper-triangular matrices on $\bigcap^d V_{\alpha,H_i}$ with diagonal entries evidently $\alpha(H_i).$ Then $\pi(\sum c_iH_i)$ acts by $\sum c_i\alpha(H_i).$ Thus, if we define $\alpha(\sum c_iH_i)=\sum c_i\alpha(H_i),$ we see that $  \bigcap^d V_{\alpha,H_i}=V_\alpha$ and $V=\bigoplus V_\alpha.$ 
In particular there are only finitely many $\alpha$ which satisfy this property. This completes the proof.  
\end{proof}

Now let $\lie{g}$ be a semisimple Lie algebra and $\lie{h}$ a nilpotent subalgebra. Let $\lie{h}^*$ denote its dual space. Then for all $\lambda\in \lie{h}^*,$ define \[ \lie{g}_\lambda=\{X\in \lie{g}: (\ad H-\lambda(H)1)^nX=0, \forall H\in \lie{h}, n=n(X,H)\} \]
As $\lie{h}$ is nilpotent, we know that $\lie{g}=\bigoplus_{\lambda\in \lie{h}^*} \lie{g}_\lambda.$ Further, there exist finitely many $\lambda$ such that $\lie{g}_\lambda$ is non-zero. Let $\Delta(\lie{g},\lie{h})$ be the set of weights.  

\begin{proposition}
	In the setting above: \begin{enumerate}
		\item $\lie{g}=\bigoplus_{\alpha\in \Delta(\lie{g},\lie{h})} \lie{g}_\alpha$
		\item $[\lie{g}_\alpha,\lie{g}_\beta]\subseteq \lie{g}_{\alpha+\beta}$ (this space is understood to be $0$ if $\alpha+\beta\not\in \Delta(\lie{g},\lie{h}).$)
		\item $\lie{h}\subseteq \lie{g}_0 $
	\end{enumerate}
\end{proposition}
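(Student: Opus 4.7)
The plan is to deduce all three statements from the preceding theorem applied to the adjoint representation, together with one calculation exploiting the Leibniz rule for $\ad$.

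For (a), I would simply take $V=\lie{g}$ and $\pi=\ad$ in the previous theorem. Since $\lie{h}$ is nilpotent and $\lie{g}$ is finite dimensional, the theorem applies verbatim: there are only finitely many generalized weights $\alpha\in \lie{h}^*$ for which $\lie{g}_\alpha\neq 0$, each $\lie{g}_\alpha$ is $\ad(\lie{h})$-stable, and $\lie{g}=\bigoplus_{\alpha} \lie{g}_\alpha$. Collecting the nonzero summands is precisely the statement that $\lie{g}=\bigoplus_{\alpha\in \Delta(\lie{g},\lie{h})}\lie{g}_\alpha$.

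For (b), the main tool is that $\ad H$ is a derivation of the bracket. Fixing $H\in \lie{h}$, $X\in \lie{g}_\alpha$, $Y\in \lie{g}_\beta$, and setting $A=\ad H-\alpha(H)1$ and $B=\ad H-\beta(H)1$, a direct check gives
\[ \bigl(\ad H-(\alpha+\beta)(H)1\bigr)[X,Y]=[AX,Y]+[X,BY]. \]
Since the two operators on the right commute in the sense required, an induction on $n$ (or the binomial theorem) yields
\[ \bigl(\ad H-(\alpha+\beta)(H)1\bigr)^n[X,Y]=\sum_{k=0}^{n}\binom{n}{k}[A^{k}X,B^{n-k}Y]. \]
Choosing $n$ larger than $n(X,H)+n(Y,H)$ forces every term in this sum to vanish, so $[X,Y]\in \lie{g}_{\alpha+\beta}$ (interpreted as $0$ when $\alpha+\beta\notin \Delta(\lie{g},\lie{h})$). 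This is the step I expect to be the main obstacle, as one must be careful that the binomial expansion is legitimate; the fact that $A$ and $B$ differ by the scalar $(\beta(H)-\alpha(H))1$ means they commute, which is exactly what is needed.

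For (c), let $H\in \lie{h}$. Because $\lie{h}$ is nilpotent, the restriction $\ad H'|_{\lie{h}}$ is nilpotent for every $H'\in \lie{h}$, so there exists $n$ with $(\ad H')^n H=0$. This is precisely the statement $(\ad H'-0\cdot 1)^n H=0$ for all $H'\in \lie{h}$, hence $H\in \lie{g}_0$. Therefore $\lie{h}\subseteq \lie{g}_0$, completing the proof.
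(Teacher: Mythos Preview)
Your argument is correct and follows exactly the route the paper intends: the paper's proof is the single line ``This all follows from the previous theorem by replacing $V$ with $\lie{g}$,'' and you have simply supplied the details that this line suppresses, especially the Leibniz/binomial computation needed for (b), which does not literally fall out of the theorem statement.

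One small remark on your justification in (b): the binomial expansion
\[
\bigl(\ad H-(\alpha+\beta)(H)1\bigr)^n[X,Y]=\sum_{k=0}^{n}\binom{n}{k}\,[A^{k}X,\,B^{n-k}Y]
\]
holds because the operations ``apply $A$ in the first slot'' and ``apply $B$ in the second slot'' commute automatically, as they act on different arguments of the bracket. The fact that $A$ and $B$ differ by a scalar (and hence commute as endomorphisms of $\lie{g}$) is true but not the relevant point here; your induction goes through regardless.
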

\begin{proof}
	This all follows from the previous theorem by replacing $V$ with $\lie{g}.$ 
\end{proof}

\begin{definition}
	A nilpotent Lie subalgebra $\lie{h}$ is a \textbf{Cartan subalgebra} if $\lie{h}=\lie{g}_0.$ 
\end{definition}
This definition in general is hard to check. Therefore, we would like an equivalent way of defining Cartan subalgebras so that this condition is not too abstract. 
\begin{proposition}
	Let $\lie{g}$ be a Lie algebra and $\lie{h}$ a nilpotent subalgebra. Then $\lie{h}$ is a Cartan subalgebra if and only if $N_\lie{g}(\lie{h})=\lie{h}.$ This is the \textbf{normalizer} of $\lie{h}$ and is $\{X\in \lie{g}: [X,\lie{h}]\subseteq \lie{h}\}.$  
\end{proposition}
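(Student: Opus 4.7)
The plan is to prove the two directions separately, making essential use of the weight-space decomposition $\lie{g}=\bigoplus_{\alpha\in\Delta(\lie{g},\lie{h})}\lie{g}_\alpha$ from the previous proposition for $(\Rightarrow)$, and Engel's theorem (that a Lie algebra of nilpotent endomorphisms of a nonzero finite-dimensional vector space has a common null vector) for $(\Leftarrow)$.

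For the forward direction, suppose $\lie{h}=\lie{g}_0$. The inclusion $\lie{h}\subseteq N_\lie{g}(\lie{h})$ is automatic since $\lie{h}$ is a subalgebra. For the reverse, let $X\in N_\lie{g}(\lie{h})$ and decompose $X=\sum_{\alpha} X_\alpha$ with $X_\alpha\in \lie{g}_\alpha$. Fix $H\in\lie{h}$. Since $[H,X]\in\lie{h}=\lie{g}_0$ and the decomposition $\lie{g}=\bigoplus\lie{g}_\alpha$ is stable under $\ad H$, uniqueness of the weight decomposition forces $[H,X_\alpha]=0$ for every $\alpha\neq 0$. Then from $X_\alpha\in\lie{g}_\alpha$ we have $(\ad H-\alpha(H))^n X_\alpha=0$ for some $n$, which combined with $\ad H(X_\alpha)=0$ yields $(-\alpha(H))^n X_\alpha=0$. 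Hence either $X_\alpha=0$ or $\alpha(H)=0$, and since this holds for every $H\in\lie{h}$, any nonzero $X_\alpha$ would force $\alpha=0$, a contradiction. Thus $X=X_0\in\lie{g}_0=\lie{h}$.

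For the converse, assume $N_\lie{g}(\lie{h})=\lie{h}$. The inclusion $\lie{h}\subseteq\lie{g}_0$ follows from nilpotency of $\lie{h}$: for any $H\in\lie{h}$, $\ad H$ is nilpotent on $\lie{h}$, so every element of $\lie{h}$ lies in the generalized $0$-weight space. To obtain the reverse inclusion, consider the quotient vector space $\lie{g}_0/\lie{h}$. Since $\lie{h}$ is a subalgebra, each $\ad H$ with $H\in\lie{h}$ preserves $\lie{h}$, and since $\lie{g}_0$ is a subalgebra containing $\lie{h}$, it also preserves $\lie{g}_0$; hence $\ad H$ descends to a well-defined linear endomorphism of $\lie{g}_0/\lie{h}$. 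Moreover, because $\ad H$ acts nilpotently on $\lie{g}_0$ (being the generalized $0$-eigenspace), it acts nilpotently on $\lie{g}_0/\lie{h}$ as well.

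Now suppose for contradiction that $\lie{g}_0/\lie{h}\neq 0$. The image $\ad(\lie{h})\subseteq\End(\lie{g}_0/\lie{h})$ consists of commuting\footnote{Strictly, they need not commute, but Engel's theorem applies to any Lie subalgebra of nilpotent endomorphisms without commutativity.} nilpotent operators, so by Engel's theorem there exists a nonzero class $\bar X=X+\lie{h}\in\lie{g}_0/\lie{h}$ killed by every $\ad H$, $H\in\lie{h}$. Unwinding, this means $[H,X]\in\lie{h}$ for all $H\in\lie{h}$, i.e.\ $X\in N_\lie{g}(\lie{h})$, while $X\notin\lie{h}$, contradicting $N_\lie{g}(\lie{h})=\lie{h}$. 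Hence $\lie{g}_0=\lie{h}$. The main obstacle is the $(\Leftarrow)$ direction, which hinges on the non-trivial input of Engel's theorem; the $(\Rightarrow)$ direction is essentially bookkeeping with the weight decomposition already constructed.
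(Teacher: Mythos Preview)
Your proof is correct. The paper does not supply its own argument for this proposition; it simply cites \cite{Knapp2005}. Your approach---the weight decomposition for $(\Rightarrow)$ and Engel's theorem on the quotient $\lie{g}_0/\lie{h}$ for $(\Leftarrow)$---is the standard one found in Knapp and elsewhere, so there is nothing substantive to compare. One minor remark: Engel's theorem is not stated in the paper, so if you were writing this for inclusion you might want to state it explicitly (as you partially do in the footnote), but the logic is sound.
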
  
\begin{proof}
	See \cite{Knapp2005} 
\end{proof}

\begin{theorem}
	Let $\lie{g}$ be a complex finite-dimensional Lie algebra. Then there exists a Cartan subalgebra $\lie{h}\subseteq \lie{g}.$ Further, every Cartan subalgebra is conjugate.  
\end{theorem}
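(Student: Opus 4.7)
The plan is to prove existence via the construction of a Cartan subalgebra from a regular element, and to prove conjugacy using the action of the group of inner automorphisms.

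For existence, I would begin by defining an element $X \in \lie{g}$ to be \emph{regular} if $\dim \lie{g}_0(\ad X)$ is minimal among all elements of $\lie{g}$, where $\lie{g}_0(\ad X) = \{Y \in \lie{g} : (\ad X)^n Y = 0 \text{ for some } n\}$ is the generalized $0$-eigenspace of $\ad X$. The set of regular elements is the complement of the vanishing locus of a certain coefficient of the characteristic polynomial of $\ad X$, so it is Zariski-open; since $\lie{g}$ is finite-dimensional over $\C$, this set is nonempty. Fix a regular $X$ and set $\lie{h} := \lie{g}_0(\ad X)$. I would then verify in turn that (i) $\lie{h}$ is a subalgebra, using the fact that generalized eigenspaces of $\ad X$ multiply additively under the bracket (so $[\lie{g}_0,\lie{g}_0] \subseteq \lie{g}_0$); (ii) $\lie{h}$ is nilpotent, by showing $\ad_{\lie{h}} H$ is nilpotent for every $H \in \lie{h}$ and invoking Engel's theorem — this is the step where regularity of $X$ is essential, since one argues that if some $H \in \lie{h}$ had a nonzero eigenvalue on $\lie{h}$ then a small perturbation $X + tH$ would have strictly smaller $\dim \lie{g}_0(\ad(X+tH))$, contradicting regularity; and (iii) $\lie{h}$ is self-normalizing, which together with the earlier proposition $N_\lie{g}(\lie{h}) = \lie{h} \iff \lie{h}$ is Cartan gives the result.

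For conjugacy, let $\lie{h}_1$ and $\lie{h}_2$ be two Cartan subalgebras and let $\operatorname{Int}(\lie{g})$ denote the connected subgroup of $\Aut{\lie{g}}$ generated by the exponentials $\exp(\ad Y)$ for $Y \in \lie{g}$ with $\ad Y$ nilpotent. I would first show that every Cartan subalgebra contains a regular element of $\lie{g}$: generically an element of $\lie{h}_i$ is regular, since the vanishing locus of the relevant coefficient is a proper subvariety of $\lie{h}_i$. Next, I would consider the orbit maps $\Phi_i : \operatorname{Int}(\lie{g}) \times \lie{h}_i \to \lie{g}$ defined by $(g, H) \mapsto g \cdot H$, and show that the image of $\Phi_i$ contains a nonempty Zariski-open subset of $\lie{g}$ (computed via a tangent space calculation: at a regular $H \in \lie{h}_i$ the differential is surjective because $\lie{g} = \lie{h}_i \oplus [\lie{g}, H]$ as $\ad H$ is invertible on the complement of $\lie{h}_i = \lie{g}_0(\ad H)$). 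Since $\lie{g}$ is irreducible as an algebraic variety over $\C$, these two open subsets intersect, so there exist $g_1, g_2 \in \operatorname{Int}(\lie{g})$ and regular elements $H_i \in \lie{h}_i$ with $g_1 H_1 = g_2 H_2$. Finally, the Cartan subalgebra containing a regular element $Z$ is uniquely determined as $\lie{g}_0(\ad Z)$, so applying this to $Z := g_1 H_1 = g_2 H_2$ and using equivariance $\lie{g}_0(\ad(g \cdot Z)) = g \cdot \lie{g}_0(\ad Z)$ yields $g_1 \lie{h}_1 = g_2 \lie{h}_2$, hence $\lie{h}_2 = (g_2^{-1} g_1) \lie{h}_1$.

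The hard part will be the conjugacy half, and specifically the tangent-space surjectivity of $\Phi_i$ together with the density argument — one really needs the fact that $\lie{g}$ is irreducible over $\C$ (so that two dense open sets intersect), which is why the theorem fails over $\R$ without modification. The existence portion reduces to the regularity argument plus Engel's theorem, which is comparatively clean; the nilpotence of $\lie{h} = \lie{g}_0(\ad X)$ for $X$ regular is the one subtle point there, and it is the place where one uses that perturbing a regular element within $\lie{h}$ cannot shrink $\lie{g}_0$. I would also note that uniqueness of the Cartan containing a given regular $Z$ is itself worth stating as a short lemma, since it is the hinge of the final step of the conjugacy argument.
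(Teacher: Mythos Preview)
The paper does not actually prove this theorem; its entire proof reads ``See \cite{Knapp86}, \cite{Knapp2005}, and \cite{Helgason1978} for separate proofs of this theorem.'' Your outline is correct and is in fact the standard argument one finds in those references (particularly Knapp): existence via $\lie{h} = \lie{g}_0(\ad X)$ for a regular element $X$, with nilpotence coming from Engel plus the regularity perturbation argument, and conjugacy via the density of the $\operatorname{Int}(\lie{g})$-orbits of two Cartans inside the irreducible variety $\lie{g}$. So you have supplied considerably more than the paper does, and what you have supplied matches the cited sources.
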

\begin{proof}
	See \cite{Knapp86}, \cite{Knapp2005}, and \cite{Helgason1978} for separate proofs of this theorem. 
\end{proof}

For the remainder of this section, we shall only give sketches of the proofs for the big theorems as there are much more important topics to cover. For a full treatment see \cite[Chapter 7]{Lorenz2018}. 
\begin{definition}
	Let $\lie{g}$ be a complex semisimple Lie algebra and $\lie{h}$ a Cartan subalgebra. We call the weights of the adjoint representation of $\lie{h}$ on $\lie{g}$ \textbf{roots}. The decomposition \[ \lie{g}=\lie{h} \ds \bigoplus_{\alpha\in \Delta(\lie{g},\lie{h})} \lie{g}_\alpha\]
	is called the \textbf{root space decomposition}.  
\end{definition}
We want to understand $\Delta(\lie{g},\lie{h}).$  
\begin{proposition}
Consider the situation above.
	\begin{enumerate}
		\item If $\alpha,\beta\in \Delta\cup \{0\}$ and $\alpha+\beta\neq 0,$ then $B(\lie{g}_\alpha,\lie{g}_\beta)=0.$ 
		\item If $\alpha\in \Delta\cup \{0\},$ then $B$ is non-singular on $\lie{g}_\alpha\times \lie{g}_{-\alpha}.$ 
		\item If $\alpha\in \Delta$ then $-\alpha\in \Delta.$
		\item $B|_{\lie{h}\times\lie{h}}$ is non-degenerate and thus for each $\alpha$ there exists $H_\alpha$ so that $B(H_\alpha,H)=\alpha(H).$ 
		\item $\Delta$ spans $\lie{h}^*.$ 
	\end{enumerate}
\end{proposition}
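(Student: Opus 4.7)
The plan is to prove the five parts in order, leaning on two facts we may take for granted: by Cartan's criterion the Killing form $B$ is non-degenerate on any semisimple $\lie{g}$, and $\lie{h}$ acts by $\ad$ in a way that permutes root spaces as $\ad(\lie{g}_\alpha)(\lie{g}_\gamma)\subseteq \lie{g}_{\alpha+\gamma}$.

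For (a), I would fix $X\in \lie{g}_\alpha$, $Y\in \lie{g}_\beta$ and study the endomorphism $T=\ad X\comp \ad Y$ of $\lie{g}$. By the inclusion noted above, $T$ maps $\lie{g}_\gamma$ into $\lie{g}_{\gamma+\alpha+\beta}$. Since $\alpha+\beta\neq 0$, iterating $T$ walks through distinct root spaces of the direct sum decomposition, so eventually lands in a trivial weight space; hence $T$ is nilpotent on each $\lie{g}_\gamma$ and has trace $0$. Because $B(X,Y)=\Tr(T)$, this gives $B(\lie{g}_\alpha,\lie{g}_\beta)=0$. For (c), suppose $\alpha\in \Delta$ but $-\alpha\not\in \Delta$. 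Then for every $\beta\in \Delta\cup\{0\}$, either $\beta=-\alpha$ (which is excluded) or $\alpha+\beta\neq 0$, and (a) yields $B(\lie{g}_\alpha,\lie{g}_\beta)=0$. Summing over $\beta$ uses the root space decomposition to conclude $B(\lie{g}_\alpha,\lie{g})=0$, contradicting the non-degeneracy of $B$. So $-\alpha\in \Delta$.

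For (b) and (d) together: part (a) shows that in the root space decomposition $\lie{g}=\lie{h}\oplus\bigoplus_{\alpha\in\Delta}\lie{g}_\alpha$ the only pairings that can be non-zero under $B$ are $\lie{h}\times\lie{h}$ and $\lie{g}_\alpha\times\lie{g}_{-\alpha}$. If $B$ were singular on $\lie{g}_\alpha\times \lie{g}_{-\alpha}$, there would be a non-zero $X\in \lie{g}_\alpha$ with $B(X,\lie{g}_{-\alpha})=0$, and combined with (a) this would give $B(X,\lie{g})=0$, contradicting non-degeneracy of $B$. The same argument with $\alpha=0$ gives non-degeneracy of $B|_{\lie{h}\times\lie{h}}$, which is (d); the existence of $H_\alpha$ such that $B(H_\alpha,H)=\alpha(H)$ then follows from the Riesz-style identification of $\lie{h}$ with $\lie{h}^*$ via the non-degenerate form.

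For (e), I would argue by contradiction: suppose $\operatorname{Span}_\C\Delta\subsetneq \lie{h}^*$. Then there exists a non-zero $H\in \lie{h}$ such that $\alpha(H)=0$ for every $\alpha\in \Delta$. The subtlety here, and the main technical obstacle in the proposition, is that by itself the generalized weight space definition only tells us that $\ad H$ acts on $\lie{g}_\alpha$ as $\alpha(H)\cdot 1$ plus a nilpotent piece, not as a scalar. I would bridge this gap by invoking the (standard) fact that a Cartan subalgebra of a semisimple Lie algebra is abelian and that $\ad H$ is semisimple for $H\in \lie{h}$; together these force $\ad H|_{\lie{g}_\alpha}=\alpha(H)1$. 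Hence $\ad H\equiv 0$ on every root space and on $\lie{h}$, so $H\in Z(\lie{g})$. But the center of a semisimple Lie algebra is zero (an immediate consequence of the absence of non-zero abelian ideals), giving $H=0$, the desired contradiction. Hence $\Delta$ spans $\lie{h}^*$.
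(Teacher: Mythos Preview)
The paper does not give its own proof of this proposition; it simply refers the reader to \cite[Chapter~2]{Knapp2005}. Your argument is the standard one and is correct. Parts (a)--(d) are exactly how the result is usually proved: the nilpotence of $\ad X\comp\ad Y$ on the graded decomposition for (a), and then repeated appeals to the global non-degeneracy of $B$ (Cartan's criterion) for (b)--(d).

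The one point worth flagging is your treatment of (e). Your invocation of the fact that $\ad H$ is semisimple for $H\in\lie{h}$ (and that $\lie{h}$ is abelian) is true for a Cartan subalgebra of a semisimple Lie algebra, but in many developments --- including Knapp's --- these facts are established \emph{alongside} this proposition rather than before it, so there is a risk of circularity depending on the order you adopt. A cleaner route that avoids this and uses only what you have already established: if $\alpha(H)=0$ for all $\alpha\in\Delta$, apply Lie's theorem to the solvable (indeed nilpotent) $\lie{h}$ acting on each $\lie{g}_\alpha$ to put all $\ad H'|_{\lie{g}_\alpha}$ in simultaneous upper-triangular form; the diagonal entries of $\ad H|_{\lie{g}_\alpha}$ are then all $\alpha(H)=0$, so $\ad H|_{\lie{g}_\alpha}$ is strictly upper triangular and $\Tr_{\lie{g}_\alpha}(\ad H\comp\ad H')=0$ for every $H'\in\lie{h}$. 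The same holds on $\lie{g}_0=\lie{h}$ since $\lie{h}$ is nilpotent. Summing gives $B(H,H')=0$ for all $H'\in\lie{h}$, and now your part (d) forces $H=0$. This closes the argument without assuming anything about semisimplicity of $\ad H$.
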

\begin{proof}
	See \cite[Chapter 2]{Knapp2005}
\end{proof}

The following proposition reduces the case of the root space decomposition nicely. 
\begin{proposition}
	If $\alpha\in \Delta,$ then $\dim \lie{g}_\alpha=1.$ Further $n\alpha\not\in \Delta$ for $n\geq 2.$ 
\end{proposition}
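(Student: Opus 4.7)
The plan is to prove both claims simultaneously by constructing an $\mathfrak{sl}_2$-subalgebra attached to $\alpha$ and extracting the constraints from a single trace calculation. First I would use parts (b) and (d) of the preceding Proposition to select $E \in \lie{g}_\alpha$ and $F \in \lie{g}_{-\alpha}$ with $B(E, F) = 1$; the invariance property $B([H, E], F) + B(E, [H, F]) = 0$ together with $[\lie{g}_\alpha, \lie{g}_{-\alpha}] \subseteq \lie{g}_0 = \lie{h}$ forces $[E, F] = H_\alpha$, where $H_\alpha$ is the dual element from (d). Assuming for the moment that $\alpha(H_\alpha) \neq 0$ (the technical heart of the proof, addressed below), rescale by setting $H_\alpha' = \tfrac{2}{\alpha(H_\alpha)} H_\alpha$ and $F' = \tfrac{2}{\alpha(H_\alpha)} F$ to obtain an $\mathfrak{sl}_2$-triple with $[H_\alpha', E] = 2E$, $[H_\alpha', F'] = -2F'$, and $[E, F'] = H_\alpha'$.

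Next I would consider the subspace
\[ V = \C F' \;\oplus\; \lie{h} \;\oplus\; \bigoplus_{k \geq 1} \lie{g}_{k\alpha}. \]
By the bracket rules $[\lie{g}_\beta, \lie{g}_\gamma] \subseteq \lie{g}_{\beta + \gamma}$, the maps $\ad(E)$, $\ad(F')$ and $\ad(H_\alpha')$ all stabilize $V$: $\ad(E)$ sends $F' \mapsto H_\alpha' \in \lie{h}$, $\lie{h} \to \lie{g}_\alpha$, and $\lie{g}_{k\alpha} \to \lie{g}_{(k+1)\alpha}$, and symmetrically for $\ad(F')$. Hence $\ad(H_\alpha')|_V = [\ad(E)|_V, \ad(F')|_V]$ has trace zero. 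Computing the trace directly from the weight decomposition (using that $\lie{h}$ is abelian, so contributes weight zero), one obtains
\[ 0 = -2 + \sum_{k \geq 1} 2k \dim \lie{g}_{k\alpha}, \qquad \text{i.e.} \qquad \sum_{k \geq 1} k \dim \lie{g}_{k\alpha} = 1. \]
Since each summand is a nonnegative integer and $\dim \lie{g}_\alpha \geq 1$, the only solution is $\dim \lie{g}_\alpha = 1$ and $\lie{g}_{k\alpha} = 0$ for all $k \geq 2$, which are precisely the two assertions of the proposition.

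The main obstacle is the nonvanishing $\alpha(H_\alpha) \neq 0$, assumed above. The plan here is a contradiction argument: if $\alpha(H_\alpha) = 0$, then $[H_\alpha, E] = \alpha(H_\alpha) E = 0 = [H_\alpha, F]$ and $[E, F] = H_\alpha$, so $\mathfrak{s} = \C E + \C F + \C H_\alpha$ is a three-dimensional solvable Lie subalgebra with $H_\alpha$ central. Applying Lie's Theorem to $\ad: \mathfrak{s} \to \End(\lie{g})$ produces a basis of $\lie{g}$ in which every element of $\ad(\mathfrak{s})$ is upper triangular, so $\ad(H_\alpha) = [\ad(E), \ad(F)]$ is strictly upper triangular and hence nilpotent. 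Combining this with the semisimplicity of $\ad(H_\alpha)$ (which holds because the Cartan subalgebra of a semisimple Lie algebra acts semisimply—a standard fact needing separate justification via the Jordan decomposition and the non-degeneracy of $B$) forces $\ad(H_\alpha) = 0$, placing $H_\alpha$ in the center of $\lie{g}$, hence $H_\alpha = 0$ by semisimplicity. This contradicts the non-degeneracy of $B|_{\lie{h}\times\lie{h}}$ once we show that $\alpha \neq 0$ implies $H_\alpha \neq 0$. The delicate point is thus the semisimplicity of $\ad|_\lie{h}$; if this were unavailable, one could instead exploit the Jordan decomposition of $H_\alpha$ inside $\lie{g}$ directly or refine the trace computation on nested invariant subspaces $V$ to derive a contradiction without appealing to the semisimple action.
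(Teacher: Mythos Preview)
Your argument is the standard one and is essentially what the paper defers to: the paper gives no proof of its own, simply citing Knapp, and your trace computation on the $\mathfrak{sl}_2$-invariant subspace $V=\C F'\oplus\lie{h}\oplus\bigoplus_{k\ge 1}\lie{g}_{k\alpha}$ is exactly the textbook route found there.

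One genuine gap to flag: the invariance of $V$ under $\ad(F')$ is not automatic from the bracket rule $[\lie{g}_\beta,\lie{g}_\gamma]\subseteq\lie{g}_{\beta+\gamma}$ alone. You need $[\lie{h},F']\subseteq \C F'$, but the bracket rule only gives $[\lie{h},F']\subseteq\lie{g}_{-\alpha}$, and at this stage of the paper's development the root spaces are \emph{generalized} weight spaces, so $F'$ is a priori only a generalized eigenvector for $\lie{h}$. You correctly identify the semisimplicity of $\ad|_{\lie{h}}$ as ``the delicate point'' in your final paragraph, but you invoke it only for the proof that $\alpha(H_\alpha)\neq 0$; it is equally needed (and for the same reason) to make your main invariance claim go through. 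In Knapp's ordering this is handled beforehand: one first proves that a Cartan subalgebra of a semisimple Lie algebra is abelian and consists of $\ad$-semisimple elements, whence every root space is an honest eigenspace and $[H,F']=-\alpha(H)F'$ for all $H\in\lie{h}$. Alternatively, without that prerequisite, you could simply choose $F$ from the outset to be a simultaneous eigenvector for $\lie{h}$ inside $\lie{g}_{-\alpha}$ (such a vector exists by Lie's theorem applied to the nilpotent algebra $\lie{h}$), then pick $E\in\lie{g}_\alpha$ with $B(E,F)=1$ via non-degeneracy; the identity $[E,F]=H_\alpha$ then follows from $B(H,[E,F])=B([H,E],F)=-B(E,[H,F])=\alpha(H)B(E,F)=\alpha(H)$, and your trace argument proceeds unchanged.
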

\begin{proof}
	See \cite[Chapter 2]{Knapp2005}
\end{proof}

All of this together shows that $\Delta(\lie{g},\lie{h})$ is an abstract, reduced root system. We can thus define a notion of \textit{positivity}. 
\begin{definition}
	Let $V$ be a finite dimensional inner product space. Fix a spanning set $\varphi_1,...,\varphi_m.$ Then a vector $\varphi$ is \textbf{positive} (denoted $\varphi>0$) if there exists an integer $k\geq 1$ such that $\ip{\varphi,\varphi_i}=0$ for $1\leq i\leq k-1$ and $\ip{\varphi,\varphi_i}>0$ for $i\geq k.$ 
\end{definition}

\begin{lemma}\label{Positivity}
	If $\varphi\in \Delta,$ the one of $\varphi$ or $-\varphi$ is positive. 
\end{lemma}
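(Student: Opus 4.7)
The plan is to exploit the fact that a nonzero vector cannot be orthogonal to every element of a spanning set, together with the sign-symmetry of the definition of positivity. Every root $\varphi \in \Delta$ is nonzero by definition, so there must exist some index $i$ for which $\langle \varphi, \varphi_i \rangle \neq 0$. Between $\varphi$ and $-\varphi$ the inner products merely switch sign, so one of the two will satisfy the positivity condition.

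More concretely, first I would observe that if $\langle \varphi, \varphi_i \rangle = 0$ for every $i = 1, \ldots, m$, then $\varphi$ would be orthogonal to $\mathrm{Span}\{\varphi_1, \ldots, \varphi_m\} = V$, forcing $\varphi = 0$, contradicting $\varphi \in \Delta$. Hence I may let $k$ be the smallest index in $\{1, \ldots, m\}$ for which $\langle \varphi, \varphi_k \rangle \neq 0$. By minimality, $\langle \varphi, \varphi_i \rangle = 0$ for all $i < k$, and the same equalities hold with $\varphi$ replaced by $-\varphi$.

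Now split into two cases. If $\langle \varphi, \varphi_k \rangle > 0$, then $\varphi$ meets the positivity criterion at index $k$, so $\varphi > 0$. Otherwise $\langle \varphi, \varphi_k \rangle < 0$, in which case $\langle -\varphi, \varphi_k \rangle > 0$, and since $\langle -\varphi, \varphi_i \rangle = 0$ for $i < k$ as well, we conclude $-\varphi > 0$. In either case, exactly one of $\varphi, -\varphi$ is positive (both cannot hold, since that would force $\langle \varphi, \varphi_k \rangle$ to be both positive and negative).

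The only genuine subtlety is confirming that the inner product on $V = \lie{h}^*$ is nondegenerate so that ``orthogonal to every $\varphi_i$'' really does force $\varphi = 0$; this is guaranteed by the preceding proposition, which says that the Killing form restricts to a nondegenerate form on $\lie{h} \times \lie{h}$ and therefore induces a nondegenerate form on $\lie{h}^*$. Given that, the argument is essentially mechanical and there is no real obstacle; the lemma is a direct consequence of the lexicographic shape of the definition.
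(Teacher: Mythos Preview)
Your argument is correct and is exactly the standard direct verification that a lexicographic ordering determined by a spanning set totally orders the nonzero vectors. The paper does not actually prove this lemma; it simply defers to \cite[Chapter 7]{Lorenz2018}, so your proposal supplies strictly more content than the paper itself.

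One small caveat worth noting: the paper's written definition of positivity asks that $\ip{\varphi,\varphi_i}>0$ for \emph{all} $i\geq k$, which is almost certainly a typo---under that literal reading the lemma is false (e.g.\ take $\varphi_1=e_1$, $\varphi_2=e_2$ in $\R^2$ and $\varphi=e_1-e_2$). You have, correctly, read the definition as the intended lexicographic one, checking only the sign of $\ip{\varphi,\varphi_k}$ at the \emph{first} index $k$ where the inner product is nonzero; with that interpretation your proof is complete.
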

\begin{proof}
	See \cite[Chapter 7]{Lorenz2018}.  
\end{proof}

\begin{definition}
	A \textbf{basis} $\Pi$ for $\Delta$ is a choice of of elements such that \begin{enumerate}
		\item $\Pi$ is a basis of $\lie{h}^*.$
		\item For any $\beta\in \Delta,$ we can write $\beta=\sum n_i\alpha_i$ with $\alpha_i\in\Pi$ and $n_i\in \Z$ all positive or negative by Lemma \ref{Positivity}. 
	\end{enumerate}
	We call elements in $\Pi$ simple, and normally say choose a \textit{simple system} for $\Delta.$ 
\end{definition}

\begin{definition}
	Let $\alpha,\beta\in \lie{h}^*.$ We define an inner product on $\lie{h}^*$ by $(\alpha,\beta)=2\frac{\ip{\alpha,\beta}}{\ip{\beta,\beta}}=2\frac{||\alpha||}{||\beta||} \cos \theta$ where $\theta$ is the angle between the functionals. Then the \textbf{reflection} of $\beta$ by $\alpha$, denoted $s_\alpha\beta$ is defined by \[ s_\alpha \beta=\beta-(\beta,\alpha)\alpha\]
	The \textbf{Weyl group} is \[ W(\lie{g}):=\langle s_\alpha:\alpha\in \Delta\rangle \]
\end{definition}

\begin{theorem}
	$W(\lie{g})$ acts transitively on the set of simple systems for $\Delta.$ 
\end{theorem}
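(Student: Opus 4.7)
The plan is to reduce the statement to a question about positive systems and then induct on a natural ``distance'' between them. The first step is to establish a bijection between simple systems and positive systems: given a simple system $\Pi$, the set $\Delta^+(\Pi) = \{\sum n_i \alpha_i : n_i \geq 0, \alpha_i \in \Pi\} \cap \Delta$ is a positive system (closed under addition when the sum is a root, and $\Delta = \Delta^+ \sqcup -\Delta^+$). Conversely, given a positive system $\Delta^+$, the set of \emph{indecomposable} positive roots (those that cannot be written as a sum of two elements of $\Delta^+$) forms a simple system. Thus it suffices to prove that $W(\lie{g})$ acts transitively on positive systems.

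The key lemma I would establish next is the following: if $\Pi = \{\alpha_1,\ldots,\alpha_\ell\}$ is a simple system with corresponding positive system $\Delta^+$, and $\alpha \in \Pi$ is a simple root, then $s_\alpha$ permutes $\Delta^+ \setminus \{\alpha\}$ and sends $\alpha$ to $-\alpha$. This is proven as follows: any $\beta \in \Delta^+ \setminus \{\alpha\}$ can be written $\beta = \sum n_i \alpha_i$ with all $n_i \geq 0$ and at least one $n_j > 0$ for $j \neq $ the index of $\alpha$; applying $s_\alpha$ changes only the coefficient of $\alpha$, so the resulting expression still has a positive coefficient on some $\alpha_j$, forcing $s_\alpha \beta \in \Delta^+$ by the sign-rigidity property of simple systems (all coefficients of the same sign). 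Moreover $s_\alpha \beta \neq \alpha$ since $s_\alpha$ is an involution and $s_\alpha \alpha = -\alpha$.

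With this lemma in hand, I would run the following induction. Fix two positive systems $\Delta^+$ and $(\Delta')^+$, and set
\[
n(\Delta^+, (\Delta')^+) := |\Delta^+ \cap (-(\Delta')^+)|,
\]
the number of roots that flip sign between the two systems. If $n = 0$ then $\Delta^+ = (\Delta')^+$ and we are done. Otherwise, there exists a simple root $\alpha \in \Pi$ (the simple system for $\Delta^+$) with $\alpha \in -(\Delta')^+$, for if every simple root of $\Pi$ lay in $(\Delta')^+$ then the entirety of $\Delta^+$ would lie in $(\Delta')^+$ (as nonnegative combinations of elements of $(\Delta')^+$ that are in $\Delta$ remain in $(\Delta')^+$), contradicting $n > 0$. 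Apply $s_\alpha$ to $\Delta^+$: by the key lemma, $s_\alpha(\Delta^+)$ differs from $\Delta^+$ only by replacing $\alpha$ with $-\alpha$. Since $\alpha \in -(\Delta')^+$ but $-\alpha \in (\Delta')^+$, this decreases the count by exactly one: $n(s_\alpha \Delta^+, (\Delta')^+) = n(\Delta^+, (\Delta')^+) - 1$. By induction there is a product of simple reflections $w'$ taking $s_\alpha \Delta^+$ to $(\Delta')^+$, and then $w = w' s_\alpha$ does the job.

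The main obstacle is the key lemma, specifically verifying that a simple reflection preserves positivity of all non-simple positive roots. The proof hinges on the rigidity property built into the definition of a simple system --- namely that each root is expressed in $\Pi$ with coefficients of a single sign --- together with the fact that $s_\alpha$ only alters the coefficient of $\alpha$ itself. A subtler point, which I would verify before invoking the lemma, is that the simple system extracted from $s_\alpha \Delta^+$ is exactly $s_\alpha \Pi$, so that the induction can proceed at each stage with the correct simple root to reflect by; this comes for free from the characterization of simple roots as the indecomposable positive ones, since $s_\alpha$ is a linear isomorphism preserving $\Delta$ and the decomposability relation.
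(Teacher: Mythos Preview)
Your argument is correct and is precisely the standard proof: reduce to positive systems, use the key lemma that a simple reflection $s_\alpha$ permutes $\Delta^+\setminus\{\alpha\}$, and induct on $|\Delta^+\cap(-(\Delta')^+)|$. The paper itself gives no proof and simply refers to \cite[Chapter 2, Section 6]{Knapp2005}; what you have written is essentially the argument found there. One small point worth tightening: your justification that ``if every simple root of $\Pi$ lay in $(\Delta')^+$ then all of $\Delta^+$ would'' implicitly uses the lemma that every positive root is a sum of simple roots obtainable by adding one simple root at a time through a chain of roots, together with closure of $(\Delta')^+$ under root-addition --- you should state this explicitly rather than appeal to ``nonnegative combinations'' in one step.
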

\begin{proof}
	See \cite[Chapter 2, Section 6]{Knapp2005}
\end{proof}
This final theorem eases the concern that picking positive elements is arbitrary and could possibly lead to different results. 

Now, let $\alpha\in \Delta$ and put $\lie{h}^\circ=\lie{h}-\bigcup_{\alpha\in \Delta} \alpha^\perp.$ The connected components of $\lie{h}^\circ$ are called \textbf{Weyl chambers} and given a choice of simple system $\Pi,$ there is a natural choice of Weyl chamber associated to $\Pi$ called the \textbf{positive Weyl Chamber} \[ \script{C}(\Pi)=\{ \alpha\in \lie{h}^*: (\alpha,\beta)>0, \forall \beta\in \Delta^+\}=\{ \alpha\in \lie{h}^*: (\alpha,\beta)>0, \forall \beta\in \Pi\}\]
Associated to any $\Delta(\lie{g},\lie{h})$ is a lattice $\Lambda=\{\alpha\in \lie{h}^*: (\alpha,\beta)\in \Z, \forall \beta\in \Delta\}.$ This is the \textbf{weight lattice} associated to $\Delta.$
\begin{definition}
	An element $\alpha\in \lie{h}^*$ is called \textbf{dominant and algebraically integral} if $\alpha\in \Lambda\cap \overline{\script{C}(\Pi)}.$ 
\end{definition}

\subsection{Universal Enveloping Algebra} 
Lie algebras are easier to deal with than Lie groups, but still the fact that they are non-associative makes the situation a bit difficult. What we would like is to fine an \textit{associative algebra} $A$ such that the representation theory of $\lie{g}$ is the same as the representation theory of $A$ in some semi-canonical sense. As a first guess, we could take the tensor algebra. Let $\lie{g}$ be a complex Lie algebra assumed to be finite dimensional (this construction works for the infinite dimensional case as well).  Let $T^\bullet(\lie{g})=\bigoplus_{\N} \lie{g}^{\tensor k}$ denote the tensor algebra of $\lie{g}.$ This does not force the resulting map $A\to \End(V)$ to be a Lie algebra homomorphism and thus is not the correct choice. Therefore, let \[ U(\lie{g})=T^\bullet(\lie{g})/\ip{X\tensor Y-Y\tensor X-[X,Y]} \]
with $X,Y\in \lie{g}.$ This is the \textbf{universal enveloping algebra} of $\lie{g}.$ Then the canonical map $i:\lie{g}\to U(\lie{g})$ is a lie algebra homomorphism. It is universal in the sense that given any unital associative algebra $A$ and a Lie algebra homomorphism $\lie{g}\to A$ there is a unique Lie algebra homomorphism so that the following diagram commutes \[ \begin{tikzcd}
U(\lie{g}) \arrow[r, "\hat{\varphi}", dotted] & A \\
\lie{g} \arrow[ru, "\varphi"'] \arrow[u, "i"] &  
\end{tikzcd}\]
The following theorem gives an algebraic description of the universal enveloping algebra. 

\begin{theorem}[Poincar\'e-Birkhoff-Witt] 
	Let $\lie{g}$ be a complex Lie algebra with basis $\{X_i\}.$ Then the monomials \[ X_1^{p_1}...X_n^{p_n}\]
	form a basis for $U(\lie{g)}$. If in addition we assume $\lie{g}$ is semisimple, then 
	Let $\{X_{-\alpha},H_{\alpha},X_{\alpha}\}$ be a basis for $\lie{g}$ with respect to a set of roots $\Delta(\lie{g},\lie{h})$ and a choice of simple system $\Pi.$ Then the monomials \[ X_{-\alpha_1}^{i_1}...X_{-\alpha_p}^{i_p}H_{\alpha_1}^{j_1}...H_{\alpha_q}^{j_q}X_{\alpha_1}^{k_1}...X_{\alpha_r}^{k_r}\] 
	form a basis for $U(\lie{g}).$  
\end{theorem}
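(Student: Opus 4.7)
The plan is to prove the general statement first and then deduce the semisimple version as an immediate corollary by selecting the appropriate ordered basis. I would split the work into a \emph{spanning} step and a \emph{linear independence} step, the latter being the real content of the theorem.

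For spanning, I would fix the total ordering $X_1 < X_2 < \cdots < X_n$ on the basis and call a monomial $X_{i_1} X_{i_2} \cdots X_{i_k}$ in $U(\lie{g})$ \emph{standard} if $i_1 \leq i_2 \leq \cdots \leq i_k$. Every element of $U(\lie{g})$ is by construction a linear combination of monomials in the $X_i$ (not necessarily standard). I would then show by double induction, first on total degree $k$ and within fixed degree on the number of inversions, that every monomial can be rewritten as a $\C$-linear combination of standard monomials. The inductive step uses the defining relation $X_i X_j = X_j X_i + [X_i, X_j]$: swapping an adjacent pair in the wrong order produces one monomial with one fewer inversion at the same degree, plus one monomial of strictly lower degree (since $[X_i,X_j] \in \lie{g}$ expands into a linear combination of single basis elements). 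Both terms fall under the inductive hypothesis, so the collection of standard monomials spans.

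The essential step is linear independence, and for this I would construct an auxiliary representation of $U(\lie{g})$ on the symmetric algebra $S(\lie{g})$, viewed concretely as the polynomial algebra $\C[z_1,\ldots,z_n]$ with $z_i$ corresponding to $X_i$. The idea is to define, by induction on degree, a linear action $\rho \colon \lie{g} \to \End_\C(S(\lie{g}))$ with the property that $\rho(X_i)$ acts on a standard monomial $z_{i_1}\cdots z_{i_k}$ (with $i_1 \leq \cdots \leq i_k$) by $z_i z_{i_1}\cdots z_{i_k}$ whenever $i \leq i_1$, and otherwise by an expression dictated by forcing the commutation relation $\rho(X_i)\rho(X_j) - \rho(X_j)\rho(X_i) = \rho([X_i, X_j])$ to hold. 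One then has to verify, by a careful induction on degree, that this definition is consistent and that $\rho$ is indeed a Lie algebra homomorphism; the Jacobi identity in $\lie{g}$ is precisely what is needed to discharge the compatibility condition. By the universal property of $U(\lie{g})$, $\rho$ extends to an algebra homomorphism $\tilde{\rho} \colon U(\lie{g}) \to \End_\C(S(\lie{g}))$. Evaluating a standard monomial $X_1^{p_1}\cdots X_n^{p_n}$ on the constant polynomial $1 \in S(\lie{g})$ yields (up to a nonzero combinatorial scalar) the monomial $z_1^{p_1}\cdots z_n^{p_n}$, and these are manifestly linearly independent in $S(\lie{g})$. Therefore the standard monomials must themselves be linearly independent in $U(\lie{g})$.

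The main obstacle will be the inductive construction of $\rho$ and the verification of consistency; defining $\rho(X_i)$ on non-standard monomials forces one to prescribe its value by a recursive formula, and one must check that different rewriting paths give the same answer. This is exactly where the Jacobi identity enters, and it is the delicate point of the proof. Once $\tilde{\rho}$ exists, independence is essentially immediate.

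Finally, for the semisimple statement, I would simply order the basis $\{X_{-\alpha_1},\ldots,X_{-\alpha_p},H_{\alpha_1},\ldots,H_{\alpha_q},X_{\alpha_1},\ldots,X_{\alpha_r}\}$ arising from the root space decomposition (negative roots first, then Cartan, then positive roots) and apply the first statement. The asserted monomials are exactly the standard ones for this choice of ordered basis, so they form a basis of $U(\lie{g})$ with no additional work required.
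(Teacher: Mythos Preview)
The paper does not actually prove Poincar\'e--Birkhoff--Witt; it states the theorem as background and immediately uses it (e.g., for the corollary that $i \colon \lie{g} \to U(\lie{g})$ is injective, and later in the proof of the Borel--Weil theorem). So there is nothing to compare against directly.

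That said, your proposal is correct and is essentially the classical proof one finds in standard references (e.g., Knapp or Humphreys): the spanning argument by straightening via induction on degree and number of inversions, followed by linear independence via an explicit $\lie{g}$-module structure on $S(\lie{g}) \cong \C[z_1,\ldots,z_n]$. You have correctly identified the delicate point, namely the well-definedness of $\rho$ and the role of the Jacobi identity in the consistency check. One small remark: in the usual construction, applying $\tilde{\rho}(X_1^{p_1}\cdots X_n^{p_n})$ to $1$ gives exactly $z_1^{p_1}\cdots z_n^{p_n}$, not merely a nonzero scalar multiple of it; there is no combinatorial factor because each $\rho(X_i)$ acts by multiplication by $z_i$ when applied in increasing order. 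Your deduction of the semisimple statement by choosing the ordered basis compatible with the triangular decomposition $\lie{n}^- \oplus \lie{h} \oplus \lie{n}$ is exactly right and requires no further argument.
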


\begin{corollary}
The canonical map $i:\lie{g}\to U(\lie{g})$ is an injective Lie algebra homomorphism.
\end{corollary}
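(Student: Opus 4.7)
The plan is to extract both claims directly from the Poincar\'e-Birkhoff-Witt theorem just stated. First I will check that $i:\lie{g}\to U(\lie{g})$ is a morphism of Lie algebras. By construction $i$ factors as the composition of the inclusion $\lie{g}\hookrightarrow T^\bullet(\lie{g})$ into degree one with the quotient projection onto $U(\lie{g})$, so $i$ is $\C$-linear. The bracket-preservation $i([X,Y])=i(X)i(Y)-i(Y)i(X)$ is literally the vanishing in $U(\lie{g})$ of the generator $X\tensor Y-Y\tensor X-[X,Y]$ of the defining ideal $\ip{X\tensor Y-Y\tensor X-[X,Y]}$, so it holds tautologically.

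For injectivity, I would fix an ordered basis $\{X_1,\dots,X_n\}$ of $\lie{g}$ and feed it into PBW. The theorem asserts that the ordered monomials $X_1^{p_1}\cdots X_n^{p_n}$ with $(p_1,\dots,p_n)\in\N^n$ form a $\C$-basis of $U(\lie{g})$. Specializing to tuples with a single $p_j=1$ and all other entries $0$ singles out the elements $i(X_1),\dots,i(X_n)$, which are therefore part of this basis and in particular $\C$-linearly independent in $U(\lie{g})$. Consequently, if $X=\sum_j c_j X_j\in\ker i$, then $\sum_j c_j\, i(X_j)=0$ forces every $c_j=0$, whence $X=0$ and $\ker i=\{0\}$.

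Putting the two paragraphs together yields the corollary. The only real obstacle would be justifying PBW itself, which the paper cites as a standing hypothesis here; granted that, the corollary is simply a direct reading of the basis statement, and no additional work is required.
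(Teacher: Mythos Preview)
Your proof is correct and matches the paper's intended argument: the paper states this as an immediate corollary of Poincar\'e--Birkhoff--Witt without further elaboration, and your reasoning---that the images $i(X_1),\dots,i(X_n)$ appear among the PBW basis monomials and are therefore linearly independent---is exactly the standard way to read off injectivity from PBW. The Lie algebra homomorphism part was already noted in the paper just before the definition of $U(\lie{g})$'s universal property, so nothing further is needed.
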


\begin{proposition}
	Every representation of $\lie{g}$ extends to a representation of $U(\lie{g})$ and every $U(\lie{g})$-module descends to a representation of $\lie{g}.$
 \end{proposition}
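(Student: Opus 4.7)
The plan is to use the universal property of $U(\lie{g})$ directly, which was stated just above the proposition. Both directions are essentially unpacking definitions, with the key point being that $\End(V)$ carries two compatible structures: it is an associative algebra, and it is a Lie algebra under the commutator bracket $[A,B] = AB - BA$.

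First, suppose $(\pi,V)$ is a representation of $\lie{g}$, i.e., a Lie algebra homomorphism $\pi: \lie{g} \to \End(V)$ where the target is viewed as a Lie algebra via the commutator. I would apply the universal property with the unital associative algebra $A = \End(V)$: the map $\pi$ is a Lie algebra homomorphism into $A$ (with its commutator bracket), so there exists a unique unital associative algebra homomorphism $\hat{\pi}: U(\lie{g}) \to \End(V)$ such that $\hat{\pi} \circ i = \pi$. This $\hat{\pi}$ is precisely a $U(\lie{g})$-module structure on $V$ extending $\pi$.

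Conversely, suppose $V$ is a $U(\lie{g})$-module, given by a unital associative algebra homomorphism $\rho: U(\lie{g}) \to \End(V)$. I would define $\pi := \rho \circ i: \lie{g} \to \End(V)$ and verify this is a Lie algebra homomorphism. Linearity is immediate; the bracket condition requires checking that $\pi([X,Y]) = \pi(X)\pi(Y) - \pi(Y)\pi(X)$ for $X,Y \in \lie{g}$. This is where the defining relations of $U(\lie{g})$ enter: in $U(\lie{g})$ we have $i(X)i(Y) - i(Y)i(X) = i([X,Y])$ by construction (this is exactly the ideal we quotiented by in defining $U(\lie{g})$). Applying the algebra homomorphism $\rho$ to both sides yields $\pi(X)\pi(Y) - \pi(Y)\pi(X) = \pi([X,Y])$, as desired.

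Finally, I would note that these two constructions are mutually inverse: starting from a representation $\pi$, extending to $\hat{\pi}$, and restricting via $i$ recovers $\pi$ by the commutativity of the defining triangle, and the uniqueness clause of the universal property ensures that starting from $\rho$, restricting to $\pi = \rho \circ i$, and then extending recovers the original $\rho$. There is no real obstacle here; the content of the statement is essentially the universal property packaged as an equivalence between representations of $\lie{g}$ and modules over $U(\lie{g})$, and the only substantive verification is the bracket computation, which is immediate from the defining relations of the enveloping algebra.
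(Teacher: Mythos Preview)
Your proof is correct and follows essentially the same route as the paper. The only cosmetic difference is that for the direction $\lie{g}\text{-rep} \to U(\lie{g})\text{-module}$ the paper extends $\pi$ to $T^\bullet(\lie{g})$ and checks that the defining ideal lies in the kernel, whereas you invoke the universal property directly; these are the same argument, and your version is slightly cleaner since the universal property was already stated.
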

\begin{proof}
	The inclusion of $U(\lie{g})$-modules into $\lie{g}$-representations is done by the corollary above. Therefore, it suffices to show that every $\lie{g}$-representation extends to an associative algebra homomorphism $U(\lie{g})\to \End(V).$ Any representation $\lie{g}\to \End(V)$ can be extended to an algebra homomorphism $T^\bullet(\lie{g})\to \End(V).$ The kernel of this map contains the ideal defining $U(\lie{g})$ and therefore descends to a map $U(\lie{g})\to \End(V).$ 
\end{proof} 

We want to give a more analytic interpretation of the universal enveloping algebra. Let $G$ be a semisimple (or reductive) lie group with Lie algebra $\lie{g}.$ Then $G$ acts on the space of smooth functions $C^\infty(G)$ in two ways \begin{align*} L(g)f(x)=f(g^{-1}x)  && R(g)f(x)=f(xg)
\end{align*}
An easy consequence of the definitions the differentiated action $d\lambda$ commutes with the differentiated action $d\rho.$ Therefore \[ L(g) dR(X)=dR(X)L(g)\]
for all $X\in \lie{g}$ and $g\in G.$ This exhibits $\lie{g}$ as \textit{left invariant differential operators} on $G.$ In fact, it is a faithful representation $\lie{g}\to \End(C^\infty(G)).$ We can extend this action to $U(\lie{g})$ and thereby realizing $U(\lie{g})$ as a ring of left invariant differential operators on $G.$ As it turns out, much of the representation theory of $G$ is determined by how certain differential operators (namely the Laplacian or Casimir element) act on representation. If the representation is irreducible for instance, then the center $Z(\lie{g})$ of the universal enveloping algebra acts by scalars. This parametrized the irreducible representations of $G.$ 

\subsection{Verma Modules}    
Let $\lie{g}$ be a complex semisimple lie algebra with cartan subalgebra $\lie{h}$ and root system $\Delta:=\Delta(\lie{g},\lie{h}).$ Let $\Delta^+$ denote the set of positive roots and $\Pi$ a system of simple ones. 

It is known that the finite dimensional representation theory of semisimple lie algebras is semisimple. In the case of complex representations, we have that for every finite dimensional representation $\varphi:\lie{g}\to \lie{gl}(V)=\End_\C(V),$ we can decompose $V=\bigoplus V_i$ where each $V_i$ is irreducible. Therefore we want to classify all irreducible finite dimensional representations and this will yield \textit{all} finite dimensional representations of $\lie{g}.$ We have the following theorem which does precisely this. 

\begin{theorem}[Theorem of Highest Weights]
	Let $\lie{g}$ be a complex semisimple Lie algebra, $\lie{h}$ a Cartan subalgebra and $\Delta(\lie{g},\lie{h})$ the roots with respect to $\lie{h}.$ Let $\script{C}^+$ be the positive Weyl chamber. Then the irreducible, finite-dimensional representations of $\lie{g}$ stand in one-one correspondence with the set of algebraically integral, dominant weights. The correspondence is given in one direction by $V\mapsto \lambda$ its highest weight.   
\end{theorem}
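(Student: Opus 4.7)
The plan is to prove both directions of the correspondence using the machinery of Verma modules together with the $\lie{sl}_2$-representation theory sitting inside $\lie{g}$. First, given any irreducible finite-dimensional representation $(\pi,V)$ of $\lie{g}$, I would restrict to $\lie{h}$ and invoke the weight space decomposition: $V=\bigoplus_{\mu\in\Lambda(\lie{h},\pi)} V_\mu$. Since the positive roots $\Delta^+$ give a partial order on $\lie{h}^*$ via $\mu\leq \nu\iff \nu-\mu\in\sum_{\alpha\in\Pi}\Z_{\geq 0}\alpha$, and $V$ is finite-dimensional, there exists a maximal weight $\lambda$. Any nonzero $v_\lambda\in V_\lambda$ must satisfy $\pi(X_\alpha)v_\lambda=0$ for every $\alpha\in\Delta^+$ (else we would produce a weight vector of weight $\lambda+\alpha$). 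Such a $v_\lambda$ is called a highest weight vector. By irreducibility, $V=U(\lie{g})\cdot v_\lambda$, and by PBW applied in the order $X_{-\alpha}$'s, $H$'s, $X_\alpha$'s, we get $V=U(\lie{n}^-)\cdot v_\lambda$, so the $\lambda$-weight space is one-dimensional and $\lambda$ determines $V$ up to isomorphism.

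Next I would show $\lambda$ is dominant and algebraically integral. For each simple root $\alpha_i$, the triple $\{X_{\alpha_i},H_{\alpha_i},X_{-\alpha_i}\}$ spans a copy of $\lie{sl}_2$ inside $\lie{g}$, and $v_\lambda$ generates a finite-dimensional $\lie{sl}_2$-submodule on which $H_{\alpha_i}$ acts by $\lambda(H_{\alpha_i})=(\lambda,\alpha_i)$. The classification of finite-dimensional irreducible $\lie{sl}_2$-modules forces $(\lambda,\alpha_i)\in \Z_{\geq 0}$ for every simple $\alpha_i$, which is precisely the condition $\lambda\in\Lambda\cap\overline{\script{C}(\Pi)}$. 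This establishes injectivity of $V\mapsto\lambda$ and pins the image inside the dominant integral cone.

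For surjectivity, I would construct, for each dominant integral $\lambda$, the Verma module
\[ M(\lambda)=U(\lie{g})\tensor_{U(\lie{b})}\C_\lambda, \qquad \lie{b}=\lie{h}\oplus\bigoplus_{\alpha\in\Delta^+}\lie{g}_\alpha, \]
where $\C_\lambda$ is the one-dimensional $\lie{b}$-module on which $\lie{h}$ acts by $\lambda$ and $\lie{n}^+=\bigoplus_{\alpha\in\Delta^+}\lie{g}_\alpha$ acts by zero. By PBW, $M(\lambda)\cong U(\lie{n}^-)$ as an $\lie{n}^-$-module, so its weights are $\lambda-\sum n_i\alpha_i$ with $n_i\geq 0$, each of finite multiplicity. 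Every proper submodule of $M(\lambda)$ misses the $\lambda$-weight line, so the sum of all proper submodules is itself proper; call this $N(\lambda)$. Then $L(\lambda):=M(\lambda)/N(\lambda)$ is the unique irreducible quotient and is a highest weight module with highest weight $\lambda$.

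The main obstacle, and the real heart of the proof, is showing that $L(\lambda)$ is finite-dimensional when $\lambda$ is dominant integral. The strategy I would use is the argument of Harish-Chandra and Serre: for each simple $\alpha_i$, the vector $X_{-\alpha_i}^{(\lambda,\alpha_i)+1}v_\lambda$ lies in the kernel of every $X_{\alpha_j}$ (verified via a commutator identity inside the $\lie{sl}_2$-triple $\{X_{\pm\alpha_i},H_{\alpha_i}\}$, using $(\lambda,\alpha_i)\in\Z_{\geq 0}$), hence generates a proper submodule of $M(\lambda)$ and vanishes in $L(\lambda)$. Thus $L(\lambda)$ is locally $\lie{sl}_2$-finite for each simple root, and a Weyl-group orbit argument (using that $W(\lie{g})$ is generated by simple reflections and permutes the set of weights of $L(\lambda)$) shows the set of weights is $W$-stable, bounded, and hence finite. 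Combined with the finite multiplicity of each weight space inherited from $M(\lambda)$, this proves $\dim L(\lambda)<\infty$, completing the construction $\lambda\mapsto L(\lambda)$ inverse to $V\mapsto\lambda$.
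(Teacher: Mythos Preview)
Your proposal is correct and follows essentially the same Verma-module route that the paper sketches: extract a highest weight from a finite-dimensional irreducible, verify dominance and integrality via the embedded $\lie{sl}_2$-triples, then construct $L(\lambda)$ as the unique irreducible quotient of the Verma module and prove it is finite-dimensional. The paper in fact gives fewer details than you do---it states the Verma module construction (with the $\delta$-shifted convention $V(\lambda)=U(\lie{g})\tensor_{U(\lie{b})}\C_{\lambda-\delta}$ rather than your unshifted $M(\lambda)$) and defers the finite-dimensionality step entirely to \cite[Chapter V, Section 3]{Knapp2005}, whereas you outline the Harish-Chandra/Serre argument explicitly.
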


 The difficult step in the proof of this theorem is the construction of the correspondence in the $``\leftarrow"$ direction. To do this, we must build finite dimensional irreducible representations which have highest weight $\lambda.$ These are seen as quotients of Verma modules (to be defined below), which are infinite dimensional representations of $\lie{g}$ that are universal in some sense(see Proposition \ref{VermaProp}). 
 
The setup to the construction of such representations makes use of the root space decomposition of $\lie{g}.$ If $\alpha\in \Delta,$ define \[ \lie{g}_\alpha:=\{ X\in \lie{g}| (\ad H-\alpha(H)1)^nX=0, \forall\; H\in \lie{h}, \text{and some } n=n(\lie{h},X)\} \]
Then it is easy to see that \[ \lie{g}=\bigoplus_{\alpha\in \Delta} \lie{g}_\alpha=\lie{h}\oplus \bigoplus_{\alpha\neq 0} \lie{g}_\alpha \]
By definition the zero root space is the Cartan subalgebra.   
If we pick an order on $\lie{h}^*$ we can then decompose $\lie{g}$ further into positive and negative root spaces \begin{align*} \lie{n}=\bigoplus_{\alpha\in \Delta^+}\lie{g}_\alpha && \lie{n}^-=\bigoplus_{\alpha\in \Delta^+}\lie{g}_{-\alpha }
\end{align*}
These are both lie subalgebras by construction. 
\begin{definition}
	The lie subalgebra constructed by all of the non-negative roots is called the \textbf{Borel subalgebra} of $\lie{g}.$ We denote this as \[\lie{b}=\lie{h}\ds \lie{n}\]
Any lie subalgebra $\lie{p}$ such that $\lie{b}\subseteq \lie{p}\subsetneq \lie{g}$ is called a \textbf{parabolic subalgebra}. 	
\end{definition}

Before we head into the theory of highest weight modules, we recall some facts about $\lie{sl}_2(\C).$ If we let $\{e,f,h\}$ be a basis, then on any irreducible finite dimensional representation we have a weight space decomposition and the basis elements act in the following way 
\[ \begin{tikzcd}
	\vdots  \arrow[d,bend left,"f"] \\
	 \operatorname{\;\;}u_{i}\operatorname{\;\;} \arrow[d,bend left,"f"] \arrow[u,bend left,"e"] \arrow[loop left,"h"]\\
	 u_{i-1} \arrow[d,bend left,"f"] \arrow[u,bend left,"e"] \arrow[loop left,"h"]\\
	\vdots \arrow[u,bend left,"e"]
\end{tikzcd}
\]
As the representation is finite, there exists some $u_m$ such that $e(u_m)=0.$ We say that $u_m$ is the highest weight vector of this representation. In this same style we have the following definition. 
\begin{definition}
	Let $V$ be a left $U(\lie{g})$-module. A vector $v\in V$ is called a \textbf{highest weight vector} if $\lie{n}(v)=0.$ The left $U(\lie{g})$-submodule generated by a highest weight vector is called a \textbf{highest weight module}.    
\end{definition}  
The following proposition gives some properties of highest weight modules. 
\begin{proposition}
	Let $M$ be a highest weight module for $U(\lie{g}),$ and let $v$ be a highest weight vector generating $M.$ Suppose $v$ is of weight $\lambda.$ Then the following hold:
	\begin{enumerate}
		\item $M=U(\lie{n}^-)v$
		\item $M=\bigoplus_{\mu\in \lie{h}^*}M_\mu$ with each $M_\mu$ finite-dimensional and with $\dim_\C M_\lambda=1.$ 
		\item Every weight of $M$ is of the form $\lambda-\sum n_i\alpha_i$ with $\alpha_i\in \Pi$ and $n_i\in \Z^+.$		
	\end{enumerate}
\end{proposition}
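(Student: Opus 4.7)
The plan is to deduce all three statements from the Poincar\'e--Birkhoff--Witt theorem applied to the triangular decomposition $\lie{g}=\lie{n}^-\oplus \lie{h}\oplus \lie{n}$. By PBW, the multiplication map
\[ U(\lie{n}^-)\tensor U(\lie{h})\tensor U(\lie{n})\longrightarrow U(\lie{g}) \]
is an isomorphism of vector spaces, so every element of $U(\lie{g})$ is a finite sum of products $YHN$ with $Y\in U(\lie{n}^-)$, $H\in U(\lie{h})$, $N\in U(\lie{n})$.

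For (a), I would observe that $M=U(\lie{g})v$ by definition of highest weight module. Since $v$ is a highest weight vector, $Xv=0$ for all $X\in \lie{n}$, so by induction on degree in the tensor algebra $U(\lie{n})v=\C v$. Combined with $Hv=\lambda(H)v$ for $H\in \lie{h}$, we get $U(\lie{h})U(\lie{n})v=\C v$. Applying the triangular decomposition, $M=U(\lie{n}^-)U(\lie{h})U(\lie{n})v=U(\lie{n}^-)v$.

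For (c), the essential computation is that if $X\in \lie{g}_{-\alpha}$ and $H\in \lie{h}$, then $[H,X]=-\alpha(H)X$, so by induction on the length of a monomial in negative root vectors and the Leibniz-style identity $H(Xw)=X(Hw)+[H,X]w$, a PBW monomial $X_{-\beta_1}^{k_1}\cdots X_{-\beta_r}^{k_r}v$ (with $\beta_i\in \Delta^+$) is an honest eigenvector for every $H\in \lie{h}$ with eigenvalue $\lambda(H)-\sum k_i\beta_i(H)$. Hence it lies in $M_{\lambda-\sum k_i\beta_i}$. Since every positive root is by construction a non-negative integer combination of simple roots, this weight has the form $\lambda-\sum n_j\alpha_j$ with $\alpha_j\in \Pi$ and $n_j\in \Z^+$, proving (c).

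For (b), I would first use that weight vectors with distinct weights are linearly independent: applying $H-\mu(H)$ for well-chosen $H$ separates generalized eigenspaces, and here the action on PBW monomials is already diagonal, so the sum $\sum_\mu M_\mu$ is direct. Part (a) together with the PBW basis for $U(\lie{n}^-)$ (monomials in the $X_{-\beta}$, $\beta\in \Delta^+$) shows every element of $M$ is a sum of weight vectors, so $M=\bigoplus_\mu M_\mu$. To bound $\dim_\C M_\mu$, fix $\mu=\lambda-\sum n_j\alpha_j$; the contributing PBW monomials are exactly those whose exponent vector $(k_\beta)_{\beta\in \Delta^+}$ satisfies $\sum k_\beta \beta = \sum n_j\alpha_j$, and since each $\beta$ has strictly positive simple-root coefficients and $\Delta^+$ is finite, Gordan's lemma (or a direct partition argument) gives finitely many solutions. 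Finally $\dim M_\lambda=1$ because the only monomial with $\sum k_\beta \beta=0$ is the empty one, contributing $\C v$.

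The main obstacle I anticipate is the bookkeeping for the direct-sum decomposition in (b): the PBW monomials applied to $v$ span $M$, but they need not be linearly independent in $M$ (they are in $U(\lie{n}^-)$, but $v$ could be annihilated by some combinations when $M$ is a proper quotient of a Verma module). The clean way to handle this is to group PBW monomials by weight and argue weight-space by weight-space, using only that each $M_\mu$ is spanned by the finitely many monomials of the right weight-lowering; this sidesteps any global independence question and makes the finite-dimensionality and direct-sum assertions formal consequences of the already established diagonal action of $\lie{h}$.
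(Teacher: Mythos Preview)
Your proposal is correct and follows essentially the same approach as the paper: both use the PBW decomposition $U(\lie{g})\cong U(\lie{n}^-)\tensor U(\lie{h})\tensor U(\lie{n})$ to reduce to $M=U(\lie{n}^-)v$, then compute the weight of each PBW monomial in negative root vectors and count the finitely many monomials landing in a given weight space. The only cosmetic difference is that the paper argues the direct sum in (b) by observing $\bigoplus_\mu M_\mu$ is a $U(\lie{g})$-submodule containing $v$, whereas you invoke linear independence of distinct weight vectors directly; these are equivalent, and your remark about grouping monomials by weight to sidestep possible relations in $M$ is exactly the right way to handle the finite-dimensionality cleanly.
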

\begin{proof}\text{}

	(a) As above, we have the decomposition $\lie{g}=\lie{b}\ds \lie{n}^-.$ The Poincar\'e-Birkoff-Witt Theorem gives a basis for $U(\lie{g})$ which gives us the decomposition \[U(\lie{g})=U(\lie{b})\tensor U(\lie{n}^-)=U(\lie{n})\tensor U(\lie{h})\tensor U(\lie{n}^-).\]
On the vector $v,$ $U(\lie{b})$ acts by scalars. This follows from the fact that $U(\lie{n})v=0$ and $U(\lie{h})$ does not increase or decrease the weight. Therefore $U(\lie{g})v=U(\lie{n}^-)v$ and as $M$ is generated by $v.$ We conclude that $M=U(\lie{n}^-)v.$ 

(b,c) It is clear that $\bigoplus M_\mu$ is stable under the left $U(\lie{g})$ action. As $v\in \bigoplus M_\mu,$ we have that $M\subseteq \bigoplus M_\mu.$ It is true by construction that $\bigoplus M_\mu\subseteq M$ and therefore $M=\bigoplus M_\mu.$ By $(a)$ we know that $M=U(\lie{n}^-)v.$ For any monomial $E_{-\beta_1}^{i_1}...E_{-\beta_k}^{i_k},$ this element acts on $M_\mu$ with weight $\mu-\sum^k i_j\beta_j.$ As $\lambda$ is the highest weight, we have that there are finitely many ways to write $\mu=\lambda-\sum i_j\beta_j$ and a unique way to write $\lambda.$ Therefore $M_\mu$ is finite-dimensional and $M_\lambda$ is 1-dimensional. The weights are all $\lambda-\sum i_j\beta_j=\lambda-\sum n_i\alpha_i$ as $\beta_p=\sum n_{i_p} \alpha_{i}$ for $\alpha_i\in \Pi.$ This completes the proof.     
\end{proof}

We will define Verma modules shortly. These will turn out to be highest weight modules which are universal in some sense. Before then, let $\lambda\in \lie{h}^*,$ and put $\delta=\frac{1}{2}\sum_{\alpha\in \Delta^+} \alpha.$ We can make $\C$ into a $U(\lie{b})$-module by defining how elements of $\lie{h}$ and $\lie{n}$ act and then by the Poincar\'e-Birkoff-Witt Theorem we will have defined how $U(\lie{b})$ acts.  
Define the action of $\lie{b}$ on $\C$ by \begin{align*}
	Hz&=(\lambda-\delta)(H)z & \forall\; H\in \lie{h}\\
	Xz&=0 & \forall\; X\in \lie{n}
\end{align*}
We denote $\C$ under this action as $\C_{\lambda-\delta}.$ Define a functor $\Ind_\lie{b}^\lie{g}: \operatorname{}_{U(\lie{b})}\textbf{Mod}\to  \operatorname{}_{U(\lie{g})}\textbf{Mod}$ by \[ V\mapsto U(\lie{g})\tensor_{U(\lie{b})} V\]
More generally for any lie subalgebra we can define $\Ind.$ This has a natural partner $\res$ which treats any $U(\lie{g})$-module as a module over the universal enveloping algebra of the subalgebra.     
\begin{definition}
	The \textbf{Verma module} corresponding to the weight lambda is \[V(\lambda)=\Ind^\lie{g}_\lie{b}(\C_{\lambda-\delta})=U(\lie{g})\tensor_{U(\lie{b})} \C_{\lambda-\delta}\] 
\end{definition}
The following theorem characterizes Verma modules. Using these modules, one can prove the $``\leftarrow"$ direction of the theorem of highest weights. 

\begin{proposition}\label{VermaProp}
	Let $\lambda\in \lie{h}^*.$ \begin{enumerate}
		\item $V(\lambda)$ is a highest weight module with weight $\lambda-\delta$ generated by $1\tensor 1.$ 
		\item Let $M$ be a highest weight module of weight $\lambda-\delta.$ Then there exists a unique $U(\lie{g})$-module map $\psi:V(\lambda)\to M$ with $\psi(1\tensor 1)=v$ with $\psi$ onto. It is injective if and only if $\ker \psi=0.$ 
	\end{enumerate}
\end{proposition}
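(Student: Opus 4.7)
The plan is to establish (a) directly from the construction of induced modules and the Poincar\'e-Birkhoff-Witt theorem, and to obtain (b) as an instance of Frobenius reciprocity for induction from $\lie{b}$ to $\lie{g}$.

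For (a), the first step will be to verify that the distinguished element $1\tensor 1\in V(\lambda)$ is a highest weight vector of weight $\lambda-\delta$. The left $U(\lie{g})$-action on $V(\lambda)=U(\lie{g})\tensor_{U(\lie{b})}\C_{\lambda-\delta}$ is given by multiplication on the first factor, so for $H\in \lie{h}$ we may slide $H$ across the tensor (since $H\in U(\lie{b})$) to obtain $H\cdot(1\tensor 1)=1\tensor H\cdot 1=(\lambda-\delta)(H)(1\tensor 1)$, and for $X\in \lie{n}$ we similarly get $X\cdot(1\tensor 1)=1\tensor X\cdot 1=0$. Next I will use the root-space decomposition $\lie{g}=\lie{n}^-\oplus \lie{b}$ together with PBW to write $U(\lie{g})\cong U(\lie{n}^-)\tensor_\C U(\lie{b})$ as a right $U(\lie{b})$-module, which yields the vector space identification $V(\lambda)\cong U(\lie{n}^-)\tensor_\C \C_{\lambda-\delta}$; this shows that $V(\lambda)$ is generated as a $U(\lie{g})$-module by $1\tensor 1$ and hence is a highest weight module.

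For (b), let $v\in M$ be the generating highest weight vector of weight $\lambda-\delta$. I will define $\psi:V(\lambda)\to M$ on elementary tensors by $\psi(u\tensor 1)=u\cdot v$ for $u\in U(\lie{g})$ and extend by linearity. The key well-definedness check is that this descends from $U(\lie{g})\tensor_\C \C_{\lambda-\delta}$ to the tensor product over $U(\lie{b})$: one must verify that for every $b\in U(\lie{b})$ one has $\psi(ub\tensor 1)=\psi(u\tensor b\cdot 1)$, which reduces, by the $U(\lie{g})$-linearity of the assignment $u\mapsto u\cdot v$, to checking that $b\cdot v$ equals the scalar by which $b$ acts on $\C_{\lambda-\delta}$. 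But $v$ is a highest weight vector of weight $\lambda-\delta$: $Hv=(\lambda-\delta)(H)v$ for $H\in \lie{h}$ and $Xv=0$ for $X\in \lie{n}$, so by PBW applied to $U(\lie{b})=U(\lie{h})U(\lie{n})$ every $b\in U(\lie{b})$ acts on $v$ by the same scalar by which it acts on $\C_{\lambda-\delta}$. Hence $\psi$ is a well-defined $U(\lie{g})$-module map, and $\psi(1\tensor 1)=v$. Surjectivity follows because the image of $\psi$ is a $U(\lie{g})$-submodule of $M$ containing $v$, and $v$ generates $M$ by hypothesis. For uniqueness, any $U(\lie{g})$-map $\psi':V(\lambda)\to M$ with $\psi'(1\tensor 1)=v$ must agree with $\psi$ on the generator, hence (by $U(\lie{g})$-linearity and part (a)) on all of $V(\lambda)$. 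The final clause is the tautological equivalence \emph{injective iff kernel zero} for a module homomorphism.

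The only genuinely delicate point is the well-definedness of $\psi$ on the balanced tensor product, and in particular the invocation of PBW both to identify $V(\lambda)$ with $U(\lie{n}^-)\tensor_\C \C_{\lambda-\delta}$ and to reduce the compatibility condition for $b\in U(\lie{b})$ to the weight and highest-weight-vector conditions on $v$. Everything else is a straightforward unwinding of the induction construction.
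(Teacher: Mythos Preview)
Your proof is correct and follows precisely the approach the paper sketches: part (a) is unpacked from the construction of $V(\lambda)$ together with the PBW decomposition $U(\lie{g})\cong U(\lie{n}^-)\tensor_\C U(\lie{b})$, and part (b) is exactly the universal mapping property of the tensor product $U(\lie{g})\tensor_{U(\lie{b})}(-)$, which the paper invokes but does not spell out. You have simply filled in the details the paper omits; there is nothing to correct.
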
 
We will not prove this theorem as part (a) follows from the construction of $V(\lambda).$ Part $(b)$ follows from the universal mapping property for tensor products. Notice that $V(\lambda)$ is infinite dimensional over $\C.$ 

\begin{proposition}
	Let $\lambda\in \lie{h}^*$, $V(\lambda)$ the associated Verma module, and $S$ the sum of all proper $U(\lie{g})$ submodules of $V(\lambda).$ Then $L(\lambda)=V(\lambda)/S$ is an irreducible $U(\lie({g})$-module and is a highest weight module with weight $\lambda-\delta.$ 
\end{proposition}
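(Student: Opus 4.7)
The plan is to show that the sum $S$ of proper submodules is itself a proper submodule — i.e., the unique maximal proper submodule of $V(\lambda)$ — and then extract the stated properties of $L(\lambda)$ as formal consequences.

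First I would exploit the weight space decomposition from the previous proposition. Since $V(\lambda)$ is a highest weight module of weight $\lambda-\delta$, it decomposes as $V(\lambda)=\bigoplus_\mu V(\lambda)_\mu$ with $\dim_\C V(\lambda)_{\lambda-\delta}=1$ and every nonzero weight of the form $\mu=\lambda-\delta-\sum n_i\alpha_i$ with $n_i\in\Z^+$. The key observation is that any $U(\lie{g})$-submodule $N\subseteq V(\lambda)$ is stable under $U(\lie{h})$, hence splits as a direct sum of its weight components: $N=\bigoplus_\mu (N\cap V(\lambda)_\mu)$. This is the standard argument that weight space decompositions are inherited by submodules, using that $\lie{h}$ acts semisimply on $V(\lambda)$.

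Next I would argue that a submodule $N$ is proper if and only if $N\cap V(\lambda)_{\lambda-\delta}=0$. One direction is immediate, since $V(\lambda)_{\lambda-\delta}=\C v$ with $v=1\tensor 1$ the generator, so if $v\in N$ then $N=U(\lie{g})v=V(\lambda)$. Conversely, if $v\notin N$ then $N\cap V(\lambda)_{\lambda-\delta}=0$ by one-dimensionality. It follows that the sum $S$ of all proper submodules is contained in $\bigoplus_{\mu\neq\lambda-\delta}V(\lambda)_\mu$, so $v\notin S$, and therefore $S$ is itself a proper submodule — in fact the unique maximal one.

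From maximality of $S$, irreducibility of the quotient is formal: by the fourth-isomorphism-type correspondence (Theorem \ref{Module_Isos}(b)) submodules of $V(\lambda)/S$ correspond to submodules of $V(\lambda)$ containing $S$, and there are only two of those, namely $S$ and $V(\lambda)$. Finally, the image $\bar v=v+S$ is nonzero (since $v\notin S$), is annihilated by $\lie{n}$ (since $\lie{n}v=0$ in $V(\lambda)$ and the action passes to the quotient), and has weight $\lambda-\delta$; moreover $U(\lie{g})\bar v=(U(\lie{g})v+S)/S=V(\lambda)/S=L(\lambda)$, so $\bar v$ is a highest weight vector generating $L(\lambda)$, which is therefore a highest weight module of weight $\lambda-\delta$.

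The only step with any real content is showing that $S$ is proper; everything else is bookkeeping. The possible obstacle is the implicit use of semisimplicity of the $\lie{h}$-action to split submodules into weight components — this is not automatic in infinite dimensions, but it follows from the explicit PBW-basis description of $V(\lambda)=U(\lie{n}^-)v$ given in the proof of the previous proposition, where each weight space $V(\lambda)_\mu$ is finite-dimensional and $\lie{h}$ acts by the scalar $\mu(H)$. I would spell out this semisimplicity explicitly to make the weight-space intersection argument airtight.
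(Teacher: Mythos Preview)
Your proof is correct and follows the same approach as the paper, which simply asserts that ``this follows immediately from the definition and the fact that the image of $1\tensor 1$ in $L(\lambda)$ is nonzero.'' You have spelled out in detail precisely why that image is nonzero --- namely that every proper submodule misses the one-dimensional top weight space, hence so does their sum $S$ --- which is the substantive point the paper leaves implicit.
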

This follows immediately from the definition and the fact that the image of $1\tensor 1$ in $L(\lambda)$ is non-zero. The following theorem completes the proof of the Theorem of Highest Weights.

\begin{theorem}
	Let $\lambda\in \lie{h}^*$ such that $\lambda$ is real on $\lie{h}_0,$ dominant, and algebraically integral. Then $L(\lambda+\delta)$ is an irreducible finite-dimensional representation of $\lie{g}$ with highest weight $\lambda.$   
\end{theorem}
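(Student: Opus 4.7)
The plan is to verify the three claims in turn: irreducibility, the assertion on highest weight, and finite-dimensionality. Irreducibility is immediate from the construction, since $L(\lambda+\delta) = V(\lambda+\delta)/S$ where $S$ is the sum of all proper submodules, so any proper submodule of $L(\lambda+\delta)$ pulls back to a proper submodule of $V(\lambda+\delta)$ strictly larger than $S$, a contradiction. The highest-weight claim is also immediate: by Proposition 5.1.22, $V(\lambda+\delta)$ is a highest weight module with weight $(\lambda+\delta)-\delta = \lambda$ generated by $1\otimes 1$, and the image $v$ of this generator in $L(\lambda+\delta)$ is nonzero (else $L(\lambda+\delta)=0$), giving a highest weight vector of weight $\lambda$.

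The core task is finite-dimensionality. Let $\{e_i,f_i,h_i\}$ denote the $\lie{sl}_2$-triple attached to the simple root $\alpha_i \in \Pi$, and set $n_i := \lambda(h_i)$. Since $\lambda$ is dominant and algebraically integral, each $n_i \in \Z_{\geq 0}$. The key step is to establish that in $L(\lambda+\delta)$ we have $f_i^{n_i+1}v = 0$ for every $i$. To see this, note that a standard inductive $\lie{sl}_2$-computation inside $U(\lie{g})$ shows the element $w_i := f_i^{n_i+1}v$ is annihilated by $e_i$ (since it is a highest weight vector for the $\lie{sl}_2$-subalgebra $\langle e_i,f_i,h_i\rangle$ acting on the $\lie{sl}_2$-cyclic submodule generated by $v$) and by $e_j$ for $j\neq i$ (since $[e_j,f_i] \in \lie{g}_{\alpha_j-\alpha_i}$, and $\alpha_j-\alpha_i$ is not a root for distinct simple roots, so $e_j$ commutes with $f_i$). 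Thus $w_i$ is a highest weight vector of weight $\lambda - (n_i+1)\alpha_i \neq \lambda$, and if $w_i\neq 0$ it generates a proper submodule of $L(\lambda+\delta)$, contradicting irreducibility. Hence $w_i = 0$.

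With this in hand, the weights of $L(\lambda+\delta)$ form a $W$-stable subset of $\lie{h}^*$: for each simple reflection $s_{\alpha_i}$ and weight $\mu$ of $L(\lambda+\delta)$, the $\lie{sl}_2$-theory applied to the triple $\{e_i,f_i,h_i\}$ (which acts locally finitely because $f_i^{n_i+1}v = 0$ and $U(\lie{n}^-)$ generates $L(\lambda+\delta)$ from $v$) exchanges $\mu$ with $s_{\alpha_i}\mu$ and preserves weight-space dimensions. Thus the weight set $P(L(\lambda+\delta))$ is invariant under the group $W$ generated by the simple reflections. Every weight is of the form $\lambda - \sum n_i\alpha_i$ with $n_i\in\Z_{\geq 0}$ by the third item of the proposition preceding the theorem, so each weight is dominated by $\lambda$ in the partial order defined by $\Pi$. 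A standard lemma (see, e.g., \cite{Knapp86} or \cite{Bump2013}) states that the set of dominant weights $\mu$ with $\lambda - \mu$ a non-negative integer combination of simple roots is finite, and combining this with $W$-invariance produces a finite total weight set, since every weight is $W$-conjugate to a dominant one.

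Finally, each weight space $L(\lambda+\delta)_\mu$ is finite-dimensional by Proposition 5.1.21(b), so the direct-sum decomposition $L(\lambda+\delta) = \bigoplus_\mu L(\lambda+\delta)_\mu$ over the finite set of weights yields $\dim_\C L(\lambda+\delta) < \infty$. The hard part is the step $f_i^{n_i+1}v = 0$, which requires the careful commutation calculation in $U(\lie{g})$ together with the non-rootness of $\alpha_j - \alpha_i$ for distinct simple roots; once this is in place, the $W$-invariance of the weight set follows from local $\lie{sl}_2$-finiteness, and finite-dimensionality is a combinatorial consequence of boundedness in the root-lattice partial order together with $W$-symmetry.
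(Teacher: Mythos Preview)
The paper does not supply its own proof of this theorem; it simply defers to \cite[Chapter V, Section 3]{Knapp2005}. Your argument is precisely the standard one carried out there, and the overall structure is correct: irreducibility and the highest-weight identification are immediate from the construction of $L(\lambda+\delta)$, and finite-dimensionality is obtained by first establishing $f_i^{\,n_i+1}v=0$ for each simple root, then using local $\lie{sl}_2$-finiteness to get $W$-invariance of the weight multiplicities, and finally invoking the finiteness of dominant weights below $\lambda$.

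One step deserves a sharper justification. You assert that each $\lie{sl}_2$-triple $\{e_i,f_i,h_i\}$ acts locally finitely on $L(\lambda+\delta)$ ``because $f_i^{\,n_i+1}v=0$ and $U(\lie{n}^-)$ generates $L(\lambda+\delta)$ from $v$.'' The first fact shows only that $v$ itself lies in a finite-dimensional $\lie{sl}_2^{(i)}$-submodule; it does not by itself propagate to an arbitrary element $f_{j_1}\cdots f_{j_r}v$. The clean way to close this is the lemma that the set of $\lie{sl}_2^{(i)}$-locally finite vectors in any $\lie{g}$-module is a $\lie{g}$-submodule (proved by checking that $e_j$ and $f_j$ preserve it via the commutation relations in $U(\lie{g})$). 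Since this submodule contains $v$ and $L(\lambda+\delta)$ is irreducible, it is all of $L(\lambda+\delta)$. With that lemma inserted, your proof is complete and matches the cited reference.
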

For a proof of this see \cite[Chapter V, Section 3]{Knapp2005}.
\begin{remark}
	The exact same result holds on the group level as well. There, the proof follows from the theorem on the level of Lie algebras by differentiating the representations and then following the same steps. The only difference is the replacement of algebraically integral with analytically integral (defined below). For more details see \cite[Chapter IV, Section 7]{Knapp86}.  
\end{remark}

Now that we know these representations exist and are parametrized by dominant, algebraically integral weights, we want to find an explicit realization of the $L(\lambda+\delta).$ 
To do this, we make use of the theory of holomorphic vector bundles.  

\section{Compact Groups and Tori}
The key to understanding a majority of the representation theory of reductive, semisimple, or  compact Lie groups is the existence of a \textit{Haar Measure}. This is a left invariant Borel measure on $G.$ The existence of such a measure implies, as an example, that all representations of compact Lie groups can be taken to be unitary without a loss of generality. Additionally, combined with the Iwasawa decomposition, we get a variety of strong results. This will play a key role in the proof of the Borel-Weil theorem. 
Let us first show that such a measure exists. 

Let $G$ be a Lie group of dimension $n$ with Lie algebra $\lie{g}.$ Then as $T_1(G)=\lie{g}$ and there is an isomorphism $\lie{g}\to \Gamma_L(G,TG)$ the set of left-invariant smooth vector fields on $G.$ From this we conclude that $G$ is parallelizable. For this reason, we know that there exists an $n-$form $\omega\in \Omega^n(G)$ such that $\omega$ is positive relative to a chosen atlas on $G$, is nowhere vanishing, and. is left-invariant. Further, by the Riesz Representation theorem, there exists a Borel measure $d\mu_{\omega}$ on $G$ such that $\int_G f \omega=\int_G f  d\mu_{\omega}$ for all $f\in C_c(G).$  
\begin{lemma}
	$d\mu_{\omega}$ is left invariant in the sense that $d\mu_{\omega}(L_gE)=d\mu_{\omega}(E)$ for all Borel sets $E\subseteq G$ and all $g\in G.$ 
\end{lemma}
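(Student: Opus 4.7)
The plan is to transfer the left-invariance of $\omega$ as a differential form (the geometric statement $L_g^*\omega = \omega$) into left-invariance of $d\mu_\omega$ as a Borel measure, using the change-of-variables formula for top-degree forms together with the Riesz correspondence. First I would recall that for any orientation-preserving diffeomorphism $\Phi:G\to G$ and any compactly supported $n$-form $\eta\in\Omega^n_c(G)$, one has $\int_G \Phi^*\eta = \int_G \eta$. Applied to $\Phi = L_g$ and $\eta = f\omega$ with $f\in C_c(G)$, and using $L_g^*(f\omega) = (f\circ L_g)\, L_g^*\omega = (f\circ L_g)\,\omega$ by the assumed left-invariance of $\omega$, this yields
\begin{equation*}
\int_G (f\circ L_g)\,\omega \;=\; \int_G f\,\omega.
\end{equation*}
I should also briefly note that $L_g$ is orientation-preserving: since $\omega$ is nowhere-vanishing and is declared positive by the chosen atlas, and $L_g^*\omega=\omega$ is then also positive, $L_g$ must preserve the orientation class.

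Next, I would invoke the Riesz Representation theorem in the form used to produce $d\mu_\omega$: it furnishes a unique positive Borel measure with $\int_G f\,\omega = \int_G f\,d\mu_\omega$ for every $f\in C_c(G)$. Substituting into the identity above produces
\begin{equation*}
\int_G (f\circ L_g)\,d\mu_\omega \;=\; \int_G f\,d\mu_\omega \qquad\forall\,f\in C_c(G).
\end{equation*}
Rewriting the left side by the abstract change-of-variables formula for pushforwards of measures, this is exactly $\int_G f\,d((L_{g^{-1}})_*\mu_\omega) = \int_G f\,d\mu_\omega$ for all $f\in C_c(G)$, so the two Radon measures $(L_{g^{-1}})_*\mu_\omega$ and $\mu_\omega$ agree on $C_c(G)$.

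From here, I would promote agreement on $C_c(G)$ to agreement on Borel sets. Both measures are Radon (hence outer regular on Borel sets and inner regular on open sets), so uniqueness in Riesz forces $(L_{g^{-1}})_*\mu_\omega = \mu_\omega$ as Borel measures; equivalently, $\mu_\omega(L_g^{-1}E)=\mu_\omega(E)$ for every Borel $E\subseteq G$. Replacing $g$ by $g^{-1}$, which is permitted since the identity holds for all $g\in G$, gives $\mu_\omega(L_g E)=\mu_\omega(E)$, as required.

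The only real obstacle is the standard measure-theoretic bridge at the end: passing from $C_c(G)$ to indicator functions of arbitrary Borel sets. This is purely technical and handled by Radon regularity (or equivalently by a monotone-class / Dynkin $\pi$–$\lambda$ argument once one checks agreement on open sets). The geometric content of the lemma really lives in the single line $L_g^*\omega=\omega \Rightarrow \int (f\circ L_g)\omega = \int f\omega$; everything else is bookkeeping between forms, integrals, and measures.
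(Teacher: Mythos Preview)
Your proposal is correct and follows essentially the same approach as the paper: both arguments use $L_g^*\omega=\omega$ together with the change-of-variables formula for top forms to obtain $\int_G (f\circ L_g)\,d\mu_\omega=\int_G f\,d\mu_\omega$ for all $f\in C_c(G)$, and then pass to Borel sets via regularity. The only cosmetic difference is that the paper carries out the last step concretely by taking the infimum over $f\geq \mathbbm{1}_K$ for compact $K$ and then invoking regularity (from second countability), whereas you invoke Riesz uniqueness of Radon measures directly; these are equivalent bookkeeping choices.
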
   
\begin{proof}
	As $\omega$ is left-invariant, we know that $L^*_g\omega=\omega.$ Therefore, we have that \[ \int_G f \omega=\int_G f(gx) L_g^*\omega=\int_G f(gx)d\mu_\omega(x)=\int_G f(x)d\mu_{\omega}(x)\]
	Hence, $d\mu_\omega$ is left-invariant. If $K\subseteq G$ is compact, we apply the above integral formula to all $f\geq 1_K.$ Taking the infimum over these. functions we see that $d\mu_{\omega}(L_g^*K)=d\mu_\omega(K).$ Since $G$ has a countable base, $d\mu_\omega$ is regular and the lemma follows.  
\end{proof}

\begin{definition}
	A left-invariant, positive, Borel measure on $G$ is called a \textbf{left Haar measure}.  
\end{definition}
\begin{proposition}
	Every left Haar measure on $G$ is proportional. 
\end{proposition}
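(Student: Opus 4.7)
The plan is to exploit the fact that $G$ is a Lie group, not merely a locally compact topological group, and to reduce the question to one about left-invariant top forms on $G$.

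First, I would invoke the preceding lemma together with the construction that preceded it: a nowhere-vanishing left-invariant $n$-form $\omega \in \Omega^n(G)$ produces a left Haar measure $d\mu_\omega$ via the Riesz representation theorem. The space of left-invariant $n$-forms on $G$ is canonically isomorphic to $\bigwedge^n \lie{g}^*$, which is one-dimensional over $\R$ since $\dim \lie{g} = n$; left invariance propagates the value at the identity to the whole group through the pullbacks $L_{g^{-1}}^*$. Thus any two nowhere-vanishing left-invariant forms $\omega$ and $\omega'$ differ by a nonzero scalar $c$, and consequently $\mu_{\omega'} = |c|\,\mu_\omega$. This settles proportionality among those Haar measures that arise from top forms.

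Second, given an arbitrary left Haar measure $\nu$ on $G$, I would show that $\nu$ is a constant multiple of some fixed $\mu_\omega$. The key observation is that both measures are positive, regular Borel measures (regular because $G$ is second countable and hence $\sigma$-compact), finite on compact sets, and assigning positive measure to every non-empty open set; this last point follows because left invariance plus $G = \bigcup_g gU$ for any open neighbourhood $U$ of $e$ forces $\nu(U)>0$. Under these conditions $\nu$ and $\mu_\omega$ share exactly the same null sets, so by the Radon-Nikodym theorem we may write $d\nu = f\,d\mu_\omega$ for some positive, locally integrable $f$. Left invariance of both measures forces $f$ to be left-invariant: for every $g \in G$ and $\mu_\omega$-almost every $x$, $f(gx) = f(x)$. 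Since $G$ acts transitively on itself by left translation, the only such functions are constants, giving $\nu = c\,\mu_\omega$ as desired.

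The main obstacle will be the middle step: establishing that $\nu$ is absolutely continuous with respect to $\mu_\omega$. The essential input is that in local charts $\mu_\omega$ is a smooth density, so it dominates every Radon measure sharing the same null sets as Lebesgue measure; to pin this down rigorously I would argue locally in a chart around $e$, where both measures restrict to Radon measures on an open set of $\R^n$, and then propagate globally by left translation. A more classical route, which bypasses Radon-Nikodym entirely, is Haar's original construction via the covering numbers $(f:g) = \inf \{\sum c_i : f(x) \leq \sum c_i g(s_i^{-1}x)\}$; these are manifestly independent of any Haar measure, and one shows that the ratio $\int f\,d\mu / \int g\,d\mu$ is squeezed between expressions involving only $(f:g)$ and $(g:h)$ as the test function $h$ concentrates near $e$. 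Either route suffices; the Lie-theoretic $n$-form argument is cleaner in this setting, while the covering-number argument is more robust and extends uniqueness to arbitrary locally compact groups.
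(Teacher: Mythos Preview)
Your plan is sound, and both routes you sketch (the $n$-form argument plus Radon--Nikodym, or the classical covering-number argument) are standard and correct ways to establish uniqueness. The one soft spot is the sentence ``Under these conditions $\nu$ and $\mu_\omega$ share exactly the same null sets'': that is precisely the absolute-continuity statement you still need, not a consequence of what precedes it. You recognize this yourself two paragraphs later, so the plan is internally consistent, but be careful not to present it as already established when you write the argument out.

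As for comparison with the paper: there is nothing to compare. The paper does not prove this proposition at all; it simply refers the reader to \cite[Theorem 8.23]{Knapp2005}. Your write-up therefore supplies substantially more than the paper does. If you want to match the paper's level of detail, a one-line citation suffices; if you want to actually prove it, the cleanest route in the Lie-group setting is your first observation alone---that $\bigwedge^n \lie{g}^*$ is one-dimensional---combined with the fact (which does require a short argument) that every left Haar measure on a Lie group arises from a left-invariant top form. The Radon--Nikodym step is one way to close that gap; another is to note that any left-invariant Radon measure, pushed into a chart, must be translation-invariant along the coordinate directions generated by left-invariant vector fields, forcing it to be a multiple of Lebesgue measure locally.
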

\begin{proof}
	See \cite[Theorem 8.23]{Knapp2005}. 
\end{proof} 

We could have equivalently defined \textbf{right Haar measures}. For most groups these are different from the left Haar measures. Let $d_lx$ denote a left Haar measure and $d_rx$ a right Haar measure. Notice that $L_g$ and $R_g$ commute with one another. Then, for any $t\in G,$ the measure $d_l(\cdot t)$ is a left Haar measure. For this reason, we get a function $\Delta:G\to \R^+$ called the \textbf{modular homomorphism} which satisfies \[ d_l(\cdot t)=\Delta^{-1}(t) d_l(\cdot) \]
This is a smooth function. 
\begin{lemma}
	$\Delta(t)=1$ for all $t\in K$ a compact subgroup of $G.$
\end{lemma}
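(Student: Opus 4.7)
The plan is to exploit the fact that $\Delta$ is a continuous group homomorphism from $G$ to the multiplicative group $(\mathbb{R}^+,\cdot)$, and then to invoke the fact that $(\mathbb{R}^+,\cdot)$ admits no nontrivial compact subgroups.

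First I would verify that $\Delta$ is a group homomorphism. This follows from uniqueness of left Haar measure up to a positive scalar (which was cited as Theorem 8.23 of Knapp just above the lemma). Given $s,t \in G$, the measure $d_l(\cdot st)$ is obtained by right-translating $d_l$ first by $t$ and then by $s$, yielding $d_l(\cdot st) = \Delta^{-1}(s)\Delta^{-1}(t)\, d_l(\cdot)$. But the defining relation also gives $d_l(\cdot st) = \Delta^{-1}(st)\, d_l(\cdot)$. Comparing the two expressions and using uniqueness of $\Delta$ yields $\Delta(st) = \Delta(s)\Delta(t)$, so $\Delta$ is a homomorphism. Smoothness of $\Delta$ was already asserted in the paragraph preceding the lemma, so in particular $\Delta$ is continuous.

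Next, consider the restriction $\Delta|_K : K \to \mathbb{R}^+$. Its image $\Delta(K)$ is a subgroup of $\mathbb{R}^+$ (homomorphic image of a subgroup) and is compact (continuous image of the compact set $K$). So I am reduced to showing that the only compact subgroup of $(\mathbb{R}^+,\cdot)$ is the trivial one $\{1\}$. Suppose $H \leq \mathbb{R}^+$ is a compact subgroup and let $x \in H$. If $x > 1$, then the sequence $\{x^n\}_{n \in \mathbb{N}} \subseteq H$ is unbounded in $\mathbb{R}$, contradicting that $H$ is closed and bounded (Heine--Borel). Symmetrically, if $0 < x < 1$, then $\{x^{-n}\}_{n \in \mathbb{N}} \subseteq H$ is unbounded. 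Hence $x = 1$, so $H = \{1\}$.

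Combining these two steps, $\Delta(K) = \{1\}$, i.e.\ $\Delta(t) = 1$ for every $t \in K$. There is really no serious obstacle here; the only step that requires any care is confirming the homomorphism property of $\Delta$, since this is the place where the proportionality of left Haar measures is silently used. The rest is a short topological observation about $\mathbb{R}^+$.
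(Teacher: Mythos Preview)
Your proof is correct and follows essentially the same approach as the paper: observe that $\Delta(K)$ is a compact subgroup of $\mathbb{R}^+$ (using continuity of $\Delta$) and conclude it must be $\{1\}$. The paper's version is terser, taking the homomorphism property of $\Delta$ as given (it is already called the modular \emph{homomorphism}) and leaving the ``no nontrivial compact subgroups of $\mathbb{R}^+$'' step implicit, whereas you have spelled out both of these details.
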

\begin{proof}
	As $\Delta$ is smooth, $\Delta(K)$ is a compact subgroup of $\R^+.$ Therefore $\Delta(K)=\{1\}.$
\end{proof}
\begin{definition}
	A Lie group $G$ is called \textbf{unimodular} if $\Delta=1.$ Equivalently, if $d_r(x)=d_l(x).$   
\end{definition}
We now want to know what groups are unimodular. Then, when integration arises on these groups we do not have to worry about the choice of Haar measure.  
\begin{theorem}
	The following groups are unimodular: 
	\begin{enumerate}
		\item Compact  groups
		\item semisimple groups
		\item Reductive groups
	\end{enumerate}
\end{theorem}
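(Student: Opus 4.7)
The plan is to proceed by first establishing a single formula that reduces all three cases to a computation on the Lie algebra, namely $\Delta(g) = |\det \operatorname{Ad}(g)|$ for all $g \in G$ (with the identity component; for general $G$ the statement follows from continuity of $\Delta$ and the fact that $\mathbb{R}^+$ has no nontrivial compact subgroups). To derive this formula, I would start from a nonvanishing left-invariant top form $\omega \in \Omega^n(G)$ and exploit the fact that right translation $R_g$ commutes with every $L_h$, so $R_g^*\omega$ is again left-invariant and hence equal to some scalar multiple $c(g)\omega$. Evaluating at the identity and using the identification $T_1 G \cong \mathfrak{g}$, one sees that under the right translation $R_g$ the tangent map at $1$ is conjugate to $\operatorname{Ad}(g^{-1})$, so $c(g) = \det \operatorname{Ad}(g^{-1})$; tracing this through the Riesz–measure correspondence gives $\Delta(g) = |\det \operatorname{Ad}(g)|$ (up to convention on left vs. right).

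Given this formula, part (1) is a direct corollary of the lemma just proved: apply it with $K = G$ to obtain $\Delta \equiv 1$. For part (2), suppose $G$ is semisimple with Killing form $B$. Since $\operatorname{Ad}(g) \in \operatorname{Aut}(\mathfrak{g})$ for every $g$, and every Lie algebra automorphism preserves $B$ (this is a direct calculation using $B(X,Y) = \operatorname{tr}(\operatorname{ad}X \operatorname{ad}Y)$), the operator $\operatorname{Ad}(g)$ lies in the orthogonal group $O(B)$ of a nondegenerate form, and so has determinant $\pm 1$. Hence $|\det \operatorname{Ad}(g)| = 1$ and $\Delta \equiv 1$.

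For part (3), assume $\mathfrak{g}$ is reductive so that $\mathfrak{g} = \mathfrak{z}(\mathfrak{g}) \oplus [\mathfrak{g},\mathfrak{g}]$, where $[\mathfrak{g},\mathfrak{g}]$ is semisimple. The key point is that $\operatorname{Ad}(g)$ respects this decomposition: the center is characteristic, and the derived subalgebra is preserved by every automorphism. On $\mathfrak{z}(\mathfrak{g})$, the adjoint action is trivial because $\operatorname{ad}$ vanishes on the center and the center is connected-abelian; hence $\det \operatorname{Ad}(g)\bigl|_{\mathfrak{z}} = 1$. On $[\mathfrak{g},\mathfrak{g}]$, the restricted representation preserves the Killing form of the semisimple piece, so by case (2) its determinant has absolute value $1$. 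Multiplying the two contributions gives $|\det \operatorname{Ad}(g)| = 1$, and hence $\Delta \equiv 1$.

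The step I expect to cost the most care is the first one: the identification $\Delta(g) = |\det \operatorname{Ad}(g)|$. The proof uses no deep ingredient, but it requires being scrupulous about the interplay between left and right translations, the sign conventions implicit in identifying $T_1 G$ with left-invariant vector fields, and the choice of left vs. right Haar measure in defining $\Delta$ (an exponent of $\pm 1$ can creep in, but it is irrelevant for the unimodularity statement because we only need $|\Delta(g)| = 1$). Once this formula is in place, each of (1), (2), (3) collapses to a short algebraic observation about automorphisms of $\mathfrak{g}$, so the substance of the proof is concentrated entirely in that initial reduction.
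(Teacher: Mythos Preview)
The paper does not actually prove this theorem; it states only that a proof ``requires the development of reductive lie groups which we do not present'' and refers the reader to \cite{Knapp2005}. So there is no argument in the paper to compare against. Your approach via the identity $\Delta(g)=|\det\Ad(g)|$ is precisely the standard one (and is essentially what one finds in Knapp): reduce unimodularity to a determinant computation on $\lie{g}$, then dispose of each case by exhibiting an $\Ad$-invariant nondegenerate bilinear form (the Killing form in the semisimple case) or a characteristic direct-sum decomposition (in the reductive case). Your sketch is correct, and in fact supplies more than the paper does.

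One small point worth tightening: your claim that $\Ad(g)$ acts trivially on $\lie{z}(\lie{g})$ only follows from $\ad|_{\lie{z}}=0$ when $g$ lies in the identity component. Your parenthetical remedy (pass to $G_0$, then use that $\Delta(G/G_0)$ is a compact subgroup of $\R^+$) is the right fix, but it presupposes that $G/G_0$ is finite. That is part of Knapp's definition of a reductive Lie group, so the argument goes through, but you should state the hypothesis explicitly rather than leaving it implicit.
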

We will not prove this as it requires the development of reductive lie groups which we do not present. See \cite{Knapp2005} for a proof in full generality. 

Now we turn to general representation theory for compact groups. A \textbf{representation} is a continuous group homomorphism $\Pi:K\to \Aut{V}$ for some Hilbert space $V.$ (The assumption that $V$ is a Hilbert space is unnecessary for $\dim V<\infty$. As we want the greatest generality, we do not place this finiteness assumption on $V.$) A representation is called \textbf{unitary} if $\Pi(k)$ us a unitary operator for all $k\in K.$ 
\begin{lemma} 
	Let $K$ be a compact Lie group and $(\Pi,V)$ a representation. Then there exists a Hermitian inner product $\ip{,}$ on $V$ so that the representation is unitary. 
\end{lemma}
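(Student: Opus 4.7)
The plan is to exploit the existence of the normalized Haar measure on $K$ (guaranteed by the preceding development, with $K$ unimodular and of finite total measure since it is compact) to average an arbitrary Hermitian inner product into a $K$-invariant one. Concretely, I would begin with the inner product $(\cdot,\cdot)_0$ already carried by $V$ as a Hilbert space, and normalize the Haar measure $dk$ so that $\int_K dk = 1$.

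First I would define
\[
\langle v, w \rangle := \int_K (\Pi(k)v, \Pi(k)w)_0 \, dk
\]
for all $v,w \in V$, and verify this integral is well-defined. Since $\Pi$ is continuous and $K$ is compact, the orbit $\Pi(K)v$ is bounded in $V$ for each fixed $v$ (the operator-norm function $k \mapsto \|\Pi(k)\|$ need not be globally bounded in general, but the orbit map $k\mapsto \Pi(k)v$ is continuous from a compact space into a Hilbert space, hence has bounded image). Thus $k\mapsto (\Pi(k)v,\Pi(k)w)_0$ is a continuous, bounded scalar function on a finite-measure space and is integrable.

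Next I would check that $\langle \cdot, \cdot \rangle$ is a Hermitian inner product: sesquilinearity follows from sesquilinearity of $(\cdot,\cdot)_0$ and linearity of the integral; conjugate symmetry follows from $(\Pi(k)w,\Pi(k)v)_0 = \overline{(\Pi(k)v,\Pi(k)w)_0}$; and positive definiteness follows because the integrand $k\mapsto \|\Pi(k)v\|_0^2$ is continuous, nonnegative, and equals $\|v\|_0^2 > 0$ at $k=1$ whenever $v\neq 0$, so by continuity it is strictly positive on a neighborhood of the identity, which has positive Haar measure. Then I would verify $K$-invariance: for any $g \in K$,
\[
\langle \Pi(g)v, \Pi(g)w\rangle = \int_K (\Pi(kg)v, \Pi(kg)w)_0 \, dk = \int_K (\Pi(k)v, \Pi(k)w)_0 \, dk = \langle v, w\rangle,
\]
where the second equality uses right-invariance of $dk$ (which holds since compact groups are unimodular, as established above). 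This shows $\Pi(g)$ is unitary with respect to $\langle\cdot,\cdot\rangle$ for every $g$, completing the proof.

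The main obstacle will be the positive-definiteness step in the general (possibly infinite-dimensional) setting, together with confirming integrability without appealing to finite-dimensionality; both are handled by using continuity of $\Pi$ and the fact that continuous images of compact sets in a Hilbert space are norm-bounded, plus the observation that $\int_K dk = 1$ makes any bounded continuous integrand integrable. A secondary subtlety is that $\langle\cdot,\cdot\rangle$ need not yield the same topology as $(\cdot,\cdot)_0$ unless one shows the two norms are equivalent; this equivalence follows from the bound $\|v\|_0^2 \cdot \inf_k \|\Pi(k)^{-1}\|^{-2} \leq \langle v,v\rangle \leq \|v\|_0^2 \cdot \sup_k \|\Pi(k)\|^2$, where the suprema are finite by the uniform boundedness principle applied to the compact orbit $\Pi(K)$, so that the new inner product still turns $V$ into a Hilbert space.
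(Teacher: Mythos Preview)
Your proposal is correct and follows essentially the same approach as the paper: average the given Hilbert-space inner product over $K$ via the Haar measure to obtain a $K$-invariant one, then invoke uniform boundedness to confirm the new norm defines the same topology. Your write-up is considerably more careful about integrability, positive definiteness, and the use of unimodularity for right-invariance, but the underlying argument is identical to the paper's.
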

\begin{proof}
	As $K$ is compact, every continuous function is integrable. Define \[ (u,v)=\int_K \ip{\Pi(k)u,\Pi(k)v} dk\]
	where $dk$ is the Haar measure on $K.$ Then it is obvious that each $\Pi(k')$ is a unitary operator with respect to this new Hermitian inner-product. Further, by the Principal of Uniform Boundedness we conclude that the topology on $V$ is the same as the topology generated by $\ip{,}.$  
\end{proof}

Therefore, we can assume without a loss of generality that every representation of a compact Lie group is unitary. Another interesting feature of compact Lie groups is the existence of a maximal abelian subgroup. 
\begin{proposition}[Cartan]
	Let $K$ be a compact, connected Lie group. Then there exists a maximal abelian subgroup which can be identified as a torus. Further, every maximal torus is conjugate. 
\end{proposition}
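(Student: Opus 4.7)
The plan is to establish existence and conjugacy separately, leaning on the compactness of $K$ in both cases through an $\Ad(K)$-invariant inner product on $\lie{k}$, which exists by averaging any inner product against the Haar measure (the averaging trick used earlier in the chapter for unitarity of representations carries over verbatim to the adjoint representation).

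For existence, I would first note that any maximal abelian subalgebra $\lie{t} \subseteq \lie{k}$ exists by a Zorn's Lemma argument on chains of abelian subalgebras ordered by inclusion (upper bounds are obtained by taking unions, which are again abelian since the bracket is determined by pairs). Let $T_0 \subseteq K$ be the connected Lie subgroup corresponding to $\lie{t}$, and let $T$ denote its closure in $K$. Since $K$ is compact and Hausdorff, $T$ is a compact Lie subgroup, and abelianness passes to the closure by continuity of multiplication. A compact connected abelian Lie group is isomorphic to a torus $\mathbb{T}^n$ (this is a standard consequence of the surjectivity of $\exp$ for compact connected abelian groups together with the fact that $\exp: \lie{t} \to T$ is a surjective homomorphism with discrete kernel, a lattice in $\lie{t}$). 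Maximality as a torus follows from maximality of $\lie{t}$: if $T \subsetneq T'$ were a strictly larger abelian subgroup with $T'$ closed, then $\operatorname{Lie}(T')$ would be a strictly larger abelian subalgebra of $\lie{k}$, contradicting the choice of $\lie{t}$.

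For conjugacy, I would use the critical-point argument. Let $T$ and $T'$ be two maximal tori and choose a \emph{topological generator} $X \in \lie{t}$ of $T$ (an element such that $\overline{\{\exp(tX): t \in \R\}} = T$); such generators exist densely in $\lie{t}$ by a Kronecker equidistribution argument on $\mathbb{T}^n$, and similarly pick $X' \in \lie{t'}$ generating $T'$. Fix an $\Ad(K)$-invariant inner product $\ip{\cdot,\cdot}$ on $\lie{k}$ and define
\[
f: K \to \R, \qquad f(k) = \ip{\Ad(k)X, X'}.
\]
Since $K$ is compact and $f$ is continuous, $f$ attains its maximum at some $k_0 \in K.$ Differentiating $t \mapsto f(k_0 \exp(tY))$ at $t = 0$ for arbitrary $Y \in \lie{k}$, the invariance of $\ip{\cdot,\cdot}$ under $\ad$ gives
\[
0 = \ip{[Y, \Ad(k_0)X], X'} = \ip{Y, [\Ad(k_0)X, X']}.
\]
Since $Y$ is arbitrary and the inner product is non-degenerate, $[\Ad(k_0)X, X'] = 0$. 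Now set $Y = \Ad(k_0)X$; the commutant of $X'$ inside $\lie{k}$ is exactly $\lie{t'}$ (because $X'$ is a generator of $T'$, its centralizer in $\lie{k}$ is the Lie algebra of the centralizer of $T'$ in $K$, which is $\lie{t'}$ by maximality of $T'$ as an abelian subgroup). Hence $\Ad(k_0)X \in \lie{t'}$, and applying $\exp$ and taking closures yields $k_0 T k_0^{-1} \subseteq T'$. Equality follows from maximality.

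The main obstacle, and the step I would spend the most care on, is the assertion that the centralizer of a topological generator $X'$ of $T'$ in $\lie{k}$ is exactly $\lie{t'}$. This uses that $\Ad$ is a representation of $K$, so $\ad X'$ is a skew-symmetric operator with respect to the invariant inner product, hence diagonalizable over $\C$; one then needs that the kernel of $\ad X'$ coincides with $\lie{t'}$, which amounts to saying no root of $(\lie{k}_\C, \lie{t}_\C)$ vanishes on the generator $X'$ — precisely the statement that $X'$ lies in the complement of all root hyperplanes, which is exactly the genericity baked into the choice of a topological generator. Once this density/genericity point is secured, the remainder of the conjugacy argument is a routine Lagrange-multiplier computation.
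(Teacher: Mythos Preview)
Your argument is correct and is in fact the standard one; the paper does not supply its own proof but simply cites \cite{Bump2013}, and what you have written is essentially the proof given there (existence via a maximal abelian subalgebra and closure, conjugacy via the critical-point argument on $k\mapsto\ip{\Ad(k)X,X'}$).

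One small refinement: your justification that the centralizer of the generator $X'$ in $\lie{k}$ equals $\lie{t}'$ via root hyperplanes is slightly circular, since the root decomposition of $\lie{k}_\C$ with respect to $\lie{t}'_\C$ is typically developed \emph{after} one knows maximal tori exist and are conjugate. The cleaner route is the one you almost state: if $[Y,X']=0$ then $\Ad(\exp tX')Y=Y$ for all $t$, and density of $\{\exp tX'\}$ in $T'$ gives $[\lie{t}',Y]=0$, so $\lie{t}'+\R Y$ is abelian; maximality of $\lie{t}'$ as an abelian subalgebra (which follows from maximality of $T'$ as a torus by your closure argument in the existence step) forces $Y\in\lie{t}'$. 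This avoids any appeal to roots.
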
  
\begin{proof}
	See \cite{Bump2013}. 	
\end{proof}

In a similar style to semisimple Lie algebras, we can define roots with respect to $\lie{t}$ the Lie algebra of $T\subseteq K$ a maximal torus. As $\lie{t}$ is abelian, the adjoint representation on $\lie{k}$ breaks up (as a direct sum) into one-dimensional irreducible representations. Each of these representations corresponds to a linear functional on $\lie{t}.$ We define roots as the those characters which yield non-zero spaces $\lie{k}_\alpha.$  

\begin{definition}
	Let $\lambda\in \lie{t}^*.$ Then we say $\lambda$ is \textbf{analytically integral} if for every $H\in \lie{t}$ with $\exp H=1$ then $\lambda(H)\in 2\pi i\Z.$ By a simple argument it can be shown that this condition is equivalent to the existence of a character $\xi_\lambda:T\to \C^\times$ such that $\xi_{\lambda}(\exp H)=e^{\lambda(H)}$ for all $H\in \lie{t}.$  
\end{definition}
\begin{proposition}
	If $\lambda$ is analytically integral, then $\lambda$ is algebraically integral. That is \[ (\lambda,\alpha)\in \Z, \text{ for each }\alpha\in \Delta(\lie{k},\lie{t})\] 
\end{proposition}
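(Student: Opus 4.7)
The plan is to exploit the fact that every root $\alpha \in \Delta(\mathfrak{k},\mathfrak{t})$ generates a homomorphism from $SU(2)$ (or at worst $SO(3)$) into $K$, whose maximal torus sits inside $T$. The analytic integrality of $\lambda$ on this circle subgroup will translate directly into the algebraic integrality condition $(\lambda,\alpha) \in \mathbb{Z}$.

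First I would set up the $\mathfrak{sl}_2$-triple attached to $\alpha$. Working in the complexification $\mathfrak{g} = \mathfrak{k}_\mathbb{C}$, pick nonzero root vectors $E_\alpha \in \mathfrak{g}_\alpha$ and $E_{-\alpha} \in \mathfrak{g}_{-\alpha}$, and form $H_\alpha \in \mathfrak{h} = \mathfrak{t}_\mathbb{C}$ via the Killing form as in Proposition~2.2 of the excerpt's root-space material, so that $B(H_\alpha, H) = \alpha(H)$ for all $H\in\mathfrak{h}$. Normalize so that $\{E_\alpha, E_{-\alpha}, H_\alpha^\vee\}$ with $H_\alpha^\vee := 2H_\alpha/\langle\alpha,\alpha\rangle$ satisfies the standard $\mathfrak{sl}_2$ relations $[H_\alpha^\vee, E_{\pm\alpha}] = \pm 2 E_{\pm\alpha}$, $[E_\alpha, E_{-\alpha}] = H_\alpha^\vee$. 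Because $\mathfrak{k}$ is the compact real form, a suitable real combination of $E_\alpha, E_{-\alpha}$ and $iH_\alpha^\vee$ lies in $\mathfrak{k}$ and spans a subalgebra isomorphic to $\mathfrak{su}(2)$.

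Next I would integrate this to the group level. Since $SU(2)$ is simply connected, the Lie algebra embedding $\mathfrak{su}(2) \hookrightarrow \mathfrak{k}$ lifts to a Lie group homomorphism $\phi_\alpha : SU(2) \to K$. The diagonal torus of $SU(2)$ is sent into $T$ by construction, and its generator corresponds to $iH_\alpha^\vee \in \mathfrak{t}$. The key computation is that $\exp_{SU(2)}(2\pi i H_\alpha^\vee) = I$ in $SU(2)$ (this follows from the standard fact that in $SU(2)$ the diagonal one-parameter subgroup $t \mapsto \mathrm{diag}(e^{it}, e^{-it})$ has period $2\pi$), hence $\exp_K(2\pi i H_\alpha^\vee) = \phi_\alpha(\exp_{SU(2)}(2\pi i H_\alpha^\vee)) = 1$ in $K$.

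Now I would apply the hypothesis. Since $H := 2\pi i H_\alpha^\vee \in \mathfrak{t}$ satisfies $\exp H = 1$, analytic integrality of $\lambda$ gives $\lambda(H) \in 2\pi i \mathbb{Z}$, i.e., $\lambda(H_\alpha^\vee) \in \mathbb{Z}$. But by definition of $H_\alpha$ and the induced inner product on $\mathfrak{t}^*$,
\begin{equation*}
\lambda(H_\alpha^\vee) \;=\; \frac{2\lambda(H_\alpha)}{\langle\alpha,\alpha\rangle} \;=\; \frac{2\langle\lambda,\alpha\rangle}{\langle\alpha,\alpha\rangle} \;=\; (\lambda,\alpha),
\end{equation*}
so $(\lambda,\alpha) \in \mathbb{Z}$, as required.

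The main obstacle is justifying the lift $\phi_\alpha: SU(2) \to K$ and, more delicately, the identity $\exp_K(2\pi i H_\alpha^\vee) = 1$. If $\phi_\alpha$ factors through $SO(3)$ rather than injects as $SU(2)$, the element $\exp(2\pi i H_\alpha^\vee)$ could in principle be nontrivial in $K$; however, one checks that $\exp(2\pi i H_\alpha^\vee)$ is central in the image and acts trivially in the adjoint representation on every root space (since $\alpha(H_\alpha^\vee) = 2$ forces $e^{2\pi i \beta(H_\alpha^\vee)} = 1$ for all $\beta \in \Delta$, using that $\beta(H_\alpha^\vee) \in \mathbb{Z}$ by the algebraic integrality of roots), hence it lies in the kernel of $\mathrm{Ad}$, and then a short argument using connectedness and the fact that roots separate elements of $T$ modulo the center closes the gap. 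Once this technicality is dispatched, the remainder is a one-line computation.
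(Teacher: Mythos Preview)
The paper does not give its own proof of this proposition; it simply cites Knapp's 1986 text. Your argument via the $\mathfrak{sl}_2$-triple and the associated homomorphism $\phi_\alpha: SU(2) \to K$ is exactly the standard proof one finds there, and it is correct.

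One comment on your final paragraph: the obstacle you flag is not really an obstacle. The lift $\phi_\alpha: SU(2) \to K$ exists unconditionally because $SU(2)$ is simply connected and you have a Lie algebra map $\mathfrak{su}(2) \to \mathfrak{k}$; whether $\phi_\alpha$ is injective or factors through $SO(3)$ is irrelevant to the computation. In either case $\exp_{SU(2)}(2\pi i H) = I$ in $SU(2)$, and applying $\phi_\alpha$ gives $\exp_K(2\pi i H_\alpha^\vee) = \phi_\alpha(I) = 1$ directly, with no appeal to the adjoint representation or centrality needed. So you can drop the closing discussion entirely and the proof stands.
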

\begin{proof}
	See \cite{Knapp86}. 
\end{proof}

\section{Complex Lie Groups} 
We now depart from compact groups momentarily to set up the remaining background for the Borel-Weil theorem. 
\subsection{Complexification} 
Let $G$ be a real Lie group. We would like to find a complex Lie group $G_\C$ which extends $G$ in some meaningful way. 

\begin{definition}
	The \textbf{complexification} of a real Lie group $G$ is a complex Lie group $G_\C,$ together with an analytic map $G\to G_\C$ such that the Lie algebra of $G_\C$ is \[ \lie{g}_\C=\lie{g}\tensor_\R \C\]
	and $G_\C$ is universal in the following sense: if $H$ is a complex Lie group, and $\varphi:G\to H$ is a smooth homomorphism, then there exists a unique holomorphic homomorphism $G_\C\to H$ making the appropriate diagram commute.  
\end{definition} 
\begin{remark}
	Note that not all Lie groups admit a complexification. In fact, the double (unversal) cover of $SL(2;\R)$ does not admit a complexification. Even if a complexification exists, it is not necessarily unique up to isomorphism.  
\end{remark}
The following theorem gives us another convenient property of compact groups: they always admit a complexification! 
\begin{theorem}
	Let $K$ be a compact Lie group. Then $K$ admits a complexification which is unique up to isomorphism. 
\end{theorem}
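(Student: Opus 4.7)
The plan is to establish existence via a faithful linear embedding and then derive the universal property, which automatically yields uniqueness. The theorem just proved above guarantees Haar measure on $K,$ which I will use freely.

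First I would invoke the Peter--Weyl theorem (obtained via Haar measure and the spectral theory of convolution operators on $L^2(K)$) to produce a faithful continuous representation $\rho: K \hookrightarrow U(n) \subset GL(n,\C)$ for some $n.$ Because $K$ is compact and $U(n)$ is Hausdorff, $\rho$ is a topological embedding, and by the closed subgroup theorem $\rho(K)$ is an embedded real Lie subgroup with Lie algebra $\rho_*\lie{k} \subset \lie{u}(n).$ Identifying $K$ with $\rho(K),$ consider the complex subspace $\lie{k}_\C := \lie{k} + i\lie{k} \subset \lie{gl}(n,\C).$ Since $[\lie{k},\lie{k}] \subseteq \lie{k}$ and the bracket on $\lie{gl}(n,\C)$ is $\C$-bilinear, $\lie{k}_\C$ is a complex Lie subalgebra, and as a real vector space $\lie{k}_\C \cong \lie{k} \otimes_\R \C.$ By the analytic subgroup theorem, there is a unique connected complex Lie subgroup $(K_\C)_0 \subset GL(n,\C)$ with Lie algebra $\lie{k}_\C;$ define $K_\C$ to be the (closed) subgroup generated by $K$ and $(K_\C)_0,$ so $K_\C$ inherits a complex-analytic structure with Lie algebra $\lie{k}_\C.$

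Next I would establish a global polar decomposition: the map
\[ \Psi: K \times i\lie{k} \longrightarrow K_\C, \qquad (k, X) \mapsto k \exp(X) \]
is a diffeomorphism. This is the technical heart of the argument. One shows that for $X \in i\lie{k} \subset i\lie{u}(n),$ the matrix $X$ is Hermitian, so $\exp(X)$ is positive definite Hermitian; the classical polar decomposition of $GL(n,\C)$ then implies $\Psi$ is injective and that $\exp$ embeds $i\lie{k}$ diffeomorphically as a closed submanifold. Surjectivity follows because the image of $\Psi$ is open (its differential at the identity is the decomposition $\lie{k}_\C = \lie{k} \oplus i\lie{k}$), closed (by completeness of the polar form), and contains a neighborhood of $1,$ hence equals the connected component, with disconnected pieces handled by translating by elements of $K.$

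With polar decomposition in hand, the universal property is now accessible. Given a smooth homomorphism $\varphi: K \to H$ into a complex Lie group $H,$ differentiate to get $d\varphi: \lie{k} \to \lie{h}$ and extend $\C$-linearly to a complex Lie algebra map $\Phi_*: \lie{k}_\C \to \lie{h}$ (complex bilinearity of the brackets makes $\Phi_*$ a homomorphism). Define
\[ \Phi: K_\C \to H, \qquad \Phi(k \exp X) := \varphi(k) \exp(\Phi_*(X)), \]
which is well-defined by uniqueness of the polar decomposition. To verify $\Phi$ is a holomorphic group homomorphism one reduces, via the multiplication formula, to checking that $\Phi_*$ is $\C$-linear on $\lie{k}_\C,$ which holds by construction; any two candidate extensions must agree on $K$ and on a neighborhood of $1$ in $K_\C,$ hence agree globally by connectedness of $(K_\C)_0$ and generation of $K_\C$ by $K \cup (K_\C)_0.$ Uniqueness up to isomorphism is then formal: if $(K'_\C, \iota')$ is any other complexification, the universal property applied symmetrically gives inverse holomorphic homomorphisms $K_\C \rightleftarrows K'_\C$ whose composites equal the respective identities (again by uniqueness of extensions agreeing on $K$).

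The principal obstacle I anticipate is Step 3, the polar decomposition $\Psi: K \times i\lie{k} \overset{\sim}{\to} K_\C,$ because it requires genuine analytic content (rather than categorical maneuvering): one must control $\exp$ on $i\lie{k}$ globally and rule out "extra" elements of $K_\C$ not of the form $k\exp(X).$ Everything downstream — holomorphicity of $\Phi,$ well-definedness, and ultimately uniqueness — leans on this decomposition, which is why embedding into $U(n)$ at the outset (so that $i\lie{k}$ consists of Hermitian matrices and classical linear algebra applies) is the pivotal simplification.
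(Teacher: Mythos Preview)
The paper does not supply an argument for this theorem; it simply cites Knapp's \emph{Lie Groups Beyond an Introduction} (Theorem~4.69 and Proposition~7.5). Your outline is the standard one and matches what one finds there, with one caveat worth flagging: your formula $\Phi(k\exp X) := \varphi(k)\exp(\Phi_* X)$ certainly defines a well-defined map (by uniqueness of the polar factorization), but its multiplicativity does not reduce merely to $\C$-linearity of $\Phi_*$, since the polar factorization $K\times i\lie{k}\to K_\C$ is not itself a group homomorphism. The usual fix is to observe, via the polar decomposition you establish, that $K_\C$ deformation retracts onto $K$, so $\pi_1(K_\C)\cong\pi_1(K)$; one then integrates the Lie algebra map $\Phi_*$ to a local holomorphic homomorphism near $1$ and argues that the monodromy obstruction to extending it over all of $K_\C$ is killed precisely because a global extension already exists on the subgroup $K$ (namely $\varphi$). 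You correctly identify the polar decomposition as the analytic crux, and with that repaired step the remainder of your argument goes through.
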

\begin{proof}
	See \cite[Theorem 4.69 and Proposition 7.5]{Knapp2005}
\end{proof}

It turns out then that the finite-dimensional complex representations of compact Lie groups are is bijective correspondence with finite-dimensional holomorphic representations of $K_\C.$ 
Irreducibility need not be preserved by restriction. 	

We now come to arguably the most important decomposition of complex Lie algebras and the Lie groups associated to them. It is responsible for nearly all of the structure theory for semisimple Lie groups. 
\begin{theorem}[Iwasawa Decomposition]
	Let $\lie{g}$ be a real semisimple Lie algebra and $G$ a connected Lie group with Lie algebra $\lie{g}.$ Then there exist Lie subalgebras $\lie{k},\lie{a},\lie{n}$ and associated analytic subgroups $K,A,N,$ such that \[ \lie{g}=\lie{k}\ds \lie{a}\ds \lie{n}\]
	and \[ G=KAN\]	
	where $K$ is compact, $A$ is abelian, and $N$ is nilpotent and similarly for the lie algebras. 
\end{theorem}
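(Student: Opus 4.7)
The plan is to establish the decomposition first at the Lie algebra level using a Cartan involution and a restricted root space decomposition, and then lift to the group via the exponential map together with a global injectivity/surjectivity argument. Throughout, I will rely on the structure theory of real semisimple Lie algebras already implicit in the treatment of roots earlier in this chapter.

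First I would fix a Cartan involution $\theta$ on $\mathfrak{g}$, yielding the Cartan decomposition $\mathfrak{g} = \mathfrak{k}_0 \oplus \mathfrak{p}$ into the $+1$ and $-1$ eigenspaces of $\theta$. The Killing form $B$ is negative definite on $\mathfrak{k}_0$ and positive definite on $\mathfrak{p}$, and the form $B_\theta(X,Y) := -B(X,\theta Y)$ is a $\theta$-invariant inner product making $\mathrm{ad}(X)$ for $X \in \mathfrak{p}$ symmetric. Choose $\mathfrak{a} \subseteq \mathfrak{p}$ maximal abelian; because $\mathrm{ad}(\mathfrak{a})$ is a commuting family of symmetric operators, it simultaneously diagonalizes, producing the restricted root space decomposition
\[
\mathfrak{g} = \mathfrak{g}_0 \oplus \bigoplus_{\alpha \in \Sigma} \mathfrak{g}_\alpha, \qquad \mathfrak{g}_0 = \mathfrak{m} \oplus \mathfrak{a},
\]
where $\mathfrak{m}$ is the centralizer of $\mathfrak{a}$ in $\mathfrak{k}_0$ and $\Sigma \subseteq \mathfrak{a}^*$ is the set of restricted roots. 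Since $\theta$ carries $\mathfrak{g}_\alpha$ to $\mathfrak{g}_{-\alpha}$, the set $\Sigma$ is symmetric; fix a notion of positivity on $\mathfrak{a}^*$ (e.g.\ lexicographic with respect to a basis) and set $\mathfrak{n} = \bigoplus_{\alpha \in \Sigma^+} \mathfrak{g}_\alpha$. The inclusion $[\mathfrak{g}_\alpha, \mathfrak{g}_\beta] \subseteq \mathfrak{g}_{\alpha+\beta}$ shows $\mathfrak{n}$ is a nilpotent subalgebra, and $\mathfrak{a}$ is abelian by construction. Writing $\mathfrak{k} = \mathfrak{k}_0$, I would then verify $\mathfrak{g} = \mathfrak{k} \oplus \mathfrak{a} \oplus \mathfrak{n}$ by observing that for each $X \in \mathfrak{g}_\alpha$ with $\alpha \in \Sigma^+$ we have $X = (X + \theta X) - \theta X$ with $X + \theta X \in \mathfrak{k}$ and $-\theta X \in \mathfrak{g}_{-\alpha}$; this together with $\mathfrak{m} \subseteq \mathfrak{k}$ gives $\mathfrak{k} + \mathfrak{a} + \mathfrak{n} = \mathfrak{g}$, and the sum is direct by projecting onto distinct restricted root spaces.

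Next I would lift to the group by letting $K, A, N$ be the analytic subgroups with Lie algebras $\mathfrak{k}, \mathfrak{a}, \mathfrak{n}$, and considering the multiplication map
\[
\mu : K \times A \times N \longrightarrow G, \qquad (k,a,n) \mapsto kan.
\]
At the identity the differential of $\mu$ is $(X, H, Y) \mapsto X + H + Y$, which is an isomorphism by the algebra-level decomposition, so $\mu$ is a local diffeomorphism near $(1,1,1)$; by left translating one obtains local diffeomorphism everywhere, so $\mu$ is a submersion of open image. For surjectivity and injectivity, I would pass to a faithful representation. Since $\mathfrak{g}$ is semisimple, $\mathrm{Ad}: G \to GL(\mathfrak{g})$ has discrete kernel, and $B_\theta$ provides an inner product with respect to which $\mathrm{Ad}(K)$ acts orthogonally and $\mathrm{Ad}(\exp \mathfrak{p})$ acts by positive definite symmetric operators. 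Standard polar decomposition in $GL(\mathfrak{g})$ then shows $\mathrm{Ad}(G) = \mathrm{Ad}(K)\exp(\mathrm{ad}\, \mathfrak{p})$ with uniqueness, giving the global Cartan decomposition $G = K \exp(\mathfrak{p})$. Within $\exp(\mathfrak{p})$ I would then use the $KAK$-style argument: any element of $\exp(\mathfrak{p})$ is $\mathrm{Ad}(K)$-conjugate to an element of $\exp(\mathfrak{a})$, and tracking through the conjugation one rewrites the product as $kan$. The uniqueness of the $KAN$ factorization then follows because a coincidence $k_1 a_1 n_1 = k_2 a_2 n_2$ forces $k_1^{-1} k_2 \in K \cap ANA^{-1}N^{-1}$, which must equal the identity by computing fixed points of $\theta$ inside the connected solvable group $AN$.

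The main obstacle is the global lift, not the Lie algebra decomposition. The algebraic statement follows cleanly from the $\theta$-symmetry of $\Sigma$, but showing that $\mu$ is a \emph{diffeomorphism} — not merely a local diffeomorphism or a surjection — requires both the uniqueness of polar decomposition in a faithful representation and control over the connected components: one must verify that $A$ is simply connected (hence $A \cong \mathfrak{a}$ via $\exp$), that $N$ is simply connected (which follows from nilpotence via the Baker–Campbell–Hausdorff formula and the fact that $\exp$ is a global diffeomorphism $\mathfrak{n} \to N$), and that $K$ is closed, connected and compact modulo the center of $G$. Subtleties arise when $G$ is not linear or has non-trivial center, so I would either assume $G$ has finite center (the case of most interest) or work inside $\mathrm{Int}(\mathfrak{g})$ and then pull the decomposition back along a covering; once $\mu$ is shown to be a bijective local diffeomorphism it is automatically a diffeomorphism, completing the proof.
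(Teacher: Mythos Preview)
Your Lie algebra argument is essentially identical to the paper's: Cartan involution, restricted root space decomposition relative to a maximal abelian $\lie{a}\subseteq\lie{p}$, the observation $\theta(\lie{g}_\alpha)=\lie{g}_{-\alpha}$, the rewriting $X=(X+\theta X)-\theta X$ to absorb negative root spaces into $\lie{k}\oplus\lie{n}$, and then directness (the paper invokes $\theta$ where you project onto root spaces, but these are the same check). For the group-level statement the paper simply cites Helgason, so your sketch via the differential of $\mu$, polar decomposition in a faithful representation, and the analysis of $K\cap AN$ actually supplies more detail than the paper does; the one soft spot is the ``$KAK$-style'' passage from $K\exp(\lie{p})$ to $KAN$, which is not the cleanest route---the standard argument instead shows directly that $\mu$ has open and closed image and that $K\cap AN=\{e\}$ via eigenvalue considerations---but this is a minor issue of presentation rather than a gap.
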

\begin{proof}
	For the lie algebra decomposition, let $(\lie{g},\theta)$ by a semisimple Lie algebra together with a Cartan involution. Put $\lie{g}=\lie{k}\ds \lie{p}$ the associated Cartan decomposition and $\lie{h}_{\lie{p}}$ be a maximal abelian subspace of $\lie{p}.$ As $\lie{h}_\lie{p}$ is maximal abelian, we can simultaneously diagonalize all elements $\ad H, H\in \lie{h}_\lie{p}.$ Let \[ \lie{g}_{\lambda}=\{ X\in \lie{g}: [X,H]=\lambda(H)X, \forall H\in \lie{h}_\lie{p}, \lambda\in \lie{h}_\lie{p}^*\}\]
	Notice that $\theta(\lie{g}_\lambda)=\lie{g}_{-\lambda}.$ 
	Pick an ordering on $\lie{h}_\lie{p}^*$ and let $\lie{n}=\bigoplus_{\alpha>0} \lie{g}_\alpha.$ 
	Since $\lie{h}_\lie{p}$ is $\theta$-invariant and maximal abelian, we have that \[ \lie{g}_0=(\lie{g}_0\cap \lie{k})+\lie{h}_\lie{p}\]
	Now if $X\in \bigoplus_{\alpha<0} \lie{g}_\alpha$ we can write it as $X=X+\theta(X)-\theta(X).$ This decomposition has $X\in \lie{k}\ds \lie{n}$. Therefore, we have a decomposition \[ \lie{g}=\lie{k}+\lie{h}_{\lie{p}}+\lie{n} \]
	Applying $\theta$ we conclude that this decomposition is direct. 
	
	For the Lie group decomposition see \cite{Helgason1978}. 
\end{proof}	
\noindent This theorem also holds in the complex case. There is some slight modification that needs to be done to the proof above, but the big steps are identical. 

\begin{example}
	\begin{enumerate}
		\item Let $\lie{g}=\lie{sl}_n(\R).$ $SO(n)\hookrightarrow SL_n(\R)$ is a maximal compact subgroup and therefore $\lie{so}(n)$ is the corresponding compact lie algebra. Let $\lie{a}$ be the traceless diagonal matrixes and $|lie{n}$ be strictly upper triangular matrices. Then \[ \lie{sl}_n(\R)=\lie{so}(n)\ds \lie{a}\ds \lie{n}\]
		We can equivalently realize this on the group level as $SL_n(\R)=SO(n)\cdot T\cdot N$ where $N$ is upper triangular matrices and $T$ is the maximal torus. Notice that this is equivalent to the Gram-Schmidt orthogonalization of a matrix in $\lie{sl}_n.$   
		
		Now lets consider the Cartan decomposition of $\lie{sl}_n(\R)=\lie{so}(n)\ds \lie{p}$ where $\lie{p}$ are symmetric matrices. Notice that $\lie{so}(n)$ appears in both decompositions yet for the Cartan decomposition we have no lie algebra structure on $\lie{p}$. This should not be surprising however as both decompositions are equivalences as vector spaces. 
		
		\item Now consider $\lie{sp}_{2n}(\C).$   We have that \[ \lie{k}=\left\{ \Mat{U&V\\-\bar{V} &\bar{U} }: U \text{ skew-Hermitian}, V \text{ symmetric} \right\} \]
		Similar to $\lie{sl}_n$ we have $\lie{a}=\left\{ \Mat{A& 0\\ 0& -A}: A \text{ real diagonal matrix} \right\}$ which are the diagonal matrices and the nilpotent lie algebra are all upper triangular matrices, but now we can decompose them further into \[ \lie{n}=\left\{ \Mat{Z_1& Z_2\\ 0& -Z_1^T}: Z_1 \text{ strictly upper triangular}, Z_2 \text{ symmetric}  \right\}\]
		Then $\lie{sp}_{2n}(\C)=\lie{k}\ds \lie{a} \ds \lie{n}.$ 
	\end{enumerate}
\end{example} 

\begin{theorem}\label{Borel_Subgroup_Embedding} 
	Let $G$ be the complexification of a compact Lie group $K,$ and $T\subseteq K$ a maximal torus, $T_\C$ its complexification. Let $\lie{g}=\lie{k}_\C$ be the complexified Lie algebra of $\lie{k}$ and $\lie{t}_\C$ the Lie algebra of $T_\C.$ Denote the set of roots of $\lie{g}$ with respect to $\lie{t}_\C$ by $\Delta.$ Fix an ordering on $\lie{t}_\C^*$ and write $\Delta^+$ the set of positive roots. Denote by $\lie{n}=\bigoplus_{\alpha\in \Delta^+} \lie{g}_\alpha$ and let $\lie{b}=\lie{t}_\C\ds \lie{n}.$ If we denote by $N=\exp(\lie{n})$ and $B=T_\C N.$ Then $N$ and $B$ are closed subgroups of $G.$ Further, there exists $n>0$ such that $G\hookrightarrow GL_n(\C)$ such that $K$ consists of unitary matrices, $T_\C$ consists of diagonal matrices, and $B$ consists of upper triangular matrices.   
\end{theorem}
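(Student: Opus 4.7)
The plan is to first produce a faithful finite-dimensional holomorphic representation of $G$ with the required block structure on $K$, $T_\C$, and $B$, and then deduce the closedness of $N$ and $B$ as a corollary of the explicit realization inside upper triangular matrices.

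First I would construct the embedding. Since $K$ is compact, the Peter--Weyl theorem produces a faithful finite-dimensional unitary representation $\rho:K\to U(V)$. By the universal property of complexification, $\rho$ extends uniquely to a holomorphic homomorphism $\tilde\rho:G\to GL(V)$, and since $K$ is the maximal compact subgroup of $G$ (with $\lie{k}$ a real form of $\lie{g}$) the faithfulness of $\rho$ together with a connectedness argument forces $\tilde\rho$ to be injective with discrete kernel; refining $V$ by adding the representation $V\oplus V^*$ if necessary makes $\tilde\rho$ a closed embedding into $GL(V)\cong GL_n(\C)$. Averaging a Hermitian inner product on $V$ against the Haar measure on $K$ makes $\rho(K)$ unitary, so $K\subseteq U(n)$ in these coordinates.

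Next I would choose a basis that diagonalizes $T_\C$ and triangularizes $B$. Decompose $V=\bigoplus_{\mu\in \lie{t}_\C^*} V_\mu$ into $\lie{t}_\C$-weight spaces; each weight of $T$ is analytically integral and each weight of $T_\C$ extends holomorphically from these, so $T_\C$ acts diagonally in any basis compatible with this decomposition. Now totally order the weights $\mu$ appearing in $V$ by reversing the fixed ordering on $\lie{t}_\C^*$, and list the basis vectors from largest weight to smallest. The fundamental commutation relation $[\lie{t}_\C,\lie{g}_\alpha]\subseteq \lie{g}_\alpha$ together with $\lie{g}_\alpha\cdot V_\mu\subseteq V_{\mu+\alpha}$ shows that for $\alpha\in\Delta^+$, the operator $\tilde\rho(X)$ with $X\in\lie{g}_\alpha$ strictly raises weight, hence is strictly upper triangular in this ordered basis. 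Thus $\tilde\rho(\lie{n})$ consists of strictly upper triangular matrices, and since the matrix exponential of a strictly upper triangular matrix is unipotent upper triangular, $\tilde\rho(N)=\exp(\tilde\rho(\lie{n}))$ lies in the group of unipotent upper triangular matrices. Combined with the diagonal action of $T_\C$, the product $B=T_\C N$ sits inside the upper triangular matrices, as required.

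Finally I would use the embedding to deduce closedness. The exponential map from a nilpotent Lie algebra of strictly upper triangular matrices to the corresponding unipotent group is a polynomial diffeomorphism with polynomial inverse (given by the truncated logarithm series, which terminates because of nilpotency), so $\tilde\rho(N)$ is closed in $GL_n(\C)$; pulling back through the embedding shows $N$ is closed in $G$. Since $\lie{t}_\C$ normalizes each $\lie{g}_\alpha$, $T_\C$ normalizes $N$, so $B=T_\C N$ is genuinely a subgroup, and it equals the preimage under $\tilde\rho$ of the closed subgroup of upper triangular matrices intersected with $\tilde\rho(G)$, hence is closed in $G$.

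The main obstacle I anticipate is the embedding step: guaranteeing that the holomorphic extension $\tilde\rho$ is actually injective (not merely a local isomorphism) and that its image is closed. Peter--Weyl gives a faithful unitary $K$-representation, but the extension to $G_\C$ need not stay faithful for a general $G$; here we are saved by the hypothesis that $G=K_\C$ is the canonical complexification of a compact group, so $\ker\tilde\rho\cap K=\{1\}$ and the polar-type decomposition $G=K\exp(i\lie{k})$ forces $\ker\tilde\rho=\{1\}$. Everything else is bookkeeping about weight orderings and the standard fact that exponential of a nilpotent matrix algebra is a closed unipotent group.
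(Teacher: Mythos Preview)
Your proposal is correct and follows the same overall architecture as the paper's proof: embed $G$ faithfully via a unitary representation of $K$ extended holomorphically, realize $N$ inside unipotent upper triangular matrices, and use the polynomial $\exp$/$\log$ bijection on nilpotent matrices to exhibit $N$ as a closed subvariety; then $T_\C$ normalizing $N$ gives $B$ closed. The one genuine divergence is in how you triangularize. The paper observes that $\lie{b}$ is solvable and invokes Lie's Theorem to obtain a basis of $V$ in which every $d\pi(X)$, $X\in\lie{b}$, is upper triangular. You instead decompose $V$ into $\lie{t}_\C$-weight spaces and order a basis by weight, using $\lie{g}_\alpha\cdot V_\mu\subseteq V_{\mu+\alpha}$ to see that positive root vectors act strictly upper triangularly. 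Your route has the advantage of making the diagonal form of $T_\C$ explicit (the paper's application of Lie's Theorem yields only upper triangular for all of $\lie{b}$, and the diagonal claim for $T_\C$ is never separately justified there), at the cost of requiring the weight-space decomposition of $V$ rather than a single black-box theorem. You are also more careful than the paper about the injectivity of the holomorphic extension $\tilde\rho$; the paper simply identifies $G$ with its image without comment, so your polar-decomposition remark fills a small gap.
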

\begin{proof}
	Let $\pi:K\to \Aut{V}$ be a faithful unitary representation. By the definition of the complexification, we can extend $\pi$ to a holomorphic representation (also denoted $\pi$) $G\to \Aut{V}.$ Clearly, the Lie algebra $\lie{b}$ is solvable as $[\lie{b},\lie{b}]=\lie{n}$ and $\lie{n}$ is nilpotent. By Lie's Theorem, we may find a basis of $V$ such that $d\pi(X)$ is upper-triangular for all $X\in \lie{b}.$ 
	
	Identify $G$ with its imagine in $GL_n(\C)$ and its Lie algebra as a Lie subalgebra of $\lie{gl}_n(\C).$ Thus, we write $X$ instead of $\pi(X)$ and regard it as a matrix. Now, as each $X\in \lie{n}$ is nilpotent we know that \[ \exp(X)=I_n+X+\frac{1}{2} X^2+...+\frac{1}{n!}X^n\]
	Therefore, $Y-I_n$ is a sum of strictly upper triangular matrices and is therefore a strictly upper triangular matrix, hence nilpotent. Reversing the exponential series, we have that $X=\log(\exp(X))$ where we define $\log(Y)=\sum \frac{(-1)^{k-1}}{k}(Y-I_n)^k $  	
	for $Y$ an upper triangular unipotent matrix. In this case, the sum is finite. This defines a continuous map $\lie{n}\to N$ which is an inverse to $\exp.$ Therefore $\lie{n}\to N$ is a homeomorphism. Let $\lie{n}'$ be the Lie subalgebra of $\lie{gl}_n(\C)$ of upper-triangular nilpotent matrices and $\lambda_1,...,\lambda_r$ a set of linear functionals on $\lie{n}'$ such that $\lie{n}=\bigcap_{i} \ker \lambda_i.$ Then $N$ is characterized as the set of $A\in GL_n(\C)$ such that \[ \lambda_i(\log(g))=0\]
	These comprise a set of polynomial equations characterizing $N$ as a closed subgroup (subvariety) of $GL_n(\C).$ 
	
	Now, since $[\lie{t}_\C,\lie{n}]\subseteq \lie{n}$, we know that $T_\C$ normalizes $N$ and thus $B=T_\C N$ is a close subgroup of $GL_n(\C).$ Further its Lie algebra is $\lie{b}$ by construction. This completes the proof.   
\end{proof}

The group $B$ is a bit too big for the Iwasawa decomposition of $G$ above. Let $\lie{a}=i\lie{t}.$ It is the Lie algebra of a closed, connected subgroup $A$ or $T.$ If we embed $K$ and $G$ into $GL_n(\C),$ then $T$ is the group of diagonal matrices and $A$ is the group of diagonal matrices with positive real entries. Put $B_0=AN.$ Then by the Iwasawa decomposition $G=KB_0$ as a direct product.

\begin{corollary}\label{Complex_Structure}
	Let $K$ be a compact Lie group and $T$ a maximal torus. If we denote by $G$ the complexification of $K,$ then there is a bijection $K/T\cong G/B$ where $B=T_\C N.$ This gives $K/T$ the structure of a complex manifold. 
\end{corollary}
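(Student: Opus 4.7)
The plan is to produce the bijection via the Iwasawa decomposition stated above, then transport the complex-analytic structure from $G/B$ back to $K/T$ through that bijection.

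First I would unpack $B$ using the structure theorem of the previous proof. Since $T_\C$ is the complexification of the compact torus $T$, and since $\lie{t}_\C = \lie{t}\ds i\lie{t} = \lie{t}\ds \lie{a}$, the exponential map gives $T_\C = TA$ as a direct product (here using that $T$ is compact, $A = \exp(i\lie{t})$ is a vector group, and their intersection in $GL_n(\C)$ is trivial because diagonal unitary matrices meet diagonal positive matrices only at the identity). Hence $B = T_\C N = TAN = T\cdot B_0$, where $B_0 = AN$ is the solvable factor from the Iwasawa decomposition $G = KB_0$.

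Next I would verify that the natural map $\varphi: K\to G/B$, $k\mapsto kB$, is surjective with fibres precisely the left cosets of $T$. Surjectivity is immediate from $G = KB_0 \subseteq KB$: every $gB$ can be written $kb_0B = kB$. For the fibres, suppose $k_1B = k_2B$; then $k_2^{-1}k_1 \in K\cap B$. The key step is to show $K\cap B = T$. The inclusion $T\subseteq K\cap B$ is obvious, so I need the reverse. If $k = tan\in K$ with $t\in T$, $a\in A$, $n\in N$, then using the $GL_n(\C)$-embedding from Theorem \ref{Borel_Subgroup_Embedding}, $k$ is unitary while $an$ is upper triangular with the positive real diagonal entries of $a$. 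Unitarity forces $an$ to be diagonal and unitary, so $n = 1$ and $a$ has diagonal entries of modulus one; since the entries are also positive reals, $a = 1$. Thus $k = t\in T$. Therefore $\varphi$ descends to a bijection $\bar\varphi: K/T \overset{\sim}{\longrightarrow} G/B$.

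For the complex-manifold statement, I would invoke the standard fact that if $H$ is a closed complex Lie subgroup of a complex Lie group $G$, then $G/H$ carries a unique structure of complex manifold for which the projection $G\to G/H$ is a holomorphic submersion. Here $B$ is closed in $G$ (proved in Theorem \ref{Borel_Subgroup_Embedding}) and its Lie algebra $\lie{b} = \lie{t}_\C\oplus \lie{n}$ is a complex subalgebra of $\lie{g}$, so $B$ is indeed a complex subgroup. Consequently $G/B$ is a complex manifold, and the bijection $\bar\varphi$ can then be used to transport this structure to $K/T$.

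The main obstacle will be the computation $K\cap B = T$: everything else is essentially bookkeeping around the Iwasawa decomposition, but this equality is what makes the bijection honest, and it really does rely on choosing the matrix realization from Theorem \ref{Borel_Subgroup_Embedding} in which $K$ sits inside the unitary group and $B$ inside the upper triangular matrices. A secondary subtlety to be careful about is that $\bar\varphi$ is automatically continuous (and smooth), but to conclude that the complex structure on $K/T$ is well defined independently of choices one should note that any two such transported structures differ by a $K$-equivariant biholomorphism of $G/B$, which is the identity; this is a routine check but worth flagging.
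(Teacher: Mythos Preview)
Your proposal is correct and follows essentially the same route as the paper: use the Iwasawa decomposition to get $G=KB$ with $K\cap B=T$, deduce the bijection $K/T\cong G/B$, and pull back the complex structure from $G/B$. The paper simply asserts $K\cap B=T$ without argument, whereas you supply the matrix computation via Theorem~\ref{Borel_Subgroup_Embedding}; otherwise the two proofs coincide.
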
  
\begin{proof}
	 From the Iwasawa decomposition, we have that $G=KB$ with $B\cap K=T.$ Note that this decomposition is not direct as $\lie{b}+\lie{k}=\lie{g}$ is not a direct sum. Then we have a diffeomorphism \[ G/B\to K/T\]
	  which is $K$-equivariant. Now as $G$ is complex Lie group and $B$ is a complex analytic submanifold, the quotient $G/B$ has the structure of a complex manifold. Further, the action of $K$ on $K/T$ is via holomorphic maps. 
\end{proof}
As well will see later, the proof of the Borel-Weil theorem uses the Iwasawa decomposition in a fundamental way. In fact, nearly all of the structure theory for semisimple Lie groups is due to the Iwasawa decomposition.

\section{Vector Bundles}

\begin{definition}
	Let $M$ be a complex manifold. We call a triple $(E,\pi, V)$ consisting of a complex manifold, a holomorphic projection map, and a complex vector space \textbf{holomorphic vector bundle}  of rank $\dim V$ over $M$ if: 
	\begin{enumerate}
		\item $\pi:E\to M$ is surjective and a local isomorphism. 
		\item There exist biholomorphic $\pi^{-1}(U)\to U\times V.$
		\item The fibre $\pi^{-1}(p)\cong {p}\times V\cong V$ is endowed with a vector space structure.   
	\end{enumerate}
\end{definition}
Similarly, we could have defined vector bundles as $E=\coprod_{p\in M} V_p$ where $V_p=\{p\}\times V.$ In this sense, we see that $TM$ and $T^*M$ are vector bundles over smooth manifolds. Similar to those, $\Gamma(M,E)$ is a $\mathcal{O}_M$-module. The main purpose of this section is to understand transformations on bundles and transformations between them. 
\begin{definition}
	Let $(E,\pi)$ and $(E',\pi')$ be holomorphic vector bundles over $M$ and $M'$ respectively. Then a \textbf{holomorphic bundle homomorphism} is a map $F:E\to E'$ such that there exists a map $f:M\to M'$ and the following diagram commutes: 
	\[
	\begin{tikzcd}
E \arrow[d, "\pi"'] \arrow[r, "F"] & E' \arrow[d, "\pi'"] \\
M \arrow[r, "f"']                    & M'                      
\end{tikzcd}
	\]
\end{definition} 
	
\begin{proposition}
	If $F$ is holomorphic, then $f$ is holomorphic.  
\end{proposition}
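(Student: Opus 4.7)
The plan is to mimic exactly the argument given for the smooth case in the previous subsection: factor $f$ as a composition of three maps that are each already known to be holomorphic. Concretely, I would write
\[ f = \pi' \circ F \circ \zeta, \]
where $\zeta : M \to E$ is the zero section of the bundle $\pi : E \to M$.

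First I would verify that the zero section $\zeta$ is holomorphic. Locally, pick any $p \in M$ and a trivializing neighbourhood $U \subseteq M$ on which there is a biholomorphism $\Phi : \pi^{-1}(U) \to U \times V$ commuting with projection to $U$. The zero section restricted to $U$ is simply $q \mapsto \Phi^{-1}(q,0)$, which is a composition of the identity on $U$, inclusion into $U\times V$ at the constant value $0 \in V$, and the biholomorphism $\Phi^{-1}$. Since each of these is holomorphic, $\zeta|_U$ is holomorphic, and holomorphy is a local property, so $\zeta$ is holomorphic on all of $M$.

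Next, I would verify the factorization itself. For any $q \in M$, $\zeta(q)$ lies in the fibre $\pi^{-1}(q)$, so $F(\zeta(q))$ lies in the fibre $(\pi')^{-1}(f(q))$ by the commutativity of the defining diagram for a bundle homomorphism. Consequently $\pi'(F(\zeta(q))) = f(q)$, confirming $f = \pi' \circ F \circ \zeta$. Then $\pi'$ is holomorphic by the definition of a holomorphic vector bundle (it is required to be holomorphic, in fact a local isomorphism), $F$ is holomorphic by hypothesis, and $\zeta$ is holomorphic by the previous paragraph. Therefore $f$, as a composition of holomorphic maps between complex manifolds, is holomorphic.

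There is no real obstacle here; the only minor subtlety is ensuring that the zero section is well-defined and holomorphic in the holomorphic category, which follows from the fact that the local trivializations are themselves biholomorphisms and that each fibre has a distinguished vector space structure (so the element $0$ makes sense independently of the choice of trivialization). Everything else is formally identical to the smooth case treated earlier.
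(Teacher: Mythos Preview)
Your proposal is correct and follows exactly the paper's own argument: the paper simply writes $f = \pi' \circ F \circ \zeta$ with $\zeta$ the zero section and notes this is a composition of holomorphic maps. You have supplied additional detail (the local verification that $\zeta$ is holomorphic), but the route is identical.
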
	
\begin{proof}
	$f=\pi'_M\comp F\comp \zeta$ where $\zeta$ is the zero section. This is a composition of holomorphic maps and therefore holomorphic. 
\end{proof}
This lets us define a category $\textbf{Bun}_{H}(M)$ whose objects are holomorphic vector bundles over $M$ and where morphisms are holomorphic bundle homomorphisms. The forgetful functor \[U:\textbf{Bun}_{H}(M)\to \textbf{Man}_\C\] (with $\textbf{Man}_\C$ the category of complex manifolds) is faithful. In general, it is not full as there exist holomorphic maps $E\to E'$ which do not commute with the projection maps. We will denote by $\textbf{Bun}_H(M)^{<\infty}$ the category of finite rank vector bundles. In some more recent treatments of this material (say in \cite{Wedhorn2016}) this category is treated as finite locally free sheaves over $M.$ This is not useful for the theory presented below.   

\begin{example} 
	Let $TM$ denote the real tangent bundle for the complex manifold $M.$ It is a rank $2\dim M$ real vector bundle. The complex structure on $M$ induces an almost complex structure $J$ on $TM.$ This induces an endomorphism $J:TM\to TM$ such that $J^2=-1.$ This can thus be extended to a endomorphism $TM\tensor \C\to TM\tensor \C$ defined on fibres by $J(X+iY)=J(X)+iJ(Y).$ As $J^2=-1,$ we get a decomposition of $TM\tensor \C$ into two eigenspaces for $J$ corresponding to the eigenvalues $i$ and $-i.$ Then \[ TM\tensor \C=TM_i\ds TM_{-i}\]
	Then $TM_i$ is the holomorphic tangent bundle to $M.$ The bundle $TM_{-i}$ is called the anti-holomorphic tangent bundle.   
\end{example}

If $E$ and $E'$ are holomorphic vector bundles on a complex manifold $M,$ denote their space of holomorphic sections by $\Gamma(E)$ and $\Gamma(E').$ If $F:E\to E'$ is a bundle homomorphism, it induces a map \[ \widetilde{F}:\Gamma(E)\to \Gamma(E')\]
given by \[ \widetilde{F}(\sigma)(p)=F(\sigma(p))\]
Because a bundle homomorphism is linear on fibres, $\widetilde{F}$ is $\C$-linear on sections. 

We now want to construct some holomorphic vector bundles on a complex Lie group and on complex homogeneous spaces $G/H.$ 

\begin{proposition}\label{Associated_Bundle}
	Let $G$ be a complex Lie group and $(\pi,W)$ a complex representation of a closed subgroup $H.$ Then there exists a holomorphic vector bundle $V$ over $G/H$ such that $G$ acts on the space of sections.  
\end{proposition}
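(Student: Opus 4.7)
The plan is to perform the standard \emph{associated bundle} construction, adapted to the holomorphic category. First I would form the product complex manifold $G\times W$ and define a right action of $H$ on it by
\[ (g,w)\cdot h = (gh,\pi(h)^{-1}w).\]
This action is free, proper, and holomorphic: freeness and properness come from the analogous properties of the right action of $H$ on $G$, and holomorphicity follows because $\pi$ is a holomorphic representation of the complex Lie subgroup $H$. Define $V:=(G\times W)/H$ and let $\pi_V\colon V\to G/H$ be induced by $(g,w)\mapsto gH$; this is well defined since $gh H = gH$.

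Next I would verify that $V$ is a holomorphic vector bundle of rank $\dim_\C W$ over $G/H$. Because $H\subseteq G$ is a closed complex Lie subgroup, the projection $p\colon G\to G/H$ is a holomorphic principal $H$-bundle, so there exist an open cover $\{U_\alpha\}$ of $G/H$ together with holomorphic sections $\sigma_\alpha\colon U_\alpha\to G$ of $p$. For each $\alpha$ define
\[ \Phi_\alpha\colon U_\alpha\times W\longrightarrow \pi_V^{-1}(U_\alpha),\qquad \Phi_\alpha(xH,w)=[\sigma_\alpha(xH),w].\]
I would check that $\Phi_\alpha$ is a biholomorphism whose inverse sends $[g,w]$ to $(gH,\pi(h_\alpha(g))^{-1}w)$, where $h_\alpha(g)\in H$ is the unique element with $g=\sigma_\alpha(gH)\,h_\alpha(g)$; the assignment $g\mapsto h_\alpha(g)$ is holomorphic because it is the composition of $g\mapsto \sigma_\alpha(gH)^{-1}g$ with the holomorphic group operations. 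The transition cocycle on $U_\alpha\cap U_\beta$ is $xH\mapsto \pi(h_\beta(\sigma_\alpha(xH)))$, which is holomorphic with values in $\operatorname{GL}(W)$, so $V\to G/H$ inherits the structure of a holomorphic vector bundle. The fibre over $xH$ is canonically $[x,W]\cong W$, giving the required vector space structure.

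Finally I would exhibit the $G$-action on sections. Left multiplication by $G$ on itself commutes with the right action of $H$, so the formula $g'\cdot [g,w]=[g'g,w]$ defines a holomorphic action of $G$ on $V$ that covers the canonical left action of $G$ on $G/H$. Consequently, for any holomorphic section $s\in\Gamma(G/H,V)$ and $g\in G$, the map
\[ (g\cdot s)(xH) := g\cdot s(g^{-1}xH) \]
is again a holomorphic section, and associativity of the $G$-action on $V$ yields a representation of $G$ on $\Gamma(G/H,V)$.

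The main obstacle is the local triviality step: one must know that the holomorphic principal $H$-bundle $G\to G/H$ admits local holomorphic sections, so that $V$ is genuinely holomorphic rather than merely smooth. This is a consequence of $H$ being a closed complex Lie subgroup (so that $\lie h\subseteq \lie g$ is a complex Lie subalgebra and a holomorphic complement can be chosen via the exponential map). In the setting of Theorem~\ref{Borel_Subgroup_Embedding} and Corollary~\ref{Complex_Structure}, which will be the case of interest for Borel--Weil, $H=B$ and this local triviality is automatic from the Iwasawa-type decomposition.
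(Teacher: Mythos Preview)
Your proposal is correct and follows essentially the same associated-bundle construction as the paper: form $V=(G\times W)/H$ via the right $H$-action $(g,w)\cdot h=(gh,\pi(h)^{-1}w)$, project to $G/H$, and inherit local trivializations from the principal $H$-bundle $G\to G/H$. The only cosmetic difference is in the description of the $G$-action on sections: the paper identifies $\Gamma(G/H,V)$ with the function space $\script{F}_{H,\pi}=\{f:G\to W\mid f(gh)=\pi(h)^{-1}f(g)\}$ and lets $G$ act by the left regular representation $g\cdot f(x)=f(g^{-1}x)$, whereas you act directly on sections via the bundle action---these are of course equivalent, and your treatment of holomorphic local triviality is in fact more careful than the paper's.
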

\begin{proof}
	The canonical map $G\to G/H$ is a principal $H$-bundle. Any complex representation $\pi:H\to GL(W)$ induces an action of $H$ on the space $G\times W$ by \[(g,w)\cdot h = (gh,\pi(h^{-1})w)\]
	Then put $V=G\times_H W=(G\times W)/H.$ Then $[gh,w]=[g,\pi(h)w]\in V.$ The map $q:V\to G/H$ given by $[g,w]\mapsto gH$ is well defined, surjective, and $q^{-1}(gH)\cong W.$ This is a fibre bundle with transition maps given by the transition maps for the principal bundle. Further, as the fibres are complex vector spaces and the canonical map is holomorphic, we have that $V$ is a holomorphic vector bundle over $G/H.$ Let $\Gamma(G/H,V)$ denote the set of sections $s:G/H\to V.$ We can identify \[\Gamma(G/H,V)\overset{\sim}{\longrightarrow} \script{F}_{H,\pi}:=\{ f:G\to V | f(gh)=\pi(h)^{-1}f(g) \} \]
	Then $G$ acts on this space by $g\cdot f(x)=f(g^{-1}x).$ This completes the proof. 
\end{proof}
Even for one dimensional representations $\chi$ of $H,$ the space $\script{F}_{H,\chi}$ is unbelievably massive. We may home that if we restrict to some subset (say impose more restrictions on $f\in \script{F}_{H,\chi}$) then we may be able to get a handle on what these representations are. As it will turn out in the next section, we can restrict ourselves to \textit{holomorphic sections} of $V.$ This restriction will turn out to be enough to realize all of the finite dimensional irreducible representations of $K$ a compact Lie group and $G=K_\C$ its complexification.

\subsection{Flag Manifolds} 
In this short subsection, we shall show that there is some interesting geometry happening behind the scenes here involving the quotients $G/B$ or more generally $G/P$ for any closed group containing $B.$ This is done through the language of \textit{flag manifolds}. Before we get to flag manifolds, we need to discuss the Grassmann manifolds (also called Grassmannians). 
\begin{definition}
	Let $V$ be a real (or complex) vector space of dimension $n.$ The \textbf{Grassmannian of k-planes} in $V$ is the set of all $k$-dimensional subspaces in $V$ and is denoted $\Gr(k,V).$   
\end{definition} 
Let $G=\Aut{V}$ be the group of automorphisms of $V.$ By choosing a basis for $V,$ we can identify $\Aut{V}\cong GL_n(\R)$ (resp. $GL_n(\C)$).  Now, let $A$ and $A'$ be two different elements of $\Gr(k,V).$ By choosing bases and extending these to full bases of $V,$ we can find a matrix $X\in GL_n(\R)$ such that $XA=A'.$ Therefore, $GL_n(\R)$ acts transitively on $\Gr(k,V).$ Let $\{v_1,...,v_n\}$ be the basis of $V$ and $S=\Span[\R]{v_1,...,v_k}$ be the \textit{standard} $k$-plane. Then the isotropy subgroup of $S$ is the closed subgroup \[     H=\left\{ \Mat{P& Q\\ 0 & R}: P\in GL_k(\R), Q\in M_{k,n-k}(\R), R\in GL_{n-k}(\R)     \right\} \]
This gives an identification $\Gr(k,V)=GL_n(\R)/H.$ We call $H$ a \textit{parabolic subgroup} of $G.$ This exhibits $\Gr(k,V)$ as a real (resp. complex) manifold.  

Now let $(n_1,....,n_j)\in \Z^j\;$ $j\leq n$ be an increasing tuple of integers with $n_j=n=\dim V.$ A \textbf{flag} of type $(n_1,...,n_j)$ is a chain of subspaces \[ 0=V_0\subseteq  V_1\subseteq V_2\subseteq... \subseteq V_j=V\]
with $\dim V_i=n_i.$ Equivalently, we could require that $\dim V_i/V_{i-1}=n_i-n_{i-1}.$ A \textit{full flag} corresponds to the tuple $(1,2,3,...,n)$ and thus a chain \[ 0=V_0\subseteq V_1\subseteq ... \subseteq V_n=V\]
and $\dim V_i/V_{i-1}=1.$   

\begin{definition}
	The \textbf{partial flag manifold} of type $(n_1,...,n_j)$ is the collection of all flags of type $(n_1,...,n_j)$ in $V$ and is denoted $\Fl(n_1,...,n_j; V).$ The \textbf{full flag manifold} of $V$ will be denoted $\Fl(V).$  
\end{definition}   
By choosing a basis for $V$ and thus identifying it with $\R^n,$ we have a natural action of $GL_n(\R)$ on $\Fl(n_1,...,n_j;V).$ Now, let $F$ and $F'$ be two distinct flags. There exists $X\in GL_n(\R)$ such that $XF=F'$ and the action is transitive. The stabilizer of $F$ is a closed subgroup $P$ of $GL_n(\R)$ and we identify $\Fl(n_1,...,n_j;V)=G/P.$ This exhibits $\Fl(n_1,...,n_j;V)$ as a smooth manifold. The stabilizer of the standard full flag is the subgroup $B$ of upper-triangular matrices. Thus $\Fl(V)=GL_n(\R)/B.$

\begin{remark}
	The groups $P$ and $B$ are called the \textbf{standard parabolic} and \textbf{standard Borel} subgroups respectively. An alternative definition of the standard Borel subgroup is as a standard \textit{minimal} parabolic subgroup. We call the conjugates of $B$, Borel subgroups and the conjugates of $P$ parabolic subgroups. Notice that every parabolic subgroup contains a Borel subgroup.  
\end{remark}

In the case of a complex vector space, we see that $\Fl(V)=GL_n(\C)/B.$ By Corollary \ref{Complex_Structure}, we can realize $\Fl(V)=K/T$ for $K=U(n).$ In more generality, for a connected Lie group $C,$ there exists a maximal torus $S$ and the quotient space $C/S$ is a \textit{flag manifold}. 

\begin{example}
	\begin{enumerate}
		\item As seen above, if $V$ is a complex vector space then $\Gr(k,V)$ is a flag manifold corresponding to the tuple $(k,n)\in \Z^2.$ It is realized as the quotient $GL_n(\C)/H$ with $H$ the complex analog of the group defined above. 
		\item Let $\mathbb{CP}^n$ (or $\mathbb{P}^n(\mathbb{C})$) denote the orbit space $(\C^{n+1}-\{0\})/\C^\times.$ This is realized as the space of all lines in $\C^{n+1}.$ In the language we have seen above, we can realize this as $\Gr(1,\C^{n+1}).$  
	\end{enumerate}
\end{example} 

\begin{definition}
	Let $G$ be a complex connected Lie group and $H$ a closed subgroup. Then $G/H$ is a complex homogeneous space. Let $p:V\to G/H$ be a holomorphic vector bundle. $V$ is \textbf{homogeneous} if the group of bundle automorphisms act transitively on the set of fibres of $V.$ We call $V$ \textbf{homogeneous with respect to $G$} if the $G$ action on $G/H$ lifts to a $G$ action on $V$ by bundle automorphisms. We will sometimes refer to these as $G$-homogeneous vector bundles.     
\end{definition}

\noindent Let us now characterize all vector bundles on flag manifolds which are homogeneous with respect to $K_\C$. 
\begin{proposition}
	Let $K$ be a compact, connected Lie group and $G$ its complexification. Let $(\pi,W)$ be a representation of a parabolic subgroup $P\subseteq G.$ Then this gives rise to a  holomorphic vector bundle over the partial flag manifold $G/P$ which is homogeneous with respect to $G.$ Further, every holomorphic vector bundle which is homogeneous with respect to $G$ arises in this way.     
\end{proposition}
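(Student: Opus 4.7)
The proof naturally splits into two directions, corresponding to the construction of a functor $\operatorname{Rep}(P) \to \textbf{Bun}_H(G/P)^G$ (the category of $G$-homogeneous holomorphic vector bundles) and the verification that it is essentially surjective. The forward direction is essentially a refinement of Proposition \ref{Associated_Bundle}. Given $(\pi, W)$, form the associated bundle $V = G \times_P W$ via the quotient of $G \times W$ by the $P$-action $(g,w) \cdot p = (gp, \pi(p)^{-1}w)$. The projection $q: V \to G/P$ sending $[g,w] \mapsto gP$ is a holomorphic vector bundle since $G \to G/P$ is a holomorphic principal $P$-bundle (using the complex structure on $G/P$ from Corollary \ref{Complex_Structure}, extended to general parabolics). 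The $G$-action on $V$ is defined by $g \cdot [g', w] = [gg', w]$; this is well-defined, holomorphic, and manifestly covers left translation on $G/P$, making $V$ into a $G$-homogeneous holomorphic bundle.

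For the reverse direction, suppose $p: V \to G/P$ is a $G$-homogeneous holomorphic vector bundle. First I would extract the representation: let $W = p^{-1}(eP)$ be the fiber over the identity coset, which is a complex vector space. Since $P$ stabilizes $eP$ under left translation, the $G$-action restricted to $P$ sends $W$ to itself via linear maps (linearity following from the fact that $G$ acts by bundle automorphisms, which are linear on fibers). This yields a representation $\pi: P \to GL(W)$, which is holomorphic because the $G$-action on $V$ is holomorphic.

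Next I would construct the comparison map $\Phi: G \times_P W \to V$ by $\Phi([g, w]) = g \cdot w$, where $g \cdot w$ denotes the image of $w \in W = V_{eP}$ under the $G$-action on $V$. Well-definedness follows from $(gp) \cdot w = g \cdot (p \cdot w) = g \cdot \pi(p)w$, matching the $P$-equivalence relation on $G \times W$. The map $\Phi$ covers the identity on $G/P$ and is linear on each fiber. Since both bundles are locally trivial of the same rank and $\Phi$ restricts to the identity on the fiber over $eP$, $G$-equivariance forces $\Phi$ to be a fiberwise isomorphism everywhere; hence $\Phi$ is a bundle isomorphism.

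The main obstacle is verifying that $\Phi$ is biholomorphic rather than merely continuous and fiberwise linear. I would address this by working in a local holomorphic trivialization: choose a local holomorphic section $\sigma: U \to G$ of the principal bundle $G \to G/P$ over an open $U \ni eP$ (which exists because $G \to G/P$ is locally trivial as a holomorphic principal bundle). On $U$, $\Phi$ takes the form $(gP, w) \mapsto \sigma(gP) \cdot w$, and since both $\sigma$ and the $G$-action on $V$ are holomorphic, so is $\Phi|_{p^{-1}(U)}$. Translating by $G$ and using $G$-equivariance extends holomorphicity globally. This completes the equivalence, establishing a natural bijection between (isomorphism classes of) holomorphic representations of $P$ and $G$-homogeneous holomorphic vector bundles over $G/P$.
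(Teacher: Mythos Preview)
Your proof is correct and follows essentially the same strategy as the paper: construct the associated bundle $G\times_P W$ for the forward direction, and for the converse extract the representation from the fibre over $eP$ and identify $V$ with $G\times_P V_{eP}$ via the map $[g,w]\mapsto g\cdot w$. If anything, you are more careful than the paper, which does not explicitly address why the identification is biholomorphic; your use of a local holomorphic section of $G\to G/P$ to verify this is a genuine improvement in rigor.
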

\begin{proof}
	The existence of such a vector bundle was proven in Proposition \ref{Associated_Bundle}. The homogeneity condition is readily checked. Therefore, we shall show that every homogeneous vector bundle arises in this way. Let $V$ be a $G$-homogeneous vector bundle and $V_P$ the fibre $p^{-1}(P).$ $V_P$ comes naturally equipped with the structure of a representation $P\to \Aut{V_P}.$  The map \[ \mu:G\times V_H\to V\]
	defined by $\mu(g,z)=g\cdot z$ is surjective as $G$ acts transitively on $G/P.$ The fibres of $\mu$ are precisely the $P$ orbits on $G\times V_P$ via the diagonal action \[ (g,z)\mapsto (gp^{-1},p\cdot z) \]
	  Therefore, we may represent any element uniquely as an equivalence $[g,z]$ where $[gp,z]=[g,p\cdot z].$ Hence, we can make the identification $V=G\times_P V_P.$ This completes the proof.  
 \end{proof}

\section{Borel-Weil Theorem} 
We will motivate the theorem by starting with some facts about $G=GL_n(\C).$ The natural action of $G$ on $\C^n-\{0\}$ commutes with the action of $\C^\times$ and therefore descends to an action on $\mathbb{CP}^{n-1}.$ Moreover this action is transitive. Now, the isotropy subgroup of the class $[0:....:0:1]$ in $G$ consists of all $g\in G$ such that $g\cdot (0,...,0,1)^T=(0,...,0,\lambda)^T$ for $\lambda\in \C^\times.$ Let $Q$ be this group. Then \[ Q=\left\{ \Mat{A& 0\\ w^T & \lambda}\right\}\cap GL_n(\C)   \]
with $\lambda\in \C$, $w\in \C^{n-1}$, and $A\in M_{n-1}(\C).$ Then $Q$ is a complex subgroup of $G$ as its Lie algebra is complex. Therefore the quotient $G/Q$ becomes complex manifold which is biholomorphic to $\mathbb{CP}^{n-1}.$ 

Now fix $N\geq 0$ and put $\chi:Q\to \C^\times$ a character of $Q$ of the form \[ \chi \Mat{A& 0\\ w^T & \lambda}=\lambda^{-N} \] 
Then $\chi$ induces a holomorphic action of $\Q$ on $\C$ by $q\cdot z=\chi(q)z.$ Using this, we can build the associated bundle $G\times_Q \C\to G/Q$ in the style of the previous section. Now per the proof of Proposition \ref{Associated_Bundle}, we can identify the $C^\infty$ sections of this bundle with the space of functions \[ \script{F}_{Q,\chi}^\infty=\left \{ f:G\to \C \; \vline \;  f(gq)=\chi(q)^{-1}f(g), f \text{ smooth}     \right\} \]

Now, let $V_N$ be the space of homogenous polynomials of degree $N$ in $n$ complex variables. Then for any $f\in V_N$ define \[ \varphi_f(g)=f\left(g\Mat{0\\ \vdots\\ 1} \right) \]
Then if $q\in Q,$ we have that \[ \varphi_f(gq)=f\left(gq\Mat{0\\ \vdots\\ 1} \right)=\lambda^N\varphi_f(g)\]
Therefore $\varphi_f\in \script{F}_{Q,\chi}^\infty.$ In fact, this is holomorphic and therefore $\varphi_f\in  \script{F}_{Q,\chi}^{Hol},$ the space of holomorphic sections. For the rest of this section, let $\ell=\Mat{0\\ \vdots \\ 1}.$  

\begin{proposition}
	The only holomorphic sections of $G\times_Q\C\to G/Q$ are those $\varphi_f.$ 
\end{proposition}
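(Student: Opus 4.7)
The map $f \mapsto \varphi_f$ is already defined; I will show it is a bijection onto $\script{F}_{Q,\chi}^{Hol}$. Injectivity is immediate, since $\varphi_f(g) = f(g\ell)$ recovers $f$ on $\C^n - \{0\}$ (noting that $G$ acts transitively on $\C^n-\{0\}$). The work is in surjectivity.

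Let $\varphi \in \script{F}_{Q,\chi}^{Hol}$. Let $Q_0 \subseteq Q$ denote the stabilizer of the vector $\ell$, i.e.\ the set of elements of $Q$ with $\lambda = 1$; note that $\chi|_{Q_0} \equiv 1$. The orbit map $G \to \C^n - \{0\}$, $g \mapsto g\ell$, is a surjective holomorphic submersion that descends to a biholomorphism $G/Q_0 \overset{\sim}{\to} \C^n-\{0\}$. First I would define $f : \C^n - \{0\} \to \C$ by $f(g\ell) = \varphi(g)$. This is well-defined because if $g_1\ell = g_2\ell$ then $g_2^{-1}g_1 \in Q_0$, and $\varphi(g_2 q_0) = \chi(q_0)^{-1}\varphi(g_2) = \varphi(g_2)$ by the transformation law for $\varphi$. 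Holomorphicity of $f$ follows by choosing, around any $z \in \C^n - \{0\}$, a local holomorphic section $\sigma$ of $G \to \C^n - \{0\}$; then $f = \varphi \circ \sigma$ on the image of $\sigma$, so $f$ is holomorphic on $\C^n - \{0\}$.

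Next I would verify that $f$ is homogeneous of degree $N$. For $\mu \in \C^\times$, let $q_\mu \in Q$ be the element with $A = I_{n-1}$, $w = 0$, and diagonal entry $\lambda = \mu$, so that $q_\mu \ell = \mu\ell$ and $\chi(q_\mu) = \mu^{-N}$. Then for $z = g\ell$,
\[
f(\mu z) = f(g q_\mu \ell) = \varphi(g q_\mu) = \chi(q_\mu)^{-1}\varphi(g) = \mu^{N} f(z).
\]

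The main obstacle is extending $f$ across the origin and concluding polynomiality. For $n \geq 2$, I would invoke Hartogs' extension theorem: any function holomorphic on the complement of a compact subset of a domain in $\C^n$ extends holomorphically across it, so $f$ extends to a holomorphic function $\tilde f$ on all of $\C^n$. The extension remains homogeneous of degree $N$ by continuity of the identity $\tilde f(\mu z) = \mu^N \tilde f(z)$ on $\C^n$. Expanding $\tilde f$ in its Taylor series at the origin and applying homogeneity term-by-term forces all Taylor coefficients of degree $\neq N$ to vanish, so $\tilde f \in V_N$. For the trivial case $n = 1$, $G/Q$ is a point and the claim is vacuous. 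Setting $f := \tilde f$, we have $\varphi_f(g) = f(g\ell) = \varphi(g)$, which completes surjectivity and hence the proof.
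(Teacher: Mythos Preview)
Your argument is correct and follows the same overall arc as the paper's proof: descend $\varphi$ to a holomorphic function on $\C^n-\{0\}$ via the orbit map, verify degree-$N$ homogeneity, extend across the origin, and conclude polynomiality. The differences are only in the tools used for the last two steps. For the extension, the paper observes that homogeneity gives the bound $|P(z)|\leq C|z|^N$ near $0$ and invokes a removable-singularity argument, whereas you appeal directly to Hartogs' theorem (and dispose of $n=1$ separately). For polynomiality, the paper bounds the derivatives $|\partial^\alpha P(z)|\leq C_\alpha |z|^{N-|\alpha|}$ and applies Liouville to kill all $\partial^\alpha P$ with $|\alpha|>N$, while you read it off directly from the Taylor expansion together with homogeneity. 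Both routes are standard; yours is slightly more streamlined, while the paper's avoids invoking Hartogs.
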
 
\begin{proof}
	Let $\varphi:G\to \C$ be the function corresponding to a holomorphic section of the bundle. We want to define a polynomial $P(z_1,...,z_n)$ on $\C^n-\{0\}.$ Let $g\in G$ be such that $g\ell=\Mat{z_1\\\vdots \\ z_n}.$ Then define $P(z_1,...,z_n)=\varphi(g).$ To see this is well-defined, let $g'$ be another element of $G$ satisfying  $g'\ell=\Mat{z_1\\\vdots \\ z_n}.$
	Then $g^{-1}g'$ stabilizes $\ell$ and therefore is en element $q$ of $Q.$ Writing $g'=gq,$ we have that $\varphi(g')=\varphi(g)$ and $P$ is well-defined. Moreover, by construction $P$ is homogeneous of degree $N.$ Since we can define $P$ using open sets of $G,$ we have that $P$ is holomorphic on $\C^n-\{0\}.$ The homogeneity condition implies that $P$ is bounded near $0.$ Hence, $P$ admits a holomorphic extension to $\C^n.$ Now, the $C^\infty$. behavior, combined with the homogeneity implies that \[ |P(\mathbold{z})|\leq C|\mathbold{z}|^N\]
	and similarly \[ |\partial^\alpha_\mathbold{z} P(\mathbold{z})|\leq C_{\alpha}|\mathbold{z}|^{N-|\alpha|}\] 
	for any multi-index $\alpha$ and $\mathbold{z}\in \C^n-\{0\}.$ If $|\alpha|>N,$ then $\partial^\alpha P$ vanishes at $\infty$ and by Liouville's theorem, is $0.$ Therefore, the Taylor expansion of $P$ about $0$ vanishes for all degrees $>N.$ Hence, $P$ is a polynomial.  
\end{proof}

This implies that the representation of $G$ on $V_N$ can be realized as the space of sections $\script{F}_{Q,\chi}^{Hol}.$ In different terminology, we say that $V_N=\Ind_Q^G(\chi)$ is the \textbf{induced representation} of $G$ from the representation $\chi$ of $Q.$ Now, we can turn to the general situation.

Let $K$ be a compact lie group with maximal torus $T$. If $G=K_\C$ is the complexification, then the Iwasawa decomposition implies that $G=KAN$ and $B=T_\C \overline{N}$ where $\overline{N}$ are the lower-triangular nilpotent matrices. Then by Corollary \ref{Complex_Structure}, we know that $G/B\cong K/T$ and both are complex manifolds. For any character $\lambda:T\to \C^\times,$ we can extend $\lambda$ to be a character of $T_\C$ and then to $B$ by declaring $\chi(\bar{n})=1.$ Therefore, we get two line bundles \[ G\times_B \C\cong K\times_T \C\]    
which are isomorphic as complex manifolds.

\begin{theorem}[Borel-Weil]\label{Borel-Weil}
	Let $K$ be a compact, connected Lie group and $T\subseteq K$ be a maximal torus. Let $G=K_\C$ be the complexification and $B=MA\overline{N}$ a Borel subgroup. Then the irreducible finite dimensional representations of $K$ stand in one-to-one correspondence with the dominant, analytically integral weights $\lambda\in \lie{t}^*$ with the correspondence given by \[ \lambda\mapsto \Gamma_H(K/T,L_{\lambda})\cong \script{F}_{B,\chi_{\lambda}}^{Hol}\]
	where $\Gamma_{H}(K/T,L_{\lambda})$ denotes the set of holomorphic sections of the bundle and \[\script{F}_{B,\chi_{\lambda}}^{Hol}=\left \{f:G\to \C \;\vline\; f(gb)=\chi_{\lambda}(b)^{-1}f(g), f \text{ holomorphic}  \right\}\]
	with $\chi_{\lambda}$ the character of $B$ associated to the analytically integral weight $\lambda.$ 
\end{theorem}
We present a combination of the proofs presented in \cite{Knapp86}, \cite{Helgason1978}, and \cite{Helgason2008}. The proof will proceed in two main steps: 1) show that $\Gamma_H(K/T,L_\lambda)$ is a finite dimensional and 2) show it is irreducible. Throughout the proof, we shall make use of the isomorphism $\Gamma_H(K/T,L_\lambda)\to \script{F}_{T,\chi_\lambda}^{Hol}\cong  \script{F}_{B,\chi_\lambda}^{Hol}.$  \\

\begin{remark}
	Another way of thinking about this theorem is as a classification result for various sheaves on the flag varieties (manifolds) $\Fl(\C^n).$ Every finite rank vector bundle on $\Fl(\C^n)$ corresponds to a finite locally free sheaf with the correspondence given by taking global sections. The theorem above classifies all of the line bundles (considered as sheaves) on $\Fl(\C^n)$ which admit global sections. 
\end{remark}

  	The Lie algebra of $G$ has a Cartan decomposition $\lie{g}_\C=\lie{k}\ds i\lie{k}$ corresponding to the Cartan involution $\theta:\lie{g}\to \lie{g}.$ Let $\Theta$ be the corresponding involution of $G.$ This gives an Iwasawa decomposition $\lie{g}=\lie{t}\ds \lie{a}\ds \lie{n}$. Pick a maximal abelian subalgebra $i\lie{a}$ of $i\lie{k}$ and form $\lie{m}=Z_\lie{k}(\lie{a})$ the centralizer of $\lie{a}$ in $\lie{k}.$ Then $\lie{m}$ is a Cartan subalgebra of $\lie{k}$ and $\lie{m}_\C=\lie{a}\ds \lie{m}.$ With respect to the roots $\Delta(\lie{k}_\C,\lie{m}_\C),$ put \[ \lie{b}=\lie{m}\ds \lie{a}\ds \bigoplus_{\alpha\in \Delta^+} \lie{k}_{-\alpha}\]
	Then $B=MA\overline{N}$ is the corresponding Iwasawa decomposition of the Borel subgroup. 
	
	Now, let $\lambda\in \lie{t}^*$ be a dominant, analytically integral weight and $(\Phi_\lambda,V)$ the irreducible, finite dimensional highest weight representation of $K$ with highest weight $\lambda.$ Let $v_\lambda\in V$ be a highest weight vector. This representation extends to a holomorphic representation (also denoted $\Phi_\lambda$) of $G$ via the universal property of the complexification. For each $v\in V,$ define a function $\psi_v(x)$ on $G$ by \[ \psi_v(x)=(\Phi_\lambda(x)^{-1}v,v_\lambda)\] where $(,)$ is the inner product on $V$ induced via the isomorphism with $\C^n.$  
	\begin{lemma}\label{Borel_Lemma_1} 
		For each $v\in V,$ $\psi_v\in \script{F}_{B,\chi_{\lambda}}^{Hol}.$ Moreover if $L$ denotes the left regular action, then $L(k)\psi_v=\psi_{\Phi_{\lambda}(k)v}$ and the collection $\{\psi_v:v\in V\}$ is an irreducible subrepresentation of $\script{F}_{B,\chi_\lambda}^{Hol}$ which is equivalent to $\Phi_\lambda.$ 
	\end{lemma}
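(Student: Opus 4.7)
The plan is to treat the three assertions of the lemma in the order stated, with the bulk of the work going into verifying the $B$-transformation law, since the $K$-equivariance and irreducibility are then essentially formal consequences. To check that $\psi_v \in \script{F}_{B,\chi_\lambda}^{Hol}$, I will use that $\Phi_\lambda$ is holomorphic on $G$ (so $\psi_v$ is holomorphic in $x$), together with the fact that $v_\lambda$ is (up to the Cartan involution) stabilized by $B$. Concretely, for $b = ma\bar n \in B = MA\overline{N}$, one writes
\[
\psi_v(xb) \;=\; (\Phi_\lambda(b^{-1})\Phi_\lambda(x^{-1})v,\, v_\lambda) \;=\; (\Phi_\lambda(x^{-1})v,\, \Phi_\lambda(b^{-1})^{*}v_\lambda),
\]
and the task reduces to computing $\Phi_\lambda(b^{-1})^{*}v_\lambda$. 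Using the relation $\Phi_\lambda(g)^{*} = \Phi_\lambda(\Theta(g)^{-1})$ that follows from the $K$-unitarity of $\Phi_\lambda$ and the definition of $\Theta$ on $G = KAN$, this pushes $b^{-1}$ into the \textit{opposite} unipotent $N$, where it acts trivially on the highest weight vector $v_\lambda$; the $MA$-part then acts on $v_\lambda$ by the character $\chi_\lambda$ (extended analytically from $T$ to $T_\C = MA$). Combining these gives the desired identity $\psi_v(xb) = \chi_\lambda(b)^{-1}\psi_v(x)$.

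For the $K$-equivariance, a one-line computation suffices: $L(k)\psi_v(x) = \psi_v(k^{-1}x) = (\Phi_\lambda(x^{-1})\Phi_\lambda(k)v, v_\lambda) = \psi_{\Phi_\lambda(k)v}(x)$. This shows that the assignment $\Psi : v \mapsto \psi_v$ is a linear map $V \to \script{F}_{B,\chi_\lambda}^{Hol}$ which intertwines $\Phi_\lambda$ with the left regular representation. To finish, I will note that $\Psi$ is nonzero (for instance $\psi_{v_\lambda}(1) = \|v_\lambda\|^2 \neq 0$), so $\ker \Psi$ is a proper $K$-invariant subspace of $V$; by the irreducibility of $(\Phi_\lambda,V)$ from the Theorem of Highest Weights, $\ker\Psi = 0$ and $\Psi$ embeds $V$ as a $K$-subrepresentation of $\script{F}_{B,\chi_\lambda}^{Hol}$ equivalent to $\Phi_\lambda$. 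Since the image is $\{\psi_v : v \in V\}$, this is exactly the claim.

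The main obstacle I anticipate is the bookkeeping in the first step: keeping straight the conventions for positive/negative roots, for which Borel (standard versus opposite) stabilizes the highest weight line, and for how the Cartan involution $\Theta$ interacts with the holomorphic extension $\Phi_\lambda\colon G \to \mathrm{Aut}(V)$ of a unitary representation of $K$. In particular, one must check that ``$\chi_\lambda$'' really is the character by which $B$ acts on the $v_\lambda$-line, not its inverse or conjugate, which depends on whether $\bar N$ or $N$ is chosen for $B$ and on the sign conventions for weights. Once these conventions are pinned down at the start (matching the decomposition $\lie{b} = \lie{m} \oplus \lie{a} \oplus \bigoplus_{\alpha \in \Delta^+}\lie{k}_{-\alpha}$ in the excerpt), the rest of the argument is a matter of unwinding definitions.
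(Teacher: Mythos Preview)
Your proposal is correct and follows essentially the same route as the paper: both use the identity $\Phi_\lambda(\Theta x)=\Phi_\lambda(x^{-1})^*$ (coming from $K$-unitarity and complex-linearity of $d\Phi_\lambda$) to move $\Phi_\lambda(b^{-1})$ across the inner product, then invoke $\Theta(ma\bar n)=ma^{-1}n$ so that the $N$-part acts trivially on $v_\lambda$ and the $MA$-part acts by $\chi_\lambda$; the $K$-equivariance computation is likewise identical. Your final paragraph, arguing that $v\mapsto\psi_v$ is nonzero and hence injective by irreducibility of $(\Phi_\lambda,V)$, makes explicit a step the paper leaves implicit, and your caution about the $\overline{\chi_\lambda(m)}=\chi_\lambda(m)^{-1}$ identification (since $\chi_\lambda$ is unitary on $M=T$) is exactly the bookkeeping the paper silently uses in its last displayed line.
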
      
\begin{proof}[Proof of Lemma 5.3] 
	Let $\varphi_\lambda$ be the differential of $\Phi_\lambda.$ Since $\Phi_\lambda$ is unitary on $V,$ $\varphi_\lambda$ is skew-hermitian on $\lie{k}$ and complex-linear on $\lie{g}.$ Therefore, $\varphi_\lambda(\theta X)=-\varphi_{\lambda}(X)^*$ and $\Phi_\lambda(\Theta x)=\Phi_\lambda(x^{-1})^*$ for all $X\in \lie{g}$ and $x\in G.$ Now if $b\in B=MA\overline{N}$ we have that for all $x\in G$ \begin{align*}
		\psi_v(xma\bar{n})&=(\Phi_{\lambda}(ma\bar{n})^{-1}\Phi_\lambda(x)v,v_\lambda) && \\
		&=(\Phi_\lambda(x)^{-1}v,\Phi_\lambda(ma^{-1}n)v_\lambda) && \text{as } \Theta(ma\bar{n})=ma^{-1}n\in MAN \\
		&=(\Phi_\lambda(x)^{-1}v,\Phi_\lambda(ma^{-1})v_\lambda) && \text{as $v_\lambda$ is a highest weight vector}\\
		&=(\Phi_\lambda(x)^{-1}v,\chi_{\lambda}(m)\chi_\lambda(a)^{-1}v_\lambda) && \text{as $v_\lambda$ has weight $\lambda$}\\
		&=\overline{\chi_\lambda(m)}\chi_\lambda(a)^{-1}(\Phi_\lambda(x)^{-1}v,v_\lambda) && \\
		&=\chi_\lambda(b)^{-1}\psi_v(x)
	\end{align*} 
	Further, It is clearly holomorphic as is defined by a holomorphic representation. Hence, $\psi_v\in \script{F}_{B,\chi_{\lambda}}^{Hol}.$ Finally, \begin{align*}
		\psi_{\Phi_\lambda(k)v}(x)&=(\Phi_\lambda(x)^{-1}\Phi_\lambda(k)v,v_\lambda)\\
		&=(\Phi_\lambda(k^{-1}x)^{-1}v,v_\lambda)\\
		&=\psi_v(k^{-1}x)=L(k)\psi_v(x)
	\end{align*}
	This completes the proof of the lemma.  
\end{proof}	
Now we wish to show that $V\to \script{F}_{B,\chi_{\lambda}}^{Hol}$ is onto. Put $\psi_\lambda:=\psi_{v_\lambda}$ and $\script{F}_{\lambda}:=\script{F}_{B,\chi_\lambda}^{Hol}.$
\begin{lemma}\label{Borel_Lemma_2}
	Let $F\in \script{F}_\lambda.$ Then \[ \int_M F(mxm^{-1})dm=F(1)\psi_\lambda(x)\]
	for all $x\in G.$ ($dm$ is the normalized Haar measure on $M.$)
\end{lemma}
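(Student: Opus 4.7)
The plan is to show that $G(x) := \int_M F(mxm^{-1})\, dm$ lies in a one-dimensional subspace of $\script{F}_\lambda$ spanned by $\psi_\lambda$, and then to evaluate at $x = 1$ to pin down the scalar as $F(1)$. First I would verify $G \in \script{F}_\lambda$. Holomorphy follows by differentiating under the integral (permissible because $M$ is compact and the integrand is jointly continuous and holomorphic in $x$). The right $B$-transformation law $G(xb) = \chi_\lambda(b)^{-1}G(x)$ rests on two observations: $M$ normalizes $B = MA\overline{N}$ (commuting with $MA$ and preserving $\overline{N}$ under conjugation), and $\chi_\lambda$ is invariant under $M$-conjugation (it depends only on the $\lie{m}_\C$-weight on $MA$ and is trivial on $\overline{N}$). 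A translation $m \mapsto mm_0$ in the Haar integral then shows $G(m_0 x m_0^{-1}) = G(x)$, and normalization of Haar measure gives $G(1) = F(1)$.

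The crux is the uniqueness statement: any $H \in \script{F}_\lambda$ that is $M$-conjugation invariant must equal $H(1)\psi_\lambda$. I would first observe that $M$-conjugation invariance combined with the right $B$-law forces a left $M$-weight condition $H(m^{-1}x) = \chi_\lambda(m)H(x)$ (write $H(m^{-1}xm) = \chi_\lambda(m^{-1})^{-1}H(m^{-1}x)$ and equate with $H(x)$). Then I would pass to the open dense Gauss cell $N \cdot T_\C \cdot \overline{N} \subseteq G$, where $N = \exp\bigl(\bigoplus_{\alpha \in \Delta^+}\lie{k}_\alpha\bigr)$ is the positive unipotent subgroup opposite to $\overline{N}$. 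On this cell the right $B$-law gives $H(nt\overline{n}) = \chi_\lambda(t)^{-1}H(n)$, and the left $M$-weight condition, together with $M$ normalizing $N$, shows that $H|_N$ is $M$-conjugation invariant as a holomorphic function $N \to \C$.

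Under the identification $N \cong \C^r$ via exponential coordinates, the $M$-action becomes linear with weights equal to the positive roots. Expanding $H|_N$ as a convergent Taylor series about the identity, each monomial has a weight given by a nonnegative integer combination of positive roots; since the positive roots lie in a strict open half-space, the only such combination equal to zero is the trivial one. Hence $M$-invariance forces every nonconstant monomial to vanish, i.e., $H|_N \equiv H(1)$. A parallel computation for $\psi_\lambda$ uses that $v_\lambda$ is killed by $\Phi_\lambda$ applied to positive nilpotents and is a $\chi_\lambda$-eigenvector for $T_\C$; together with the adjoint identity $\Phi_\lambda(\overline{n}^{-1})^* = \Phi_\lambda(\Theta(\overline{n}))$ (where $\Theta$ maps $\overline{N}$ into $N$, so fixes $v_\lambda$), this yields $\psi_\lambda(nt\overline{n}) = \chi_\lambda(t)^{-1}$. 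Comparing gives $H(nt\overline{n}) = H(1)\psi_\lambda(nt\overline{n})$ on the Gauss cell, and by the identity theorem for holomorphic functions the equality extends to all of $G$. Applied to $H = G$, this produces the lemma.

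The main obstacle will be the Taylor-expansion step: one must ensure that the absence of $M$-invariant nonconstant holomorphic functions on $N$ really follows from the positivity of the roots, and that the passage from an open dense cell back to $G$ via holomorphic continuation is legitimate. Both points rest on the fact that $\exp : \lie{n} \to N$ is a biholomorphism onto $\C^r$ and that the positive roots generate a strictly convex cone meeting the origin only trivially. Everything else --- the verification that $G \in \script{F}_\lambda$, the $M$-conjugation invariance of $G$, and the identification $G(1) = F(1)$ --- is routine change-of-variable bookkeeping in the Haar integral.
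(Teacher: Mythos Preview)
Your argument is correct and takes a genuinely different route from the paper's. The paper works infinitesimally: it Taylor-expands $F(\exp X)$ at the identity, integrates term by term over $M$, and uses the Poincar\'e--Birkhoff--Witt basis to analyze the $\Ad(M)$-invariant terms. The key step there is that any $\Ad(M)$-invariant monomial in the PBW basis can contain neither $X_\alpha$ nor $X_{-\alpha}$ factors (the latter because $\widetilde{X_{-\alpha}}F=0$ from the right $B$-law), so it lies in $U(\lie{m}_\C)$ and acts on $F$ at $1$ by a scalar depending only on $\lambda$; this shows the integral is $c(x)F(1)$ with $c$ independent of $F$, and plugging in $F=\psi_\lambda$ identifies $c$. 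Your approach replaces this enveloping-algebra computation by a global argument on the big Bruhat cell $N\,T_\C\,\overline{N}$: the right $B$-law reduces $G$ to its restriction to $N$, and $M$-invariance of $G|_N$ plus a Taylor expansion in exponential coordinates on $N\cong\C^r$ (with $T$-weights equal to sums of positive roots) kills every nonconstant term. Both proofs hinge on the same positivity fact---a nontrivial nonnegative combination of positive roots cannot vanish---but the paper expresses it through $\Ad(M)$-invariants in $U(\lie{g})$ while you express it through $T$-invariants in $\mathcal{O}(N)$. Your version has the conceptual payoff of isolating the statement ``$M$-conjugation-invariant elements of $\script{F}_\lambda$ form the line $\C\psi_\lambda$'' as a standalone fact; the paper's version avoids invoking the Bruhat decomposition and works directly with the machinery (PBW, left-invariant differential operators) already set up in the chapter.
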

The idea of the proof is to show that the left side is a multiple of $F(1)$ independent of $F.$ This multiple is a power series in $x$ and evaluating at $F=\psi_\lambda,$ we see that they are equal near 1. By holomorphicity, the functions are thus equal everywhere. 
\begin{proof}[Proof of Lemma 5.4]
	Let $X\in \lie{g}$ and $\widetilde{X}$ the corresponding left invariant vector field on $G.$ Since $F$ is holomorphic, it is real-analytic and thus the Taylor series of $F$ converges to $F$ is a neighbourhood of $1.$ Thus \[ F(\exp X)=\sum \frac{1}{n!} (\widetilde{X}F)(1)\]
	Conjugating by $m$ and integrating, we see that \[ \int_M F(m\exp X m^{-1})dm=\sum \frac{1}{n!} \left( \left\{ \int_M \Ad(m)\widetilde{X}^n dm  \right\}F\right) (1)\]
	
	Now let $\{X_{\alpha},H_\alpha,X_{-\alpha}\}$ be a basis of $\lie{g}$ with respect to a positive choice of roots. Writing $X$ in terms of this basis and expanding, we get integrals of monomials. The coefficients can be factored out as $\widetilde{X}$ is complex-linear as an endomorphism of $\script{F}_\lambda.$ Now, by the Poincar\'e-Birkhoff-Witt Theorem, we can rewrite the expression as a linear combination of $\Ad(m)$ and monomials of the form  $X_{-\alpha_1}^{i_1}...X_{-\alpha_p}^{i_p}H_{\alpha_1}^{j_1}...H_{\alpha_q}^{j_q}X_{\alpha_1}^{k_1}...X_{\alpha_r}^{k_r}.$ Then integrals of each monomial is now $\Ad(m)$-invariant and a monomial. If this new monomial has no $X_{-\alpha}$ term for $\alpha\in \Delta^+$ then by $\Ad(m)$-invariance it cannot have any $X_{\alpha}$ term. 
	
	On the other hand, any $\Ad(m)$-invariant polynomial cannot have any $X_{-\alpha}$ terms as the vector field $\widetilde{X_{-\alpha}}F=0$ by the fact that $\exp t X_{-\alpha}\in \overline{N}.$ Hence, all of the $\Ad(m)$-invariant polynomials lie in $U(\lie{m}_\C)$ and as $\exp\lie{m}_\C= MA \subseteq B,$ each member of $U(\lie{m}_\C)$ acts by scalars depending only on $\lambda.$ Hence, any expression of the form $H_{\alpha_1}^{j_1}...H_{\alpha_n}^{j_n}F(1)$ is a scalar multiple of $F(1)$ independent of $F.$ This implies the lemma. 
\end{proof}
Now we can prove Theorem 5.2 in a few easy steps. 
\begin{proof}[Proof of Theorem 5.2]	
	Define an inner product on $\script{F}_\lambda$ by \[ \ip{F_1,F_2}=\int_K F_1(k)\overline{F_2(k)}dk \]
	\begin{claim}
		$|F(1)|\leq||\psi_\lambda||^{-1}\cdot ||F||$
	\end{claim}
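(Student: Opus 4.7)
The plan is to use Lemma \ref{Borel_Lemma_2} to express $F(1)$ as an inner product $\langle F, \psi_\lambda\rangle$, after which the bound follows from Cauchy-Schwarz. Concretely, I would specialize Lemma \ref{Borel_Lemma_2} to $x = k \in K$ so that the left-hand side reads $F(1)\psi_\lambda(k) = \int_M F(mkm^{-1})\,dm$ (all the points $mkm^{-1}$ remain in $K$, which is what makes the next step legal).

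Multiplying by $\overline{\psi_\lambda(k)}$ and integrating against normalized Haar measure on $K$, I would invoke Fubini together with the fact that $K$ is compact, hence unimodular, so conjugation by $m\in M\subseteq K$ preserves $dk$; after the substitution $k\mapsto mkm^{-1}$ the integrand becomes $F(k)\,\overline{\psi_\lambda(m^{-1}km)}$. The key computation is then that $\psi_\lambda$ is invariant under $M$-conjugation. This is where one uses that $v_\lambda$ is a weight vector for the full Cartan $\lie{m}_\C = \lie{a}\oplus \lie{m}$, so $\Phi_\lambda(m)v_\lambda = \chi_\lambda(m)v_\lambda$ with $|\chi_\lambda(m)|=1$ (since $M$ is compact); unitarity of $\Phi_\lambda$ then gives
\[
\psi_\lambda(m^{-1}km) \;=\; \bigl(\Phi_\lambda(m)^{-1}\Phi_\lambda(k)^{-1}\Phi_\lambda(m)v_\lambda,\,v_\lambda\bigr) \;=\; |\chi_\lambda(m)|^{2}\,\psi_\lambda(k)\;=\;\psi_\lambda(k).
\]
Consequently the inner $dm$-integral is trivial and one is left with
\[
F(1)\,\|\psi_\lambda\|^2 \;=\; \int_K F(k)\overline{\psi_\lambda(k)}\,dk \;=\; \langle F,\psi_\lambda\rangle.
\]
The Cauchy--Schwarz inequality on $L^2(K)$ now yields $|F(1)|\,\|\psi_\lambda\|^2 \leq \|F\|\cdot\|\psi_\lambda\|$, i.e.\ $|F(1)|\leq \|\psi_\lambda\|^{-1}\|F\|$, which is the claim.

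The only place requiring genuine care is the conjugation invariance of $\psi_\lambda$: one must verify that the highest-weight vector $v_\lambda$ really is an eigenvector for all of $M$ (not only for $T$ or $\lie{m}$), and that the corresponding character is unitary. Everything else is formal — the interchange of integrals is trivially justified because $F$ and $\psi_\lambda$ are continuous on the compact space $K$, and the change of variables is legitimate because the Haar measure on the compact group $K$ is bi-invariant. Note also that the argument implicitly requires $\|\psi_\lambda\|\neq 0$, which follows from Lemma \ref{Borel_Lemma_1}, since $\psi_\lambda$ is a nonzero matrix coefficient of the irreducible representation $\Phi_\lambda$.
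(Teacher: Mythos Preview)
Your argument is correct, and it takes a genuinely different route from the paper's. The paper never writes down the identity $F(1)\,\|\psi_\lambda\|^2=\langle F,\psi_\lambda\rangle$. Instead it estimates $\|F\|^2$ directly: using that conjugation by $m\in M\subseteq K$ preserves Haar measure on $K$, it writes $\|F\|^2=\int_K\int_M |F(mkm^{-1})|^2\,dm\,dk$, applies Jensen's inequality $\int_M|\cdot|^2\,dm\ge\bigl(\int_M|\cdot|\,dm\bigr)^2$ on the inner integral, and then bounds $\int_M|F(mkm^{-1})|\,dm\ge\bigl|\int_M F(mkm^{-1})\,dm\bigr|=|F(1)|\,|\psi_\lambda(k)|$ by Lemma~\ref{Borel_Lemma_2}. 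Integrating over $K$ gives $\|F\|^2\ge |F(1)|^2\|\psi_\lambda\|^2$.

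Your approach trades the Jensen step for an extra verification (the $M$-conjugation invariance of $\psi_\lambda$), which the paper does not need explicitly---though, as you note, it is essentially the same computation already carried out inside Lemma~\ref{Borel_Lemma_1}. In exchange you obtain the sharper \emph{equality} $F(1)=\|\psi_\lambda\|^{-2}\langle F,\psi_\lambda\rangle$, exhibiting $\psi_\lambda$ as a reproducing element for evaluation at the identity; the bound then drops out of Cauchy--Schwarz with the equality case transparent. The paper's chain of inequalities is quicker to write but loses this structural information.
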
 
	In fact \begin{align*}
		||F||&=\int_K |F(k)|^2 dk=\int_K |F(mkm^{-1})|^2dk\\
		&=\int_K \int_M |F(mkm^{-1})|^2dm dk\\
		&\geq \int_K \left( \int_M |F(mkm^{-1})|dm \right)^2dk\\
		&=|F(1)|^2\int_K |\psi_\lambda(k)|^2 dk\\
		&=|F(1)|^2||\psi_v||^2
	\end{align*}
	As a direct corollary of this, for every compact $E\subseteq G,$ there exists a $C_E<\infty$ such that \[ |F(x)|\leq C_E||F||\]
	for all $F\in \script{F}_\lambda$ and $x\in E.$ Therefore, $\script{F}_\lambda$ is complete (Cauchy sequences converge on compact sets by the previous line and their limit is holomorphic and satisfies the desired relation). Now, $\script{F}_\lambda$ is finite-dimensional, as it is a locally compact Banach space. 
	
	It remains to be shown that $\script{F}_\lambda$ is irreducible as a representation of $K.$ Let $U\subseteq \script{F}_\lambda$ be a closed, invariant subspace. Then for $F\neq 0$ on $U,$ by applying some $L(k)$, we can assume that $F(1)\neq 0.$ Therefore by completeness \[ \int_M \overline{\chi_{\lambda}(m)}L(m)F dm \]
	is an element of $U.$ However, Lemma \ref{Borel_Lemma_2} says that this is equal to $F(1)\psi_\lambda.$ Hence, $\psi_\lambda \in U.$ Similarly, we see that $\psi_\lambda\in U^\perp.$ This is a contradiction and thus $U=0$ or $U^\perp=0.$ Hence, $\script{F}_\lambda$ is an irreducible, finite-dimensional representation of $K.$ By Lemma \ref{Borel_Lemma_1} the map $V\to \script{F}_\lambda$ is a $K$-equivariant isomorphism. This completes the proof.   
\end{proof}

This result shows us that we can derive some algebraic information from a geometric object. In the language of Chapter 3, this theorem can be restated as $H^0(G/B,\script{F}_\lambda)\neq 0$ if and only if $\lambda$ is dominant and analytically integral. In fact, a stronger form of this theorem due to Bott \cite{Fulton_Harris2004} says that the sheaf cohomology of the associated bundle is non-zero is only one degree. This surprising appearance of sheaf cohomology indicates that it may prove to be useful in understanding the sheaf $G$ of chapter $4$ as well as understanding $C^\infty(\mu(-))$ as a $G$-module. Some care needs to be taken here as we do not know much about the category $G$-\textbf{Mod}. In fact, the case of $O_X$-modules for a locally ringed space may deviate highly from this situation in some critical ways. One being that there is no reason \textit{a priori} that $G_x$ is a local ring. We do not provide a resolution to this here and thus there is still much work to be done.

\section{Vector Fields for Noisy odors}
One major deficit of the model in Chapter 4 is its dependence on the odor source representations to be clean and precise. What should happen if an odor is presented in an environment which is particularly noisy? For example, consider a fox in the wilderness. If the fox is eating a meal the odors are in high concentration and thus can be distinguished. If instead it is trotting along and the odor of rabbit wafts through the air, how may it determine what this odor is? There are clearly many other odors present in the second situation and thus should make identification nearly impossible. This contradicts experimental and observational evidence however! We know that foxes can find their prey with minimal odor stimulation; this implies the existence of some mechanism which produces a "best guess" for what a given noisy odor may be. As we shall see below, there is a naive way of modeling such a problem which we conjecture is indeed the correct approach. This naive method relies on \textit{vector fields} on $S$ and generates an attractor basin for the various odors. This has been shown to have some relation to \textit{\u Cech cohomology} which can be viewed as a refinement of sheaf cohomology. This ties together all of the ideas presented.    
We will not go through the construction of \u Cech cohomology as it is a bit involved and the main idea behind it is to serve as a computational tool for sheaf cohomology on suitably nice spaces (of which manifolds happen to fit).   

\subsection{Flows} 
In general, the theory of flows is a generalization of the theory of Ordinary differential equations. Now, the equations are defined on manifolds by vector fields $\xi:M\to TM.$ We shall not do the general case here but refer the reader to \cite{Lee2012}. Our situation is significantly eased as $R'$ and therefore $S$ are assumed to be diffeomorphic to open submanifolds of $\R^n$ and therefore $TR'\cong TS\cong S\times \R^n.$ So there exists vector fields $\{V_1,...,V_n\}$ which span the tangent space at each $s\in S.$ As a result, there exists a non-trivial vector field $\xi$ which is \textit{complete} meaning that every trajectory can be given $\R$ as a domain. By trajectory we mean a smooth map $\gamma:\R\to S$ such that $\gamma'(t)=\xi(\gamma(t)).$ In general, this is the solution of a differential equation and these trajectories are called \textit{maximal}. 
\begin{definition}
	Let $\xi\in \Gamma(S,TS).$ The \textbf{flow} of $\xi$ is the mapping $\theta:S\times \R\to S$ given by $(s,t)\mapsto \gamma_s(t)$ where $\gamma_s(0)=s$ and $\gamma$ is the maximal trajectory. 
\end{definition}      

We can use flows to understand noisy inputs into the olfactory system. Let $K$ be the collection of $s\in S$ such that $s$ is a local maximum of the function $f$ defining $S.$ That is, these are the "tops" of the peaks. Now define a smooth vector field on $S$ which makes $K$ an attractor. Then the attractor basin is the disjoint union of a finite number of contractible open sets. What we would like to know is that the attractor basin is a cover of $S$ so that any point can be draw into one of the peaks and identified as in the classification scheme of Chapter 4. This would allow us to identify any noisy odor (one for which $\widetilde{U_x}$ is particularly large) with some degree of accuracy. Sadly, this cannot be guaranteed as a cover would rely on exposure to an enormous number of different odors (then we can assume that the $U_x$ form a cover of $R'$ and thus $\widetilde{U_x}$ is a cover of $S$ themselves) or some nearly equivalent requirement. As a consolation, we can still identify noisy odors which fall within the attractor basin of the learned odors. 

Let us connect the idea of flows to representations. Let $X\in \Gamma(M,TM)$ be a complete smooth vector field and $\theta:\R\times M\to M$ the associated flow. This is equivalent to defining an action of the Lie group $(\R,+)$ on $M$ and therefore a non-linear representation of $\R.$ Here, as diffeomorphisms of $M.$ By differentiating this action, we get a non-linear Lie algebra representation $\R\to \lie{X}(M).$ As we can realize flows as solutions to certain non-linear partial differential equations, we can equivalently understand theses solutions by understanding the corresponding representation on either the Lie group or Lie algebra level. This is one reason representation theory my play a key role in the further development of this theory and for understanding the identification of noisy odors.

\clearpage
\begin{center}
    \thispagestyle{empty}
    \vspace*{\fill}
    \textit{Thanks for reading!}
    \vspace*{\fill}
\end{center}
\clearpage

\bibliographystyle{alpha}
\bibliography{Thesis_References}

\end{document}